\def\UrlSpecials{\do\~{\kern -.15em\lower .7ex\hbox{~}\kern .04em}} \catcode`~=13 
\newcommand{\nn}{\nonumber}
\newcommand{\calA}{\mathcal{A}}
\newcommand{\calB}{\mathcal{B}}
\newcommand{\calC}{\mathcal{C}}
\newcommand{\calD}{\mathcal{D}}
\newcommand{\calE}{\mathcal{E}}
\newcommand{\calF}{\mathcal{F}}
\newcommand{\calG}{\mathcal{G}}
\newcommand{\calL}{\mathcal{L}}
\newcommand{\calM}{\mathcal{M}}
\newcommand{\calN}{\mathcal{N}}
\newcommand{\calP}{\mathcal{P}}
\newcommand{\calQ}{\mathcal{Q}}
\newcommand{\calR}{\mathcal{R}}
\newcommand{\calS}{\mathcal{S}}
\newcommand{\calT}{\mathcal{T}}
\newcommand{\calU}{\mathcal{U}}
\newcommand{\calV}{\mathcal{V}}
\newcommand{\calW}{\mathcal{W}}
\newcommand{\calX}{\mathcal{X}}
\newcommand{\calY}{\mathcal{Y}}
\newcommand{\calZ}{\mathcal{Z}}
\newcommand{\hatcalX}{\hat{\calX}}
\newcommand{\tilrmV}{\tilde{\rmV}}
\newcommand{\bA}{\mathbf{A}}
\newcommand{\bb}{\mathbf{b}}
\newcommand{\bs}{\mathbf{s}}
\newcommand{\bS}{\mathbf{S}}
\newcommand{\bt}{\mathbf{t}}
\newcommand{\bV}{\mathbf{V}}
\newcommand{\bx}{\mathbf{x}}
\newcommand{\bX}{\mathbf{X}}
\newcommand{\bZ}{\mathbf{Z}}
\newcommand{\rmb}{\mathrm{b}}
\newcommand{\rmc}{\mathrm{c}}
\newcommand{\rmd}{\mathrm{d}}
\newcommand{\rme}{\mathrm{e}}
\newcommand{\rmG}{\mathrm{G}}
\newcommand{\rmH}{\mathrm{H}}
\newcommand{\rmP}{\mathrm{P}}
\newcommand{\rmQ}{\mathrm{Q}}
\newcommand{\rmR}{\mathrm{R}}
\newcommand{\rms}{\mathrm{s}}
\newcommand{\rmT}{\mathrm{T}}
\newcommand{\rmV}{\mathrm{V}}
\newcommand{\Var}{\mathrm{Var}}
\newcommand{\bbE}{\mathsf{E}}
\newcommand{\bbN}{\mathbb{N}}
\newcommand{\bbP}{\mathbb{P}}
\newcommand{\bbR}{\mathbb{R}}
\newcommand{\bbo}{\mathbbm{1}}
\DeclareMathAlphabet{\mathbsf}{OT1}{cmss}{bx}{n}
\DeclareMathAlphabet{\mathssf}{OT1}{cmss}{m}{sl}
\newcommand{\rvR}{\mathsf{R}}
\DeclareSymbolFont{bsfletters}{OT1}{cmss}{bx}{n}  
\DeclareSymbolFont{ssfletters}{OT1}{cmss}{m}{n}
\DeclareMathSymbol{\bsfGamma}{0}{bsfletters}{'000}
\DeclareMathSymbol{\ssfGamma}{0}{ssfletters}{'000}
\DeclareMathSymbol{\bsfDelta}{0}{bsfletters}{'001}
\DeclareMathSymbol{\ssfDelta}{0}{ssfletters}{'001}
\DeclareMathSymbol{\bsfTheta}{0}{bsfletters}{'002}
\DeclareMathSymbol{\ssfTheta}{0}{ssfletters}{'002}
\DeclareMathSymbol{\bsfLambda}{0}{bsfletters}{'003}
\DeclareMathSymbol{\ssfLambda}{0}{ssfletters}{'003}
\DeclareMathSymbol{\bsfXi}{0}{bsfletters}{'004}
\DeclareMathSymbol{\ssfXi}{0}{ssfletters}{'004}
\DeclareMathSymbol{\bsfPi}{0}{bsfletters}{'005}
\DeclareMathSymbol{\ssfPi}{0}{ssfletters}{'005}
\DeclareMathSymbol{\bsfSigma}{0}{bsfletters}{'006}
\DeclareMathSymbol{\ssfSigma}{0}{ssfletters}{'006}
\DeclareMathSymbol{\bsfUpsilon}{0}{bsfletters}{'007}
\DeclareMathSymbol{\ssfUpsilon}{0}{ssfletters}{'007}
\DeclareMathSymbol{\bsfPhi}{0}{bsfletters}{'010}
\DeclareMathSymbol{\ssfPhi}{0}{ssfletters}{'010}
\DeclareMathSymbol{\bsfPsi}{0}{bsfletters}{'011}
\DeclareMathSymbol{\ssfPsi}{0}{ssfletters}{'011}
\DeclareMathSymbol{\bsfOmega}{0}{bsfletters}{'012}
\DeclareMathSymbol{\ssfOmega}{0}{ssfletters}{'012}
\newcommand{\tilc}{\tilde{c}}
\newcommand{\hatm}{\hat{m}}
\newcommand{\hatT}{\hat{T}}
\newcommand{\hatx}{\hat{x}}
\newcommand{\hatX}{\hat{X}}
\newcommand{\hatbx}{\hat{\bx}}
\newcommand{\hatbX}{\hat{\bX}}
\newcommand{\haty}{\hat{y}}
\newcommand{\hatY}{\hat{Y}}
\newcommand{\barb}{\bar{b}}
\newcommand{\bard}{\bar{d}}
\newcommand{\barP}{\bar{P}}
\newcommand{\barR}{\bar{R}}
\newcommand{\bmu}{\bm{\mu}}
\newcommand{\bSigma }{\bm{\Sigma}}
\DeclareMathOperator*{\argmax}{arg\,max}
\DeclareMathOperator*{\argmin}{arg\,min}
\DeclareMathOperator*{\argsup}{arg\,sup}
\DeclareMathOperator{\supp}{supp}
\newcommand{\bzero}{\mathbf{0}}
\newcommand{\bone}{\mathbf{1}}
\newtheorem{theorem}{Theorem} 
\newtheorem{lemma}{Lemma}
\newtheorem{proposition}{Proposition}
\newtheorem{corollary}{Corollary}
\newtheorem{definition}{Definition} 
\title{On Finite Blocklength Lossy Source Coding}
\author{
Lin Zhou \\
School of Cyber Science and Technology\\
Beihang University \\
lzhou@buaa.edu.cn\\
\and
Mehul Motani\\
Department of Electrical and Computer Engineering\\
National University of Singapore\\
motani@nus.edu.sg
}
\begin{document}
\maketitle
\tableofcontents

\chapter*{Abstract}
\addcontentsline{toc}{chapter}{Abstract}
Shannon propounded a theoretical framework (collectively called information theory) that uses mathematical tools to understand, model and analyze modern mobile wireless communication systems. A key component of such a system is source coding, which compresses the data to be transmitted by eliminating redundancy and allows reliable recovery of the information from the compressed version. In modern 5G networks and beyond, finite blocklength lossy source coding is essential to provide ultra-reliable and low-latency communications. The analysis of point-to-point and multiterminal settings from the perspective of finite blocklength lossy source coding is therefore of great interest to 5G system designers and is also related to other long-standing problems in information theory.
 
In this monograph, we review recent advances in second-order asymptotics for lossy source coding, which provides approximations to the finite blocklength performance of optimal codes. The monograph is divided into three parts. In part I, we motivate the monograph, present basic definitions, introduce mathematical tools and illustrate the motivation of non-asymptotic and second-order asymptotics via the example of lossless source coding. In part II, we first present existing results for the rate-distortion problem with proof sketches. Subsequently, we present five generations of the rate-distortion problem to tackle various aspects of practical quantization tasks: noisy source, noisy channel, mismatched code, Gauss-Markov source and fixed-to-variable length compression. By presenting theoretical bounds for these settings, we illustrate the effect of noisy observation of the source, the influence of noisy transmission of the compressed information, the effect of using a fixed coding scheme for an arbitrary source and the roles of source memory and  variable rate. In part III, we present four multiterminal generalizations of the rate-distortion problem to consider multiple encoders, decoders or source sequences: the Kaspi problem, the successive refinement problem, the Fu-Yeung problem and the Gray-Wyner problem. By presenting theoretical bounds for these multiterminal problems, we illustrate the role of side information, the optimality of stop and transmit, the effect of simultaneous lossless and lossy compression, and the tradeoff between encoders' rates in compressing correlated sources.  Finally, we conclude the monograph, mention related results and discuss future directions.

\clearpage

\part{Basics}

\chapter{Introduction}
\label{chap:intro}

\section{Motivation}

Shannon~\cite{shannon1948mathematical} developed a theoretical framework (collectively called information theory) that uses mathematical tools to understand, model and analyze digital communication systems over noisy channels. A basic digital communication system includes blocks for source and channel encoding at the transmitter and blocks for source and channel decoding at the receiver. Source coding, also known as data compression, aims to remove the redundancy of information and allows reliable recovery of the information from its compressed version. In contrast, channel coding aims to counter the noise in the transmission channel between the transmitter and the receiver and allows reliable recovery of a message. 

For a discrete memoryless source (DMS), Shannon showed that the asymptotic minimal compression rate that ensures accurate recovery with vanishing error probability is the entropy of the source, provided that the blocklength of the source sequence to be compressed tends to infinity. However, lossless source coding does not apply to continuous sources since it requires an infinite number of bits to describe a real number. Furthermore, practical image and video compression systems usually tolerate some imperfection. To resolve these issues, Shannon studied the lossy source coding problem~\cite{shannon1959coding} (also known as the rate-distortion problem) and derived the asymptotic minimal achievable rate.  

For a discrete memoryless channel (DMC), Shannon showed that the maximal asymptotic message rate to ensure reliable recovery with vanishing error probability at the receiver is the capacity of the noisy channel, provided that the blocklength (the number of channel uses) tends to infinity. In other words, Shannon showed that, at rates below the channel capacity, there exist good channel coding strategies with arbitrarily low probability of error. The above results for source coding and channel coding are collectively known as Shannon's coding theorems~\cite{cover2012elements}. These results are very insightful and set benchmarks for practical code design in the last seventy years.

In practical communication systems, especially in 5G and beyond, low-latency is desired and dictates the use of short blocklength codes.  However, Shannon's coding theorems cannot provide exact theoretical benchmarks for low-latency communication since these theorems hold under the assumption that the blocklength tends to infinity, which leads to undesired arbitrarily large latency. To tackle this problem, information theorists developed the theory of finite blocklength analysis and second-order asymptotic approximation, starting with the seminal work of Strassen~\cite{strassen1962asymptotische} in 1962. The finite blocklength analysis for channel coding has been revived by Hayashi~\cite{hayashi2009information} and by Polyanskiy, Poor and Verd\'u~\cite{polyanskiy2010finite}. In particular, the authors of \cite{polyanskiy2010finite} derived upper and lower bounds for any finite blocklength and showed that the bounds match the dispersion type Gaussian approximation for blocklength of hundreds for various types of point-to-point channels. The Gaussian approximation is coined second-order asymptotics by Hayashi~\cite{hayashi2009information}. The results of \cite{hayashi2009information} and \cite{polyanskiy2010finite} have been generalized to various channel models. Readers can refer to \cite{TanBook} for a systematic review of such advances.

Finite blocklength analyses and second-order asymptotics have also been derived for source coding. The simplest such example is the lossless source coding problem. In this problem, one aims to recover a random source sequence $X^n$ exactly from its compressed version that takes values in a finite set of $M$ elements. The performance metric is the error probability in reproducing the source sequence and the rate is defined as $R_n:=\frac{\log M}{n}$, where the unit is bits per source symbol when the logarithm is base $2$. In second-order asymptotics, one is interested in characterizing the backoff of the non-asymptotic coding rate $R_n$ from the minimum achievable rate --- the entropy of the source $H(P_X)$, while tolerating a non-vanishing error probability. Such a result was first shown by Yushkevich for sources with Markovian memory \cite{yushkevich1953limit}. Strassen~\cite{strassen1962asymptotische}, and later Hayashi~\cite{hayashi2008source}, showed that the backoff is in the order of the reciprocal of the square root of the blocklength. Such a result is simple and elegant and parallels the finite blocklength results of channel coding. 

As noted by Shannon~\cite{shannon1959coding}, lossless source coding is not possible for continuous sources and lossy source coding with imperfect recovery is thus important. Shannon's rate-distortion theory~\cite{shannon1959coding} forms a core part of modern quantization theory and is usually known as vector quantization. For a complete survey of various aspects of quantization, readers may refer to the seminal paper by Gray and Neuhoff~\cite{gray1998tit}. For the rate-distortion problem that deals with point-to-point lossy data compression, the second-order asymptotics for a DMS were derived by Ingber and Kochman~\cite{ingber2011dispersion}, and both finite blocklength bounds and second-order asymptotics were derived by Kostina and Verd\'u~\cite{kostina2012fixed} for a DMS and a Gaussian memoryless source (GMS). The results in \cite{ingber2011dispersion,kostina2012fixed} were further generalized to various scenarios in the point-to-point case~\cite{kostina2016noisy,kostinajscc,wang2011dispersion,zhou2017refined,tian2019tit,kostina2015tit} and to problems in network information theory~\cite{zhou2017non,zhou2017kaspishort,zhou2016second,no2016,zhou2017fy,zhou2015second}, usually for a DMS.

However, despite the undeniable importance of lossy source coding and its diverse applications beyond low-latency communications in various domains including privacy utility tradeoff~\cite{sankar2013utility}, machine learning~\cite{gao2019rate} and image/video compression~\cite{sullivan1998rate,ortega1998rate,habibian2019video}, there is no single source that systematically summarizes recent advances for finite blocklength analyses and second-order asymptotics of lossy source coding problems, especially the multiterminal cases. One might argue that \cite{TanBook} covers these topics. Specifically, \cite[Chapter 3]{TanBook} focuses on the point-to-point setting by presenting non-asymptotic and refined asymptotics bounds for both lossless and lossy source coding problems, \cite[Chapter 4.5]{TanBook} briefly presents the results for joint source-channel coding without proof sketches while \cite[Chapter 6]{TanBook} studies a lossless multiterminal source coding problem named the Slepian-Wolf problem~\cite{slepian1973noiseless}. It is important to note that recent advances of lossy source coding (e.g., \cite{kostina2016noisy,kostina2015tit,zhou2017refined,zhou2015second}) and the multiterminal cases~\cite{zhou2017non,zhou2017kaspishort,zhou2016second,no2016,zhou2017fy,zhou2015second} are not included in \cite{TanBook}. Our monograph aims to fill the missing piece of finite blocklength analyses by summarizing recent theoretical advances for finite blocklength lossy source coding problems. Furthermore, for point-to-point lossless and lossy source coding problems, we present proof techniques different from those covered in \cite[Chapter 3]{TanBook}.

\section{Organization}
The rest of this monograph is organized as follows. In the rest of this chapter, we present the notation used throughout the monograph and recall critical mathematical theorems on sums of i.i.d. random variables including the Berry-Esseen theorem~\cite{berry1941accuracy,esseen1942liapounoff}. In Chapter \ref{chap:lossless}, we illustrate the meaning of finite blocklength analysis, first-order asymptotics, and second-order asymptotics via the example of lossless source coding. We also recall other refined asymptotics including large and moderate deviations and explain why we focus on second-order asymptotics.

Part II of this monograph is devoted to the rate-distortion problem and its five generalizations to consider various aspects of practical quantization tasks. In Chapter \ref{chap:rd}, we review existing results on the rate-distortion problem. Specifically, we formulate the problem of finite blocklength analysis of the rate-distortion problem, define the distortion-tilted information density, present non-asymptotic and second-order asymptotic theorems, and finally provide detailed proof sketches. This chapter is mainly based on \cite{kostina2012fixed,ingber2011dispersion}.

In Chapter \ref{chap:noisy}, we present results for the noisy lossy source coding problem, where the encoder can only access a noisy version of the source sequence. This problem is also known as quantizing noisy sources and is motivated by practical compression of speech signals distorted by environmental noise or images corrupted by camera imperfections. The non-asymptotic and second-order asymptotic results for this problem reveal the role of noisy observations in the finite blocklength regime, which is not apparent in asymptotic analyses~\cite{Dobrushin1962,sakrison1968,Witsenhausen1980tit}. This chapter is based on~\cite{kostina2016noisy}.

In Chapter \ref{chap:jscc}, we present results for the lossy joint source-channel coding problem, where the output of the encoder is passed though a noisy channel and then provided to the decoder. This problem is also known as quantization for a noisy channel. The classical separation theorem of Shannon establishes that it is asymptotically optimal to separate lossy source coding and channel coding. However, non-asymptotic and second-order asymptotic results suggest that, at finite blocklengths, separate source-channel coding is strictly suboptimal. This chapter is based on~\cite{kostinajscc,wang2011dispersion}.

In Chapter \ref{chap:mismatch}, we present results for the mismatched code of Lapidoth~\cite[Theorem 3]{lapidoth1997}, where a fixed code with an i.i.d. Gaussian codebook and minimum Euclidean distance encoding is used to compress an arbitrary memoryless source. This problem is motivated by the fact that the distribution of the source to be compressed is usually unknown and thus the matched coding scheme where the source distribution is assumed perfectly known is impractical. Theoretical results demonstrate that both i.i.d. Gaussian and spherical codebooks achieve the same finite blocklength performance. This chapter is based on \cite{zhou2017refined}.

In Chapter \ref{chap:gm}, we present results for the Gauss-Markov source, where the source sequence forms a first-order Markov chain and thus has memory. This problem is motivated by practical applications where the source sequence, such as sensor data, is usually not memoryless. The non-asymptotic and second-order results for the Gauss-Markov source is the first for a source with memory and
reveal the role of memory on the finite blocklength performance of optimal codes. This chapter is based on~\cite{tian2019tit}.

In Chapter \ref{chap:variable}, we present results for fixed-to-variable length compression, where the encoder's output to each source sequence is a binary string with potentially different lengths. The motivation is to further reduce the average coding rate based on the intuition that more frequent symbols should be assigned codewords with fewer bits, an idea captured in the Huffman code. The theoretical results reveal the role of flexible rates on the finite blocklength performance and demonstrate a stark difference with the fixed-length counterpart. This chapter is based on~\cite{kostina2015tit}.

Part III deals with four multiterminal extensions of the rate-distortion problem with increasing complexity and also includes a conclusion chapter. In Chapter \ref{chap:kaspi}, we present results for the Kaspi problem~\cite{kaspi1994}, which is a lossy source coding problem with one encoder and two decoders. This problem generalizes the rate-distortion problem by providing side information at the encoder and adding one additional decoder that accesses the same side information. Both decoders share the same compressed information of the source sequence and the decoder with side information is required to produce a finer estimate of the source sequence. Through the lens of this problem, we reveal the impact of side information on the finite blocklength performance of optimal codes. This chapter is mainly based on the first part of \cite{zhou2017non}.

In Chapter \ref{chap:sr}, we present results for the successive refinement problem~\cite{rimoldi1994}. This problem generalizes the rate-distortion problem by having one additional encoder and decoder pair. The additional encoder further compresses the source sequence and the additional decoder uses compressed information from both encoders to produce a finer estimate of the source sequence than the other decoder that only has access to the original encoder. We present results under two performance criteria: the joint excess-distortion probability (JEP) and the separate excess-distortion probabilities (SEP). Under JEP, we reveal the tradeoff between the coding rate of the two encoders and, under SEP, we revisit the successively refinability property, from a second-order asymptotic perspective. A key message from this chapter is that considering a joint excess-distortion probability enables us to characterize the tradeoff of rates of different encoders in second-order asymptotics. This chapter is mainly based on \cite{no2016,zhou2016second}.

In Chapter \ref{chap:fu-yeung}, we present results for the multiple description problem with one deterministic decoder~\cite{fu2002rate}. In this problem, two encoders compress the source sequence and three decoders aim to recover the source sequence with different criteria: two decoders aim to recover the source sequence in a lossy manner with different distortion levels and the other decoder aims to perfectly reproduce a function of the source sequence. This problem generalizes the successive refinement problem by having one additional lossless decoder. Under the joint excess-distortion and error probability criterion, we reveal the tradeoff among encoders and decoders in simultaneous lossless and lossy compression in second-order asymptotics. This chapter is mainly based on the second part of \cite{zhou2017non}.

In Chapter \ref{chap:gw}, we present results for the lossy Gray-Wyner problem~\cite{gray1974source}. In this problem, three encoders compress two correlated source sequences and each of the two decoders aims to recover one source sequence. This is a fully multiterminal lossy compression problem with multiple encoders, multiple decoders and multiple correlated source sequences. It significantly generalizes the rate-distortion problem by having one more source sequence, two more encoders and one more decoder. Under the joint excess-distortion probability criterion, we reveal the tradeoff among the coding rates of the three encoders in second-order asymptotics. This chapter is mainly based on \cite{zhou2015second}.

Finally, in Chapter \ref{chap:future}, we conclude the monograph and discuss future research directions. The relationship among chapters of this monograph is illustrated in Fig. \ref{illus:logic}.
\begin{figure}[t]
\centering
\includegraphics[width=\columnwidth]{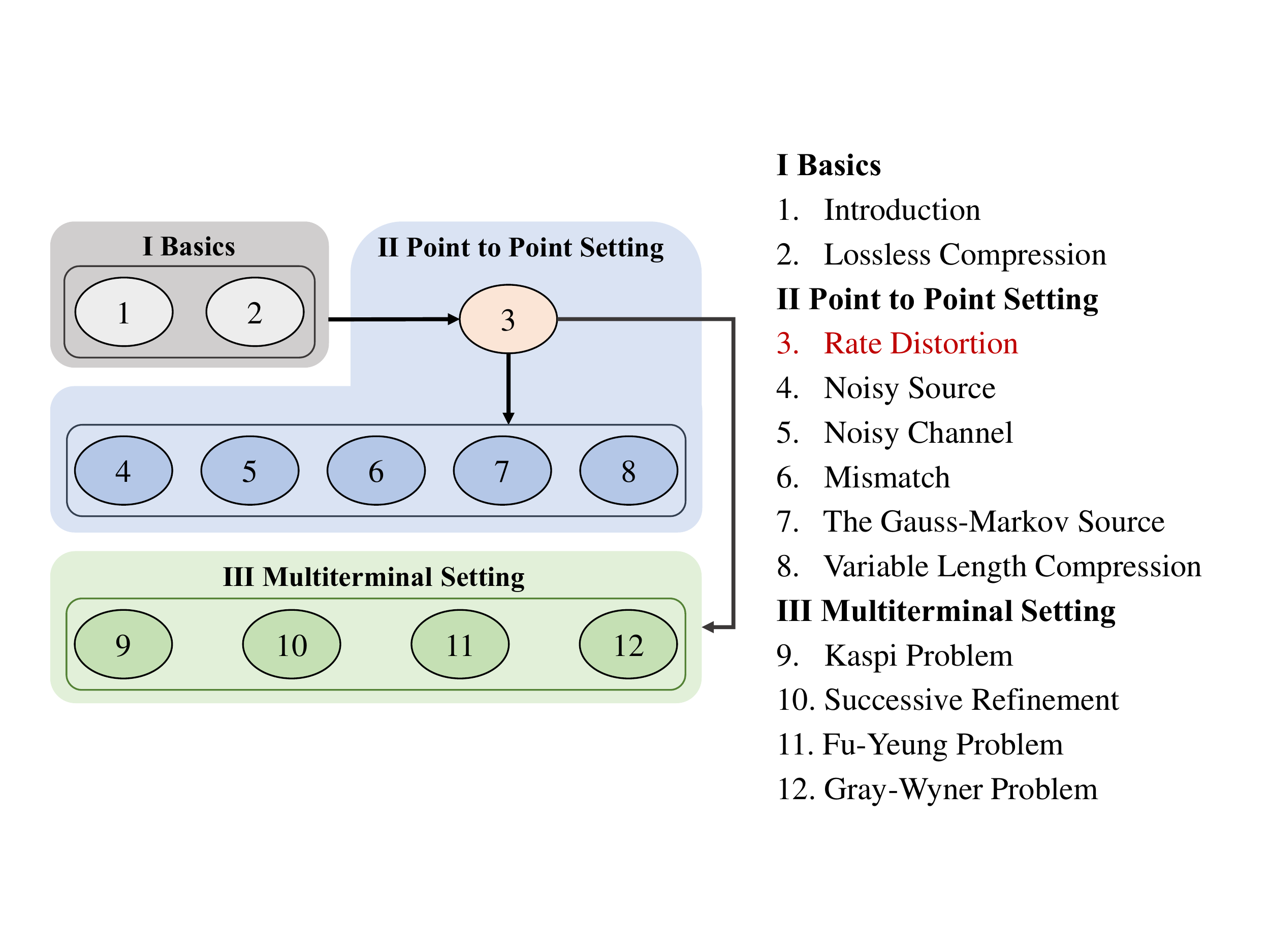}
\caption{Relationship among chapters of this monograph.}
\label{illus:logic}
\end{figure}

\section{Preliminaries}
\label{sec:prelim}
In this section, we set up the mathematical notation used throughout the monograph and review definitions of basic information theoretical quantities, key properties in method of types and mathematical theorems central to our analyses.

\subsection{Notation}

The set of real numbers, non-negative real numbers, and natural numbers are denoted by $\bbR$, $\bbR_+$, and $\bbN$, respectively. For any two natural numbers $(a,b)\in\bbN^2$, we use $[a:b]$ to denote the set of all natural numbers between $a$ and $b$ (inclusive) and use $[a]$ to denote $[1:a]$. For any $(m_1,m_2)\in\bbN^2$, we use $\bzero_{m_1}$ to denote the length-$m_1$ vector of all zeroes and use $\bone_{m_1,m_2}$ to denote the $m_1\times m_2$ matrix of all ones. For any real number $a\in\bbR$, we use $|a|^+$ to denote $\max\{a,0\}$.

Random variables and their realizations are in capital (e.g.,\ $X$) and lower case (e.g.,\ $x$) respectively. All sets (e.g., alphabets of random variables) are denoted in calligraphic font (e.g.,\ $\mathcal{X}$). We use $\calX^{\mathrm{c}}$ to denote the complement of $\calX$. Let $X^n:=(X_1,\ldots,X_n)$ be a random vector of length-$n$ and $x^n=(x_1,\ldots,x_n)$ be a particular realization. We use $\|x^n\|=\sqrt{\sum_{i\in[n]} x_i^2}$ to denote the $\ell_2$ norm of a vector $x^n\in\bbR^n$.  Given two sequences $x^n$ and $y^n$, the quadratic distortion measure (squared Euclidean norm) is defined as $d(x^n,y^n):=\frac{1}{n}\|x^n-y^n\|^2=\frac{1}{n}\sum_{i\in[n]}(x_i-y_i)^2$. 

The set of all probability distributions on an alphabet $\calX$ is denoted by $\calP(\calX)$ and the set of all conditional probability distribution from $\calX$ to $\calY$ is denoted by $\calP(\calY|\calX)$. Given $P\in\calP(\calX)$, we use $\supp(P)$ to denote the support of distribution $P$, i.e., $\supp(P)=\{x\in\calX:P(x)>0\}$. Given a conditional distribution $P_{Y|X}\in\calP(\calY|\calX)$ and $x\in\calX$, we use $P_{Y|x}$ to denote the conditional distribution $P_{Y|X}(\cdot|x)$. Given $P\in\calP(\calX)$ and $V\in\calP(\calY|\calX)$, we use $P\times V$ to denote the joint distribution induced by $P$ and $V$.  Given a joint probability distribution $P_{XY}\in\calP(\calX\times\calY)$, let $m=|\supp(P_{XY})|$ and let $\Gamma(P_{XY})$ be the sorted distribution such that for each $i\in[m]$, $\Gamma_i(P_{XY})=P_{XY}(x_i,y_i)$ is the $i$-th largest value of $\{P_{XY}(x,y):~(x,y)\in\calX\times\calY\}$.

We use standard asymptotic notations such as $\Theta(\cdot)$, $O(\cdot)$ and $o(\cdot)$~(cf.~\cite{Cor03}). We use $\bbo(\cdot)$ as the indicator function and we use $\log(\cdot)$ with base $e$ unless otherwise stated. We let $\rmQ(t) := \int_{t}^\infty\frac{1}{\sqrt{2\pi}}e^{-u^2/2}\, \rmd u$ be the complementary cumulative distribution function of the standard Gaussian. Let $\mathrm{Q}^{-1}$ be the inverse of $\rmQ$. We use $\Psi_k(x_1,\ldots,x_k;\bmu,\mathbf{\Sigma})$ to denote the multivariate generalization of the Gaussian cumulative distribution function (cdf), i.e.,
$\Psi(x_1,\ldots,x_k;\bmu,\mathbf{\Sigma})=\int_{-\infty}^{x_1}\ldots\int_{-\infty}^{x_k}\calN(\bx; \bmu;\bSigma)\, \rmd \bx$, where $\calN(\bx; \bmu;\bSigma)$ is the probability density function (PDF) of a $k$-variate Gaussian with mean vector $\bmu$ and covariance matrix $\bSigma$.

\subsection{Basic Definitions}
To smoothly present the results in this monograph, we recall necessary information theoretical definitions. Given any distribution $P_X\in\calP(\calX)$ defined on a finite alphabet $\calX$, the entropy is defined as
\begin{align}
H(X)=H(P_X)&:=\sum_{x\in\supp(P_X)}-P_X(x)\log P_X(x).
\end{align}
Note that the notation $H(X)$ is used in classical textbooks as \cite{cover2012elements} and the notation $H(P_X)$ that clarifies the dependence of the entropy on the distribution is used in \cite{csiszar2011information}. We use both notations for the entropy and other information theoretical quantities interchangeably. Specifically, when we need to specify the distribution of a random variable, we use the distribution dependence version $H(P_X)$; when the distribution of the random variable is clear, we use $H(X)$ for its simplicity. Analogously, given a joint probability mass function (PMF) $P_{XY}\in\calP(\calX\times\calY)$ defined on a finite alphabet $\calX\times\calY$, the joint entropy is defined as
\begin{align}
H(X,Y)=H(P_{XY})=\sum_{(x,y)\in\supp(P_{XY})}-P_{XY}(x,y)\log P_{XY}(x,y),
\end{align}
and the conditional entropy of $Y$ given $X$ is defined as
\begin{align}
H(Y|X)=H(P_{Y|X}|P_X)=\sum_{(x,y)\in\supp(P_{XY})}-P_{XY}(x,y)\log P_{Y|X}(x,y),
\end{align}
where $(P_{Y|X},P_X)$ are the induced conditional and marginal distributions of $P_{XY}$. The conditional entropy $H(P_{Y|X}|P_X)$ of $X$ given $Y$ is defined similarly. 

Furthermore, the mutual information that measures dependence of two random variables $(X,Y)$ with distribution $P_{XY}$ is defined as
\begin{align}
I(X;Y)
&=I(P_X,P_{X|Y})=H(P_X)-H(P_{X|Y}|P_Y),
\end{align}
where $P_{X|Y}$ is also induced by $P_{XY}$. Note that mutual information $I(X;Y)$ is symmetric so that $I(X;Y)=I(Y;X)$. Similar to the definition of entropy, we use $I(X;Y)$ and the distribution dependence version $I(P_X,P_{X|Y})$ interchangeably. Analogously, given the joint distribution $P_{XYZ}$ of three random variables $(X,Y,Z)$ defined on a finite alphabet $\calX\times\calY\times\calZ$, define the conditional mutual information $I(X;Y|Z)$ as
\begin{align}
I(X;Y|Z)
=I(P_{X|Z},P_{X|YZ}|P_Z)=H(P_{X|Z}|P_Z)-H(P_{X|YZ}|P_{YZ}),
\end{align}
where all distributions are induced by the joint distribution $P_{XYZ}$.

Another critical quantity that we use frequency is the Kullback-Leiber (KL) divergence, also known as the relative entropy. Given any two distributions $(P_X,Q_X)$ defined on the finite alphabet $\calX$, the KL divergence $D(P_X\|Q_X)$ is defined as
\begin{align}
D(P_X\|Q_X)=\sum_{x\in\supp(P_X)}P_X(x)\log\frac{P_X(x)}{Q_X(x)}.
\end{align}
Note that $D(P_X\|Q_X)$ measures closeness of two distributions $P_X$ and $Q_X$ and equals zero if and only if $P_X=Q_X$. For any two distributions $P_{XY}$ and $Q_{XY}$ defined on a finite alphabet $\calX\times\calY$, the KL divergence $D(P_{XY}\|Q_{XY})$ is defined similarly; when the marginal distributions $P_X=Q_X$, the conditional KL divergence is defined as
\begin{align}
D(P_{Y|X}\|Q_{Y|X}|P_X)=\sum_{x\in\supp(P_X)}P_X(x)D(P_{Y|X}(\cdot|x)\|Q_{Y|X}(\cdot|x)).
\end{align}

\subsection{The Method of Types}
Since we focus on DMSes, the method of types plays a critical role in our analyses. Thus, we also recall definitions and results in this domain~\cite{csiszar1998mt} (see also~\cite[Chapter 11]{cover2012elements} and \cite[Chapter 2]{csiszar2011information}). Given a length-$n$ discrete sequence $x^n\in\calX^n$, the empirical distribution $\hatT_{x^n}$ is defined as
\begin{align}
\hatT_{x^n}(a)=\frac{1}{n}\sum_{i\in[n]}1\{x_i=a\},~\forall~a\in\calX.
\end{align}
The set of types formed from length-$n$ sequences in $\calX$ is denoted by $\calP_{n}(\calX)$. Given a type $P_X\in\calP_{n}(\calX)$, the set of all sequences of length-$n$ with type $P_X$ is the type class denoted by $\calT_{P_X}$. For any $n\in\bbN$, the number of types satisfies
\begin{align}
|\calP_n(\calX)|\leq (n+1)^{|\calX|}\label{number:types}.
\end{align}
For any type $P_X\in\calP_n(\calX)$, the size of type class $\calT_P^n$ satisfies
\begin{align}
(n+1)^{-|\calX|}\exp(nH(P_X))\leq |\calT_{P_X}^n|\leq \exp(nH(P_X)).
\end{align}
For any sequence $x^n$ that is generated i.i.d. from a distribution $P_X\in\calP(\calX)$, its probability satisfies
\begin{align}
P_X^n(x^n)=\exp(-n (D(\hatT_{x^n}\|P_X)+H(\hatT_{x^n}))).
\end{align}
Thus, for any type $Q_X\in\calP_n(\calX)$, the probability of the type class $\calT_{Q_X}^n$ satisfies
\begin{align}
(n+1)^{-|\calX|}\leq \frac{P_X^n(\calT_{Q_X}^n)}{\exp(-nD(Q_X\|P_X))}\leq 1.
\end{align}

Given any two sequences $(x^n,y^n)\in\calX^n\times\calY^n$, the joint empirical distribution $\hatT_{x^ny^n}$ is defined as
\begin{align}
\hatT_{x^ny^n}(a,b)=\frac{1}{n}\sum_{i\in[n]}1\{(x_,y_i)=(a,b)\}.
\end{align}
Given any $x^n\in\calX^n$ and conditional distribution $V_{Y|X}\in\calP(\calY|\calX)$, the set of all sequences $y^n\in\calY^n$ such that $\hatT_{x^ny^n}=\calT_{x^n}\times V_{Y|X}$ is the conditional type class denoted by $\calT_{V_{Y|X}}(x^n)$. For any $x^n\in\calT_{P_X}^n$, the set of all conditional distributions $V_{Y|X}\in\calP(\calY|\calX)$ such that the conditional type class $\calT_{V_{Y|X}}(x^n)$ is not empty is the set of conditional types given the marginal type $P_X$ and is denoted by $\calV^n(\calY;P_X)$.

The following results hold. For any $P_X\in\calP_n(\calX)$, the number of conditional types is upper bounded by
\begin{align}
|\calV^n(\calY;P_X)|\leq (n+1)^{|\calX||\calY|}.
\end{align}
For any $x^n\in\calT_{P_X}^n$, the size of the conditional type class $\calT_{V_{Y|X}}(x^n)$ satisfies
\begin{align}
(n+1)^{-|\calX||\calY|}\exp(nH(V_{Y|X})|P_X)\leq |\calT_{V_{Y|X}}(x^n)|\leq \exp(nH(V_{Y|X})|P_X)\label{size:condtype}.
\end{align}
Given any $x^n\in\calT_{P_X}^n$, $W_{Y|X}\in\calP(\calY|\calX)$ and $V_{Y|X}\in\calV^n(\calY;P_X)$, for any $y^n\in\calT_{V_{Y|X}}(x^n)$, 
\begin{align}
W_{Y|X}^n(y^n|x^n)=\exp(-n(H(V_{Y|X})+D(V_{Y|X}\|W_{Y|X})|P_X))\label{prob:condseq}.
\end{align}
Thus, it follows from \eqref{size:condtype} and \eqref{prob:condseq} that
\begin{align}
(n+1)^{-|\calX||\calY|}\leq \frac{W_{Y|X}^n(\calT_{V_{Y|X}}(x^n))}{\exp(-nD(V_{Y|X}\|W_{Y|X})|P_X))}&\leq 1
\end{align}

\subsection{Mathematical Tools}
In this section, we present the mathematical tools used to prove second-order asymptotics, which are essentially the generalization of central limit theorems. Let $X^n=(X_1,\ldots,X_n)$ be a collection of $n$ i.i.d. random variables with zero mean and variance $\sigma^2$ and let the normalized sum of these $n$ random variables be
\begin{align}
S_n&:=\frac{1}{n}\sum_{i\in[n]} X_i.
\end{align}

We first recall the weak law of large numbers~\cite{feller1971law}, which states that the normalized sum $S_n$ converges in probability to its mean.
\begin{theorem}[The Weak Law of Large Numbers]
\label{wlln}
For any positive real number $\delta\in\bbR_+$, 
\begin{align}
\lim_{n\to\infty}\Pr\{S_n>\delta\}=0.
\end{align}
\end{theorem}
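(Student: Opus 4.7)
The plan is to prove this via Chebyshev's inequality, which is the standard route to the weak law of large numbers and gives a quantitative rate that already exposes the $1/n$ decay that sharper results (Berry--Esseen, to appear later in the monograph) will refine. First, I would observe that since $\Pr\{S_n>\delta\}\leq \Pr\{|S_n|>\delta\}$ for any $\delta>0$, it suffices to control the two-sided tail; this reduction is essentially free and lets me invoke Markov's inequality applied to the nonnegative random variable $S_n^2$.

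The key computation is the variance of $S_n$. Using that $\bbE[X_i]=0$ and that the $X_i$ are independent with common variance $\sigma^2$, I would expand
\begin{align}
\bbE[S_n^2]=\frac{1}{n^2}\sum_{i\in[n]}\sum_{j\in[n]}\bbE[X_iX_j]=\frac{1}{n^2}\sum_{i\in[n]}\bbE[X_i^2]=\frac{\sigma^2}{n},
\end{align}
where the cross terms vanish because independence combined with zero mean gives $\bbE[X_iX_j]=\bbE[X_i]\bbE[X_j]=0$ for $i\neq j$. Then Markov's inequality applied to $S_n^2$ yields
\begin{align}
\Pr\{S_n>\delta\}\leq \Pr\{S_n^2>\delta^2\}\leq \frac{\bbE[S_n^2]}{\delta^2}=\frac{\sigma^2}{n\delta^2},
\end{align}
which tends to zero as $n\to\infty$ for any fixed $\delta\in\bbR_+$, establishing the claim.

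There is no real obstacle here: the only subtlety worth flagging is that the statement is written one-sidedly, but the proof naturally controls the two-sided tail $\Pr\{|S_n|>\delta\}$, and the one-sided bound follows by inclusion. I would also note in passing that the finite-variance hypothesis is used only in the second-moment calculation; the weak law in fact holds under the weaker assumption $\bbE|X_1|<\infty$ via a truncation argument, but since all sources considered in this monograph have finite variance, the Chebyshev proof is both the cleanest and the most informative for the finite-blocklength perspective developed subsequently.
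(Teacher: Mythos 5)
Your proof is correct. The paper does not actually prove Theorem~\ref{wlln} --- it merely recalls the statement and cites Feller --- so there is no in-text argument to compare against; your Chebyshev argument (Markov's inequality applied to $S_n^2$, with the variance computation $\bbE[S_n^2]=\sigma^2/n$ from independence and zero mean) is the standard route, it is fully correct, and your observation that the one-sided tail is dominated by the two-sided tail handles the one-sided form in which the theorem is stated.
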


In the proofs of many theorems, the Markov's inequality is used.
\begin{theorem}[The Markov's Inequality]
\label{markovineq}
For any non-negative real number $\theta\in\bbR_+$ and any positive real number $t$,
\begin{align}
\Pr\{S_n>t\}
&\leq \frac{\bbE[\exp(\theta S_n)]}{\exp(t\theta)}.
\end{align}
\end{theorem}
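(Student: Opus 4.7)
The plan is to reduce the stated bound to the elementary Markov inequality for a single non-negative random variable, namely the fact that $\Pr\{Y\geq a\}\leq \bbE[Y]/a$ for any $Y\geq 0$ and $a>0$. This elementary fact follows directly from writing $a\cdot\bbo\{Y\geq a\}\leq Y$ pointwise and taking expectations on both sides. Once this baseline inequality is in hand, the theorem becomes a one-line application after the right exponential transformation.

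First I would handle the trivial edge case $\theta=0$, in which the claimed right-hand side equals $\bbE[1]/1=1$, so the bound is vacuous. For $\theta>0$, the key observation is that the map $x\mapsto \exp(\theta x)$ is strictly increasing on $\bbR$, so the event $\{S_n>t\}$ coincides with $\{\exp(\theta S_n)>\exp(\theta t)\}$. Applying the elementary Markov inequality to the non-negative random variable $Y=\exp(\theta S_n)$ with threshold $a=\exp(\theta t)$ yields
\begin{align}
\Pr\{S_n>t\}=\Pr\{\exp(\theta S_n)>\exp(\theta t)\}\leq \frac{\bbE[\exp(\theta S_n)]}{\exp(\theta t)},
\end{align}
which is exactly the claimed inequality.

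There is essentially no hard step: the only point worth flagging is the requirement $\theta\geq 0$, which is needed so that $x\mapsto \exp(\theta x)$ is non-decreasing and the event rewrite is valid; for negative $\theta$ the inequality direction would flip. I would therefore emphasise this monotonicity point explicitly so that when the result is later specialised in Chernoff-type bounds (where one optimises over $\theta\geq 0$), the range of validity is unambiguous. No additional machinery (independence of the $X_i$'s, finiteness of moments beyond what is implicit in $\bbE[\exp(\theta S_n)]$ being well defined, etc.) is required for the statement itself.
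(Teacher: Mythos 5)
Your proof is correct and is the standard argument: apply the elementary Markov inequality $\Pr\{Y\geq a\}\leq \bbE[Y]/a$ (which follows from $a\,\bbo\{Y\geq a\}\leq Y$) to the non-negative variable $Y=\exp(\theta S_n)$ with threshold $a=\exp(\theta t)$, using the monotonicity of $x\mapsto\exp(\theta x)$ for $\theta\geq 0$ to rewrite the event. The paper itself states Theorem~\ref{markovineq} without proof, so there is no paper argument to compare against; your treatment of the edge case $\theta=0$ and your explicit note on why $\theta\geq 0$ is needed are both sensible additions, and no further machinery is required.
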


The Berry-Esseen Theorem for i.i.d. random variables~\cite{berry1941accuracy,esseen1942liapounoff} is critical in deriving second-order asymptotics.
\begin{theorem}[The Berry-Esseen Theorem]
\label{berrytheorem}
Assume that the third absolute moment of $X_1$ is finite, i.e., $T:=\mathbb{E}{|X_1|^3}<\infty$. For each $n\in\mathbb{N}$,
\begin{align}
\sup_{t\in\bbR}\left|\Pr\left\{S_n\geq t\sqrt{\frac{\sigma^2}{n}}\right\}-\rmQ(t)\right|\leq \frac{T}{\sigma^3\sqrt{n}}.
\end{align}
\end{theorem}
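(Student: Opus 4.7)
The plan is to follow the classical Fourier-analytic proof via Esseen's smoothing inequality. First, I would standardize by setting $Y_i := X_i/\sigma$ so that $\mathbb{E}[Y_i] = 0$, $\mathrm{Var}(Y_i) = 1$, and $\rho := \mathbb{E}|Y_i|^3 = T/\sigma^3$. Letting $Z_n := n^{-1/2}\sum_{i=1}^n Y_i$, the claim reduces to showing $\sup_t |\Pr(Z_n \leq t) - \Phi(t)| \leq \rho/\sqrt{n}$, where $\Phi = 1 - \mathrm{Q}$ is the standard normal CDF.

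Next I would invoke Esseen's smoothing inequality, which for any threshold $U > 0$ yields
\begin{equation*}
\sup_t |F_n(t) - \Phi(t)| \leq \frac{1}{\pi}\int_{-U}^{U}\left|\frac{\varphi_n(u) - e^{-u^2/2}}{u}\right|du + \frac{c}{U},
\end{equation*}
where $F_n$ is the CDF of $Z_n$, $\varphi_n$ its characteristic function, and $c$ an absolute constant coming from the smoothing lemma. I would apply this with $U = c_0\sqrt{n}/\rho$ for a suitably chosen $c_0$, so that the second term contributes at most a fixed multiple of $\rho/\sqrt{n}$.

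To control the integrand, I would Taylor-expand the characteristic function $\varphi$ of $Y_1$ around zero using the zero mean, unit variance, and finite third absolute moment: $\varphi(v) = 1 - v^2/2 + r(v)$ with $|r(v)| \leq \rho|v|^3/6$. Since independence gives $\varphi_n(u) = [\varphi(u/\sqrt{n})]^n$, a standard log-and-exponentiate argument (taking the principal branch of the logarithm on a neighborhood of $1$ where $\varphi(u/\sqrt{n})$ lies) yields the pointwise estimate
\begin{equation*}
|\varphi_n(u) - e^{-u^2/2}| \leq \frac{c_1\rho|u|^3}{\sqrt{n}}\, e^{-u^2/3}, \quad |u|\leq U.
\end{equation*}
Dividing by $|u|$ and integrating against the Gaussian factor produces a convergent Gaussian moment, so the first term in Esseen's inequality is also $O(\rho/\sqrt{n})$.

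The main obstacle is the quantitative bookkeeping needed to bring the overall constant down to $1$, as in the stated bound. The sharp Berry-Esseen constant in the i.i.d.\ case is known to lie near $0.4748$, so the target constant $1$ is comfortably loose; nevertheless, one must verify that the Taylor remainder can be safely exponentiated on the chosen range $|u| \leq U$, and that the contributions from both the truncated Fourier integral and the tail term $c/U$ combine to something at most $\rho/\sqrt{n}$. Care is also required for the trivial regime $\rho/\sqrt{n} \geq 1$, where the bound holds vacuously since the left-hand side is at most $1$. Everything else reduces to routine Gaussian calculus.
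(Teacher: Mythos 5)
The paper does not prove Theorem~\ref{berrytheorem}; it cites it as the classical Berry--Esseen theorem~\cite{berry1941accuracy,esseen1942liapounoff} and uses it as a black box, so there is no internal proof to compare against. Your sketch follows the standard Fourier-analytic route (standardize, invoke Esseen's smoothing inequality, Taylor-expand the characteristic function and exponentiate, then Gaussian moment calculus), and that structure is sound.

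The one genuine gap is the constant, which you flag but understate. A direct run through the smoothing argument you outline does \emph{not} deliver constant $1$: the original treatments of Berry and Esseen produce constants well above $1$, and pushing the constant below $1$ required genuinely sharper estimates developed over several decades (Zolotarev, van Beek, Shiganov, Shevtsova), culminating in the current best bound of roughly $0.4748$. So the inequality as stated is certainly \emph{true}, but it is not reachable by ``quantitative bookkeeping'' on your skeleton alone; you must either import those refined estimates or settle for an unspecified absolute constant $C$ in front of $T/(\sigma^3\sqrt{n})$. For the second-order asymptotics developed in the monograph the exact constant never matters (indeed the companion i.n.i.d.\ version, Theorem~\ref{berrytheorem4general}, and several later applications use a constant of $6$), so stating the theorem with an absolute constant would sidestep the issue entirely. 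Apart from this, your plan is the right one.
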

The Berry-Esseen theorem states that the probability that the normalized sum $S_n$ deviates from its mean by a sequence which scales as $\Theta\Big(\frac{1}{\sqrt{n}}\Big)$ is well approximated by the same probability for a standard normal variable, with the difference in the order of $O\Big(\frac{1}{\sqrt{n}}\Big)$ that depends on the variance $\sigma^2$ and the third absolute moment $T$. The assumption that $T$ is finite is satisfied by any DMS. It is the mathematical theorem that one applies in the analysis of second-order asymptotics for source and channel coding problems that involve a single encoder.

To tackle certain problems, we need to consider independent but not identically distributed (i.n.i.d.) random variables. Let $X^n=(X_1,\ldots,X_n)$ be a sequence of random variables, where each random variable $X_i$ has zero mean, variance $\sigma_i^2:=\bbE[X_i^2]>0$ and finite third-absolute moment $T_i:=\bbE[|X_i|^3]$. Define the average variance and third-absolute moment as follows:
\begin{align}
\sigma^2&:=\frac{1}{n}\sum_{i\in[n]}\sigma_i^2,\\
T&:=\frac{1}{n}\sum_{i\in[n]}T_i^.
\end{align}
The Berry-Esseen theorem for i.n.i.d. random variables states as follows.
\begin{theorem}
\label{berrytheorem4general}
For each $n\in\mathbb{N}$,
\begin{align}
\sup_{t\in\bbR}\left|\Pr\left\{S_n\geq t\sqrt{\frac{\sigma^2}{n}}\right\}-\rmQ(t)\right|\leq \frac{6T}{\sigma^3\sqrt{n}}.
\end{align}
\end{theorem}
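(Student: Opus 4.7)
The plan is to prove this via the classical characteristic-function route using Esseen's smoothing inequality, which underpins all Berry--Esseen bounds. First I would standardize the sum by setting $W_n := \frac{1}{\sqrt{n\sigma^2}}\sum_{i\in[n]}X_i$, so that $\bbE[W_n]=0$, $\Var[W_n]=1$, and $\{S_n\geq t\sqrt{\sigma^2/n}\}=\{W_n\geq t\}$. Writing the Lyapunov ratio as $L_n:=\frac{\sum_{i\in[n]}T_i}{(n\sigma^2)^{3/2}}=\frac{T}{\sigma^3\sqrt{n}}$, the goal reduces to showing $\sup_{t\in\bbR}|\Pr\{W_n\leq t\}-(1-\rmQ(t))|\leq 6L_n$, which is the standard i.n.i.d.\ form of the Berry--Esseen inequality. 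Unlike in the i.i.d.\ Theorem~\ref{berrytheorem}, we cannot collapse the sum to a single characteristic function raised to the $n$-th power, so we must retain the full product structure throughout.

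Second, I would apply Esseen's smoothing lemma: for every cutoff $\tau>0$,
$$\sup_{t\in\bbR}\bigl|\Pr\{W_n\leq t\}-(1-\rmQ(t))\bigr|\leq \frac{1}{\pi}\int_{-\tau}^{\tau}\left|\frac{\varphi_n(u)-e^{-u^2/2}}{u}\right|\rmd u+\frac{24}{\pi\tau\sqrt{2\pi}},$$
where $\varphi_n(u):=\bbE[e^{\rmi u W_n}]=\prod_{i\in[n]}\varphi_i(u/\sqrt{n\sigma^2})$ and $\varphi_i$ is the characteristic function of $X_i$. To control the integrand, I would Taylor-expand each $\varphi_i$ around zero using the hypotheses $\bbE[X_i]=0$, $\bbE[X_i^2]=\sigma_i^2$, and $\bbE[|X_i|^3]=T_i$ to obtain the pointwise estimate $|\varphi_i(s)-1+\sigma_i^2 s^2/2|\leq T_i|s|^3/6$. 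Taking logarithms and summing after the substitution $s=u/\sqrt{n\sigma^2}$ yields $\log\varphi_n(u)=-u^2/2+r_n(u)$ with $|r_n(u)|\lesssim L_n|u|^3$ on an interval $|u|\leq \tau$ chosen of order $1/L_n$. Combining this with the elementary inequality $|e^a-e^b|\leq |a-b|e^{\max(\Re a,\Re b)}$ and the sharpened real-part bound $\Re\log\varphi_n(u)\leq -u^2/4$ on the same range gives $|\varphi_n(u)-e^{-u^2/2}|\leq CL_n|u|^3 e^{-u^2/4}$.

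Finally, I would substitute this estimate into the smoothing bound, evaluate the Gaussian integral $\int u^2 e^{-u^2/4}\,\rmd u$, and pick $\tau$ of order $1/L_n$ so that both the cutoff integral and the residual $24/(\pi\tau\sqrt{2\pi})$ term are each proportional to $L_n$. Summing them yields a bound of the form $C' L_n$, which after substituting $L_n=T/(\sigma^3\sqrt{n})$ is of the advertised shape. The hard part here is \emph{tracking numerical constants}: the conceptual pipeline is entirely classical, but squeezing the overall multiplicative constant down to $6$ (rather than, say, Esseen's original $7.5$) requires careful bookkeeping through the cubic Taylor remainder, Esseen's smoothing constant $24/\pi$, the range on which the real-part bound $-u^2/4$ holds, and the Gaussian tail integrals. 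This constant optimization is precisely the content of the classical arguments of \cite{berry1941accuracy,esseen1942liapounoff} specialized to the non-identically distributed setting.
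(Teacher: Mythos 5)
The paper does not prove Theorem~\ref{berrytheorem4general} at all: it is stated in Section~1.3.4 (``Mathematical Tools'') as a known result, cited implicitly via the references to Berry~\cite{berry1941accuracy} and Esseen~\cite{esseen1942liapounoff}, in exactly the same way that the weak law of large numbers, Markov's inequality, and the i.i.d.\ and vector versions of Berry--Esseen are quoted without proof. So there is no ``paper's own proof'' to compare against.

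That said, your proposal is the correct classical route (Esseen's smoothing inequality plus third-order Taylor control of the product of characteristic functions), and your reduction of the paper's statement to $\sup_t|\Pr\{W_n\le t\}-(1-\rmQ(t))|\le 6T/(\sigma^3\sqrt n)$ with $L_n = T/(\sigma^3\sqrt n)$ is arithmetically consistent with the paper's normalization $T=\frac1n\sum_i T_i$. The sketch is honest about the one place it does not close the gap: the numerical constant $6$. That is a genuine lacuna, not a formality --- Esseen's original i.n.i.d.\ constant was about $7.59$, and bringing it below $6$ requires more than naive bookkeeping (one needs sharper real-part bounds, careful choice of the truncation radius $\tau$, and splitting the smoothing integral; the modern sharp constants on the order of $0.56$ need considerably more). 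Since the paper merely needs \emph{some} universal constant to carry out its second-order analyses (the constant only affects the $O(1/\sqrt n)$ residual term, never the leading asymptotics), for the purposes of this monograph your sketch would suffice if one is willing to replace $6$ by an unspecified universal constant; obtaining literally $6$ is outside what the sketch establishes, but this is consistent with the paper simply citing the result rather than proving it.
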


To derive results for multiterminal lossy source coding problems with multiple encoders, we need the following multivariate generalization of the Berry-Esseen theorem~\cite{gotze1991rate}. Given $d\in\bbN$, for each $i\in[n]$, let $\bX_i=(X_{i,1},\ldots,X_{i,k})$ be a $k$-dimensional random vector with zero mean vector and covariance matrix $\bSigma$. Let the normalized sum vector be $\bS_n:=\frac{1}{\sqrt{n}}\bX_i$.
\begin{theorem}[Vector Version of the Berry-Esseen Theorem]
\label{vectorBerry}
Let the third absolute moment of $\bX_1$ be $T:=\mathbb{E}[\|\bX_1\|^3]$. For each $n\in\bbN$, we have
\begin{align}
\sup_{(t_1,\ldots,t_d)\in\bbR^d}\Big|\Pr\{\bS_n\leq \bt\}-\Psi_k(t_1,\ldots,t_k;\bzero_k,\bSigma)\Big|\leq \frac{K(d)T}{\sqrt{n}},
\end{align}
where $>$ refers to elementwise comparison and $K(d)$ is a constant depending on the dimension $d$ only (see~\cite{Ben03,raivc2019multivariate} for explicit bounds).
\end{theorem}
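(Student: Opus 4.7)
The plan is to follow the classical Fourier-analytic strategy of Esseen, lifted to $\bbR^k$ as in G\"otze~\cite{gotze1991rate} and subsequent refinements. First I would standardize by setting $\bY_i := \bSigma^{-1/2}\bX_i$, so that the summands become i.i.d., mean zero, with identity covariance and with third absolute moment at most $c(k)T$; absorbing $c(k)$ into the final constant $K(d)$ reduces the problem to the case $\bSigma=\bI_k$. Writing $\bW_n := n^{-1/2}\sum_{i\in[n]}\bY_i$ and letting $\phi(\bu) := \bbE[\exp(\mathrm{i}\langle\bu,\bY_1\rangle)]$, the characteristic function of $\bW_n$ factorises as $\phi_n(\bt) = \phi(\bt/\sqrt{n})^n$, and the target is the Gaussian characteristic function $\psi(\bt) := \exp(-\|\bt\|^2/2)$.

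Next I would derive a pointwise estimate on $|\phi_n(\bt)-\psi(\bt)|$. A Taylor expansion to third order, using $\bbE[\bY_1]=\bzero_k$ and $\bbE[\bY_1\bY_1^\top]=\bI_k$, gives $\phi(\bu)=1-\tfrac12\|\bu\|^2+R(\bu)$ with $|R(\bu)|\le c_0\, T\,\|\bu\|^3$. Combining this with the elementary inequality $|a^n-b^n|\le n\max(|a|,|b|)^{n-1}|a-b|$ for $a,b$ in the closed unit disk, together with the local bound $|\phi(\bu)|\le\exp(-\|\bu\|^2/3)$ near the origin, I would obtain
\begin{equation*}
|\phi_n(\bt)-\psi(\bt)|\,\le\,c_1\,\frac{T\,\|\bt\|^3}{\sqrt{n}}\,\exp\!\left(-\frac{\|\bt\|^2}{4}\right),\qquad \|\bt\|\le c_2\sqrt{n}/T^{1/3},
\end{equation*}
with a superexponentially small contribution from the complementary region.

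The third step is to convert this estimate into a uniform bound on the distribution function over lower-left orthants. For this I would invoke a multivariate Esseen-type smoothing inequality: convolving the signed measure $\Pr\{\bW_n\in\cdot\}-\Psi_k(\cdot;\bzero_k,\bI_k)$ with a smooth kernel whose Fourier support lies in the ball of radius $\Lambda$ controls the Kolmogorov--Smirnov discrepancy over orthants by
\begin{equation*}
\int_{\|\bt\|\le\Lambda}\frac{|\phi_n(\bt)-\psi(\bt)|}{\|\bt\|}\,\rmd\bt \;+\;\frac{c_3(k)}{\Lambda}.
\end{equation*}
Plugging in the pointwise bound and optimising $\Lambda\asymp\sqrt{n}/T^{1/3}$ yields the advertised $O(T/\sqrt{n})$ rate with a dimension-dependent prefactor, which we package as $K(d)$.

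The hard part is the smoothing step. Unlike the univariate Esseen lemma, the multivariate version must cope with the fact that orthant boundaries are $(k-1)$-dimensional, so one has to control the Gaussian mass of thin slabs $\{\bx : t_j-\veps\le x_j\le t_j+\veps\}$ uniformly in $\bt$ and in the coordinate index $j$, and then bound the perturbation introduced by convolving with the smoothing kernel uniformly over the class of orthants. This geometric ingredient is precisely where the constant $K(d)$ acquires its dimension dependence; obtaining sharp values as in~\cite{raivc2019multivariate} requires a carefully tailored choice of kernel and a delicate analysis of half-space discrepancies that goes well beyond the one-dimensional calculation.
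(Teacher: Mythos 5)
The paper states Theorem \ref{vectorBerry} as a background mathematical tool and gives no proof; it simply defers to \cite{gotze1991rate,Ben03,raivc2019multivariate} for the statement and for explicit constants, so there is no in-paper argument to compare against. Your Fourier-analytic (Esseen-type) outline is a legitimate classical route and is broadly sound, but it is not the route the cited references take: G\"otze, Bentkus, and Rai\v{c} all obtain the multivariate bound via Stein's method, which bypasses the multivariate smoothing lemma entirely, handles the geometry of orthants and convex sets more directly, and yields a substantially sharper dimension dependence in $K(d)$. The Fourier route requires precisely the delicate slab-mass/boundary-regularity estimate you correctly flag as the hard part, and even when carried out carefully tends to produce a worse $K(d)$ than Stein-type arguments.

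Separately, the opening reduction needs care. After setting $\bY_i=\bSigma^{-1/2}\bX_i$, the standardized third absolute moment obeys $\bbE[\|\bY_1\|^3]\le\|\bSigma^{-1/2}\|_{\mathrm{op}}^{3}\,T$, and $\|\bSigma^{-1/2}\|_{\mathrm{op}}$ depends on the smallest eigenvalue of $\bSigma$, not on the dimension alone, so it cannot be absorbed into $K(d)$ as you claim. This is actually a symptom of an imprecision in the theorem as printed: for the bound to be scale-invariant (as the univariate Theorem \ref{berrytheorem} is, via its $\sigma^3$ in the denominator), the quantity $T$ must be read as the standardized third moment $\bbE[\|\bSigma^{-1/2}\bX_1\|^3]$, not $\bbE[\|\bX_1\|^3]$. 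With that reading, your standardization step is immediate with no extra constant, and the rest of the Fourier sketch --- the third-order characteristic-function expansion, the pointwise Gaussian-tail bound, and the Esseen smoothing with $\Lambda\asymp\sqrt{n}/T$ --- fits together consistently at the advertised $O(T/\sqrt{n})$ rate.
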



\chapter{Lossless Compression}
\label{chap:lossless}

This chapter focuses on lossless source coding, the notably simplest problem in vector quantization. In his seminal 1948 paper~\cite{shannon1948mathematical}, Shannon showed that the minimal compression rate for reliable lossless source coding is the entropy of the discrete memoryless source, assuming that the blocklength of the source to be compressed tends to infinity. Inspired by the low-latency requirement of practical communications systems, one wonders what the performance degradation is if one operates at a finite blocklength. This question was answered by Yushkevich~\cite{yushkevich1953limit} and by Strassen~\cite{strassen1962asymptotische} who derived the second-order asymptotic approximation to the finite blocklength performance, revived by Hayashi~\cite{hayashi2009information} who rediscovered the result using the information spectrum method and further refined by Kontoyiannis and Verd\'u~\cite{verdu14} and by Chen, Effros and Kostina~\cite{chen2020lossless} who improved the previous bounds. 

In this chapter, we present finite blocklength and second-order asymptotic bounds for lossless source coding, demonstrate the tightness of the second-order asymptotics and discuss the relationship of second-order asymptotics and other refined asymptotic analyses. This chapter is largely based on~\cite{hayashi2009information,strassen1962asymptotische}.

\section{Problem Formulation and Shannon's Result}
Consider any length-$n$ source sequence $X^n$ that is generated i.i.d. from a probability mass function (PMF) $P_X\in\calP(\calX)$. In lossless source coding, one is interested in perfectly recovering the source sequence $X^n$ from its compressed version. Formally, a code is defined as follows.
\begin{definition}
Given any $(n,M)\in\bbN^2$, an $(n,M)$-code for source coding consists of 
\begin{itemize}
\item an encoder $f:\calX^n\to\calM:=[1:M]$,
\item a decoder $\phi:\calM\to\calX^n$.
\end{itemize}
\end{definition}
For simplicity, we use $\hatX^n$ to denote the reproduced source sequence at the decoder, i.e., $\hatX^n=\phi(f(X^n))$. The performance metric for lossless source coding is the error probability, i.e.,
\begin{align}
\rmP_{\rme,n}
&:=\Pr\{\phi(f(X^n))\neq X^n\}\\
&=\Pr\{\hatX^n\neq X^n\}.
\end{align}
In the above definition, $n$ is the blocklength of the source sequence and $M$ is the number of codewords that encoder can use. 

To achieve zero error, $M$ should be chosen such that $M\geq |\calX|^n$ to allow one to one mapping. However, this means no compression is done. Thus, to compress the source, we need to tolerate a non-zero error probability. For efficient compression, one hopes $M$ is as small as possible given any blocklength $n$ and error probability $\rmP_{\rme,n}$. To capture the fundamental limit of lossless source coding, for any $n\in\bbN$, let $M^*(n,\varepsilon)$ be the minimum number of codewords such that there exists an $(n,M)$-code satisfying $\rmP_{\rme,n}\leq \varepsilon$, i.e.,
\begin{align}
M^*(n,\varepsilon):=\inf\big\{M:~\exists\mathrm{~an~}(n,M)\mathrm{-code~s.t.~}\rmP_{\rme,n}\leq \varepsilon\big\}\label{def:M*4lossless}.
\end{align}
Ideally, one would like to exactly characterize $M^*(n,\varepsilon)$ for each finite $n\in\bbN$ and any tolerable error probability $\varepsilon\in(0,1)$. But this is very challenging and information theorists instead derived approximations to $M^*(n,\varepsilon)$.

The most famous such approximation for lossless source coding was provided by Shannon~\cite{shannon1948mathematical}, which states that
\begin{align}
\lim_{\varepsilon\to 0}\lim_{n\to\infty}\frac{1}{n}\log M^*(n,\varepsilon)=H(P_X)\label{shannon:source}.
\end{align}
The above result means that to achieve vanishing error probability with respect to the blocklength $n$, the average minimal number of bits that one should use to compress a source symbol equals the entropy $H(P_X)$ of the source. The above result is also known as the first-order asymptotics since it characterizes the first dominant term in the expansion of the non-asymptotic rate $R(n,\varepsilon):=\frac{1}{n}\log M^*(n,\varepsilon)$ of an optimal code when $\varepsilon\to 0$. In fact, the above result holds for any $\varepsilon\in(0,1)$, which is known as strong converse and implied by second-order asymptotics.

\section{Non-Asymptotic Bounds}
Second-order asymptotics provides approximation to the finite blocklength performance $M^*(n,\varepsilon)$, which demonstrates a deeper understanding for the interplay among the blocklength, the error probability and the coding rate. Usually, to obtain second-order asymptotics, one first derives non-asymptotic achievability and converse bounds for any finite blocklength $n$ and next apply the Berry-Esseen theorem to the derived bounds appropriately. 

In~\cite[Sections 3.1-3.2]{TanBook}, the non-asymptotic and second-order asymptotic bounds by Strassen~\cite{strassen1962asymptotische} were presented and in \cite[Section 3.3]{TanBook}, an alternative proof of second-order asymptotic using the method of types~\cite{csiszar1998mt,csiszar2011information} was given. In this section, we present the non-asymptotic bounds of Han~\cite{han2003information} based on the information spectrum method and provide an alternative proof of second-order asymptotics using Han's results.

For ease of notation, given any $x\in\calX$, define the entropy density $\imath(x|P_X)$ as
\begin{align}
\imath(x|P_X):=-\log P_X(x)\label{def:entropydensity}.
\end{align}
We first recall a finite blocklength achievability bound~\cite[Lemma 1.3.1]{han2003information}.
\begin{theorem}
\label{ach:lossless}
For any $(n,M)\in\bbN^2$, there exists an $(n,M)$-code whose error probability is upper bounded by
\begin{align}
\rmP_{\rme,n}\leq \Pr\Big\{\sum_{i\in[n]}\imath(X_i|P_X)\geq \log M\Big\}\label{ach:pen}.
\end{align}
\end{theorem}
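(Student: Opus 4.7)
The plan is to construct a deterministic code based on thresholding the entropy density, which is the classical information-spectrum approach going back to Han. Define the ``typical set''
\begin{align}
\calA := \Big\{x^n\in\calX^n : \sum_{i\in[n]}\imath(x_i|P_X) < \log M\Big\} = \Big\{x^n\in\calX^n : P_X^n(x^n) > \tfrac{1}{M}\Big\},
\end{align}
where the equality follows because $\sum_{i\in[n]}\imath(x_i|P_X) = -\log P_X^n(x^n)$ by the definition of the entropy density in \eqref{def:entropydensity} and the product structure of $P_X^n$. The first key observation is a simple counting bound: since every $x^n\in\calA$ satisfies $P_X^n(x^n)>1/M$ and the total probability is at most $1$, we must have $|\calA|\leq M$.

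Given this, I would construct the encoder $f$ and decoder $\phi$ as follows. Enumerate the elements of $\calA$ as $\{x^n(1),\ldots,x^n(|\calA|)\}$ and let $f(x^n(j)) := j$ for each $j\in[|\calA|]$; for every $x^n\notin\calA$, set $f(x^n) := 1$ (an arbitrary index used as an ``erasure''). The decoder is the natural inverse: $\phi(j) := x^n(j)$ for $j\in[|\calA|]$, and $\phi(j)$ is arbitrary for $j>|\calA|$. With this scheme, exact reconstruction $\phi(f(x^n)) = x^n$ holds for every $x^n\in\calA$, so an error can occur only when $X^n\notin\calA$. Therefore
\begin{align}
\rmP_{\rme,n} \leq \Pr\{X^n\notin \calA\} = \Pr\Big\{\sum_{i\in[n]}\imath(X_i|P_X)\geq \log M\Big\},
\end{align}
which is exactly the bound in \eqref{ach:pen}.

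There is no serious obstacle here; the only subtlety is the counting step $|\calA|\leq M$, which requires recognizing that the threshold on the entropy density is equivalent to a lower bound on $P_X^n(x^n)$. Everything else is a one-to-one assignment on the typical set and a trivial union bound between the error event and the atypical event. Note that this construction does not require random coding; the scheme is deterministic and, in fact, corresponds to the optimal lossless source code that lists the $M$ most probable source sequences (a minor rearrangement shows $\calA$ is contained in the $M$ most probable outcomes).
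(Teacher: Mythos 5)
Your proposal is correct and follows essentially the same route as the paper: define the set $\calA$ where the entropy density is below threshold, observe that each element there has $P_X^n$-probability exceeding $1/M$ so $|\calA|\leq M$, assign those sequences distinct indices, and bound the error probability by $\Pr\{X^n\notin\calA\}$. The closing remark about $\calA$ sitting inside the $M$ most probable sequences is a nice extra observation but is not needed for the bound.
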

The proof of Theorem \ref{ach:lossless} is simple and elegant. For completeness, we present the proof here.
\begin{proof}
For any $n\in\bbN$, define a set
\begin{align}
\calA_n:=\Big\{x^n\in\calX^n:~\sum_{i\in[n]}\imath(x_i|P_X)<\log M\Big\}\label{def:calAn}.
\end{align}
Note that if $x^n\in\calA^n$, we have
\begin{align}
P_X^n(x^n)
&=\prod_{i\in[n]}P_X(x_i)\\
&=\prod_{i\in[n]}\exp(-\imath(x_i|P_X))\\
&=\exp\big(-\sum_{i\in[n]}\imath(x_i|P_X)\big)\\
&>\frac{1}{M}\label{lower:pxn}.
\end{align}
It follows that
\begin{align}
1&\geq \sum_{x^n\in\calA_n}P_X^n(x^n)\\
&\geq \sum_{x^n\in\calA^n}\frac{1}{M}\\
&=\frac{|\calA_n|}{M}.
\end{align}
Thus, $|\calA_n|\leq M$. Then we can construct an $(n,M)$-code where the encoder $f$ encodes each element of $\calA_n$ to a unique number in $[|\calA_n|]$ and declares an error otherwise. This way, the number of codewords required is $|\calA_n|\leq M$ and the error probability satisfies \eqref{ach:pen}.
\end{proof}

We next recall the finite blocklength converse bound~\cite[Lemma 1.3.2]{han2003information}, which presents a lower bound on the error probability of any $(n,M)$-code.
\begin{theorem}
\label{lossless:converse}
For any $(n,M)\in\bbN^2$ and $\gamma\in\bbR_+$, any $(n,M)$-code satisfies
\begin{align}
\rmP_{\rme,n}&\geq \Pr\Big\{\sum_{i\in[n]}\imath(X_i|P_X)\geq \log M+n\gamma\Big\}-\exp(-n\gamma).
\end{align}
\end{theorem}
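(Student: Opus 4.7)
The plan is to prove the converse bound using a standard information-spectrum argument that isolates the contribution of ``atypical'' low-probability source sequences. The key structural observation is that any $(n,M)$-code correctly decodes at most $M$ distinct source sequences: indeed, the set of correctly decoded sequences is contained in the image of $\phi \circ f$, i.e., in $\{\phi(m) : m \in \calM\}$, which has cardinality at most $M$. Denote this set of correctly decoded sequences by $\calB_n \subseteq \calX^n$, so $|\calB_n|\leq M$ and $\rmP_{\rme,n}=\Pr\{X^n\notin \calB_n\}$.

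Next, I would introduce the ``low-probability'' set
\begin{align}
\calC_n:=\Big\{x^n\in\calX^n:\sum_{i\in[n]}\imath(x_i|P_X)\geq \log M+n\gamma\Big\},
\end{align}
which, using the definition \eqref{def:entropydensity} of the entropy density, is exactly the set of sequences $x^n$ with $P_X^n(x^n)\leq \frac{1}{M}\exp(-n\gamma)$. The quantity on the right-hand side of the theorem is $\Pr\{X^n\in\calC_n\}-\exp(-n\gamma)$, so it suffices to bound $\Pr\{X^n\in\calC_n\}$ from above by $\rmP_{\rme,n}+\exp(-n\gamma)$.

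The main step is to split the event $\{X^n\in\calC_n\}$ according to whether $X^n$ is correctly decoded:
\begin{align}
\Pr\{X^n\in\calC_n\}
=\Pr\{X^n\in\calC_n\cap\calB_n\}+\Pr\{X^n\in\calC_n\cap\calB_n^{\mathrm{c}}\}.
\end{align}
The second term is trivially at most $\Pr\{X^n\notin\calB_n\}=\rmP_{\rme,n}$. For the first term, I would sum the probabilities over $x^n\in\calC_n\cap\calB_n$ and use the uniform upper bound $P_X^n(x^n)\leq \frac{1}{M}\exp(-n\gamma)$ available on $\calC_n$, together with $|\calB_n|\leq M$, to conclude it is at most $\exp(-n\gamma)$. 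Combining these two estimates gives the desired inequality.

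I do not anticipate a serious obstacle here; the argument is short and purely combinatorial once the decomposition is in place. The only subtlety worth flagging is the cardinality bound $|\calB_n|\leq M$, which relies on the fact that the decoder $\phi$ is a deterministic function on $[1:M]$, so distinct correctly-decoded source sequences must have distinct encoder outputs. This is precisely the counterpart, on the converse side, of the counting argument used in the achievability proof of Theorem~\ref{ach:lossless}.
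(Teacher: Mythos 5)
Your argument is correct and is essentially identical to the paper's proof, differing only in that you swap the names $\calB_n$ and $\calC_n$ relative to the paper's notation (the paper uses $\calB_n(\gamma)$ for the low-probability set and $\calC_n$ for the correctly decoded set). Both proofs perform the same decomposition of the low-probability event into its intersections with the correct-decoding event and its complement, and bound the two pieces by $\exp(-n\gamma)$ and $\rmP_{\rme,n}$ respectively.
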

The proof of Theorem \ref{lossless:converse} is similar to that of Theorem \ref{ach:lossless} and is also recalled here.
\begin{proof}
Analogously to $\calA^n$ in \eqref{def:calAn}, for any $\gamma\in\bbR$, define a set
\begin{align}
\calB_n(\gamma):=\Big\{x^n\in\calX^n:~\sum_{i\in[n]}\imath(x_i|P_X)\geq \log M+n\gamma\Big\}.
\end{align}
Furthermore, define the set of correctly decoded source sequences as
\begin{align}
\calC_n:=\big\{x^n\in\calX^n:~\phi(f(x^n))=x^n\big\}.
\end{align}
Then,
\begin{align}
\nn&\Pr\{X^n\in\calB_n(\gamma)\}\\*
&=\Pr\{X^n\in(\calB_n(\gamma)\cap\calC_n^\rmc)\}+\Pr\{X^n\in(\calB_n(\gamma)\cap\calC_n)\}\\
&\leq \Pr\{X^n\in\calC_n^\rmc\}+\Pr\{X^n\in(\calB_n(\gamma)\cap\calC_n)\}\\
&\leq \rmP_{\rme,n}+\Pr\{X^n\in(\calB_n(\gamma)\cap\calC_n)\label{epdefine},
\end{align}
where \eqref{epdefine} follows from the definition of the error probability $\rmP_{\rme,n}$. Similarly to \eqref{lower:pxn}, if $x^n\in\calB_n(\gamma)$, 
\begin{align}
P_X(x^n)
&=\exp\Big(-\sum_{i\in[n]}\imath(x_i|P_X)\Big)\\
&\leq \frac{\exp(-n\gamma)}{M}.
\end{align}
It follows that
\begin{align}
\Pr\{X^n\in(\calB_n(\gamma)\cap\calC_n)\}
&=\sum_{x^n\in(\calB_n(\gamma)\cap\calC_n)}P_X^n(x^n)\\
&\leq \sum_{x^n\in(\calB_n(\gamma)\cap\calC_n)}\frac{\exp(-n\gamma)}{M}\\
&\leq \frac{|\calC_n|\exp(-n\gamma)}{M}\\
&\leq \exp(-n\gamma)\label{sizeCn},
\end{align}
where \eqref{sizeCn} follows since for any $(n,M)$-code, the number of corrected decoded source sequences is no greater than $M$.
\end{proof}
In subsequent chapters, the proofs of Theorems \ref{ach:lossless} and \ref{lossless:converse} are generalized to obtain finite blocklength bounds for lossy source coding problems, which are also known as lossy vector quantization~\cite{gray1998tit}. 

\section{Second-Order Asymptotics} 
Applying the Berry-Esseen theorem to the finite blocklength bounds in Theorems \ref{ach:lossless} and \ref{lossless:converse}, one can obtain the second-order asymptotics, which provides a finer characterization of $M^*(n,\varepsilon)$ in \eqref{def:M*4lossless} beyond Shannon's classical first-order asymptotic result. To present the result, define the dispersion of the source $P_X$ as
\begin{align}
\rmV(P_X):=\mathrm{Var}[-\log P_X(X)]\label{def:sourcedispersion}.
\end{align}

\begin{theorem}
\label{second:lossless}
For any $\varepsilon\in(0,1)$,
\begin{align}
\log M^*(n,\varepsilon)
=nH(P_X)+\sqrt{n\rmV(P_X)}\rmQ^{-1}(\varepsilon)+O(\log n).
\end{align}
\end{theorem}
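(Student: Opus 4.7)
The plan is to combine the non-asymptotic bounds of Theorems \ref{ach:lossless} and \ref{lossless:converse} with the Berry-Esseen theorem (Theorem \ref{berrytheorem}) applied to the i.i.d. sum $S_n := \sum_{i\in[n]}\imath(X_i|P_X)$. Each summand has mean $H(P_X)$ and variance $\rmV(P_X)$ by definitions \eqref{def:entropydensity} and \eqref{def:sourcedispersion}, and since $\calX$ is finite, $-\log P_X(x)$ is bounded on $\supp(P_X)$, so the third absolute moment $T$ is finite and the Berry-Esseen constant $B := T/\rmV(P_X)^{3/2}$ is well defined when $\rmV(P_X) > 0$.

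For achievability, the plan is to invoke Theorem \ref{ach:lossless} with the choice $\log M = nH(P_X) + \sqrt{n\rmV(P_X)}\,\rmQ^{-1}(\varepsilon_n)$ where $\varepsilon_n := \varepsilon - B/\sqrt{n}$. Writing the probability in \eqref{ach:pen} as $\Pr\{(S_n - nH(P_X))/\sqrt{n\rmV(P_X)}\geq \rmQ^{-1}(\varepsilon_n)\}$ and applying Theorem \ref{berrytheorem} bounds it by $\rmQ(\rmQ^{-1}(\varepsilon_n)) + B/\sqrt{n} = \varepsilon$. A first-order Taylor expansion of $\rmQ^{-1}$ around $\varepsilon$, which is smooth on $(0,1)$, absorbs the perturbation $B/\sqrt{n}$ into an $O(1/\sqrt{n})$ additive term at the level of $\rmQ^{-1}$, hence an $O(1)$ term once multiplied by $\sqrt{n\rmV(P_X)}$, yielding $\log M^*(n,\varepsilon)\leq nH(P_X)+\sqrt{n\rmV(P_X)}\rmQ^{-1}(\varepsilon)+O(\log n)$ (the $\log n$ slack is generous but will be needed on the converse side).

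For the converse, the plan is to apply Theorem \ref{lossless:converse} with $\gamma = \tfrac{\log n}{2n}$ so that $\exp(-n\gamma) = 1/\sqrt{n}$. Assuming some $(n,M)$-code achieves $\rmP_{\rme,n}\leq \varepsilon$, substitute and apply Theorem \ref{berrytheorem} to get $\varepsilon \geq \rmQ(t_n) - (B+1)/\sqrt{n}$, where $t_n := (\log M + n\gamma - nH(P_X))/\sqrt{n\rmV(P_X)}$. Solving for $t_n$ gives $t_n \leq \rmQ^{-1}(\varepsilon + (B+1)/\sqrt{n})$, and again the smoothness of $\rmQ^{-1}$ on $(0,1)$ yields $t_n \leq \rmQ^{-1}(\varepsilon) + O(1/\sqrt{n})$, which after rearrangement produces the matching lower bound $\log M \geq nH(P_X) + \sqrt{n\rmV(P_X)}\rmQ^{-1}(\varepsilon) - O(\log n)$.

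The main obstacle is the degenerate case $\rmV(P_X) = 0$, which happens precisely when $P_X$ is uniform on its support; then $S_n = nH(P_X)$ almost surely and Theorem \ref{berrytheorem} gives no useful information. This case must be handled separately: the achievability bound \eqref{ach:pen} gives zero error as soon as $\log M > nH(P_X)$, and Theorem \ref{lossless:converse} with $\gamma = \tfrac{\log n}{2n}$ shows that $\log M \geq nH(P_X) - O(\log n)$ is necessary whenever $\varepsilon < 1$. Since $\sqrt{n\rmV(P_X)}\rmQ^{-1}(\varepsilon) = 0$ in this case, the theorem holds trivially with the $O(\log n)$ remainder. A minor secondary subtlety is the uniformity of the Taylor remainder for $\rmQ^{-1}$, but this is benign because $\varepsilon$ is fixed and $\rmQ^{-1}$ is $C^\infty$ on any compact subinterval of $(0,1)$.
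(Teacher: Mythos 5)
Your proposal follows essentially the same route as the paper's proof: apply the Berry--Esseen theorem (Theorem \ref{berrytheorem}) to the i.i.d.\ sum $\sum_i \imath(X_i|P_X)$ inside the non-asymptotic bounds of Theorems \ref{ach:lossless} and \ref{lossless:converse}, with a perturbed target $\varepsilon_n = \varepsilon \mp O(1/\sqrt{n})$ and a first-order Taylor expansion of $\rmQ^{-1}$ around the fixed $\varepsilon\in(0,1)$ to move the perturbation into an additive $O(1)$ (achievability) or $O(\log n)$ (converse) remainder. The two cosmetic differences are that you choose $\gamma$ so that $\exp(-n\gamma)=1/\sqrt{n}$ where the paper implicitly takes $\exp(-n\gamma)=1/n$ --- both harmless --- and that you explicitly dispose of the degenerate case $\rmV(P_X)=0$ (uniform $P_X$ on its support), which the paper glosses over; your treatment of that case is correct and is a welcome completion. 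Incidentally, you also use the correct Berry--Esseen normalization $T/\rmV(P_X)^{3/2}$ rather than the paper's typographical $T/\rmV(P_X)^{3}$.
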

We remark that Theorem \ref{second:lossless} was first obtained by Yushkevich~\cite{yushkevich1953limit} for a Markov source and by Strassen~\cite{strassen1962asymptotische} for DMSes. Hayashi~\cite{hayashi2008source} rediscovered Theorem \ref{second:lossless}. The $O(\log n)$ term was found to be $-\frac{1}{2}\log n+O(1)$ by Kontoyiannis and Verd\'u~\cite{verdu14} and was recently further refined by Chen, Effros and Kostina~\cite[Theorem 5]{chen2020lossless} with explicit lower and upper bounds on the $O(1)$ term. In this monograph, we focus on the second-order asymptotics and further refined analyses for the remainder term as in~\cite{verdu14,chen2020lossless} are worthwhile future research directions but challenging for lossy source coding problems to be discussed in the result of this monograph.

Furthermore, the achievability part of Theorem \ref{second:lossless} can also be proved using the method of types~\cite[Chapter 11]{cover2012elements}, as demonstrated in~\cite[Chapter 3.3]{TanBook}. The achievability proof of second-order asymptotics based on the method of types finds applications in many other problems, including the point-to-point and multiterminal settings of lossy source coding problems to be discussed in this monograph.

To illustrate the tightness of the second-order asymptotic bound in Theorem \ref{second:lossless}, in Figure \ref{illus:tightsecond}, we plot the second-order asymptotic approximation in Theorem \ref{second:lossless} and compare the approximation with finite blocklength bounds in Theorems \ref{ach:lossless} and \ref{lossless:converse} for a Bernoulli source with parameter $0.2$ with the target error probability of $\varepsilon=0.01$. As observed from Figure \ref{illus:tightsecond}, for $n$ moderately large, the second-order asymptotic bound provides rather tight approximation to the finite blocklength performance. Furthermore, the gap between the second-order asymptotic result and the first-order asymptotic result of Shannon is significant unless $n\to\infty$.
\begin{figure}[t]
\centering
\includegraphics[width=.8\columnwidth]{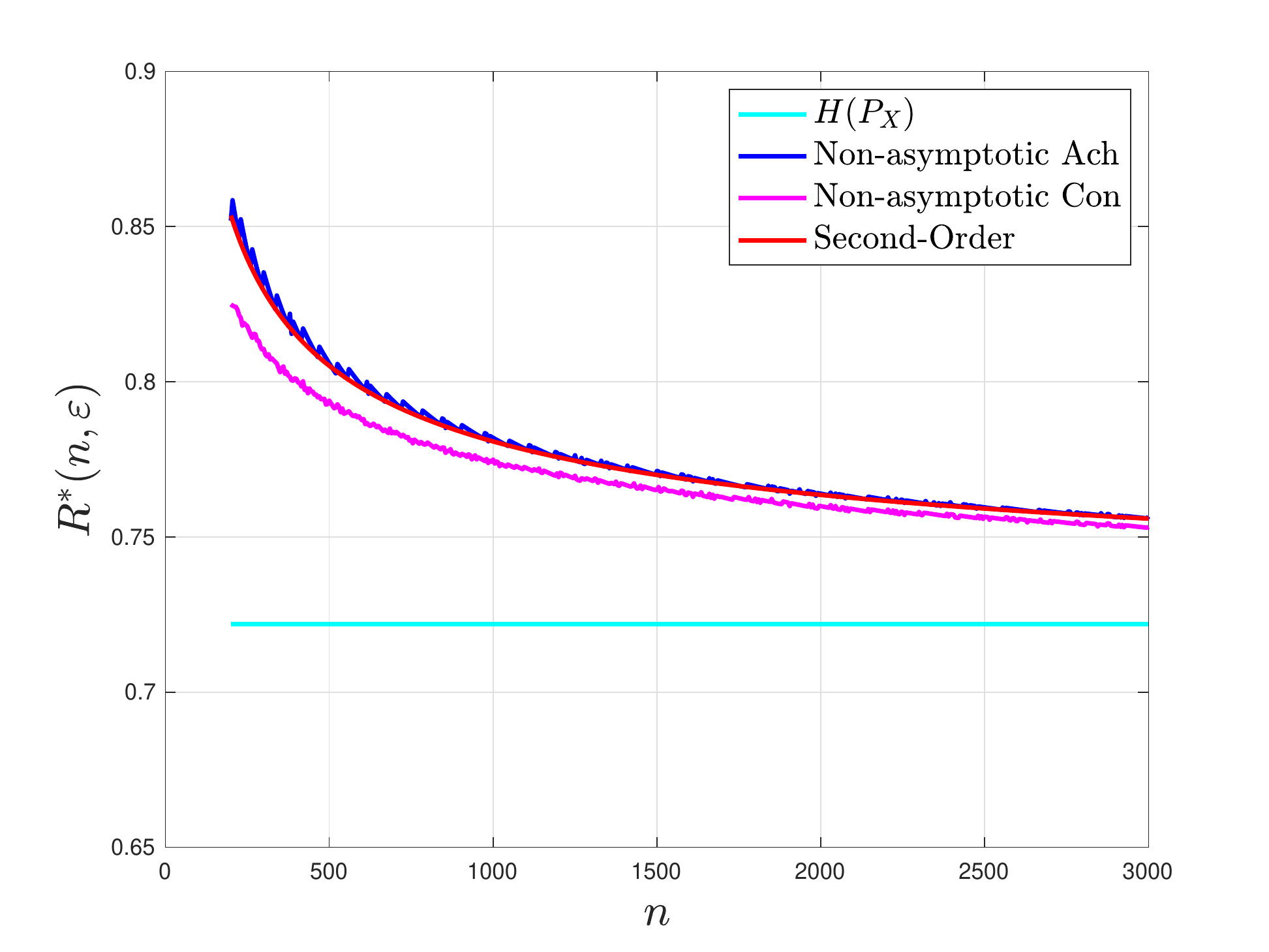}
\caption{Plots of the second-order asymptotic bound in Theorem \ref{second:lossless} and non-asymptotic bounds in Theorems \ref{ach:lossless} and \ref{lossless:converse} for a Bernoulli source with parameter $0.2$ and a target error probability of $\varepsilon=0.01$.}
\label{illus:tightsecond}
\end{figure}

Note that Theorem \ref{second:lossless} is known as the second-order asymptotic result because it characterizes the second dominant term in the expansion of $\log M^*(n,\varepsilon)$. An equivalent presentation of Theorem \ref{second:lossless} is to characterize the so called second-order coding rate coined by Hayashi~\cite{hayashi2008source,hayashi2009information}. For lossless source coding, the second-order coding rate is defined as follows.
\begin{definition}
Given any $\varepsilon\in[0,1)$, a real number $L\in\bbR$ is said to be a second-order achievable rate if there exists a sequence of $(n,M)$-codes such that
\begin{align}
\limsup_{n\to\infty}\frac{1}{\sqrt{n}}(\log M-nH(P_X))&\leq L,\\
\limsup_{n\to\infty}\rmP_{\rme,n}&\leq \varepsilon.
\end{align}
For any $\varepsilon\in[0,1)$, the infimum of all second-order achievable rates is called the optimal second-order coding rate and denoted by $L^*(\varepsilon)$.
\end{definition}
We remark that $L^*(\varepsilon)$ has the unit of nats per square root number of source symbols. With this definition, Theorem \ref{second:lossless} is equivalent to the following statement.
\begin{theorem}
\label{second:lossless2}
For any $\varepsilon\in[0,1)$, the optimal second-order rate coding is
\begin{align}
L^*(\varepsilon)=\sqrt{\rmV(P_X)}\rmQ^{-1}(\varepsilon).
\end{align}
\end{theorem}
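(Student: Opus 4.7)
The plan is to treat Theorem \ref{second:lossless2} as a direct reformulation of Theorem \ref{second:lossless}, and to sketch how the same Berry--Esseen argument recovers it directly from the non-asymptotic bounds in Theorems \ref{ach:lossless} and \ref{lossless:converse}. The key observation is that under $X^n \sim P_X^n$, the random variables $\imath(X_i|P_X)$ are i.i.d.\ with mean $H(P_X)$, variance $\rmV(P_X)$ defined in \eqref{def:sourcedispersion}, and finite third absolute moment because $\calX$ is finite, so Theorem \ref{berrytheorem} applies to their sum.

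For achievability, given any $L>\sqrt{\rmV(P_X)}\,\rmQ^{-1}(\varepsilon)$, I would set $\log M_n := \lceil nH(P_X)+\sqrt{n}\,L\rceil$ and invoke Theorem \ref{ach:lossless} to get a code whose error probability is bounded by $\Pr\bigl\{\sum_{i\in[n]}\imath(X_i|P_X)\ge \log M_n\bigr\}$. Standardizing this sum and applying Theorem \ref{berrytheorem} yields
\begin{align}
\rmP_{\rme,n}\le \rmQ\!\left(\frac{L}{\sqrt{\rmV(P_X)}}\right)+\frac{C}{\sqrt{n}},
\end{align}
for some constant $C$ depending on $P_X$. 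Since $\rmQ$ is continuous and strictly decreasing and $L/\sqrt{\rmV(P_X)}>\rmQ^{-1}(\varepsilon)$, the right-hand side has $\limsup_{n\to\infty}\rmP_{\rme,n}\le \varepsilon$. Thus every $L>\sqrt{\rmV(P_X)}\,\rmQ^{-1}(\varepsilon)$ is second-order achievable, which gives $L^*(\varepsilon)\le \sqrt{\rmV(P_X)}\,\rmQ^{-1}(\varepsilon)$.

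For the converse, I would suppose that $L$ is second-order achievable, so that there exists a sequence of $(n,M_n)$-codes with $\log M_n \le nH(P_X)+\sqrt{n}\,L+o(\sqrt{n})$ and $\limsup_{n\to\infty}\rmP_{\rme,n}\le \varepsilon$. Applying Theorem \ref{lossless:converse} with $\gamma=(\log n)/n$, so that $\exp(-n\gamma)=1/n\to 0$, gives
\begin{align}
\rmP_{\rme,n}\ge \Pr\!\left\{\sum_{i\in[n]}\imath(X_i|P_X)\ge nH(P_X)+\sqrt{n}\,L+o(\sqrt n)+\log n\right\}-\frac{1}{n}.
\end{align}
A second application of Theorem \ref{berrytheorem} lower bounds the right-hand side by $\rmQ\bigl(L/\sqrt{\rmV(P_X)}+o(1)\bigr)-C/\sqrt{n}-1/n$. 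Letting $n\to\infty$ and using continuity of $\rmQ$ yields $\varepsilon\ge \rmQ(L/\sqrt{\rmV(P_X)})$, i.e.\ $L\ge \sqrt{\rmV(P_X)}\,\rmQ^{-1}(\varepsilon)$. Taking the infimum over achievable $L$ completes the converse.

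The only real subtlety is the degenerate case $\rmV(P_X)=0$, which occurs exactly when $P_X$ is supported on symbols of equal probability on its support; then $\sum_i \imath(X_i|P_X)=nH(P_X)$ almost surely, Berry--Esseen is vacuous, and both bounds collapse to show $L^*(\varepsilon)=0$, consistent with the convention $\sqrt{0}\cdot \rmQ^{-1}(\varepsilon)=0$. The edge case $\varepsilon=0$ is also fine because $\rmQ^{-1}(0)=+\infty$ and $L^*(0)=+\infty$ as no finite second-order rate achieves zero error. Apart from these, the argument is a routine application of the non-asymptotic bounds already established, so the bulk of the work is really hidden inside Theorems \ref{ach:lossless} and \ref{lossless:converse}; the main thing to be careful about is the choice of the slack $\gamma$ in the converse, which must decay faster than $1/\sqrt{n}$ (so that $\log n/\sqrt{n}\to 0$ inside $\rmQ$) while still ensuring $\exp(-n\gamma)\to 0$, and $\gamma=(\log n)/n$ meets both requirements.
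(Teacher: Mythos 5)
Your proposal is correct and takes essentially the same approach as the paper: applying the Berry--Esseen theorem to the non-asymptotic bounds in Theorems \ref{ach:lossless} and \ref{lossless:converse}, with the slack $\gamma$ in the converse chosen so that $n\gamma\to\infty$ while $n\gamma=o(\sqrt{n})$ (the paper writes $\gamma=\log n$ but, given the resulting $-1/n$ correction term, evidently means $n\gamma=\log n$, i.e., the same $\gamma=(\log n)/n$ you use). The paper proves Theorem \ref{second:lossless} in the $\log M^*(n,\varepsilon)$ form and then states Theorem \ref{second:lossless2} as an equivalent reformulation, whereas you argue directly in terms of $L^*(\varepsilon)$; the substance is identical, and your handling of the degenerate cases $\rmV(P_X)=0$ and $\varepsilon=0$ is a welcome extra.
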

In second-order asymptotics, by allowing a non-vanishing error probability $\varepsilon\in(0,1)$, we observe that the backoff of the non-asymptotic coding rate $R^*(n,\varepsilon):=\frac{1}{n}\log M^*(n,\varepsilon)$ from Shannon's asymptotic rate $H(P_X)$ is in the order of $\Theta\Big(\frac{1}{\sqrt{n}}\Big)$ with the coefficient determined by a function of the tolerable error probability and the source dispersion.

\section{Proof of Second-Order Asymptotics}
We next present the proof of Theorem \ref{second:lossless} by illustrating how one can apply the Berry-Esseen theorem (cf. Theorem \ref{berrytheorem}) to the non-asymptotic bounds in Theorems \ref{ach:lossless} and \ref{lossless:converse},

For the smooth presentation of the proof steps, let
\begin{align}
T(P_X):=\bbE[|\imath(X|P_X)-H(P_X)|^3].
\end{align}
\subsection{Achievability}
Given any $\varepsilon\in(0,1)$, let
\begin{align}
\varepsilon_n&=\varepsilon-\frac{T(P_X)}{(\rmV(P_X))^3\sqrt{n}},\\
\log M&=n H(P_X)+\sqrt{n\rmV}\rmQ^{-1}(\varepsilon_n).
\end{align}
It follows from Theorem \ref{ach:lossless} that the error probability of the code satisfies
\begin{align}
\rmP_{\rme,n}
&\leq \Pr\Big\{\sum_{i\in[n]}(\imath(X_i|P_X)-H(P_X))\geq \sqrt{n\rmV}\rmQ^{-1}(\varepsilon)\Big\}\\
&=\Pr\bigg\{\frac{1}{n}\sum_{i\in[n]}(\imath(X_i|P_X)-H(P_X))\geq \sqrt{\frac{\rmV}{n}}\rmQ^{-1}(\varepsilon)\bigg\}\\
&\leq \varepsilon_n+\frac{T(P_X)}{(\rmV(P_X))^3\sqrt{n}}\label{ach:useBE}\\
&=\varepsilon,
\end{align} 
where \eqref{ach:useBE} follows from the Berry-Esseen theorem for i.i.d. random variables in Theorem \ref{berrytheorem} since the random variables $\{\imath(X_i|P_X)-H(P_X)\}_{i\in[n]}$ are a sequence of i.i.d. random variables with mean $0$ and the identical variance $\rmV(P_X)$.

Thus, using the Taylor expansion of $\rmQ^{-1}(\varepsilon_n)$ around $\varepsilon$ that states $\rmQ^{-1}(\varepsilon_n)=
\rmQ^{-1}(\varepsilon)+O(\varepsilon-\varepsilon_n)$, we have
\begin{align}
\log M^*(n,\varepsilon)\leq n H(P_X)+\sqrt{n\rmV}\rmQ^{-1}(\varepsilon)+O(1).
\end{align}

\subsection{Converse}
For any $\varepsilon\in(0,1)$, let
\begin{align}
\varepsilon_n'&=\varepsilon+\frac{T(P_X)}{(\rmV(P_X))^3\sqrt{n}}+\frac{1}{n},\\
\log M&=n H(P_X)+\sqrt{n\rmV}\rmQ^{-1}(\varepsilon_n')-\log n.
\end{align}

Invoking Theorem \ref{lossless:converse} with $\gamma=\log n$ and using the Berry-Esseen theorem, the error probability of any $(n,M)$-code satisfies
\begin{align}
\rmP_{\rme,n}
&\geq \Pr\Big\{\sum_{i\in[n]}(\imath(X_i|P_X)-H(P_X))\geq \sqrt{n\rmV(P_X)}\rmQ^{-1}(\varepsilon_n')\Big\}-\frac{1}{n}\\
&\geq \varepsilon\label{useberry}.
\end{align}

Therefore,
\begin{align}
\log M^*(n,\varepsilon)
&\geq nH(P_X)+\sqrt{n\rmV}\rmQ^{-1}(\varepsilon_n')-\frac{1}{2}\log n+O(1)\\
&=nH(P_X)+\sqrt{n\rmV}\rmQ^{-1}(\varepsilon)+O(\log n).
\end{align}
The converse proof is now completed.

\section{Other Refined Asymptotic Analyses}
Besides second-order asymptotics, there are also other refined asymptotic analyses beyond Shannon's source coding theorem. Two examples are the large and moderate deviations analyses.

In large deviations, one characterizes the decay rate of the error probability $\rmP_{\rme,n}$ for any asymptotic rate greater than $H(P_X)$.
\begin{definition}
A non-negative number $E$ is said to be a rate-$R$ achievable error exponent if there exists a sequence of $(n,M)$-codes such that 
\begin{align}
\limsup_{n\to\infty}\frac{1}{n}\log M&\leq R,\\
\liminf_{n\to\infty}-\frac{1}{n}\log\rmP_{\rme,n}&\geq E.
\end{align}
The supremum of all rate-$R$ achievable error exponents is called the optimal error exponent and denoted by $E^*(R)$.
\end{definition}

The exact characterization of $E^*(R)$ was given by Gallager~\cite{gallager} and by Csisz\'ar and Longo~\cite{csiszar1971error}.
\begin{theorem}
\label{ee:lossless}
The optimal error exponent for the  lossless source coding problem is
\begin{align}
E^*(R)
&=\max_{\rho\geq 0} \bigg(\rho R-(1+\rho)\log\Big(\sum_{x\in\mathrm{supp}(P_X)} P_X^{\frac{1}{1+\rho}}(x)\Big)\bigg)\label{gallager}\\*
&=\min_{Q_X:H(Q_X)\geq R} D(Q_X\|P_X)\label{csislongo}.
\end{align}
\end{theorem}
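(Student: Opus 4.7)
My plan is to prove Theorem \ref{ee:lossless} via the method of types. The strategy splits into three pieces: (i) an achievability argument that constructs a type-based code achieving exponent $\min_{Q_X: H(Q_X) \geq R} D(Q_X\|P_X)$; (ii) a matching converse argument showing that no code can do asymptotically better; and (iii) a Lagrangian duality step showing that the two expressions on the right-hand side of \eqref{gallager}--\eqref{csislongo} are equal. Crucially, I will rely on the polynomial bound $|\calP_n(\calX)| \leq (n+1)^{|\calX|}$ on the number of types and on the basic type-class estimates $(n+1)^{-|\calX|} \exp(n H(Q_X)) \leq |\calT_{Q_X}| \leq \exp(n H(Q_X))$ and $(n+1)^{-|\calX|} \exp(-n D(Q_X\|P_X)) \leq P_X^n(\calT_{Q_X}) \leq \exp(-n D(Q_X\|P_X))$, all from the method-of-types preliminaries.

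For achievability, fix $\delta > 0$ and let the encoder one-to-one map every sequence $x^n \in \calX^n$ whose empirical distribution $\hatT_{x^n}$ satisfies $H(\hatT_{x^n}) \leq R - \delta$, and declare an error otherwise. The number of such sequences is at most $(n+1)^{|\calX|} \exp(n(R-\delta))$, which is below $\exp(nR)$ for $n$ large enough, so this is a valid $(n, \lceil \exp(nR)\rceil)$-code. The error probability is $\rmP_{\rme,n} = \sum_{Q_X \in \calP_n(\calX): H(Q_X) > R - \delta} P_X^n(\calT_{Q_X})$, which by the type-class estimates is bounded above by $(n+1)^{|\calX|} \exp\bigl(-n \min_{Q_X: H(Q_X) \geq R - \delta} D(Q_X\|P_X)\bigr)$. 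Taking $-\frac{1}{n}\log$, then $n \to \infty$, then $\delta \to 0$, together with continuity of $Q_X \mapsto D(Q_X\|P_X)$, yields the desired lower bound on $E^*(R)$.

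For the converse, let $\phi \circ f$ be any $(n, \lceil\exp(nR)\rceil)$-code. For each type $Q_X$ with $H(Q_X) > R$, the decoder can output at most $\exp(nR)$ distinct sequences, so at least $|\calT_{Q_X}| - \exp(nR) \geq (n+1)^{-|\calX|} \exp(n H(Q_X)) - \exp(nR)$ sequences in $\calT_{Q_X}$ are decoded incorrectly; for any fixed $Q_X$ with $H(Q_X) > R$ this is $(1 - o(1))|\calT_{Q_X}|$ as $n \to \infty$. Since every sequence in a given type class is equiprobable under $P_X^n$, the probability of error restricted to $\calT_{Q_X}$ is at least $(1-o(1)) P_X^n(\calT_{Q_X}) \geq (1-o(1))(n+1)^{-|\calX|} \exp(-n D(Q_X\|P_X))$. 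Picking the type $Q_X$ with $H(Q_X) \geq R$ that minimizes $D(Q_X\|P_X)$ (up to an $o(1)$ discretization of the simplex, using the polynomial bound on $|\calP_n(\calX)|$) gives the matching upper bound on $E^*(R)$.

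It remains to show the equality in \eqref{gallager}--\eqref{csislongo}. I will write $\min_{Q_X: H(Q_X) \geq R} D(Q_X\|P_X) = \min_{Q_X} \max_{\rho \geq 0} \bigl[D(Q_X\|P_X) + \rho(R - H(Q_X))\bigr]$, noting that the domain of $Q_X$ is the probability simplex (a compact convex set) and the objective is convex in $Q_X$ and linear in $\rho$. Invoking Sion's minimax theorem to swap $\min$ and $\max$, the inner minimization $\min_{Q_X} \bigl[(1+\rho) \sum_{x} Q_X(x) \log Q_X(x) + \sum_{x} Q_X(x)\log\frac{1}{P_X(x)} + \rho R\bigr]$ under the simplex constraint is solved by the tilted distribution $Q_X^*(x) \propto P_X(x)^{1/(1+\rho)}$, and substituting this closed form yields \eqref{gallager}. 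The main obstacle I anticipate is the converse when the minimizing $Q_X$ in \eqref{csislongo} is not itself an $n$-type for any given $n$; I will handle this by approximating an arbitrary minimizer by a sequence of $n$-types and using continuity of $H(\cdot)$ and $D(\cdot\|P_X)$ on the simplex (which is straightforward provided the minimizer is supported on $\supp(P_X)$, else $D(Q_X\|P_X) = \infty$ and the bound is vacuous).
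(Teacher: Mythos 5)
Your proof is correct. The paper does not actually prove Theorem~\ref{ee:lossless}; it only attributes \eqref{gallager} to Gallager's $\rho$-trick with maximum likelihood decoding, \eqref{csislongo} to Csisz\'ar and Longo's method-of-types argument, and the equivalence to a hint in Csisz\'ar--K\"orner. Your proposal fills in a complete method-of-types proof of \eqref{csislongo} (achievability by encoding sequences of low empirical entropy, converse by counting that at most $M$ sequences of any type class can decode correctly) together with a Lagrangian duality argument via Sion's minimax theorem and the tilted distribution $Q_X^*(x)\propto P_X(x)^{1/(1+\rho)}$ to reach \eqref{gallager}. This is precisely the Csisz\'ar--Longo route plus the Csisz\'ar--K\"orner Problem~2.14 identity that the paper alludes to, so it is essentially the same approach. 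Two small points worth tightening in a full write-up: in the converse you need the near-optimal $n$-type to satisfy $H(Q_X)>R$ \emph{strictly} so that $|\calT_{Q_X}|\gg M$, whereas the minimizer in \eqref{csislongo} typically lies on the boundary $H(Q_X)=R$; perturbing towards the uniform distribution (strict concavity of $H$ plus continuity of $D(\cdot\|P_X)$) handles this. Also, the polynomial bound on $|\calP_n(\calX)|$ is what controls the achievability union bound, but it is not needed for the converse, which bounds the error contribution of a single type class.
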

As a result of Theorem \ref{ee:lossless}, we conclude that the error probability decays exponentially fast for any rate above the first-order coding rate, i.e., $R>H(P_X)$. The characterization in \eqref{gallager} was proved by Gallager using the maximum likelihood decoding with the $\rho$ trick and the characterization in \eqref{csislongo} was proved by Csisz\'ar and Longo~\cite{csiszar1971error} using the method of types. The equivalence of the two characterizations is hinted in~\cite[Problem 2.14]{csiszar2011information}.

The moderate deviations regime interpolates between the large deviations and second-order asymptotic regimes. In this regime, one is interested in a sequence of $(n,M)$-codes whose rates approach $H(P_X)$ and whose error probabilities decay to zero simultaneously.
\begin{definition}
Consider any sequence $\{\xi_n\}_{n\in\bbN}$ such that $\xi_n\to 0$ and $\sqrt{n\xi_n}\to\infty$ as $n\to\infty$. A non-negative number $\nu$ is said to be an achievable moderate deviations constant if there exists a sequence of $(n,M)$-codes such that
\begin{align}
\limsup_{n\to\infty}\frac{\log M-nH(P_X)}{n\xi_n}\leq 1,\\
\liminf_{n\to\infty}-\frac{1}{n\xi_n^2}\log \rmP_{\rme,n}\geq \nu.
\end{align}
The supremum of all moderate deviations constants is called the optimal moderate deviations constant and is denoted by $\nu^*$. 
\end{definition}
Note that in moderate deviations, the speed of the rate approaching $H(P_X)$ is in the order of $\xi_n$, which is slower than $O(\frac{1}{\sqrt{n}})$ in second-order asymptotics and the decay rate of the error probability is subexponential, which is slower than the exponential decay in large deviations. This is precisely the reason why moderate deviations is said to interpolate second-order and large deviations asymptotics.

The optimal moderate deviations constant for the  lossless source coding problem was obtained by Alt\u{u}g, Wagner and Kontoyiannis in \cite{altug2013lossless}. 
\begin{theorem}
\label{mdt:lossless}
The optimal moderate deviations constant is
\begin{align}
\nu^*=\frac{1}{2\rmV(P_X)}.
\end{align}
\end{theorem}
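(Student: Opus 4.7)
The plan is to apply the non-asymptotic achievability and converse bounds of Theorems \ref{ach:lossless} and \ref{lossless:converse}, but to replace the Berry–Esseen step used in the second-order proof with the classical moderate deviation principle (MDP) for i.i.d. sums. Set $Y_i := \imath(X_i|P_X) - H(P_X)$. These are i.i.d., zero mean, with variance $\rmV(P_X)$, and since $\calX$ is finite they are bounded, so all exponential moments are finite. For such a sequence the MDP states that for any $a>0$ and any $\xi_n\to 0$ with $\sqrt{n}\xi_n\to\infty$,
\begin{equation}
\lim_{n\to\infty}\frac{1}{n\xi_n^2}\log\Pr\!\left\{\frac{1}{n}\sum_{i\in[n]}Y_i\geq a\xi_n\right\}=-\frac{a^2}{2\rmV(P_X)}.
\end{equation}
This is the only analytic input beyond what the excerpt already supplies, and it follows from a standard Cram\'er/Chernoff calculation together with a Taylor expansion of the cumulant generating function around $0$.

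For the achievability (lower bound on $\nu^*$), I would fix any $\nu<\tfrac{1}{2\rmV(P_X)}$ and pick $\log M = nH(P_X) + n\xi_n$. Theorem \ref{ach:lossless} guarantees an $(n,M)$-code with
\begin{equation}
\rmP_{\rme,n}\leq \Pr\!\left\{\frac{1}{n}\sum_{i\in[n]}Y_i\geq \xi_n\right\},
\end{equation}
and the MDP (with $a=1$) immediately gives $\liminf_{n\to\infty}-\tfrac{1}{n\xi_n^2}\log\rmP_{\rme,n}\geq \tfrac{1}{2\rmV(P_X)}$, while the rate condition $\limsup_n (\log M - nH(P_X))/(n\xi_n)\leq 1$ is met by construction.

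For the converse (upper bound on $\nu^*$), take any sequence of $(n,M_n)$-codes with $\log M_n \leq nH(P_X)+n\xi_n(1+o(1))$ and apply Theorem \ref{lossless:converse} with a slack $\gamma_n$ chosen so that $\xi_n^2 \ll \gamma_n \ll \xi_n$ (for instance $\gamma_n = \xi_n/\sqrt{\log(n\xi_n^2)}$). Then
\begin{equation}
\rmP_{\rme,n}\geq \Pr\!\left\{\frac{1}{n}\sum_{i\in[n]}Y_i\geq \xi_n(1+o(1))+\gamma_n\right\}-\exp(-n\gamma_n).
\end{equation}
Because $\gamma_n = o(\xi_n)$, the MDP applied to the first term yields a lower bound of $\exp\bigl(-n\xi_n^2(1+o(1))/(2\rmV(P_X))\bigr)$; because $n\gamma_n \gg n\xi_n^2$, the subtracted term $\exp(-n\gamma_n)$ is exponentially negligible on the scale $n\xi_n^2$. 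Taking $-\tfrac{1}{n\xi_n^2}\log$ of both sides and letting $n\to\infty$ produces $\limsup_n -\tfrac{1}{n\xi_n^2}\log\rmP_{\rme,n}\leq \tfrac{1}{2\rmV(P_X)}$.

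The main obstacle I expect is the converse bookkeeping: one must choose $\gamma_n$ in the intermediate regime $\xi_n^2\ll\gamma_n\ll\xi_n$ so that simultaneously (i) the MDP applied to the shifted threshold still returns the same exponent $\tfrac{1}{2\rmV(P_X)}$ up to $1+o(1)$ factors, and (ii) the $-\exp(-n\gamma_n)$ correction is absorbed without affecting the exponent. Verifying that both asymptotic requirements can be met simultaneously, and that the MDP statement is uniform enough in the threshold to cover the slowly shifted value $\xi_n(1+o(1))+\gamma_n$, is the only nontrivial technical step; everything else is a direct translation of the non-asymptotic bounds already proved.
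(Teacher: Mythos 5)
Your approach is exactly the one the paper sketches: apply the moderate deviations theorem of Dembo and Zeitouni~\cite[Theorem 3.7.1]{dembo2009large} to the non-asymptotic bounds in Theorems~\ref{ach:lossless} and~\ref{lossless:converse}, with the MDP replacing the Berry--Esseen step. One small slip to flag: your specific choice $\gamma_n=\xi_n/\sqrt{\log(n\xi_n^2)}$ need not satisfy $\gamma_n\gg\xi_n^2$ for every admissible $\xi_n$ (e.g.\ $\xi_n=1/\log\log n$), but a valid intermediate choice such as $\gamma_n=\xi_n^{3/2}$ always exists (since $\xi_n^2/\xi_n=\xi_n\to 0$), so the bookkeeping goes through as you describe.
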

Theorem \ref{mdt:lossless} states that the sequence of optimal codes approaches $H(P_X)$ at the speed of $\xi_n$ with the error probability decaying subexponentially fast, which can be proved by applying the moderate deviations theorem~\cite[3.7.1]{dembo2009large} to the non-asymptotic bounds in Theorems \ref{ach:lossless} and \ref{lossless:converse}.

To illustrate the relationship between second-order, large and moderate deviations to the non-asymptotic bounds, we plot the relationship between the error probability and coding rate for different blocklengths for a binary memoryless source distributed according to a Bernoulli distribution with parameter $0.3$ in Figure \ref{illus:refined4lossless}, using the second-order asymptotic bound in Theorem \ref{second:lossless} as the approximation. Note that both large and moderate deviations theorems are tight for sufficiently large blocklength and thus violate the low-latency requirement of practical communication systems. In this monograph, for all lossy source coding problems to be covered, we focus on the second-order asymptotics that provide good approximations to the performance of optimal codes at finite blocklengths (cf.~\cite[Fig. 1]{chen2020lossless}), and we also present non-asymptotic bounds from which the second-order asymptotics are derived.
\begin{figure}[t]
\centering
\includegraphics[width=.8\columnwidth]{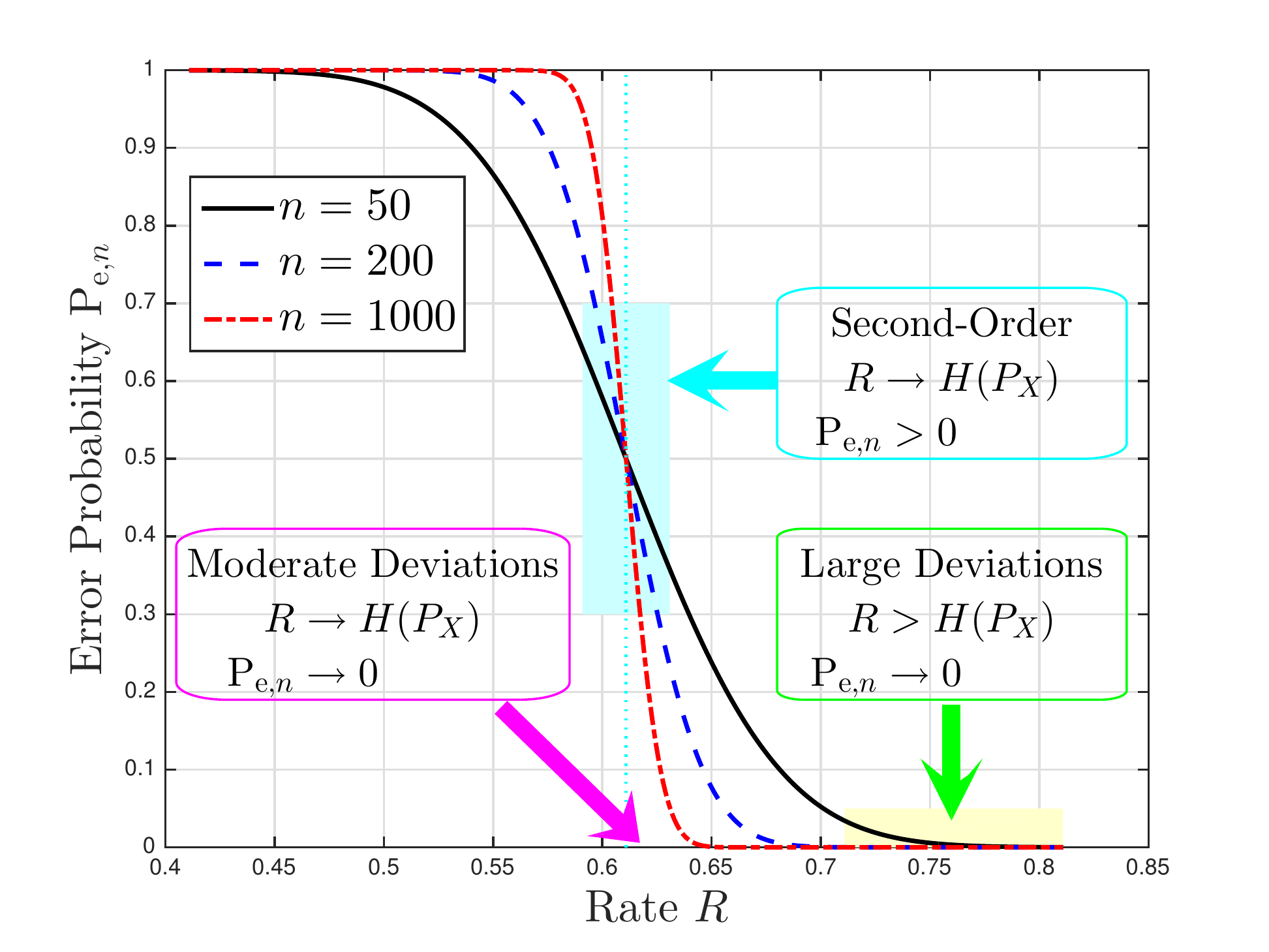}
\caption{Illustration of refined asymptotics for lossless source coding of a binary memoryless source. Note that both large and moderate deviations asymptotics provide tight characterization when $n$ is sufficiently large and violates the low latency requirement of practical communication systems. In contrast, second-order asymptotics provides approximations to the performance of optimal codes with finite blocklength.}
\label{illus:refined4lossless}
\end{figure}

\part{Point-to-Point Setting}

\chapter{Rate Distortion}
\label{chap:rd}

In this chapter, we study the rate-distortion problem of lossy source coding and present non-asymptotic and second-order asymptotic bounds for optimal codes~\cite{shannon1959coding,kostina2012fixed,ingber2011dispersion}. The rate-distortion problem has several motivations. Firstly, it is impossible to compress a continuous memoryless source and recover it losslessly with any finite rate. This is because an infinite number of bits is needed to perfectly represent a real number. Secondly, in image and video compression, imperfection is usually tolerable. For example, a $720$P video can convey the same episodes as a $1080$P or $2$K video and cannot be easily distinguished on a phone or tablet. Thirdly, in rate limited scenarios, a smaller compression rate is preferred and lossy data compression achieves rates smaller than the lossless counterpart. 

Shannon~\cite{shannon1959coding} proposed the system model of the rate-distortion problem and derived the first-order asymptotic optimal rate to ensure reliable compression in a lossy manner as the blocklength tends to infinity. A distortion measure is introduced to evaluate the difference of the source sequence and its reproduced version. Reliable lossy data compression is achieved if the distortion between the source $X^n$ and its reproduced version $\hatX^n$ is smaller than a tolerable distortion level $D$. For example, one can think of the source $X^n$ as a high quality $2$K video and set the distortion level $D$ so that the reproduced version $\hatX^n$ is acceptable as long as it is at least a $720$P video.

Shannon's asymptotic results were refined by Ingber and Kochman~\cite{ingber2011dispersion} and by Kostina and Verd\'u~\cite{kostina2012fixed} independently, where both papers defined the distortion-tilted information density that generalizes the entropy density and derived second-order asymptotics. Furthermore, Kostina and Verd\'u derived non-asymptotic achievability and converse bounds. This chapter is largely based on \cite{ingber2011dispersion,kostina2012fixed}.

\section{Problem Formulation and Shannon's Result}

\subsection{Problem Formulation}
Consider a memoryless source $X^n$ generated i.i.d. from a distribution $P_X$ defined on an alphabet $\calX$. Let $\hatcalX$ be the reproduced alphabet and let the distortion function be $d:\calX\times\hatcalX\to \bbR_+$. Given any two sequences $(x^n,\hatx^n)\in\calX^n\times\hatcalX^n$, the distortion function $d(x^n,\hatx^n)$ is assumed additive and defined as the average symbolwise distortion, i.e., $d(x^n,\hatx^n)=\frac{1}{n}\sum_{i\in[n]}d(x_i,\hatx_i)$. 

Some examples of the distortion functions are as follows.
\begin{definition}
A distortion function $d$ is said to be the Hamming distortion measure if $\calX=\hatcalX=\{0,1\}$ and for any $(x,\hatx)\in\calX\times\hatcalX$,
\begin{align}
d(x,\hatx)=\bbo\{x\neq \hatx\}\label{def:hamming}.
\end{align}
\end{definition}

\begin{definition}
A distortion function $d$ is said to be the quadratic distortion measure if $\calX=\hatcalX=\calR$ and for any $(x,\hatx)\in\calX\times\hatcalX$,
\begin{align}
d(x,\hatx)=(x-\hatx)^2\label{def:quadratic}.
\end{align}
\end{definition}

A code for the rate-distortion problem is defined as follows. 
\begin{definition}
\label{def:code4rd}
Given any $(n,M)\in\bbN^2$, an $(n,M)$-code for the rate-distortion problem consists of 
\begin{itemize}
\item an encoder $f:\calX^n\to\calM=[M]$,
\item a decoder $\phi:\calM\to\calX^n$.
\end{itemize}
\end{definition}
Let $\hatX^n$ denote the reproduced source sequence, i.e., $\hatX^n=\phi(f(X^n))$. Throughout the chapter, let $D\in\bbR_+$ be the target distortion level. The performance metric for the rate-distortion problem that we consider is the excess-distortion probability with respect to $D\in\bbR_+$, i.e.,
\begin{align}
\rmP_{\rme,n}(D)
&:=\Pr\{d(X^n,\hatX^n)>D\}.
\end{align}
Given any blocklength $n\in\bbN$, the distortion level $D$ and tolerable excess-distortion probability $\varepsilon$, let $M^*(n,D,\varepsilon)$ denote the minimum number $M$ such that one can construct an $(n,M)$-code with excess-distortion probability $\rmP_{\rme,n}(D)$ no greater than $\varepsilon$, i.e.,
\begin{align}
M^*(n,D,\varepsilon):=\inf\big\{M:~\exists\mathrm{~an~}(n,M)\mathrm{-code~s.t.~}\rmP_{\rme,n}(D)\leq \varepsilon\big\}\label{def:M*}.
\end{align}
In this chapter, we present non-asymptotic and asymptotic bounds on $M^*(n,D,\varepsilon)$. 

\subsection{Shannon's First-Order Asymptotic Result}
In this subsection, we recall Shannon's characterization of the first-order asymptotic coding rate, which is defined as follows.
\begin{definition}
A rate $R\in\bbR_+$ is said to be achievable for the rate-distortion problem with respect to distortion level $D$ if there exists a sequence of $(n,M)$-codes such that
\begin{align}
\liminf_{n\to\infty}\frac{1}{n}\log M&\geq R,\\
\limsup_{n\to\infty}\bbE[d(X^n,\hatX^n)]&\leq D\label{avgdistortion}.
\end{align}
The minimal achievable rate is denoted as $R^*(D)$.
\end{definition}

Shannon~\cite{shannon1959coding} proved the following theorem.
\begin{theorem}
\label{lossy:Shannon}
The minimal achievable rate $R^*(D)$ satisfies
\begin{align}
R^*(D)=\min_{P_{\hatX|X}:~\bbE[d(X,\hatX)]\leq D}I(X;\hatX)=:R(P_X,D)\label{def:rd}.
\end{align}
\end{theorem}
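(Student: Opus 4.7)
The plan is to establish the two inequalities $R^*(D)\leq R(P_X,D)$ (achievability) and $R^*(D)\geq R(P_X,D)$ (converse) separately, using Shannon's random-coding argument for the direct part and a single-letterization chain for the converse.

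For achievability, I would fix any conditional distribution $P^*_{\hatX|X}$ that is feasible in \eqref{def:rd}, i.e.\ $\bbE[d(X,\hatX)]\leq D-\eta$ for a small $\eta>0$, with induced output marginal $P^*_{\hatX}$. For any $\delta>0$, set $M=\lceil\exp(n(I(X;\hatX)+\delta))\rceil$ and draw $M$ candidate reproductions $\{\hatX^n(m)\}_{m\in[M]}$ i.i.d.\ from $(P^*_{\hatX})^n$, independently of $X^n$. The encoder outputs the smallest index $m$ such that $(X^n,\hatX^n(m))$ is jointly $\epsilon$-typical under $P_X\times P^*_{\hatX|X}$, defaulting to $m=1$ otherwise. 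A standard covering computation---exploiting that a single random codeword falls in the conditional typical set around a typical $X^n$ with probability at least $\exp(-n(I(X;\hatX)+\epsilon'))$ for some $\epsilon'=\epsilon'(\epsilon)\to 0$---shows that the failure probability, averaged over the random codebook ensemble, vanishes as $n\to\infty$ because we have $\exp(n(I(X;\hatX)+\delta))$ independent codewords. I then decompose the expected distortion into a typical-set contribution, bounded by $D-\eta+\epsilon$ by definition of joint typicality, and an atypical-set contribution that vanishes. The probabilistic method then extracts a deterministic code with the same performance, so $R^*(D)\leq I(X;\hatX)+\delta$; taking $\delta,\eta\downarrow 0$ and infimizing over $P^*_{\hatX|X}$ yields $R^*(D)\leq R(P_X,D)$.

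For the converse, starting from any $(n,M)$-code with $\bbE[d(X^n,\hatX^n)]\leq D$, I chain
\begin{align}
\log M \geq H(f(X^n)) \geq H(\hatX^n) \geq I(X^n;\hatX^n) = \sum_{i\in[n]}\bigl(H(X_i)-H(X_i\mid \hatX^n,X^{i-1})\bigr) \geq \sum_{i\in[n]} I(X_i;\hatX_i),
\end{align}
using that $X^n$ is i.i.d., that $\hatX^n$ is a function of $f(X^n)$, and that conditioning reduces entropy. Writing $D_i:=\bbE[d(X_i,\hatX_i)]$, each summand satisfies $I(X_i;\hatX_i)\geq R(P_X,D_i)$ by definition \eqref{def:rd}. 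An auxiliary lemma establishing that $D\mapsto R(P_X,D)$ is convex and nonincreasing (both following from convexity of mutual information in $P_{\hatX|X}$ with $P_X$ fixed, combined with time-sharing), together with Jensen's inequality and $\frac{1}{n}\sum_i D_i\leq D$, then gives $\frac{1}{n}\sum_i R(P_X,D_i)\geq R(P_X,D)$, completing the converse.

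The main obstacle will be the achievability treatment of the \emph{expected} (rather than excess) distortion criterion in \eqref{avgdistortion}: the typicality-based scheme naturally bounds the empirical average on a high-probability set, but the expected distortion also receives contributions from the atypical event, where atypically bad codeword-source mismatches must be controlled. For bounded distortion (e.g.\ the Hamming measure of \eqref{def:hamming}) this is immediate, but for unbounded $d$ such as the quadratic measure in \eqref{def:quadratic} on a Gaussian source, a truncation of $X^n$ at a growing threshold combined with a second-moment bound on the tail contribution is required. A subsidiary task is verifying the convexity and continuity of $R(P_X,\cdot)$ invoked in the converse, which is standard but needed to close the argument.
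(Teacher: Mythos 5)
The paper does not supply a proof of this theorem; it simply cites Shannon~\cite{shannon1959coding} and moves on, so there is no in-paper argument to compare yours against. Your proof is the classical textbook argument---random coding with joint typicality for achievability, single-letterization using convexity and monotonicity of $R(P_X,\cdot)$ together with Jensen's inequality for the converse---and is correct for a DMS with bounded distortion measure, which is the setting the paper is implicitly working in at this point. Your converse chain $\log M\ge H(f(X^n))\ge I(X^n;\hatX^n)\ge\sum_{i\in[n]} I(X_i;\hatX_i)$ is in fact the same skeleton the paper later re-uses inside the type-based strong converse (the proof of Lemma~\ref{type:sc}), so you have independently landed on the single-letterization step the paper builds on. The two subtleties you flag are the right ones to worry about. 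One small imprecision is the phrase ``feasible in~\eqref{def:rd}'' for a channel satisfying $\bbE[d(X,\hatX)]\le D-\eta$: the optimizer achieving $R(P_X,D)$ will generally have $\bbE[d]=D$ with equality, so what you actually establish is $R^*(D)\le R(P_X,D-\eta)+\delta$, and then $\eta,\delta\downarrow 0$ requires the continuity of $R(P_X,\cdot)$ that you defer to the end; that continuity is used in the achievability, whereas the converse only needs convexity and monotonicity. Neither point is a gap, just a matter of attributing the needed regularity to the right half of the argument.
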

Note that $R(P_X,D)$ is known as the rate-distortion function. Although Shannon's coding theorem is derived for the average distortion criterion in \eqref{avgdistortion}, for bounded distortion measure where $\bard:=\max_{x,\hatx}d(x,\hatx)<\infty$, the same result holds also when \eqref{avgdistortion} is replaced by the vanishing excess-distortion probability criterion\footnote{The comment holds for any lossy source coding problem.}, i.e., \\$\lim_{n\to\infty}\rmP_{\rme,n}(D)=0$. Specifically, using the notation $M^*(n,D,\varepsilon)$, it follows that
\begin{align}
\lim_{\varepsilon\to 0}\lim_{n\to\infty}\frac{1}{n}\log M^*(n,D,\varepsilon)=R(P_X,D)\label{shanon:excessp}.
\end{align}
In other words, Shannon characterized the asymptotic minimal compression rate of any code for the rate-distortion problem as the blocklength tends to infinity to ensure that the joint excess-distortion probability with respect to $D$ vanishes or to ensure that that the average distortion between the source sequence $X^n$ and the reproduced version $\hatX^n$ no greater than $D$. In fact, \eqref{shanon:excessp} holds for any $\varepsilon\in(0,1)$~\cite[Theorem 7.3]{csiszar2011information}.

We then explain why \eqref{shanon:excessp} holds using Theorem \ref{lossy:Shannon} for bounded distortion measures with maximal distortion $\bard$.  Suppose that there exists a sequence of $(n,M)$-codes such that
\begin{align}
\liminf_{n\to\infty}\frac{1}{n}\log M&\geq R,\\
\lim_{n\to\infty}\rmP_{\rme,n}(D)&=0\label{vanishexcessp}.
\end{align}
Since
\begin{align}
\nn&\bbE[d(X^n,\hatX^n)]\\*
&=\Pr\{d(X^n,\hatX^n)\leq D\}D+\Pr\{d(X^n,\hatX^n)> D\}\bar d\\
&\leq D+\rmP_{\rme,n}(D)\bard,
\end{align}
it follows from \eqref{vanishexcessp} that
\begin{align}
\limsup_{n\to\infty}\bbE[d(X^n,\hatX^n)]&\leq D.
\end{align}
Thus, any rate $R$ that ensures vanishing excess-distortion probability $\rmP_{\rme,n}(D)\to 0$ also ensures that the average distortion criterion~\eqref{avgdistortion} is satisfied, which leads to $\lim_{\varepsilon\to 0}\lim_{n\to\infty}\frac{1}{n}\log M^*(n,D,\varepsilon)\leq R^*(D)$. On the other hand, if a rate $R$ is not achievable under the average distortion criterion, i.e.,
\begin{align}
\limsup_{n\to\infty}\bbE[d(X^n,\hatX^n)]&>D,
\end{align}
it follows from the weak law of large numbers (cf. Theorem \ref{wlln}) that
\begin{align}
\lim_{n\to\infty}\rmP_{\rme,n}(D)=1.
\end{align}
Thus, the rate $R$ is also not achievable under the excess-distortion probability criterion, which implies that  $\lim_{n\to\infty}\frac{1}{n}\log M^*(n,D,\varepsilon)\geq R^*(D)$. The justification is thus completed.

\section{Distortion-Tilted Information Density}
We next introduce the definition and present properties of the distortion-tilted information density that generalizes the entropy density in lossless source coding. Of particular interest is that the distortion-titled information density is closely related to the rate-distortion function $R(P_X,D)$ and it is essential in characterizing the second-order asymptotics for the rate-distortion problem. 

\subsection{Definition and An Example}
\label{sec:dtilted4rd}
Consider any source distribution $P_X$, distortion measure $d(\cdot)$ and distortion level $D$ such that i) $R(P_X,D)$ is finite and ii) $(Q_X,D')\rightarrow R(Q_X,D')$ is twice differentiable in the neighborhood of $(P_X,D)$ and the derivatives are bounded. Note that $R(P_X,D)$ in \eqref{def:rd} is the optimal value of a convex optimization problem. Assume that the conditional distribution $P_{\hatX|X}^*$ achieves $R(P_X,D)$. Let $P_{\hatX}^*$ be induced by the source distribution $P_X$ and $P_{\hatX|X}^*$. Furthermore, let $\lambda^*$ be the first derivative of $R(P_X,D')$ with respect to $D'$ at $D'=D$, i.e.,
\begin{align}
\lambda^*=-\frac{\partial R(P_X,D')}{\partial D'}|_{D'=D}\label{def:lambda^*}.
\end{align}
Note that $\lambda^*$ is well defined due to the above two assumptions and $\lambda^*\geq 0$ since $R(P_X,D)$ is non-increasing in $D$.

The distortion-tilted information density is then defined as follows.
\begin{definition}
\label{def:dtilted}
For any $x\in\calX$, the $D$-tilted information density is defined as
\begin{align}
\jmath(x|D,P_X):=-\log\bbE_{P_{\hatX}^*}[\exp(\lambda^*D-\lambda^*d(x,\hatX))]\label{def:dtilteddensity}.
\end{align}
\end{definition}
Definition \ref{def:dtilted} first appeared in \cite[Proposition 7]{ingber2011dispersion} for discrete memoryless sources and was generalized to arbitrary memoryless sources in~\cite[Definition 6]{kostina2012fixed}. One might find the definition of $\jmath(x,D)$ difficult to understand. To illustrate, an example is given for a binary memoryless source with distribution $P_X=\mathrm{Bern}(p)$ with $p<0.5$ under the Hamming distortion measure. Note that $P_X(1)=p$ and $P_X(0)=1-p$. It follows from \cite[Theorem 10.3.1]{cover2012elements} that the rate-distortion function for this case is
\begin{align}
R(P_X,D)
&=\left\{
\begin{array}{ll}
H_\rmb(p)-H_\rmb(D)&\mathrm{if}D< p,\\
0&\mathrm{otherwise},
\end{array}
\right.
\end{align}
where $H_\rmb(p)=H(\mathrm{Bern}(p))=-p\log p-(1-p)\log(1-p)$ denotes the binary entropy function. Furthermore, the induced marginal distribution $P_{\hatX^*}=\mathrm{Bern}(\frac{p-D}{1-2D})$ when $D<p$, i.e., $P_{\hatX^*}(1)=\frac{p-D}{1-2D}$ and $P_{\hatX^*}(0)=\frac{1-p-D}{1-2D}$. Thus, when $D\geq p$, $\lambda^*=0$ and $\jmath(x|D,P_X)=-\log 1=0$. We then consider the non-degenerate case of $D<p$. The derivative $\lambda^*$ satisfies
\begin{align}
\lambda^*=\frac{\partial H_\rmb(D)}{\partial D}=\log\frac{1-D}{D}.
\end{align}
It follows that
\begin{align}
\nn&\bbE_{P_{\hatX}^*}[\exp(\lambda^*D-\lambda^*d(0,\hatX))]\\*
&=\exp(\lambda^*D)\Big(P_{\hatX}^*(0)+P_{\hatX}^*(1)\exp(-\lambda^*)\Big)\\
&=\exp(\lambda^*D)\bigg(\frac{1-p-D}{1-2D}+\frac{(p-D)}{1-2D}\frac{D}{1-D}\bigg)\\
&=\exp(\lambda^*D)\frac{1-p}{1-D}.
\end{align}
Thus,
\begin{align}
\jmath(0|D,P_X)
&=-\log\bbE_{P_{\hatX}^*}[\exp(\lambda^*D-\lambda^*d(0,\hatX))]\\
&=-\lambda^*D-\log\frac{1-p}{1-D}\\
&=-D\log\frac{1-D}{D}-\log (1-p)+\log (1-D)\\
&=-\log (1-p)-H_\rmb(D).
\end{align}
Similarly, 
\begin{align}
\jmath(1|D,P_X)
=-\log p-H_\rmb(D).
\end{align}

Using the definition of the entropy density $\imath_X(\cdot)$ in \eqref{def:entropydensity}, for a Bernoulli memoryless source with parameter $p<D$, under the Hamming distortion measure, the $D$-tilted information density satisfies
\begin{align}
\jmath(x|D,P_X)=\imath(x|P_X)-H_\rmb(D)\label{j4bms}.
\end{align}

\subsection{Properties}
The distortion-tilted information density possess several interesting properties that connect it to the rate-distortion function and also pave the way for the proof of non-asymptotic converse bound. To present the properties of the distortion-tilted information density, we need the following definition of the mutual information density
\begin{align}
\imath_{X;\hatX^*}(x;\hatx):=\log\frac{P_{\hatX|X}^*(\hatx|x)}{P_{\hatX}^*(\hatx)}.
\end{align}

\begin{lemma}
\label{prop:dtilteddensity}
The following claims hold.
\begin{enumerate}
\item For $\hatx\in\mathrm{supp}(P_{\hatX^*}(x))$.
\begin{align}
\jmath(x|D,P_X)=\imath_{X;\hatX^*}(x;\hatx)+\lambda^* d(x,\hatx)-\lambda^*D.
\end{align}
\item The rate-distortion function is the expectation of the distortion-tilted information density, i.e.,
\begin{align}
R(P_X,D)=\bbE_{P_X}[\jmath(X|D,P_X)].
\end{align}
\item For any $\hatx\in\calX$,
\begin{align}
\bbE_{P_X}[\exp(\lambda^* D-\lambda^*d(X,\hatx)+\jmath(X|D,P_X))]&\leq 1,
\end{align}
where the inequality holds for $\hatx\in\mathrm{supp}(P_{\hatX^*}(x))$.
\item Suppose that for all $Q_X$ in some neighborhood of $P_X$, $\mathrm{supp}(Q_{\hatX}^*)\subset\mathrm{supp}(P_{\hatX}^*)$. For any $x\in\mathrm{supp}(P_X)$,
\begin{align}
\frac{\partial R(Q_X,D)}{\partial Q_X(x)}\bigg|_{Q_X=P_X}
&=\jmath(x|D,P_X)-1,
\end{align}
where $Q_{\hatX|X}^*$ is the optimal conditional distribution that achieves $R(Q_X,D)$, $\Gamma(Q_X)$ is the distribution on $\calX$ that orders elements of $Q_X$ in a decreasing order and $x_i\in\calX$ is the element that has $i$-th largest probability under the distribution $Q_X$.
\end{enumerate}
\end{lemma}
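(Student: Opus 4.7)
The plan is to treat the four claims in order, each leveraging the Gibbs (exponential tilting) form of the optimal test channel $P_{\hatX|X}^*$ that achieves $R(P_X,D)$.

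For Claim 1, I would invoke the KKT stationarity condition for the convex program $R(P_X,D)=\min_{P_{\hatX|X}:\bbE[d(X,\hatX)]\leq D}I(X;\hatX)$, which forces $P_{\hatX|X}^*(\hatx|x)\propto P_{\hatX}^*(\hatx)\exp(-\lambda^* d(x,\hatx))$ with normalizer $Z(x):=\bbE_{P_{\hatX}^*}[\exp(-\lambda^* d(x,\hatX))]$; standard sensitivity analysis identifies the dual variable with $-\partial R(P_X,D')/\partial D'$ at $D'=D$, matching \eqref{def:lambda^*}. By Definition \ref{def:dtilted}, $Z(x)=\exp(-\lambda^* D-\jmath(x|D,P_X))$, so taking $\log[P_{\hatX|X}^*(\hatx|x)/P_{\hatX}^*(\hatx)]$ for $\hatx\in\supp(P_{\hatX}^*)$ yields Claim 1. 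Claim 2 then follows by taking expectations of Claim 1 under $P_X\times P_{\hatX|X}^*$: the left side becomes $\bbE_{P_X}[\jmath(X|D,P_X)]$ since $\jmath$ depends only on $x$, while the right side is $I(X;\hatX^*)+\lambda^*(\bbE[d(X,\hatX^*)]-D)=R(P_X,D)$, the cross term vanishing by complementary slackness when $\lambda^*>0$ and trivially when $\lambda^*=0$.

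For Claim 3, rewriting Claim 1 as $P_{\hatX|X}^*(\hatx|x)=P_{\hatX}^*(\hatx)\exp(\lambda^* D-\lambda^* d(x,\hatx)+\jmath(x|D,P_X))$ for $\hatx\in\supp(P_{\hatX}^*)$, dividing through by $P_{\hatX}^*(\hatx)$, and averaging over $x\sim P_X$ gives $\bbE_{P_X}[\exp(\lambda^* D-\lambda^* d(X,\hatx)+\jmath(X|D,P_X))]=P_{\hatX}^*(\hatx)/P_{\hatX}^*(\hatx)=1$, which is the equality case. For $\hatx\notin\supp(P_{\hatX}^*)$ the $\leq 1$ bound is dual feasibility from KKT: if the expectation strictly exceeded $1$ for some such $\hatx$, a small mass transfer onto $\hatx$ in $P_{\hatX}^*$ would decrease $I(X;\hatX^*)$ to first order, contradicting the optimality of $P_{\hatX}^*$.

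Claim 4 is the delicate step and would be obtained by differentiating the identity from Claim 3. Fix any $\hatx\in\supp(P_{\hatX}^*)$; under the stated support-preservation and smoothness hypotheses, $\bbE_{Q_X}[\exp(\lambda^*_Q D-\lambda^*_Q d(X,\hatx)+\jmath(X|D,Q_X))]=1$ holds in a neighborhood of $P_X$, with $\lambda^*_Q$ and $Q_{\hatX}^*$ smoothly dependent on $Q_X$. I would differentiate this in $Q_X(x)$: the explicit variation of the weight $Q_X(x)$ contributes $\exp(\lambda^* D-\lambda^* d(x,\hatx)+\jmath(x|D,P_X))=P_{\hatX|X}^*(\hatx|x)/P_{\hatX}^*(\hatx)$ by Claim 1, while the implicit variation contributes $\sum_{x'}Q_{X|\hatX}^*(x'|\hatx)[\partial_{Q_X(x)}\jmath(x'|D,Q_X)+(D-d(x',\hatx))\partial_{Q_X(x)}\lambda^*_Q]$, after recognizing $Q_X(x')P_{\hatX|X}^*(\hatx|x')/P_{\hatX}^*(\hatx)=Q_{X|\hatX}^*(x'|\hatx)$ via Claim 1. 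Averaging the resulting equation in $\hatx$ against the weight $P_{\hatX}^*(\hatx)$ collapses the first term to $\sum_{\hatx}P_{\hatX|X}^*(\hatx|x)=1$, kills the $\partial_{Q_X(x)}\lambda^*_Q$ piece via complementary slackness $\bbE[d(X,\hatX^*)]=D$ (or trivially when $\lambda^*=0$), and the $\partial\jmath$ piece assembles, using $\sum_{\hatx}P_{\hatX}^*(\hatx)Q_{X|\hatX}^*(x'|\hatx)=P_X(x')$, into $\bbE_{P_X}[\partial_{Q_X(x)}\jmath(X|D,P_X)]$. Thus $\bbE_{P_X}[\partial_{Q_X(x)}\jmath(X|D,P_X)]=-1$, and combining with the direct derivative of the representation in Claim 2, $\partial_{Q_X(x)}R(Q_X,D)|_{Q_X=P_X}=\jmath(x|D,P_X)+\bbE_{P_X}[\partial_{Q_X(x)}\jmath(X|D,P_X)]=\jmath(x|D,P_X)-1$.

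The main obstacle lies in the bookkeeping for Claim 4: both $P_{\hatX}^*$ and $\lambda^*$ depend implicitly on $Q_X$, so one must justify the interchange of differentiation with expectation and verify that the $\partial_{Q_X(x)}\lambda^*$ terms cancel. The hypothesis that $(Q_X,D')\mapsto R(Q_X,D')$ is twice differentiable near $(P_X,D)$ with preserved support of $Q_{\hatX}^*$ licenses these manipulations, and complementary slackness does the rest of the work. Claims 1--3 are by comparison essentially mechanical consequences of the KKT optimality conditions once the Gibbs form is in hand.
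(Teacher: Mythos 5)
Your proof is correct and follows essentially the same route as the paper's cited sources (Csisz\'ar 1974 and Kostina's thesis): extract the Gibbs form of the optimal test channel from the KKT stationarity condition, read off Claims 1--3, and obtain Claim 4 by differentiating the equality case of Claim 3 in $Q_X(x)$, averaging against $P_{\hatX}^*$, and cancelling the $\partial_{Q_X(x)}\lambda^*$ term with complementary slackness before applying the product rule to $R(Q_X,D)=\sum_{x'}Q_X(x')\jmath(x'|D,Q_X)$. The paper itself gives no proof, only the citations, and your argument is a faithful reconstruction of what those references do.
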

Lemma \ref{prop:dtilteddensity} was derived by Csisz\'ar~\cite{csiszar1974} and is also available in \cite[Chapter 2]{kostina2013lossy}. Claim (i) provides an alternative expression for the distortion-tilted information density~\cite{boyd2004convex}.  Claim (ii) shows that the distortion-tilted information density shares the property similar to the entropy density $\imath_X(\cdot)$ and the entropy function $H(P_X)$ and it is the reason why $\jmath(x|D,P_X)$ is named an information density. Claim (iii) is critical in deriving a non-asymptotic converse bound for the rate-distortion problem. Claim (iv) enables the proof of second-order asymptotics using the method of types, specifically Taylor expansions of the rate-distortion function of empirical distributions around the source distribution $P_X$ (cf. \eqref{useprops}).

\section{Non-Asymptotic Bounds}
In this section, we present non-asymptotic achievability and converse bounds for the rate-distortion problem~\cite{kostina2012fixed}.

For any $n\in\bbN$ and any $x^n\in\calX^n$, define the distortion ball $\calB_D(x^n)$ as
\begin{align}
\calB_D(x^n):=\{\hatx^n\in\hatcalX^n:~d(x^n,\hatx^n)\leq D\}\label{def:distortionball}.
\end{align}
The following achievability holds.
\begin{theorem}
\label{ach:fbl}
For any $P_{\hatX}$, there exists an $(n,M)$-code such that the excess-distortion probability satisfies
\begin{align}
\rmP_{\rme,n}(D)\leq \bbE_{P_X^n}[(1-P_{\hatX}^n(\calB_D(X^n)))^M].
\end{align}
\end{theorem}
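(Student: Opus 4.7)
The plan is to use a standard random coding argument: draw a random codebook consisting of $M$ i.i.d.\ codewords from $P_{\hatX}^n$ and show that the average excess-distortion probability over this ensemble already satisfies the claimed bound; a deterministic code meeting the bound then exists by the usual averaging (probabilistic method) argument.

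More concretely, I would proceed as follows. First, generate $M$ codewords $\hatX^n(1),\ldots,\hatX^n(M)$ independently from the product distribution $P_{\hatX}^n$, and define an encoder $f$ that, given a source realization $x^n$, outputs any index $m\in[M]$ such that $\hatX^n(m)\in\calB_D(x^n)$ (with ties broken arbitrarily, and a default index, say $1$, chosen if no such $m$ exists). The decoder $\phi$ simply outputs the stored codeword $\hatX^n(m)$ associated with the received index. With this construction, an excess-distortion event $\{d(X^n,\hatX^n)>D\}$ can occur only when none of the $M$ i.i.d.\ codewords lies in the ball $\calB_D(X^n)$ defined in \eqref{def:distortionball}.

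Next, I would compute the probability of this bad event conditional on $X^n=x^n$. Since the codewords are independent of $X^n$ and of each other, each has probability $P_{\hatX}^n(\calB_D(x^n))$ of hitting the ball, so the probability that all $M$ miss is $(1-P_{\hatX}^n(\calB_D(x^n)))^M$. Averaging over $X^n\sim P_X^n$ gives that the expected (over the random codebook) excess-distortion probability is bounded by
\begin{align}
\bbE_{\mathrm{codebook}}[\rmP_{\rme,n}(D)]\leq \bbE_{P_X^n}\bigl[(1-P_{\hatX}^n(\calB_D(X^n)))^M\bigr].
\end{align}
Finally, I would invoke the standard averaging principle: since the average of $\rmP_{\rme,n}(D)$ over the ensemble of codebooks satisfies the claimed inequality, there must exist at least one deterministic realization of the codebook whose excess-distortion probability is no larger than this average. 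Fixing this particular codebook yields the desired $(n,M)$-code.

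The argument is almost entirely routine; the only step requiring a small amount of care is the independence bookkeeping when writing the probability of the bad event as a product. In particular, one must be sure to condition on $X^n$ before using independence of the codewords, so that $P_{\hatX}^n(\calB_D(x^n))$ is the appropriate per-codeword success probability and the missed-ball events become truly i.i.d.\ Bernoulli. There is no real obstacle beyond this, and no concentration or typicality machinery is needed, which is precisely the appeal of this non-asymptotic bound: it cleanly isolates the combinatorial content of random coding in a form that later admits tight second-order analysis via Berry--Esseen.
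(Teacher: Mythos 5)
Your proof is correct and follows essentially the same random coding plus averaging argument as the paper; the only cosmetic difference is that you define the encoder to pick any codeword inside $\calB_D(x^n)$ (with a default index) rather than the minimum-distortion codeword, but both constructions produce the identical bad event $\{\text{no codeword lies in } \calB_D(X^n)\}$ and hence the same bound.
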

\begin{proof}
The proof of Theorem \ref{ach:fbl} follows from the random coding idea and the minimum distance encoding. Specifically, let $\hatX^n(1),\ldots,\hatX^n(M)$ be a sequence of independent codewords, each generated i.i.d. from a distribution $P_{\hatX}$ define on the alphabet $\calX$. Consider the following $(n,M)$-code with encoder $f:\calX^n\to[M]$ such that
\begin{align}
f(X^n)=\argmin_{i\in[M]}d(X^n,\hatX^n(i)),
\end{align}
and decoder $\phi:[M]\to\hatcalX^n$ such that $\hatX^n=\phi(f(X^n))=\hatX^n(f(X^n))$.

The excess-distortion probability of the above code satisfies
\begin{align}
\rmP_{\rme,n}(D)
&=\Pr\{d(X^n,\hatX^n)>D\}\\
&=\Pr\{\forall~i\in[M]:~d(X^n,\hatX^n)>D\}\\
&=\bbE_{P_X^n}\bigg[\prod_{i\in[M]}\Pr_{P_{\hatX}^n}\{d(X^n,\hatX^n(i))>D|X^n\}\bigg]\label{useindependent}\\
&=\bbE_{P_X^n}\Big[(\Pr_{P_{\hatX}^n}\{d(X^n,\hatX^n(i))>D|X^n\})^M\Big]\label{identicaldist}\\
&=\bbE_{P_X^n}\Big[(1-P_{\hatX}^n(\calB_D(X^n)))^M\Big]\label{usedefball}
\end{align}
where \eqref{useindependent} since all codewords are generated independently of each other, \eqref{identicaldist} follows since each codeword is generated from the same distribution $P_{\hatX^n}$ and \eqref{usedefball} follows from the definition of the distortion ball that implies $\Pr_{P_{\hatX}^n}\{d(x^n,\hatX^n(i))>D\}=1-P_{\hatX}^n(\calB_D(x^n))$ for any $x^n\in\calX^n$. The existence of a deterministic code follows from the simple fact that $\bbE[X]<a$ implies that there exists an element $x\in\calX$ such that $x\leq a$ for any random variable $X$ with alphabet $\calX$ and for any real number $a$.
\end{proof}

Conversely, the excess-distortion probability of any $(n,M)$-code is lower bounded as follows.
\begin{theorem}
\label{converse:fbl}
Given any $\gamma\in\bbR_+$, any $(n,M)$-code satisfies that
\begin{align}
\rmP_{\rme,n}(D)
&\geq \Pr_{P_X^n}\Big\{\sum_{i\in[n]}\jmath(X_i|D,P_X)\geq \log M+n\gamma\Big\}-\exp(-n\gamma).
\end{align}
\end{theorem}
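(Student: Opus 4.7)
The plan is to mirror the structure of the lossless converse (Theorem \ref{lossless:converse}), replacing the entropy density $\imath(x|P_X)$ by the $D$-tilted information density $\jmath(x|D,P_X)$, and using property (iii) of Lemma \ref{prop:dtilteddensity} as the substitute for the elementary bound $P_X^n(x^n)\le \exp(-n\gamma)/M$ that was available in the lossless setting. Concretely, I would define the ``atypical'' set
\begin{align}
\calB_n(\gamma):=\Big\{x^n\in\calX^n:~\sum_{i\in[n]}\jmath(x_i|D,P_X)\geq \log M+n\gamma\Big\},
\end{align}
and the ``success'' set $\calS:=\{x^n:d(x^n,\phi(f(x^n)))\leq D\}$, so that $\rmP_{\rme,n}(D)=\Pr\{X^n\in\calS^\rmc\}$. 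Splitting on $\calS$ vs.\ $\calS^\rmc$ gives
\begin{align}
\Pr\{X^n\in\calB_n(\gamma)\}\leq \rmP_{\rme,n}(D)+\Pr\{X^n\in\calB_n(\gamma)\cap\calS\},
\end{align}
so the whole job is to show $\Pr\{X^n\in\calB_n(\gamma)\cap\calS\}\leq \exp(-n\gamma)$.

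To handle that residual term, I would enumerate the codebook $\{\hatx^n(1),\ldots,\hatx^n(M)\}$ produced by the decoder $\phi$ and partition $\calS=\bigcup_{m\in[M]}\calS_m$, where $\calS_m:=\{x^n:f(x^n)=m,~d(x^n,\hatx^n(m))\leq D\}\subseteq \calB_D(\hatx^n(m))$ with $\calB_D(\cdot)$ as in \eqref{def:distortionball}. Then
\begin{align}
\Pr\{X^n\in\calB_n(\gamma)\cap\calS\}\leq \sum_{m\in[M]}P_X^n\big(\calB_n(\gamma)\cap\calB_D(\hatx^n(m))\big).
\end{align}
For each fixed $\hatx^n=\hatx^n(m)$ and each $x^n\in\calB_n(\gamma)\cap\calB_D(\hatx^n)$, two inequalities hold simultaneously: $\sum_i\jmath(x_i|D,P_X)\geq \log M+n\gamma$ (from $\calB_n(\gamma)$) and $\lambda^*(D-d(x^n,\hatx^n))\geq 0$ (since $\lambda^*\geq 0$ and $x^n\in\calB_D(\hatx^n)$). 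Multiplying $P_X^n(x^n)$ by the nonnegative quantity $\exp\bigl(n\lambda^*D-n\lambda^*d(x^n,\hatx^n)+\sum_i\jmath(x_i|D,P_X)-\log M-n\gamma\bigr)\geq 1$ and summing over $x^n$ yields
\begin{align}
P_X^n\big(\calB_n(\gamma)\cap\calB_D(\hatx^n)\big)
\leq \frac{\exp(-n\gamma)}{M}\,\bbE_{P_X^n}\!\Big[\exp\Big(\sum_{i\in[n]}\bigl(\lambda^*D-\lambda^*d(X_i,\hatx_i)+\jmath(X_i|D,P_X)\bigr)\Big)\Big].
\end{align}
Since $X^n$ is i.i.d., the expectation factorizes into a product of $n$ identical single-letter expectations, each of which is bounded by $1$ by property (iii) of Lemma \ref{prop:dtilteddensity}. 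Summing over $m\in[M]$ produces the desired $\exp(-n\gamma)$ bound, and combining with the splitting inequality completes the proof.

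The main obstacle, conceptually, is the step that invokes property (iii): one must recognize that the $\lambda^*$-exponent in the $D$-tilted information density is precisely what turns the ``within-the-distortion-ball'' constraint $d(x^n,\hatx^n)\leq D$ into a usable upper bound on $P_X^n(x^n)$, playing exactly the role that the trivial identity $P_X^n(x^n)=\exp(-\sum_i\imath(x_i|P_X))$ played in the lossless converse. The i.i.d.\ tensorization and the union bound over the $M$ codewords are then routine; no further single-shot manipulations beyond Lemma \ref{prop:dtilteddensity} are needed.
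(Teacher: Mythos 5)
Your proposal is correct and takes essentially the same approach as the paper: the split on the excess-distortion event, the exponential tilting by $\lambda^*$ to convert the distortion-ball constraint into a usable bound, and the invocation of Claim (iii) of Lemma \ref{prop:dtilteddensity} as the single-shot ingredient all match. The only cosmetic difference is that you enumerate the $M$ codewords explicitly and apply a union bound, whereas the paper packages this step as a Markov-inequality bound on $Q_{\hatX^n}(\calB_D(x^n))$ where $Q_{\hatX^n}$ is the uniform mixture over the decoder outputs; since $Q_{\hatX^n}$ places mass $1/M$ on each codeword, the two bookkeeping devices produce the same inequality. (One small notational remark: you write $\calS_m\subseteq\calB_D(\hatx^n(m))$, but the paper's $\calB_D(\cdot)$ in \eqref{def:distortionball} is a set of reproduced sequences centered at a source sequence; what you mean is the preimage $\{x^n:d(x^n,\hatx^n(m))\leq D\}$. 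This is clear from context and does not affect the argument.)
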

Note that Theorem \ref{converse:fbl} generalizes non-asymptotic converse bound in Theorem \ref{lossless:converse} to the lossy setting with the entropy density $\imath_X(\cdot)$ replaced by the distortion-tilted information density $\jmath(\cdot)$. The proof of Theorem \ref{converse:fbl} requires novel ideas beyond Theorem \ref{lossless:converse}, such as Claim (iii) of Lemma \ref{prop:dtilteddensity}.
\begin{proof}
Let $P_{W|X^n}$ and $P_{\hatX^n|W}$ be the conditional distributions induced by the encoder $f$ and the decoder $\phi$ respectively with $W$ being a random variable taking values in $[M]$. Specifically, for each $w\in[M]$, $x^n\in\calX^n$ and $\hatx^n\in\hatcalX^n$, $P_{W|X^n}(w|x^n)=\bbo(w=f(X^n))$ and $P_{\hatX^n|W}(\hatx^n|w)=\bbo(\hatx^n=\phi(w))$. Furthermore, let $Q_W$ be the uniform distribution over $[M]$ and let $Q_{\hatX^n}$ be induced by $Q_W$ and $P_{\hatX^n|W}$. In subsequent analyses, for simplicity, we drop the subscript of the probability terms when it is clear. For any $\gamma\in\bbR_+$, it follows that
\begin{align}
\nn&\Pr\Big\{\sum_{i\in[n]}\jmath(X_i|D,P_X)\geq \log M+n\gamma\Big\}\\*
&=\Pr\Big\{\sum_{i\in[n]}\jmath(X_i|D,P_X)\geq \log M+n\gamma,~d(X^n,\hatX^n)\leq D\Big\}\\*
&\qquad+\Pr\Big\{\sum_{i\in[n]}\jmath(X_i|D,P_X)\geq \log M+n\gamma,~d(X^n,\hatX^n)> D\Big\}\\
\nn&\leq \Pr\Big\{\sum_{i\in[n]}\jmath(X_i|D,P_X)\geq \log M+n\gamma,~d(X^n,\hatX^n)\leq D\Big\}\\*
&\qquad +\rmP_{\rme,n}(D)\label{converse:step1}.
\end{align}
Denote  the second term in \eqref{converse:step1} as $f(\gamma,D)$, which can be further upper bounded as follows:
\begin{align}
\nn&f(\gamma,D)\\*
\nn&=\sum_{x^n\in\calX^n}\sum_{w\in[M]}\sum_{\hatx^n\in\calB_D(x^n)}P_X^n(x^n)P_{\hatX^n|W}(\hatx^n|w)\\*
&\qquad\times \bbo\Big(\sum_{i\in[n]}\jmath(X_i|D,P_X)\geq \log M+n\gamma\Big)\\
\nn&\leq \sum_{x^n\in\calX^n}\sum_{w\in[M]}\sum_{\hatx^n\in\calB_D(x^n)}P_X^n(x^n)P_{\hatX^n|W}(\hatx^n|w)\\*
&\qquad\times \frac{\exp(-n\gamma)\exp(\sum_{i\in[n]}\jmath(X_i|D,P_X))}{M}\label{explain1}\\
&\leq \exp(-n\gamma)\sum_{x^n\in\calX^n}P_X^n(x^n)\exp\Big(\sum_{i\in[n]}\jmath(X_i|D,P_X)\Big)Q_{\hatX^n}(\calB_D(x^n))\label{explain2}\\
\nn&\leq \exp(-n\gamma)\sum_{x^n\in\calX^n}P_X^n(x^n)\exp\Big(\sum_{i\in[n]}\jmath(X_i|D,P_X)\Big)\\*
&\qquad\times\bbE_{Q_{\hatX^n}}[\exp(n\lambda^*D-n\lambda^*d(x^n,\hatX^n))]\label{explain3}\\
\nn&=\exp(-n\gamma)\sum_{\hatx^n\in\hatcalX^n}Q_{\hatX^n}(\hatx^n)\\*
&\qquad\times \bbE_{P_X^n}\Big[\exp\Big(\sum_{i\in[n]}\big(\lambda^*D-\lambda^*d(X_i,\hatx_i)+\jmath(X_i|D,P_X)\big)\Big)\Big]\\
&=\exp(-n\gamma)\sum_{\hatx^n\in\hatcalX^n}Q_{\hatX^n}(\hatx^n)\prod_{i\in[n]}\\*
&\qquad\times \bbE_{P_X}\Big[\exp\Big(\lambda^*D-\lambda^*d(X_i,\hatx_i)+\jmath(X_i|D,P_X)\Big)\Big]\\
&\leq \exp(-n\gamma)\sum_{\hatx^n\in\hatcalX^n}Q_{\hatX^n}(\hatx^n)\label{explain4}\\
&\leq \exp(-n\gamma),
\end{align}
where \eqref{explain1} follows since 
\begin{align}
&P_{W|X^n}(w|x^n)\leq 1,\\
\nn&\bbo\Big(\sum_{i\in[n]}\jmath(X_i|D,P_X)\geq \log M+n\gamma\Big)\\*
&\leq \exp\Big(\sum_{i\in[n]}\jmath(X_i|D,P_X)-\log M-n\gamma\Big)\\
&=\frac{\exp(-n\gamma)\exp(\sum_{i\in[n]}\jmath(X_i|D,P_X))}{M},
\end{align}
\eqref{explain2} follows from the definition of the distortion ball $\calB_D(x^n)$ in \eqref{def:distortionball} and the definition of $Q_{\hatX^n}$, \eqref{explain3} follows since it follows from the definition of $\calB_D(x^n)$ and the Markov inequality (cf. Theorem \ref{markovineq}) that
\begin{align}
Q_{\hatX^n}(\calB_D(x^n))
&=\Pr_{Q_{\hatX^n}}(d(x^n,\hatX^n)\leq D)\\
&\leq \bbE_{Q_{\hatX^n}}[\exp(n\lambda^*D-n\lambda^*d(x^n,\hatX^n))].
\end{align}
and \eqref{explain4} follows from Claim (iii) of Lemma \ref{prop:dtilteddensity}.

The proof of Theorem \ref{converse:fbl} is completed by combining \eqref{converse:step1} and \eqref{explain4}.
\end{proof}

\section{Second-Order Asymptotics}
\label{sec:second4rd}
\subsection{Result and Discussions}
Define the following the distortion-dispersion function 
\begin{align}
\rmV(P_X,D):=\Var_{P_X}[\jmath(X|D,P_X)]\label{def:disdispersion}.
\end{align}
In addition to the assumptions in Section \ref{sec:dtilted4rd}, we need the following further assumptions.
\begin{enumerate}
\item The distortion level $D\in(D_{\rm{min}},D_{\rm{max}})$
where
\begin{align}
D_{\rm{min}}&:=\inf\{D:~R(P_X,D)<\infty\},\label{def:dmin}\\
D_{\rm{max}}&:=\inf_{\hatx\in\hatcalX}\bbE_{P_X}[d(X,\hatx)]\label{def:dmax}.
\end{align}
\item The average $\bbE_{P_X\times P_{\hatX}^*}[d(X,\hatX)^9]<\infty$, where $P_{\hatX^*}$ is induced by the optimal test channel $P_{\hatX|X}^*$ of $R(P_X,D)$.
\item The dispersion $\rmV(P_X,D)$ is positive and finite.
\end{enumerate}
Note that any bounded distortion measure satisfies assumption (ii).

\begin{theorem}
\label{second4rd}
For any $\varepsilon\in(0,1)$, 
\begin{align}
\log M^*(n,D,\varepsilon)&=nR(P_X,D)+\sqrt{n\rmV(P_X,D)}\rmQ^{-1}(\varepsilon)+O(\log n).
\end{align}
\end{theorem}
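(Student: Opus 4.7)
My plan is to mirror the approach used in Chapter~\ref{chap:lossless} for lossless source coding: apply the Berry--Esseen theorem (Theorem~\ref{berrytheorem}) to the non-asymptotic achievability and converse bounds of Theorems~\ref{ach:fbl} and~\ref{converse:fbl}. The central random variable will be the normalized sum $\tfrac{1}{n}\sum_{i\in[n]}\jmath(X_i|D,P_X)$, whose summands are i.i.d. with mean $R(P_X,D)$ by Claim~(ii) of Lemma~\ref{prop:dtilteddensity} and per-term variance $\rmV(P_X,D)$ by definition~\eqref{def:disdispersion}. Assumption~(ii) (ninth moment of the distortion) guarantees that the third absolute moment required by Theorem~\ref{berrytheorem} is finite, since Claim~(i) of Lemma~\ref{prop:dtilteddensity} expresses $\jmath(x|D,P_X)$ as a linear combination of $d(x,\hatx)$ and a bounded information-density term.

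For the converse, I would invoke Theorem~\ref{converse:fbl} with $\gamma=\tfrac{\log n}{n}$, so the slack $\exp(-n\gamma)=1/n$. Applying the Berry--Esseen theorem to the i.i.d. sum yields, for any $(n,M)$-code,
\begin{align}
\rmP_{\rme,n}(D)\;\geq\;\rmQ\!\left(\frac{\log M+\log n-nR(P_X,D)}{\sqrt{n\rmV(P_X,D)}}\right)-\frac{C}{\sqrt{n}}-\frac{1}{n},
\end{align}
where $C$ is a finite constant depending on the source. Imposing $\rmP_{\rme,n}(D)\le\varepsilon$ and expanding $\rmQ^{-1}$ by a first-order Taylor series around $\varepsilon$ (the remainder is $O(1/\sqrt{n})$) then gives
\[
\log M^*(n,D,\varepsilon)\;\geq\;nR(P_X,D)+\sqrt{n\rmV(P_X,D)}\,\rmQ^{-1}(\varepsilon)+O(\log n).
\]

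For the achievability, I would invoke Theorem~\ref{ach:fbl} with $P_{\hatX}=P_{\hatX}^*$, the marginal induced by the optimal test channel $P_{\hatX|X}^*$ achieving $R(P_X,D)$. The crux is to control the random quantity $-\log P_{\hatX^*}^n(\calB_D(X^n))$ in terms of the i.i.d. sum $\sum_{i\in[n]}\jmath(X_i|D,P_X)$: one establishes a lossy asymptotic equipartition estimate of the form $-\log P_{\hatX^*}^n(\calB_D(x^n))\leq \sum_{i\in[n]}\jmath(x_i|D,P_X)+\tfrac{1}{2}\log n+O(1)$ on a set of source sequences of overwhelming probability, using Claim~(i) of Lemma~\ref{prop:dtilteddensity} (which for $\hatx^n\in\calB_D(x^n)$ rewrites the sum $\sum_i\jmath(x_i|D,P_X)$ as a mutual-information density plus a slack $\lambda^*(D-d(x^n,\hatx^n))\geq 0$) together with a Berry--Esseen style local central limit estimate applied to the mutual-information density under $P_X\times P_{\hatX^*}$. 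Setting $\log M=nR(P_X,D)+\sqrt{n\rmV(P_X,D)}\,\rmQ^{-1}(\varepsilon_n)+c\log n$ with $\varepsilon_n=\varepsilon-O(1/\sqrt{n})$ and $c$ large enough, I would then split the expectation in Theorem~\ref{ach:fbl} into the event $\{M\cdot P_{\hatX^*}^n(\calB_D(X^n))\geq \log n\}$, on which $(1-p)^M\leq e^{-Mp}\leq 1/n$, and its complement, whose probability is bounded by applying Berry--Esseen (Theorem~\ref{berrytheorem}) to $\sum_i\jmath(X_i|D,P_X)$.

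The main obstacle is the achievability concentration estimate for $-\log P_{\hatX^*}^n(\calB_D(x^n))$: unlike the lossless case, where the analogue $-\log P_X^n(x^n)=\sum_i\imath(x_i|P_X)$ holds as an identity, here the probability of a distortion ball has to be estimated, and the $\tfrac{1}{2}\log n$ refinement of that estimate is what controls the $O(\log n)$ remainder in the theorem. Assumption~(ii) enters precisely to justify an order-$1/\sqrt{n}$ local CLT bound on the ball probability, while the smoothness assumption on $(Q_X,D')\mapsto R(Q_X,D')$ from Section~\ref{sec:dtilted4rd} ensures that $\lambda^*$ and $P_{\hatX^*}$ behave regularly under the perturbations that appear when carrying out this estimate.
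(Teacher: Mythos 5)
Your proposal follows the paper's proof essentially verbatim: the converse applies the Berry--Esseen theorem to Theorem~\ref{converse:fbl} with $\gamma=\tfrac{\log n}{n}$ (you have the correct value — the paper's proof writes $\gamma=\log n$, a typo, as one can see from the displayed slack $\exp(-n\gamma)=1/n$), and the achievability invokes Theorem~\ref{ach:fbl} with $P_{\hatX}^*$, the lossy AEP of Lemma~\ref{lossy:AEP}, the inequality $(1-p)^M\leq e^{-Mp}$, and a split on the event that $M\cdot P_{\hatX^*}^n(\calB_D(X^n))$ exceeds a slowly growing threshold, finishing with Berry--Esseen. The paper imports Lemma~\ref{lossy:AEP} as a black box from~\cite{yang1999redundancy,kostina2012fixed}, whereas you sketch its proof via the change of measure coming from Claim~(i) of Lemma~\ref{prop:dtilteddensity} together with a local CLT for the per-letter distortion under the optimal test channel; that is indeed the standard derivation, though the local estimate is applied to $\sum_i d(x_i,\hatX_i)$ under $P_{\hatX|X}^{*n}(\cdot|x^n)$ rather than to the mutual information density as you write, a minor misattribution that does not affect the argument.
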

Theorem \ref{second4rd} characterizes the second dominant term in the expansion of $\log M^*(n,D,\varepsilon)$. Upper and lower bounds on the remainder term $O(\log n)$ is available in \cite[Theorem 12]{kostina2012fixed}. However, the bounds on $O(\log n)$ term do not match even in the sign. The higher order terms than the second-order for the lossy source coding problems remain open. Novel ideas are required to derive matched bounds for the $O(\log n)$ term, generalizing the results for lossless source coding in~\cite{verdu14,chen2020lossless}.  

The result in Theorem \ref{second4rd} holds for any memoryless source and distortion function under mild conditions in~\cite[Theorem 12]{kostina2012fixed} beyond a discrete memoryless source under the bounded distortion measure that is considered in this monograph. For example, Theorem \ref{second4rd} holds for a Gaussian memoryless source under the quadratic distortion measure~\cite[Theorem 40]{kostina2012fixed}.

An equivalent presentation of Theorem \ref{second4rd} is to characterize the second-order codig rate. Similarly to the lossless source coding problem, the second-order coding rate for the rate-distortion problem is defined as follows.
\begin{definition}
\label{def:sr4lossy}
Given any $\varepsilon\in[0,1)$, a real number $L\in\bbR$ is said to be a second-order $(D,\varepsilon)$-achievable rate if there exists a sequence of $(n,M)$-codes such that
\begin{align}
\limsup_{n\to\infty}\frac{1}{\sqrt{n}}(\log M-nR(P_X,D))&\leq L,\\
\limsup_{n\to\infty}\rmP_{\rme,n}(D)&\leq \varepsilon.
\end{align}
For any $\varepsilon\in[0,1)$, the infimum of all second-order $(D,\varepsilon)$-achievable rates is called the optimal second-order coding rate and denoted by $L^*(D,\varepsilon)$.
\end{definition}
An alternative presentation of  Theorem \ref{second4rd} is 
\begin{align}
L^*(D,\varepsilon)=\sqrt{\rmV(P_X,D)}\rmQ^{-1}(\varepsilon)\label{second:rd},
\end{align}
which ignores the unmatched higher order terms. For subsequent chapters on multiterminal lossy source coding problems, we usually present the results in the form of \eqref{second:rd}.

In the next two subsections, we present a proof sketch of Theorem \ref{second4rd} of Kostina and Verd\'u~\cite{kostina2012fixed}, which follows by properly applying the Berry-Esseen theorem (cf. Theorem \ref{berrytheorem}) to the non-asymptotic bounds in Theorems \ref{ach:fbl} and \ref{converse:fbl}.

\subsection{Achievability Proof}
The following non-asymptotic refinement of asymptotic equipartition property (AEP) for lossy source coding~\cite[Lemma 2]{kostina2012fixed} is critical.
\begin{lemma}
\label{lossy:AEP}
There exists constants $(n_0,c,K)\in\bbR_+^3$ such that for all $n\geq n_0$,
\begin{align}
\nn&\Pr\Big\{-\log (P_{\hatX}^*)^n(\calB_D(X^n))\leq \sum_{j\in[n]}\jmath(X_i|D,P_X)+C\log n+c\Big\}\\*
&\geq 1-\frac{K}{\sqrt{n}},
\end{align}
where $C$ is a constant.
\end{lemma}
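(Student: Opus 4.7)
The strategy is to perform a change of measure from $(P_{\hatX}^*)^n$ to the conditional distribution $(P_{\hatX|X}^*)^n(\cdot|x^n)$ induced by the optimal test channel, and then apply a local central limit estimate to the empirical distortion. By Claim (i) of Lemma~\ref{prop:dtilteddensity}, for every $\hatx\in\supp(P_{\hatX}^*)$ one has
\[
P_{\hatX}^*(\hatx) = P_{\hatX|X}^*(\hatx|x)\exp\bigl(\lambda^* d(x,\hatx) - \lambda^* D - \jmath(x|D,P_X)\bigr).
\]
Taking the product over coordinates and summing over $\hatx^n\in\calB_D(x^n)$ (noting $\supp(P_{\hatX|X}^*(\cdot|x))\subseteq\supp(P_{\hatX}^*)$, since $P_{\hatX}^*$ is the $P_X$-mixture of $P_{\hatX|X}^*$), one obtains the change-of-measure identity
\[
(P_{\hatX}^*)^n(\calB_D(x^n)) = e^{-\sum_{i}\jmath(x_i|D,P_X) - n\lambda^* D}\, \bbE\!\left[e^{n\lambda^* d(x^n,\hatX^n)}\bbo\{d(x^n,\hatX^n)\le D\}\,\bigm|\,x^n\right],
\]
where $\hatX^n\sim (P_{\hatX|X}^*)^n(\cdot|x^n)$. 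The lemma reduces to lower bounding this conditional expectation by $c\,e^{n\lambda^* D}/\sqrt{n}$ with sufficient $P_X^n$-probability.

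Next, I would define a typical set $\calT_n\subset\calX^n$ by requiring (a) $|\tilD(x^n)-D|\le c_1/\sqrt{n}$, where $\tilD(x^n):=\frac{1}{n}\sum_i \bbE_{P_{\hatX|X}^*(\cdot|x_i)}[d(x_i,\hatX)]$; (b) the empirical distortion variance $\tilV(x^n):=\frac{1}{n}\sum_i \Var_{P_{\hatX|X}^*(\cdot|x_i)}[d(x_i,\hatX)]$ is bounded above and below by positive constants; and (c) the empirical third absolute central moment of the per-letter distortion is bounded. Under the optimal test channel $\bbE_{P_X}[\tilD(X)]=D$, while the ninth-moment hypothesis of Section~\ref{sec:second4rd} supplies the variance and third absolute moment of each summand. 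Applying the Berry-Esseen theorem (Theorem~\ref{berrytheorem}) to the i.i.d. sums in (a) and (b), together with Chebyshev's inequality for (c), yields $\Pr\{X^n\notin\calT_n\}\le K/\sqrt{n}$ for some $K>0$.

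For $x^n\in\calT_n$, set $S_n:=nd(x^n,\hatX^n)=\sum_i d(x_i,\hatX_i)$, a sum of independent (non-identically distributed) bounded-moment random variables with mean $n\tilD(x^n)$ and variance $n\tilV(x^n)$. The change of variable $u=nD-v$ yields
\[
\bbE\!\left[e^{\lambda^* S_n}\bbo\{S_n\le nD\}\,\bigm|\,x^n\right] = e^{n\lambda^* D}\!\int_{0}^{\infty}e^{-\lambda^* v}\,dF_{n,x^n}(nD-v),
\]
where $F_{n,x^n}$ is the cdf of $S_n$. A local central limit theorem (or an Esseen-type smoothing bound) shows that, uniformly on $\calT_n$, the mass that $F_{n,x^n}$ assigns to any fixed-length interval abutting $nD$ on the left is of order $1/\sqrt{n}$, since $nD$ lies within $O(1)$ standard deviations of the mean on $\calT_n$. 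The integral above is then $\Omega(1/\sqrt{n})$, giving $-\log (P_{\hatX}^*)^n(\calB_D(x^n))\le \sum_i \jmath(x_i|D,P_X) + \tfrac{1}{2}\log n + O(1)$ on $\calT_n$, which is the desired bound with $C=\tfrac{1}{2}$.

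The main obstacle is this last local-CLT step. The Berry-Esseen theorem alone controls the cdf of $S_n$ only up to additive error $O(1/\sqrt{n})$, which is precisely the order of the quantity we need to extract. Overcoming this requires either a lattice local limit theorem (when the distortion is lattice-valued, e.g. Hamming) or Esseen's smoothing inequality paired with a direct bound on the characteristic function of $d(x^n,\hatX^n)$; both demand a non-degeneracy condition on the per-letter distortions and, more importantly, uniformity in $x^n$ across the shrinking neighborhood of $P_X$ defining $\calT_n$. Establishing this uniformity from only the finite-moment hypotheses of Section~\ref{sec:second4rd} is the most delicate part of the argument.
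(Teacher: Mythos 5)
Your high-level strategy—change of measure from $(P_{\hatX}^*)^n$ to the test channel $(P_{\hatX|X}^*)^n$, reduction to a tilted expectation of the conditional distortion, typical-set restriction, and a local-limit/Esseen estimate—is the right route and matches the argument in the cited reference (the paper itself does not prove the lemma, pointing instead to \cite[Lemma 2]{kostina2012fixed} and the earlier \cite{yang1999redundancy}). The change-of-measure identity is correct, and you correctly identify that Berry--Esseen alone cannot extract the needed $\Theta(n^{-C})$ mass from a shrinking neighborhood of $nD$, so a local limit theorem (lattice case) or a characteristic-function/Esseen-smoothing argument (non-lattice case) is unavoidable.

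However, the typical set is miscalibrated, and this is a genuine gap. Your condition (a) requires $|\tilD(x^n)-D|\le c_1/\sqrt{n}$ with $c_1$ a constant. Since $\tilD(X^n)-D$ is a centered i.i.d.\ average with standard deviation $\Theta(1/\sqrt{n})$, the probability that condition (a) fails converges (by the CLT) to the \emph{positive constant} $2\rmQ(c_1/\sigma_{\tilD})$, not to $O(1/\sqrt{n})$. So the event $\calT_n$ you define carries only a constant fraction of the probability, and the asserted bound $\Pr\{X^n\notin\calT_n\}\le K/\sqrt{n}$ does not follow from Berry--Esseen. The threshold must be widened to $c_1\sqrt{\log n/n}$ (with $c_1$ large enough that the Gaussian tail $\rmQ(c_1\sqrt{\log n}/\sigma_{\tilD})$ is $O(1/\sqrt{n})$). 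This widening propagates through the rest of your argument: on the corrected typical set, $nD$ lies up to $\Theta(\sqrt{\log n})$ standard deviations from $n\tilD(x^n)$, so the Gaussian-approximation calculation picks up an extra factor $\exp(-n(D-\tilD)^2/(2\tilV))=n^{-O(1)}$ and the resulting exponent is $C>\tfrac{1}{2}$ rather than your claimed $C=\tfrac{1}{2}$. That is still compatible with the lemma, which only asserts existence of a constant $C$, but it does change the precise polynomial order and it means the local-limit estimate must be valid not merely for arguments $O(1)$ standard deviations from the mean but uniformly out to $O(\sqrt{\log n})$ standard deviations—a moderate-deviations refinement of the local CLT, which is a further technical burden on top of the uniformity-in-$x^n$ issue you already flag.
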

We remark that an early version of Lemma \ref{lossy:AEP} appeared in the analysis of the redundancy of fixed-to-variable length lossy source coding~\cite{yang1999redundancy}.

For ease of notation, given any $n\in\bbN$, let
\begin{align}
G_n:=\log M-\sum_{j\in[n]}\jmath(X_i|D,P_X)-C\log n-c.
\end{align}

Invoking Theorem \ref{ach:fbl} with $P_{\hatX}^*$ and using the inequality $(1-x)^M\leq \exp(-Mx)$, we conclude that there exists an $(n,M)$-code such that
\begin{align}
\rmP_{\rme,n}(D)
&\leq \bbE\Big[\exp\big(-M(P_{\hatX}^*)^n(\calB_D(X^n))\big)\Big]\\
&\leq \bbE\Big[\exp\big(-\exp(G_n)\big)\Big]+\frac{K}{\sqrt{n}}\label{uselossyaep},
\end{align}
where \eqref{uselossyaep} follows from Lemma \ref{lossy:AEP}. The first term in \eqref{uselossyaep} can be further upper bounded by
\begin{align}
\nn&\bbE\Big[\exp\big(-\exp(G_n)\big)\Big]\\*
\nn&=\bbE\Big[\exp\big(-\exp(G_n)\big)\bbo\Big(G_n<\log\frac{\log n}{2}\Big)\Big]\\*
&\qquad+\bbE\Big[\exp\big(-\exp(G_n)\big)\bbo\Big(G_n\geq \log\frac{\log n}{2}\Big)\Big]\\
&\leq \Pr\Big\{G_n<\log\frac{\log n}{2}\Big\}+\frac{1}{\sqrt{n}}\label{achproof:step1}.
\end{align}
It remains to bound the first term in \eqref{achproof:step1}. Note that for each $i\in[n]$, $\jmath(X_i|D,P_X)$ has the same mean $R(P_X,D)$ and variance $\rmV(P_X,D)$ since the source $X^n$ is memoryless and generated i.i.d. from $P_X$. Let $T(P_X,D)$ be the third absolute moment of $\jmath(X_1|D,P_X)$, i.e.,
\begin{align}
T(P_X,D)&:=\bbE_{P_X}\Big[\big|\jmath(X_1|D,P_X)-R(P_X,D)\big|^3\Big].
\end{align}
Furthermore, given any $\varepsilon\in(0,1)$, let
\begin{align}
B_n&:=\frac{6T(P_X,D)}{(\sqrt{\rmV(P_X,D)})^3},\\
\varepsilon_n&:=\varepsilon-\frac{B_n+K+1}{\sqrt{n}}.
\end{align}
Choose $M$ such that
\begin{align}
\log M
\nn&=nR(P_X,D)+\sqrt{n\rmV(P_X,D)}\rmQ^{-1}(\varepsilon_n)\\*
&\qquad+C\log n+\log\frac{\log n}{2}+c.
\end{align}
Applying the Berry-Esseen theorem to bound the first term in \eqref{achproof:step1} and combining \eqref{uselossyaep} and \eqref{achproof:step1}, it follows that
\begin{align}
\rmP_{\rme,n}(D)\leq \varepsilon.
\end{align}
Therefore, using the Taylor expansion of $\rmQ^{-1}(\cdot)$, we have
\begin{align}
\log M^*(n,D,\varepsilon)
\nn&\leq nR(P_X,D)+\sqrt{n\rmV(P_X,D)}\rmQ^{-1}(\varepsilon_n)\\*
&\qquad+C\log n+\log\frac{\log n}{2}+c\\
&=nR(P_X,D)+\sqrt{n\rmV(P_X,D)}\rmQ^{-1}(\varepsilon)+O(\log n).
\end{align}

\subsection{Converse Proof}
Invoking Theorem \ref{converse:fbl} with $\gamma=\log n$, we conclude that the excess-distortion probability of any $(n,M)$-code satisfies
\begin{align}
\rmP_{\rme,n}(D)
&\geq \Pr_{P_X^n}\Big\{\sum_{i\in[n]}\jmath(X_i|D,P_X)\geq \log M+\log n\Big\}-\frac{1}{n}\label{converse:step2}.
\end{align}
Given any $\varepsilon\in(0,1)$, let
\begin{align}
\varepsilon_n'&:=\varepsilon+\frac{B_n}{\sqrt{n}}+\frac{1}{n},\\
\log M&:=nR(P_X,D)+\sqrt{n\rmV(P_X,D)}\rmQ^{-1}(\varepsilon_n')-\log n.
\end{align}
Applying the Berry-Esseen theorem to the first term in \eqref{converse:step2} leads to
\begin{align}
\rmP_{\rme,n}(D)\geq \varepsilon.
\end{align}
Therefore, it follows from Taylor expansion of $\rmQ^{-1}(\cdot)$ that
\begin{align}
\log M^*(n,D,\varepsilon)
&\geq nR(P_X,D)+\sqrt{n\rmV(P_X,D)}\rmQ^{-1}(\varepsilon_n')-\log n\\
&=nR(P_X,D)+\sqrt{n\rmV(P_X,D)}\rmQ^{-1}(\varepsilon)+O(\log n).
\end{align}

\section{Alternative Proof Using the Method of Types}
We next present an alternative proof of Theorem \ref{second4rd} using the method of types. The achievability part follows from the derivation of Ingber and Kochman~\cite[Theorem 1]{ingber2011dispersion} that uses the type covering lemma for the rate-distortion problem~\cite[Lemma 3]{zhang1997tit}. In the converse part, instead of presenting the converse proof of \cite[Theorem 1]{ingber2011dispersion}, we present an alternative proof inspired by \cite{watanabe2015second} that uses the perturbation approach~\cite{wei2009strong} to prove a type-based strong converse and then lower bound the excess-distortion probability as desired. In our subsequent analyses of multiterminal lossy source coding problems, we mainly use the presented proof based on the method of types in this subsection to derive second-order asymptotics. Thus, the proof of second-order asymptotics for the rate-distortion problem in this subsection provides a solid foundation for further generalizations to more complicated multiterminal cases.

\subsection{Achievability}
Define the following constant
\begin{align}
c_1=4|\calX||\hatcalX|+9.
\end{align}
The following type covering lemma is crucial.
\begin{lemma}
\label{type:covering}
Given any type $Q_X\in\calP_n(\calX)$, for all $R_1\geq R(Q_X,D_1)$, there exists a codebook $\calC=\{\hatx^n(1),\ldots,\hatx^n(M)\}\in(\hatcalX^n)^M$ with $M$ codewords such that
\begin{align}
\log M\leq nR_1+c_1\log n,
\end{align}
and the type class $\calT_{Q_X}^n$ is $D$-covered by the codebook $\calC$, i.e.,
\begin{align}
\calT_{Q_X}^n\subseteq\bigcup_{\hatx^n\in\calC}\{x^n:d(x^n,\hatx^n)\leq D\}.
\end{align}
\end{lemma}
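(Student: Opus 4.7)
The plan is a random-coding/counting argument with codewords drawn uniformly from a reproduction type class, followed by an expectation-based expurgation. This is the standard route for Zhang-type covering lemmas; the nonstandard work is in carefully tracking polynomial factors to match the stated constant $c_1$.

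First, I would select a near-optimal conditional type. Let $V^*_{\hat X|X}$ denote the test channel attaining $R(Q_X,D)=I(Q_X,V^*_{\hat X|X})$ with $\bbE_{Q_X\times V^*_{\hat X|X}}[d(X,\hat X)]\leq D$. I would approximate $V^*_{\hat X|X}$ by a conditional type $V_{\hat X|X}\in\calV^n(\hatcalX;Q_X)$ that still satisfies $\bbE_{Q_X\times V_{\hat X|X}}[d(X,\hat X)]\leq D$ and for which $I(Q_X,V_{\hat X|X})\leq R(Q_X,D)+O\!\left(\frac{\log n}{n}\right)$. The construction is a coordinate-wise quantization of $V^*_{\hat X|X}$ to the $\tfrac{1}{n}$-grid, followed by a small corrective perturbation toward a minimum-distortion reproduction symbol to restore the distortion constraint exactly; the quantitative bound on the mutual-information loss uses the smoothness of $R(Q_X,\cdot)$ guaranteed by the standing assumptions of Section~\ref{sec:dtilted4rd}.

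Second, I would perform random coding uniformly over a single type class. Let $Q_{\hat X}$ be the $\hat X$-marginal of $Q_X\times V_{\hat X|X}$ and draw $M$ codewords $\hat X^n(1),\ldots,\hat X^n(M)$ i.i.d.\ uniformly from $\calT_{Q_{\hat X}}^n$. For any fixed $x^n\in\calT_{Q_X}^n$, every $\hat x^n\in\calT_{V_{\hat X|X}}(x^n)$ lies in $\calT_{Q_{\hat X}}^n$ and has joint type $Q_X\times V_{\hat X|X}$ with $x^n$, so $d(x^n,\hat x^n)=\bbE_{Q_X\times V_{\hat X|X}}[d(X,\hat X)]\leq D$. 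Using the type-class size bounds from the preliminaries, the probability that a single codeword $D$-covers $x^n$ is at least
\[
p:=\frac{|\calT_{V_{\hat X|X}}(x^n)|}{|\calT_{Q_{\hat X}}^n|}\geq (n+1)^{-|\calX||\hatcalX|}\exp(-n I(Q_X,V_{\hat X|X})).
\]
Consequently, the expected number of $x^n\in\calT_{Q_X}^n$ not $D$-covered by the random codebook is at most $|\calT_{Q_X}^n|(1-p)^M\leq\exp(nH(Q_X)-Mp)$. Choosing the smallest integer $M$ with $Mp\geq nH(Q_X)+1$ forces this expectation strictly below one, so a deterministic codebook of size $M$ that $D$-covers all of $\calT_{Q_X}^n$ exists. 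This choice satisfies
\[
\log M\leq nI(Q_X,V_{\hat X|X})+|\calX||\hatcalX|\log(n+1)+\log(nH(Q_X)+1)+1,
\]
and combining with Step~1, $R_1\geq R(Q_X,D)$, and $H(Q_X)\leq\log|\calX|$ yields $\log M\leq nR_1+c_1\log n$ for the stated $c_1=4|\calX||\hatcalX|+9$.

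The main obstacle is Step~1: producing a conditional \emph{type} that exactly satisfies the distortion constraint while losing only $O\!\left(\frac{\log n}{n}\right)$ in mutual information. Coordinate-wise rounding of $V^*_{\hat X|X}$ generically violates either stochasticity or the distortion bound, so one must perturb by an $O(1/n)$ mass transfer toward a symbol $\hat x^{\min}$ minimizing $\bbE_{Q_X}[d(X,\hat x^{\min})]$ and then invoke twice-differentiability of $R$ (assumed in Section~\ref{sec:dtilted4rd}) to control the resulting change in $I(Q_X,\cdot)$. Tracking the several polynomial factors that appear across Steps~1--3 (the count and size of conditional types, the type-class size bounds, and the perturbation penalty) is precisely what produces the explicit constant $c_1=4|\calX||\hatcalX|+9$.
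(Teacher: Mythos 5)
The paper cites this lemma from Zhang, Yang and Wei~\cite{zhang1997tit} and does not reprove it, so there is no in-paper argument to compare against. Your Steps~2--3 (random coding uniformly over $\calT_{Q_{\hat X}}^n$, computing the covering probability from type-class size ratios, and expurgating via the expected count of uncovered sequences) are the standard and correct route, and the calculations there are sound.

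The gap is in Step~1. You invoke ``the smoothness of $R(Q_X,\cdot)$ guaranteed by the standing assumptions of Section~\ref{sec:dtilted4rd},'' but those assumptions only guarantee twice-differentiability of $(Q_X,D')\mapsto R(Q_X,D')$ \emph{in a neighborhood of the fixed $(P_X,D)$}. The covering lemma must hold for \emph{every} type $Q_X\in\calP_n(\calX)$: the coding scheme underlying \eqref{typeach:step1} commits to a covering codebook on any type class that passes the rate test, not only those near $P_X$, so a local assumption at $(P_X,D)$ cannot give a constant $c_1$ that is uniform in $Q_X$. Moreover, what your rounding step actually perturbs is the test channel $V$, not the distortion level $D$, so the quantity that must be continuous is $V\mapsto I(Q_X,V)$; that map has unbounded derivatives wherever $V$ touches the boundary of the simplex, and twice-differentiability of $R$ in $D$ is not the relevant estimate. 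The standard fix avoids any smoothness of $R$: (i) mix $V^*$ with a deterministic channel $V_0$ that, for each $x$, puts all mass on a reproduction symbol minimizing $d(x,\cdot)$, setting $V'=(1-\epsilon)V^*+\epsilon V_0$; convexity of $I(Q_X,\cdot)$ in the channel bounds the rate increase by $\epsilon\log|\hatcalX|$, while the expected distortion drops by $\epsilon\big(D-\bbE_{Q_X}[\min_{\hatx}d(X,\hatx)]\big)$, buying slack; (ii) round the joint $Q_X\times V'$ to a joint $n$-type with $\calX$-marginal $Q_X$, whose $O(1/n)$ per-entry perturbation raises distortion by $O(\bard|\calX||\hatcalX|/n)$ (absorbed by the slack once $\epsilon=\Theta(1/n)$) and changes the mutual information by $O((\log n)/n)$ through the uniform-continuity modulus of entropy (if $\|P-P'\|_1\leq\delta\leq 1/2$ on an alphabet of size $A$ then $|H(P)-H(P')|\leq \delta\log(A/\delta)$), not through smoothness of $R$. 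Only this route produces a $c_1$ that holds uniformly over all types, which is what the lemma asserts and the application requires.
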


Using Lemma \ref{type:covering}, we can derive an upper bound of the excess-distortion probability of the $(n,M)$-code that uses the codebook $\calC$. Consider the following coding scheme. Given a source sequence $x^n$, the encoder $f$ first calculates the type $\calT_{x^n}$ and sends it to the decoder $\phi$ using at most $|\calX|\log (n+1)$ nats. This is because the number of $n$-length types is upper bounded by $(n+1)^{|\calX|}$ (cf. \eqref{number:types}). The encoder $f$ then calculates $R(\hatT_{x^n},D)$ and checks whether $nR(\hatT_{x^n},D)+(c_1+|\calX|)\log n>\log M$ or not. If the inequality holds, the system declares an error. Otherwise, encoder $f$ sends the index of the codeword from the codebook $\calC$ that minimizes the distortion measure, i.e., $f(x^n)=\argmin_{i\in[M]}d(x^n,\hatx^n(i))$ and the decoder could successfully recover the source sequence as $\hatx^n(i)$, which is then within distortion level $D$ with the source sequence $x^n$ as a result of Theorem \ref{type:covering}. Thus, we have constructed an $(n,M)$-code such that the excess-distortion probability $\rmP_{\rme,n}(D)$ satisfies
\begin{align}
\rmP_{\rme,n}(D)
&\leq \Pr\big\{nR(\hatT_{X^n},D)+(c_1+|\calX|)\log n>\log M\big\}\label{typeach:step1}.
\end{align}

For subsequent analysis, define the typical set
\begin{align}
\calA_n(P_X)
&:=\Bigg\{Q_X\in\calP_n(\calX):~\|Q_X-P_X\|_{\infty}\leq \sqrt{\frac{\log n}{n}}\Bigg\}\label{def:typicalset}.
\end{align}
It follows from \cite[Lemma 22]{tan2014state} that 
\begin{align}
\Pr\{\hatT_{X^n}\notin\calA_n(P_X)\}\leq \frac{2|\calX|}{n^2}\label{upp:atypical}.
\end{align}
Since $(Q_X,D')\rightarrow R(Q_X,D')$ is twice differentiable in the neighborhood of $(P_X,D)$ and the derivatives are bounded, using Claim (iv) of Lemma \ref{prop:dtilteddensity}, for any $x^n$ such that $\hatT_{x^n}\in\calA_n(P_X)$, it follows from Taylor's expansion $R(\hatT_{x^n},D)$ around $\hatT_{x^n}=P_X$ (cf. \cite[Eq. (30)]{ingber2011dispersion}) that
\begin{align}
R(\hatT_{x^n},D)
&=R(P_X,D)+\sum_{x\in\calX}(\hatT_{x^n(x)}-P_X(x))\jmath(x|D,P_X)+c_2\frac{\log n}{n}\\
&=\frac{1}{n}\sum_{i\in[n]}\jmath(x_i|D,P_X)+c_2\frac{\log n}{n}\label{useprops},
\end{align}
where $c_2$ is a bounded constant that accounts for the second derivative of $R(Q_X,D)$ with respect to $Q_X$ around $P_X$ and \eqref{useprops} follows from Claim (ii) of Lemma \ref{prop:dtilteddensity} and definition of the type $\calT_{x^n}$. Thus, using \eqref{typeach:step1}, we have
\begin{align}
\nn&\rmP_{\rme,n}(D)\\*
\nn&\leq \Pr\big\{nR(\hatT_{X^n},D)+(c_1+|\calX|)\log n>\log M,~\hatT_{X^n}\in\calA_n(P_X)\big\}\\*
&\qquad+\Pr\big\{\hatT_{X^n}\notin\calA_n(P_X)\big\}\\
&\leq \Pr\Big\{\sum_{i\in[n]}\jmath(X_i|D,P_X)>\log M-(c_1+c_2+|\calX|)\log n\Big\}+\frac{2|\calX|}{n^2}\label{typeach:step2}.
\end{align}
For any $\varepsilon\in(0,1)$, let
\begin{align}
\delta_n&:=-\frac{2|\calX|}{n^2}+\frac{B_n}{\sqrt{n}}\\
\log M\nn&=nR(P_X,D)+\sqrt{n\rmV(P_X,D)}\rmQ^{-1}(\varepsilon-\delta_n)\\*
&\qquad+(c_1+c_2+|\calX|)\log n.
\end{align}
Applying the Berry-Esseen theorem to the first term in \eqref{typeach:step2}, we conclude that $\rmP_{\rme,n}(D)\leq \varepsilon$ and the achievability proof is completed by using the Taylor expansion of $\rmQ^{-1}(\cdot)$ around $\varepsilon$. 

\subsection{Converse}
We first prove a type-based strong converse. For each $n\in\bbN$, let
\begin{align}
\vartheta_n&:=|\calX|\log (n+1)+2\log n,\\
D_n&:=D+\frac{\bard}{n}.
\end{align}

\begin{lemma}
\label{type:sc}
Given any $Q_X\in\calP_n(\calX)$, if the non-excess-distortion probability of an $(n,M)$-code satisfies
\begin{align}
\Pr\big\{d(X^n,\hatX^n)\leq D|X^n\in\calT_{Q_X}^n\big\}\geq \frac{1}{n}\label{sc:condition},
\end{align}
then
\begin{align}
\log M&\geq nR(Q_X,D_n)-\vartheta_n.
\end{align}
\end{lemma}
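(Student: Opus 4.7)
The plan is to prove this type-based strong converse by bounding the number of type-$Q_X$ source sequences that an $(n,M)$-code can successfully cover within distortion $D$, and comparing this count to the lower bound $|\calT_{Q_X}^n|/n$ implied by the hypothesis. Since the conditional distribution of $X^n$ given $X^n \in \calT_{Q_X}^n$ is uniform over the type class, the hypothesis gives a success set $\calS := \{x^n \in \calT_{Q_X}^n : d(x^n, \phi(f(x^n))) \leq D\}$ of size $|\calS| \geq |\calT_{Q_X}^n|/n$.

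Next, I would apply a pigeonhole argument over the $M$ codewords: for each index $j \in [M]$, the subset of $\calS$ that the encoder maps to $j$ lies inside the type-restricted distortion ball $\calT_{Q_X}^n \cap \calB_D(\phi(j))$, whence $\sum_{j \in [M]} |\calT_{Q_X}^n \cap \calB_D(\phi(j))| \geq |\calS|$. To bound the size of a single such distortion ball, I would use the method of types: for any fixed reproduction sequence $\hatx^n$ with type $W = \hatT_{\hatx^n}$ and any $x^n \in \calT_{Q_X}^n \cap \calB_D(\hatx^n)$, the conditional type $V = \hatT_{x^n | \hatx^n}$ satisfies (i) the joint $W \times V$ has $X$-marginal $Q_X$ and (ii) the expected distortion under $W \times V$ is at most $D \leq D_n$. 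Consequently the rate-distortion inequality forces $I(X;\hat X) \geq R(Q_X, D_n)$, and therefore $H(V \mid W) = H(Q_X) - I(X;\hat X) \leq H(Q_X) - R(Q_X, D_n)$. Combining the conditional type class size bound in~\eqref{size:condtype} with the type class size bound $|\calT_{Q_X}^n| \geq (n+1)^{-|\calX|}\exp(nH(Q_X))$ produces a lower bound of the form $\log M \geq n R(Q_X, D_n) - O(\log n)$.

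The main obstacle is sharpening this argument to produce the exact polynomial factor $\vartheta_n = |\calX|\log(n+1) + 2\log n$ claimed in the statement, rather than the weaker $|\calX|(1+|\hatcalX|)\log(n+1) + \log n$ that results from naively enumerating the $(n+1)^{|\calX||\hatcalX|}$ conditional types. One way to close this gap is to exploit the slack $\bar d / n$ between $D$ and $D_n$: since $R(Q_X, \cdot)$ is differentiable at $D$ with bounded derivative $-\lambda^\ast$, the quantity $n(R(Q_X, D) - R(Q_X, D_n))$ is a bounded constant that can be traded against polynomial overheads arising from type enumeration. An alternative route, which I consider cleaner, bypasses enumeration altogether by computing entropy directly in the conditional model: take $X^n$ uniform on $\calS$ and observe that its average single-letter marginal equals $Q_X$ exactly, because every $x^n \in \calS \subseteq \calT_{Q_X}^n$ has empirical type $Q_X$; then the chain $\log M \geq H(\hatX^n) \geq I(X^n; \hatX^n) \geq n R(Q_X, \bar D) - O(\log n)$ with $\bar D \leq D$, after a permutation-symmetrization step to equalize the individual marginals, recovers the bound with the stated polynomial factor.
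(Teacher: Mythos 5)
Your second route (uniform on the success set $\calS$, then the weak converse chain) is correct and in fact slightly sharper than the paper's bound; it is the argument to present. The paper does something closely related but not identical: instead of restricting to $\calS$, it follows Gu--Effros and constructs a \emph{perturbed} distribution $S_{\calT_{Q_X}^n}$ on the full type class that upweights $\calS$ by a factor of $n^2$, keeping some mass on the failure set. That design forces a relaxation of the distortion to $D_n = D + \bar{d}/n$ (since failures still contribute distortion up to $\bar d$); your restriction to $\calS$ gives expected distortion $\leq D$ directly, and since $R(Q_X,\cdot)$ is nonincreasing you recover $R(Q_X,D) \geq R(Q_X,D_n)$ for free. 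Concretely, after the chain $\log M \geq I(X^n;\hatX^n) \geq \sum_i I(X_i;\hatX_i) - \sum_i H(X_i) + H(X^n)$ and the $J$-trick with the extended reconstruction $(\hatX_J,J)$, the $\sum_i H(X_i)$ terms cancel identically, leaving $\log M \geq nR(Q_X,D) - (nH(Q_X) - \log|\calS|)$. With $|\calS| \geq |\calT_{Q_X}^n|/n \geq (n+1)^{-|\calX|}e^{nH(Q_X)}/n$ this gives $\vartheta_n' = |\calX|\log(n+1)+\log n \leq \vartheta_n$, which is stronger than the statement. Your approach is a legitimate simplification of the paper's; the paper's mixture form is inherited from the Gray-Wyner setting in~\cite{watanabe2015second}, where the perturbation has additional benefits that are not needed for the single-terminal rate-distortion problem.

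Two smaller points. First, the ``permutation-symmetrization step to equalize the individual marginals'' is unnecessary, and if applied naively to the source distribution it would actually break the argument: the set $\calS$ is code-dependent, so averaging $\mathrm{Unif}(\calS)$ over permutations puts mass on sequences whose distortion under the fixed code exceeds $D$. Fortunately you never need the individual marginals $P_{X_i}$ to equal $Q_X$ --- the average marginal $\tfrac{1}{n}\sum_i P_{X_i} = Q_X$ suffices because, as noted above, the $\sum_i H(X_i)$ contributions cancel in the chain. Second, the suggested fix for your first (covering) approach --- trading the slack between $D$ and $D_n$ against the $(n+1)^{|\calX||\hatcalX|}$ type-enumeration factor --- does not close the gap: differentiability of $R(Q_X,\cdot)$ implies $n(R(Q_X,D)-R(Q_X,D_n)) = O(1)$, whereas the enumeration overhead is $\Theta(\log n)$ with the extra multiplicative constant $|\calX||\hatcalX|$. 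That idea should be dropped; only the entropy-counting route delivers the stated polynomial factor.
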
 
The proof of Lemma \ref{type:sc} is inspired by \cite[Lemma 6]{watanabe2015second} and uses the perturbation approach of Gu and Effros~\cite{wei2009strong}. Lemma \ref{type:sc} implies that for any $(n,M)$-code such that $\log M<nR(Q_X,D)+\vartheta_n$, the conditional excess-resolution probability when the type of the source sequence is $Q_X$ is at least $\frac{n-1}{n}$. Such a result is known as the type-based strong converse theorem since it states that if the rate of any code is not large enough, the type-based excess-distortion probability diverges to one asymptotically, which is analogous to the usual strong converse theorem~\cite{wei2009strong}.

\begin{proof}
Given any type $Q_X\in\calP_n(\calX)$, define the set
\begin{align}
\calD_{Q_X}^n:=\{x^n\in\calT_{Q_X}^n:~d(x^n,\hatx^n)\leq D\},
\end{align}
where $\hatx^n=\phi(f(x^n))$ is the reproduced version of source sequence $x^n$ at the coder side. Let $U_{\calT_{Q_X}^n}$ denote the uniform distribution over the type class $\calT_{Q_X}^n$, let $\beta=\frac{\log n}{n}$ and define
\begin{align}
c(Q_X)
:=n^2U_{\calT_{Q_X}^n}(\calD_{Q_X}^n)+(1-U_{\calT_{Q_X}^n}(\calD_{Q_X}^n)).
\end{align}
Define the distribution $S_{T_{Q_X}^n}(x^n)$ on the type class $\calT_{Q_X}^n$ such that
\begin{align}
S_{T_{Q_X}^n}(x^n)=
\left\{
\begin{array}{ll}
\frac{n^2U_{\calT_{Q_X}^n}(x^n)}{c(Q_X)}&\mathrm{if~}x^n\in\calD_{Q_X}^n\\
\frac{U_{\calT_{Q_X}^n}(x^n)}{c(Q_X)}&\mathrm{otherwise}.
\end{array}
\right.
\end{align}
Note that
\begin{align}
\calU_{T_{Q_X}^n}\{\calD_{Q_X}^n\}
&=\Pr\big\{d(X^n,\hatX^n)\leq D|X^n\in\calT_{Q_X}^n\big\}.
\end{align}
It follows from \eqref{sc:condition} that
\begin{align}
S_{T_{Q_X}^n}(\calD_{Q_X}^n)
&=\frac{n^2}{c(Q_X)}\geq 1-\frac{1}{n}.
\end{align}
Thus, using the $(n,M)$-code satisfying \eqref{sc:condition}, the excess-distortion probability under the source distribution $S_{T_{Q_X}^n}$ satisfies
\begin{align}
S_{T_{Q_X}^n}((\calD_{Q_X}^n)^\rmc)\leq \frac{1}{n},
\end{align}
and the expected distortion satisfies
\begin{align}
\nn&\bbE[d(X^n,\hatX^n)]\\*
&=\sum_{x^n\in\calT_{Q_X}^n}S_{T_{Q_X}^n}(x^n)d(x^n,\hatx^n)\\
&=\sum_{x^n\in\calD_{Q_X}^n}S_{T_{Q_X}^n}(x^n)d(x^n,\hatx^n)+\sum_{x^n\not\in\calD_{Q_X}^n}S_{T_{Q_X}^n}(x^n)d(x^n,\hatx^n)\\
&\leq D+\bard S_{T_{Q_X}^n}((\calD_{Q_X}^n)^\rmc)\\
&=D+\frac{\bard}{n}=D_n.
\end{align}

Following the same steps as the weak converse~(cf. \cite{cover2012elements}) and similarly to \cite[Eq. (274)-(283)]{zhou2016second}, it follows that
\begin{align}
\log M
&\ge H(f(X^n))\\
&\ge I(X^n;f(X^n))\\
&\ge I(X^n;\hatX^n)\\ 
&\ge \sum_{i\in[n]}I(X_i;\hatX_i)-\sum_{i\in[n]}H(X_i)+H(X^n).
\end{align}
Let $J$ be the uniform random variable defined over the set $[n]$ independent of all other random variables. Similarly to \cite[Proof of Lemma 6]{watanabe2015second}, we conclude that the distribution of $X_J$ is the type $Q_X$ and there exists a conditional distribution $Q_{\hatX|X}$ such that
\begin{align}
\bbE_{Q_X\times Q_{\hatX|X}}[d(X;\hatX)]
&=\bbE[d(X_J,\hatX_J)]\\
&=\bbE[d(X^n,\hatX^n)]\\
&\leq D_n,
\end{align}
and
\begin{align}
I(Q_X,Q_{X|\hatX})
&=I(X_J;\hatX_J,J)\\
&=\sum_{i\in[n]}I(X_i;\hatX_i).
\end{align}
The proof of Lemma \ref{type:sc} is completed by recalling the definition of $R(Q_X,D_n)$ and noting that
\begin{align}
\sum_{i\in[n]}I(X_i;\hatX_i)&=I(X_J;\hatX_J,J),\\
\bbE[d(X^n,\hatX^n)]&=\bbE[d(X_J,\hatX_J)],\\
\Big|\sum_{i\in[n]}H(X_i)-H(X^n)\Big|&\leq \frac{|\calX|\log(n+1)}{n}+\alpha+\beta\label{closetoind},
\end{align}
where \eqref{closetoind} follows similarly to \cite[Eq. (34)-(36)]{watanabe2015second} using the method of types.
\end{proof}

Invoking Lemma \ref{type:sc}, we obtain the following lower bound on the excess-distortion probability of any $(n,M)$-code.
\begin{theorem}
\label{converse:types}
Given $(n,M)\in\bbN^2$, any $(n,M)$-code satisfies
\begin{align}
\rmP_{\rme,n}(D)&\geq \Pr\{\log M+\vartheta_n<nR(\hatX_{X^n},D_n)\}-\frac{1}{n}.
\end{align}
\end{theorem}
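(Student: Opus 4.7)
The plan is to apply the type-based strong converse (Lemma \ref{type:sc}) type-by-type and then aggregate by conditioning on the empirical type $\hatT_{X^n}$ of the source sequence. First I would invoke the contrapositive of Lemma \ref{type:sc}: for any type $Q_X\in\calP_n(\calX)$, if $\log M+\vartheta_n<nR(Q_X,D_n)$, then necessarily
\begin{align}
\Pr\{d(X^n,\hatX^n)\leq D\mid X^n\in\calT_{Q_X}^n\}<\frac{1}{n},
\end{align}
or equivalently the conditional excess-distortion probability on $\calT_{Q_X}^n$ is at least $1-\frac{1}{n}$. This is the only place the structural content of the converse (perturbation to upper bound the average distortion plus the i.i.d.\ single-letterization) is used.

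Next I would decompose the overall excess-distortion probability over the (polynomially many) types. Define the set of ``bad-rate'' types
\begin{align}
\calQ_n:=\bigl\{Q_X\in\calP_n(\calX):\ \log M+\vartheta_n<nR(Q_X,D_n)\bigr\},
\end{align}
so that the event $\{\log M+\vartheta_n<nR(\hatT_{X^n},D_n)\}$ equals $\{\hatT_{X^n}\in\calQ_n\}$. Conditioning on the type of $X^n$ and discarding contributions from types outside $\calQ_n$ gives
\begin{align}
\rmP_{\rme,n}(D)
&=\sum_{Q_X\in\calP_n(\calX)}\Pr\{X^n\in\calT_{Q_X}^n\}\,\Pr\{d(X^n,\hatX^n)>D\mid X^n\in\calT_{Q_X}^n\}\\
&\geq \sum_{Q_X\in\calQ_n}\Pr\{X^n\in\calT_{Q_X}^n\}\,\Pr\{d(X^n,\hatX^n)>D\mid X^n\in\calT_{Q_X}^n\}\\
&\geq \Bigl(1-\tfrac{1}{n}\Bigr)\sum_{Q_X\in\calQ_n}\Pr\{X^n\in\calT_{Q_X}^n\}
=\Bigl(1-\tfrac{1}{n}\Bigr)\Pr\{\hatT_{X^n}\in\calQ_n\}.
\end{align}

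Finally, since $\bigl(1-\tfrac{1}{n}\bigr)p\geq p-\tfrac{1}{n}$ for every $p\in[0,1]$, the last display yields
\begin{align}
\rmP_{\rme,n}(D)\geq \Pr\{\log M+\vartheta_n<nR(\hatT_{X^n},D_n)\}-\frac{1}{n},
\end{align}
which is the claimed bound. I do not expect any serious obstacle here: all the delicate work (perturbing the conditional distribution on $\calT_{Q_X}^n$ to convert an excess-distortion statement into an average-distortion statement, and then single-letterizing via a time-sharing index) has already been absorbed into Lemma \ref{type:sc}. The only point requiring minor care is the careful use of the contrapositive of Lemma \ref{type:sc} with the correct direction of the strict/non-strict inequalities, and the trivial step $(1-1/n)p\geq p-1/n$ at the end.
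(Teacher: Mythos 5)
Your proposal is correct and follows exactly the intended route: the paper simply states that Theorem \ref{converse:types} is obtained by ``invoking Lemma \ref{type:sc},'' and the standard way to do so is precisely what you wrote — take the contrapositive of the type-based strong converse, decompose $\rmP_{\rme,n}(D)$ over types, bound each conditional excess-distortion probability by $1-\tfrac{1}{n}$ on the bad-rate types, and finish with $(1-\tfrac1n)p\geq p-\tfrac1n$. The only stylistic note is that the theorem statement has a typo ($\hatX_{X^n}$ should read $\hatT_{X^n}$), which you implicitly corrected.
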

Note that Theorem \ref{converse:types} is dual to the achievability result in \eqref{typeach:step1}. Recall that $(Q_X,D')\rightarrow R(Q_X,D')$ is twice differentiable in the neighborhood of $(P_X,D)$ and the derivatives are bounded. Similarly to \eqref{useprops}, applying Taylor expansion of $R(\hatT_{x^n},D_n)$ around $(P_X,D)$ for $x^n\in\calA_n(P_X)$, we conclude that there exists some constant $c_3$ such that
\begin{align}
R(\hatT_{x^n},D_n)
&=\frac{1}{n}\sum_{i\in[n]}\jmath(x_i|D,P_X)+c_3\frac{\log n}{n}\label{taylorconverse}.
\end{align}
The rest of the converse proof is analogous to the achievability part from \eqref{typeach:step2} and is omitted for simplicity.


\chapter{Noisy Source}
\label{chap:noisy}
This chapter focuses on noisy lossy source coding, in which the source to be compressed is indirectly available over a noisy channel instead of a lossless channel as in the rate-distortion problem. Dobrushin and Tsybakov~\cite{Dobrushin1962} initialized the study of this problem by showing that the first-order asymptotic minimal achievable rate is similar to the rate-distortion function in the noiseless setting with the conditional average distortion measure~\cite[Chapter 3]{berger1971rate}. This problem finds applications in compression of data collecting from measurements, such as speech in noisy environments and is also known as quantization of noisy sources~\cite[Section V.G]{gray1998tit}.

The optimal encoder and decoder structure was proposed by Wolf and Ziv~\cite{wolf1970noisy}. The large deviations asymptotics for the problem was studied by Weissman~\cite{weissman2004noisy} who derived the universal achievable decay rate of the excess-distortion probability for lossy compression of a discrete memoryless source that is corrupted by a discrete noise. The results in~\cite{Dobrushin1962} were generalized to several other settings with names of indirect source coding or compression of remote sources~\cite{Witsenhausen1980tit,oohama2014indirect,oohama2012remote}.

However, all above results were established in the asymptotic limit of large blocklength, which violates the low-latency requirement of practical communication systems. To resolve this issue, Kostina and Verd\'u~\cite{kostina2016noisy} generalized the finite blocklength analysis of the rate-distortion problem to noisy lossy source coding by deriving non-asymptotic bounds and the second-order asymptotic approximation. In particular, it was shown that the second-order rate for the noisy lossy source coding problem is not equal to that of the rate-distortion problem under the conditional average distortion measure. This chapter is largely based on~\cite{kostina2016noisy}.

\section{Problem Formulation and Asymptotic Result}
The problem formulation of noisy lossy source coding is identical to the rate-distortion problem except that the input to the encoder is a noisy version of the source sequence.  Consider a memoryless source $X^n$ generated i.i.d. from a distribution $P_X$ define on the alphabet $\calX$. Let $P_{Y|X}\in\calP(\calY|\calX)$ be a noisy channel mapping from the set $\calX$ to another set $\calY$ and let $Y^n$ be the noisy output of passing $X^n$ through the memoryless channel $P_{Y|X}^n$. Furthermore, let $\hatcalX$ be the reproduction alphabet and let $d:\calX\times\hatcalX\to \bbR_+$ be the distortion measure. Given any two sequences $x^n$ and $\hatx^n$, the distortion $d(x^n,\hatx^n)$ is assumed additive such that $d(x^n,\hatx^n)=\frac{1}{n}\sum_{i\in[n]}d(x_i,\hatx_i)$.

A code for the noisy lossy source coding problem is defined as follows.
\begin{definition}
Given any $(n,M)\in\bbN^2$, an $(n,M)$-code consists of 
\begin{itemize}
\item an encoder $f:\calY^n\to\calM=[M]$,
\item a decoder $\phi:\calM\to\hatcalX^n$.
\end{itemize}
\end{definition}
The performance metric that we consider is the excess-distortion probability with respect to a distortion level $D\in\bbR_+$, i.e.,
\begin{align}
\rmP_{\rme,n}(D)
&:=\Pr\{d(X^n,\phi(f(Y^n)))>D\}\label{def:excessp4noisy}.
\end{align}
Note that the probability term in \eqref{def:excessp4noisy} is calculated with respect to the distributions of the source sequence and the noisy channel. Given any blocklength $n\in\bbN$, the distortion level $D$ and tolerable excess-distortion probability $\varepsilon$, let $M^*(n,D,\varepsilon)$ denote the minimum number $M$ such that one can construct an $(n,M)$-code with excess-distortion probability $\rmP_{\rme,n}(D)$ no greater than $\varepsilon$, i.e.,
\begin{align}
M^*(n,D,\varepsilon)
:=\inf\big\{M:~\exists\mathrm{~an~}(n,M)\mathrm{-code~s.t.~}\rmP_{\rme,n}(D)\leq \varepsilon\big\}\label{def:M^*4noisy}.
\end{align}

The studies of noisy lossy source coding concern characterization of $M^*(n,D,\varepsilon)$. The first-order asymptotics was derived by Dobrushin and Tsybakov~\cite{Dobrushin1962}. To present their result, define the following noisy rate-distortion function:
\begin{align}
R(P_{XY},D)
&:=\min_{\substack{P_{\hatX|Y}:\bbE[d(X,\hatX)]\leq D\\X-Y-\hatX}} I(Y;\hatX)\\
&=\min_{P_{\hatX|Y}:\bbE[\bard(Y,\hatX)]\leq D}  I(Y;\hatX),\label{rd4noisy}
\end{align}
where $\bard(y,\hatx):=\bbE_{P_{X|Y}}[d(X,\hatx)|Y=y]$ denotes the conditional average distortion measure and $(P_{XY},P_{X|Y})$ are induced by $P_X$ and $P_{Y|X}$. When $P_{Y|X}$ is the identity mapping, $Y=X$, $P_Y=P_X$ and the noisy rate-distortion function reduces to the rate-distortion function~\eqref{def:rd} for the noiseless setting.

Dobrushin and Tsybakov~\cite{Dobrushin1962} showed that $R(P_{XY},D)$ is the first-order asymptotic coding rate, i.e.,
\begin{align}
\lim_{n\to\infty}\frac{1}{n}\log M^*(n,D,\varepsilon)=R(P_{XY},D).
\end{align}
This result implies that asymptotically compressing a noisy source is equivalent to compressing the original source with a surrogate conditional average distortion measure. One might wonder whether the same conclusion holds in the finite blocklength regime. Kostina and Verd\'u~\cite{kostina2016noisy} answered this question negatively. 

\section{Noisy Distortion-Tilted Information Density}
Similar to the rate-distortion problem, the distortion-tilted information density plays a critical role in the presentation and proof of both non-asymptotic and second-order asymptotic bounds. We will present its definition and properties in this section.

Assume that the noisy rate-distortion function $R(P_{XY},D)$ is finite for some distortion level $D$ and let 
\begin{align}
D_{\mathrm{min}}:=\inf\{D\in\bbR_+:R(P_{XY},D)<\infty\}\label{def:dmin4noisy}.
\end{align}

Furthermore, assume that there exists a test channel $P_{\hatX|Y}^*$ that achieves $R(P_{XY},D)$ such that the constraint is satisfied with equality. With this definition, we define the following derivative of the noisy rate-distortion function:
\begin{align}
\lambda^*:=-\frac{\partial R(P_{XY},D)}{\partial D}.
\end{align}

\begin{definition}
For any $(x,y,\hatx)\in\calX\times\calY\times\hatcalX$, for any $D>D_{\mathrm{min}}$, the noisy distortion-tilted information density for noisy lossy source coding is defined as follows:
\begin{align}
\jmath(x,y,\hatx|D,P_{XY}):=\imath_{Y;\hatX}(y;\hatx)+\lambda^*(d(x,\hatx)-D)\label{dtilt4noisy},
\end{align}
where the mutual information density $\imath_{X;\hatX}(x;\hatx)$ is defined as follows 
\begin{align}
\imath_{Y;\hatX}(y;\hatx)=\log\frac{P_{\hatX|Y}^*(\hatx|y)}{P_{\hatX}^*(\hatx)},
\end{align}
and the marginal distribution $P_{\hatX}^*$ is induced by the optimal test channel $P_{\hatX|Y}^*$, the source distribution $P_X$ and the noisy observation channel $P_{Y|X}$.
\end{definition}

Taking expectation over $P_{X|Y}$ on the right hand side of \eqref{dtilt4noisy}, we obtain the surrogate distortion-tilted information density
\begin{align}
\bar{\jmath}(y,D|P_Y):=\imath_{Y;\hatX}(y;\hatx)+\lambda^*(\bard(y,\hatx)-D).
\end{align}
The noisy rate-distortion function is the expectation of the noisy distortion-tilted information density, i.e.,
\begin{align}
 R(P_{XY},D)=\bbE[\jmath(X,Y,\hatX|D,P_{XY})],
\end{align}
where the expectation is over $(X,Y,\hatX)\sim P_{XY\hatX^*}:=P_X\times P_{Y|X}\times P_{\hatX|Y}^*$. Other properties of the noisy distortion-tilted information density $\jmath(X,Y,\hatX|D,P_{XY})$ follow analogously to that of the distortion-tilted information density $\jmath(X|P_X,D)$ for the rate-distortion problem in \eqref{def:dtilteddensity} and are omitted for simplicity.

\section{Non-Asymptotic Bounds}

\subsection{Achievability}

We first present the non-asymptotic achievability bound in~\cite[Theorem 3]{kostina2016noisy}, which generalizes Theorem \ref{ach:fbl} for the rate-distortion problem.

\begin{theorem}
\label{fblach4noisy}
For any $P_{\hatX}$ defined on $\calX$, there exits an $(M,D)$-code such that the excess-distortion probability satisfies
\begin{align}
\rmP_{\rme,n}(D)\leq \int_0^1 \bbE[(\bbP_{P_{\hatX}^n}\{\pi(Y^n,\hatX^n)>t\})^M]\rmd t,
\end{align}
where the expectation is calculated according to $(Y^n,\hatX^n)\sim P_Y^n\times P_{\hatX}^n$, $P_Y$ is induced by $P_X$ and $P_{Y|X}$ and the function $\pi:\calY^n\times\hatcalX^n\to\bbR_+$ is defined as follows:
\begin{align}
\pi(y^n,\hatx^n):=\Pr_{P_{X|Y}^n}\{d(X^n,\hatx^n)>D|Y^n=y^n\},
\end{align}
and  $P_{X|Y}$ is induced by $P_X$ and $P_{Y|X}$.
\end{theorem}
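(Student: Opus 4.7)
The plan is to mimic the random-coding-with-minimum-distortion-encoding argument of Theorem \ref{ach:fbl}, but adapted to the fact that the encoder only observes $Y^n$, not the clean source $X^n$. Since the encoder cannot compute $d(X^n,\hatx^n)$, it should instead use the surrogate quantity $\pi(Y^n,\hatx^n)=\Pr_{P_{X|Y}^n}\{d(X^n,\hatx^n)>D\mid Y^n\}$, which is observable at the encoder (it depends only on $Y^n$, on the codeword $\hatx^n$, and on the known statistics $P_X$, $P_{Y|X}$). Intuitively $\pi(Y^n,\hatx^n)$ is the minimal conditional excess-distortion probability achievable by choosing $\hatx^n$ as the reproduction, so minimizing $\pi$ is the natural analogue of minimum-distortion encoding in the noisy setting.

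Concretely, I would proceed as follows. First, draw $M$ codewords $\hatX^n(1),\ldots,\hatX^n(M)$ independently from $P_{\hatX}^n$. Define the encoder $f(Y^n):=\arg\min_{i\in[M]}\pi(Y^n,\hatX^n(i))$ (ties broken arbitrarily) and the decoder $\phi(i):=\hatX^n(i)$. Next, compute the average excess-distortion probability over the random codebook:
\begin{align}
\bbE[\rmP_{\rme,n}(D)]
&=\bbE\bigl[\Pr\{d(X^n,\hatX^n(f(Y^n)))>D\mid Y^n,\hatX^n(1),\ldots,\hatX^n(M)\}\bigr]\\
&=\bbE\Bigl[\min_{i\in[M]}\pi(Y^n,\hatX^n(i))\Bigr],
\end{align}
where the first equality uses the Markov structure $X^n - Y^n - \hatX^n(\cdot)$ and the second uses the definition of $\pi$ and of the encoder $f$. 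Then invoke the tail-integral representation $\bbE[Z]=\int_0^1\Pr\{Z>t\}\,\rmd t$ for any $Z\in[0,1]$ (and $\pi\in[0,1]$ by construction) to obtain
\begin{align}
\bbE[\rmP_{\rme,n}(D)]
&=\int_0^1\Pr\Bigl\{\min_{i\in[M]}\pi(Y^n,\hatX^n(i))>t\Bigr\}\rmd t.
\end{align}

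The core step is then a conditioning-and-independence argument: conditional on $Y^n$, the codewords $\hatX^n(1),\ldots,\hatX^n(M)$ remain iid according to $P_{\hatX}^n$ (they are independent of $Y^n$ by construction), so the event $\{\min_i\pi(Y^n,\hatX^n(i))>t\}$ factorizes as a product of $M$ identical terms, yielding
\begin{align}
\Pr\Bigl\{\min_{i\in[M]}\pi(Y^n,\hatX^n(i))>t\,\Big|\,Y^n\Bigr\}
=\bigl(\Pr_{P_{\hatX}^n}\{\pi(Y^n,\hatX^n)>t\}\bigr)^M.
\end{align}
Taking expectation over $Y^n\sim P_Y^n$ and substituting back into the tail integral gives exactly the claimed bound on $\bbE[\rmP_{\rme,n}(D)]$. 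Finally, derandomize by the standard averaging argument: since the bound holds for the random code in expectation, there exists at least one deterministic $(n,M)$-code achieving it.

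The main subtlety, rather than any single technical obstacle, is choosing the right encoder: one must recognize that the correct generalization of minimum-distortion encoding to the noisy setting is minimizing the observable conditional excess-distortion probability $\pi(Y^n,\cdot)$, which is what allows the clean factorization after conditioning on $Y^n$. A minor point to verify is measurability and the interchange of expectation and the tail integral (justified by Fubini since the integrand is nonnegative and bounded), plus the fact that the codewords are independent of $Y^n$, which is immediate because the codebook is generated without reference to the source realization.
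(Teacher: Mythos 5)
Your proposal is correct and follows essentially the same route as the paper: random coding with codewords drawn i.i.d.\ from $P_{\hatX}^n$, the encoder that minimizes the observable surrogate $\pi(Y^n,\cdot)$, the tail-integral identity $\bbE[Z]=\int_0^1\Pr\{Z>t\}\,\rmd t$, factorization of the tail event conditional on $Y^n$ via the independence of the codewords, and the standard derandomization step. The only cosmetic difference is that you work with $\min_{i}\pi(Y^n,\hatX^n(i))$ explicitly rather than with $\pi(Y^n,\hatX^n(i^*))$; the paper's observation that $\pi(y^n,\hatx^n(i^*))>t$ iff $\pi(y^n,\hatx^n(i))>t$ for all $i$ is precisely the factorization you state.
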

Note that Theorem \ref{fblach4noisy} reduces to Theorem \ref{ach:fbl} for the noiseless case since $\pi(y^n,\hatx^n)=1\{d(x^n,\hatx^n)>D\}$ almost surely when $x^n=y^n$ and $P_{Y|X}$ is the identity matrix.

\begin{proof}
The proof of Theorem \ref{fblach4noisy} parallels Theorem \ref{ach:fbl} for the noiseless case and uses the random coding argument. Let $(\hatX^n(1),\ldots,\hatX^n(M))\in(\hatcalX^n)^M$ be a sequence of $M$ codewords, each of which is generated i.i.d. from $P_{\hatX}$. Upon observing the noisy sequence $y^n$, the encoder $f$ chooses index $i^*$ if
\begin{align}
i^*=\argmin_{i\in[M]}\pi(y^n,\hatX^n(i)).
\end{align}
If there are multiple such minimizers, $i^*$ is chosen arbitrarily among them. The decoder $\phi$ simply outputs $\hatX^n(i^*)$ as the estimation of the source sequence $X^n$.

We next derive an upper bound on the excess-distortion probability of the coding scheme using the encoder $f$ and the decoder $\phi$ described above. Note that
\begin{align}
\rmP_{\rme,n}(D)
&:=\Pr\{d(X^n,f(\phi(Y^n)))>D\}\\
&=\Pr\{d(X^n,\hatX^n(i^*))>D\}\\
&=\bbE[\pi(Y^n,\hatX^n(i^*))]\label{lawofexpectation}\\
&=\int_0^1\Pr\{\pi(Y^n,\hatX^n(i^*))>t\}\rmd t\label{usedefExpectation}\\
&=\int_0^1\bbE\big[\Pr\{\pi(Y^n,\hatX^n(i^*))>t|Y^n\}\big]\rmd t\\
&=\int_0^1 \bbE\Big[\prod_{i\in[M]}\Pr\{\pi(Y^n,\hatX^n(i))>t|Y^n\}\Big]\rmd t\label{indecodebook4noisy}\\
&=\int_0^1 \bbE\Big[(\Pr\{\pi(Y^n,\hatX^n)>t|Y^n\})^M\Big]\rmd t\label{samedist4noisy}
\end{align}
where \eqref{lawofexpectation} follows since $\bbE[f(X,Y)]=\bbE[\bbE[f(X,Y)|X]]$ for any two random variables $(X,Y)$, \eqref{usedefExpectation} follows from the definition of the expectation and the fact that $\phi(\cdot)\in[0,1]$, \eqref{indecodebook4noisy} follows since each codeword $\hatX^n(i)$ is generated independently and $\phi(y^n,\hatx^n(i^*))>t$ implies that $\phi(y^n,\hatx^n(i))>t$ for all $i\in[M]$ and \eqref{samedist4noisy} follows since each codeword $\hatX^n(i)$ is generated from the same distribution.

The proof of Theorem \ref{fblach4noisy} is completed by noting that $\bbE[X]\leq a$ implies that there exists $x\leq a$ for any random variable $X$ and real number $a$.
\end{proof}

To derive a tight second-order approximation to the finite blocklength performance, we need the following corollary of Theorem \ref{fblach4noisy}~\cite[Theorem 4]{kostina2016noisy}.
\begin{corollary}
\label{weakfblach4noisy}
For any $P_{\hatX}$, there exists an $(n,M)$-code such that 
\begin{align}
\rmP_{\rme,n}(D)\leq \exp\left(-\frac{M}{\gamma}\right)+\int_0^1 \Pr\{g(Y^n,t|P_{\hatX})\geq \log \gamma\}\rmd t, 
\end{align}
where the function $g:\calY^n\times\bbR_+\to\bbR_+$ is defined as
\begin{align}
g(y^n,t|P_{\hatX}):=\inf_{\substack{\barP_{\hatX^n}:\pi(y^n,\hatx^n)\leq t\\\forall~\hatx^n\in\supp(\barP_{\hatX^n})}} D(\barP_{\hatX^n}\|P_{\hatX}^n)\label{def:gynt4noisy}.
\end{align}
\end{corollary}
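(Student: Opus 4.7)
The plan is to derive the stated bound from Theorem~\ref{fblach4noisy} via two elementary one-shot manipulations: the exponential inequality $(1-q)^M\le \exp(-Mq)$ and a truncation at the threshold $\log\gamma$. Throughout, I would abbreviate $q(y^n,t):=\Pr_{P_{\hatX}^n}\{\pi(y^n,\hatX^n)\le t\}=P_{\hatX}^n(\{\hatx^n:\pi(y^n,\hatx^n)\le t\})$, so that Theorem~\ref{fblach4noisy} reads
\begin{align*}
\rmP_{\rme,n}(D)\le\int_0^1\bbE\bigl[(1-q(Y^n,t))^M\bigr]\,\rmd t.
\end{align*}

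The first step is to rewrite $g(y^n,t|P_{\hatX})$ in closed form as $-\log q(y^n,t)$. This follows from the variational characterization of the KL divergence restricted to an event: for any set $A\subseteq\hatcalX^n$ with $P_{\hatX}^n(A)>0$ and any distribution $\barP_{\hatX^n}$ supported on $A$, the data-processing inequality applied to the binary partition $\{A,A^\rmc\}$ gives $D(\barP_{\hatX^n}\|P_{\hatX}^n)\ge -\log P_{\hatX}^n(A)$, with equality attained by the conditional distribution $P_{\hatX}^n(\cdot\mid A)$. Specializing $A=\{\hatx^n:\pi(y^n,\hatx^n)\le t\}$ matches exactly the feasibility set in the definition of $g$, hence $g(y^n,t|P_{\hatX})=-\log q(y^n,t)$, equivalently $q(y^n,t)=\exp(-g(y^n,t|P_{\hatX}))$. (The edge case $q=0$ is automatic since then no admissible $\barP_{\hatX^n}\ll P_{\hatX}^n$ exists, so $g=+\infty$ by convention.)

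The second step is the truncation at $\log\gamma$. On the event $\{g(Y^n,t|P_{\hatX})<\log\gamma\}$, one has $q(Y^n,t)>1/\gamma$, and therefore $(1-q(Y^n,t))^M\le \exp(-Mq(Y^n,t))<\exp(-M/\gamma)$; on the complementary event I simply bound $(1-q(Y^n,t))^M\le 1$. Adding the two contributions yields the pointwise bound
\begin{align*}
(1-q(Y^n,t))^M\le \exp(-M/\gamma)+\bbo\{g(Y^n,t|P_{\hatX})\ge \log\gamma\}.
\end{align*}
Substituting into Theorem~\ref{fblach4noisy}, taking expectation over $Y^n$, and integrating over $t\in[0,1]$ produces the claimed inequality.

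The only step that is not purely algebraic is the identification $g=-\log q$, which I anticipate to be the main obstacle; however, it is a standard consequence of the data-processing inequality for KL divergence applied to the indicator of $A$. Once this is in place, every remaining manipulation is a direct substitution, so I do not expect further difficulties.
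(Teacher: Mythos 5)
Your proof is correct, and it takes a genuinely different and arguably cleaner route than the paper's. The key move you make---and the paper does not---is to recognize via the variational (data-processing) characterization of KL divergence that $g(y^n,t|P_{\hatX})$ is not just some infimum but is exactly $-\log P_{\hatX}^n(\{\hatx^n:\pi(y^n,\hatx^n)\le t\})$. Once $g=-\log q$ is in hand, the pointwise bound $(1-q)^M\le\exp(-M/\gamma)+\bbo\{g\ge\log\gamma\}$ is a two-line case split, and the corollary drops out by integrating. The paper instead invokes the standard inequality $(1-p)^M\le\exp(-Mp)\le\exp(-M/\gamma)+|1-\gamma p|^+$ (from Polyanskiy's lecture notes), bounds the residual $|1-\gamma q(y^n,t)|^+$ by a change of measure to an arbitrary $\barP_{\hatX^n}$ supported in the admissible set, applies Jensen's inequality to $\exp(-x)$ to pull out $\exp(-D(\barP_{\hatX^n}\|P_{\hatX}^n))$, and then notes the resulting indicator tightens to $\bbo(g\ge\log\gamma)$ after optimizing over $\barP_{\hatX^n}$. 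The net content is the same, since the change-of-measure-plus-Jensen chain is equality when $\barP$ is the conditional distribution on the admissible set, i.e.\ it is a roundabout way of establishing the identity you state directly. Your argument buys clarity and brevity; the paper's formulation deliberately keeps $g$ in variational form, which is the form that becomes single-letterizable later (by restricting the infimum to product distributions $\barP_{\hatX}^n$) in the second-order asymptotic analysis, so the seemingly unnecessary detour is, in context, setting up the downstream computation. Either way, the corollary as stated is fully established by your proposal.
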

\begin{proof}
It follows from \cite[Eq. (26.18) on page 278]{polyanskiy2014lecture} that for any $(p,\gamma,M)\in\bbR_+\times\bbN$,
\begin{align}
(1-p)^M\leq \exp(-Mp)\leq \exp\left(-\frac{M}{\gamma}\right)+|1-\gamma p|^+\label{ineq4noisy}.
\end{align} 
Combining \eqref{ineq4noisy} and Theorem \ref{fblach4noisy}, we conclude that there exists an $(n,M)$-code such that
\begin{align}
\rmP_{\rme,n}(D)
&\leq \int_0^1 \bbE\Big[(1-\Pr\{\pi(Y^n,\hatX^n)\leq t|Y^n\})^M\Big]\rmd t\\
&\leq \exp\left(-\frac{M}{\gamma}\right)+\int_0^1 \bbE\Big[\big|1-\gamma\Pr\{\pi(Y^n,\hatX^n)\leq t|Y^n\}\big|^+\Big]\rmd t\label{useineq4noisy}.
\end{align}
We next bound the second term in \eqref{useineq4noisy}. For any $y^n$ and $t$, given any $\barP_{\hatX}$ such that $\pi(y^n,\hatX^n)\leq t$ a.s, let $\imath(\hatx^n|P_{\hatX},\barP_{\hatX}):=\sum_{i\in[n]}\log\frac{\barP_{\hatX}(\hatx_i)}{P_{\hatX}(\hatx_i)}$. It follows that
\begin{align}
\nn&\big|1-\gamma\Pr_{P_{\hatX}^n}\{\pi(y^n,\hatX^n)\leq t|Y^n\}\big|^+\\
&\leq \Big|1-\gamma\bbE_{\barP_{\hatX}^n}\Big[\exp(-\imath(\hatX^n|P_{\hatX},\barP_{\hatX}))\bbo(\pi(y^n,\hatX^n)\leq t)\Big]\Big|^+\label{cof4noisy}\\
&\leq \Big|1-\gamma\bbE_{\barP_{\hatX}^n}\Big[\exp(-\imath(\hatX^n|P_{\hatX},\barP_{\hatX}))\Big|^+\\
&\leq \Big|1-\gamma\exp(-D(\barP_{\hatX}^n\|P_{\hatX}^n))\Big|^+\label{usejesen4nooisy}\\
&\leq \bbo(D(\barP_{\hatX}^n\|P_{\hatX}^n)\geq \log \gamma)\label{algebra4noisy},
\end{align}
where \eqref{cof4noisy} follows from the change-of-measure technique~\cite{csiszar2011information}, \eqref{usejesen4nooisy} follows from Jensen's inequality and the fact that $\exp(-x)$ is convex in $x$, and \eqref{algebra4noisy} follows from simple algebra.

Note that \eqref{algebra4noisy} holds for any $\barP_{\hatX}$ such that $\pi(y^n,\hatX^n)\leq t$. The tightness upper bound would be for any $y^n$ and $t$ 
\begin{align}
\big|1-\gamma\Pr_{P_{\hatX}^n}\{\pi(y^n,\hatX^n)\leq t|Y^n\}\big|^+\leq \bbo(g(y^n,t|P_{\hatX})\geq \log \gamma)\label{weakachstep2}.
\end{align}
The proof of Corollary \ref{weakfblach4noisy} is completed by combining \eqref{useineq4noisy} and \eqref{weakachstep2}.
\end{proof}

\subsection{Converse}
We next present a non-asymptotic converse bound~
\cite[Theorem 2]{kostina2016noisy}, using which the optimality of the coding scheme in Theorem \ref{fblach4noisy} is proved in the second-order asymptotic sense. To do so, we need the following definition. For any distribution $P_Y$ and any conditional distribution $\barP_{Y|\hatX}$, given any $(x^n,y^n,\hatx^n)$, let 
\begin{align}
\imath(y^n,\hatx^n|P_Y,\barP_{Y|\hatX}):=\log\frac{\barP_{Y|\hatX}^n(y^n|\hatx^n)}{P_Y^n(y^n)},
\end{align}
and define the following function
\begin{align}
\nn&\imath(x^n,y^n,\hatx^n|P_Y,\barP_{Y|\hatX})\\*
&:=\imath(y^n,\hatx^n|P_Y,\barP_{Y|\hatX})+\sup_{\lambda\in\bbR_+}\lambda (d(x^n,\hatx^n)-D)-\log M.
\end{align}

The next theorem generalizes Theorem \ref{converse:fbl} for the noiseless setting.
\begin{theorem}
\label{fblcon4noisy}
Any $(n,M)$-code satisfies that
\begin{align}
\rmP_{\rme,n}(D)
\nn&\geq \inf_{P_{\hatX^n|Y^n}}\sup_{\barP_{Y|\hatX}}\sup_{\gamma\in\bbR_+} \Big\{\Pr\{\imath(X^n,Y^n,\hatX^n|P_Y,\barP_{Y|\hatX})\geq \gamma\}\\*
&\qquad\qquad\qquad\qquad\qquad-\exp(-\gamma)\Big\},
\end{align}
where $(X^n,Y^n,\hatX^n)\sim P_X^n\times P_{Y|X}^n\times P_{\hatX^n|Y^n}$.
\end{theorem}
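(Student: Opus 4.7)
The plan is to mirror the converse proof of Theorem~\ref{converse:fbl} (rate--distortion), but accommodate the indirect observation by bringing in the auxiliary backward channel $\barP_{Y|\hatX}$. Fix any $(n,M)$-code and let $P_{\hatX^n|Y^n}$ denote its induced (possibly stochastic, but here deterministic) decoder kernel, so that $(X^n,Y^n,\hatX^n)\sim P_X^n\times P_{Y|X}^n\times P_{\hatX^n|Y^n}$. The key structural observation is that $\sup_{\lambda\in\bbR_+}\lambda(d(x^n,\hatx^n)-D)$ equals $+\infty$ when $d(x^n,\hatx^n)>D$ and $0$ otherwise. Hence the event
\[
\calB_n:=\{\imath(X^n,Y^n,\hatX^n\mid P_Y,\barP_{Y|\hatX})\geq \gamma\}
\]
decomposes as the disjoint union of $\calB_n\cap\{d>D\}$, whose probability is bounded by $\rmP_{\rme,n}(D)$, and $\calB_n\cap\{d\leq D\}$, on which the condition reduces to $\imath(Y^n,\hatX^n\mid P_Y,\barP_{Y|\hatX})\geq \gamma+\log M$.

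Next I would bound the second piece by $\exp(-\gamma)$ via change of measure. Drop the indicator of $\{d\leq D\}$, sum out $x^n$ (which appears only through $P_X^n(x^n)P_{Y|X}^n(y^n|x^n)$ marginalizing to $P_Y^n(y^n)$), and use that on the remaining event $P_Y^n(y^n)\leq \barP_{Y|\hatX}^n(y^n|\hatx^n)/(M\exp(\gamma))$ by definition of $\imath(y^n,\hatx^n\mid P_Y,\barP_{Y|\hatX})$. This yields
\[
\Pr\{\calB_n\cap\{d\leq D\}\}\leq \frac{\exp(-\gamma)}{M}\sum_{y^n,\hatx^n}\barP_{Y|\hatX}^n(y^n|\hatx^n)\,P_{\hatX^n|Y^n}(\hatx^n|y^n).
\]
Since the decoder's image has cardinality at most $M$, exchanging the order of summation and using $\sum_{y^n}\barP_{Y|\hatX}^n(y^n|\hatx^n)=1$ for every $\hatx^n$ bounds the double sum by $M$, delivering $\exp(-\gamma)$.

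Combining the two pieces gives $\Pr\{\calB_n\}\leq \rmP_{\rme,n}(D)+\exp(-\gamma)$, i.e.\
\[
\rmP_{\rme,n}(D)\geq \Pr\{\imath(X^n,Y^n,\hatX^n\mid P_Y,\barP_{Y|\hatX})\geq \gamma\}-\exp(-\gamma),
\]
for the specific $P_{\hatX^n|Y^n}$ induced by the given code. Since this holds for every auxiliary $\barP_{Y|\hatX}$ and every $\gamma\in\bbR_+$ (neither of which entered the construction of the code), one can take $\sup$ over both. Because the argument used only that the joint law equals $P_X^n\times P_{Y|X}^n\times P_{\hatX^n|Y^n}$ for some stochastic kernel with image bounded by $M$, taking $\inf$ over $P_{\hatX^n|Y^n}$ converts the per-code bound into a universal lower bound and produces exactly the stated formula.

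The main obstacle is the design choice of the auxiliary channel $\barP_{Y|\hatX}$ inside the information density: unlike the noiseless rate--distortion case where the change of measure is governed by a single reproduction distribution $Q_{\hatX^n}$, here the appropriate reference object is a backward channel from reproductions to noisy observations. The resulting flexibility, captured by $\sup_{\barP_{Y|\hatX}}$, is what makes the bound tight enough for matching second-order asymptotics; specifically, selecting $\barP_{Y|\hatX}$ to be the backward channel induced by $P_{\hatX|Y}^*$ and $P_Y$ should recover the noisy $D$-tilted information density $\jmath(x,y,\hatx\mid D,P_{XY})$ of \eqref{dtilt4noisy}, at which point a Berry--Esseen argument parallel to Section~\ref{sec:second4rd} takes over.
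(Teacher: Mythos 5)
Your proof is correct and mirrors the paper's argument step for step: the same decomposition of $\Pr\{\imath(X^n,Y^n,\hatX^n\mid P_Y,\barP_{Y|\hatX})\geq\gamma\}$ along the excess-distortion event, the same reduction on $\{d\leq D\}$ to the condition $\imath(Y^n,\hatX^n\mid P_Y,\barP_{Y|\hatX})\geq\log M+\gamma$, the same change of measure from $P_Y^n$ to $\barP_{Y|\hatX}^n$, and the same use of the cardinality constraint $M$ to obtain the $\exp(-\gamma)$ bound before optimizing over $(\barP_{Y|\hatX},\gamma)$ and then over $P_{\hatX^n|Y^n}$. The paper formalizes the final $M$-bound on $\sum_{y^n,\hatx^n}\barP^n_{Y|\hatX}(y^n\mid\hatx^n)P_{\hatX^n|Y^n}(\hatx^n\mid y^n)$ by factoring the decoder kernel through the index $S\in[M]$ and summing $s$, $\hatx^n$, then $y^n$, whereas you appeal directly to the decoder's image having at most $M$ elements; your "exchanging the order of summation'' phrasing is a touch loose (the precise argument partitions $\calY^n$ by the decoder output and enlarges each cell to the whole of $\calY^n$), but the underlying idea is identical and the proof goes through.
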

\begin{proof}
Let $S$ be a random variable on $\calM$ that denotes the output of the encoder. Let $P_{S|Y^n}$ and $P_{\hatX^n|S}$ be the stochastic mapping of the encoder $f$ and decoder $\phi$ respectively. Furthermore, let $P_{\hatX^n|Y^n}$ be the conditional distribution induced by $P_X$, $P_{Y|X}$, $P_{S|Y^n}$ and $P_{\hatX^n|S}$, i.e.,
\begin{align}
P_{\hatX^n|Y^n}(\hatx^n|y^n)
&=\frac{\sum_{x^n,s}P_X^n(x^n)P_{Y|X}^n(y|x^n)P_{S|Y^n}(s|y^n)P_{\hatX^n|S}(\hatx^n|s)}{P_Y^n(y^n)}\\
&=\frac{\sum_{s}P_Y^n(y^n)P_{S|Y^n}(s|y^n)P_{\hatX^n|S}(\hatx^n|s)}{P_Y^n(y^n)}\label{jointd4noisy}.
\end{align}
For any $\gamma\in\bbR_+$, 
\begin{align}
\nn&\Pr\{\imath(X^n,Y^n,\hatX^n|P_Y,\barP_{Y|\hatX})\geq \gamma\}\\*
\nn&=\Pr\{\imath(X^n,Y^n,\hatX^n|P_Y,\barP_{Y|\hatX})\geq \gamma,~d(X^n,\hatX^n)\leq D\}\\*
&\qquad+\Pr\{\imath(X^n,Y^n,\hatX^n|P_Y,\barP_{Y|\hatX})\geq \gamma,~d(X^n,\hatX^n)> D\}\\
&\leq \Pr\{\imath(X^n,Y^n,\hatX^n|P_Y,\barP_{Y|\hatX})\geq \gamma,~d(X^n,\hatX^n)\leq D\}+\rmP_{\rme,n}(D)\label{converse4noisy:step1}.
\end{align}
We next further upper bound the second term in \eqref{converse4noisy:step1} as follows:
\begin{align}
\nn&\mathrm{second~term~of~\eqref{converse4noisy:step1}}\\*
&=\Pr\{\imath(Y^n,\hatX^n|P_Y,\barP_{Y|\hatX})\geq \log M+\gamma,~d(X^n,\hatX^n)\leq D\}\label{explain14noisy}\\
&\leq \Pr\{\imath(Y^n,\hatX^n|P_Y,\barP_{Y|\hatX})\geq \log M+\gamma\}\\
&\leq \frac{\exp(-\gamma)}{M}\bbE[\exp(\imath(Y^n,\hatX^n|P_Y,\barP_{Y|\hatX}))]\label{useMarkov4noisy}\\
&=\frac{\exp(-\gamma)}{M}\sum_{(y^n,\hatx^n)}\sum_{s\in[M]}P_Y^n(y^n)P_{S|Y^n}(s|y^n)P_{\hatX^n|S}(\hatx^n|s)\frac{\barP_{Y|\hatX}^n(y^n|\hatx^n)}{P_Y^n(y^n)}\label{usejointd4noisy}\\
&\leq \frac{\exp(-\gamma)}{M}\sum_{(y^n,\hatx^n)}\sum_{s\in[M]}P_{\hatX^n|S}(\hatx^n|s)\barP_{Y|\hatX}^n(y^n|\hatx^n)\label{prob<=14noisy}\\
&\leq \exp(-\gamma)\label{converse4noisy:step2},
\end{align}
where \eqref{explain14noisy} follows since when $d(x^n,\hatx^n)\leq D$, 
\begin{align}
\imath(x^n,y^n,\hatx^n|P_Y,\barP_{Y|\hatX})=\imath(y^n,\hatx^n|P_Y,\barP_{Y|\hatX})-\log M,
\end{align}
\eqref{useMarkov4noisy} follows from the Markov inequality (cf. Theorem \ref{markovineq}), \eqref{usejointd4noisy} follows from the distribution in \eqref{jointd4noisy},  and \eqref{prob<=14noisy} follow since $P_{S|Y^n}(s|y^n)\leq 1$.

Combining \eqref{converse4noisy:step1} and \eqref{converse4noisy:step2}, we conclude that 
\begin{align}
\rmP_{\rme,n}(D)
&\geq \Pr\{\imath(X^n,Y^n,\hatX^n|P_Y,\barP_{Y|\hatX})\geq \gamma\}-\exp(\gamma).
\end{align}
The proof is completed by optimizing over parameters $(\barP_{Y|\hatX},\gamma)$ for the tightest bound and by optimizing over $P_{\hatX^n|Y^n}$ to yield a code-independent bound.
\end{proof}

Note that the optimization over $P_{\hatX^n|Y^n}$ is intractable for large $n$. To derive tight second-order asymptotics, the following relaxation was proposed~\cite[Corollary 1]{kostina2016noisy}.
\begin{corollary}
\label{fblcon:coro}
Any $(n,M)$-code satisfies
\begin{align}
\rmP_{\rme,n}(D)
\nn&\geq \sup_{\barP_{Y|\hatX}}\sup_{\gamma\in\bbR_+} \Big\{\inf_{\hatx^n}\bbE\big[\Pr\{\imath(X^n,Y^n,\hatx^n|P_Y,\barP_{Y|\hatX})\geq \gamma|Y^n\}\big]\\*
&\qquad\qquad\qquad\qquad-\exp(-\gamma)\Big\}.
\end{align}
\end{corollary}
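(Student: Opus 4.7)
The plan is to obtain Corollary~\ref{fblcon:coro} as a relaxation of Theorem~\ref{fblcon4noisy}, combining the standard max-min inequality with an explicit evaluation of the resulting inner infimum. First, applying the elementary bound $\inf\sup\ge\sup\inf$ to the outer optimizations in Theorem~\ref{fblcon4noisy}, and factoring $\exp(-\gamma)$ out of the infimum over $P_{\hatX^n|Y^n}$ since it does not depend on this kernel, I obtain
\begin{align}
\rmP_{\rme,n}(D)
\ge \sup_{\barP_{Y|\hatX},\,\gamma\in\bbR_+}\Big\{\inf_{P_{\hatX^n|Y^n}}\Pr\{\imath(X^n,Y^n,\hatX^n|P_Y,\barP_{Y|\hatX})\ge \gamma\} - \exp(-\gamma)\Big\}.\nn
\end{align}
The remaining task is to evaluate the inner infimum in closed form and identify it with the expression appearing in the corollary.

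To this end, I would exploit the Markov chain $X^n\to Y^n\to\hatX^n$ inherent to any encoder-decoder pair: conditionally on $Y^n$, the source $X^n$ and the reconstruction $\hatX^n$ are independent. Conditioning first on $Y^n$ and then on $\hatX^n$, and writing $f(y^n,\hatx^n):=\Pr\{\imath(X^n,y^n,\hatx^n|P_Y,\barP_{Y|\hatX})\ge \gamma\mid Y^n = y^n\}$, which depends only on the fixed joint law $P_X^n\times P_{Y|X}^n$ and not on the code, this yields
\begin{align}
\Pr\{\imath(X^n,Y^n,\hatX^n)\ge \gamma\}
= \bbE_{Y^n}\Big[\sum_{\hatx^n} P_{\hatX^n|Y^n}(\hatx^n|Y^n)\,f(Y^n,\hatx^n)\Big].\nn
\end{align}
For each realization $y^n$ the bracketed sum is a convex combination of $\{f(y^n,\hatx^n)\}_{\hatx^n}$ and so is lower bounded by $\inf_{\hatx^n}f(y^n,\hatx^n)$, with equality achieved by the deterministic kernel concentrating on a pointwise minimizer. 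Hence $\inf_{P_{\hatX^n|Y^n}}\Pr\{\imath\ge \gamma\} = \bbE_{Y^n}[\inf_{\hatx^n}f(Y^n,\hatx^n)]$, and pulling the pointwise minimization inside the expectation (equivalently, letting $\hatx^n$ range over all reconstruction maps $y^n\mapsto\hatx^n(y^n)$) identifies this quantity with $\inf_{\hatx^n}\bbE[\Pr\{\imath(X^n,Y^n,\hatx^n|P_Y,\barP_{Y|\hatX})\ge \gamma\mid Y^n\}]$, which is exactly the expression in Corollary~\ref{fblcon:coro}.

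Substituting this identification back into the relaxed bound completes the argument. The main subtlety is interpretational: one must read the infimum over $\hatx^n$ in the corollary as a pointwise minimization over reconstruction rules $y^n\mapsto\hatx^n(y^n)$, so that it coincides with $\inf_{P_{\hatX^n|Y^n}}\Pr\{\cdot\}$; otherwise, as a bare infimum over constant $\hatx^n$, the bound would be too strong and generally false. Once this identification is in place, all remaining steps are the max-min inequality plus the Markov structure $X^n\to Y^n\to\hatX^n$ inherited from the code, and no probabilistic tool beyond those already developed for Theorem~\ref{fblcon4noisy} is needed.
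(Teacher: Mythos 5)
Your proof is correct and takes exactly the route the paper sketches: max-min relaxation of Theorem~\ref{fblcon4noisy}, the tower rule exploiting the Markov chain $X^n\to Y^n\to\hatX^n$, and pointwise optimization over the induced decoder kernel $P_{\hatX^n|Y^n}$. The subtlety you flag is real and correctly diagnosed: the relaxation delivers $\bbE_{Y^n}\big[\inf_{\hatx^n}\Pr\{\imath(X^n,Y^n,\hatx^n)\ge\gamma\mid Y^n\}\big]$, which is in general \emph{smaller} than $\inf_{\hatx^n}\bbE_{Y^n}\big[\Pr\{\imath(X^n,Y^n,\hatx^n)\ge\gamma\mid Y^n\}\big]$, so the displayed statement, read with $\hatx^n$ a fixed vector and $\inf$ outside $\bbE$, would assert a strictly stronger lower bound than the max-min argument can deliver. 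Your resolution---interpret the infimum as a minimization over reconstruction rules $y^n\mapsto\hatx^n(y^n)$, so that it coincides with $\inf_{P_{\hatX^n|Y^n}}\Pr\{\cdot\}$---produces precisely the $\bbE[\inf\,\cdot\,]$ form, and this is the form consistent with the JSCC analogue~\eqref{conform24jscc} in Theorem~\ref{fblcon4jscc}, where the infimum sits explicitly inside the expectation. As you and the chapter both note, for the symmetric examples invoked afterwards the inner probability does not depend on $\hatx^n$, so the two readings coincide there; elsewhere the interchange matters and yours is the correct one.
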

Corollary \ref{fblcon:coro} follows by lower bounding the minimax bound in Theorem \ref{fblcon4noisy} with a maximin lower bound and applying the law of iterative expectation with simple algebra.

The infimum over $\hatx^n$ could still be challenging. However, as shown in \cite[Remark 2]{kostina2016noisy}, the inner probability term is a constant function of $\hatx^n$ under certain choice of $\barP_{Y|X}$ for a GMS corrupted by an AWGN channel under the quadratic distortion measure and for a uniform discrete source corrupted by a symmetric channel under the Hamming distortion measure. An example for the latter case will be presented to illustrate the result.

\section{Second-Order Asymptotics}

\subsection{Result and Discussions}
In this section, we present a second-order approximation to the finite blocklength performance~\cite[Theorem 5]{kostina2016noisy}. To present the result, several assumptions are needed.
\begin{enumerate}
\item Let the distortion level $D\in(D_{\rm{min}},D_{\max})$, where $D_{\rm{min}}$ was defined in \eqref{def:dmin4noisy} and $D_{\rm{max}}:=\inf_{\hatx\in\hatcalX}\bbE_{P_X}[d(X,\hatx)]$.
\item Given any $Q_Y$, let $Q_{XY}=P_{X|Y}Q_Y$. Suppose that for all $Q_Y$ in the neighborhood of $P_Y$, $R(Q_{XY},D)$ is twice continuously differentiable with respect to $Q_Y$ and $\supp(Q_{\hatX}^*)=\supp(P_{\hatX}^*)$, where $Q_{\hatX^*}$ is induced by $Q_Y$ and the optimal test channel $Q_{\hatX|Y}^*$ for $R(Q_{XY},D)$.
\end{enumerate}
Define the following dispersion function for noisy lossy source coding.
\begin{align}
\tilrmV(P_{XY},D):=\mathrm{Var}[\jmath(X,Y,\hatX|D,P_{XY})].
\end{align}
The second-order asymptotics states as follows.
\begin{theorem}
\label{second4noisy}
For any $\varepsilon\in(0,1)$, 
\begin{align}
\log M^*(n,D,\varepsilon)=nR(P_{XY},D)+\sqrt{n\tilrmV(P_{XY},D)}\rmQ^{-1}(\varepsilon)+O(\log n).
\end{align}
\end{theorem}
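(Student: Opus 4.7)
The plan is to mimic the two-step proof of Theorem~\ref{second4rd} (the noiseless rate-distortion case), replacing the distortion-tilted information density $\jmath(x|D,P_X)$ by its noisy analog $\jmath(x,y,\hat x|D,P_{XY})$, and feeding the non-asymptotic bounds Corollary~\ref{weakfblach4noisy} (achievability) and Corollary~\ref{fblcon:coro} (converse) into the i.i.d.\ Berry-Esseen theorem (Theorem~\ref{berrytheorem}). The key identity that makes this work is
\begin{align}
R(P_{XY},D)=\bbE[\jmath(X,Y,\hat X|D,P_{XY})],\quad \tilrmV(P_{XY},D)=\var[\jmath(X,Y,\hat X|D,P_{XY})],
\end{align}
so that once we express the finite blocklength bounds as tail probabilities of $\sum_{i\in[n]}\jmath(X_i,Y_i,\hat X_i|D,P_{XY})$ plus residuals of order $O(\log n)$, the Berry-Esseen theorem closes the proof.

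For the achievability half, I would invoke Corollary~\ref{weakfblach4noisy} with $P_{\hat X}=P_{\hat X}^*$, the marginal induced by the optimal test channel $P_{\hat X|Y}^*$, and with $\log\gamma$ chosen of the form $\log M-c\log n$ for a suitable constant $c$. The core technical step is a noisy counterpart of the lossy AEP in Lemma~\ref{lossy:AEP}: one must show that with probability at least $1-O(1/\sqrt n)$,
\begin{align}
g(Y^n,t^\star|P_{\hat X}^*)\leq \sum_{i\in[n]}\jmath(X_i,Y_i,\hat X_i^\star|D,P_{XY})+C\log n+O(1),
\end{align}
for an appropriate $t^\star$. Here $g$ is the Kullback-Leibler infimum defined in \eqref{def:gynt4noisy}, and the bound follows by restricting the infimum to the output distribution induced by the optimal Markov chain $X\!-\!Y\!-\!\hat X^\star$ and Taylor-expanding around $P_{\hat X}^*$, using assumption (ii) that $R(Q_{XY},D)$ is twice continuously differentiable in $Q_Y$ near $P_Y$. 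Having reduced the bound to a tail probability on a sum of i.i.d.\ random variables with mean $R(P_{XY},D)$ and variance $\tilrmV(P_{XY},D)$, Berry-Esseen yields
\begin{align}
\log M^*(n,D,\varepsilon)\leq nR(P_{XY},D)+\sqrt{n\tilrmV(P_{XY},D)}\,\rmQ^{-1}(\varepsilon)+O(\log n).
\end{align}

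For the converse, I would apply Corollary~\ref{fblcon:coro} with the choice $\bar P_{Y|\hat X}(y|\hat x)$ equal to the reverse conditional induced by $P_Y\times P_{\hat X|Y}^*$, and $\gamma=\log n$. With this choice, the standard manipulation from the noiseless converse in Section~\ref{sec:second4rd} shows that for every $\hat x^n$,
\begin{align}
\Pr\{\imath(X^n,Y^n,\hat x^n|P_Y,\bar P_{Y|\hat X})\geq \gamma'|Y^n\}=\Pr\Big\{\textstyle\sum_{i\in[n]}\jmath(X_i,Y_i,\hat x_i|D,P_{XY})\geq \gamma'+o(1)\Big|Y^n\Big\}
\end{align}
up to $O(\log n)$ adjustments coming from the supremum over $\lambda$ collapsing onto $\lambda^\star$; crucially, the expectation over $Y^n$ on the outside removes the dependence on the specific $\hat x^n$ (this is precisely the role played by the Markov chain and the optimality of $P_{\hat X|Y}^*$, paralleling Remark~2 of \cite{kostina2016noisy}). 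The inner conditional probability, averaged over $Y^n$, is then a tail of an i.i.d.\ sum with the same mean and variance as before, so Berry-Esseen gives the matching lower bound
\begin{align}
\log M^*(n,D,\varepsilon)\geq nR(P_{XY},D)+\sqrt{n\tilrmV(P_{XY},D)}\,\rmQ^{-1}(\varepsilon)+O(\log n).
\end{align}

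The main obstacle I expect is the \emph{noisy} lossy AEP used in the achievability step: one needs to control the Kullback-Leibler infimum $g(Y^n,t|P_{\hat X}^*)$ uniformly over $Y^n$ with a probabilistic remainder of the correct order, which requires combining the smoothness of $R(Q_{XY},D)$ in $Q_Y$, Taylor expansion of the optimal output distribution, and a concentration argument for the conditional probability $\pi(Y^n,\hat x^n)$. In the converse, the analogous subtlety is verifying that with the chosen $\bar P_{Y|\hat X}$ the infimum over $\hat x^n$ in Corollary~\ref{fblcon:coro} is attained (or at least tightly bounded) by a sequence whose empirical behavior is governed by the optimal test channel; once that is established the Berry-Esseen step is routine. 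All higher-order adjustments from Taylor expansions, from truncation of $\lambda$ in the supremum, and from the $|\calX|\log(n+1)$-type type counting, are absorbed into the $O(\log n)$ remainder, matching the noiseless analysis in Section~\ref{sec:second4rd}.
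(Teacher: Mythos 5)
Your proposal matches the paper's (omitted) proof strategy exactly: reduce both directions to tail probabilities of $\sum_{i\in[n]}\jmath(X_i,Y_i,\hat X_i|D,P_{XY})$ via Corollaries~\ref{weakfblach4noisy} and~\ref{fblcon:coro} and close with Berry--Esseen, with the noisy lossy AEP as the key technical step; this is precisely the route the paper attributes to \cite[Appendices~C--D]{kostina2016noisy}. One caution on the converse: you lean on Remark~2 of \cite{kostina2016noisy} to dismiss the $\inf_{\hat x^n}$, but that remark covers only the symmetric special cases (uniform source over an erasure channel, GMS over AWGN); for a general DMS the argument in the cited appendix must bound the $Y^n$-averaged conditional probability uniformly in $\hat x^n$ by a more delicate change-of-measure estimate rather than by appealing to symmetry, and the $\sup_{\lambda}$ in $\imath(x^n,y^n,\hat x^n|P_Y,\bar P_{Y|\hat X})$ only lower-bounds the $\lambda^*$-tilted sum rather than giving an equality.
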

The proof of Theorem \ref{second4noisy} is omitted since it follows similarly to the rate-distortion problem by applying the Berry-Esseen theorem to the non-asymptotic bounds. Readers could refer to \cite[Appendies C-D]{kostina2016noisy} or \cite[Appendix D]{kostina2013lossy} for details. Note that Theorem \ref{second4noisy} was only proved for discrete memoryless sources using non-asymptotic bounds in Corollaries \ref{weakfblach4noisy} and \ref{fblcon:coro} involving the distortion-tilted information density in both directions. It would be interesting to provide an alternative proof using method of types for a DMS and to generalize the results to a GMS.

A critical remark is that the dispersion function has the following equivalent form 
\begin{align}
\tilrmV(P_{XY},D)
=\mathrm{Var}[\bar{\jmath}(Y,D|P_Y)]+(\lambda^*)^2\mathrm{Var}_{P_{XY\hatX}^*}[d(X,\hatX)| Y,\hatX]\label{altdispersion4noisy},
\end{align}
where the conditional variance for any two variables $(U,V)$ with joint distribution $P_{UV}$ is $\mathrm{Var}_{P_{UV}}[U|V]=\bbE_{P_U}[(U-\bbE_{P_{U|V}}[U|V])^2]$ and the joint distribution $P_{XY\hatX^*}$ is induced by $P_X$, $P_{Y|X}$ and the optimal test channel $P_{\hatX|Y}$ for $R(P_{XY},D)$.
 This implies that unlike the first-order asymptotics, the second-order coding rate for noisy lossy source coding is not equivalent to the noiseless case with the surrogate conditional average distortion measure since the additional second term in \eqref{altdispersion4noisy} is non-zero unless $P_{Y|X}$ is the identity matrix.

\subsection{A Numerical Example}
We next present a numerical example to illustrate Theorem \ref{second4noisy}~\cite[Section VI]{kostina2016noisy}. Let $\calX=\{0,1\}$, $\calY=\{0,1,\rme\}$, $P_X(0)=P_X(1)=0.5$ and let $P_{Y|X}$ be a binary erasure channel with erasure probability $\delta\in\bbR_+$, i.e., for any $(x,y)\in\calX\times\calY$,
\begin{align}
P_{Y|X}(y=x)=(1-\delta)\bbo(y=x)+\delta \bbo(y-\rme).
\end{align}
Let $P_{XY}$ be the induced joint distribution. For any $D\in(0.5\delta,0.5)$, under the Hamming distortion measure, the noisy rate-distortion function is
\begin{align}
R(P_{XY},D)=(1-\delta)\Big(\log 2-H_\rmb\Big(\frac{D-0.5\delta}{1-\delta}\Big)\Big),
\end{align}
where $H_\rmb(\cdot)$ is the binary entropy function and the optimal test channel $P_{\hatX|Y}^*$ satisfies that the marginal distribution $P_{\hatX^*}(0)=P_{\hatX}^*(1)=0.5$ and
\begin{align}
P_{Y|\hatX}^*(y|\hatx)
&=\left\{
\begin{array}{ll}
1-D-0.5\delta&\mathrm{if~}\hatx=y\\
D-0.5\delta&\mathrm{if~} \hatx\neq y,~y\neq \rme\\
\delta&\mathrm{otherwise}.
\end{array}
\right.
\end{align}

The derivative $\lambda^*$ satisfies
\begin{align}
\lambda^*=\log\frac{1-D-0.5\delta}{D-0.5\delta}
\end{align}
and the noisy distortion-tilted information density satisfies
\begin{align}
\jmath(x,y,\hatx|P_{XY},D)
&=-\lambda^*D+\left\{
    \begin{array}{ll}
    \log\frac{2}{1+\exp(-\lambda^*)}&\mathrm{if~}x=y,\\
    \lambda^*&\mathrm{if~} \hatx\neq y,~y\neq \rme,\\
    0&\mathrm{otherwise}.
    \end{array}
\right.
\end{align}
Thus, the noisy dispersion is 
\begin{align}
\tilrmV(P_{XY},D)=\delta(1-\delta)\bigg(\log\cosh\Big(\frac{\lambda^*}{2\log e}\Big)\bigg)^2+\frac{\delta\lambda^*}{4}.
\end{align}

We next calculate the rate-distortion function and the dispersion function under the surrogate conditional average distortion measure. From the definition of $\bard(\cdot)$, we find
\begin{align}
\bard(y,\hatx)=
\left\{
\begin{array}{ll}
0&\mathrm{if~}\hatx=y,\\
1&\mathrm{if~} \hatx\neq y,~y\neq \rme,\\
0.5&\mathrm{otherwise}.
\end{array}
\right.
\end{align}
By taking expectation of $\jmath(x,y,\hatx|P_{XY},D)$ over $P_{X|Y}$, we obtain the surrogate distortion-tilted information density as follows:
\begin{align}
\bar{\jmath}(y|P_Y,D)
&=\left\{
\begin{array}{ll}
0.5\lambda^*D&\mathrm{if~}y=\rme,\\
-\lambda^*D+\log\frac{2}{1+\exp(-\lambda^*)}&\mathrm{otherwise},
\end{array}
\right.
\end{align}
and its dispersion satisfies
\begin{align}
\mathrm{Var}[\bar{\jmath}(Y|P_Y,D)]
&=\delta(1-\delta)\bigg(\log\cosh\Big(\frac{\lambda^*}{2\log e}\Big)\bigg)^2\\
&=\tilrmV(P_{XY},D)-\frac{\delta\lambda^*}{4}\\
&<\tilrmV(P_{XY},D).
\end{align}


\chapter{Noisy Channel}
\label{chap:jscc}

This chapter concerns lossy joint source channel coding, where one aims to transmit an information source over a noisy channel and recover it in a lossy manner. Such a problem generalizes the rate-distortion problem by having a noisy channel to convey the encoded source information instead of a lossless channel. It is also known as vector quantization for noisy channels~\cite[Section V.G]{gray1998tit}. 

Asymptotically, Shannon~\cite{shannon1959coding} proved that it is optimal to use separate source and channel coding (SSCC) to achieve the optimal rate if one targets for vanishing error probability. That is, at the transmitter side, one first compresses the source using a source encoder and subsequently encodes the output of the source encoder using a channel encoder. Analogously, at the receiver side, one first decodes the output of the source encoder using a channel decoder and subsequently produces a source estimate using a source decoder. Optimal performance could be achieved when optimal codes are used for both source and channel coding.

However, one might wonder whether this claim holds in the refined asymptotic analysis or in the finite blocklength regime. Csisz\'ar~\cite{csiszar1980joint} answered this question negatively in the large deviations regime by showing that joint source channel coding (JSCC) achieves a larger error exponent than SSCC. Wang, Ingber and Kochman~\cite{wang2011dispersion} provided further evidence in second-order asymptotics, which provides an approximation to the finite blocklength performance. Kostina and Verd\'u derived non-asymptotic bounds valid for any blocklength and recovered the result in \cite{wang2011dispersion} by applying the Berry-Esseen theorem to their non-asymptotic bounds for moderately large blocklengths. To reveal the impact of a noisy channel on the rate-distortion problem, we present the non-asymptotic bounds in~\cite{kostinajscc} and the second-order asymptotics in~\cite{wang2011dispersion,kostinajscc}.

\section{Problem Formulation and Asymptotic Result}

\subsection{Problem Formulation}
Consider a discrete memoryless source $P_X$ defined on the finite alphabet $\calX$ and a discrete memoryless channel with transition probability matrix $P_{Z|Y}\in\calP(\calZ|\calY)$ where $\calY$ is the input alphabet and $\calZ$ is the output alphabet of the channel. Furthermore, similarly to the rate-distortion problem, let $\hatcalX$ be the reproduction alphabet and let $d:\calX\times\hat\calX\to[0,\infty)$ be a bounded distortion measure. Given any $n\in\bbN$, let the distortion between $x^n$ and $\hatx^n$ be defined as $d(x^n,\hatx^n)=\frac{1}{n}\sum_{i\in[n]}d(x_i,\hatx_i)$. 

In the joint source channel coding problem, one wishes to transmit a source sequence $X^n$ reliably over the DMC $P_{Z|Y}$ reliably over $k$ channel uses in a lossy manner. Formally, a code is defined as follows.
\begin{definition}
\label{code4jscc}
An $(n,k)$-code for the lossy joint source channel coding problem consists of
\begin{itemize}
\item an encoder $f:\calX^n\to\calY^k$
\item a decoder $\phi:\calZ^k\to \hatcalX^n$.
\end{itemize}
\end{definition}
For simplicity, we let $Y^k:=f(X^n)$ be the output of the encoder and $\hatX^n$ be the output of the decoder. From problem formulation, the Markov chain $X^n-Y^k-Z^k-\hatX^n$ holds. Similarly to the rate-distortion problem, the performance criterion is the excess-distortion probability with respect to a target distortion level $D$:
\begin{align}
\rmP_{\rme,n,k}(D):=\Pr\{d(X^n,\hatX^n)>D\},
\end{align}
where the probability is calculated with respect to the joint distribution of $(X^n,Y^k,Z^k,\hatX^n)$ that are induced by the source distribution $P_X$, the noisy channel $P_{Z|Y}$ and the $(n,k)$-code. 

The fundamental limit of lossy JSCC is the maximum number of symbols that can be reliably transmitted over $k$ channel uses with excess-distortion probability no greater than $\varepsilon\in(0,1)$, i.e.,
\begin{align}
n^*(k,D,\varepsilon):=\sup\{n\in\bbN:~\exists~\mathrm{an~}(n,k)\mathrm{-code~s.t.~}\rmP_{\rme,n,k}(D)\leq \varepsilon\}.
\end{align} 
In case there is a cost constraint on the channel input $Y^k$, let $c:\calY^k\to\bbR_+$ be the cost function. We can define an $(n,k,\alpha)$-code similarly to Definition \ref{code4jscc} except that a cost constraint $c(Y^k)\leq \alpha$ should be added. Analogously, we can define the corresponding fundamental limit $n^*(k,D,\alpha,\varepsilon)$. In this chapter, for simplicity, we focus on the case without a cost constraint on the channel input.

When the noisy channel $P_{Z|Y}$ is the identity matrix and when $\calZ=\calY=\calM:=[M]$ for some $M\in\bbN$, the lossy JSCC problem reduces to the rate-distortion problem. When the distortion measure is the Hamming distortion measure and when $D=0$, the lossy JSCC problem reduces to the lossless case where one wishes to transmit $X^n$ in an almost perfect manner.

Note that the problem formulation here is slightly different from \cite{wang2011dispersion,kostinajscc} where they use $S^k$ to denote the source sequence of length $k$ and use $P_{Y|X}^n$ to denote the noisy channel with $n$ channel uses. However, to be consistent with other chapters, we keep $X^n$ as the source sequence of length $n$ and use $P_{Z|Y}^k$ to denote the noisy channel with $k$ channel uses.

\subsection{Asymptotic Result}
We next present Shannon's first-order asymptotic characterization of $n^*(k,D,\varepsilon)$. Recall the definition of the rate-distortion function $R(P_X,D)$ in \eqref{def:rd}. Furthermore, define the following capacity function:
\begin{align}
C(P_{Z|Y}):=\max_{P_Y\in\calP(\calX)}I(P_Y,P_{Y|Z}),
\end{align}
where $P_{Y|Z}$ is induced by $P_Y$ and $P_{Z|Y}$.

With these definitions, Shannon~\cite{shannon1959coding} proved the following result.
\begin{theorem}
\label{shanon4jscc}
For any target distortion $D$ such that $R(P_X,D)<\infty$,
\begin{align}
\lim_{\varepsilon\to 0}\lim_{k\to\infty}\frac{n^*(k,D,\varepsilon)}{k}=\frac{C(P_{Z|Y})}{R(P_X,D)}=:\rho(P_X,P_{Z|Y},D)\label{def:rho4jscc}.
\end{align}
\end{theorem}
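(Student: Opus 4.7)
The plan is to establish Theorem \ref{shanon4jscc} in two separate arguments: achievability via a separate source-channel coding (SSCC) scheme, and converse via data processing combined with the converse parts of Shannon's rate-distortion theorem (Theorem \ref{lossy:Shannon}) and memoryless channel capacity.

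For the achievability direction, I would fix any $\rho < C(P_{Z|Y})/R(P_X,D)$ and choose intermediate rates $R_s > R(P_X,D)$ and $R_c < C(P_{Z|Y})$ such that $\rho R_s < R_c$. For blocklength $k$, set $n = \lfloor \rho k \rfloor$. First, by \eqref{shanon:excessp}, for any $\varepsilon' > 0$ and large enough $n$, there is an $(n,M_n)$ source code with $\log M_n \le n R_s$ whose excess-distortion probability with respect to $D$ is at most $\varepsilon'$. Second, by Shannon's channel coding theorem, since $\frac{\log M_n}{k} \le \rho R_s < R_c < C(P_{Z|Y})$, for large $k$ there is a channel code over the DMC $P_{Z|Y}^k$ that transmits any one of $M_n$ messages with decoding error probability at most $\varepsilon'$. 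Cascading the two codes yields an $(n,k)$-code for which, by the union bound,
\begin{align}
\rmP_{\rme,n,k}(D) \le \varepsilon' + \varepsilon' = 2\varepsilon'.
\end{align}
Thus, for any target $\varepsilon > 0$, picking $\varepsilon' = \varepsilon/2$ and $k$ large, this code shows that $n^*(k,D,\varepsilon)/k \ge \rho$ eventually, so $\liminf_{k\to\infty} n^*(k,D,\varepsilon)/k \ge \rho$ for any $\varepsilon > 0$, hence $\lim_{\varepsilon\to 0}\liminf_{k\to\infty} n^*(k,D,\varepsilon)/k \ge \rho$, and since $\rho$ was arbitrary, $\ge C(P_{Z|Y})/R(P_X,D)$.

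For the converse, I consider any $(n,k)$-code with $\rmP_{\rme,n,k}(D) \le \varepsilon$. Using boundedness of the distortion, $\bard := \max_{x,\hatx} d(x,\hatx) < \infty$, the excess-distortion bound converts to an expected-distortion bound:
\begin{align}
\bbE[d(X^n,\hatX^n)] \le D + \varepsilon \bard.
\end{align}
The data processing inequality along the Markov chain $X^n - Y^k - Z^k - \hatX^n$ gives $I(X^n;\hatX^n) \le I(Y^k;Z^k)$, and since the channel is memoryless, $I(Y^k;Z^k) \le \sum_{i\in[k]} I(Y_i;Z_i) \le k\,C(P_{Z|Y})$. On the source side, since $X^n$ is i.i.d., $I(X^n;\hatX^n) \ge \sum_{i\in[n]} I(X_i;\hatX_i)$. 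The definition of $R(P_X,\cdot)$ gives $I(X_i;\hatX_i) \ge R(P_X, \bbE[d(X_i,\hatX_i)])$, and convexity and monotonicity of $R(P_X,\cdot)$ yield
\begin{align}
\sum_{i\in[n]} I(X_i;\hatX_i) \ge n R(P_X, \bbE[d(X^n,\hatX^n)]) \ge n R(P_X, D+\varepsilon\bard).
\end{align}
Chaining these inequalities gives $n R(P_X, D+\varepsilon\bard) \le k C(P_{Z|Y})$, i.e., $n^*(k,D,\varepsilon)/k \le C(P_{Z|Y})/R(P_X, D+\varepsilon\bard)$. Taking $k\to\infty$ and then $\varepsilon\to 0$, and using continuity of $D'\mapsto R(P_X,D')$ at $D$ (which holds in the interior of the admissible distortion range, guaranteed by $R(P_X,D) < \infty$), the bound becomes $C(P_{Z|Y})/R(P_X,D)$, completing the converse.

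The main obstacle is a subtle one: neither the achievability nor the converse is individually difficult once one has Theorem \ref{lossy:Shannon} and the memoryless channel capacity theorem, but the converse relies crucially on (i) the bounded-distortion assumption, to convert excess-distortion into expected distortion, and (ii) continuity of $R(P_X,\cdot)$ at the target $D$. If either fails the argument must be refined, for instance through a truncation of the distortion or a direct information-spectrum converse; however, under the standing assumption of bounded $d$ with $R(P_X,D) < \infty$, both issues are routine. A secondary technical point is ensuring that the SSCC achievability correctly handles the non-integer ratio $n/k$, which is handled by padding the source index or the channel codebook, contributing only negligible overhead that vanishes in the limit.
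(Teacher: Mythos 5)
Your proof is correct and follows the standard textbook approach. The paper itself does not prove Theorem~\ref{shanon4jscc}; it is stated as a classical result attributed to Shannon, and the surrounding text merely sketches the achievability intuition (``one can concatenate an optimal lossy source coding with an optimal channel code'') which is exactly your SSCC argument. Your converse via data processing through the Markov chain $X^n - Y^k - Z^k - \hatX^n$, single-letterization on both sides, and convexity of the rate-distortion function is the usual weak converse.

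Two small refinements are worth flagging. First, your justification for the continuity of $D' \mapsto R(P_X,D')$ at $D$ is imprecise: $R(P_X,D) < \infty$ only places $D$ at or above $D_{\mathrm{min}}$, not necessarily in the interior. What the argument actually needs is \emph{right}-continuity at $D$ (since you take $\varepsilon \to 0$ in $R(P_X, D + \varepsilon\bard)$), and this holds for any $D \geq D_{\mathrm{min}}$ for a DMS with a bounded distortion measure. Second, when $R(P_X,D) = 0$ (i.e., $D \geq D_{\mathrm{max}}$) the right-hand side of~\eqref{def:rho4jscc} is $+\infty$; your converse then gives no finite bound, which is consistent, but it is cleaner to note this degenerate case explicitly rather than divide by a potentially vanishing quantity. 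Neither point changes the correctness of the overall argument.
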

Theorem \ref{shanon4jscc} implies that asymptotically, the maximum number of source symbols $n^*(k,D,\varepsilon)$ that can be transmitted reliably in a lossy manner, scales in the same order as the number of channel uses $k$ and the optimal ratio is $\rho(P_X,P_{Z|Y},D)$. Intuitively, one can concatenate an optimal lossy source coding with an optimal channel code to achieve this goal. That is, one can first compress $X^n$ into $nR(P_X,D)$ nats reliably in a lossy manner and then transmit the $nR(P_X,D)$ nats reliably over the noisy memoryless channel $P_{Z|Y}$ supposed that $nR(P_X,D)\leq kC(P_{Y|X})$. However, as we shall show, such a separation based coding scheme is suboptimal in the second-order asymptotics.

\section{Non-Asymptotic Bounds}
In this section, we present non-asymptotic achievability and converse bound that hold for any $(n,k)\in\bbN^2$.

\subsection{Achievability}
We first present an achievability bond. Recall the definition of the distortion ball $\calB_D(x^n)$ in \eqref{def:distortionball}. Furthermore, given any input distribution $P_Y\in\calP(\calY)$ and any channel $P_{Z|Y}$, for each $(y,z)\in\calY\times\calZ$, define the following information density
\begin{align}
\imath(y^k,z^k|P_Y,P_{Z|Y}):=\sum_{i\in[k]}\log\frac{P_{Z|Y}(z_i|y_i)}{P_Z(z_i)},
\end{align}
where $P_Z$ is induced by $P_Y$ and $P_{Z|Y}$.

Kostina and Verd\'u prove the following theorem~\cite[Theorem 7]{kostinajscc}.
\begin{theorem}
\label{fblach4jscc}
There exists an $(n,k)$-code such that
\begin{align}
\rmP_{\rme,n,k}(D)
\nn&\leq \inf_{P_{Y^k},P_{\hatX^n},P_{W|X^n}}\bigg\{\bbE\Big[\exp\Big(-\big|\imath(Y^k;Z^k|P_Y,P_{Z|Y})-\log W\big|^+\Big)\Big]\\*
&\qquad+\bbE\big[(1-P_{\hatX^n}(\calB_D(X^n))^W\big]\bigg\},
\end{align}
where the random variable $W$ takes values in $\bbN$ and random variables distribute as $(X^n,Y^k,Z^k,\hatX^n,W)\sim P_X^nP_{Y^k}P_{Z|Y}^kP_{\hatX^n}P_{W|X^n}$.
\end{theorem}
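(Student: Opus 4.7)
The plan is to analyze a random joint source--channel code that combines random lossy source coding (in the spirit of Theorem~\ref{ach:fbl}) with a random-coding-union (RCU) style channel coding bound, averaging over a random effective list size $W$. Fix arbitrary $P_{Y^k}$, $P_{\hat X^n}$ and $P_{W|X^n}$; generate independent codeword pairs $\{(\hat X^n(j),Y^k(j))\}_{j\geq 1}$ with $\hat X^n(j)\sim P_{\hat X^n}$ and $Y^k(j)\sim P_{Y^k}$ (all mutually independent and independent of $X^n$); and treat $W$ as common randomness shared between the encoder and decoder, drawn according to $P_{W|X^n}(\cdot|X^n)$. The encoder searches for an index $j^\star\in[W]$ with $\hat X^n(j^\star)\in\calB_D(X^n)$ and transmits $Y^k(j^\star)$ over the channel; if no such $j^\star$ exists, a source failure is declared and an arbitrary codeword is sent. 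The decoder performs maximum-likelihood decoding of $Z^k$ among the $W$ channel codewords $\{Y^k(j)\}_{j\in[W]}$ to obtain $\hat\jmath$, and outputs $\hat X^n(\hat\jmath)$.

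The excess-distortion event is contained in the union of a source failure event $\calE_{\mathrm{s}}$ (no such $j^\star$ exists) and a channel decoding error event $\calE_{\mathrm{c}}$ (decoder picks $\hat\jmath\neq j^\star$). For the source term, conditioning on $(X^n,W)$, the $W$ reproductions are i.i.d.\ $P_{\hat X^n}$ and each misses $\calB_D(X^n)$ with probability $1-P_{\hat X^n}(\calB_D(X^n))$, so $\Pr(\calE_{\mathrm{s}})\leq \bbE[(1-P_{\hat X^n}(\calB_D(X^n)))^W]$, matching the second term. For the channel term, the standard RCU argument, conditional on $(X^n,W)$ and the transmitted pair $(Y^k(j^\star),Z^k)$, bounds the probability that any one competing codeword has higher likelihood by applying Markov's inequality to the likelihood ratio, yielding $\Pr(\text{competitor more likely}\mid Y^k,Z^k)\leq \exp(-\imath(Y^k;Z^k|P_Y,P_{Z|Y}))$. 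A union bound over the $W-1$ competitors and the elementary inequality $\min\{1,(W-1)\exp(-\imath)\}\leq \exp(-|\imath-\log W|^+)$ then give $\Pr(\calE_{\mathrm{c}})\leq \bbE[\exp(-|\imath(Y^k;Z^k|P_Y,P_{Z|Y})-\log W|^+)]$, matching the first term.

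Adding the two bounds produces the claimed ensemble-average excess-distortion probability, and a standard expurgation argument (selecting a realization of the codebook that achieves at most the average) yields the existence of a deterministic $(n,k)$-code; the infimum then arises from the free choice of $(P_{Y^k},P_{\hat X^n},P_{W|X^n})$. The main subtlety will be the coupling induced by $W$ depending on $X^n$, which prevents a clean separation of the two error events into independent probability calculations; this is handled by conditioning on $(X^n,W)$ before invoking either of the two classical bounds and only then taking the outer expectation over $P_X^nP_{W|X^n}$. The added flexibility of letting $W$ depend on $X^n$ --- rather than fixing a deterministic message count as in pure separate source--channel coding --- is precisely what allows the bound to capture the joint source--channel coding gain at finite blocklengths that motivates this chapter.
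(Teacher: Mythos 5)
Your decomposition into a source-failure event and a channel-decoding error event is the right structure, and the source term is handled correctly. However, there is a genuine gap in the channel term: you treat $W$ as ``common randomness shared between the encoder and decoder, drawn according to $P_{W|X^n}(\cdot\mid X^n)$,'' and then have the decoder run maximum-likelihood decoding over the first $W$ channel codewords. Since $W$ depends on $X^n$, and $X^n$ is observed only at the encoder, $W$ is \emph{not} available at the decoder and cannot be treated as common randomness. The theorem statement admits arbitrary $P_{W|X^n}$, and Corollary \ref{achcoro4jscc} applies the theorem precisely with a nontrivial, $X^n$-dependent choice $W=\lfloor\gamma/\tau(X^n)\rfloor$, so the case you would need cannot be reduced to a constant (decoder-known) $W$. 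As a result, the union bound over ``$W-1$ competitors'' is over a set the decoder cannot identify, and the proposed decoder is not a valid function of $(Z^k,\text{codebook})$; nor does a genie-aided argument rescue it, since the genie decoder has a \emph{smaller} error probability, bounding it from above does not bound the error of the true decoder.

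The paper's proof (following Kostina and Verd\'u) avoids this obstruction by having the channel decoder perform MAP decoding with respect to the prior $P_{U|\hatbX^n}$, where $U$ is the effective message index used by the source encoder. This prior is computable from the source codebook alone (by marginalizing over $(X^n,W)$), so the decoding rule is a legitimate function of $(Z^k,\hatbx^n,y^k(\cdot))$; the MAP prior then effects a form of unequal error protection that, after a more delicate analysis, yields exactly the first term in the bound. To repair your proof you would need to replace the $W$-restricted ML decoder by such a codebook-only MAP (or appropriately weighted) decoder and re-derive the RCU-type bound for it, rather than invoking a competitor count the decoder cannot see.
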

The proof of Theorem \ref{fblach4jscc} follows by analyzing a careful concatenation of a source and channel code. In particular, the second expectation term corresponds to the non-asymptotic achievability bound in Theorem \ref{ach:fbl} for the rate-distortion problem and the first expectation term corresponds to the error probability of a channel code. Note that the two codes are connected via the critical random variable $W$ which depends only on the source sequence $X^n$. Such a design is analogous to the JSCC coding scheme based on the unequal error protection idea by Wang, Ingber and Kochman~\cite{wang2011dispersion}.
\begin{proof}
For simplicity, we only present the code used to prove Theorem \ref{achcoro4jscc} since the analysis of the code is similar to that of the rate-distortion problem in Theorem \ref{ach:fbl} and its noisy version in Theorem \ref{fblach4noisy}. Readers could refer to \cite[Eq. (89)-106]{kostinajscc} for details.

To present the code, let $M\in\bbN$ be arbitrary and let $W\in[M]$ be a random variable that depends only on the source sequence $X^n$. Let $\hatbx^n=(\hatx^n(1),\ldots,\hatx^n(M))$ be a sequence of source codewords and let $y^k(1),\ldots,y^k(M)$ be a sequence of channel codewords. The encoder $f$ is a concatenation of a source encoder $f_\rms:\calX^n\to[M]$ and a channel encoder $f_\rmc:[M]\to\calY^k$. Specifically, given the source sequence $x^n$, the source encoder $f_\rms$ generates a random variable $W$ using a stochastic mapping $P_{W|X^n=x^n}$, outputs the index $m$ if it is the smallest index from $[W]$ such that $(x^n,\hatx^n(i))\leq D$ and outputs $W$ if there is no such index. The channel encoder then outputs $y^k(m)$. Thus, given $x^n$, the encoder $f$ outputs $y^k(f_\rms(x^n))$. 

The decoder is also a concatenation of a source decoder $\phi_\rms:[M]\to\hatcalX^n$ and a channel decoder $\phi_\rmc:\calZ^k\to[M]$. Let $U\in[M+1]$ be a random variable that depends on $X^n,W$ and the source codebook such that $U=f_\rms(X^n)$ if $d(X^n,f_\rms(X^n))\leq D$ and $U=M+1$ otherwise. Given the channel output $z^k$ that is the output of passing $y^k(m)$ over the memoryless channel $P_{Z|Y}^k$, the channel decoder outputs $\hatm$ if
\begin{align}
\hatm:=\argmax_{j\in[M]}P_{U|\hatbX^n}(j|\hatbx^n)P_{Z^k|Y^k}(z^k|y^k(j)).
\end{align}
Subsequently, the source decoder outputs $\hatx^n(\hatm)$ as the source estimate.
\end{proof}

To derive second-order asymptotics, we need the following weakening version of Theorem \ref{fblach4jscc}.
\begin{corollary}
\label{achcoro4jscc}
There exists an $(n,k)$-code such that
\begin{align}
\rmP_{\rme,n}
\nn&\leq \inf_{P_{Y^k},P_{\hatX^n}}\inf_{\gamma>0}\bigg\{
\bbE\Big[\exp\Big(-\Big|\imath(Y^k;Z^k|P_Y,P_{Z|Y})-\log \frac{\gamma}{P_{\hatX^n}(\calB_D(X^n)}\Big|^+\Big]\\*
&\qquad+\exp(1-\gamma)\bigg\}.
\end{align}
\end{corollary}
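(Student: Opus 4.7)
The plan is to specialize the infimum over $P_{W|X^n}$ in Theorem \ref{fblach4jscc} to a deterministic choice of $W$ depending on $X^n$ and the parameter $\gamma$, and then algebraically verify that the resulting bound collapses into the corollary. Writing $p := P_{\hatX^n}(\calB_D(X^n))$ (a function of $X^n$), I would take
\[
W(X^n) := \max\!\bigl\{1,\, \lfloor \gamma / p \rfloor \bigr\},
\]
which tracks the real-valued target $\gamma/p$ as closely as the integrality of $W$ permits. The entire argument then splits on the event $\{p \leq \gamma\}$ versus its complement $\{p > \gamma\}$.

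On $\{p \leq \gamma\}$ (the only case arising when $\gamma \geq 1$, since $p \leq 1$ always), one has $W = \lfloor \gamma/p \rfloor \in [\gamma/p - 1, \gamma/p]$. The identity $\exp(-|t|^+) = \min(e^{-t},1)$ applied with $t = \imath(Y^k;Z^k|P_Y,P_{Z|Y}) - \log W$ rewrites the first term of Theorem \ref{fblach4jscc} as $\min(W e^{-\imath}, 1)$; since $W \leq \gamma/p$ this is pointwise at most $\min((\gamma/p)e^{-\imath},1) = \exp(-|\imath - \log(\gamma/p)|^+)$, which is exactly the corollary's first term. For the second term, the standard bound $(1-p)^W \leq e^{-pW}$ combined with $pW \geq p(\gamma/p - 1) = \gamma - p \geq \gamma - 1$ (using $p \leq 1$) yields $(1-p)^W \leq e^{1-\gamma}$.

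On $\{p > \gamma\}$ (which is empty when $\gamma \geq 1$, so only arises if $\gamma < 1$), the fallback $W = 1$ is forced. The first term of the theorem is bounded trivially by $1$ and the second by $1 - p < 1 - \gamma$, so their sum is at most $2 - \gamma$. The key elementary inequality here is $e^{1-\gamma} \geq 2 - \gamma$ on $[0,1]$, easily checked by noting the difference vanishes at $\gamma = 1$ and has non-positive derivative there. Consequently the entire contribution from this event is absorbed by the $e^{1-\gamma}$ term of the corollary, while the nonnegative first term of the corollary on this event is simply discarded. Summing both cases, taking expectation over $(X^n, Y^k, Z^k)$, and finally optimizing over $\gamma > 0$, $P_{Y^k}$, and $P_{\hatX^n}$ completes the argument.

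The principal obstacle is the mismatch between the integer range of $W$ and the continuous target $\gamma/p$, which produces slack in two distinct places: rounding to $\lfloor \gamma/p \rfloor$ in the main case contributes an extra factor of $e^{p} \leq e$ to the second-term bound, and the hard fallback $W = 1$ in the edge case contributes an extra additive $1 - p$. The elementary inequality $e^{1-\gamma} \geq 2 - \gamma$ on $[0,1]$ is precisely what lets the single slack factor $e^{1}$ absorb both sources of loss simultaneously---this is why the corollary features $e^{1-\gamma}$ rather than the tighter $e^{-\gamma}$ one might naively hope for from $(1-p)^W \leq e^{-pW}$ alone.
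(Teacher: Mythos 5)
Your proposal is correct and takes essentially the same approach as the paper: specialize $W$ to (approximately) $\gamma/P_{\hatX^n}(\calB_D(X^n))$ in Theorem \ref{fblach4jscc} and bound the two resulting terms pointwise. The paper simply sets $W=\lfloor\gamma/\tau(X^n)\rfloor$ and quotes the chain $(1-\tau)^{\lfloor\gamma/\tau\rfloor}\leq(1-\tau)^{\gamma/\tau-1}\leq\exp(1-\gamma)$, leaving the first-term reduction and the $\gamma<\tau$ edge case (where $\lfloor\gamma/\tau\rfloor=0$ is not a valid choice for $W\in\bbN$) implicit; your $\max\{1,\cdot\}$ guard and the observation that $e^{1-\gamma}\geq 2-\gamma$ on $[0,1]$ make those details explicit, but the core mechanism is identical.
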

\begin{proof}
Let $\gamma>0$ be arbitrary, $\tau(x^n):=P_{\hatX^n}(\calB_D(x^n))$ and choose $W=\left\lfloor \frac{\gamma}{\tau(X^n)} \right\rfloor$. It follows that for any $x^n$.
\begin{align}
(1-\tau(x^n))^{\lfloor \frac{\gamma}{\tau(x^n)}\rfloor}
&\leq (1-\tau(x^n))^{\frac{\gamma}{\tau(x^n)}-1}\\
&\leq \exp(1-\gamma)\label{weaken4jscc}.
\end{align}
The proof of Corollary \ref{achcoro4jscc} follows by invoking Theorem \ref{fblach4jscc} and the result \eqref{weaken4jscc}.
\end{proof}

\subsection{Converse}
We next present the non-asymptotic converse bound~\cite[Theorem 3]{kostinajscc}. Recall the definition of the distortion-tilted information density $\jmath(x|D,P_X)$ in \eqref{def:dtilteddensity}. 
\begin{theorem}
\label{fblcon4jscc}
Given any $(\gamma,T)\in\bbR_+\times\bbN$, any $(n,k)$-code satisfies that
\begin{align}
\nn&\rmP_{\rme,n,k}(D)\\*
\nn&\geq \inf_{P_{Y^k|X^n}}\bigg\{-T\exp(-\gamma)\\*
&\qquad+\sup_{\barP_{Z^k|W}}\!\!\!\sup_{\substack{W\in[T]:\\X^n-(Y^k,W)-Z^k}}\!\!\!\!\!\!\Pr\Big\{\sum_{i\in[n]}\jmath(X_i|D,P_X)-\imath(Y^k;Z^k|W)\geq\gamma\Big\}\bigg\}\label{conform14jscc}\\
\nn&\geq -T\exp(-\gamma)\\*
&\qquad+\sup_{\barP_{Z^k|W}}\sup_{\substack{W\in[T]:\\X^n-(Y^k,W)-Z^k}}\bbE\Big[\inf_{y^k}\Pr\Big\{\sum_{i\in[n]}\jmath(X_i|D,P_X)-\imath(y^k;Z^k|W)\geq\gamma|X^n\Big\}\Big]\label{conform24jscc},
\end{align}
where the information density function $\imath(y^k;z^k|w)$ is defined as
\begin{align}
\imath(y^k;z^k|w):=\log\frac{P_{Z^k|Y^kW}(z^k|y^k,w)}{\barP_{Z^k|W}(z^k|t)}\label{infdensity4jscc}
\end{align}
and the distribution $P_{Z^k|Y^kW}$ is induced by the joint distribution of $(X^n,Y^k,W,Z^k)\sim P_X^nP_{W|X^n}P_{Y^k|X^nW}P_{Z^k|Y^kW}$ for some distributions $(P_{W|X^n},P_{Y^k|X^nW},P_{Z^k|Y^kW})$ such that the marginal conditional distribution satisfies $P_{Z^k|Y^k}=P_{Z|Y}^k$.
\end{theorem}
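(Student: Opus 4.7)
My plan is to extend the converse argument of Theorem~\ref{converse:fbl} by coupling the source-side use of Claim~(iii) of Lemma~\ref{prop:dtilteddensity} with a standard channel-coding change-of-measure. The auxiliary random variable $W\in[T]$ acts as a ``message'' linking the source and channel sides, and the reference distribution $\barP_{Z^k|W}$ in \eqref{infdensity4jscc} plays the same role that an arbitrary output distribution plays in the Verd\'u--Han channel-coding converse.

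To begin, I would fix any $(n,k)$-code, any auxiliary conditional distribution $P_{W|X^n}$ with $W\in[T]$, and any reference $\barP_{Z^k|W}$; choosing $W$ to depend only on $X^n$ automatically makes the Markov chain $X^n-(Y^k,W)-Z^k$ hold with $P_{Z^k|Y^k,W}=P_{Z|Y}^k$. Define the event $\calG_\gamma:=\{\sum_i\jmath(X_i|D,P_X)-\imath(Y^k;Z^k|W)\geq\gamma\}$ and split
\begin{align*}
\Pr\{\calG_\gamma\}\leq \rmP_{\rme,n,k}(D)+\Pr\{\calG_\gamma,\,d(X^n,\hatX^n)\leq D\},
\end{align*}
so that it suffices to show the second term is at most $T\exp(-\gamma)$. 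On $\calG_\gamma$ the indicator is bounded by $\exp(-\gamma)\exp(\sum_i\jmath(X_i|D,P_X))\,\barP_{Z^k|W}(Z^k|W)/P_{Z^k|Y^k,W}(Z^k|Y^k,W)$ via Markov's inequality. Substituting this into the expectation, summing out $y^k$ against $P_{Y^k|X^n,W}$, introducing $\barQ_{\hatX^n|W}(\hatx^n|w):=\sum_{z^k}\barP_{Z^k|W}(z^k|w)P_{\hatX^n|Z^k}(\hatx^n|z^k)$, and applying Markov to the distortion ball ($\bbo(\hatx^n\in\calB_D(x^n))\leq \exp(n\lambda^*D-\lambda^*\sum_i d(x_i,\hatx_i))$, valid since $\lambda^*\geq 0$) reduces the problem to bounding
\begin{align*}
e^{-\gamma}\sum_{w\in[T],\hatx^n}\barQ_{\hatX^n|W}(\hatx^n|w)\sum_{x^n}P_X^n(x^n)e^{\sum_i\jmath(x_i|D,P_X)+n\lambda^*D-\lambda^*\sum_i d(x_i,\hat x_i)},
\end{align*}
after using the crude estimate $P_{W|X^n}(w|x^n)\leq 1$. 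Claim~(iii) of Lemma~\ref{prop:dtilteddensity} in product form gives that the inner $x^n$-sum is at most $1$ for every $\hatx^n$, so the remaining sums collapse to $T\exp(-\gamma)$ using $\sum_{\hatx^n}\barQ_{\hatX^n|W}(\hatx^n|w)=1$ and $|W|\leq T$. Taking the supremum over admissible $(W,\barP_{Z^k|W})$ and the infimum over the code-induced $P_{Y^k|X^n}$ then yields~\eqref{conform14jscc}. The second form~\eqref{conform24jscc} follows from the standard maximin relaxation: replacing $\Pr\{\calG_\gamma\}$ by its conditional lower bound $\bbE[\inf_{y^k}\Pr\{\calG_\gamma\mid X^n,W,Y^k=y^k\}]$ removes the dependence on the code-specific $P_{Y^k|X^n}$ and allows the supremum-infimum order to be swapped at the cost of only a lower bound.

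The main obstacle I anticipate is the careful bookkeeping that forces the factor $T$ to appear exactly once. The crude inequality $P_{W|X^n}(w|x^n)\leq 1$ is the step that decouples $w$ from $x^n$ and enables a clean application of Claim~(iii); any sharper estimate there would re-entangle the two variables and destroy the structure. A related subtlety is that the reference distribution $\barP_{Z^k|W}$ is essentially arbitrary because the true channel $P_{Z|Y}^k$ cancels inside the information-density ratio, so no product structure on $\barP_{Z^k|W}$ across blocklength is required; verifying this cancellation in the presence of a stochastic decoder and the auxiliary $W$ is the only nontrivial measure-theoretic check, and is what ultimately lets one choose $\barP_{Z^k|W}$ to be the capacity-achieving output distribution in the asymptotic analysis.
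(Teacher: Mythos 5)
Your proposal is correct and follows essentially the same route as the paper's proof: split $\Pr\{\calG_\gamma\}$ via the non-excess-distortion event, use $\bbo(\cdot)\leq\exp(\cdot)$ together with the change of measure from $P_{Z^k|Y^kW}$ to $\barP_{Z^k|W}$, apply the crude bound $P_{W|X^n}\leq 1$ to decouple $w$ from $x^n$, bound the distortion-ball indicator by $\exp(n\lambda^*D-\lambda^*\sum_i d(x_i,\hatx_i))$, and finally invoke Claim (iii) of Lemma~\ref{prop:dtilteddensity} in product form. The only difference is a slight reordering of the intermediate steps (you sum out $y^k$ and introduce $\barP_{\hatX^n|W}$ before discarding $P_{W|X^n}$, while the paper does the reverse), which does not affect the validity; your treatment of \eqref{conform24jscc} as a maximin relaxation matches the paper's, which likewise defers the algebra to a reference.
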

We remark that Kostina and Verd\'u derived several other non-asymptotic converse bounds in \cite[Section III.B]{kostinajscc} using list decoding and hypothesis testing. However, Theorem \ref{fblcon4jscc} and its weakened versions suffice to derive a second-order asymptotic converse bound. Furthermore, Theorem \ref{fblcon4jscc} reduces to the non-asymptotic converse bound in Theorem \ref{converse:fbl} for the rate-distortion problem when $T=1$, $\calZ=\calY=\calM:=[M]$, $P_Y$ is the uniform distribution over $[M]$ and $P_{Z|Y}$ is the identity matrix.

\begin{proof}
Let $\gamma\in\bbR_+$ and $T\in\bbN$ be arbitrary. Consider any potentially stochastic encoder $P_{Y^k|X^n}$ and decoder $P_{\hatX^n|Z^k}$. Let $W\in[T]$ be an auxiliary random variable such that $X^n-(Y^k,W)-Z^k$ forms a Markov chain so that the joint distribution of $(X^n,Y^k,W,Z^k)$ satisfies $P_{X^nY^kWZ^k}=P_{X}^nP_{W|X^n}P_{Y^k|X^nW}P_{Z^k|Y^kW}$ for some distributions $(P_{W|X^n},P_{Y^k|X^nW},P_{Z^k|Y^kW})$ induced by the encoder $P_{Y^k|S}$ and the noisy channel $P_{Z|Y}^k$. Furthermore, let $\barP_{Z^k|W}$ be a conditional distribution and let $\barP_{\hatX^n|W}$ be induced by $\barP_{Y^k|W}$ and $P_{\hatX^n|Z^k}$, i.e., for each $w\in[T]$ and $\hatx^n\in\hatcalX^n$,
\begin{align}
\barP_{\hatX^n|W}(\hatx^n|w)=\sum_{z^k}P_{\hatX^n|Z^k}(\hatx^n|z^k)\barP_{Z^k|W}(z^k|w)\label{defbarPhatXn}.
\end{align}
It follows that 
\begin{align}
\nn&\Pr\Big\{\sum_{i\in[n]}\jmath(X_i|D,P_X)-\imath(Y^k;Z^k|W)\geq\gamma\Big\}\\*
\nn&=\Pr\Big\{\sum_{i\in[n]}\jmath(X_i|D,P_X)-\imath(Y^k;Z^k|W)\geq\gamma,d(X^n,\hatX^n)>D \Big\}\\*
&\qquad+\Pr\Big\{\sum_{i\in[n]}\jmath(X_i|D,P_X)-\imath(Y^k;Z^k|W)\geq\gamma,~d(X^n,\hatX^n)\leq D \Big\}\\
&\leq \rmP_{\rme,n,k}(D)+\Pr\Big\{\sum_{i\in[n]}\jmath(X_i|D,P_X)-\imath(Y^k;Z^k|W)\geq\gamma,~d(X^n,\hatX^n)\leq D \Big\}\label{constep14jscc}.
\end{align}
The second term in \eqref{constep14jscc} can be further upper bounded as follows:
\begin{align}
\nn&\mathrm{second~term~in}\eqref{constep14jscc}\\*
\nn&\leq \sum_{x^n,y^k,w,z^k}\sum_{\hatx^n\in\calB_D(x^n)}P_{X^nY^kWZ^k}(x^n,y^k,w,z^k)P_{\hatX^n|Z^k}(\hatx^n|z^k)\\*
&\qquad\times\bbo\Big(\sum_{i\in[n]}\jmath(x_i|D,P_X)-\gamma\geq\imath(y^k;z^k|w)\Big)\\
\nn&\leq \exp(-\gamma)\sum_{x^n,y^k,z^k}\sum_{w\in[T]}\sum_{\hatx^n\in\calB_D(x^n)}P_X^n(x^n)\exp\Big(\sum_{i\in[n]}\jmath(x_i|D,P_X)\Big)\\
&\qquad\times P_{W|X^n}(w|x^n)P_{Y^k|WX^n}(y^k|w,x^n)\barP_{Z^k|W}(z^k|w)P_{\hatX^n|Z^k}(\hatx^n|z^k)\label{useidcon4jscc}\\
\nn&\leq\exp(-\gamma)\sum_{x^n,z^k}\sum_{w\in[T]}\sum_{\hatx^n\in\calB_D(x^n)}P_X^n(x^n)\exp\Big(\sum_{i\in[n]}\jmath(x_i|D,P_X)\Big)\\*
&\qquad\qquad\times\barP_{Z^k|W}(z^k|w)P_{\hatX^n|Z^k}(\hatx^n|z^k)\label{ineq4jscc}\\
&=\exp(-\gamma)\sum_{x^n}\sum_{w\in[T]}\sum_{\hatx^n\in\calB_D(x^n)}P_X^n(x^n)\barP_{\hatX^n|W}(\hatx^n|w)\exp\Big(\sum_{i\in[n]}\jmath(x_i|D,P_X)\Big)\label{usebarPhatXn}\\
\nn&\leq \exp(-\gamma)\sum_{x^n}\sum_{w\in[T]}\sum_{\hatx^n\in\calB_D(x^n)}P_X^n(x^n)\barP_{\hatX^n|W}(\hatx^n|w)\\*
&\qquad\times\exp\Big(\sum_{i\in[n]}\jmath(x_i|D,P_X)+n\lambda^*(D-d(x^n,\hatx^n))\Big)\label{uselambda*}\\
\nn&\leq \exp(-\gamma)\sum_{x^n}\sum_{w\in[T]}\sum_{\hatx^n}P_X^n(x^n)\barP_{\hatX^n|W}(\hatx^n|w)\\*
&\qquad\times\exp\Big(\sum_{i\in[n]}\jmath(x_i|D,P_X)+n\lambda^*(D-d(x^n,\hatx^n))\Big)\\
&=\exp(-\gamma)\sum_{\hatx^n,w}\barP_{\hatX^n|W}(\hatx^n|w)\bbE_{P_X^n}\Big[\exp\Big(\sum_{i\in[n]}\big(\jmath(X_i|D,P_X)+\lambda^*(D-d(X_i,\hatx_i))\big)\Big)\Big]\\
&\leq \exp(-\gamma)\sum_{w\in[T]}\sum_{\hatx^n}\barP_{\hatX^n|W}(\hatx^n|w)\label{usejxdprop4jscc}\\
&\leq T\exp(-\gamma)\label{constep24jscc},
\end{align}
where \eqref{useidcon4jscc} follows from the definition of $\imath(y^k;z^k|w)$ in \eqref{infdensity4jscc} which implies an upper bound on $P_{Z^k|Y^kW}(z^k|y^k,w)$, \eqref{ineq4jscc} follows since $P_{W|X^n}(w|x^n)\leq 1$ and $\sum_{y^k}P_{Y^k|WX^n}(y^k|w,x^n)=1$, \eqref{usebarPhatXn} follows from the definition of $\barP_{\hatX^n|W}$ in \eqref{defbarPhatXn}, \eqref{uselambda*} follows since $d(x^n,\hatx^n)\leq D$ for $\hatx^n\in\calB_D(\hatx^n)$ and $\lambda^*$ defined in \eqref{def:lambda^*} is non-negative and \eqref{usejxdprop4jscc} follows from Claim (iii) of Lemma \ref{prop:dtilteddensity} similarly to \eqref{explain4} for the rate-distortion problem.

The proof of \eqref{conform14jscc} is completed by combining \eqref{constep14jscc} and \eqref{constep24jscc} and optimize over different parameters to obtain the tightest bound that does not depend on the code design. The proof of \eqref{conform24jscc} follows from \eqref{conform14jscc} by changing the order of infimum and supremum and applying the algebra in~\cite[Eq. (48)-(50)]{kostinajscc}.
\end{proof}

\section{Second-Order Asymptotics}
Applying the Berry-Esseen theorem to the non-asymptotic bounds in Corollary \ref{achcoro4jscc} and Theorem \ref{fblcon4jscc}, we obtain second-order asymptotic approximation to the finite blocklength performance of optimal codes.

Recall the definition of the dispersion function $\rmV(P_X,D)$ for the rate-distortion problem. Furthermore, for any $P_{Z|Y}$, let $\calP_Y^*$ be the input distribution that achieves $C(P_{Y|X})$ and define the channel dispersion function
\begin{align}
\rmV_\rmc(P_{Z|Y})&:=\mathrm{Var}\bigg[\log\frac{P_{Z|Y}(Z|Y)}{P_{\hatY}^*(Y)}\bigg].
\end{align}
Under the same condition above Theorem \ref{second4rd} for the rate-distortion problem, the following theorem holds~\cite[Theorem 10]{kostinajscc}.
\begin{theorem}
\label{second4jscc}
For any $\varepsilon\in(0,1)$, the optimal $(n,k)$-code satisfies
\begin{align}
kC(P_{Z|Y})-nR(P_X,D)
&=\sqrt{k\rmV_\rmc(P_{Z|Y})+n\rmV(P_X,D)}\rmQ^{-1}(\varepsilon)+O(\log k).
\end{align}
\end{theorem}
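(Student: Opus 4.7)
The plan is to obtain matching achievability and converse bounds by applying the Berry--Esseen theorem for independent but not identically distributed summands (Theorem \ref{berrytheorem4general}) to the non-asymptotic bounds in Corollary \ref{achcoro4jscc} and Theorem \ref{fblcon4jscc}. The leading order will be $kC(P_{Z|Y})-nR(P_X,D)$ and the variance $k\rmV_\rmc(P_{Z|Y})+n\rmV(P_X,D)$ will come from adding the independent channel-side and source-side contributions.

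For achievability, I would instantiate Corollary \ref{achcoro4jscc} with $P_{Y^k}=(P_Y^*)^k$, the product capacity-achieving input, and $P_{\hatX^n}=(P_{\hatX}^*)^n$ induced by the optimal test channel for $R(P_X,D)$. Lemma \ref{lossy:AEP} allows us to replace $-\log P_{\hatX^n}(\calB_D(X^n))$ by $\sum_{i\in[n]}\jmath(X_i|D,P_X)$ up to an $O(\log n)$ term and an exceptional event of probability $O(1/\sqrt{n})$. The exponent in the first expectation then becomes $-|G|^+$ with
\begin{align}
G=\sum_{i\in[k]}\log\frac{P_{Z|Y}(Z_i|Y_i)}{P_Z^*(Z_i)}-\sum_{i\in[n]}\jmath(X_i|D,P_X)-\log\gamma-O(\log n).
\end{align}
Since the source and the channel are independent, $G$ is a sum of $n+k$ independent random variables with mean $kC(P_{Z|Y})-nR(P_X,D)-\log\gamma-O(\log n)$ and variance $k\rmV_\rmc(P_{Z|Y})+n\rmV(P_X,D)$. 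Splitting the expectation according to whether $G\ge\log\log k$ or not gives $\bbE[\exp(-|G|^+)]\le \Pr\{G<\log\log k\}+1/\log k$; combined with $\exp(1-\gamma)$ for $\gamma=2\log k$, Theorem \ref{berrytheorem4general} then yields $\rmP_{\rme,n,k}(D)\le \varepsilon$ provided $kC(P_{Z|Y})-nR(P_X,D)\ge \sqrt{k\rmV_\rmc+n\rmV}\,\rmQ^{-1}(\varepsilon_k)+O(\log k)$ with $\varepsilon_k=\varepsilon-O(1/\sqrt{\min(n,k)})$. Taking $n=n^*(k,D,\varepsilon)$ and inverting via the Taylor expansion of $\rmQ^{-1}$ delivers the achievability half.

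For the converse, I would start from Theorem \ref{fblcon4jscc} with $T=1$ (so $W$ is trivial) and pick the auxiliary output distribution $\barP_{Z^k|W}=(P_Z^*)^k$, the product of the capacity-achieving output. The key step is to show that for every $y^k$ the channel information density $\imath(y^k;Z^k)=\sum_{i\in[k]}\log\frac{P_{Z|Y}(Z_i|y_i)}{P_Z^*(Z_i)}$ concentrates, in the Polyanskiy--Poor--Verd\'u meta-converse sense, around $kC(P_{Z|Y})$ with variance $k\rmV_\rmc(P_{Z|Y})$ up to $O(\log k)$; this is the standard channel-dispersion argument and makes the $\inf_{y^k}$ in \eqref{conform24jscc} benign. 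After this, independence of $X^n$ and $Z^k$ (given $y^k$) makes the random variable $\sum_{i\in[n]}\jmath(X_i|D,P_X)-\imath(y^k;Z^k)$ a sum of $n+k$ independent summands with mean $nR(P_X,D)-kC(P_{Z|Y})+O(\log k)$ and variance $n\rmV(P_X,D)+k\rmV_\rmc(P_{Z|Y})$. Choosing $\gamma=2\log k$ and invoking Theorem \ref{berrytheorem4general} shows that $\rmP_{\rme,n,k}(D)\ge\varepsilon$ forces $kC(P_{Z|Y})-nR(P_X,D)\le \sqrt{k\rmV_\rmc+n\rmV}\,\rmQ^{-1}(\varepsilon_k')+O(\log k)$, matching the achievability side.

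The hard part will be twofold. First, the converse requires uniformity in $y^k$ of the channel-dispersion estimate, which is the same technical nuisance that drives the meta-converse for channel coding; a clean way to handle it is to replace $(P_Z^*)^k$ by a mixture over types of $y^k$ so that the information density has bounded mean and variance for every input, absorbing the loss into the $O(\log k)$ remainder. Second, one must justify that restricting to $T=1$ in Theorem \ref{fblcon4jscc} costs nothing to second order; intuitively, the unequal-error-protection gain exploited by a JSCC scheme shows up only in the third-order term, but verifying this requires comparing the list-decoding bound with the $T=1$ bound, which ultimately yields the same $\rmQ^{-1}(\varepsilon)$ coefficient on the Gaussian term because the source and channel summands, being independent, combine their variances additively rather than through a worst-case tilt.
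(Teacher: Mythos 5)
Your plan is the same route the paper points to for Theorem \ref{second4jscc} (Kostina--Verd\'u's proof via Corollary \ref{achcoro4jscc} and Theorem \ref{fblcon4jscc}, combined with the i.n.i.d.\ Berry--Esseen bound and the additive variance $k\rmV_\rmc + n\rmV$), and the high-level outline is sound. Two technical points need repair, though.

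On the achievability side, your threshold $\log\log k$ is wrong for the random variable in Corollary~\ref{achcoro4jscc}. That expectation involves $\exp(-|G|^+)$, which decays only single-exponentially in $G$, so the split gives $\bbE[\exp(-|G|^+)]\le \Pr\{G<\tau\}+\exp(-\tau)$. When this enters the inversion via $\rmQ^{-1}$, the error term $\exp(-\tau)$ is multiplied by $\sigma_G=\Theta(\sqrt{n+k})$; with $\tau=\log\log k$ that contributes $\Theta(\sqrt{k}/\log k)$, which swamps the $O(\log k)$ remainder. You need $\tau=\Theta(\log k)$ (e.g., $\tau=\tfrac{1}{2}\log k$), which brings that contribution down to $O(1)$. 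The $\log\log n$-type threshold is appropriate in the pure rate-distortion achievability proof, where the expectation is of the super-exponentially decaying $\exp(-\exp(G_n))$, but not here; you have transplanted the wrong threshold.

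On the converse side, the claim that ``for every $y^k$ the channel information density concentrates around $kC(P_{Z|Y})$ with variance $k\rmV_\rmc(P_{Z|Y})$'' is not correct as stated: while the mean $\sum_i D(P_{Z|y_i}\|P_Z^*)$ equals $kC$ for any $y^k$ supported on $\supp(P_Y^*)$ (by the KKT conditions), the variance $\sum_i \mathrm{Var}[\log\frac{P_{Z|Y}(Z_i|y_i)}{P_Z^*(Z_i)}]$ depends on the empirical type of $y^k$ and equals $k\rmV_\rmc$ only when that type is $P_Y^*$. This is exactly why Theorem~\ref{fblcon4jscc} carries the auxiliary $W\in[T]$: in the Kostina--Verd\'u treatment one chooses $W$ to be the type of $Y^k$ with $T=(k+1)^{|\calY|}$, yielding a type-dependent auxiliary output $\barP_{Z^k|W}$ and a cleanly controlled $\inf_{y^k}$ within each type class, at the cost of only an extra $O(\log k)$ in $\gamma$ from $T\exp(-\gamma)$. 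Your $T=1$ plus mixture workaround can be made to work (the mixture bound and the $T>1$ bound are close cousins), but it still leaves the type-dependent variance unaddressed; the missing ingredient is the observation that, when the capacity-achieving input is unique, perturbing the type away from $P_Y^*$ incurs a second-order penalty in the mean $kI(Q_t)$ that dominates the first-order variation of the variance, so the infimum over types is attained within $O(1/\sqrt{k})$ of $P_Y^*$ and the variance mismatch is $O(\sqrt{k}\cdot 1/\sqrt{k})=O(1)$. Without that argument, ``the standard channel-dispersion argument makes the $\inf_{y^k}$ benign'' is an assertion rather than a proof. Also note that the theorem implicitly assumes a unique capacity-achieving input; otherwise the achievability and converse dispersions differ depending on whether $\varepsilon\lessgtr 1/2$.
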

As discussed in \cite[Remark 9]{kostinajscc}, an equivalent form of Theorem \ref{second4jscc} is
\begin{align}
\frac{n^*(k,D,\varepsilon)}{k}
&=\rho(P_X,P_{Z|Y},D)+\frac{L(P_X,P_{Z|Y},D|\varepsilon)}{\sqrt{k}}+O(\log k),
\end{align}
where the second-order coding rate $L(P_X,P_{Z|Y},D|\varepsilon)$ for the lossy JSCC is
\begin{align}
L(P_X,P_{Z|Y},D|\varepsilon)
&:=-\frac{\sqrt{\rmV_\rmc(P_{Z|Y})+\rho(P_X,P_{Z|Y},D)\rmV(P_X,D)}\rmQ^{-1}(\varepsilon)}{R(P_X,D)}.
\end{align}

In this monograph, we only consider the discrete memoryless source and the discrete memoryless channel with no cost constraint. The proof of Theorem \ref{second4jscc} is omitted due to its similarity to the rate-distortion problem, readers could refer to \cite[Appendix D-B]{kostinajscc} for the achievability proof using Corollary \ref{achcoro4jscc} and \cite[Appendix C]{kostinajscc} for the converse proof using the non-asymptotic bound in Theorem \ref{fblcon4jscc}. An independent proof of Theorem \ref{second4jscc} was provided by Wang, Ingber and Kochman using the method of types and the idea of unequal error protection~\cite{wang2011dispersion}.
For other types of source and channels, e.g., a Gaussian source and an AWGN channel, one could refer to \cite[Section V]{kostinajscc} for details. Generally speaking, the same conclusion holds but the dispersion function and the remainder $O(\log n)$ term differ.

A critical question is the cost of separation in the second-order asymptotics. As discussed in \cite[Remark 8]{kostinajscc}, combining the second-order asymptotics for the rate-distortion problem in~\cite{kostina2012fixed} (see also Theorem \ref{second4rd}) and the channel coding problem in~\cite[Theorem 49]{polyanskiy2010finite}, the achievable performance of separate source-channel coding satisfies
\begin{align}{}
\nn&kC(P_{Z|Y})-nR(P_X,D)\\*
\nn&\leq \min_{\substack{(\varepsilon_1,\varepsilon_2)\in(0,1)^2:\\\varepsilon_1+\varepsilon_2\leq \varepsilon}}
\Big\{\sqrt{k\rmV_\rmc(P_{Z|Y})}\rmQ^{-1}(\varepsilon_1)+\sqrt{n\rmV(P_X,D)}\rmQ^{-1}(\varepsilon_2)\Big\}\\*
&\qquad+O(\log k).
\end{align}
In other words,
\begin{align}
\frac{n_{\rm{sscc}}^*(k,D,\varepsilon)}{k}
&\leq\frac{C(P_{Z|Y})}{R(P_X,D)}+\frac{L_{\rm{sscc}}(P_X,P_{Z|Y},D|\varepsilon)}{\sqrt{k}}+O(\log k),
\end{align}
where the achievable second-order coding rate $\rmV_{\rm{sscc}}(P_X,P_{Z|Y},D,\varepsilon)$ for SSCC is
\begin{align}
L_{\rm{sscc}}(P_X,P_{Z|Y},D|\varepsilon)
\nn&:=\max_{\substack{(\varepsilon_1,\varepsilon_2)\in(0,1)^2:\\\varepsilon_1+\varepsilon_2\leq \varepsilon}}\Bigg(\frac{-\sqrt{\rho(P_X,P_{Z|Y},D)\rmV(P_X,D)}\rmQ^{-1}(\varepsilon_1)}{R(P_X,D)}\\*
&\qquad\qquad\qquad\qquad-\frac{\sqrt{\rmV_\rmc(P_{Z|Y})}\rmQ^{-1}(\varepsilon_2)}{R(P_X,D)}\Bigg).
\end{align}
Note that $L_{\rm{sscc}}(P_X,P_{Z|Y},D|\varepsilon)\leq L(P_X,P_{Z|Y},D|\varepsilon)$ unless the source dispersion function $\rmV(P_X,D)=0$ or the channel dispersion function $\rmV(P_{Z|Y})=0$. To illustrate this point, in Fig. \ref{illus:costsscc}, we plot the second-order coding rates for a Bernoulli source with parameter $p$, the Hamming distortion measure and a distortion level $D=0.05$ and a binary symmetric channel $P_{Z|Y}$ with parameter $q=0.1$, i.e., $\calX=\calY=\calZ=\{0,1\}$, $P_{X}(1)=p$, $P_X(0)=1-p$, and
\begin{align}
P_{Z|Y}(z|y)=(1-q)\bbo(z=y)+q\bbo(z\neq y).
\end{align}
\begin{figure}[bt]
\centering
\includegraphics[width=.8\columnwidth]{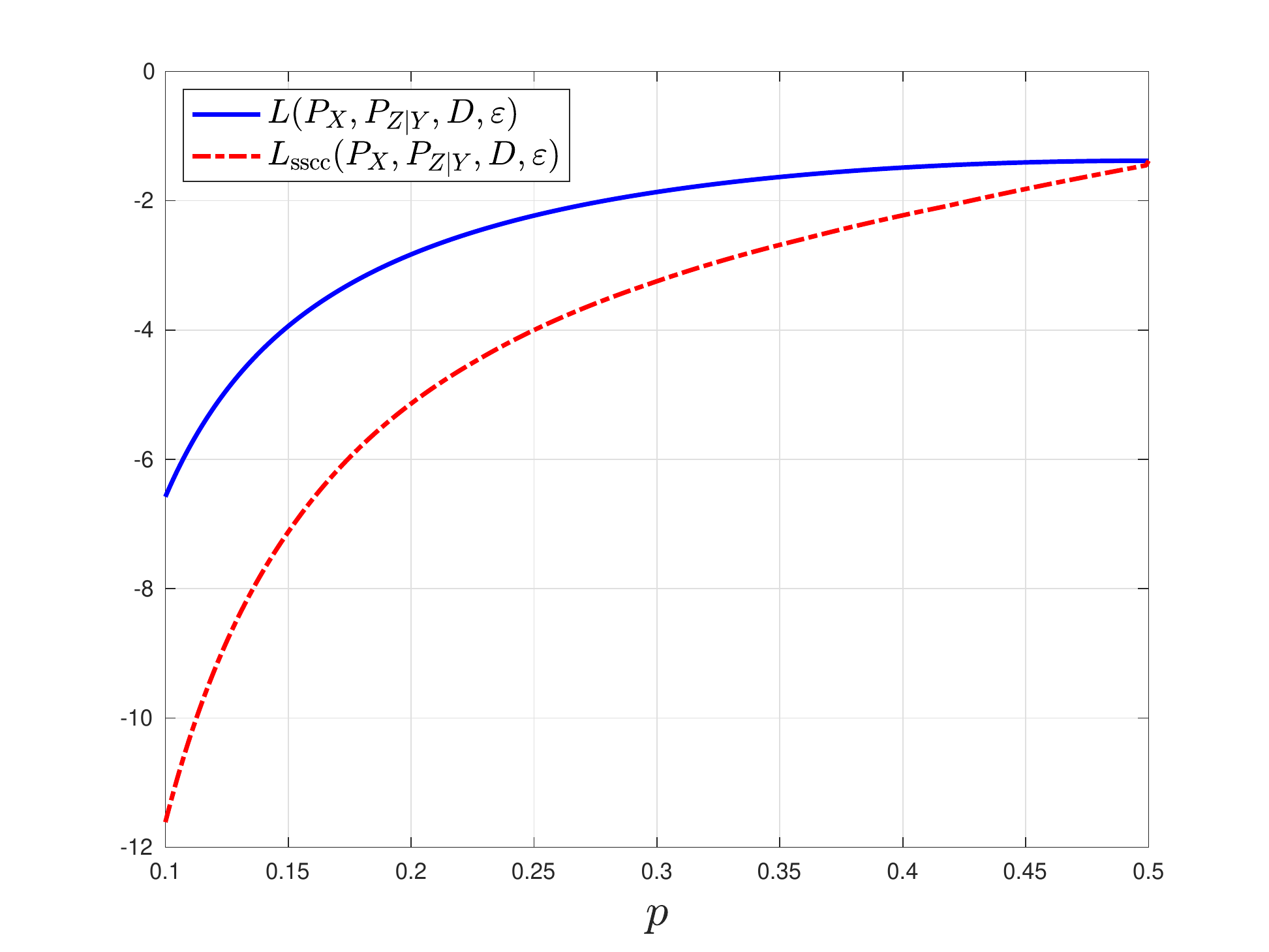}
\caption{Plots of second-order coding rates for JSCC $L(P_X,P_{Z|Y},D|\varepsilon)$ and for SSCC $L_{\rm{sscc}}(P_X,P_{Z|Y},D|\varepsilon)$ when $D=0.05$ for a Bernoulli source with parameter $p$, a binary symmetric channel $P_{Z|Y}$ with parameter $q=0.1$ under the Hamming distortion measure.}
\label{illus:costsscc}
\end{figure}


\chapter{Mismatch}
\label{chap:mismatch}

This chapter concerns the mismatched rate-distortion aspect that tackles a practical problem of lossy data compression: the generating distribution of the information source is \emph{unknown}. Although the results in previous chapters are very insightful, the assumption of the exact knowledge of the source distribution $P_X$ is highly impractical. To tackle this problem, Lapidoth~\cite[Theorem 3]{lapidoth1997} proposed to use the spherical codebook and minimum Euclidean distance encoding to compress an arbitrary memoryless source under the quadratic distortion measure. Note that such a codebook is optimal for the Gaussian memoryless source under the quadratic distortion measure, asymptotically~\cite{shannon1959coding} and second-order asymptotically~\cite[Theorem 40]{kostina2012fixed}.

Lapidoth showed that for any ergodic source with known and finite second moment $\sigma^2$, the rate-distortion function for the GMS $\calN(0,\sigma^2)$ is achievable and ensemble tight as the blocklength tends to infinity. Here ensemble tight means that the analysis of the code is optimal. Lapidoth's codebook design only requires the knowledge of the second-moment of the information source, which is much more accessible than the exact source distribution and can be estimated from observed source sequence. 

The results of Lapidoth were refined by Zhou, Tan and Motani~\cite{zhou2017refined} who derived ensemble tight second-order asymptotics and also considered the i.i.d. Gaussian codebook. Specifically, the authors of \cite[Theorem 1]{zhou2017refined} showed that both i.i.d. Gaussian and spherical codebooks achieve the same second-order coding rate. This chapter is largely based on \cite{zhou2017refined}.

\section{Problem Formulation and Asymptotic Result}

Consider a memoryless source $X^n$ with distribution (either probability mass function or probability density function) $P_X$  satisfying
\begin{align}
\bbE [X^2]=\sigma^2,~\zeta:=\bbE [X^4]<\infty,~\bbE[X^6]<\infty\label{sourceconstraint},
\end{align} 
Consider any distortion level $D\in(0,\sigma^2)$. Let $\hatcalX$ be the reproduction alphabet and let $d:\calX\times\hatcalX\to\bbR_+$ be the quadratic distortion defined in \eqref{def:quadratic}. Given any source sequence $x^n$ and the reproduced source sequence $\hatx^n$, let $d(x^n,\hatx^n)=\frac{1}{n}\sum_{i\in[n]}d(x_i,\hatx_i)$, i.e.,
\begin{align}
d(x^n,\hatx^n)
&=\frac{1}{n}\sum_{i\in[n]}(\hatx_i-x_i)^2\\
&=\frac{1}{n}\|\hatx^n-x^n\|^2.
\end{align}
Lapidoth's coding scheme is as follows~\cite[Theorem 3]{lapidoth1997}.
\begin{definition}
An $(n,M)$-code for the mismatched rate-distortion problem consists of 
\begin{itemize}
\item A set of $M$ codewords $\{\hatX^n(i)\}_{i=1}^M$ known by both the encoder and decoder;
\item An encoder $f$ which maps the source sequence $X^n$ into the index of the codeword that minimizes the quadratic distortion with respect to the source sequence $X^n$, i.e.,
\begin{align}
f(X^n)
&:=\argmin_{i\in[1:M]} d\big(X^n,\hatX^n(i) \big).
\end{align}
\item A decoder $\phi$ which declares the reproduced sequence as the codeword with index $f(X^n)$, i.e.,
\begin{align}
\phi(f(X^n))=\hatX^n(f(X^n)).
\end{align}
\end{itemize}
\end{definition}

We consider two following types of codebooks $\{\hatX^n(i)\}_{i=1}^M$. 
\begin{itemize}
\item  First, we consider the {\em spherical codebook} where each codeword $\hatX^n$ is generated independently and uniformly over a sphere with radius $\sqrt{n(\sigma^2-D)}$, i.e.,
\begin{align}
\hatX^n\sim f_{\hatX^n}^{\rm{sp}}(\hatx^n)=\frac{1\{\|\hatx^n\|^2-n(\sigma^2-D)\}}{S_n(\sqrt{n(\sigma^2-D) })}\label{spherecodebook},
\end{align}
where $1\{\cdot\}$ is the indicator function, $S_n(r)= {n\pi^{n/2}} r^{n-1} / \Gamma(\frac{n+2}{2})$ is the surface area of an $n$-dimensional sphere with radius $r$, and $\Gamma(\cdot)$ is the Gamma function. 
\item  Second, we consider the {\em  i.i.d.\  Gaussian codebook} where each codeword $\hatX^n$ is generated independently from the product Gaussian distribution with mean $0$ and variance $\sigma^2-D$, i.e.,
\begin{align}
\hatX^n\sim f_{\hatX^n}^{\rm{iid}}(\hatx^n)
=\frac{\exp\Big(-\frac{\|\hatx^n\|^2}{2(\sigma^2-D)}\Big)}{\big(\sqrt{2\pi(\sigma^2-D)}\big)^n}
\label{iidcodebook}.
\end{align}
\end{itemize}

To evaluate the performance of the above code, we consider the following {\em ensemble excess-distortion probability} with $M$ codewords:
\begin{align}
\rmP_{\rme,n}(M)
&:=\Pr\{d(X^n,\phi(f(X^n)))>D\}\label{def:excessdp}\\
&=\bbE_{P_X^n} \left[\big(1-\Pr\{d(X^n,\hatX^n)\leq D|\,X^n\}\big)^M\right]\label{peforboth},
\end{align}
where \eqref{peforboth} follows from \cite[Theorem 9]{kostina2012fixed} and the inner probability is calculated either with respect to the right hand side of~\eqref{spherecodebook} if we use a spherical codebook or the right hand side of \eqref{iidcodebook} if we use an i.i.d.\  Gaussian codebook. Note that the probability in \eqref{def:excessdp} is averaged over the source distribution $P_X^n$ {\em as well as the distribution random codebook $\{\hatX^n(i)\}_{i=1}^M$}. This is in contrast to the rate-distortion problem and its generalizations in previous chapters, where the excess-distortion probability is averaged over the source distribution only. The additional average over the codebook enables one to pose questions concerning ensemble tightness in the spirit of~\cite{gallager_ensemble,scarlett2017mismatch}, which guarantees the optimality of the analysis of the code performance.

Analogous to \eqref{def:M*} for the rate-distortion problem, we next define the non-asymptotic fundamental limit. Let $M_{\rm{sp}}^*(n,\varepsilon,\sigma^2,D)$ be the minimum number of codewords required to compress a length-$n$ source sequence so that the ensemble excess-distortion probability with respect to the distortion level $D$ is no larger than $\varepsilon\in(0,1)$ when a spherical codebook is used, i.e.,
\begin{align}
\nn&M_{\rm{sp}}^*(n,\varepsilon,\sigma^2,D)\\*
&:=\inf\{M:~\exists~(n,M)~\mathrm{code~using~a~SC~s.t.}~\rmP_{\rme,n}\leq \varepsilon\}\label{def:Msp*},
\end{align}
where $\mathrm{SC}$ is short for spherical codebook. Similarly, let $M_{\rm{iid}}^*(n,\varepsilon,\sigma^2,D)$ be the corresponding quantity when an i.i.d.\  Gaussian codebook is used. 

Lapidoth~\cite[Theorem 3]{lapidoth1997} showed that for any ergodic source with finite second moment $\sigma^2$ and any $\varepsilon\in(0,1)$, 
\begin{align}
\lim_{n\to\infty}\frac{1}{n}\log M_{\rm{sp}}^*(n,\varepsilon,\sigma^2,D)=\frac{1}{2}\log\frac{\sigma^2}{D}\label{first:sp}.
\end{align}
As a by-product of the second-order asymptotics presented in Theorem~\ref{secondorder4mismatch} below, for any source satisfying \eqref{sourceconstraint} and any $\varepsilon\in(0,1)$, 
\begin{align}
\lim_{n\to\infty}\frac{1}{n}\log M_{\rm{iid}}^*(n,\varepsilon,\sigma^2,D)=\frac{1}{2}\log\frac{\sigma^2}{D}\label{iidfotrue}.
\end{align}
Thus, asymptotically, both spherical and i.i.d. Gaussian codebooks achieve the same first-order asymptotic performance. When specialized to a GMS with distribution $P_X=\calN(0,\sigma^2)$, the asymptotic result was first established by Shannon~\cite[Page 346]{shannon1959coding}.

\section{Second-Order Asymptotics}
Recall the definitions of $\sigma^2$ and $\zeta$ in \eqref{sourceconstraint}. Let the {\em mismatched dispersion} be defined as
\begin{align}
\rmV(\sigma^2,\zeta)&:=\frac{\zeta-\sigma^4}{4\,\sigma^4} =  \frac{{\rm{Var}}[X^2]}{ 4 \, (  \bbE[X^2]^2) }.\label{def:rmvsphere}
\end{align}
The second-order asymptotic result states as follows.
\begin{theorem}
\label{secondorder4mismatch}
Consider an arbitrary memoryless source $X$ satisfying \eqref{sourceconstraint}. For any $\varepsilon\in [0,1)$ and any $\dagger\in\rm\{sp,iid\}$,
\begin{align}
\log M_{\dagger}^*(n,\varepsilon,\sigma^2,D)
\nn&=\frac{n}{2}\log\frac{\sigma^2}{D}+\sqrt{n\rmV(\sigma^2,\zeta)}\rmQ^{-1}(\varepsilon)\\*
&\qquad+O(\log n)\label{spheresecond}.
\end{align}
\end{theorem}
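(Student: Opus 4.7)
The plan is to analyze the ensemble excess-distortion probability in the form \eqref{peforboth},
\begin{align}
\rmP_{\rme,n}(M)=\bbE_{P_X^n}\big[(1-q_n^{\dagger}(X^n))^M\big],
\end{align}
where $q_n^{\dagger}(x^n):=\Pr\{d(x^n,\hatX^n)\leq D\}$ with $\hatX^n$ drawn from the spherical ($\dagger=\rm{sp}$) or Gaussian ($\dagger=\rm{iid}$) codebook ensemble. The decisive observation is that by the spherical symmetry of both ensembles, $q_n^{\dagger}(x^n)$ depends on $x^n$ only through the empirical second moment $t(x^n):=\|x^n\|^2/n$. Once a sharp expansion of $\log q_n^{\dagger}$ around $t=\sigma^2$ is in hand, the remaining randomness in $\rmP_{\rme,n}$ is carried by the i.i.d.\ sum $\sum_{i\in[n]}X_i^2$, whose centered variance $n(\zeta-\sigma^4)=4n\sigma^4\,\rmV(\sigma^2,\zeta)$ is precisely what Berry--Esseen (Theorem~\ref{berrytheorem}) needs to deliver the claimed Gaussian approximation with the dispersion~\eqref{def:rmvsphere}.

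For the spherical codebook I would run the classical spherical-cap computation. With $\|\hatX^n\|^2=n(\sigma^2-D)$ and $\|x^n\|^2=nt$, the cosine rule shows that $\{\|\hatX^n-x^n\|^2\leq nD\}$ is the event that $\hatX^n$ lies in a cap of half-angle $\theta_0$ satisfying $\sin^2\theta_0=1-(\sigma^2+t-2D)^2/(4(\sigma^2-D)t)$, and standard area-ratio bounds give $(2/n)\log q_n^{\rm{sp}}(x^n)=\log\sin^2\theta_0+O(n^{-1}\log n)$. Taylor expansion around $t=\sigma^2$, using $\sin^2\theta_0\big|_{t=\sigma^2}=D/\sigma^2$ and $(d/dt)\log\sin^2\theta_0\big|_{t=\sigma^2}=-1/\sigma^2$, then yields
\begin{align}
\log q_n^{\rm{sp}}(X^n)=-\frac{n}{2}\log\frac{\sigma^2}{D}-\frac{1}{2\sigma^2}\sum_{i\in[n]}(X_i^2-\sigma^2)+O(\log n)
\end{align}
uniformly on the typical set $\{|t(X^n)-\sigma^2|\leq c\sqrt{(\log n)/n}\}$. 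For achievability I set $\log M=(n/2)\log(\sigma^2/D)+\sqrt{n\rmV(\sigma^2,\zeta)}\,\rmQ^{-1}(\varepsilon-\delta_n)+O(\log n)$ for a small slack $\delta_n=O(1/\sqrt{n})$, apply $(1-p)^M\leq e^{-Mp}$, and reduce $\{\rmP_{\rme,n}>\varepsilon\}$ (modulo the atypical-set remainder) to the one-sided tail $\{\sum_i(X_i^2-\sigma^2)/\sqrt{n}>2\sigma^2\sqrt{\rmV(\sigma^2,\zeta)}\,\rmQ^{-1}(\varepsilon)+O(1)\}$, which Berry--Esseen bounds by $\varepsilon$. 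The ensemble converse uses the complementary inequality $(1-p)^M\geq 1-Mp$ on the event $\{Mq_n^{\rm{sp}}(X^n)\leq 1/2\}$ to extract a matching lower bound on $\rmP_{\rme,n}$ of the same tail form, which yields the matching lower bound on $\log M_{\rm{sp}}^*$.

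For the i.i.d.\ Gaussian codebook I would first condition on the squared codeword norm $\|\hatX^n\|^2=nr$, under which $\hatX^n/\sqrt{nr}$ is uniform on the unit sphere, so that $q_n^{\rm{iid}}(x^n)$ is the expectation of a spherical-cap probability (with radius $\sqrt{nr}$ in place of $\sqrt{n(\sigma^2-D)}$) against the density of $\|\hatX^n\|^2$, which is a scaled $\chi^2_n$ concentrated at $n(\sigma^2-D)$. A Laplace expansion around the dominant value $r=\sigma^2-D$ shows that this additional averaging only perturbs $\log q_n^{\rm{iid}}$ by $O(\log n)$ relative to $\log q_n^{\rm{sp}}$, so the same expansion and Berry--Esseen step reproduce \eqref{spheresecond} with identical first and second-order terms. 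The main technical obstacle will be making the geometric estimates hold \emph{uniformly} in $t$ over the Berry--Esseen-typical window of width $\sqrt{(\log n)/n}$ with errors small enough to survive the $(1-p)^M$ manipulation without perturbing the $\sqrt{n\rmV(\sigma^2,\zeta)}\,\rmQ^{-1}(\varepsilon)$ term; controlling the atypical set $\{|t(X^n)-\sigma^2|>c\sqrt{(\log n)/n}\}$ exploits the sixth-moment hypothesis in \eqref{sourceconstraint} via Chebyshev combined with a truncation argument, and this bookkeeping, together with the Laplace expansion in the Gaussian case, is what I expect to be the most delicate part of the proof.
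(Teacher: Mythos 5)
Your high-level strategy matches the paper's for the spherical codebook: observe that $q_n^{\dagger}(x^n)$ depends only on $t=\|x^n\|^2/n$, expand $\log q_n^{\rm{sp}}$ around $t=\sigma^2$, and reduce to a Berry--Esseen statement for $\sum_i(X_i^2-\sigma^2)$ with variance $n(\zeta-\sigma^4)$. Your cosine-rule computation and the two Taylor coefficients ($\sin^2\theta_0|_{t=\sigma^2}=D/\sigma^2$ and $(\rmd/\rmd t)\log\sin^2\theta_0|_{t=\sigma^2}=-1/\sigma^2$) are correct and equivalent to the paper's $\underline{g}(n,z)$ and $\overline{g}(n,z)$ bounds on $\Psi(n,z)$. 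For the i.i.d.\ Gaussian ensemble you take a genuinely different route: conditioning on $\|\hatX^n\|^2$ turns $q_n^{\rm{iid}}$ into a mixture of spherical-cap probabilities against a $\chi^2_n$ density, and a Laplace/envelope argument (with saddle $r^*(\sigma^2)=\sigma^2-D$) then shows the extra averaging does not perturb the first two orders. The paper instead applies the Bahadur--Ranga Rao strong large-deviations estimate for a non-central $\chi^2$ sum, giving \eqref{iidach1} directly. Both routes are valid; yours makes the equivalence with the spherical case more transparent, at the cost of a two-parameter saddle-point bookkeeping that the paper's one-dimensional argument avoids.

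There is, however, a genuine gap in your ensemble converse. Applying $(1-p)^M\geq 1-Mp$ on $\{Mq_n^{\rm{sp}}(X^n)\leq 1/2\}$ only yields $\rmP_{\rme,n}(M)\geq \tfrac12\Pr\{Mq_n^{\rm{sp}}(X^n)\leq 1/2\}$, and the multiplicative $\tfrac12$ is fatal: to contradict $\rmP_{\rme,n}\leq\varepsilon$ you then need the tail probability to exceed $2\varepsilon$, which would replace $\rmQ^{-1}(\varepsilon)$ by $\rmQ^{-1}(2\varepsilon)$ in the converse second-order term and fail to match the achievability direction. The paper avoids this by restricting instead to the event where $M\overline{g}/(1-\overline{g})\leq 1/\sqrt{n}$ (see \eqref{addconstraint}), so the prefactor is $1-1/\sqrt{n}\to 1$; the probability of the resulting thin window $\calP\cap\calQ$ is then lower bounded by $\varepsilon+O(1/\sqrt{n})$ via Lemma~\ref{concentrate}. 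Any vanishing threshold in place of $1/2$ works, but a fixed constant does not. Separately, a Chebyshev bound (with or without truncation) on the atypical set $\{|t(X^n)-\sigma^2|>c\sqrt{(\log n)/n}\}$ gives only $O(1/\log n)$, which is too weak to hide inside the $O(\log n)$ remainder of the rate expansion; the paper uses Berry--Esseen on this set as well (yielding $\rmQ(\sqrt{\log n})=O(1/\sqrt{n})$), which is exactly why finite $\bbE[X^6]$ is imposed in \eqref{sourceconstraint}.
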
 
Theorem \ref{secondorder4mismatch} generalizes Lapidoth's result~\cite[Theorem 3]{lapidoth1997} in two aspects. Firstly, Theorem \ref{secondorder4mismatch} establishes a second-order asymptotic approximation to the finite blocklength performance and thus refines the first-order asymptotic result valid only for infinite blocklength. Secondly, Theorem \ref{secondorder4mismatch} holds for both spherical and i.i.d. Gaussian codebooks and shows that both codebooks achieve the same first and second-order coding rates. When specialized to the GMS with $P_X=\calN(0,\sigma^2)$, $\rmV(\sigma^2,\zeta)=\frac{1}{2}$ and equals the dispersion established in~\cite{kostina2012fixed,ingber2011dispersion}. Thus, the proof of Theorem \ref{secondorder4mismatch} provides an alternative second-order achievability proof for a GMS.

The proof of Theorem \ref{secondorder4mismatch} differs significantly from Lapidoth's analyses. Lapidoth~\cite[Theorem 3]{lapidoth1997} used a theorem of Wyner~\cite{Wyner67} concerning packings and coverings of $n$-spheres to derive the first-order asymptotic result. The proof of Theorem \ref{secondorder4mismatch} requires finer analyses as demonstrated in Section \ref{sec:proof4mismatch}. Specifically, the dominant error event is the atypicality of the source sequence regardless which codebook ensemble is used. Intuitively, it is sufficient to use roughly $ \exp(\frac{n}{2}\log\frac{\sigma^2}{D})$ codewords to cover the set of typical source sequences decaying super-exponentially.  Theorem \ref{secondorder4mismatch} then follows by judiciously analyzing the probability of the set of atypical source sequences with appropriate choices of the minimum number of codewords.

Dual to the problem of the mismatched rate-distortion problem considered in this chapter, Lapidoth also considered a mismatched channel coding problem~\cite{lapidoth1996}. Specifically, Lapidoth proposed to use i.i.d. Gaussian or the spherical codebook and nearest neighbor decoding to transmit a message over an additive noise channel where the distribution of the noise is unknown. Lapidoth derived the ensemble tight first-order asymptotic result and showed that both codebooks achieve the asymptotic rate of the capacity of an AWGN channel while Scarlett, Tan and Durisi~\cite{scarlett2017mismatch} derived the second-order approximation to the finite blocklength performance. In particular, the authors of \cite{scarlett2017mismatch} showed that the spherical codebook achieves larger second-order coding rate, consistent with the analysis for AWGN channels~\cite{tantomamichel2015,polyanskiy2010finite}.

Analogous to the noisy channel problem considered in Chapter \ref{chap:jscc}, a mismatched version of the noisy channel problem was studied in~\cite{zhou2019jscc} where the authors proposed a coding scheme based on the unequal error protection idea~\cite{wang2011dispersion} to transmit a memoryless source with unknown distribution over an additive noise channel with unknown noise distribution and derived ensemble tight second-order asymptotics. We provide some intuition why two codebooks lead to different behavior in the mismatched rate-distortion and channel coding problem. In channel coding, every codeword in the codebook is used to transmit a uniformly distributed message. If some codewords have powers that deviate from a fixed power $P$, the overall performance will be adversely affected. However, in the rate-distortion problem, to compress each source sequence, we select only the codeword which minimizes the distortion with respect to the source sequence. As a result, even if there are many codewords with power bounded away from $\sigma^2-D$, the performance is unaffected.

In \cite[Section II]{lapidoth1997}, Lapidoth also considered another type of mismatch for the rate-distortion problem where the encoder and decoder use different distortion measures. However, no tight characterization of the rate-distortion function was available. Readers could refer to \cite[Chapter 4]{scarlett2020book} for detailed discussions.

\section{Proof of Second-Order Asymptotics}
\label{sec:proof4mismatch}

\subsection{Preliminaries for the Spherical Codebook}
\label{sec:presphere}
In this subsection, we present some definitions and preliminary results for the spherical codebook. For simplicity, let $P_\rmc:=\sigma^2-D$. Furthermore, for any $\varepsilon\in(0,1)$, let 
\begin{align}
\rmV&:=\mathrm{Var}[X^2]=\zeta-\sigma^4\label{def:rmV},\\
a_n&:=\sqrt{\rmV\frac{\log n}{n}},\label{def:an}\\
b_n&:=\sqrt{\frac{\rmV}{n}}\rmQ^{-1}(\varepsilon)\label{def:bn},
\end{align}
where the second equality in \eqref{def:rmV} follows from the definition in \eqref{sourceconstraint}. Note that for any $x^n$, $\Pr\{d(x^n,\hatX^n)\leq D\}$ depends on $x^n$ only through its norm $\|x^n\|$. For any $x^n$ such that $\frac{1}{n}  {\|x^n\|^2}=z>0$, let 
\begin{align}
\Psi(n,z)&:=\Pr\{d(x^n,\hatX^n)\leq D\}\\
&=\Pr\{\|x^n-\hatX^n\|^2\leq nD\}\\
&=\Pr\{\|x^n\|^2+\|\hatX^n\|^2-2\langle x^n,\hatX^n \rangle\leq nD\}\label{ensemble}\\
&=\Pr\{nz+nP_\rmc-2\langle x^n,\hatX^n\rangle \leq nD\}\\
&=\Pr\{2\langle x^n,\hatX^n\rangle \geq n(z+P_\rmc-D)\}\\
&=\Pr\bigg\{\hatX_1\geq \frac{\sqrt{n}(z+P_\rmc-D)}{2\sqrt{z}}\bigg\}\label{y1symm},
\end{align}
where $\hatX_1$ is the first element of sequence $\hatX^n = (\hatX_1,\ldots, Y_n)$ and \eqref{y1symm} follows because $\hatX^n$ is spherically symmetric so we may take $x^n =( \sqrt{nz},0,\ldots, 0)$ (cf.~\cite{scarlett2017mismatch}).

Let $Z:=\frac{1}{n}{\|X^n\|^2}$ be the random variable representing the average power of the source  $X^n$. Furthermore, let $f_Z$ be the corresponding probability distribution function (pdf) of $Z$. Let
\begin{align}
r_1&:=\sqrt{P_\rmc}-\sqrt{D}\label{def:r1},\\
r_2&:=\sqrt{P_\rmc}+\sqrt{D}\label{def:r2}.
\end{align}

Kostina and Verd\'u~\cite[Theorem 37]{kostina2012fixed} showed that for any $z$ such that $\sqrt{z}<r_1$ or $\sqrt{z}>r_2$,
\begin{align}
\Psi(n,z)=0,\label{psi=1foratp}
\end{align}
and otherwise
\begin{align}
\Psi(n,z)
&\geq \frac{\Gamma(\frac{n+2}{2})}{\sqrt{\pi} n\Gamma(\frac{n+1}{2})}\bigg(1-\frac{(z+P_\rmc-D)^2}{4zP_\rmc}\bigg)^{\frac{n-1}{2}}=:\underline{g}(n,z),\label{def:undergnz}
\end{align}
where $\Gamma(\cdot)$ is the Gamma function. Hence, from \eqref{peforboth} and \eqref{psi=1foratp}, we conclude that the excess-distortion probability for the spherical codebook is
\begin{align}
\rmP_{\rme,n}(M)
\nn&=\Pr\{Z<(\max\{0,r_1\})^2\}+\Pr\{Z>r_2^2\}\\*
&\qquad+\int_{(\max\{0,r_1\})^2}^{r_2^2}\!\!\!(1-\Psi(n,z))^M f_Z(z)\, \rmd z\label{pe4sphere}.
\end{align}

\subsection{Achievability Proof for the Spherical Codebook}
\label{sec:sphereach}

Using the definition of $\underline{g}(\cdot)$ in \eqref{def:undergnz}, we conclude that $\underline{g}(n,z)$ is a decreasing function of $z$ if $z\geq |P_\rmc-D|$. Invoking the definitions of $b_n$ in \eqref{def:bn}, $r_1$ in \eqref{def:r1} and $r_2$ in \eqref{def:r2}, we conclude that $r_1^2\leq |P_\rmc-D|$ and $r_2^2\geq \sigma^2+b_n$ for $n$ large enough. Thus, combining \eqref{def:undergnz}, \eqref{pe4sphere} and noting that $\Psi(n,z)\geq 0$, for sufficiently large $n$, we can upper bound the excess-distortion probability as follows:
\begin{align}
\rmP_{\rme,n}(M)
\nn&\leq \Pr\{Z<|P_\rmc-D|\}+\Pr\{Z>\sigma^2+b_n\}\\*
&\qquad+\int_{|P_\rmc-D|}^{\sigma^2+b_n}(1-\underline{g}(n,z))^M f_Z(z)\, \rmd z\label{r2c4}\\
\nn&\leq \Pr\{Z<|P_\rmc-D|\}+\Pr\{Z>\sigma^2+b_n\}\\*
&\qquad+\int_{|P_\rmc-D|}^{\sigma^2+b_n}\exp\{-M\underline{g}(n,z)\}f_Z(z)\, \rmd z\label{upplog1-x}\\
&\leq \Pr\{Z<|P_\rmc-D|\}+\Pr\{Z>\sigma^2+b_n\}+\exp\{-M\underline{g}(n,\sigma^2+b_n)\}\label{usenondecreaseug},
\end{align}
where \eqref{upplog1-x} follows since $(1-a)^M\leq \exp\{-Ma\}$ for any $a\in[0,1)$, and \eqref{usenondecreaseug} follows since $\underline{g}(n,z)$ is decreasing in $z$ for $z\geq |P_\rmc-D|$. Let the third central moment of $X^2$ be defined as
\begin{align}
T:=\bbE\big[|X^2-\sigma^2|^3\big]\label{def:T}.
\end{align}
Using the definitions of $\rmV$ in \eqref{def:rmV}, $T$ in \eqref{def:T} and the Berry-Esseen theorem (cf. Theorem \ref{berrytheorem}), we conclude that
\begin{align}
\Pr\{Z<|P_\rmc-D|\}
&=\Pr\bigg\{\frac{1}{n}\sum_{i=1}^n X_i^2<|\sigma^2-2D|\bigg\}\\
&\leq \frac{6T}{\sqrt{n}\, \rmV^{3/2}}+\rmQ\bigg( \big(\sigma^2-|\sigma^2-2D|\big)\sqrt{\frac{n}{\rmV}}\bigg)\\
&\leq \frac{6T}{\sqrt{n}\, \rmV^{3/2}}+\exp\bigg\{-\frac{2n\big(\sigma^2-|\sigma^2-2D|\big)^2}{\rmV}\bigg\}\label{upprmq2}\\
&=O\bigg( \frac{1}{\sqrt{n}}\bigg),\label{achstep1}
\end{align}
where \eqref{upprmq2} follows since $\rmQ(a)\leq \exp\{-\frac{a^2}{2}\}$ while \eqref{achstep1} follows since $T$ (cf. \eqref{def:T}) is finite for sources satisfying \eqref{sourceconstraint} and $\sigma^2-|\sigma^2-2D|>0$ due to the fact that $\sigma^2>D$. Similarly, using the definition of $b_n$ in \eqref{def:bn} and the Berry-Esseen theorem, we have
\begin{align}
\Pr\{Z>\sigma^2+b_n\}
&=\Pr\bigg\{\frac{1}{n}\sum_{i=1}^nX_i^2>\sigma^2+b_n\bigg\}\\
&\leq \varepsilon+\frac{6T}{\sqrt{n}\, \rmV^{3/2}}\\
&=\varepsilon+O\bigg( \frac{1}{\sqrt{n}}\bigg)\label{achstep2}.
\end{align}

Choose $M$ such that 
\begin{align}
\log M
&=-\log \underline{g}(n,\sigma^2+b_n)+\log\bigg( \frac{1}{2}\log n\bigg)\label{hi:chosem}\\
&=n\bigg( \frac{1}{2}\log \frac{\sigma^2}{D}+\frac{b_n}{2\sigma^2}+O\bigg( \frac{\log n}{n}\bigg)\bigg)\label{tayloragain}\\
&=\frac{n}{2}\log \frac{\sigma^2}{D}+\sqrt{n\rmV(\sigma^2,\zeta)}\rmQ^{-1}(\varepsilon)+O(\log n)\label{usebnha},
\end{align}
where \eqref{tayloragain} follows from the Taylor expansion of $\underline{g}(n,\sigma^2+b_n)$ (cf. \eqref{def:undergnz}) and noting that $ {\Gamma(\frac{n+2}{2})}/{\Gamma(\frac{n+1}{2})}=\Theta(\sqrt{n})$, and \eqref{usebnha} follows from the definition of $b_n$ (cf. \eqref{def:bn}) and $\rmV(\sigma^2,D)$ (cf. \eqref{def:rmvsphere}). Thus, with the choice of $M$ in \eqref{hi:chosem}, we conclude that
\begin{align}
\exp\{-M\underline{g}(n,\sigma^2+b_n)\}&=\frac{1}{\sqrt{n}}\label{achstep3}.
\end{align}
Hence, combining \eqref{usenondecreaseug}, \eqref{achstep1}, \eqref{achstep2}, \eqref{usebnha} and \eqref{achstep3}, we have shown  that
\begin{align}
\log M_{\rm{sp}}^*(n,\varepsilon,\sigma^2,D)
&\geq \frac{n}{2}\log \frac{\sigma^2}{D}+\sqrt{n\rmV(\sigma^2,\zeta)}\rmQ^{-1}(\varepsilon)+O(\log n).
\end{align}

\subsection{Ensemble Converse for the Spherical Codebook}
\label{sec:spheresecconverse}
We now show that the result in \eqref{spheresecond} is ensemble tight. From Stam's paper \cite[Eq. (4)]{stam1982limit}, the distribution of $\hatX_1$ is 
\begin{align}
f_{\hatX_1}(\hatx)=\frac{1}{\sqrt{\pi n P_\rmc}}\frac{\Gamma(\frac{n}{2})}{\Gamma(\frac{n-1}{2})}\bigg( 1-\frac{\hatx^2}{nP_\rmc}\bigg)^{\frac{n-3}{2}}1\{\hatx^2\leq nP_\rmc\}\label{pdfy1}.
\end{align}
Recall the definitions of $a_n$ in \eqref{def:an} and $b_n$ in \eqref{def:bn}. Define the sets 
\begin{align}
\calP&:=\{r\in\bbR:b_n<r-\sigma^2\leq a_n\}\label{def:calp},\\
\calQ&:=\{r\in\bbR:r+P_\rmc-D\geq 0\}\label{def:calq}.
\end{align}
Then, for any $z\in\calP\cap\calQ$ satisfying $\frac{\sqrt{n}(z+P_\rmc-D)}{2\sqrt{z}}\leq \sqrt{nP_\rmc}$, using the definition of $\Psi(\cdot)$ in \eqref{y1symm}, we obtain that
\begin{align}
\Psi(n,z)
&=\Pr\bigg\{\hatX_1\geq \frac{\sqrt{n}(z+P_\rmc-D)}{2\sqrt{z}}\bigg\}\\
&= \int_{\frac{\sqrt{n}(z+P_\rmc-D)}{2\sqrt{z}}}^{\sqrt{nP_\rmc}}\frac{1}{\sqrt{\pi n P_\rmc}}\frac{\Gamma(\frac{n}{2})}{\Gamma(\frac{n-1}{2})}\bigg(1-\frac{\hatx^2}{nP_\rmc}\bigg)^{\frac{n-3}{2}}\rmd\hatx\label{usepdfy1}\\
&\leq \int_{\frac{\sqrt{n}(z+P_\rmc-D)}{2\sqrt{z}}}^{\sqrt{nP_\rmc}}\frac{1}{\sqrt{\pi n P_\rmc}}\frac{\Gamma(\frac{n}{2})}{\Gamma(\frac{n-1}{2})}\bigg(1-\frac{(z+P_\rmc-D)^2}{4zP_\rmc}\bigg)^{\frac{n-3}{2}}\rmd\hatx\label{nonincreasing}\\
&\leq \frac{1}{\sqrt{\pi}}\frac{\Gamma(\frac{n}{2})}{\Gamma(\frac{n-1}{2})}\bigg(1-\frac{(z+P_\rmc-D)^2}{4zP_\rmc}\bigg)^{\frac{n-3}{2}}\label{enlargeinterval}\\
&=\frac{1}{\sqrt{\pi}}\frac{\Gamma(\frac{n}{2})}{\Gamma(\frac{n-1}{2})}\exp\!\bigg\{\!\frac{n-3}{2}\log \!\bigg(1\!-\!\frac{(z\!+P_\rmc\!-D)^2}{4zP_\rmc}\bigg)\!\bigg\}\\
&=:\overline{g}(n,z)\label{def:overlineg},
\end{align}
where \eqref{usepdfy1} follows from the definition in \eqref{pdfy1} and the condition that $z\in\calQ$ (cf. \eqref{def:calq}) which implies $\frac{\sqrt{n}(z+P_\rmc-D)}{2\sqrt{z}}\geq 0>-\sqrt{nP_\rmc}$, \eqref{nonincreasing} follows since $(1-\frac{\hatx^2}{nP_\rmc})$ is decreasing in $y$ for positive $y$, and \eqref{enlargeinterval} follows by enlarging the integration region (recall that $\frac{\sqrt{n}(z+P_\rmc-D)}{2\sqrt{z}}\geq 0$). Note that $\overline{g}(n,z)$ is decreasing in $z$ for $z\geq |P_\rmc-D|$ and $\overline{g}(n,z)\geq 0$ for all $z\in\calP$. Hence, for any $z\in\calP\cap\calQ$ such that $\frac{\sqrt{n}(z+P_\rmc-D)}{2\sqrt{z}}>\sqrt{nP_\rmc}$, we still have $\overline{g}(n,z)\geq \Psi(n,z)$. 

Recall that $Z=\frac{1}{n} {\|X^n\|^2}$ and $f_Z$ is the corresponding pdf of $Z$. Thus, according to \eqref{peforboth}, for $n$ sufficiently large, we have 
\begin{align}
\rmP_{\rme,n}(M)
&=\bbE_{X^n}[(1-\Pr\{d(X^n,\hatX^n)\leq D|X^n\})^M]\label{fromhere}\\
&=\int_0^\infty (1-\Psi(n,z))^M f_Z(z)\, \rmd z\label{usepsidef}\\
&\geq \int_0^\infty (1-\overline{g}(n,z))^M 1\{z\in\calP\cap\calQ\}f_Z(z)\, \rmd z\label{usedefg}\\
&\geq \int_{z\in\calP\cap\calQ} (1-\overline{g}(n,\sigma^2+b_n))^M f_Z(z)\, \rmd z\label{usenonincreaseoverg}\\
&\geq \int_{z\in\calP\cap\calQ} \exp\bigg\{-M\frac{\overline{g}(n,\sigma^2+b_n)}{1-\overline{g}(n,\sigma^2+b_n)}\bigg\}f_Z(z)\, \rmd z\label{lblog}\\
\nn&\geq \int_{z\in\calP\cap\calQ}\exp\bigg\{-M\frac{\overline{g}(n,\sigma^2+b_n)}{1-\overline{g}(n,\sigma^2+b_n)}\bigg\}\\*
&\qquad\quad\times 1\bigg\{M\frac{\overline{g}(n,\sigma^2+b_n)}{1-\overline{g}(n,\sigma^2+b_n)}\leq \frac{1}{\sqrt{n}}\bigg\} \, f_Z(z)\rmd z\\
\nn&\geq \bigg( 1-\frac{1}{\sqrt{n}}\bigg)\int_{z\in\calP\cap\calQ}\!\!\!
1\bigg\{M\frac{\overline{g}(n,\sigma^2+b_n)}{1-\overline{g}(n,\sigma^2+b_n)}\leq \frac{1}{\sqrt{n}}\bigg\}\\*
&\qquad\qquad\qquad\qquad\qquad\times f_Z(z)\rmd z\label{addconstraint}\\
&=\bigg( 1-\frac{1}{\sqrt{n}}\bigg)\times\Pr\bigg\{Z\in\calP\cap\calQ,M\leq \frac{1-\overline{g}(n,\sigma^2+b_n)}{\overline{g}(n,\sigma^2+b_n)}\frac{1}{\sqrt{n}}\bigg\}\\
\nn&\geq \bigg( 1-\frac{1}{\sqrt{n}}\bigg)\Pr\bigg\{Z\in\calP\cap\calQ,\log M\leq -\log 2
\\*
&\qquad\qquad\qquad\qquad\qquad-\log \overline{g}(n,\sigma^2+b_n)-\frac{1}{2}\log n\bigg\}\label{nlarge},
\end{align}
where \eqref{usepsidef} follows from the definition of $\Psi(n,z)$ in \eqref{y1symm}, \eqref{usedefg} follows by restricting $z\in\calP\cap\calQ$ and using the definition of $\overline{g}(\cdot)$ in \eqref{def:overlineg}, \eqref{usenonincreaseoverg} follows since $\overline{g}(n,z)$ is decreasing in $z$ for $z\in\calP\cap\calQ$, \eqref{lblog} follows since $(1-a)^M\geq \exp\{-M\frac{a}{1-a}\}$ for any $a\in[0,1)$, \eqref{addconstraint} follows since $M\frac{\overline{g}(n,z)}{1-\overline{g}(n,z)}\leq \frac{1}{\sqrt{n}}$, $\exp\{-a\}$ is decreasing in $a$, and $\exp\{-a\}\geq 1-a$ for $a\geq 0$, and \eqref{nlarge} follows since $\overline{g}(n,z)\leq \frac{1}{2}$ for $n$ large enough if $z>\sigma^2$.

Combining \eqref{def:overlineg}, \eqref{nlarge} and applying a Taylor expansion of $\overline{g}(n,\sigma^2+b_n)$ similarly to \eqref{tayloragain}, we conclude that for any $(n,M)$-code such that
\begin{align}
\log M
&\leq -\log 2-\frac{1}{2}\log n-\log \overline{g}(n,\sigma^2+b_n)\\*
&=n\bigg( \frac{1}{2}\log\frac{\sigma^2}{D}+\frac{b_n}{2\sigma^2}+O\bigg( \frac{\log n}{n}\bigg)\bigg)\label{usezinp},
\end{align}
we have
\begin{align}
\rmP_{\rme,n}(M)
&\geq \bigg(1-\frac{1}{\sqrt{n}}\bigg)\Pr\{Z\in\calP\cap\calQ\}\label{conversestep3}.
\end{align}
The following lemma is essential to complete  the converse proof.
\begin{lemma}
\label{concentrate}
Consider any source distribution $P_X$ such that \eqref{sourceconstraint} is satisfied and $\sigma^2<\infty$. Then, we have 
\begin{align}
\Pr\{Z\in\calP\cap\calQ\}
\geq \varepsilon+O\bigg( \frac{1}{\sqrt{n}}\bigg).
\end{align}
\end{lemma}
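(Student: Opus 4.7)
My strategy is to reduce the joint event $\{Z\in\calP\cap\calQ\}$ to just $\{Z\in\calP\}$ and then apply the Berry--Esseen theorem (Theorem \ref{berrytheorem}) to the i.i.d. sum $Z=\tfrac{1}{n}\sum_{i\in[n]}X_i^2$. The core observation is that for sufficiently large $n$ the set $\calP$ concentrates around $\sigma^2>D$, while $\calQ$ only excludes values below $2D-\sigma^2$, so $\calP$ is contained in $\calQ$ eventually.

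First I would verify that $\calP\subseteq \calQ$ for all sufficiently large $n$. Indeed, any $r\in\calP$ satisfies $r>\sigma^2+b_n$, whereas $r\in\calQ$ is equivalent to $r\geq 2D-\sigma^2$. Since $D<\sigma^2$, the constant $2D-2\sigma^2$ is strictly negative, while $b_n=\sqrt{\rmV/n}\,\rmQ^{-1}(\varepsilon)\to 0$. Hence $b_n>2D-2\sigma^2$ eventually, giving $\sigma^2+b_n>2D-\sigma^2$ and thus $\calP\subseteq\calQ$. From this step onward it suffices to bound $\Pr\{Z\in\calP\}$ from below.

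Next I would decompose
\begin{align}
\Pr\{Z\in\calP\}=\Pr\{Z\leq \sigma^2+a_n\}-\Pr\{Z\leq \sigma^2+b_n\},
\end{align}
and apply Theorem \ref{berrytheorem} to the i.i.d.\ zero-mean random variables $\{X_i^2-\sigma^2\}_{i\in[n]}$, which have variance $\rmV$ defined in \eqref{def:rmV} and finite third absolute moment $T<\infty$ defined in \eqref{def:T} under the moment assumption \eqref{sourceconstraint} (since $\bbE[X^6]<\infty$). Using the choice $b_n\sqrt{n/\rmV}=\rmQ^{-1}(\varepsilon)$, the second term evaluates to
\begin{align}
\Pr\{Z\leq \sigma^2+b_n\}=1-\varepsilon+O\bigg(\frac{1}{\sqrt{n}}\bigg).
\end{align}
For the first term, $a_n\sqrt{n/\rmV}=\sqrt{\log n}$, and the Gaussian tail bound $\rmQ(\sqrt{\log n})\leq \exp(-\tfrac{1}{2}\log n)=1/\sqrt{n}$ combined with Berry--Esseen yields
\begin{align}
\Pr\{Z\leq \sigma^2+a_n\}=1+O\bigg(\frac{1}{\sqrt{n}}\bigg).
\end{align}
Subtracting the two estimates gives $\Pr\{Z\in\calP\}=\varepsilon+O(1/\sqrt{n})$, which combined with $\calP\subseteq\calQ$ yields the claim.

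There is no real obstacle in this argument; the only point requiring some care is verifying the moment conditions needed to invoke Theorem \ref{berrytheorem} with a uniform constant, but \eqref{sourceconstraint} supplies exactly what is needed. Once these three ingredients (the inclusion $\calP\subseteq\calQ$, the Berry--Esseen expansion at $b_n$, and the sub-Gaussian tail bound at $a_n$) are in place, the proof is complete.
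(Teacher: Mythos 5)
Your proof is correct and takes the same approach the paper points to (the paper defers to \cite[Appendix A]{zhou2017refined} but notes the argument uses the Berry--Esseen theorem). The two key ingredients you identify — the eventual inclusion $\calP\subseteq\calQ$ (since $b_n\to 0$ while $2D-2\sigma^2=-2P_\rmc<0$), followed by a Berry--Esseen estimate of $\Pr\{b_n<Z-\sigma^2\leq a_n\}$ with the sub-Gaussian tail $\rmQ(\sqrt{\log n})\leq 1/\sqrt{n}$ killing the $a_n$ endpoint — are exactly what the concentration argument requires, and the moment condition $\bbE[X^6]<\infty$ from \eqref{sourceconstraint} does indeed supply the finite third absolute moment $T$ needed for the uniform Berry--Esseen constant.
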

The proof of Lemma \ref{concentrate} is available in~\cite[Appendix A]{zhou2017refined} using the Berry-Esseen theorem.

Combining the definition of $\rmV(\sigma^2,\zeta)$ in \eqref{def:rmvsphere}, the definition of $b_n$ in \eqref{def:bn}, the bounds in \eqref{usezinp}, \eqref{conversestep3}, and Lemma \ref{concentrate}, we conclude that 
\begin{align}
\log M_{\rm{sp}}^*(n,\varepsilon,\sigma^2,D)
\leq \frac{n}{2}\log \frac{\sigma^2}{D}+\sqrt{n\rmV(\sigma^2,\zeta)}\rmQ^{-1}(\varepsilon)+O(\log n).
\end{align}


\subsection{Preliminaries for the  i.i.d. Gaussian Codebook}
\label{sec:preiid}
Now we consider the i.i.d.\ Gaussian codebook (cf.~\eqref{iidcodebook}). Note that $\Pr\{d(x^n,\hatX^n)\leq D\}$ depends on $x^n$ only through its norm $\|x^n\|$ (cf. \cite{ihara2000error}). Given any   $x^n$ such that $\frac{1}{n} {\|x^n\|^2} =z$, define
\begin{align}
\Upsilon(n,z)
&:=\Pr\{d(x^n,\hatX^n)\leq D\}\label{def:upsilonnz}.
\end{align}
From \eqref{iidcodebook}, we obtain that
\begin{align}
f_{\hatX^n}^{\rm{iid}}(\hatX^n)&=\frac{1}{(2\pi(\sigma^2-D))^{n/2}}\exp\bigg\{-\frac{\|\hatX^n\|^2}{2(\sigma^2-D)}\bigg\}.
\end{align}
Since $f_{\hatX^n}^{\rm{iid}}(\hatX^n)$ is decreasing in $\|\hatX^n\|$, we conclude that $\Upsilon(n,z)$ is a decreasing function of $z$ (cf. \cite{ihara2000error}). Using the definition of $\Upsilon(\cdot)$ in \eqref{def:upsilonnz}, we have
\begin{align}
\Upsilon(n,z)
&=\Pr\{\|x^n-\hatX^n\|^2\leq nD\}\\
&=\Pr\bigg\{\sum_{i=1}^n(Y_i-\sqrt{z})^2\leq nD\bigg\}\label{powerinvariant}\\*
&=\Pr\bigg\{-\frac{1}{nP_\rmc}\sum_{i=1}^n(Y_i-\sqrt{z})^2\geq -\frac{D}{P_\rmc}\bigg\}.
\end{align}
where \eqref{powerinvariant} follows since the probability depends on $x^n$ only through its power and thus we can choose $x^n$ such that  $x_i=\sqrt{z}$ for all $i\in[1:n]$ (cf.~\cite[Eq. (94)]{scarlett2017mismatch}). For the i.i.d.\ Gaussian codebook, each $Y_i\sim\calN(0,P_\rmc)$ and hence $\frac{1}{P_\rmc} {(Y_i-\sqrt{z})^2}$ is distributed according to a non-central $\chi^2$ distribution with one degree of freedom.

Given $s\in\bbR$ and any non-negative number $z$, define
\begin{align}
\kappa(s,z)&:=\frac{(P_\rmc(1+2s)+2z)^2}{P_\rmc(1+2s)^3}\label{def:lambda2s},\\
R_{\rm{iid}}(s,z)
&:=\frac{1}{2}\log(1+2s)+\frac{sz}{(1+2s)(\sigma^2-D)}-\frac{sD}{\sigma^2-D}\label{def:riidsapha},\\
s^*(z)&:=\max\bigg\{0,\frac{\sigma^2-3D+\sqrt{(\sigma^2-D)^2+4zD}}{4D}\bigg\}\label{def:sstar}.
\end{align}

Using the result of \cite[Section 2.2.12]{tanizaki2004computational} concerning the cumulant generating function of a non-central $\chi^2$ distribution, the definition of $R_{\rm{iid}}(\cdot)$ in \eqref{def:riidsapha}, the definition of $s^*(\cdot)$ in \eqref{def:sstar}, and the Bahadur-Ranga Rao (strong large deviations) theorem for non-lattice random variables~\cite[Theorem~3.7.4]{dembo2009large},  we obtain  
\begin{align}
\Upsilon(n,z)
&\sim\frac{\exp\{-nR_{\rm{iid}}(s^*(z),z)\}}{s^*(z)\sqrt{\kappa(s^*(z),z)2\pi n}} ,\quad n\to\infty\label{iidach1}.
\end{align}

\subsection{Achievability Proof for the I.I.D. Gaussian Codebook}
\label{sec:iidach}
According to \eqref{peforboth}, the excess-distortion probability under the i.i.d.\ Gaussian codebook can be upper bounded as follows:
\begin{align}
\rmP_{\rme,n}(M)
&=\bbE\Big[(1-\Pr\{d(X^n,\hatX^n)\leq D\,\big|\,X^n\})^M \Big]\label{ce1}\\
&=\int_0^\infty (1-\Upsilon(n,z))^Mf_Z(z)\, \rmd z\\
\nn&\leq \int_0^{\sigma^2-a_n}f_Z(z)\, \rmd z+\int_{\sigma^2+b_n}^\infty f_Z(z)\, \rmd z\\*
&\qquad+\int_{\sigma^2-a_n}^{\sigma^2+b_n} (1-\Upsilon(n,z))^Mf_Z(z)\, \rmd z\label{uppby1}\\
\nn&\leq \Pr\{Z<\sigma^2-a_n\}+\Pr\{Z>\sigma^2+b_n\}\\*
&\qquad+\int_{\sigma^2-a_n}^{\sigma^2+b_n} \exp\{-M\Upsilon(n,z)\}f_Z(z)\, \rmd z\label{useineq2}\\
&\leq \Pr\{Z<\sigma^2-a_n\}+\Pr\{Z>\sigma^2+b_n\}+\exp\{-M\Upsilon(n,\sigma^2+b_n)\}\label{iidach2},
\end{align}
where \eqref{uppby1} follows since $\Upsilon(n,z)\geq 0$, \eqref{useineq2} follows since $(1-a)^M\leq \exp\{-Ma\}$, and \eqref{iidach2} follows since $\Upsilon(n,z)$ is decreasing in $z$ and $\Pr\{\sigma^2-a_n\leq Z\leq \sigma^2+b_n\}\leq 1$.

Using the definitions of $R_{\rm{iid}}(\cdot)$ in \eqref{def:riidsapha} and $s^*(\cdot)$ in \eqref{def:sstar}, we have
\begin{align}
\nn R_{\rm{iid}}(s^*(\sigma^2+b_n),\sigma^2+b_n)
&=\frac{1}{2}\log \frac{P_\rmc+\sqrt{P_\rmc^2+4(\sigma^2+b_n)D}}{2D}\\*
\nn&\qquad+\frac{z(P_\rmc-2D+\sqrt{P_\rmc^2+4(\sigma^2+b_n)D})}{2P_\rmc(P_\rmc+\sqrt{P_\rmc^2+4(\sigma^2+b_n)D})}\\*
&\qquad-\frac{P_\rmc-2D+\sqrt{P_\rmc^2+4(\sigma^2+b_n)D}}{4P_\rmc}\label{concavebys*}\\
&=\frac{1}{2}\log\frac{\sigma^2}{D}+\frac{b_n}{2\sigma^2}+O(b_n^2)\label{taylor3}\\
&=\frac{1}{2}\log\frac{\sigma^2}{D}+\sqrt{\frac{\rmV(\sigma^2,\zeta)}{n}}\rmQ^{-1}(\varepsilon)+O\bigg( \frac{1}{n}\bigg),\label{aftertaylot3}
\end{align}
where \eqref{taylor3} follows from a Taylor expansion at $z=\sigma^2$ and recalling that $P_\rmc=\sigma^2-D$, and \eqref{aftertaylot3} follows from the definitions of $\rmV(\sigma^2,\zeta)$ in \eqref{def:rmvsphere} and $b_n$ in \eqref{def:bn}.

Choose $M$ such that
\begin{align}
\log M\geq -\log \Upsilon(n,\sigma^2+b_n)+\log\bigg( \frac{1}{2}\log n\bigg).
\end{align}
Then, we have
\begin{align}
\exp\{-M\Upsilon(n,\sigma^2+b_n)\}
\leq \frac{1}{\sqrt{n}}\label{iidach3}.
\end{align}
Furthermore, using the result in \eqref{iidach1} and \eqref{aftertaylot3}, we obtain
\begin{align}
\!\!\!\!\log M
&\geq \frac{n}{2}\log\frac{\sigma^2}{d}+\sqrt{n\rmV(\sigma^2,\zeta)}\rmQ^{-1}(\varepsilon)+O(\log n)\label{iidachlogm4second}.
\end{align}
Similarly to the proof of Lemma \ref{concentrate}, using the Berry-Esseen theorem and the definition of $a_n$ in \eqref{def:an}, we obtain
\begin{align}
\Pr\{Z<\sigma^2-a_n\}
&=\Pr\bigg\{\frac{1}{n}\sum_{i=1}^n(X_i^2-\sigma^2)<\sqrt{\rmV\frac{\log n}{n}}\bigg\}\\
&\leq \rmQ(\sqrt{\log n})+\frac{6T}{\sqrt{n}\, \rmV^{3/2}}=O\bigg( \frac{1}{\sqrt{n}}\bigg)\label{iidach4}.
\end{align}

Hence, combining \eqref{achstep2}, \eqref{iidach2}, \eqref{iidach3}, \eqref{iidachlogm4second} and \eqref{iidach4}, we conclude that 
\begin{align}
\log M_{\rm{iid}}^*(n,\varepsilon,\sigma^2,D)
&\geq \frac{n}{2}\log \frac{\sigma^2}{D}+\sqrt{n\rmV(\sigma^2,\zeta)}\rmQ^{-1}(\varepsilon)+O(\log n).
\end{align}

\subsection{Ensemble Converse for the i.i.d. Gaussian Codebook}
\label{sec:iidconverse}

The ensemble converse proof for the i.i.d.\ Gaussian codebook is omitted since it is similar to the ensemble converse proof for the spherical codebook in Section \ref{sec:spheresecconverse} starting from \eqref{fromhere} except for the following two points: i) replace $\overline{g}(n,z)$ with $\Upsilon(n,z)$, and ii) replace $\calP\cap\calQ$ with $\calP$.


\chapter{The Guass-Markov Source}
\label{chap:gm}

This chapter concerns lossy data compression of the Gauss-Markov source, which is a Gaussian source with first-order Markovian memory~\cite{davisson1072proc}. This chapter generalizes the rate-distortion study for memoryless sources pioneered by~Shannon~\cite{shannon1959coding} to the sources with memory. Such analyses find practical applications in image and video applications since the pixels and frames are usually correlated. The Gauss-Markov source is a special case of the Gaussian autoregressive source~\cite{kolmogorov1956IRE,gray1970tit}.

In contrast to memoryless sources, the Shannon theoretical study for sources with memory are very limited. Kolmogorov~\cite{kolmogorov1956IRE} initiated the study by deriving the rate-distortion function for a stationary Gaussian autoregressive source under the quadratic distortion measure by using an orthogonal coordinate transformation~(see also \cite{davisson1072proc}) that decomposes the autoregressive source into memoryless sources. Berger generalized the result in~\cite{kolmogorov1956IRE} to the Wiener process, which is a non-stationary case of the Gaussian autoregressive source. Gray~\cite{gray1970tit} generalized the result in~\cite{kolmogorov1956IRE} to general non-stationary Gaussian autoregressive processes and first-order binary symmetric Markov sources (BSMS). Subsequently, Gray~\cite{gray1971tit} generalized his result for BSMS in~\cite{gray1970tit} to finite-state finite-alphabet Markov sources. A critical result of Gray states that for distortions less than a certain value, the achievable rate for Markov sources is identical to the rate-distortion function for memoryless sources that generate the autoregressive sources. For further discussions on rate-distortion theory of sources with memory, readers could refer to \cite[Section IX]{berger1998lossy} or \cite[Section II.D]{gray1998tit} for more details.

All the above results are insightful. However, all tight results are asymptotic and only provide exact guidance when one compresses an infinitely long source sequence. It is natural to wonder what is the penalty in the practical finite blocklength regime. To date, the only known such result is the second-order asymptotics for a Gauss-Markov source by Tian and Kostina, who considered both stationary~\cite{tian2019tit} and nonstationary cases~\cite{tian2021nonstationary}. In this chapter, we present the results in~\cite{tian2019tit,tian2021nonstationary} with proof sketches.

\section{Problem Formulation and Asymptotic Result}

The problem formulation is exactly the same to the rate-distortion problem in Chapter \ref{chap:rd} except that we consider a Gauss-Markov source to be specified. Let $a\in\bbR_+$ be a non-negative real number and let $\calN(0,\sigma^2)$ be the Gaussian distribution with mean $0$ and variance $\sigma^2$. The Gauss-Markov source $\{X_i\}_{i\in\bbN}$ satisfies the following equation
\begin{align}
X_i=aX_{i-1}+Z_i,~\forall~i\geq 1\label{def:gaussmarkov},
\end{align}
where $X_0=0$ and $\{Z_i\}_{\in\bbN}$ is a GMS generated i.i.d. from $\sim\calN(0,\sigma^2)$. Note that when $a=0$, the random process $\{X_i\}_{i\in\bbN}$ reduces to a Gaussian memoryless source. The Gauss-Markov source is $\{X_i\}_{i\in\bbN}$ stationary when $a\in(0,1)$ and becomes nonstationary when $a\geq 1$. The special case of $a=1$ is also known as the Wiener process~\cite{berger1970tit}. Let $P_{X^n}$ denote the distribution of the Gauss-Markov source with length-$n$.

Recall from Definition \ref{def:code4rd} that an $(n,M)$-code consists of an encoder $f:\bbR^n\to[M]$ and a decoder $\phi:[M]\to\bbR^n$ that compresses a length-$n$ source sequence $X^n$ into a index over $[M]$ and reproduces it as $\hatX^n$ from the compressed index, respectively. Furthermore, let $\rmP_{\rme,n}(D)$ be the excess-distortion probability when compressing a length-$n$ source sequence under the quadratic distortion measure, i.e.,
\begin{align}
\rmP_{\rme,n}(D)=\Pr\{\|\hatX^n-X^n\|_2^2>D\}.
\end{align}
For any $\varepsilon\in(0,1)$ and target distortion level $D\in\bbR_+$, recall that $M^*(n,D,\varepsilon|a,\sigma^2)$ \eqref{def:M*} is the non-asymptotic fundamental limit of the rate-distortion problem and corresponds to the minimal number of codewords $M$ such that one can construct an $(n,M)$-code with excess-distortion probability satisfying $\rmP_{\rme,n}(D)\leq \varepsilon$.

Gray~\cite[Section II]{gray1970tit} characterized the first-order asymptotic coding rate $R^*(D|a,\sigma^2):=\lim_{\varepsilon\to 0}\lim_{n\to\infty}\frac{1}{n}\log M^*(n,D,\varepsilon|a,\sigma^2)$ for the Gaussian autoregressive source that includes the Gauss-Markov source as a special case. To present Gray's result when specialized to the Gauss-Markov source, we need the following definitions.  Define the function $h:[-\pi,\pi]\to\bbR+$ such that
\begin{align}
h(w|a,\sigma^2):=\frac{\sigma^2}{1+a^2-2a\cos w}\label{def:hw}.
\end{align}
Let $\theta_D$ be the solution of $\theta$ to the following equality
\begin{align}
\frac{1}{2\pi}\int_{-\pi}^{\pi}\min\{\theta,h(w|a,\sigma^2)\}\rmd w=D\label{reversewaterfilling}.
\end{align}
We can then define the following rate-distortion function $R_{\rm{GM}}(a,\sigma^2,D)$ for the Gauss-Markov source:
\begin{align}
R_{\rm{GM}}(a,\sigma^2,D)
&=\frac{1}{2\pi}\int_{-\pi}^{\pi}\bigg|\frac{1}{2}\log \frac{h(w|a,\sigma^2)}{\theta_D}\bigg|^+\rmd w\label{fromGray}.
\end{align}
Eq. \eqref{reversewaterfilling} is named reverse water filling since one needs to find a water level $\theta_D$ to satisfy the distortion constraint $D$.

With these definitions, the first-order asymptotic rate $R^*(D|a,\sigma^2)$ is characterized in the following theorem.
\begin{theorem}
For the Gauss-Markov source, it follows that
\begin{align}
R^*(D|a,\sigma^2)=R_{\rm{GM}}(a,\sigma^2,D).
\end{align}
\end{theorem}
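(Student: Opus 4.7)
The plan is to reduce the problem to one about independent (non-identically distributed) Gaussian sources via an orthogonal transformation that diagonalizes the covariance matrix of $X^n$, apply the classical reverse water-filling result for parallel Gaussian sources, and then pass to the limit using a Szegő-type spectral theorem to identify the discrete sum over eigenvalues with the integral over the spectral density $h(w|a,\sigma^2)$ appearing in \eqref{fromGray}.

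First, I would express the recursion \eqref{def:gaussmarkov} in matrix form as $A_n X^n = Z^n$, where $A_n$ is the lower-bidiagonal matrix with $1$ on the diagonal and $-a$ on the subdiagonal. Since $X_0 = 0$ and $Z^n \sim \calN(\bzero_n, \sigma^2 \bI_n)$, the vector $X^n$ is zero-mean Gaussian with covariance $\bSigma_n = \sigma^2 (A_n^\top A_n)^{-1}$. Diagonalizing $\bSigma_n = U_n \Lambda_n U_n^\top$ with eigenvalues $\lambda_{n,1},\ldots,\lambda_{n,n}$ and applying the orthogonal change of coordinates $\tilde{X}^n = U_n^\top X^n$ yields an equivalent coding problem: $\tilde{X}^n$ has independent components $\tilde{X}_i \sim \calN(0,\lambda_{n,i})$, and the squared-error distortion is preserved because $U_n$ is orthogonal. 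Thus $M^*(n,D,\varepsilon|a,\sigma^2)$ equals the analogous quantity for this parallel Gaussian source.

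Second, I would invoke the standard rate-distortion theorem for parallel independent Gaussian sources: for each $n$ and each $D$, the asymptotic (in blocklength of an inner code for the parallel problem, which here is just one shot across components) minimum rate with vanishing error probability is
\begin{align}
R_n(D) = \frac{1}{n}\sum_{i=1}^n \bigg|\frac{1}{2}\log\frac{\lambda_{n,i}}{\theta_n}\bigg|^+,
\end{align}
where $\theta_n$ is the unique water level satisfying $\frac{1}{n}\sum_i \min(\theta_n,\lambda_{n,i}) = D$. Achievability follows by applying an optimal Gaussian code to each eigen-coordinate (with rate $\frac{1}{2}\log^+(\lambda_{n,i}/\theta_n)$ and distortion $\min(\theta_n,\lambda_{n,i})$), then concatenating; the converse follows from standard Shannon-type information inequalities together with the conditional independence across the diagonalized coordinates. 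This gives $\lim_{\varepsilon\to 0}\lim_{n\to\infty}\frac{1}{n}\log M^*(n,D,\varepsilon|a,\sigma^2) = \lim_{n\to\infty} R_n(D)$.

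Third, I would invoke the Grenander-Szegő limit theorem: for $|a|<1$ the matrix $\bSigma_n$ is asymptotically Toeplitz with symbol $h(w|a,\sigma^2) = \sigma^2/(1+a^2-2a\cos w)$, so the empirical spectral measure $\frac{1}{n}\sum_i \delta_{\lambda_{n,i}}$ converges weakly to the push-forward of the uniform distribution on $[-\pi,\pi]$ under $w \mapsto h(w|a,\sigma^2)$. Applied to the bounded continuous test functions $\lambda \mapsto \min(\theta,\lambda)$ and $\lambda \mapsto |\frac{1}{2}\log(\lambda/\theta)|^+$, this gives $\theta_n \to \theta_D$ (defined by \eqref{reversewaterfilling}) and $R_n(D) \to R_{\rm{GM}}(a,\sigma^2,D)$, completing the proof.

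The main obstacle will be the nonstationary regime $a \geq 1$, where $\bSigma_n$ is not asymptotically Toeplitz (its entries grow with the index because of the boundary condition $X_0 = 0$), so Grenander-Szegő does not apply as stated. The workaround is to work with the inverse covariance $\bSigma_n^{-1} = \sigma^{-2} A_n^\top A_n$, which is tridiagonal and genuinely Toeplitz up to the last diagonal entry; its eigenvalues concentrate on $1/h(w|a,\sigma^2)$, and inverting back recovers the same limiting spectral measure for $\bSigma_n$ up to a vanishing fraction of eigenvalues. Carefully quantifying this boundary perturbation, verifying continuity of the reverse water-filling map $\theta \mapsto \frac{1}{2\pi}\int \min(\theta,h) \, \rmd w$, and confirming that the atypical eigenvalues contribute a negligible amount to both the rate sum and the distortion sum constitutes the most delicate part of the argument.
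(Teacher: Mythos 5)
Your high-level plan --- decorrelate $X^n$ via the eigendecomposition of $\bSigma_n = \sigma^2(A_n^\top A_n)^{-1}$, apply reverse water-filling for the resulting parallel Gaussian coordinates, and pass to the limit via Szeg\H{o}'s theorem --- is the correct standard route, and it coincides with the decorrelation the paper itself uses for the second-order analysis (Theorem~\ref{second4gaussmarkov}); the paper does not prove this theorem but quotes it from Gray, so the comparison is to the standard literature. A minor imprecision in your Step~2: you cannot ``apply an optimal Gaussian code to each eigen-coordinate'' because each coordinate is a single scalar and the scalar Gaussian rate-distortion function is not achievable at blocklength one. Achievability must instead treat the entire length-$n$ vector $\tilde{X}^n$ as the block for a vector quantizer, e.g., via a random covering argument for the product source $\prod_i\calN(0,\lambda_{n,i})$ in the spirit of Theorem~\ref{ach:fbl}; the quantity $R_n(D)$ is the $n$-th order information rate-distortion function, and the operational equality $\tfrac1n\log M^*\to\lim_n R_n(D)$ then follows from the usual achievability and data-processing arguments, not from scalar-per-coordinate coding.

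The substantive gap is your claim that the atypical eigenvalues ``contribute a negligible amount to the rate sum'' in the nonstationary regime. This fails for $a>1$. Since $A_n^{-1}$ is lower-triangular with entries $a^{i-j}$ for $i\ge j$, its operator norm grows like $a^{n-1}$, so exactly one eigenvalue of $\bSigma_n$ blows up like $\sigma^2 a^{2(n-1)}$, and that single outlier contributes $\tfrac{1}{2n}\log\bigl(\sigma^2 a^{2(n-1)}/\theta_n\bigr)\to\log a$ to the per-symbol rate --- a $\Theta(1)$ contribution, not $o(1)$. One can see it cannot be dropped from the Shannon lower bound: since $\det A_n=1$, the differential entropy rate of the source is $\tfrac12\log(2\pi e\sigma^2)$ for \emph{every} $a$, so $R^*(D)\ge\tfrac12\log(\sigma^2/D)$ for small $D$; meanwhile, by Jensen's formula, $\tfrac{1}{4\pi}\int_{-\pi}^{\pi}\log(1+a^2-2a\cos w)\,\rmd w=\log^+|a|$, so the water-filling integral alone equals $\tfrac12\log(\sigma^2/D)-\log^+|a|$ below the spectral floor, which for $a>1$ is strictly below the Shannon lower bound. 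Correctly accounting for the outlier supplies precisely the missing $+\log|a|$. For $a\in(0,1)$ there is no outlier and your argument goes through; for $a=1$ the outlier grows only polynomially (like $n^2$) and is genuinely negligible; for $a>1$ the outlier carries a non-vanishing $\log a$ that the negligibility heuristic would incorrectly discard.
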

When $a=0$, we have $h(w|a,\sigma^2)=\sigma^2$ and $R_{\rm{GM}}(a,\sigma^2,D)=R_{\rmG}(\sigma^2,D)=|\frac{1}{2}\log\frac{\sigma^2}{D}|^+$ is the rate-distortion function for the GMS with distribution $\calN(0,\sigma^2)$ under the quadratic distortion measure~\cite{shannon1959coding}. Furthermore, Gray~\cite[Eq. (24)]{gray1970tit} showed that for any $a\geq 0$,
\begin{align}
R_{\rm{GM}}(a,\sigma^2,D)
\geq R_{\rmG}(\sigma^2,D)
\end{align}
with equality if and only if $D\in[0,D_\rmc]$ where the critical distortion level, 
\begin{align}
D_\rmc
&=\min_{w\in[-\pi,\pi]}h(w|a,\sigma^2)=\frac{\sigma^2}{1+a^2}\label{def:Dc}.
\end{align}
Thus, at low distortion levels, the rate-distortion function of the Gauss-Markov source equals that of the GMS.

\section{Second-Order Asymptotics}
Recall the definition of $h(w|a,\sigma^2)$ in \eqref{def:hw} and the definition of $\theta_D$ as the reverse waterfilling level in \eqref{reversewaterfilling}. Define the following dispersion function 
\begin{align}
\rmV_{\rm{GM}}(a,\sigma^2,D)
&:=\frac{1}{\pi}\int_{-\pi}^{\pi}\min\bigg\{1,\bigg(\frac{h(w|a,\sigma^2)}{\theta_D}\bigg)^2\bigg\}\rmd w.
\end{align}
Furthermore, let 
\begin{align}
D_{\rm{max}}:=
\left\{
\begin{array}{ll}
\frac{\sigma^2}{1-a^2}&\mathrm{if~}a<1,\\
\frac{\sigma^2}{a^2-1}&\mathrm{if~}a>1.
\end{array}
\right.
\end{align}

Tian and Kostina~\cite{tian2019tit,tian2021nonstationary} proved the following result.
\begin{theorem}
\label{second4gaussmarkov}
For any $\varepsilon\in(0,1)$ and any $D\in(0,D_{\rm{max}})$,
\begin{align}
\log M^*(n,D,\varepsilon|a,\sigma^2)
&=nR_{\rm{GM}}(a,\sigma^2,D)+\sqrt{n\rmV_{\rm{GM}}(a,\sigma^2,D)}\rmQ^{-1}(\varepsilon)+o(\sqrt{n}).
\end{align}
\end{theorem}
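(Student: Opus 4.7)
My strategy follows Kolmogorov's classical diagonalization idea: decouple the correlated Gauss-Markov source via an orthogonal transformation into independent but non-identically distributed Gaussian coordinates, reduce the problem to a parallel Gaussian rate-distortion problem, apply the non-asymptotic machinery of Chapter~\ref{chap:rd} in this transformed setting, and finally convert finite-$n$ eigenvalue sums into the spectral-density integrals defining $R_{\rm{GM}}$ and $\rmV_{\rm{GM}}$.

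\textbf{Step 1 (diagonalization).} Let $\bSigma_n$ be the $n\times n$ covariance matrix of $X^n$, and write its spectral decomposition $\bSigma_n=\bU_n\bLambda_n\bU_n^{\mathrm{T}}$ with $\bLambda_n=\diag(\lambda_{1,n},\ldots,\lambda_{n,n})$. Because the quadratic distortion $d(x^n,\hatx^n)=\tfrac{1}{n}\|x^n-\hatx^n\|^2$ is invariant under the orthogonal map $\bU_n^{\mathrm{T}}$, the rate-distortion problem for $X^n$ is equivalent to that for the transformed source $\tilX^n:=\bU_n^{\mathrm{T}}X^n$, whose coordinates are independent with $\tilX_i\sim\calN(0,\lambda_{i,n})$. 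Thus the problem becomes one of jointly compressing $n$ independent Gaussian scalars with possibly different variances.

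\textbf{Step 2 (parallel Gaussian RD and tilted information density).} For the parallel Gaussian source I would invoke the finite-$n$ reverse waterfilling principle: choose $\theta_n$ so that $\tfrac{1}{n}\sum_{i\in[n]}\min\{\theta_n,\lambda_{i,n}\}=D$, and set
\begin{align}
R_n(D)&:=\frac{1}{n}\sum_{i\in[n]}\Big|\tfrac{1}{2}\log\tfrac{\lambda_{i,n}}{\theta_n}\Big|^+,\\
\rmV_n(D)&:=\frac{1}{n}\sum_{i\in[n]}\tfrac{1}{2}\min\Big\{1,\big(\tfrac{\lambda_{i,n}}{\theta_n}\big)^2\Big\}.
\end{align}
The scalar analysis of Section~\ref{sec:dtilted4rd} shows that the $D$-tilted information density for $\tilX^n$ decomposes as a sum of independent coordinatewise terms $\jmath_i(\tilX_i|D)$, with $\bbE[\jmath_i(\tilX_i|D)]$ and $\Var[\jmath_i(\tilX_i|D)]$ matching the $i$-th summand of $R_n(D)$ and $\rmV_n(D)$ respectively, and with uniformly bounded third absolute moment as long as the eigenvalues are bounded above (which I argue in Step~4).

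\textbf{Step 3 (non-asymptotic bounds plus i.n.i.d. Berry-Esseen).} I would extend Theorems~\ref{ach:fbl} and~\ref{converse:fbl} to the parallel Gaussian setting. For achievability, pick $P_{\hatX}^{*,n}=\prod_{i\in[n]}\calN(0,\lambda_{i,n}-\min\{\theta_n,\lambda_{i,n}\})$ (the reverse-waterfilling output distribution) and use the scalar Gaussian lossy AEP of Lemma~\ref{lossy:AEP} coordinatewise to control $\log P_{\hatX}^{*,n}(\calB_D(\tilX^n))$; this yields an achievability bound of the same form as in Chapter~\ref{chap:rd} but with $\sum_i\jmath_i(\tilX_i|D)$ in place of $\sum_i\jmath(X_i|D,P_X)$. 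The converse bound extends analogously using the product structure. Applying Theorem~\ref{berrytheorem4general} (i.n.i.d.\ Berry-Esseen) to this sum then delivers
\begin{align}
\log M^*(n,D,\varepsilon|a,\sigma^2)=nR_n(D)+\sqrt{n\rmV_n(D)}\,\rmQ^{-1}(\varepsilon)+o(\sqrt{n}),
\end{align}
provided that the average third absolute moment $\frac{1}{n}\sum_i T_i$ is bounded and $\rmV_n(D)$ is bounded away from $0$.

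\textbf{Step 4 (spectral convergence) and main obstacle.} It remains to show $nR_n(D)=nR_{\rm{GM}}(a,\sigma^2,D)+o(\sqrt{n})$ and $\rmV_n(D)\to\rmV_{\rm{GM}}(a,\sigma^2,D)$. In the stationary case $|a|<1$, $\bSigma_n$ is a symmetric Toeplitz-like matrix generated by the spectral density $h(\cdot|a,\sigma^2)$ of \eqref{def:hw}, and Szegő's first limit theorem gives $\frac{1}{n}\sum_{i\in[n]}g(\lambda_{i,n})\to \frac{1}{2\pi}\int_{-\pi}^{\pi}g(h(w|a,\sigma^2))\,\rmd w$ for any continuous $g$. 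The honest difficulty is quantitative: Szegő's theorem as stated only gives $n\cdot o(1)$ convergence, whereas the second-order term requires an $o(1/\sqrt{n})$ rate for the integrand $g_{\theta}(\lambda)=|\tfrac{1}{2}\log(\lambda/\theta)|^+$. The hardest part of the proof is therefore a careful quantitative eigenvalue analysis: one exploits (i) the near-explicit tridiagonal structure of $\bSigma_n^{-1}$ arising from the AR(1) recursion~\eqref{def:gaussmarkov}, which gives eigenvalues of the form $h(w_{i,n}|a,\sigma^2)+O(1/n)$ for a regular grid of frequencies $w_{i,n}$, and (ii) an independent analysis of the implicit water level, showing $|\theta_n-\theta_D|=O(1/n)$ via the implicit function theorem applied to \eqref{reversewaterfilling}. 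Combining (i)-(ii) with a uniform Lipschitz bound on $g_\theta$ on compact subsets of the support of $h$ then yields the required $o(\sqrt{n})$ approximation. In the nonstationary regime $a\ge 1$ (covered in~\cite{tian2021nonstationary}), $\bSigma_n$ is no longer Toeplitz and its largest eigenvalues grow polynomially in $n$; one must separate the ``active'' coordinates (those with $\lambda_{i,n}>\theta_n$, which dominate both $R_n(D)$ and $\rmV_n(D)$ whenever $D<D_{\max}$) from a vanishing fraction of exceptional coordinates whose contribution is controlled by the assumption $D\in(0,D_{\max})$. Finally, a Taylor expansion of $\rmQ^{-1}$ around $\varepsilon$ absorbs the spectral approximation errors into the $o(\sqrt{n})$ remainder.
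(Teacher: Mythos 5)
Your proposal follows essentially the same strategy as the paper (and the underlying analysis of Tian and Kostina~\cite{tian2019tit,tian2021nonstationary}): diagonalize the Gauss-Markov source into independent Gaussian coordinates, reduce to a parallel Gaussian rate-distortion problem, apply the Chapter~\ref{chap:rd} machinery together with the i.n.i.d.\ Berry-Esseen theorem (Theorem~\ref{berrytheorem4general}), and then control the discrepancy between finite-$n$ eigenvalue sums and the spectral integrals defining $R_{\rm{GM}}$ and $\rmV_{\rm{GM}}$. You also correctly pinpoint the quantitative spectral approximation in Step~4 as the technical heart, as well as the near-tridiagonal structure of $\bSigma_n^{-1}=\sigma^{-2}\bA^\rmT\bA$ as the feature that makes the eigenvalue control feasible. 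The one place the bookkeeping differs is the water level: you propose to track $|\theta_n-\theta_D|=O(1/n)$ directly and compare coordinatewise, whereas the paper keeps the limiting $\theta_D$ fixed, introduces the surrogate tilted information density $\jmath_{\rm{alt}}$ evaluated at the induced distortion $D_n=\frac{1}{n}\sum_i\min\{\theta_D,\sigma_i^2\}$, and invokes the quantitative bounds of \cite[Theorems 4 and 10]{tian2019tit} showing that $nR_{\rm{GM}}$ and $\sqrt{n\rmV_{\rm{GM}}}$ are within $O(1)$ of the finite-$n$ moment sums and that $|\jmath_{\rm{GM}}-\jmath_{\rm{alt}}|$ is $O(1)$ with probability $1-1/n$; both organizations deliver the same estimates.
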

Note that the non-stationary case of $a=1$ for the Wiener process is not addressed by Tian and Kostina due to a technical challenge pointed out in~\cite[Footnote 1]{tian2021nonstationary}. The proofs of the stationary case when $a\in(0,1)$ and the non-stationary case are available in \cite{tian2019tit} and \cite{tian2021nonstationary}, respectively. Generally speaking, in both cases, the proof follows from generalizations of the non-asymptotic bounds for the rate-distortion problem in Chapter \ref{chap:rd} with proper modifications. In the achievability part, a generalization of the lossy AEP in Lemma \ref{lossy:AEP} to the Gauss-Markov source is critical and in both directions, a decomposition of the Gauss-Markov source into independent source is vital. In the next section, we provide a proof sketch for the stationary case.

Theorem \ref{second4gaussmarkov} refined the classical first-order asymptotic result of Gray~\cite{gray1970tit} for the Gauss-Markov source by deriving the exact order and coefficient of the second-order coding rate. The dispersion function $\rmV_{\rm{GM}}(a,\sigma^2,D)$ follows from the same reverse waterfilling solution as the rate-distortion function $R_{\rm{GM}}(a,\sigma^2,D)$. Tian and Kostina showed that the dispersion term $\rmV_{\rm{GM}}(a,\sigma^2,D)$ relates to the dispersion function $\rmV_{\rmG}(\sigma^2,D)=\frac{1}{2}$ for the GMS $\calN(0,\sigma^2)$~\cite{kostina2012fixed,ingber2011dispersion}, analogously to how the rate-distortion function of both cases are related~\cite[Eq. (24)]{gray1970tit}. Specifically, it holds that
\begin{align}
\rmV_{\rm{GM}}(a,\sigma^2,D)\leq \rmV_{\rmG}(\sigma^2,D),
\end{align}
with equality if and only if $D\in(0,D_\rmc)$, where $D_\rmc$ is the critical distortion level defined in \eqref{def:Dc}. Therefore, Theorem \ref{second4gaussmarkov}, combined with \cite[Eq. (24)]{gray1970tit}, shows that at low distortion levels $D\in(0,D_\rmc)$, the second-order coding rate of the Gauss-Markov source in \eqref{def:gaussmarkov} equals that of the GMS $Z^n\sim\calN(0,\sigma^2)$. On the other hand, if $D\in(D_\rmc,D_{\rm{max}})$, the first order coding rate $R_{\rm{GM}}(a,\sigma^2,D)$ is greater than the rate-distortion function $R_\rmG(P_X,D)$ of the GMS but the the second-order coding rate $\rmV_{\rm{GM}}(a,\sigma^2,D)$ is smaller than $\rmV_{\rmG}(P_X,D)$. This implies that the memory of the source makes it first-order asymptotically more difficult to compress the Gauss-Markov source but ensures a smaller gap between the non-asymptotic rate and the first-order asymptotic coding rate when $\varepsilon\in(0,0.5)$.

It would be of interest to consider a mismatched version of the Gauss-Markov source where the innovation process $\{Z_i\}_{i\in\bbN}$ might not be Gaussian, as in Chapter \ref{chap:mismatch}. A critical question is then whether one could propose a coding scheme ignorant of the distribution of $\{Z_i\}_{i\in\bbN}$ and the parameter $a$ to achieve universally good performance. The second-order asymptotic analysis for such a case would be interesting. A even more practical setting would be to incorporate the above mismatched scenario with the noisy source setting in Chapter \ref{chap:noisy} to consider the case where the source sequence is also corrupted by some additional noise with unknown distribution.

\section{Proof Sketch}
We only present the proof sketch for the stationary case~\cite{tian2019tit} when $a\in(0,1)$. For the nonstationary case of $a>1$, readers can refer to \cite{tian2021nonstationary} for details.

\subsection{Decorrelation of the Gauss-Markov source}
As pointed at the beginning of \cite[Section IX]{berger1998lossy}, given the knowledge of how to compress a memoryless source,  a direct approach to compress a source with memory is to transform the source with memory into several independent memoryless sources. An explicit approach of this kind was given by Davisson~\cite[Eq. (15)]{davisson1072proc} for correlated stationary Gaussian sources. In this section, we present the application of the transformation to the Gauss-Markov source~\cite[Section III.A]{tian2019tit} and decompose it into independent Gaussian sources.

For any $n\in\bbN$, let $\bA$ be the $n\times n$ lower triangular matrix such that for each $(i,j)\in[n]^2$,
\begin{align}
A_{i,j}
&=\left
\{
\begin{array}{ll}
1&\mathrm{if~}i=j,\\
-a&\mathrm{if~}j=i-1,~i\geq 1,\\
0&\mathrm{otherwise}.
\end{array}
\right.
\end{align}
A pictorial illustration of the matrix $\bA$ is 
\begin{align}
\bA:=
\left[
\begin{array}{lllll}
1&0&0&\ldots&0\\
-a&1&0&\ldots&0\\
0&-a&1&\ldots&0\\
\vdots&\ddots&\ddots&\ddots&\vdots\\
0&\ldots&0&-a&1
\end{array}
\right].
\end{align}
Let $\bX$ denote the column vector of $X^n=(X_1,\ldots,X_n)$ of the Gaussian-Markov source $X^n$ in \eqref{def:gaussmarkov} and let $\bZ$ denote the column vector of $Z^n=(Z_1,\ldots,Z_n)$, where $Z^n$ is the GMS with distribution $\calN(0,\sigma^2)$. It follows that $\bZ=\bA\bX$ and the covariance matrix $\Sigma_{\bX}$ satisfies
\begin{align}
\Sigma_{\bX}=\bbE[\bA^{-1}\bZ\bZ^\rmT\bA]=\sigma^2(\bA^\rmT\bA)^{-1}.
\end{align}
Let $\bV$ be the unitary matrix corresponding to the eigendecomposition of $(\bA^\rmT\bA)^{-1}$, i.e.,
\begin{align}
(\bA^\rmT\bA)^{-1}=\bV\Lambda\bV^\rmT,
\end{align}
where $\Lambda=\rm{diag}(\frac{1}{\mu_1},\ldots,\frac{1}{\mu_n})$ is the diagonal matrix where $\mu_1,\ldots,\mu_n$ are the eigen values of $(\bA^\rmT\bA)^{-1}$. 

Define the vector $\bS$ such that
\begin{align}
\bS=\bV^\rmT\bX\label{def:bS}.
\end{align}
It follows that $\bS~\sim\calN(\bzero_n,\sigma^2\Lambda)$, i.e., $(S_1,\ldots,S_n)$ are independent Gaussian random variables with zero mean and different variances $(\sigma_1^2,\ldots,\sigma_n^2)$ where $\sigma_i^2:=\frac{\sigma^2}{\mu_i}$ for each $i\in[n]$. This way, the Gauss-Markov source $\bX$ is decomposed into independent random variables $\bS$ with the product Gaussian distribution $P_{\bS}=\prod_{i\in[n]}\calN(0,\frac{\sigma^2}{\mu_i})$, which eases the analysis of second-order asymptotics.

\subsection{Preliminaries}
In this section, we present necessary definitions and preliminary results used to prove Theorem \ref{second4gaussmarkov}. For each $n\in\bbN$, define the following $n$-th order multi-letter rate-distortion function 
\begin{align}
R_{\rm{GM}}(P_{\bX},D,n)
&:=\min_{P_{\hat{\bX}|\bX}:\bbE[d(\bX,\hat{\bX})]\leq D} I(P_{\bX},P_{\bX|\hat{\bX}}),
\end{align}
where $P_{\bX}$ is the distribution of the Gauss-Markov source $X^n$ up to time $n$ and $P_{\bX|\hat{\bX}}$ is induced by $P_{\bX}$ and the test channel $P_{\hat{\bX}|\bX}$. It was shown by Gray~\cite{gray1970tit} that the rate-distortion function $R_{\rm{GM}}(a,\sigma^2,D)$ in \eqref{fromGray} is the limit value of $R_{\rm{GM}}(P_X,D,n)$ as $n\to\infty$. Analogous to \eqref{def:dtilteddensity}, for any $\bx\in\bbR^n$, define the distortion-tilted information density for the Gauss-Markov source as follows:
\begin{align}
\jmath_{\rm{GM}}(\bx|P_{\bX},D)
&:=-\log\bbE_{P_{\hat{\bX}}^*}[\exp(n\lambda^*D-\lambda_n^*d(\bx,\hat{\bX}))],
\end{align}
where $\lambda_n^*:=-\frac{\partial R_{\rm{GM}}(P_{\bX},D,n) }{\partial D}$ is the negative first derivative of \\$R_{\rm{GM}}(P_{\bX},D,n)$ with respect to the distortion level $D$ and $P_{\hat{\bX}}^*$ is induced by the source distribution $P_{\bX}$ and the optimal test channel $P_{\hat{\bX}|\bX}^*$ that achieves $R_{\rm{GM}}(P_{\bX},D,n)$. Similar to Lemma \ref{prop:dtilteddensity}, one can show that
\begin{align}
R_{\rm{GM}}(P_{\bX},D,n)=\bbE_{P_{\bX}}[\jmath_{\rm{GM}}(\bX|P_{\bX},D)].
\end{align}

In the proof of Theorem \ref{second4gaussmarkov}, instead of considering the Gauss-Markov source $\bX$ with memory, we use the decomposed independent source $\bS$. It follows from \eqref{def:bS} that 
\begin{align}
R_{\rm{GM}}(P_{\bS},D,n)&=R_{\rm{GM}}(P_{\bX},D,n),
\end{align}
and for any $\bx\in\bb^n$ and $\bs=\bV^\rmT\bx$,
\begin{align}
\jmath_{\rm{GM}}(\bs|P_{\bS},D)&=\jmath_{\rm{GM}}(\bx|P_{\bX},D).
\end{align}

For each $n\in\bbN$, let $\theta_D^n$ be the solution of $\theta_n$ to
\begin{align}
\frac{1}{n}\sum_{i\in[n]}\min\{\theta_n,\sigma_i^2\}=D.
\end{align}
Since $\bS$ is memoryless, for any $\bs\in\bbR^n$, 
\begin{align}
\jmath_{\rm{GM}}(\bs|P_{\bS},D)
&=\sum_{i\in[n]}\jmath(s_i|P_{S_i},\min\{\theta_D^n,\sigma_i^2\}),
\end{align}
where $\jmath(\cdot)$ is the distortion-tilted information density defined in \eqref{def:dtilteddensity} for the rate-distortion problem and $P_{S_i}=\calN(0,\sigma_i^2)$ is the induced marginal distribution of the random variable $S_i$. We need the following alternative distortion-tilted information density that approximates $\jmath_{\rm{GM}}(\bs|P_{\bS},D)$:
\begin{align}
\jmath_{\rm{alt}}(\bs|P_{\bS},D_n)
&:=\sum_{i\in[n]}\jmath(s_i|P_{S_i},\min\{\theta_D,\sigma_i^2\}),
\end{align}
where $\theta_D$ is defined as the solution of $\theta$ to \eqref{reversewaterfilling}, which is independent of $n$ and $D_n$ is defined as
\begin{align}
D_n&:=\frac{1}{n}\sum_{i\in[n]}\min\{\theta_D,\sigma_i^2\}.
\end{align}
Let $E_i$, $V_i$ and $T_i$ be the expectation, the variance and the third absolute moment of the random variable $\jmath(S_i|P_{S_i},\min\{\theta_D,\sigma_i^2\})$ with respect to $P_{S_i}$, respectively. 
It follows from \cite[Theorem 4]{tian2019tit} that there exists constants $c_r$ and $c_v$ such that
\begin{align}
\Big|nR_{\rm{GM}}(a,\sigma^2,D)-\sum_{i\in[n]}E_i\Big|&\leq c_r,\label{approxmean}\\
\bigg|\sqrt{nV_{\rm{GM}}(a,\sigma^2,D)}-\sqrt{\sum_{i\in[n]}V_i}\bigg|&\leq c_v\label{approxvariance}.
\end{align}
Define $C_{\rm{BE}}$ as
\begin{align}
C_{\rm{BE}}
&:=\frac{\frac{6}{n}\sum_{i\in[n]}T_i}{(\frac{1}{n}\sum_{i\in[n]}V_i^2)^{3/2}}.
\end{align}
It follows from \cite[Appendix A]{tian2019tit} that for the Gauss-Markov source, $C_{\rm{BE}}$ is a finite positive constant in the tail probability of the Berry-Esseen theorem.

Finally, let $\tilc\in(0,1)$ be the constant in \cite[Theorem 10]{tian2019tit} and define the event
\begin{align}
\calE:=\bigg\{\big|\jmath_{\rm{GM}}(\bS|P_{\bS},D)-\jmath_{\rm{alt}}(\bS|P_{\bS},D_n)\big|\leq \frac{4C_D}{\tilc \theta_D}\bigg\}\label{def:calE}.
\end{align}
It follows from \cite[Theorem 10]{tian2019tit} that
\begin{align}
\Pr\{\calE\}\geq 1-\frac{1}{n}\label{pofE}.
\end{align}

\subsection{Achievability}
The achievability proof parallels that of Theorem \ref{second4rd} for the rate-distortion problem. Recall from \eqref{def:distortionball} the definition of the distortion ball $\calB_D(\bs)=\{\hat{\bs}\in\bbR^n:~d(\bs,\hat{\bs})\leq D\}$ and recall that $\bS$ is the decomposed independent source sequence with distribution $P_{\bS}$. Similar to Theorem \ref{ach:fbl}, we have the following result.
\begin{lemma}
\label{achfbl4gm}
For any $P_{\hat{\bS}}\in\calP(\bbR^n)$, there exists an $(n,M)$-code such that the excess-distortion probability satisfies
\begin{align}
\rmP_{\rme,n}(D)
&\leq \bbE_{P_{\bS}}[(1-P_{\hat{\bS}}(\calB_D(\bS)))^M].
\end{align}
\end{lemma}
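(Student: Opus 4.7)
The plan is to mimic the random coding proof of Theorem \ref{ach:fbl} for the rate-distortion problem, but leveraging the orthogonal decomposition $\bS = \bV^{\mathrm{T}} \bX$ introduced in Section~III.A. Because $\bV$ is unitary, the squared-norm distortion is invariant under this coordinate change: if one sets $\hat{\bS} = \bV^{\mathrm{T}} \hat{\bX}$, then $\|\hat{\bX}-\bX\|^2 = \|\hat{\bS}-\bS\|^2$, and hence $d(\bX,\hat{\bX}) = d(\bS,\hat{\bS})$. Consequently, any $(n,M)$-code $(f,\phi)$ operating on $\bX$ with excess-distortion probability $\rmP_{\rme,n}(D)$ corresponds bijectively to a code $(f',\phi')$ operating on $\bS$ with the same excess-distortion probability, via pre- and post-composition with $\bV^{\mathrm{T}}$ and $\bV$. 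Therefore it suffices to exhibit an $(n,M)$-code for the independent (but non-identically distributed) Gaussian source $\bS$.

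Next, I would invoke the random coding argument. Fix the target distribution $P_{\hat{\bS}}$ on $\bbR^n$ and draw codewords $\hat{\bS}(1),\ldots,\hat{\bS}(M)$ independently from $P_{\hat{\bS}}$. Define the encoder $f'(\bs) := \argmin_{i\in[M]} d(\bs,\hat{\bS}(i))$ with ties broken arbitrarily, and the decoder $\phi'(i) := \hat{\bS}(i)$. For any fixed source realization $\bs$, the event $\{d(\bs,\hat{\bS}(f'(\bs))) > D\}$ coincides with $\{\hat{\bS}(i)\notin\calB_D(\bs)\ \forall\, i\in[M]\}$, whose conditional probability, by independence of the codewords and their common law $P_{\hat{\bS}}$, equals $(1-P_{\hat{\bS}}(\calB_D(\bs)))^M$. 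Averaging over $\bS\sim P_{\bS}$, the expected excess-distortion probability of the random ensemble is exactly $\bbE_{P_{\bS}}[(1-P_{\hat{\bS}}(\calB_D(\bS)))^M]$.

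Finally, since a non-negative random variable must assume values no larger than its expectation with positive probability, there exists at least one realization of the random codebook whose deterministic excess-distortion probability is upper bounded by the ensemble average, yielding the claimed $(n,M)$-code.

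The only step that distinguishes this proof from the memoryless case of Theorem \ref{ach:fbl} is the invariance of the distortion measure under the decorrelating unitary transform; once that observation is in place, the remainder is a verbatim repetition of the standard random-coding and minimum-distortion-encoding argument. There is no essential obstacle: the memory of the Gauss-Markov source is completely absorbed into the joint law $P_{\bS}$ of the transformed source, and the lemma makes no structural demand on $P_{\hat{\bS}}$, so one does not yet need to specify the reproduction distribution. The real work (choosing $P_{\hat{\bS}}$ as a suitable product Gaussian measure that matches the reverse-waterfilling solution, and then combining this bound with the lossy AEP for $\bS$) is deferred to subsequent stages of the second-order analysis.
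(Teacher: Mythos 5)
Your proof is correct and matches the approach the paper intends; the paper merely remarks that the result follows ``Similar to Theorem~\ref{ach:fbl}'' without supplying details. You correctly note that the unitary invariance of the quadratic distortion (i.e.\ $\|\hat{\bX}-\bX\|=\|\bV^{\rmT}\hat{\bX}-\bV^{\rmT}\bX\|=\|\hat{\bS}-\bS\|$) lets one translate any code for $\bS$ back to a code for the original Gauss--Markov source $\bX$ with the same excess-distortion probability, and the remaining random-coding/minimum-distortion-encoding argument together with the existence-by-expectation step is verbatim as in the proof of Theorem~\ref{ach:fbl}, with $P_{\hat{\bS}}$ on $\bbR^n$ simply replacing the product law $P_{\hatX}^n$.
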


Define the following constants
\begin{align}
\eta_n&:=\sqrt{\frac{a\log\log n}{n}}.
\end{align}
The following lemma generalizes the lossy AEP in Lemma \ref{lossy:AEP} for memoryless sources to the Gauss-Markov source.
\begin{lemma}
\label{aep4gm}
Let $\alpha>0$ and let $q>1$, $\beta_1>0$, $\beta_2$ and $\kappa$ be constants defined in \cite[Lemma 3]{tian2019tit}. There exists a constant $K>0$ such that
\begin{align}
\nn&\Pr\Big\{-\log (P_{\hat{\bS}}^*)^n(\calB_D(S^n))\leq \jmath_{\rm{GM}}(\bS|P_{\bS},D)+\beta_1(\log n)^q+\beta_2\Big\}\\*
&\geq 1-\frac{K}{(\log n)^{\kappa\alpha}},
\end{align}
where $P_{\hat{\bS}}^*$ is induced by $P_{\bS}$ and the optimal test channel $P_{\hat{\bS}|\bS}$ that achieves $R_{\rm{GM}}(P_{\bS},D,n)$.
\end{lemma}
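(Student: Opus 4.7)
The plan is to lift the memoryless lossy AEP (Lemma \ref{lossy:AEP}) to the Gauss-Markov setting by exploiting the orthogonal decomposition $\bS=\bV^\rmT\bX$, which turns the correlated Gauss-Markov vector $\bX$ into an independent (but not identically distributed) Gaussian vector $\bS$ with $S_i\sim\calN(0,\sigma_i^2)$. Since $P_{\hat{\bS}|\bS}^*$ achieves the $n$-th order rate-distortion function and the problem is memoryless in $\bS$, the optimal test channel factorizes as $P_{\hat{\bS}|\bS}^*=\prod_{i\in[n]}P_{\hat{S}_i|S_i}^*$ where each factor is the scalar Gaussian rate-distortion optimal test channel at level $\min\{\theta_D^n,\sigma_i^2\}$, yielding $P_{\hat{\bS}}^*=\prod_{i\in[n]}P_{\hat{S}_i}^*$ with a tractable product form.

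First, I would apply a scalar Gaussian lossy AEP to each component: for each $i$ separately, there exist constants so that $-\log P_{\hat{S}_i}^*(\calB_{D_i}(S_i))\leq \jmath(S_i|P_{S_i},D_i)+c_1\log n+c_2$ with high probability, where $D_i:=\min\{\theta_D^n,\sigma_i^2\}$. The second step is to translate this scalar bound into a bound on $-\log (P_{\hat{\bS}}^*)(\calB_D(\bS))$. The natural route is to lower-bound $\calB_D(\bS)$ by a product of per-coordinate balls $\prod_i \calB_{D_i'}(S_i)$ with appropriately chosen $D_i'$ summing to $D$, which then converts the product distribution into a sum of per-letter tilted informations. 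The per-letter sum $\sum_{i\in[n]}\jmath(S_i|P_{S_i},D_i)$ is exactly $\jmath_{\rm{GM}}(\bS|P_{\bS},D)$, closing the loop. A Berry-Esseen argument for independent but not identically distributed summands (Theorem \ref{berrytheorem4general}) would control the deviation of the sum.

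Third, I would use the event $\calE$ from \eqref{def:calE} together with \eqref{pofE} to replace $\jmath_{\rm{GM}}(\bS|P_{\bS},D)$ by the more tractable $\jmath_{\rm{alt}}(\bS|P_{\bS},D_n)$ modulo an additive constant absorbed into $\beta_2$; the alternative density uses the fixed reverse-waterfilling level $\theta_D$ instead of the $n$-dependent $\theta_D^n$, which makes the per-letter terms easier to control uniformly in $n$. Finally, I would combine the failure probabilities of the scalar AEP at each coordinate, of the distortion-allocation event $\{\sum_i(S_i-\hat S_i)^2\le nD\}\supseteq\bigcap_i\{(S_i-\hat S_i)^2\le nD_i'\}$, and of the complement of $\calE$, ensuring that the total failure probability is $O((\log n)^{-\kappa\alpha})$ for the $q,\kappa,\alpha$ appearing in the statement. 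The slack $\beta_1(\log n)^q$ must be chosen large enough to absorb all of these deviation terms simultaneously.

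The main obstacle is threefold. The first difficulty is that the variances $\{\sigma_i^2\}_{i\in[n]}$ depend on $n$ (they are eigenvalues of $(\bA^\rmT\bA)^{-1}$), and some $\sigma_i^2$ may be close to $\theta_D$, where the per-letter tilted information degenerates (either $\lambda_i^*\to 0$ or the water-filling cuts off). Handling coordinates near the water-level requires a careful perturbation argument. The second difficulty is propagating a polylogarithmic, rather than $1/\sqrt{n}$, slack: the exponent $q>1$ and the weaker $(\log n)^{-\kappa\alpha}$ bound suggest that the Berry-Esseen rate is deliberately sacrificed in order to uniformly control the union of scalar AEP failures across $n$ coordinates. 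The third and most delicate step is showing that the product-of-balls inclusion $\prod_i\calB_{D_i'}(S_i)\subseteq\calB_D(\bS)$ with a \emph{random} allocation $D_i'$ based on the observed $S_i$ still yields a lower bound on $P_{\hat{\bS}}^*(\calB_D(\bS))$ that is tight up to the polylog slack; this is where the bulk of the quantitative work resides, and it mirrors the refined analysis of Yang-Zhang~\cite{yang1999redundancy} but with non-stationary Gaussian components.
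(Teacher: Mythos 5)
The decomposition $\bS=\bV^\rmT\bX$ and the observation that $P_{\hat{\bS}}^*=\prod_i P_{\hat{S}_i}^*$ with scalar Gaussian factors are correct starting points, and you have identified the right technical hazards (eigenvalues near $\theta_D$, the polylog rather than $1/\sqrt n$ slack). However, the central reduction in your argument --- replace $\calB_D(\bS)$ by a product of per-coordinate balls and invoke a ``scalar lossy AEP'' on each factor --- does not work, and the loss it incurs is exponential in $n$, not polylogarithmic.

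There is no scalar lossy AEP. Lemma~\ref{lossy:AEP} is intrinsically a blocklength-$n\to\infty$ concentration statement; for a single Gaussian component the quantities simply do not match. Concretely, with $\hat S\sim P^*_{\hat S}=\calN(0,\sigma^2-D)$ one computes
\begin{align}
-\log P^*_{\hat S}\bigl(\calB_{D}(s)\bigr)\;\approx\;\frac{s^2}{2(\sigma^2-D)}+\mathrm{const},
\qquad
\jmath(s\,|\,D,\calN(0,\sigma^2))=\tfrac12\log\tfrac{\sigma^2}{D}+\tfrac12\Bigl(\tfrac{s^2}{\sigma^2}-1\Bigr),
\end{align}
so the per-coordinate gap is $\frac{s^2 D}{2\sigma^2(\sigma^2-D)}+O(1)$, which has expectation $\frac{D}{2(\sigma^2-D)}+O(1)=\Omega(1)$ under $s\sim\calN(0,\sigma^2)$. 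Summing over $n$ coordinates the gap is $\Omega(n)$. Geometrically this is the fact that an axis-aligned box inscribed in an $n$-dimensional ball carries exponentially less of the Gaussian $P_{\hat{\bS}}^*$-mass than the ball does, no matter how you allocate the $D_i'$; the product-of-balls inclusion is valid but the lower bound it produces is far too weak to close the polylogarithmic slack in the statement. A random or adaptive choice of $D_i'$ cannot save this: the inscribed box is strictly inside the ball and remains exponentially sub-dominant.

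The correct route (the one in Tian--Kostina, and the one behind the memoryless Lemma~\ref{lossy:AEP} via Yang--Zhang) is a \emph{joint} strong-large-deviations estimate of the single probability
\begin{align}
P_{\hat{\bS}}^*\bigl(\calB_D(\bs)\bigr)=\Pr\Bigl\{\tfrac1n\sum_{i\in[n]}(s_i-\hat S_i)^2\le D\Bigr\},
\end{align}
treating $\sum_i(s_i-\hat S_i)^2$ as a sum of independent but non-identically distributed (noncentral chi-squared type) random variables indexed by the eigen-coordinate $i$, then applying a Bahadur--Rao / Laplace-method argument to show that this probability is $\exp\{-\jmath_{\rm GM}(\bs\,|\,P_{\bS},D)+O((\log n)^q)\}$ for all $\bs$ in a high-probability set. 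The tilted measure at exponent $\lambda^*_n$ is precisely what makes the exponent equal $\jmath_{\rm GM}$, and the $(\log n)^q$ and $(\log n)^{-\kappa\alpha}$ terms in the lemma come out of uniformly controlling the non-identical cumulant generating functions, especially for eigenvalues near $\theta_D$. Your use of the event $\calE$ is tangential: $\calE$ relates $\jmath_{\rm GM}$ to $\jmath_{\rm alt}$ and is used downstream in the second-order achievability, but Lemma~\ref{aep4gm} must be proved directly for $\jmath_{\rm GM}$ since $P_{\hat{\bS}}^*$ is defined via the $n$-dependent water-level $\theta_D^n$.
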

Lemma \ref{aep4gm} relates the probability of the distortion ball under $P_{\hat{\bS}}^*$ with the distortion-tilted information density $\jmath_{\rm{GM}}(\bS|P_{\bS},D)$.

Define the event 
\begin{align}
\calL:=\bigg\{-\log (P_{\hat{\bS}}^*)^n(\calB_D(S^n))&\leq \jmath_{\rm{alt}}(\bS|P_{\bS},D_n)+\beta_1(\log n)^q+\beta_2+\frac{4C_D}{\tilc \theta_D}\bigg\}\label{def:calL}.
\end{align}
Using Lemma \ref{achfbl4gm}, we conclude that there exists an $(n,M)$-code for the Gauss-Markov source such that
\begin{align}
\rmP_{\rme,n}(D)
&\leq \bbE_{P_{\bS}}[(1-P_{\hat{\bS}^*}(\calB_D(\bS)))^M]\\
&\leq \bbE_{P_{\bS}}\Big[\exp\big(-MP_{\hat{\bS}^*}(\calB_D(\bS))\big)\Big]\label{useineq4gm}\\
\nn&=\bbE_{P_{\bS}}\Big[\exp\big(-MP_{\hat{\bS}^*}(\calB_D(\bS))\big)\bbo(\calL)\Big]\\*
&\qquad+\bbE_{P_{\bS}}\Big[\exp\big(-MP_{\hat{\bS}^*}(\calB_D(\bS))\big)\bbo(\calL^\rmc)\Big]\\
&\leq \bbE_{P_{\bS}}\Big[\exp\big(-MP_{\hat{\bS}^*}(\calB_D(\bS))\big)\bbo(\calL)\Big]+\frac{1}{n}+\frac{K}{(\log n)^{\kappa\alpha}}\label{step1:ach4gm},
\end{align}
where \eqref{useineq4gm} follows from the inequality $(1-x)^M\leq \exp(-Mx)$ and \eqref{step1:ach4gm} follows from \eqref{pofE} and Lemma \ref{aep4gm}, which implies that
\begin{align}
\Pr\{\calL^\rmc\}\leq \frac{1}{n}+\frac{K}{(\log n)^{\kappa\alpha}}.
\end{align}
We next upper bound the first term in \eqref{step1:ach4gm}. Let $\varepsilon_n$ be defined as
\begin{align}
\varepsilon_n&:=\varepsilon-\frac{C_{\rm{BE}+1}}{\sqrt{n}}-\frac{1}{n}-\frac{K}{(\log n)^{\kappa\alpha}}\label{def:epsilon4gm}.
\end{align}
Choose $M$ such that
\begin{align}
\log M
\nn&=nR_{\rm{GM}}(a,\sigma^2,D)+\sqrt{n\rmV_{\rm{GM}}(a,\sigma^2,D)}\rmQ^{-1}(\varepsilon_n)\\*
\nn&\qquad+\log\Big(\frac{\log n}{2}\Big)+\beta_1(\log n)^q+\beta_2+c_r\\*
&\qquad+c_v|Q^{-1}(\varepsilon_n)|+\frac{4C_D}{\tilc \theta_D}\label{choosem4gm}.
\end{align}
Define the random variable $\rmG_n$ such that
\begin{align}
G_n:=\log M-\jmath_{\rm{alt}}(\bS|P_{\bS},D_n)-\beta_1(\log n)^q-\beta_2-\frac{4C_D}{\tilc \theta_D}\label{def:Gn}.
\end{align}
Combining \eqref{approxmean} and \eqref{approxvariance}, we conclude that
\begin{align}
G_n\geq \sum_{i\in[n]}E_i+\sqrt{\sum_{i\in[n]}V_i}\rmQ^{-1}(\varepsilon_n)-\jmath_{\rm{alt}}(\bS|P_{\bS},D_n)+\log\Big(\frac{\log n}{2}\Big).
\end{align}
Define the event $\calG$ such that
\begin{align}
\calG&:=\bigg\{G_n<\log\Big(\frac{\log n}{2}\Big)\bigg\}\label{def:calG}.
\end{align}
Using the Berry-Esseen theorem for independent but not identically distributed random variables in Theorem \ref{berrytheorem4general}, we have
\begin{align}
\Pr\{\calG\}
&\leq \Pr\bigg\{\jmath_{\rm{alt}}(\bS|P_{\bS},D_n)>\sum_{i\in[n]}E_i+\sqrt{\sum_{i\in[n]}V_i}\rmQ^{-1}(\varepsilon_n)\bigg\}\\
&\leq \varepsilon_n+\frac{C_{\rm{BE}}}{\sqrt{n}}\label{useBE4gm}.
\end{align}
Therefore, 
\begin{align}
\nn&\bbE_{P_{\bS}}\Big[\exp\big(-MP_{\hat{\bS}^*}(\calB_D(\bS))\big)\bbo(\calL)\Big]\\*
&\leq \bbE_{P_{\bS}}[\exp(-\exp(G_n))]\label{useGn}\\
&=\bbE_{P_{\bS}}[\exp(-\exp(G_n))\bbo(\calG)]+\bbE_{P_{\bS}}[\exp(-\exp(G_n))\bbo(\calG^\rmc)]\\
&\leq \Pr\{\calG\}+\frac{1}{n}\Pr\{\calG^\rmc\}\label{usecalG}\\
&\leq \varepsilon_n+\frac{C_{\rm{BE}+1}}{\sqrt{n}}\label{stepn:ach4gm},
\end{align}
where \eqref{useGn} follows from  the definitions of the event $\calL$ in \eqref{def:calL} and $G_n$ in \eqref{def:Gn}, \eqref{usecalG} follows from the definition of the event $\calG$ in \eqref{def:calG} and \eqref{stepn:ach4gm} follows from \eqref{useBE4gm}.

The achievability proof is completed by combining \eqref{step1:ach4gm}, \eqref{def:epsilon4gm} and \eqref{stepn:ach4gm} and applying the Taylor expansion of $\rmQ^{-1}(\varepsilon_n)$ around $\varepsilon$.

\subsection{Converse}
Similar to Theorem \ref{converse:fbl} for memoryless sources, for the Gauss-Markov source, we conclude that any $(n,M)$-code satisfies that
\begin{align}
\rmP_{\rme,n}(D)
&\geq \Pr\Big\{\jmath_{\rm{GM}}(\bS|P_{\bS},D)\geq \log M+\log n\Big\}-\frac{1}{n}\label{converse:step14GM},
\end{align}
where the probability is calculated with respect to $P_{\bS}$ of the decomposed independent source sequence $\bS\in\bbR^n$. The first term in \eqref{converse:step14GM} can be further lower bounded as follows:
\begin{align}
\nn&\Pr\Big\{\jmath_{\rm{GM}}(\bS|P_{\bS},D)\geq \log M+\log n\Big\}\\*
&\geq \Pr\Big\{\jmath_{\rm{GM}}(\bS|P_{\bS},D)\geq \log M+\log n\mathrm{~and~}\calE\Big\}\\
&\geq \Pr\bigg\{\jmath_{\rm{alt}}(\bS|P_{\bS},D_n)\geq \log M+\log n+\frac{4C_D}{\tilc \theta_D}\bigg\}-\Pr\{\calE^\rmc\}\label{usecalEagain4gm}\\
&\geq \Pr\bigg\{\jmath_{\rm{alt}}(\bS|P_{\bS},D_n)\geq \log M+\log n+\frac{4C_D}{\tilc \theta_D}\bigg\}-\frac{1}{n}\label{usepoeagain4gm},
\end{align}
where \eqref{usecalEagain4gm} follows from the definition of $\calE$ in \eqref{def:calE} and \eqref{usepoeagain4gm} follows from the result in \eqref{pofE}.

The rest of the converse proof is omitted since it is similar to the achievability proof from Eq. \eqref{choosem4gm} to \eqref{stepn:ach4gm} by applying the Berry-Esseen theorem to the first term in \eqref{usepoeagain4gm}.


\chapter{Variable Length Compression}
\label{chap:variable}

This chapter concerns variable length lossy compression, where given each source sequence, a binary string with a variable length is output as the compressed codewords. This problem generalizes the rate-distortion problem in Chapter \ref{chap:rd} by allowing a flexible codeword length. The excess-length probability criterion was used in variable length compression~\cite{he2009redundancy,Kontoyiannis2014tit,kosut2017tit}. However, as pointed out by Verd\'u~\cite{verdu2011}, the fundamental limit in such a setting is exactly the same as the fixed-length compression allowing errors. Therefore, the more meaningful fundamental limit for variable length compression is usually the average codeword length of an optimal code subject to a certain excess-distortion (error) probability constraint. 

The study of variable length compression focuses on the lossless case and dates back to Shannon. By relating the codeword length reversely proportional to the probability of the source sequence, Shannon~\cite[Section 10]{shannon1948mathematical} showed that the average codeword length is bounded by the entropy of the source sequence with a deviation of at most one bit, implying that the asymptotic coding rate per source symbol equals the source entropy with zero error probability. Han~\cite{han2000tit} initiated the study of variable length compression allowing errors by considering a vanishing error probability. The result of Han was later generalized by Koga and Yamamoto~\cite{Koga2005tit}, who showed that the asymptotic average codeword length per source symbol of an optimal code is less than the source entropy if a non-vanishing error probability is tolerated, demonstrating the asymptotic advantage of variable length compression allowing errors. The result in~\cite{Koga2005tit} was refined by Kostina, Polyanskiy and Verd\'u~\cite{kostina2015tit} who derived the second-order asymptotic approximation and further refined by Sakai, Yavas, Tan~\cite{sakai2021tit} who further derived the third order asymptotic approximation. The above studies were also generalized to the case with side information~\cite{watanabe2015variable,sakai2020tit}. 

For lossy compression, Zhang, Yang and Wei~\cite{zhang1997tit} studied the deviation of the expected codeword length per source symbol to the rate-distortion function for discrete memoryless sources, which was later generalized by Yang and Zhang~\cite{yang1999redundancy} to abstract sources. The results in~\cite{zhang1997tit,yang1999redundancy} assumed zero excess-distortion probability and were lossy counterparts to \cite{shannon1948mathematical}. By tolerating a non-vanishing excess-distortion probability, Kostina, Polyanskiy and Verd\'u~\cite[Section III]{kostina2015tit} derived a second-order asymptotic approximation to the average codeword length per source symbol. In particular, it follows from \cite[Theorem 9]{kostina2015tit} that the asymptotic coding rate is smaller than the the rate-distortion function if the excess-distortion probability is not zero and the deviation of the non-asymptotic coding rate from the asymptotic one is always negative, which implies the great advantage of variable length compression allowing errors in the finite blocklength regime. In this chapter, we present the non-asymptotic and second-order asymptotic bounds in~\cite[Section III]{kostina2015tit} with proof sketches.

\section{Problem Formulation and Existing Results}
Similar to the rate-distortion problem in Chapter \ref{chap:rd}, let $X^n$ be a memoryless source generated i.i.d. from the distribution $P_X$ defined on the alphabet $\calX$ and let $\hatcalX$ be the reproduced alphabet. Recall that $d:\calX\times\hatcalX\to\bbR_+$ is the distortion measure and $d(x^n,\hatx^n)$ denotes the symbolwise average distortion between a source sequence $x^n\in\calX^n$ and its reproduction $\hatx^n\in\hatcalX^n$. Let $D\in\bbR_+$ be the target distortion level. In variable length compression, we need to use the set of all binary strings, denoted by $\calB:=\{0,1\}^*=\{\emptyset,0,1,00,01,10,11,\ldots\}$. For any $b\in\calB$, let $l(b)$ be the length of the binary string, e.g., $l(\emptyset)=0$, $l(00)=1$, $l(1001)=4$. For simplicity, we let $\calB_{\emptyset}$ to denote $(\calB\setminus\{\emptyset\})$, i.e., the elements of all binary strings except the empty one. Different from other chapters, the logarithm in this chapter is base $2$ instead of $e$ to account for the fact that the length of a binary string should be in bits.

With above definitions, a code for variable length lossy compression is defined as follows.
\begin{definition}
\label{def:code4variable}
An $(n,L)$-code consists of a potentially stochastic pair of encoder $P_{B|X^n}\in\calP(\calB|\calX^n)$ and decoder $P_{\hatX^n|B}\in\calP(\hatcalX^n|\calB)$ such that the average codeword length is upper bounded by $L$, i.e.,
\begin{align}
\bbE[l(f(X^n))]\leq L,
\end{align}
where the expectation are calculated with respect to the distribution $P_{X^n}$ and the potentially stochastic encoder and decoders.
\end{definition}
Note that when $\calB$ is replaced by the set $\calM:=[M]$, Def. \ref{def:code4variable} reduces to the $(n,M)$-code in Def. \ref{def:code4rd} for the rate-distortion problem. To evaluate the performance of an $(n,L)$-code, we consider the excess-distortion probability with respect to the target distortion level $D$, i.e.,
\begin{align}
\rmP_{\rme,n}(D)=\Pr\{d(X^n,\hatX^n)>D\}.
\end{align}

The fundamental limit of variable length compression is the achievable minimal average codeword length such that the excess-distortion probability is bounded by a constant $\varepsilon\in[0,1]$, i.e.,
\begin{align}
L^*(n,D,\varepsilon)
&:=\min\{L:~\exists~\mathrm{an~}(n,L)\mathrm{-code~s.t.~}\rmP_{\rme,n}(D)\leq \varepsilon\}.
\end{align}
In practice, encoders and decoders are usually deterministic. To account for this case, let $L_{\rm{det}}^*(n,D,\varepsilon)$ denote the fundamental limit when both encoder and decoders are deterministic, i.e., for any $(x^n,b,\hatx^n)\in\calX^n\times\calB\times\hatcalX^n$, $P_{B|X^n}(b|x^n)=\bbo(b=f(x^n))$ and $P_{\hatX^n|B}(\hatx^n|b)=\bbo(\hatx^n=\phi(b))$ for some deterministic functions $f:\calX^n\to\calB$ and $\phi:\calB\to\hatcalX^n$.

Recall the definitions of  the rate-distortion function $R(P_X,D)$, $D_{\rm{min}}$ and $D_{\rm{max}}$ in \eqref{def:dmin}, \eqref{def:dmax} and \eqref{def:rd}), respectively, i.e.,
\begin{align}
R(P_X,D)&=\inf_{P_{\hatX|X}:\bbE[d(X,\hatX)]\leq D}I(P_X,P_{X|\hatX}),\label{rdf4variable}\\
D_{\rm{min}}&=\inf\{D\in\bbR_+:~R(P_X,D)<\infty\},\\
D_{\rm{max}}&=\inf_{\hatx}\bbE[d(X,\hatx)].
\end{align}
Zhang, Yang and Wei~\cite[Theorems 4 and 5]{zhang1997tit} derived the following result.
\begin{theorem}
\label{asymp4variable}
For any $D\in(D_{\rm{min}},D_{\rm{max}})$,
\begin{align}
L_{\rm{det}}^*(n,D,0)=nR(P_X,D)+\frac{\log n}{2}+o(\log n).
\end{align}
\end{theorem}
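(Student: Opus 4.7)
My plan is to prove Theorem~\ref{asymp4variable} using the method of types in both directions, interpreting the $\frac{\log n}{2}$ remainder as the minimum universal-coding redundancy that survives after averaging over source type.

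For the achievability direction, I would build a two-stage deterministic prefix-free encoder. For each type $Q_X\in\calP_n(\calX)$ with $R(Q_X,D)<\infty$, Lemma~\ref{type:covering} yields a codebook $\calC_{Q_X}\subseteq\hatcalX^n$ of cardinality at most $\exp(nR(Q_X,D)+c_1\log n)$ that $D$-covers $\calT_{Q_X}^n$. The encoder on input $x^n$ with $\hatT_{x^n}=Q_X$ outputs a binary concatenation of (i) an optimized prefix-free description of $Q_X$ and (ii) the binary index within $\calC_{Q_X}$ of a $D$-closest codeword; both the type and the index are recoverable, so the decoded reproduction satisfies $d(x^n,\phi(f(x^n)))\leq D$ deterministically, giving zero excess-distortion probability. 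The expected length obeys
\begin{align*}
\bbE[\ell(f(X^n))]\leq\bbE[nR(\hatT_{X^n},D)]\log_2 e+O(\log n).
\end{align*}
A second-order Taylor expansion of $Q_X\mapsto R(Q_X,D)$ around $P_X$, via Claim (iv) of Lemma~\ref{prop:dtilteddensity} for the first derivative and the smoothness hypotheses of Section~\ref{sec:dtilted4rd} for the Hessian, together with the fact that the covariances $\cov[\hatT_{X^n}(x),\hatT_{X^n}(y)]$ scale as $1/n$, gives $\bbE[nR(\hatT_{X^n},D)]=nR(P_X,D)+\Theta(1)$, so that an optimized type header collapses the total to $nR(P_X,D)+\frac{\log n}{2}+o(\log n)$.

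For the converse, I would partition the expected length by source type. Because the code has zero excess-distortion probability, for every $Q_X\in\calP_n(\calX)$ with $\Pr\{\hatT_{X^n}=Q_X\}>0$ the image of $\calT_{Q_X}^n$ under the encoder must index, through the decoder, a $D$-covering of $\calT_{Q_X}^n$, whose minimum cardinality is at least $\exp(nR(Q_X,D_n)-\vartheta_n)$ by the perturbation argument underlying Lemma~\ref{type:sc}. Combining this counting bound with a Kraft-type inequality on the variable-length binary descriptions used within each type class would give a conditional lower bound $\bbE[\ell(f(X^n))\mid\hatT_{X^n}=Q_X]\geq nR(Q_X,D)\log_2 e-O(\log n)$, and then averaging over $Q_X$ and Taylor-expanding as in the achievability direction would recover the matching lower bound $nR(P_X,D)+\frac{\log n}{2}-o(\log n)$.

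The hard part will be pinning down the exact coefficient $\frac{1}{2}$ in front of $\log n$ rather than settling for a crude $O(\log n)$ remainder. This demands a universal-coding analysis in the style of~\cite{yang1999redundancy}: on the achievability side one must show that the naive $O(\log n)$ contribution from the type header and the $\Theta(1)$ bias inside $\bbE[nR(\hatT_{X^n},D)]$ combine to exactly $\frac{\log n}{2}$ once the type is encoded at its entropy rather than uniformly; on the converse side the raw Kraft bound must be tightened by a stochastic-complexity argument that accounts for the fact that the optimal encoder can match codeword lengths to type likelihoods. Bridging these two sharpenings to the same coefficient $\frac{1}{2}$ is the technical crux of the result.
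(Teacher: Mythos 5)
The monograph does not itself prove Theorem~\ref{asymp4variable}; it is cited verbatim from Zhang, Yang and Wei~\cite{zhang1997tit} (see also Yang and Zhang~\cite{yang1999redundancy}), whose argument is not a type-header scheme but a random-coding argument under the optimal output distribution $P_{\hatX}^*$ combined with a \emph{two-sided} sharp lossy AEP. Concretely, the crux is a local-limit-theorem estimate $-\log (P_{\hatX}^*)^n(\calB_D(X^n))=\sum_{i\in[n]}\jmath(X_i|D,P_X)+\tfrac{1}{2}\log n+O(1)$ with high probability, after which taking expectations and invoking Claim~(ii) of Lemma~\ref{prop:dtilteddensity} produces the leading term $nR(P_X,D)$ and delivers the $\tfrac{1}{2}\log n$ directly. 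The one-sided Lemma~\ref{lossy:AEP} quoted in Chapter~\ref{chap:rd}, with its unspecified constant $C\log n$, is deliberately weaker and would not give the coefficient.

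Your proposed route has a concrete gap in both directions, even after your ``optimized type header'' refinement. On the achievability side: the type covering lemma (Lemma~\ref{type:covering}) gives $\log|\calC_{Q_X}|\le nR(Q_X,D)+c_1\log n$ with $c_1=4|\calX||\hatcalX|+9$ an uncontrolled positive constant, and encoding $\hatT_{X^n}$ at its entropy costs $H(\hatT_{X^n})=\tfrac{|\calX|-1}{2}\log n+O(1)$ (the multinomial type concentrates on a $(|\calX|-1)$-dimensional lattice with $\Theta(1/\sqrt n)$ fluctuation). The total is $nR(P_X,D)+\bigl(c_1+\tfrac{|\calX|-1}{2}\bigr)\log n+O(1)$, which matches $\tfrac{1}{2}\log n$ in neither coefficient nor $|\calX|$-dependence; there is no visible cancellation mechanism unless one replaces the crude covering bound with an exact sub-exponential prefactor for the type-class covering number, which is itself essentially equivalent in difficulty to the sharp lossy AEP. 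On the converse side: the $(n,L)$-code maps into $\calB=\{0,1\}^*$ with \emph{no} prefix-free constraint, so a ``Kraft-type inequality'' is not available; the correct tool is the Alon--Orlitsky/Wyner bound $\bbE[\ell(B)]\ge H(B)-\log(H(B)+1)-\log e$ as in Theorem~\ref{fbl4variable}, and the type-based strong converse (Lemma~\ref{type:sc}) only yields slack $\vartheta_n=|\calX|\log(n+1)+2\log n$, again too coarse. Your closing remark correctly identifies that pinning the $\tfrac{1}{2}$ is the hard part, but the ``stochastic complexity'' fix you gesture at would have to reconstruct the Edgeworth-type expansion underlying the sharp lossy AEP, at which point the type decomposition becomes scaffolding rather than the engine of the proof.
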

Theorem \ref{asymp4variable} derives an approximation to the non-asymptotic performance of an optimal deterministic code when \emph{zero} excess-distortion probability is tolerated. In the rest of the chapter, we present generalizations of Theorem \ref{asymp4variable} to stochastic codes and demonstrate the great advantage of tolerating a non-zero excess-distortion probability.

\section{Properties of Optimal Codes}
Different from simple coding scheme in the lossless case where one can order source sequences with decreasing probabilities and assign binary strings in $\calB$ with increased length, an optimal code for the lossy case does not have explicit simple descriptions. Instead, we recall the properties of the optimal stochastic codes and discuss the relationship between the fundamental limits of optimal deterministic and stochastic codes in~\cite[Section III.B]{kostina2015tit}.

\subsection{Zero Excess-Distortion Probability}
\label{prop:zero}
Let $\calB_D(\hatx^n)=\{x^n\in\calX^n:~d(x^n,\hatx^n)\leq D\}$ be the distortion ball for a reproduced source sequence $\hatx^n$. For any two distinct binary strings $(b_1,b_2)\in\calB^2$, we say $b_1<b_2$ if $l(b_1)<l(b_2)$ or if $l(b_1)=l(b_2)$ but $b_1$ has more number of zeros till the first one appears, i,e., $\emptyset<0$, $0<1$, $0100<1101$. For each $i\in\bbN$, let $b_i\in\calB$ be the $i$-th largest element of $\calB$, i.e., $b_1=\emptyset$, $b_4=00$.

An optimal $(n,L)$ code with $\rmP_{\rme,n}(D)=0$ satisfies
\begin{enumerate}
\item the optimal code has deterministic encoder and decoder, i.e., for each $(x^n,b,\hatx^n)\in\calX^n\times\calB\times\hatcalX^n$, $P_{B|X^n}^*(b|x^n)=\bbo(b=f^*(x^n))$ and $P_{\hatX^n|B}^*(\hatx^n|b)=\bbo(\hatx^n=\phi^*(b))$ for some deterministic functions $f^*:\calX^n\to\calB$ and $\phi^*:\calB\to\hatcalX^n$;
\item the output $B=f^*(X^n)$ of the optimal deterministic encoder $f^*$ orders binary strings in $\calB$ with probability reversely proportional to lengths, i.e., $P_B^*(b_i)\ge P_B^*(b_j)$ if and only if $i\le j$, where $P_B^*$ is induced by the source distribution $P_{X^n}$ and the optimal encoder $f^*$;
\item Given each $b\in\calB$, for all $x^n\in(\calB_{\phi^*(b)}\setminus\bigcup_{\barb\in\calB:\barb<b}\calB_{\phi^*(\barb)})$, $f^*(x^n)=b$.
\end{enumerate}
All above three claims can be proved via contradiction since violation of any claim would increase the average codeword length $\bbE[L(X^n)]$ and an optimal code has the smallest average codeword length. 

The explicit code construction is challenging to describe. However, the relationship between the codeword length and the probability of the source sequence can be made explicit. Property (iii) implies that for each $b\in\calB$, given any $x^n\in\calB_{\phi^*(b)}$ is mapped into the binary string $b$ and thus
\begin{align}
P_B^*(b)=\sum_{x^n\in\calB_{\phi^*(b)}}P_X^n(x^n).
\end{align}
Let $x_1^n,x_2^n,\ldots$ be the ordering of all possible source sequences with decreasing probabilities. Properties (ii) and (iii) imply that for each $i\in[n]$,
\begin{align}
l(f^*(x_i^n))\leq \log i\leq -\log{P_X^n(x_i^n)}.
\end{align}
In other words, for each $x^n$,
\begin{align}
l(f^*(x^n))\leq -\log P_X^n(x^n)\label{uppl4variable}.
\end{align}

\subsection{Non-Zero Excess-Distortion Probability}
\label{prop:nonzero}
Similarly, if one tolerates a non-zero excess-distortion probability of $\varepsilon\in(0,1]$, one can show that an optimal $(n,L)$-code with positive excess-distortion probability satisfies property (ii) and the following two properties:
\begin{enumerate}
\item the optimal decoder $P_{\hatX^n|B}^*(\hatx^n|b)=\bbo(\hatx^n=\phi^*(b))$ is deterministic via the function $\phi^*$ and the optimal encoder $P_{B|X^n}^*$ is stochastic such that $P_{B|X^n}^*(b|x^n)=1-P_{B|X^n}^*(\emptyset|x^n)$ for all $x^n\in\calX^n$ and $b\in\calB_\emptyset$;
\item there exists $\eta\in\bbR_+$ and $\alpha\in[0,1)$ such that for each $b\in\calB_\emptyset$
\begin{align}
P_{B|X^n}^*(b|x^n)
&=\left\{
\begin{array}{ll}
1&\mathrm{if~}x^n\in(\calB_{\phi^*(b)}\setminus\bigcup_{\barb\in\calB:\barb<b}\calB_{\phi^*(\barb)}\\*
&\mathrm{~and~}l(b)<\eta,\\
1-\alpha&\mathrm{if~}x^n\in(\calB_{\phi^*(b)}\setminus\bigcup_{\barb\in\calB:\barb<b}\calB_{\phi^*(\barb)})\\
&\mathrm{~and~}l(b)=\eta,
\end{array}
\right.
\label{opcode4variable1}\\
P_{B|X^n}^*(\emptyset|x^n)
&=\left\{
\begin{array}{ll}
1&\mathrm{if~}x^n\notin\bigcup_{b\in\calB}\calB_{\phi^*(b)},\\
\alpha&\mathrm{if~}x^n\in\bigcup_{b\in\calB}\calB_{\phi^*(b)}\mathrm{~and~}l(b)=\eta.
\end{array}
\right.
\label{opcode4variable2}
\end{align}
and
\begin{align}
\Pr\bigg\{X^n\notin\bigcup_{b\in\calB:b<\eta}\calB_{\phi^*(b)}\bigg\}+\alpha\Pr\{l(f^*(X^n))=\eta\}=\varepsilon\label{exp:guarantee4variable}.
\end{align}
\end{enumerate}
Note that \eqref{opcode4variable2} implies that for any $x^n$ that incurs an excess-distortion event with respect to the distortion level $D$, the encoder maps $x^n$ into $\emptyset$ and incurs no penalty to the average codeword length. Furthermore, \eqref{exp:guarantee4variable} states that the excess-distortion probability of the above code is exactly $\varepsilon$ as desired. From the above code construction, we find that when a non-zero excess-distortion probability is allowed, the optimal code is no longer deterministic since the optimal encoder is stochastic. 

\subsection{Deterministic and Stochastic Codes}
Since both optimal encoders and decoders are deterministic when $\varepsilon=0$ as shown in Section \ref{prop:zero}, it follows that
\begin{align}
L^*(n,D,0)=L_{\rm{det}}^*(n,D,0).
\end{align}
Thus, Theorem \ref{asymp4variable}  also holds for optimal stochastic codes. In other words, under the zero excess-distortion probability criterion, the randomization of encoders or decoders does not improve the performance.

When $\varepsilon\in(0,1]$ is strictly non-zero, it follows from Section \ref{prop:nonzero} (cf. \cite[Eq. (98)-(99)]{kostina2015tit}) that
\begin{align}
L^*(n,D,\varepsilon)\leq L_{\rm{det}}^*(n,D,\varepsilon)\leq L^*(n,D,\varepsilon)+1\label{linkcodes}.
\end{align}
Note that in the upper bound in \cite[Eq. (99)]{kostina2015tit} has the constant of $\phi(\min\{\varepsilon,1/e\})$ instead of $1$, where $\phi(x):=-x\log x$. We believe it is easier to present the bound in the simpler form since $\max_{x\in[0,1]}\phi(x)\leq \frac{\log e}{e}\approx 0.531<1$, 

The lower bound in \eqref{linkcodes} follows since any deterministic code is a special case of a stochastic code and thus the minimal average codeword length of an optimal stochastic code is no larger than the average codeword length of an optimal deterministic code. The upper bound in \eqref{linkcodes} is justified by analyzing the stochastic nature of the optimal encoder in \eqref{opcode4variable1} and \eqref{opcode4variable2}. Note that the randomization of the encoder $P_{B|X^n}^*$ occurs if and only if $x^n\in\bigcup_{b\in\calB}\calB_{\phi^*(b)}\mathrm{~and~}l(b)=\eta$. Specifically, the randomization is applied only to one source sequence. To clarify, let $\calD:=\{x^n:~l(f^*(x^n))\}=\eta$, let $m:=|\calD|$ and let $x_1^n,\ldots,x_m^n$ be the order of elements in $\calD$ with with decreasing order of probabilities, i.e., $P_X^n(x_i^n)\geq P_X^n(x_j^n)$ if $i\le j$. Furthermore, let $i^*\in[m]$ be the smallest value such that the sum probabilities of elements $\{x_i^n\}_{i>i^*}$ is no greater than $\alpha$, i.e.,
\begin{align}
i^*:=\argmin_{i\in[m]}\sum_{i\in[m]:i>i^*}P_X^n(x^n)\leq \alpha.
\end{align}
Using the optimal deterministic encoder $f^*$ for the case of $\varepsilon=0$, the optimal encoder $P_{B|X^n}^*$ can be described as follows: for any $x^n\in\calX^n$,
\begin{enumerate}
\item if $l(f^*(x^n))<\eta$ or $l(f^*(x^n))=\eta$, $P_X^n(x^n)>P_X^n(x_{i^*}^n)$, and $P_{B|X^n}^*(b|x^n)=\bbo(b=f^*(x^n))$;
\item if $l(f^*(x^n))>\eta$ or $l(f^*(x^n))=\eta$, $P_X^n(x^n)<P_X^n(x_{i^*}^n)$, and $P_{B|X^n}^*(b|x^n)=\bbo(b=\emptyset)$;
\item if $x^n=x_{i^*}^n$, 
\begin{align}
P_{B|X^n}^*(b|x^n)=\beta \bbo(b=f^*(x^n))+(1-\beta)\bbo(b=\emptyset),
\end{align}
where $\beta\in[0,1]$ satisfies
\begin{align}
\beta P_X^n(x_{i^*}^n)+\sum_{i\in[m]:i>i^*}P_X^n(x^n)=\alpha.
\end{align}
\end{enumerate}
Thus, we can construct a deterministic code by mapping $x_{i^*}^n$ to $f^*(x_{i^*}^n)$. This way, the excess-distortion probability is of the deterministic code is upper bounded by $\varepsilon$ and the average codeword length is upper bounded by
\begin{align}
\nn&L^*(n,D,\varepsilon)+(1-\beta)P_X^n(x_{i^*}^n)l(f^*(x_{i^*}^n))\\*
&\leq L^*(n,D,\varepsilon)-P_X^n(x_{i^*}^n)\log P_X^n(x_{i^*}^n)\label{useuppl4variable}\\
&\leq L^*(n,D,\varepsilon)+1\label{usephi4var},
\end{align}
where \eqref{useuppl4variable} follows from the result in \eqref{uppl4variable} and \eqref{usephi4var} follows since $\max_{x\in[0,1]}\phi(x)\le 1$.

Therefore, with the relationship in \eqref{linkcodes}, it suffices to derive bounds for $L^*(n,D,\varepsilon)$ to fully understand the fundamental limit of variable length lossy compression. 

\section{Non-Asymptotic Bounds}
We need the following definitions. Given any $\varepsilon\in[0,1]$, let
\begin{align}
R_n(P_X,D,\varepsilon)
&:=\min_{P_{\hatX^n|X^n}:\Pr\{d(X^n,\hatX^n)\}>D} I(P_X^n,P_{X^n|\hatX^n})\label{def:rn4var}.
\end{align}
When $n=1$, $R_1(P_X,D,\varepsilon)$ is analogous to the rate-distortion function $R(P_X,D)$ with the only exception that the constraint on the average distortion is replaced by a constraint on the excess-distortion probability.

For any $\varepsilon\in(0,1)$, given any real valued random variable $Y$, define the $\varepsilon$-cutoff random variable $\langle Y\rangle_{\varepsilon}$ such that $\Pr\{\langle Y\rangle_{\varepsilon}>\eta\}=0$,
\begin{align}
\Pr\{\langle Y\rangle_{\varepsilon}=Y\}
&=\left\{
\begin{array}{ll}
1&\mathrm{if~}Y<\eta,\\
1-\alpha&\mathrm{if~}Y=\eta,
\end{array}
\right.\\
\Pr\{\langle Y\rangle_{\varepsilon}=0\}
&=\left\{
\begin{array}{ll}
1&\mathrm{if~}Y>\eta,\\
\alpha&\mathrm{if~}Y=\eta,
\end{array}
\right.
\end{align}
where $\eta\in\bbR$ and $\alpha\in[0,1]$ are chosen such that 
\begin{align}
\Pr\{Y>\eta\}+\alpha\Pr\{Y=\eta\}=\varepsilon.
\end{align}
Recall that $\calB_D(x^n)=\{\hatx^n\in\hatcalX^n:~d(x^n,\hatx^n)\leq D\}$ is the distortion ball around $x^n$. The following cutoff random variable for the probabilities of the distortion ball around the source sequence $X^n$ is critical:
\begin{align}
\barR_n(P_X,D,\varepsilon)
&:=\min_{P_{\hatX^n}} \bbE_{P_X^n}\Big[\big\langle-\log P_{\hatX^n}(\calB_D(X^n))\big\rangle_{\varepsilon}\Big].
\end{align}

With above definitions, Kostina \emph{et al.}~\cite[Theorem 7]{kostina2015tit} proved the following result.
\begin{theorem}
\label{fbl4variable}
For any $\varepsilon\in[0,1]$,
\begin{align}
R_n(P_X,D,\varepsilon)-\log(R_n(P_X,D,\varepsilon)+1)-\log e
&\leq L^*(n,D,\varepsilon)\label{confbl4variable}\\
&\leq \barR_n(P_X,D,\varepsilon)\label{achbl4variable}.
\end{align}
\end{theorem}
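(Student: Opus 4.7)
The plan is to establish the two inequalities independently. The upper bound in \eqref{achbl4variable} is obtained by constructing, for any $P_{\hatX^n}\in\calP(\hatcalX^n)$, a stochastic variable length code whose expected codeword length matches $\bbE_{P_X^n}\big[\langle -\log P_{\hatX^n}(\calB_D(X^n))\rangle_{\varepsilon}\big]$, after which we minimize over $P_{\hatX^n}$. The lower bound in \eqref{confbl4variable} will follow from two classical ingredients: a data-processing step bounding $H(B)$ below by $R_n(P_X,D,\varepsilon)$, together with a Wyner-type inequality bounding $\bbE[l(B)]$ below in terms of $H(B)$.

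For the achievability, I would fix $P_{\hatX^n}$ and use a random-coding argument inspired by the optimal code structure described earlier in this chapter. Sample $\hatX^n(1),\hatX^n(2),\ldots$ i.i.d.\ from $P_{\hatX^n}$, set $\phi(b_i)=\hatX^n(i)$, where $b_i$ is the $i$-th binary string in $\calB$ (so $l(b_i)=\lfloor\log_2 i\rfloor$), and consider the random encoder that maps $x^n$ to $b_{K(x^n)}$, where $K(x^n)$ is the smallest index with $\hatX^n(K(x^n))\in\calB_D(x^n)$. Conditioned on $x^n$, $K(x^n)$ is geometric with parameter $P_{\hatX^n}(\calB_D(x^n))$, so by Jensen's inequality $\bbE[\log K(x^n)]\le -\log P_{\hatX^n}(\calB_D(x^n))$, and averaging over $x^n$ yields an expected length of the form $\bbE_{P_X^n}[-\log P_{\hatX^n}(\calB_D(X^n))]+O(1)$. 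To realize the cutoff exactly, modify the encoder as in the optimal code structure: for the ``worst'' source sequences (those whose nominal length exceeds a threshold $\eta$, plus a fraction $\alpha$ at the threshold) the encoder outputs $\emptyset$, with $(\eta,\alpha)$ chosen so that the total probability of these sequences is exactly $\varepsilon$. The contribution of the truncated tail vanishes and the remaining expectation equals $\bbE\big[\langle -\log P_{\hatX^n}(\calB_D(X^n))\rangle_{\varepsilon}\big]$. Derandomizing over the codebook then yields a concrete $(n,L)$-code whose expected length is no larger.

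For the converse, fix any $(n,L)$-code with $\rmP_{\rme,n}(D)\le\varepsilon$ and let $B$ denote its (possibly random) output, so that $P_{X^nB\hatX^n}=P_X^n P_{B|X^n}P_{\hatX^n|B}$. The Markov chain $X^n - B - \hatX^n$ and the fact that the induced $P_{\hatX^n|X^n}$ is feasible for the minimization in \eqref{def:rn4var} give
\begin{align}
R_n(P_X,D,\varepsilon)\le I(X^n;\hatX^n)\le I(X^n;B)\le H(B).
\end{align}
I would then invoke the Wyner-type inequality that for any random variable $B$ taking values in $\calB$,
\begin{align}
\bbE[l(B)]\ge H(B)-\log_2(H(B)+1)-\log_2 e,\label{eq:wyner-like}
\end{align}
which follows because at most $2^k$ strings in $\calB$ have length $k$, so among distributions with a given expected length the entropy is essentially maximized by one whose mass decays geometrically with codeword length. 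Combining these gives $L\ge H(B)-\log(H(B)+1)-\log e$, and since the map $x\mapsto x-\log(x+1)-\log e$ is non-decreasing on the range where it is non-negative (with the bound being trivial when $R_n(P_X,D,\varepsilon)$ is small, as $L\ge 0$), we conclude $L\ge R_n(P_X,D,\varepsilon)-\log(R_n(P_X,D,\varepsilon)+1)-\log e$.

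The main obstacle I anticipate lies in the achievability: matching $\bbE[\langle -\log P_{\hatX^n}(\calB_D(X^n))\rangle_{\varepsilon}]$ \emph{exactly}, rather than only up to an additive $O(1)$ term arising from Jensen's inequality and from ceiling/floor effects on $\log K(x^n)$. Eliminating this constant likely requires replacing the random-coding sketch by an explicit deterministic construction that enumerates reproductions in the order suggested by the optimal encoder and relates the rank of the first matched reproduction for $x^n$ directly to $-\log P_{\hatX^n}(\calB_D(x^n))$ via the partition of $\calX^n$ into distortion balls around the codewords, so that the cutoff truncation picks out exactly the $\varepsilon$-tail of this functional.
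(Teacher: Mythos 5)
Your proposal matches the paper's proof structure almost exactly: the converse combines the data-processing chain $H(B)\ge I(X^n;B)\ge I(X^n;\hatX^n)\ge R_n(P_X,D,\varepsilon)$ with the Wyner--Alon--Orlitsky inequality $\bbE[l(B)]\ge H(B)-\log(H(B)+1)-\log e$, and the achievability uses a random codebook $\hatX^n(1),\hatX^n(2),\ldots\sim P_{\hatX^n}$ i.i.d.\ with the stopping index $W=\min\{i:\hatX^n(i)\in\calB_D(X^n)\}$, truncated to $\emptyset$ on the $\varepsilon$-tail defined by the cutoff. The paper's monotonicity remark for the converse (that $x\mapsto x-\log(x+1)-\log e$ is nondecreasing on the relevant range, combined with $L\ge 0$) is also the same step you sketch.

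The one substantive issue is your closing paragraph, which identifies as the ``main obstacle'' a purported $O(1)$ additive slack from Jensen's inequality and floor effects, and suggests a deterministic enumeration is needed to eliminate it. This is a misconception: there is no slack to eliminate. Since $l(b_i)=\lfloor\log_2 i\rfloor\le\log_2 i$, the floor only \emph{tightens} an upper bound on $\bbE[l(b_W)]$; and since $\log$ is concave, Jensen gives $\bbE[\log W\,|\,X^n]\le\log\bbE[W\,|\,X^n]=-\log P_{\hatX^n}(\calB_D(X^n))$, which is exactly the direction needed. Both effects push the estimate \emph{below} the target, so after restricting to the complement of the $\varepsilon$-tail (where the cutoff random variable is $0$ and the encoder outputs $\emptyset$ with $l(\emptyset)=0$), one obtains
$\bbE[l(f(X^n))]\le\bbE\big[\langle-\log P_{\hatX^n}(\calB_D(X^n))\rangle_\varepsilon\big]$
with no additive constant; derandomizing over the codebook and minimizing over $P_{\hatX^n}$ then gives \eqref{achbl4variable} exactly. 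The random-coding argument you already outlined is sufficient and is precisely what the paper uses; no explicit deterministic enumeration of reproductions is required.
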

\begin{proof}
The converse bound in \eqref{confbl4variable} follows from the same argument as in the lossless case and uses a critical results that lower bound the expected codeword length of variable lossless compression with zero error in~\cite{alon1994lower,wyner1972upper}. Specifically, consider any code with output string $B\in\calB$ such that $\rmP_{\rme,n}(D)\leq \varepsilon$. Note that $X^n-B-\hatX^n$ forms a Markov chain and $\Pr\{d(X^n,\hatX^n)>D\}\leq \varepsilon$. Thus,
\begin{align}
H(B)
&\geq I(X^n;B)\\
&\geq I(X^n;\hatX^n)\label{usemar4var}\\
&\geq R_n(P_X,D,\varepsilon)\label{usern4var},
\end{align}
where \eqref{usephi4var} follows since $X^n-B-\hatX^n$ is a Markov chain and \eqref{usern4var} follows from the definition of $R_n(P_X,D,\varepsilon)$. Furthermore, it follows from ~\cite{alon1994lower,wyner1972upper} that
\begin{align}
\bbE[l(B)]
&\ge H(B)-\log(H(B)+1)-\log e\label{lossless4var}.
\end{align}
The proof of \eqref{confbl4variable} is completed by combining \eqref{usern4var} and \eqref{lossless4var}.

The achievability bound in \eqref{achbl4variable} is derived as follows. Let $P_{\hatX^n}$ be arbitrary and let $\hatbx=(\hatx_1^n,\hatx_2^n,\ldots)$ be a sequence of reproduced codewords generated independently from $P_{\hatX^n}$. Consider a code with encoder $f$ and decoder $\phi$ that operates as follows. Given any source sequence $X^n$, the encoder $f$ maps it into the binary string $b_W$ where
\begin{align}
W
&:=\left\{
\begin{array}{ll}
\min\{i\in[\bbN]:~d(X^n,\hatx_i^n)\leq D\}&\mathrm{if~}\langle-\log P_{\hatX^n}(\calB_D(x^n))\big\rangle_{\varepsilon}>0,\\
1&\mathrm{otherwise}.
\end{array}
\right.
\end{align}
Upon receiving $b_W$, the decoder outputs $\hatx_W^n$ as the reproduced source sequence.

Using the random coding idea, by averaging over distributions of the source sequence $X^n$ and random codewords $\hatbX^n=(\hatX_1^n,\hatX_2^n,\ldots)$, the average codeword length of the code satisfies
\begin{align}
\bbE[l(f(X^n))]
&=\bbE[\lfloor \log W\rfloor]\\
&\leq \bbE\Big[\log W\bbo\big(\langle-\log P_{\hatX^n}(\calB_D(x^n))\big\rangle_{\varepsilon}>0\big)\Big]\\
&=\bbE\Big[\bbE[\log W|X^n]\bbo\big(\langle-\log P_{\hatX^n}(\calB_D(x^n))\big\rangle_{\varepsilon}>0\big)\Big]\\
&\leq \bbE\Big[\log \bbE[W|X^n]\bbo\big(\langle-\log P_{\hatX^n}(\calB_D(x^n))\big\rangle_{\varepsilon}>0\big)\Big]\label{usejensen4var}\\
&\leq \bbE\Big[\big\langle-\log P_{\hatX^n}(\calB_D(x^n))\big\rangle_{\varepsilon}\Big]\label{usedefw4var},
\end{align}
where \eqref{usejensen4var} follows by applying the Jensen's inequality to the concave function $\log x$ and \eqref{usedefw4var} follows from the definition of the $\varepsilon$-cutoff random variable and the fact that if $\langle-\log P_{\hatX^n}(\calB_D(x^n))\big\rangle_{\varepsilon}>0$ holds, the random variable $W$ is a geometric random variable with success probability $P_{\hatX^n}(\calB_D(x^n))$, which implies that $\log\bbE[W|X^n]=-\log P_{\hatX^n}(\calB_D(x^n))$. By optimizing over all $P_{\hatX^n}$, we obtain the bound in \eqref{achbl4variable}.

\end{proof}

\section{Second-Order Asymptotics}

\subsection{Result and Discussions}
Recall the definition of the distortion-dispersion function $\rmV(P_X,D)$ in \eqref{def:disdispersion}.
Let $P_{\hatX^*}$ be the induced marginal distribution that achieves the rate-distortion function $R(P_X,D)$ in \eqref{rdf4variable}. 

Assume that $\bbE_{P_X\times P_{\hatX^*}}[(d(X,\hatX))^{12}]<\infty$.  Kostina, Polyanskiy and Verd\'u~\cite[Theorem 9]{kostina2015tit} proved the following result.
\begin{theorem}
\label{second4variable}
For any $\varepsilon\in[0,1]$ and $D\in(D_{\rm{min}},D_{\rm{max}})$,
\begin{align}
L^*(n,D,\varepsilon)
&=(1-\varepsilon)nR(P_X,D)-\sqrt{\frac{n\rmV(P_X,D)}{2\pi}}\exp\bigg(-\frac{\rmQ^{-1}(\varepsilon)^2}{2}\bigg)+O(\log n).
\end{align}
\end{theorem}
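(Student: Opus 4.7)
The plan is to sandwich $L^*(n,D,\varepsilon)$ using Theorem \ref{fbl4variable} and then analyze both sides through a truncated-mean computation for a sum of i.i.d.\ distortion-tilted information densities. The first observation is that the extra term $-\log(R_n+1)-\log e$ in \eqref{confbl4variable} is $O(\log n)$ as soon as $R_n(P_X,D,\varepsilon)=O(n)$, so it is absorbed into the remainder. Consequently, it suffices to show
\begin{align}
\barR_n(P_X,D,\varepsilon) &\le (1-\varepsilon)nR(P_X,D)-\sqrt{\tfrac{n\rmV(P_X,D)}{2\pi}}\,e^{-\rmQ^{-1}(\varepsilon)^2/2}+O(\log n),\\
R_n(P_X,D,\varepsilon) &\ge (1-\varepsilon)nR(P_X,D)-\sqrt{\tfrac{n\rmV(P_X,D)}{2\pi}}\,e^{-\rmQ^{-1}(\varepsilon)^2/2}+O(\log n).
\end{align}

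For the achievability direction, I would plug $P_{\hatX}=(P_{\hatX}^*)^n$ (the product of the optimal test-channel output for $R(P_X,D)$) into the definition of $\barR_n$. The non-asymptotic lossy AEP in Lemma \ref{lossy:AEP} yields, with probability at least $1-K/\sqrt{n}$,
\begin{align}
-\log (P_{\hatX}^*)^n(\calB_D(X^n))\le S_n+C\log n+c,\quad S_n:=\sum_{i\in[n]}\jmath(X_i|D,P_X).
\end{align}
Since $\langle\cdot\rangle_\varepsilon$ is monotone and zeroes out only the top $\varepsilon$-tail, I would pass this pointwise inequality through the cutoff operation (after adjusting $\varepsilon$ by $K/\sqrt{n}$, which perturbs the final bound only by $O(\log n)$ via Taylor expansion of $\rmQ^{-1}$) to obtain $\barR_n(P_X,D,\varepsilon)\le \bbE[\langle S_n\rangle_{\varepsilon-O(1/\sqrt n)}]+O(\log n)$.

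For the converse, I would start from $R_n(P_X,D,\varepsilon)$ in \eqref{def:rn4var} and argue, via a single-shot converse in the spirit of Theorem \ref{converse:fbl} (combined with Claim (iii) of Lemma \ref{prop:dtilteddensity}), that any conditional $P_{\hatX^n|X^n}$ achieving excess-distortion probability at most $\varepsilon$ must satisfy $I(X^n;\hatX^n)\ge \bbE[\langle S_n\rangle_{\varepsilon+O(1/\sqrt n)}]-O(\log n)$. The idea is that $S_n$ plays the role of a threshold: whenever $S_n$ is atypically large, the excess-distortion event must occur, and the optimum is to arrange for the top $\varepsilon$-quantile of $S_n$ to coincide with the excess-distortion event, which is exactly what the cutoff random variable captures.

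The main calculation, and the hard step I expect, is showing
\begin{align}
\bbE[\langle S_n\rangle_\varepsilon] = (1-\varepsilon)nR(P_X,D)-\sqrt{\tfrac{n\rmV(P_X,D)}{2\pi}}\,e^{-\rmQ^{-1}(\varepsilon)^2/2}+O(\log n). \label{eq:truncmean}
\end{align}
Writing $S_n=nR(P_X,D)+\sqrt{n\rmV(P_X,D)}\,Z_n$ with $Z_n$ the normalized sum of i.i.d.\ zero-mean, unit-variance random variables (by Claim (ii) of Lemma \ref{prop:dtilteddensity}), the cutoff operation removes the upper $\varepsilon$-tail of $S_n$, i.e.\ the event $\{Z_n>\rmQ^{-1}(\varepsilon)+o(1)\}$. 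Thus
\begin{align}
\bbE[\langle S_n\rangle_\varepsilon]=nR(P_X,D)\Pr\{Z_n\le t_\varepsilon\}+\sqrt{n\rmV(P_X,D)}\,\bbE[Z_n\bbo\{Z_n\le t_\varepsilon\}]+O(1),
\end{align}
where $t_\varepsilon$ is chosen so that $\Pr\{Z_n>t_\varepsilon\}=\varepsilon$. For the first factor Berry--Esseen (Theorem \ref{berrytheorem}) gives $\Pr\{Z_n\le t_\varepsilon\}=1-\varepsilon+O(1/\sqrt n)$ with $t_\varepsilon=\rmQ^{-1}(\varepsilon)+O(1/\sqrt n)$. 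For the second factor I would use an integration-by-parts identity together with the Berry--Esseen bound to obtain $\bbE[Z_n\bbo\{Z_n\le t_\varepsilon\}]=-\phi(\rmQ^{-1}(\varepsilon))+O(1/\sqrt n)$, where $\phi$ is the standard normal density; the higher-moment hypothesis $\bbE[(d(X,\hatX))^{12}]<\infty$ is used here to justify the uniform convergence and to control the contribution from the extreme tail via Chebyshev-type bounds. Combining these two pieces yields \eqref{eq:truncmean}. The delicate point is that the Berry--Esseen rate $O(1/\sqrt n)$ multiplies $nR(P_X,D)$ and $\sqrt{n\rmV(P_X,D)}$ and would naively give $O(\sqrt n)$ slack; careful cancellation between the two factors (which is exactly what motivates keeping $t_\varepsilon$ as the empirical quantile rather than the Gaussian one) is required to reduce the remainder to $O(\log n)$. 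This cancellation, rather than the Berry--Esseen bookkeeping itself, is the principal technical obstacle.
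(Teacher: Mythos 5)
Your high-level plan---sandwich $L^*(n,D,\varepsilon)$ via Theorem \ref{fbl4variable} and reduce both sides to a truncated-mean computation for $S_n:=\sum_{i\in[n]}\jmath(X_i|D,P_X)$---is the same structure the paper follows, and the truncated-mean asymptotic you aim to establish is exactly Lemma \ref{clt4cutoff}; your observation that the keep-probability multiplying $nR(P_X,D)$ equals $1-\varepsilon$ exactly by construction of the cutoff, so that the Berry--Esseen error enters only the $\Theta(\sqrt n)$ tail term, is also correct.

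The gap lies in the step converting the high-probability lossy-AEP inequality $-\log (P_{\hatX}^*)^n(\calB_D(X^n))\le S_n+C\log n+c$ (holding with probability $\ge 1-K/\sqrt n$) into a bound on $\barR_n(P_X,D,\varepsilon)$ by ``adjusting $\varepsilon$ by $K/\sqrt n$''. You claim this perturbs the final bound only by $O(\log n)$ via a Taylor expansion of $\rmQ^{-1}$, but plugging $\varepsilon'=\varepsilon-K/\sqrt n$ into your own truncated-mean asymptotic gives $\bbE[\langle S_n\rangle_{\varepsilon'}]-\bbE[\langle S_n\rangle_{\varepsilon}]=(K/\sqrt n)\,nR(P_X,D)+O(1)=\Theta(\sqrt n)$: the extra mass you retain has probability $K/\sqrt n$ and lies near the $\varepsilon$-quantile of $S_n$, whose magnitude is $nR(P_X,D)+\Theta(\sqrt n)$. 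The Taylor expansion of $\rmQ^{-1}$ controls only the $\sqrt{n\rmV(P_X,D)/(2\pi)}\,e^{-\rmQ^{-1}(\varepsilon)^2/2}$ term, not the leading $(1-\varepsilon)nR(P_X,D)$ term, and your converse sketch has the same defect. The paper avoids this by using the $L^1$-Lipschitz property of the cutoff, $\big|\bbE[\langle Y_1\rangle_\varepsilon]-\bbE[\langle Y_2\rangle_\varepsilon]\big|\le\bbE[|Y_1-Y_2|]$ (apply the optimal keep-selector for one variable as a suboptimal selector for the other), combined with a moment bound $\bbE\big[\big|{-\log(P_{\hatX}^*)^n(\calB_D(X^n))}-S_n\big|\big]=O(\log n)$ supplied by the refined lossy AEP of \cite[Lemma 4]{kostina2015tit}, which is strictly stronger than the probability-$(1-K/\sqrt n)$ statement of Lemma \ref{lossy:AEP}; for the converse, Lemma \ref{conlemma4var} gives an exact identity that sidesteps any $\varepsilon$-shift entirely. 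Substituting these two ingredients into your sketch would close the gap.
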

The proof of Theorem \ref{second4variable} follows by applying Berry-Esseen theorem to the non-asymptotic bounds in Theorem \ref{fbl4variable} with proper choice of parameters and is available in Section \ref{proof4variable}.

Theorem \ref{second4variable} establishes a second-order asymptotic approximation to the average codeword length of an optimal variable-length lossy compression code that tolerates an excess-distortion probability of $\varepsilon\in[0,1]$. In light of \eqref{linkcodes}, the same second-order asymptotic bound  also holds for optimal deterministic codes. 

Compared with Theorem \ref{asymp4variable} that tackles zero excess-distortion probability, tolerating a non-zero excess-distortion probability significantly reduces the average codeword length. Specifically, the asymptotic average codeword rate per source is reduced by a multiplicative factor of $\varepsilon$, i.e., $\lim_{n\to\infty}\frac{L^*(n,D,\varepsilon)}{n}=(1-\varepsilon)R(P_X,D)$ and the negative second-order coding rate implies that the non-asymptotic rate approaches the first-order asymptotic rate from below regardless of $\varepsilon\in(0,1)$. This is in stark contrast to the fixed-length case in Theorem \ref{second4rd} where the non-asymptotic rate approaches the asymptotic rate from above if $\varepsilon\in(0,1)$, implying a finite blocklength penalty. In summary, the average bit required per source symbol for variable-length lossy compression allowing errors is significantly reduced compared with the fixed-length lossy compression and variable-length compression with zero error, both asymptotically and non-asymptotically. In Fig. \ref{illus:gainofvar}, we plot the second-order asymptotic approximation to $R^*(n,D,\varepsilon):=\frac{L^*(n,D,\varepsilon)}{n}$ for different values of $\varepsilon\in[0,1]$ and compare with the second-order asymptotics for the fixed-length compression in Theorem \ref{second4rd} for a Bernoulli source with distribution $P_X=\mathrm{Bern}(0.2)$ under the Hamming distortion measure with target excess-distortion probability $\varepsilon=0.05$ with respect to the distortion level $D=0.02$.
\begin{figure}[bt]
\centering
\includegraphics[width=.8\columnwidth]{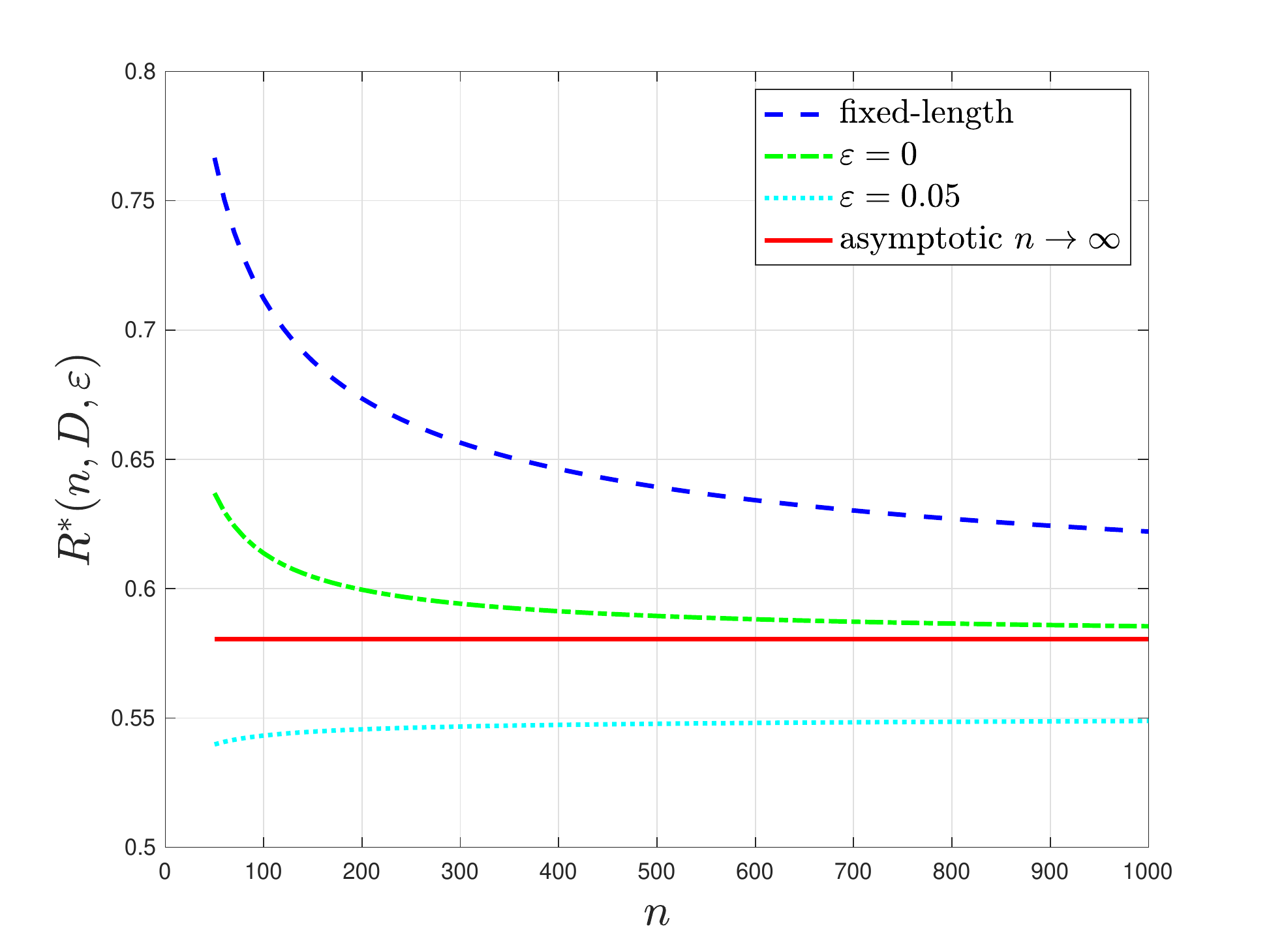}
\caption{Plots of the second-order asymptotic approximation in Theorem \ref{second4variable} to the average codeword length per source symbol $R^*(n,D,\varepsilon)$ for different values of the blocklength $n$ and comparison with the fixed-length compression counterpart in Theorem \ref{second4rd}.}
\label{illus:gainofvar}
\end{figure}

Similar to Theorem \ref{second4rd}, Theorem \ref{second4variable} holds for both discrete and continuous sources that satisfy the assumptions. It would be interesting to generalize Theorem \ref{second4variable} to account for the noisy source, noisy channel, the mismatched setting or the source with memory.

\subsection{Proof Sketch}
\label{proof4variable}
The following Lemma~\cite[Lemmas 1]{kostina2015tit} bounds the expectation of the cutoff random variable and is critical in the proof of second-order asymptotics.
\begin{lemma}
\label{clt4cutoff}
Let $Y^n$ be an i.i.d. sequence generated from a distribution $P_Y$ with finite third absolute moment, i.e., $\bbE[(Y-\bbE[Y])^3]<\infty$. For any $\varepsilon\in[0,1]$,
\begin{align}
\bbE\Big[\Big\langle \sum_{i\in[n]}Y_i \Big\rangle_{\varepsilon}\Big]&=(1-\varepsilon)n\bbE[Y]-\sqrt{\frac{n\rm{Var}[Y]}{2\pi}}\exp\bigg(-\frac{\rmQ^{-1}(\varepsilon)^2}{2}\bigg)+O(1).
\end{align}
\end{lemma}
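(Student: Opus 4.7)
The plan is to reduce the expected value of the cutoff random variable to a truncated expectation and then evaluate this truncated expectation by means of the Berry-Esseen theorem. Let $S_n:=\sum_{i\in[n]}Y_i$, $\mu:=\bbE[Y]$, $\sigma^2:=\var[Y]$, and let $(\eta,\alpha)\in\bbR\times[0,1]$ be the parameters in the definition of $\langle S_n\rangle_\varepsilon$, so that $\Pr\{S_n>\eta\}+\alpha\Pr\{S_n=\eta\}=\varepsilon$. Writing out the definition,
\begin{align}
\bbE\big[\langle S_n\rangle_\varepsilon\big]
&=\bbE\big[S_n\bbo\{S_n<\eta\}\big]+(1-\alpha)\eta\Pr\{S_n=\eta\}\\
&=\bbE[S_n]-\bbE\big[S_n\bbo\{S_n>\eta\}\big]-\alpha\eta\Pr\{S_n=\eta\}\\
&=n\mu-\Big(\eta\Pr\{S_n>\eta\}+\int_\eta^\infty\Pr\{S_n>t\}\,\rmd t\Big)-\alpha\eta\Pr\{S_n=\eta\},
\end{align}
where in the last step I would apply the layer-cake identity to $\bbE[(S_n-\eta)^+]$. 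The problem is thereby reduced to (i) locating $\eta$ and (ii) estimating the tail integral $\int_\eta^\infty\Pr\{S_n>t\}\,\rmd t$, each with error $O(1)$.

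For step (i), I would apply the Berry-Esseen theorem (cf.~Theorem \ref{berrytheorem}) to assert that $\Pr\{S_n>t\}=\rmQ\big((t-n\mu)/\sqrt{n\sigma^2}\big)+O(1/\sqrt{n})$ uniformly in $t$, and then invert to get $\eta=n\mu+\sqrt{n\sigma^2}\,\rmQ^{-1}(\varepsilon)+O(1)$ (a Taylor expansion of $\rmQ^{-1}$ around $\varepsilon$ absorbs the Berry-Esseen error into an additive $O(1)$ term, using the fact that the derivative of $\rmQ^{-1}$ at $\varepsilon\in(0,1)$ is bounded and that the lattice correction $\alpha\eta\Pr\{S_n=\eta\}$ contributes $O(1)$ since $|\eta|=O(\sqrt{n})$ while $\Pr\{S_n=\eta\}=O(1/\sqrt{n})$ by the local CLT, or by a one-step cruder argument tracking $\Pr\{S_n=\eta\}\leq 1$ and noting that the lattice spacing is bounded). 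For step (ii), on the good region $[\eta,\eta+\sqrt{n}\log n]$ I would substitute $t=n\mu+\sqrt{n\sigma^2}\,u$ and use Berry-Esseen again to replace $\Pr\{S_n>t\}$ by $\rmQ(u)$, getting
\begin{align}
\int_\eta^\infty\Pr\{S_n>t\}\,\rmd t
=\sqrt{n\sigma^2}\int_{\rmQ^{-1}(\varepsilon)+O(1/\sqrt{n})}^\infty\!\!\rmQ(u)\,\rmd u+O(1),
\end{align}
and the Gaussian tail on $[\eta+\sqrt{n}\log n,\infty)$ contributes negligibly. The standard identity $\int_{a}^\infty \rmQ(u)\,\rmd u=\phi(a)-a\,\rmQ(a)$ (with $\phi$ the standard normal density) then gives the tail integral as $\sqrt{n\sigma^2}\big(\phi(\rmQ^{-1}(\varepsilon))-\rmQ^{-1}(\varepsilon)\varepsilon\big)+O(1)$.

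Combining these pieces,
\begin{align}
\bbE\big[\langle S_n\rangle_\varepsilon\big]
&=n\mu-\eta\varepsilon-\sqrt{n\sigma^2}\big(\phi(\rmQ^{-1}(\varepsilon))-\rmQ^{-1}(\varepsilon)\varepsilon\big)+O(1)\\
&=n\mu-\varepsilon\big(n\mu+\sqrt{n\sigma^2}\rmQ^{-1}(\varepsilon)\big)-\sqrt{n\sigma^2}\,\phi(\rmQ^{-1}(\varepsilon))+\varepsilon\sqrt{n\sigma^2}\rmQ^{-1}(\varepsilon)+O(1)\\
&=(1-\varepsilon)n\mu-\sqrt{\frac{n\sigma^2}{2\pi}}\exp\!\Big(-\frac{\rmQ^{-1}(\varepsilon)^2}{2}\Big)+O(1),
\end{align}
which is the claimed identity, after recalling $\phi(x)=\frac{1}{\sqrt{2\pi}}e^{-x^2/2}$.

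The main obstacle is ensuring that all the $O(1/\sqrt{n})$ Berry-Esseen slack introduced at different places (in the location of $\eta$, in the pointwise approximation of $\Pr\{S_n>t\}$, and in the lattice correction at $\eta$) aggregate to at most $O(1)$ in the final expression, given that they are being multiplied by quantities of size $O(\sqrt{n})$. This is where the finite third absolute moment hypothesis is used crucially: it delivers a Berry-Esseen error of order $1/\sqrt{n}$, which after multiplication by $\sqrt{n}$ yields the advertised $O(1)$ residue. A cleaner alternative I would keep in reserve is to express $\bbE[\langle S_n\rangle_\varepsilon]$ directly as $\int_0^{1-\varepsilon}F_{S_n}^{-1}(u)\,\rmd u$ (a Lorenz-type representation of truncated expectation) and then substitute the Berry-Esseen-controlled quantile function of $S_n$, which sidesteps the lattice technicality uniformly.
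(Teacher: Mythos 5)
Your decomposition is correct and the final algebraic combination is right — in particular, you correctly fold $\eta\Pr\{S_n>\eta\}+\alpha\eta\Pr\{S_n=\eta\}$ into $\eta\varepsilon$ using the defining relation for $(\eta,\alpha)$, so the lattice atom at $\eta$ cancels exactly rather than needing to be estimated. (Your parenthetical claim that $\alpha\eta\Pr\{S_n=\eta\}=O(1)$ because ``$|\eta|=O(\sqrt{n})$'' is wrong — $\eta=n\mu+O(\sqrt{n})$, so $|\eta|=\Theta(n)$ in general — but this is harmless precisely because the term is absorbed, not estimated.) The leading-order computation $\int_{\rmQ^{-1}(\varepsilon)}^{\infty}\rmQ(u)\,\rmd u=\phi(\rmQ^{-1}(\varepsilon))-\varepsilon\rmQ^{-1}(\varepsilon)$ and the resulting cancellation are also right. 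This is essentially the strategy of \cite[Appendix A]{kostina2015tit}, which the monograph simply cites.

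The genuine gap is in the error budget for the tail integral. You propose to control $\int_\eta^\infty\Pr\{S_n>t\}\,\rmd t$ by replacing $\Pr\{S_n>t\}$ with $\rmQ\big((t-n\mu)/\sqrt{n\sigma^2}\big)$ on $[\eta,\eta+\sqrt{n}\log n]$ using the uniform Berry--Esseen bound (Theorem \ref{berrytheorem}) and discarding the rest. After the substitution $t=n\mu+\sqrt{n\sigma^2}\,u$, the integration interval has length $\Theta(\log n)$ in $u$, the pointwise Berry--Esseen slack is $O(1/\sqrt{n})$, and the Jacobian is $\sqrt{n\sigma^2}$, so the accumulated error is $\Theta(\log n)$, not $O(1)$. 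Shrinking the cutoff helps the middle-range error but loses control of the tail beyond it (with only a finite third moment, a Markov bound on $\bbE[(S_n-\eta-M\sqrt{n\sigma^2})^+]$ gives $O(1/M^2)$, and optimizing $M$ still leaves a polynomially growing remainder). To actually land on $O(1)$ you need a strictly sharper ingredient than the uniform Berry--Esseen theorem: either the non-uniform (Nagaev--Bikelis) bound $|\Pr\{S_n>n\mu+\sqrt{n\sigma^2}u\}-\rmQ(u)|\le CT/(\sigma^3\sqrt{n}(1+|u|^3))$, whose error integrates to $O(1/\sqrt{n})$ over all of $\bbR$ and hence to $O(1)$ after multiplying by $\sqrt{n\sigma^2}$, or a Wasserstein-1 CLT giving $W_1(S_n,\calN(n\mu,n\sigma^2))=O(1)$ under a finite third absolute moment. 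Your ``reserve alternative'' via $\int_0^{1-\varepsilon}F_{S_n}^{-1}(u)\,\rmd u$ is actually the cleaner route, but it does not sidestep the issue — controlling $\int_0^{1-\varepsilon}|F_{S_n}^{-1}(u)-(n\mu+\sqrt{n\sigma^2}\Phi^{-1}(u))|\,\rmd u$ is exactly the Wasserstein-1 control just described, so you would still need to invoke such a result rather than the plain Theorem \ref{berrytheorem}.
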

The proof of Lemma \ref{clt4cutoff} follows from the Berry-Esseen theorem for independent random variables and algebra. Readers could refer to \cite[Appendix A]{kostina2015tit} for details.

The achievability part follows by weakening \eqref{achbl4variable} with $\bbE_{P_X^n}\Big[\big\langle-\log P_{\hatX^n}^*(\calB_D(X^n))\big\rangle_{\varepsilon}\Big]$ and using the following lemma that bounds the expectation term explicitly, where $P_{\hatX^n}^*$ is the product distribution of $P_{\hatX}^*$, i.e., for any $\hatx^n\in\hatcalX^n$,
\begin{align}
P_{\hatx^n}^*(\hatx^n)=\prod_{i\in[n]}P_{\hatX}^*(x_i).
\end{align}

\begin{lemma}
\label{achlemma4var}
For any $\varepsilon\in[0,1]$,
\begin{align}
\nn&\bbE_{P_X^n}\Big[\big\langle-\log P_{\hatX^n}^*(\calB_D(X^n))\big\rangle_{\varepsilon}\Big]\\*
&=(1-\varepsilon)nR(P_X,D)-\sqrt{\frac{\rmV(P_X,D)}{2\pi}}\exp\bigg(-\frac{\rmQ^{-1}(\varepsilon)^2}{2}\bigg)+O(\log n).
\end{align}
\end{lemma}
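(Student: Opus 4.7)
The plan is to apply the central limit theorem for cutoff random variables (Lemma \ref{clt4cutoff}) to the i.i.d.\ sum $Z_n := \sum_{i \in [n]} \jmath(X_i | D, P_X)$, and to show that the random variable of interest $Y_n := -\log P_{\hatX^n}^*(\calB_D(X^n))$ sandwiches $Z_n$ up to an $O(\log n)$ error. Since $Z_n$ is a sum of i.i.d.\ random variables with mean $R(P_X,D)$, variance $\rmV(P_X,D)$, and finite third absolute moment under the standing assumptions, Lemma \ref{clt4cutoff} directly yields
\begin{align}
\bbE[\langle Z_n \rangle_\varepsilon] = (1-\varepsilon)nR(P_X,D) - \sqrt{\frac{n\rmV(P_X,D)}{2\pi}}\exp\!\Big(\!-\!\tfrac{1}{2}(\rmQ^{-1}(\varepsilon))^2\Big) + O(1),
\end{align}
so the task reduces to transferring this expansion from $Z_n$ to $Y_n$ with an additional $O(\log n)$ slack.

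For the lower bound on $Y_n$, I would exploit the factorization of $P_{\hatX^n}^*$ and the definition \eqref{def:dtilteddensity}: using the Markov inequality together with $\lambda^*\geq 0$ gives, for every $x^n \in \calX^n$,
\begin{align}
P_{\hatX^n}^*(\calB_D(x^n)) \leq \bbE_{P_{\hatX^n}^*}\!\left[\exp\!\big(n\lambda^*(D-d(x^n,\hatX^n))\big)\right] = \prod_{i\in[n]}\exp(-\jmath(x_i|D,P_X)),
\end{align}
so $Y_n \geq Z_n$ deterministically. For the upper bound, the lossy AEP (Lemma \ref{lossy:AEP}) delivers $Y_n \leq Z_n + C\log n + c$ on an event $\calE_n$ with $\Pr(\calE_n)\geq 1 - K/\sqrt{n}$. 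To convert the pointwise sandwich into a sandwich of cutoff expectations, I would work with the survival-function representation $\bbE[\langle W\rangle_\varepsilon]=\int_0^\infty (\Pr(W>t)-\varepsilon)^+\rmd t$ (after recentering so $W\geq 0$). Monotonicity in $W$ immediately gives $\bbE[\langle Y_n\rangle_\varepsilon]\geq \bbE[\langle Z_n\rangle_\varepsilon]$, and replacing $W$ by a shift argues that $Y_n\leq Z_n + O(\log n)$ (when it holds) translates into $\bbE[\langle Y_n\rangle_\varepsilon]\leq \bbE[\langle Z_n\rangle_\varepsilon]+O(\log n)$.

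The subtle step and main obstacle is handling the low-probability event $\calE_n^{\rmc}$ on which the lossy AEP fails: there $Y_n$ is not pointwise controlled, yet the cutoff threshold $\eta_n$ that defines $\langle Y_n\rangle_\varepsilon$ is a global functional of the distribution, so in principle even a small probability mass at anomalously large values of $Y_n$ could shift $\eta_n$ and corrupt the truncated expectation. The plan is to introduce a surrogate $\tilde Y_n:= Y_n\bbo(\calE_n) + Z_n\bbo(\calE_n^{\rmc})$ (so that $\tilde Y_n\leq Z_n + C\log n + c$ a.s.\ and hence $\bbE[\langle \tilde Y_n\rangle_\varepsilon]\leq \bbE[\langle Z_n\rangle_\varepsilon]+O(\log n)$ by the survival-function argument), and then to bound $|\bbE[\langle Y_n\rangle_\varepsilon]-\bbE[\langle \tilde Y_n\rangle_\varepsilon]|$ by combining $\Pr(\calE_n^{\rmc})=O(1/\sqrt n)$ with the fact that $\eta_n = nR(P_X,D)+\Theta(\sqrt n)$ (from the CLT scaling of $Z_n$ on the $\sqrt n$ concentration scale). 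Making this quantitative requires a moment bound on $Y_n$ restricted to $\calE_n^{\rmc}$, which would follow from the assumed finiteness of $\bbE[(d(X,\hatX))^{12}]$ and the explicit form of $\jmath$; this moment control is the technical crux of the argument.
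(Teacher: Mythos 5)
Your plan correctly identifies the two ingredients — Lemma \ref{clt4cutoff} applied to $Z_n:=\sum_{i\in[n]}\jmath(X_i|D,P_X)$ and a two-sided comparison between $Y_n:=-\log P_{\hatX^n}^*(\calB_D(X^n))$ and $Z_n$ — and the deterministic lower bound $Y_n\geq Z_n$ via Markov's inequality (essentially Claim (iii) of Lemma \ref{prop:dtilteddensity}) is correct. A reduction you could have made cleaner: since $Y_n\geq Z_n$ pointwise, the survival-function representation yields
\begin{align}
\bbE[\langle Y_n\rangle_\varepsilon]-\bbE[\langle Z_n\rangle_\varepsilon]\leq\int_0^\infty\big(\Pr\{Y_n>t\}-\Pr\{Z_n>t\}\big)\,\rmd t=\bbE[Y_n-Z_n],
\end{align}
because $x\mapsto(x-\varepsilon)^+$ is $1$-Lipschitz and nondecreasing. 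So the whole task reduces to showing $\bbE[Y_n]-nR(P_X,D)=O(\log n)$; the surrogate $\tilde Y_n$ and the tracking of the cutoff threshold $\eta_n$ are unnecessary detours.

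The genuine gap is the strength of the input you appeal to. You invoke only Lemma \ref{lossy:AEP}, which controls the event $\{Y_n\leq Z_n+C\log n+c\}$ up to a failure probability of $K/\sqrt{n}$ and says nothing about $Y_n$ on the failure event $\calE_n^\rmc$. On $\calE_n^\rmc$ the overshoot $Y_n-Z_n$ can be as large as $\Theta(n)$ — for a DMS with $q:=\min_{\hatx}P_{\hatX}^*(\hatx)>0$ one has only $Y_n\leq -n\log q$ — so the best any Cauchy--Schwarz or H\"older manipulation can extract from these two facts is $\bbE[(Y_n-Z_n)\bbo\{\calE_n^\rmc\}]\lesssim n\cdot n^{-1/2}=O(\sqrt n)$. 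Since the second-order term in the target expansion is itself $\Theta(\sqrt n)$, an $O(\sqrt n)$ error swallows it entirely; moment control on $d(X,\hatX)$ cannot rescue this, because the bottleneck is the $1/\sqrt n$ failure probability, not the integrability of the overshoot. The paper's proof deliberately invokes the \emph{refined} lossy AEP of \cite[Lemma 4]{kostina2015tit} (implicit in \cite{yang1999redundancy}), which is what supplies control of $\bbE[Y_n]-nR(P_X,D)$ at the $O(\log n)$ level rather than a single high-probability event; that stronger tool, not Lemma \ref{lossy:AEP}, is the missing ingredient in your argument.
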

The proof of Lemma \ref{achlemma4var} follows from Lemma \ref{clt4cutoff} and the refined version~\cite[Lemma 4]{kostina2015tit} of the lossy AEP in Lemma \ref{lossy:AEP}, which was implicitly presented in \cite{yang1999redundancy}.
  
We next present the proof sketch for the converse part. Recall the definition of the distortion-tilted information density $\jmath(x|D,P_X)$ in \eqref{def:dtilteddensity} and recall that $\lambda^*$ defined in \eqref{def:lambda^*} is defined as the first negative derivative of $R(P_X,D)$ with respect to $D$. For any $x^n\in\calX^n$, let $\jmath(x^n|D,P_X)=\sum_{i\in[n]}\jmath(x_i|D,P_X)$.

The following lemma was derived in~\cite[Theorem 8]{kostina2015tit}.
\begin{lemma}
\label{conlemma4var}
For any $\varepsilon\in[0,1]$ and $D\in(D_{\rm{min}},D_{\rm{max}})$,
\begin{align}
R_n(P_X,D,\varepsilon)
\nn&=\bbE\Big[\big\langle\jmath(X^n|D,P_X)\big\rangle_{\varepsilon}\Big]-\log(nR(P_X,D)+n\lambda^*D+1)\\*
&\qquad-\log e-H_\rmb(\varepsilon).
\end{align}
\end{lemma}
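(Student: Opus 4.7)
My plan is to prove the identity by sandwiching $R_n(P_X,D,\varepsilon)$ between matching upper and lower bounds. On the achievability (upper bound) side, I would exhibit an explicit $P_{\hatX^n\mid X^n}$ that satisfies the excess-distortion constraint and whose mutual information equals the right-hand side. The natural candidate is a ``test channel with a trapdoor'': choose a threshold $\eta$ and mixing probability $\alpha \in [0,1)$ so that $\Pr\{\jmath(X^n\mid D,P_X) > \eta\} + \alpha \Pr\{\jmath(X^n\mid D,P_X)=\eta\} = \varepsilon$, and set
\[
P_{\hatX^n\mid X^n}(\,\cdot\,\mid x^n) = \gamma(x^n)\,(P_{\hatX\mid X}^*)^n(\,\cdot\,\mid x^n) + (1-\gamma(x^n))\,\delta_{\hatx_0}(\,\cdot\,),
\]
where $\gamma(x^n) \in \{0,1-\alpha,1\}$ mirrors the $\varepsilon$-cutoff of $\jmath(x^n\mid D,P_X)$ and $\hatx_0$ is any fixed symbol that provokes excess distortion on the affected $x^n$. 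By construction the excess-distortion probability is exactly $\varepsilon$, and the first task is to expand $I(X^n;\hatX^n)$ for this channel, invoking Lemma~\ref{prop:dtilteddensity}(i) to replace mutual information densities by $\jmath(x_i\mid D,P_X)$ plus the $\lambda^*(D-d(x,\hatx))$ tilting terms.

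On the converse (lower bound) side, I would take an arbitrary $P_{\hatX^n\mid X^n}$ with $\Pr\{d(X^n,\hatX^n)>D\}\le \varepsilon$ and write
\[
I(X^n;\hatX^n) = D\bigl(P_{\hatX^n\mid X^n}\,\|\,(P_{\hatX}^*)^n \mid P_X^n\bigr) - D\bigl(P_{\hatX^n}\,\|\,(P_{\hatX}^*)^n\bigr),
\]
and drop the second term as non-positive after massaging. Using the Donsker--Varadhan-style lower bound together with Lemma~\ref{prop:dtilteddensity}(i) converts the first divergence into $\bbE[\jmath(X^n\mid D,P_X) + \lambda^*(d(X^n,\hatX^n)-nD)]$ plus a cross-entropy penalty. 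Because the expected distortion under the worst-case $P_{\hatX^n\mid X^n}$ at excess-distortion level $\varepsilon$ can be much larger than $D$, the crucial move is to replace $\lambda^*(d(X^n,\hatX^n)-nD)$ with an $\varepsilon$-cutoff on the unbounded positive part. This is where Wyner's bound $\bbE[l]\ge H - \log(H+1) - \log e$, already cited in~\eqref{lossless4var}, enters: applying it to the auxiliary variable that encodes whether excess distortion occurs, together with Fano's inequality $H(\mathbf{1}\{d(X^n,\hatX^n)>D\}) \le H_\rmb(\varepsilon)$, produces the additive corrections $-\log(nR(P_X,D)+n\lambda^*D+1)-\log e - H_\rmb(\varepsilon)$ with precisely the claimed constants.

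The main obstacle, and the place that requires real care, is matching the exact coefficients of the correction terms. The cutoff $\langle\cdot\rangle_\varepsilon$ is not an ordinary truncation but the stochastic cutoff defined just before Theorem~\ref{fbl4variable}; converting between $\bbE[\langle Y\rangle_\varepsilon]$ and $(1-\varepsilon)\bbE[Y]$-type quantities requires carrying the measure $\alpha$ at the boundary correctly, which is exactly where $H_\rmb(\varepsilon)$ is produced. Likewise, the argument of the logarithm must come out as $nR(P_X,D)+n\lambda^*D+1$ rather than $nR(P_X,D)+1$ or $H(\hatX^n)+1$; this is forced by invoking Wyner's inequality on the correct auxiliary random variable, namely a near-lossless description of $\hatX^n$ whose entropy satisfies $H(\hatX^n)\le \bbE[\jmath(X^n\mid D,P_X)]+\lambda^*\bbE[d(X^n,\hatX^n)-nD]+\text{const} \le nR(P_X,D)+n\lambda^*D+o(1)$ along the optimizing sequence. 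Pinning down these two identities so that the achievability and converse coefficients agree is the delicate combinatorial bookkeeping at the heart of the proof; once done, Lemma~\ref{conlemma4var} follows by combining both inequalities.
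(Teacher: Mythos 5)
The statement here is really a one-sided \emph{lower} bound on $R_n(P_X,D,\varepsilon)$ (the ``$=$'' is a presentational shorthand for ``$\geq$'', as is clear from the way Lemma~\ref{conlemma4var} is combined with \eqref{confbl4variable}, which needs only a lower bound on $R_n$; the result is cited from \cite[Theorem~8]{kostina2015tit} and is stated there as an inequality). Your plan to establish an exact sandwiching with identical correction terms on both sides is therefore misdirected: the additive terms $-\log(nR(P_X,D)+n\lambda^*D+1)-\log e-H_\rmb(\varepsilon)$ are artifacts of the converse argument, and no explicit test channel will reproduce them on the achievability side. In fact your proposed ``trapdoor'' channel does not even satisfy the constraint: with $\gamma(x^n)=1$ you are using $(P_{\hatX|X}^*)^n$, which only controls the \emph{expected} per-letter distortion, so $d(X^n,\hatX^n)>D$ still occurs with probability roughly $1/2$ by the CLT, far exceeding $\varepsilon$ for small $\varepsilon$.

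The converse half, which is the part actually needed, has a genuine sign error that would make it fail. You write $I(X^n;\hatX^n)=D(P_{\hatX^n|X^n}\|(P_{\hatX}^*)^n\mid P_X^n)-D(P_{\hatX^n}\|(P_{\hatX}^*)^n)$ and propose to drop the second term as ``non-positive after massaging.'' But $D(P_{\hatX^n}\|(P_{\hatX}^*)^n)$ is a KL divergence, hence nonnegative for every competing test channel, and it appears with a minus sign, so dropping it yields the \emph{upper} bound $I(X^n;\hatX^n)\le D(P_{\hatX^n|X^n}\|(P_{\hatX}^*)^n\mid P_X^n)$. A Donsker--Varadhan lower bound on that conditional divergence then gives a lower bound on an upper bound of $I$, which proves nothing about $R_n(P_X,D,\varepsilon)=\min I(X^n;\hatX^n)$. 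You have the right auxiliary ingredients in mind --- Fano on the excess-distortion indicator $W=\bbo\{d(X^n,\hatX^n)>D\}$ to produce $-H_\rmb(\varepsilon)$, and a Wyner-type $\bbE[l]\ge H-\log(H+1)-\log e$ bound to produce the $-\log(\cdot+1)-\log e$ term --- but they must be wired into a genuine lower bound on $I(X^n;\hatX^n)$. The route taken in \cite[Theorem~8]{kostina2015tit} is to condition on $W$ via the chain rule ($I(X^n;\hatX^n)\ge I(X^n;\hatX^n\mid W)-H(W)$), lower bound the mutual information on the good event $\{W=0\}$ using the properties of the distortion-tilted information density from Lemma~\ref{prop:dtilteddensity}, and observe that the minimizing test channel concentrates the allowable $\varepsilon$ mass on the top tail of $\jmath(X^n\mid D,P_X)$, which is exactly how the cutoff $\langle\cdot\rangle_\varepsilon$ arises. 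Replacing the golden-formula step with that chain-rule/Fano decomposition is the missing ingredient.
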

The converse proof of Theorem \ref{second4variable} follows from the non-asymptotic bound in \eqref{confbl4variable}, Lemma \ref{clt4cutoff} and Lemma \ref{conlemma4var}.

\part{Multiterminal Setting}
\chapter{Kaspi Problem}
\label{chap:kaspi}
In this chapter, we study the lossy source coding problem with one encoder and two decoders, where side information is available at the encoder and one of the two decoders. We term the problem as the Kaspi problem since this problem was first introduced by Kaspi, who derived the asymptotically optimal achievable rate to ensure reliable lossy reconstruction at both decoders~\cite[Theorem 1]{kaspi1994}. Analogous to the rate-distortion problem, we term the asymptotic optimal achievable rate as the Kaspi rate-distortion function. The Kaspi problem generalizes the rate-distortion problem by adding one additional decoder and allowing the encoder and the additional decoder to access to some correlated side information. 

Kaspi's asymptotic results were recently refined by Zhou and Motani in \cite{zhou2017kaspishort,zhou2017non}, in which the authors derived non-asymptotic and second-order asymptotics bounds for the Kaspi problem. In this chapter, we present the results in \cite{zhou2017kaspishort,zhou2017non} and illustrate the role of side information on lossy data compression in the finite blocklength regime. Specifically, we first present a parametric representation for the Kaspi rate-distortion function. Subsequently, we generalize the notion of the distortion-tilted information density for the rate-distortion problem in Chapter \ref{chap:rd} to the Kaspi problem and present a non-asymptotic converse bound. Finally, for a DMS under bounded distortion measures, we present second-order asymptotics and illustrate the results via two numerical examples. 

Since the Kaspi problem generalizes the rate-distortion problem, the results for the Kaspi problem generalizes those in Chapter \ref{chap:rd}. Furthermore, another special case of the Kaspi problem is the conditional rate-distortion problem where side information is available to both the encoder and decoder in the rate-distortion problem. Thus, the results for the Kaspi problem generalize those for the conditional rate-distortion problem~\cite{le2014second} as well.

\section{Problem Formulation and Asymptotic Result}
\label{sec:pform}

The setting of the Kaspi problem is shown in Figure \ref{systemmodel}. There are one encoder $f$ and two decoders $\phi_1,\phi_2$. The side information $Y^n$ is available to the encoder $f$ and the decoder $\phi_2$ but \emph{not} to the decoder $\phi_1$. The encoder $f$ compresses the source $X^n$ into a message $S$ given the side information $Y^n$. Decoder $\phi_1$ aims to recover source sequence $X^n$ within distortion level $D_1$ under distortion measure $d_1$ using the message $S$. Decoder $\phi_2$ aims to recover $X^n$ within distortion level $D_2$ under distortion measure $d_2$ using the message $S$ and the side information $Y^n$. Consider a correlated memoryless source with distribution $P_{XY}$ defined on the alphabet $\calX\times\calY$. Assume that the source sequence and side information $(X^n,Y^n)$ is generated i.i.d. from $P_{XY}$. Furthermore, assume that the reproduction alphabets for decoders $\phi_1$ and $\phi_2$ are $\hat{\calX}_1$ and $\hat{\calX}_2$ respectively.

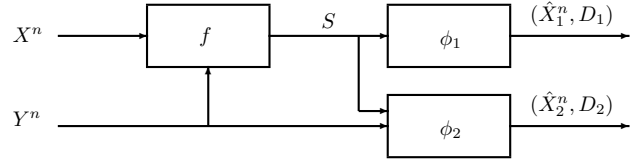
\begin{figure}[htbp]
\centering
\setlength{\unitlength}{0.5cm}
\scalebox{0.8}{
\begin{picture}(20,6)
\linethickness{1pt}
\put(0.5,4.8){\makebox{$X^n$}}
\put(0.5,2){\makebox{$Y^n$}}
\put(5,4){\framebox(4,2)}
\put(6.7,4.8){\makebox{$f$}}
\put(2,5){\vector(1,0){3}}
\put(2,2){\vector(1,0){11}}
\put(7,2){\vector(0,1){2}}
\put(13,1){\framebox(4,2)}
\put(13,4){\framebox(4,2)}
\put(14.7,4.7){\makebox{$\phi_1$}}
\put(14.7,1.7){\makebox{$\phi_2$}}
\put(9,5){\vector(1,0){4}}
\put(11,5.5){\makebox(0,0){$S$}}
\put(12,5){\line(0,-1){2.5}}
\put(12,2.5){\vector(1,0){1}}
\put(17,2){\vector(1,0){4}}
\put(17.7,2.5){\makebox{$(\hatX_2^n,D_2$)}}
\put(17,5){\vector(1,0){4}}
\put(17.7,5.5){\makebox{$(\hatX_1^n,D_1)$}}
\end{picture}}
\caption{System model for the Kaspi problem of lossy source with side information at the encoder and one of the two decoders~\cite[Theorem 1]{kaspi1994}.}
\label{systemmodel}
\end{figure}

\begin{definition}
\label{def:code4kaspi}
An $(n,M)$-code for the Kaspi problem consists of one encoder 
\begin{align}
f:\calX^n\times\calY^n\to \calM=[M],
\end{align}
and two decoders
\begin{align}
\phi_1&:\calM\to\hat{\calX}_1^n\\
\phi_2&:\calM\times\calY^n\to \hat{\calX}_2^n.
\end{align}
\end{definition}
For simplicity, let $\hatX_1^n=\phi_1\big(f(X^n,Y^n)\big)$ and $\hatX_2^n=\phi_2\big(f(X^n,Y^n),Y^n\big)$. For $i\in[2]$, let $d_i:\calX\times\hat{\calX}_i\to [0,\infty]$ be two distortion measures. For any $x^n\in\calX^n$ and $\hatx_i^n\in\hatcalX_i^n$, let the distortion between $x^n$ and $\hatx_i^n$ be additive and defined as $d_i(x^n,\hatx_i^n):=\frac{1}{n}\sum_{j\in[n]} d_i(x_j,\hatx_{i,j})$. 

Following \cite{kaspi1994}, the rate-distortion function of the Kaspi problem is defined as follows, which characterizes the asymptotically minimal rate to ensure reliable lossy compression at both decoders as the blocklength tends to infinity.
\begin{definition}
\label{firstorder}
A rate $R$ is said to be $(D_1,D_2)$-achievable for the Kaspi problem if there exists a sequence of $(n,M)$-codes such that
\begin{align}
\limsup_{n\to\infty} \frac{\log M}{n}\leq R,
\end{align}
and
\begin{align}
\limsup_{n\to\infty} \mathbb{E}\big[d_i(X^n,\hatX_i^n)\big]\leq D_i,~i\in[2]\label{kaspi:avgdistortion}.
\end{align}
The minimum $(D_1,D_2)$-achievable rate is called the Kaspi rate-distortion function and denoted as $R^*(D_1,D_2)$.
\end{definition}

Define
\begin{align}
\nn&R(P_{XY},D_1,D_2)\\*
&:=\min_{\substack{P_{\hatX_1|XY},~P_{\hatX_2|XY\hatX_1}:\\\mathbb{E}[d_1(X,\hatX_1)]\leq D_1,\\\mathbb{E}[d_2(X^n,\hatX_2^n)]\leq D_2}} I(XY;\hatX_1)+I(X;\hatX_2|Y\hatX_1)\label{kaspiratefunc}.
\end{align}
Kaspi~\cite[Theorem 1]{kaspi1994} derived the following result.

\begin{theorem}
The minimum $(D_1,D_2)$-achievable rate for the Kaspi problem satisfies
\begin{align}
R^*(D_1,D_2)=R(P_{XY},D_1,D_2).
\end{align}
\end{theorem}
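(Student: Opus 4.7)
The plan is to establish $R^*(D_1,D_2) = R(P_{XY},D_1,D_2)$ via separate achievability and converse arguments, generalizing the machinery used for the rate-distortion function in Chapter \ref{chap:rd} to handle the two-decoder structure and the presence of side information at the encoder and at $\phi_2$.

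For achievability, I would use a two-layer random coding scheme with Wyner--Ziv--style binning on the refinement layer. Fix test channels $(P_{\hat{X}_1|XY}^*, P_{\hat{X}_2|XY\hat{X}_1}^*)$ achieving the minimum in \eqref{kaspiratefunc}, and for small $\delta > 0$ set $R_1 := I(XY;\hat{X}_1) + \delta$ and $R_2 := I(X;\hat{X}_2|Y\hat{X}_1) + \delta$. Generate $2^{nR_1}$ coarse codewords $\hat{X}_1^n(m_1)$ i.i.d.\ from the induced marginal $P_{\hat{X}_1}^*$; for each $m_1$, generate $2^{n(I(XY;\hat{X}_2|\hat{X}_1) + 2\delta)}$ refinement codewords $\hat{X}_2^n(m_1,\ell)$ conditionally i.i.d.\ from $P_{\hat{X}_2|\hat{X}_1}^*$ given $\hat{X}_1^n(m_1)$, and distribute them uniformly at random into $2^{nR_2}$ bins. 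Given $(X^n,Y^n)$, the encoder first finds $m_1$ so that $(X^n,Y^n,\hat{X}_1^n(m_1))$ is jointly typical and then finds $\ell$ so that $(X^n,Y^n,\hat{X}_1^n(m_1),\hat{X}_2^n(m_1,\ell))$ is jointly typical, transmitting $m_1$ together with the bin index of $\ell$. Decoder 1 outputs $\hat{X}_1^n(m_1)$; decoder 2 reconstructs $\hat{X}_1^n(m_1)$ and then searches inside the designated bin for the unique $\hat{X}_2^n(m_1,\ell')$ jointly typical with $(Y^n,\hat{X}_1^n(m_1))$. The covering lemma validates the two encoding steps and the packing lemma validates bin disambiguation (the bin size $2^{n(I(Y;\hat{X}_2|\hat{X}_1) + \delta)}$ falls below the packing threshold); the distortion constraints in \eqref{kaspi:avgdistortion} then follow from the typicality average lemma, yielding asymptotic achievability of the rate $R_1 + R_2 = R(P_{XY},D_1,D_2) + 2\delta$ for arbitrarily small $\delta > 0$.

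For the converse, I would start from $\log M \geq H(S) \geq I(X^nY^n;S)$ with $S := f(X^n,Y^n)$, and use that $\hat{X}_1^n$ is a function of $S$ while $\hat{X}_2^n$ is a function of $(S,Y^n)$ together with data processing to derive
\[
I(X^nY^n;S) \;\geq\; I(X^nY^n;\hat{X}_1^n) + I(X^n;\hat{X}_2^n \mid Y^n,\hat{X}_1^n).
\]
Memorylessness of $(X^n,Y^n)$ together with sub-additivity of entropy gives $I(X^nY^n;\hat{X}_1^n) \geq \sum_{i=1}^n I(X_iY_i;\hat{X}_{1,i})$, and an analogous single-letter lower bound $I(X^n;\hat{X}_2^n \mid Y^n,\hat{X}_1^n) \geq \sum_{i=1}^n I(X_i;\hat{X}_{2,i} \mid Y_i,\hat{X}_{1,i})$ would complete the chain. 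Introducing an independent time-sharing index $J \sim \mathrm{Unif}[n]$ and setting $(\tilde{X},\tilde{Y},\tilde{\hat{X}}_1,\tilde{\hat{X}}_2) := (X_J,Y_J,\hat{X}_{1,J},\hat{X}_{2,J})$, we have $(\tilde{X},\tilde{Y}) \sim P_{XY}$ and $\mathbb{E}[d_i(\tilde{X},\tilde{\hat{X}}_i)] \leq D_i$ by linearity. Absorbing $J$ into an enlarged first-layer reconstruction $\tilde{\hat{X}}_1' := (\tilde{\hat{X}}_1,J)$ is permissible because augmenting the alphabet of $\hat{X}_1$ with coordinates that do not affect $d_1$ leaves $R(P_{XY},D_1,D_2)$ unchanged, and the resulting $(\tilde{\hat{X}}_1',\tilde{\hat{X}}_2)$ constitutes a valid pair of test channels in \eqref{kaspiratefunc}, from which $\log M / n \geq R(P_{XY},D_1,D_2)$ follows.

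The main obstacle is the second single-letter inequality $I(X^n;\hat{X}_2^n \mid Y^n,\hat{X}_1^n) \geq \sum_i I(X_i;\hat{X}_{2,i} \mid Y_i,\hat{X}_{1,i})$. A naive chain-rule expansion produces conditioning on $(X^{i-1},Y^n,\hat{X}_1^n)$, where sub-additivity of entropy and ``conditioning reduces entropy'' act in \emph{opposite} directions on the two components of the mutual-information difference, so the inequality does not follow directly. The standard resolution is to introduce composite auxiliary variables of the form $U_i := (\hat{X}_1^n, Y^{i-1})$ and employ the Csisz\'ar sum identity to exchange $\sum_i I(Y_i;\,\cdot\mid \hat{X}_1^n, Y^{i-1})$ with $\sum_i I(Y^{i-1};\,\cdot\mid \hat{X}_1^n, Y_i)$, after which the i.i.d.\ structure of $(X^n,Y^n)$ collapses the $n$-letter conditioning into the desired single-letter form. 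Verifying that this identification remains consistent with the distortion constraints and the unconstrained minimization in \eqref{kaspiratefunc} is the most delicate bookkeeping step of the proof.
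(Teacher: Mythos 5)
The monograph does not prove this theorem; it cites Kaspi~\cite[Theorem 1]{kaspi1994} directly, so you are being measured against the standard argument rather than text in this paper. Your achievability architecture (superposition of a coarse $\hatX_1^n$-code with Wyner--Ziv binning on the $\hatX_2^n$ layer) is the right one, but the rate bookkeeping is reversed: placing $2^{n(I(XY;\hatX_2|\hatX_1)+2\delta)}$ refinement codewords into $2^{n(I(X;\hatX_2|Y\hatX_1)+\delta)}$ bins gives bins of size $\approx 2^{n(I(Y;\hatX_2|\hatX_1)+\delta)}$, which sits \emph{above}, not below, the packing threshold $2^{nI(Y;\hatX_2|\hatX_1)}$, so decoder~2's disambiguation fails as stated. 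Swapping the slack (fewer codewords, more bins) repairs this and still lands on $R(P_{XY},D_1,D_2)+O(\delta)$.

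The converse contains a genuine gap, which you correctly locate but then misattribute a remedy to. The obstacle is that $H(X^n|Y^n,\hatX_1^n)$ cannot be lower bounded by $\sum_i H(X_i|Y_i,\hatX_{1,i})$, because conditioning on the entire codeword $\hatX_1^n$ only shrinks entropy, pushing the wrong way. The Csisz\'ar sum identity is not the standard resolution here: it manufactures composite auxiliaries such as $(\hatX_1^n,Y^{i-1})$ that have no counterpart in the single-letter expression \eqref{kaspiratefunc} (which has no auxiliary random variable at all, and whose distortion constraints act on $\hatX_1$, not on the composite), so any such auxiliary would need to be eliminated again before matching the target. The correct move is purely algebraic. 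By the chain rule,
\begin{align}
I(X^nY^n;\hatX_1^n)+I(X^n;\hatX_2^n|Y^n,\hatX_1^n)=I(Y^n;\hatX_1^n)+I(X^n;\hatX_1^n,\hatX_2^n|Y^n),
\end{align}
and the right-hand side single-letterizes directly, because every ``$+H$'' term now involves only the i.i.d.\ source pair: $H(Y^n)=\sum_i H(Y_i)$ and $H(X^n|Y^n)=\sum_i H(X_i|Y_i)$, while the ``$-H$'' terms satisfy $H(Y^n|\hatX_1^n)\leq\sum_i H(Y_i|\hatX_{1,i})$ and $H(X^n|Y^n,\hatX_1^n,\hatX_2^n)\leq\sum_i H(X_i|Y_i,\hatX_{1,i},\hatX_{2,i})$. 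Together these give $\log M\geq\sum_i\big(I(Y_i;\hatX_{1,i})+I(X_i;\hatX_{1,i},\hatX_{2,i}|Y_i)\big)$, and the per-letter integrand equals $I(X_iY_i;\hatX_{1,i})+I(X_i;\hatX_{2,i}|Y_i,\hatX_{1,i})$, exactly the objective in \eqref{kaspiratefunc}; time-sharing with $J$ absorbed into $\hatX_1$ then finishes as you describe. The structural lesson is that a codeword sequence must never remain a conditioning variable on the positive-entropy side of a mutual information you wish to tensorize; regrouping both reconstructions to the left of the conditioning bar, with only the memoryless $Y^n$ on the right, avoids the obstruction and the Csisz\'ar machinery entirely.
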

We refer to $R(P_{XY},D_1,D_2)$ as the Kaspi rate-distortion function. Note that $R(P_{XY},D_1,D_2)$ is convex and non-increasing in both $D_1$ and $D_2$. We remark that the explicit formulas of the Kaspi rate-distortion function was derived by Perron, Diggavi and Telatar for a GMS under quadratic distortion measures~\cite{perron2005kaspi} and a binary memoryless erasure source under Hamming distortion measures~\cite{perron2006kaspi}.

To derive non-asymptotic and second-order asymptotic bounds, instead of using the average distortion criterion, we adopt the following joint excess-distortion probability as the performance criterion:
\begin{align}
\rmP_{\rme,n}(D_1,D_2)
&:=\Pr\Big\{d_1(X^n,\hatX_1^n)>D_1~\mathrm{or}~d_2(X^n,\hatX_2^n)>D_2\Big\}\label{defexcessp}.
\end{align}
Note that the probability in \eqref{defexcessp} is calculated with respect to the distribution of the source sequences for a fixed $(n,M)$-code. For bounded distortion measures, the asymptotically minimal rate to ensure vanishing joint excess-distortion probability $\rmP_{\rme,n}(D_1,D_2)$ is also $R(P_{XY},D_1,D_2)$. The justification is similar to the case of the rate-distortion problem below Theorem \ref{lossy:Shannon}.

The Kaspi rate-distortion function $R(P_{XY},D_1,D_2)$ equals the rate-distortion function $R(P_X,D_2)$ if $D_1$ is large enough, and equals the conditional rate-distortion function $R(P_{X|Y},D_1|P_Y)$ if $D_2$ is large enough where
\begin{align}
R(P_{X|Y},D_1|P_Y)
&:=\min_{P_{\hatX_1|XY}:\bbE[d_1(X,\hatX_1)]\leq D_1}I(X;\hatX_1|Y).
\end{align}
Note that the conditional rate-distortion function $R(P_{X|Y},D_1|P_Y)$ is the minimal achievable rate of lossy compression when side information is available at both the encoder and the decoder, which is also known as the conditional rate-distortion problem. Similarly, for second-order asymptotics, the results for the Kaspi problem specialize to either the rate-distortion problem or the conditional rate-distortion problem.

\section{Properties of the Rate-Distortion Function}
\label{sec:mainresults4kaspi}
We first present the properties of the Kaspi rate-distortion function, which allows us to define the distortions-tilted information density for the Kaspi problem and derive a non-asymptotic converse bound that generalizes the non-asymptotic converse bound for the rate-distortion problem in Theorem \ref{converse:fbl} of the rate-distortion problem.

Given any (conditional) distributions $(P_{\hatX_1|XY},P_{\hatX_2|XY\hatX_1})$, let $P_{\hatX_1}$, $P_{X\hatX_1}$, $P_{X\hatX_2}$ and $P_{Y\hatX_1}$, $P_{\hatX_2|Y\hatX_1}$, $P_{Y\hatX_1\hatX_2}$ be induced by $P_{XY}$, $P_{\hatX_1|XY}$ and $P_{\hatX_2|XY\hatX_1}$. Consider the distortion levels $(D_1,D_2)$ such that \\$R(P_{XY},D_1,D_2)$ is finite and there exists test channels $(P_{\hatX_1|XY}^*,P_{\hatX_2|XY\hatX_1}^*)$ that achieve $R(P_{XY},D_1,D_2)$. Note that $R(P_{XY},D_1,D_2)$ (see \eqref{kaspiratefunc}) corresponds to a convex optimization problem and the dual problem is given by 
\begin{align}
\nn&\sup_{(\lambda_1,\lambda_2)\in\bbR_+^2} \min_{P_{\hatX_1|XY},P_{\hatX_2|XY\hatX_1}}\Big(
I(XY;\hatX_1)+I(X;\hatX_2|Y\hatX_1)\\*
&\qquad+\lambda_1(\mathbb{E}[d_1(X,\hatX_1)-D_1])+\lambda_2 (\mathbb{E}[d_2(X^n,\hatX_2^n)-D_2])\Big)\label{dualproblem}.
\end{align}
For any given distortion levels $(D_1,D_2)$, the optimal solutions to the dual problem of $R(P_{XY},D_1,D_2)$ are
\begin{align}
\lambda_1^*&:=\frac{\partial R(P_{XY},D,D_2)}{\partial D}\Big|_{D=D_1}\label{dualoptimal1},\\
\lambda_2^*&:=\frac{\partial R(P_{XY},D_1,D)}{\partial D}\Big|_{D=D_2}\label{dualoptimal2}.
\end{align}
Given any $(x,y,\hatx_1)\in\calX\times\calY\times\hatcalX_1$ and distributions $(Q_{\hatX_1},Q_{\hatX_2|Y\hatX_1})\in\calP(\hatcalX_1)\times\calP(\hatcalX_2|\calY,\hatcalX_1)$, let
\begin{align}
\nn&\alpha_2(x,y,\hatx_1|Q_{\hatX_2|Y\hatX_1})\\*
&:=\Big\{\mathbb{E}_{Q_{\hatX_2|Y\hatX_1}}\Big[\exp(-\lambda_2^*d_2(X^n,\hatX_2^n))\Big|Y=y,\hatX_1=\hatx_1\Big]\Big\}^{-1},\label{def:a2q4kaspi}
\end{align}
and let
\begin{align}
\alpha(x,y|Q_{\hatX_1},Q_{\hatX_2|Y\hatX_1})
&:=\Bigg\{\mathbb{E}_{Q_{\hatX_1}}\Bigg[\frac{\exp\big(-\lambda_1^*d_1(x,\hatX_1)\big)}{\alpha_2(x,y,\hatX_1|Q_{\hatX_2|Y\hatX_1})}\Bigg]\Bigg\}^{-1}\label{def:a1q4kaspi}.
\end{align}

\begin{lemma}
\label{opttest4kaspi}
A pair of conditional distributions $(P_{\hatX_1|XY}^*,P_{\hatX_2|XY\hatX_1}^*)$ achieves $R(P_{XY},D_1,D_2)$ if and only if 
\begin{itemize}
\item  For all $(x,y,\hatx_1)$,
\begin{align}
P_{\hatX_1|XY}^*(\hatx_1|x,y)
&=\frac{\alpha(x,y|P_{\hatX_1}^*,P_{\hatX_2|Y\hatX_1}^*)P_{\hatX_1}^*(\hatx_1)\exp(-\lambda_1^*d_1(x,\hatx_1))}{\alpha_2(x,y,\hatx_1|P_{\hatX_2|Y\hatX_1}^*)}\label{optcond14kaspi},
\end{align}
\item For all $(x,y,\hatx_1,\hatx_2)$ such that $P_{\hatX_1}^*(\hatx_1)>0$,
\begin{align}
P_{\hatX_2|XY\hatX_1}^*(\hatx_2|x,y,\hatx_1)
\nn&=\alpha_2(x,y,\hatx_1|P_{\hatX_2|Y\hatX_1}^*)P_{\hatX_2|Y\hatX_1}^*(\hatx_2|y,\hatx_1)\\*
&\qquad\times\exp(-\lambda_2^*d_2(X^n,\hatX_2^n))\label{optcond24kaspi}.
\end{align}
\end{itemize}
Furthermore, if the pair of distributions $(P_{\hatX_1|XY}^*,P_{\hatX_2|XY\hatX_1}^*)$ achieves $R(P_{XY},D_1,D_2)$,
\begin{align}
R(P_{XY},D_1,D_2)&=\mathbb{E}[\log \alpha(x,y|P_{\hatX_1}^*,P_{\hatX_2|Y\hatX_1}^*)]-\lambda_1^*D_1-\lambda_2^*D_2\label{kaspipara},
\end{align}
\end{lemma}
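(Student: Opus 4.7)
The plan is to treat \eqref{kaspiratefunc} as a constrained optimization over the pair of test channels $(P_{\hatX_1|XY},P_{\hatX_2|XY\hatX_1})$ and to derive the stationarity conditions via Lagrangian duality. I will introduce multipliers $\lambda_1^*,\lambda_2^*\ge 0$ for the two distortion constraints (which equal $-\partial R/\partial D_i$ by sensitivity analysis, reconciling \eqref{dualoptimal1}--\eqref{dualoptimal2} with the signs implicit in \eqref{def:a2q4kaspi}--\eqref{def:a1q4kaspi}) together with per-$(x,y)$ multipliers enforcing the row-stochastic constraints. Necessity of the stationarity identities for any optimizing pair will come from differentiating the Lagrangian; sufficiency will follow from the fact that any pair satisfying \eqref{optcond14kaspi}--\eqref{optcond24kaspi} attains the value \eqref{kaspipara}, which the usual convexity of $I(XY;\hatX_1)$ in $P_{\hatX_1|XY}$ together with the convexity of conditional mutual information in the inner test channel certifies as a lower bound on the optimum.

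First, for fixed $P_{\hatX_1|XY}$, I would minimize $I(X;\hatX_2|Y\hatX_1)+\lambda_2^*\,\bbE[d_2(X,\hatX_2)]$ over $P_{\hatX_2|XY\hatX_1}$. This splits into a family of conditional rate-distortion subproblems indexed by $(y,\hatx_1)$ with input law $P_{X|Y\hatX_1}(\cdot|y,\hatx_1)$. A standard calculus-of-variations argument, in which the induced marginal $P_{\hatX_2|Y\hatX_1}$ is treated self-consistently, yields the Gibbs form \eqref{optcond24kaspi} with normalizer $\alpha_2$ as in \eqref{def:a2q4kaspi}; substituting back shows that the inner optimal value is $\bbE[\log \alpha_2^*]$, where the expectation is taken over $(X,Y,\hatX_1)$ induced by the outer channel.

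Next I would differentiate the full Lagrangian with respect to $P_{\hatX_1|XY}(\hatx_1|x,y)$. A direct computation gives
\begin{align}
\frac{\partial I(XY;\hatX_1)}{\partial P_{\hatX_1|XY}(\hatx_1|x,y)}&=P_{XY}(x,y)\log\frac{P_{\hatX_1|XY}(\hatx_1|x,y)}{P_{\hatX_1}(\hatx_1)},\\
\frac{\partial I(X;\hatX_2|Y\hatX_1)}{\partial P_{\hatX_1|XY}(\hatx_1|x,y)}&=P_{XY}(x,y)\,D\bigl(P_{\hatX_2|XY\hatX_1}(\cdot|x,y,\hatx_1)\big\|P_{\hatX_2|Y\hatX_1}(\cdot|y,\hatx_1)\bigr).
\end{align}
At the inner optimum the KL divergence above equals $\log\alpha_2(x,y,\hatx_1|P_{\hatX_2|Y\hatX_1}^*)-\lambda_2^*\bar d_2(x,y,\hatx_1)$, where $\bar d_2$ denotes the conditional expected second-stage distortion, and the $\lambda_2^*\bar d_2$ contribution is exactly cancelled by the matching piece of $\partial(\lambda_2^*\bbE[d_2])/\partial P_{\hatX_1|XY}$. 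Setting the remaining derivative equal to the normalization multiplier and exponentiating then delivers \eqref{optcond14kaspi} with $\alpha$ as in \eqref{def:a1q4kaspi}.

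For the parametric identity \eqref{kaspipara}, I would substitute \eqref{optcond14kaspi}--\eqref{optcond24kaspi} into $I(XY;\hatX_1^*)+I(X;\hatX_2^*|Y\hatX_1^*)$ and use the resulting logarithmic identities: the $\bbE[\log\alpha_2^*]$ contributions telescope and only $\bbE[\log\alpha]-\lambda_1^*\bbE[d_1^*]-\lambda_2^*\bbE[d_2^*]$ survives, after which complementary slackness ($\bbE[d_i^*]=D_i$ on the interior of the distortion region) yields \eqref{kaspipara}. The main obstacle is the self-referential nature of \eqref{optcond14kaspi}--\eqref{optcond24kaspi}: both $P_{\hatX_1}^*$ and $P_{\hatX_2|Y\hatX_1}^*$ appear on the right-hand side as marginals induced by the very channels being characterized, so these are coupled fixed-point identities rather than explicit formulas. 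I would resolve this in the Blahut--Arimoto spirit---the stationarity equations remain valid after replacing each free reference marginal by the induced one at any optimum---and combining this observation with the parametric lower bound promotes necessity into the claimed ``if and only if'' characterization.
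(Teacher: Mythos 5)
Your proposal is correct and follows essentially the same route the paper indicates: the paper explicitly defers to a KKT/Lagrangian-duality argument in the spirit of Kostina--Verd\'u's Properties 1--3 for the ordinary rate-distortion problem, and your nested decomposition (inner Gibbs form for $P^*_{\hatX_2|XY\hatX_1}$ via the per-$(y,\hatx_1)$ conditional rate-distortion subproblems, then stationarity in $P_{\hatX_1|XY}$ after the envelope-theorem cancellation of the $\lambda_2^*\bar d_2$ terms, then substitution to telescope the $\log\alpha_2$'s) is exactly the two-layer version of that argument, including the correct identification of $\alpha$ and $\alpha_2$ as normalizers and the use of complementary slackness for \eqref{kaspipara}. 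You also rightly flag the sign convention (the multipliers must be $-\partial R/\partial D_i\ge 0$, consistent with \eqref{def:lambda^*} and the Gibbs exponentials, even though \eqref{dualoptimal1}--\eqref{dualoptimal2} as printed omit the minus sign) and the fixed-point/self-consistency nature of \eqref{optcond14kaspi}--\eqref{optcond24kaspi}; both points are material and handled correctly in the Blahut--Arimoto spirit.
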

The proof of Lemma \ref{opttest4kaspi} is similar to \cite[Properties 1-3]{kostina2012converse} for the rate-distortion problem that mainly uses the KKT conditions for convex optimization problems. Lemma \ref{opttest4kaspi} paves the way for the definition of the distortions-tilted information density for the Kaspi problem and also implies critical properties for the Kaspi distortions-tilted information density that parallel Lemma \ref{prop:dtilteddensity} for the rate-distortion problem.

We remark that for any pair of optimal test channels $(P_{\hatX_1|XY}^*,P_{\hatX_2|XY\hatX_1}^*)$, similarly to \cite[Lemma 2]{watanabe2015second}, one can verify that the values of \\$\alpha_2(x,y,\hatx_1|P_{\hatX_2|Y\hatX_1}^*)$ and $\alpha(x,y|P_{\hatX_1}^*,P_{\hatX_2|Y\hatX_1}^*)$ remain the same. Hence, for simplicity, we define
\begin{align}
\alpha_2(x,y,\hatx_1)&:=\alpha_2(x,y,\hatx_1|P_{\hatX_2|Y\hatX_1}^*)\label{def:a24kaspi},\\
\alpha(x,y)&:=\alpha(x,y|P_{\hatX_1}^*,P_{\hatX_2|Y\hatX_1}^*)\label{def:a4kaspi}.
\end{align}
Furthermore, for any $\hatx_1\in\hatcalX_1$ and distribution $Q_{\hatX_2|Y\hatX_1}\in\calP(\hatcalX_2|\calY,\hatcalX_1)$, define the following function
\begin{align}
\nn&\nu(\hatx_1,Q_{\hatX_2|Y\hatX_1})\\*
&:=\mathbb{E}_{P_{XY}\times Q_{\hatX_2|Y\hatX_1}}\Big[\alpha(X,Y)\exp(-\lambda_1^*d_1(X,\hatX_1)
-\lambda_2^*d_2(X,\hatX_2))\big|\hatX_1=\hatx_1\Big]\label{def:nu2}.
\end{align}
The following lemma holds.
\begin{lemma}
\label{nule1}
For any $\hatx_1$ and arbitrary distribution $Q_{\hatX_2|Y\hatX_1}$, we have
\begin{align}
\nu(\hatx_1,Q_{\hatX_2|Y\hatX_1})&\leq 1.\label{nu2le1}
\end{align}
\end{lemma}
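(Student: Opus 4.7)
The plan is to derive \eqref{nu2le1} from a free-energy variational characterization of $R(P_{XY},D_1,D_2)+\lambda_1^* D_1+\lambda_2^* D_2$ combined with a first-order perturbation argument, paralleling the classical KKT proof of Claim (iii) of Lemma \ref{prop:dtilteddensity} for the rate-distortion problem. I would first carry out a two-stage Gibbs minimization of the Kaspi Lagrangian over the test channels $(P_{\hatX_2|XY\hatX_1},P_{\hatX_1|XY})$ with ``guess'' marginals $(P_{\hatX_1},P_{\hatX_2|Y\hatX_1})$ held fixed; this produces the functional
\begin{align*}
F(\pi):=-\bbE_{P_{XY}}\Big[\log Z_1(X,Y;\pi)\Big],\quad Z_1(x,y;\pi):=\sum_{\hatx_1',\hatx_2}\pi(\hatx_1',\hatx_2|y)\,e^{-\lambda_1^* d_1(x,\hatx_1')-\lambda_2^* d_2(x,\hatx_2)},
\end{align*}
parameterized by the joint $\pi(\hatx_1',\hatx_2|y):=P_{\hatX_1}(\hatx_1')P_{\hatX_2|Y\hatX_1}(\hatx_2|y,\hatx_1')$. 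Because $-\log$ is convex and $Z_1$ is linear in $\pi$, $F$ is convex; by the log-sum inequality $F(\pi)\geq R(P_{XY},D_1,D_2)+\lambda_1^* D_1+\lambda_2^* D_2$ on all admissible (product-form) $\pi$; and by Lemma \ref{opttest4kaspi} together with \eqref{kaspipara}, this lower bound is attained at $\pi^*:=P_{\hatX_1}^*\times P_{\hatX_2|Y\hatX_1}^*$, for which $Z_1(x,y;\pi^*)=1/\alpha(x,y)$.

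Second, given any $\hatx_1\in\hatcalX_1$ and arbitrary $Q_{\hatX_2|Y\hatX_1}$, I would perturb $\pi^*$ along
\begin{align*}
\pi^{(\epsilon)}(\hatx_1',\hatx_2|y):=(1-\epsilon)\,\pi^*(\hatx_1',\hatx_2|y)+\epsilon\,\bbo\{\hatx_1'=\hatx_1\}\,Q_{\hatX_2|Y\hatX_1}(\hatx_2|y,\hatx_1).
\end{align*}
The key point is that the $\hatX_1$-marginal of $\pi^{(\epsilon)}(\cdot,\cdot|y)$ equals $(1-\epsilon)P_{\hatX_1}^*(\cdot)+\epsilon\,\delta_{\hatx_1}(\cdot)$, which is $y$-independent because $\bbo\{\hatx_1'=\hatx_1\}$ carries no $y$-dependence, so $\pi^{(\epsilon)}$ factors in the required product form and is admissible. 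A direct algebraic reduction, using the definitions of $\alpha(x,y)$ in \eqref{def:a4kaspi} and $\tilde\alpha_2(x,y,\hatx_1|Q):=\big\{\bbE_{Q_{\hatX_2|Y\hatX_1}}[e^{-\lambda_2^* d_2(X,\hatX_2)}|Y=y,\hatX_1=\hatx_1]\big\}^{-1}$, then yields
\begin{align*}
Z_1(x,y;\pi^{(\epsilon)})=Z_1(x,y;\pi^*)\bigl[1+\epsilon(u(x,y)-1)\bigr],\quad u(x,y):=\frac{\alpha(x,y)\,e^{-\lambda_1^* d_1(x,\hatx_1)}}{\tilde\alpha_2(x,y,\hatx_1|Q_{\hatX_2|Y\hatX_1})},
\end{align*}
valid regardless of whether $P_{\hatX_1}^*(\hatx_1)$ vanishes, so both support and off-support $\hatx_1$ are handled uniformly.

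Finally, since $\pi^*$ minimizes $F$, $F(\pi^{(\epsilon)})-F(\pi^*)=-\bbE_{P_{XY}}[\log(1+\epsilon(u(X,Y)-1))]\geq 0$ for all sufficiently small $\epsilon>0$. Dividing by $\epsilon$ and passing to the limit $\epsilon\downarrow 0$ via Taylor expansion and dominated convergence---legitimate under the bounded-distortion setting of this chapter, which keeps $u$ uniformly bounded---yields $\bbE_{P_{XY}}[u(X,Y)-1]\leq 0$, and since $\bbE_{P_{XY}}[u(X,Y)]$ equals $\nu(\hatx_1,Q_{\hatX_2|Y\hatX_1})$ by \eqref{def:nu2}, this is exactly \eqref{nu2le1}. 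The main obstacle will be the bookkeeping in the first step---namely the two-stage Gibbs reduction and the identification $F(\pi^*)=R(P_{XY},D_1,D_2)+\lambda_1^* D_1+\lambda_2^* D_2$ via Lemma \ref{opttest4kaspi} and the normalization identities implicit in \eqref{def:a24kaspi}--\eqref{def:a4kaspi}; once this is in place, the perturbation step and the first-order expansion are routine.
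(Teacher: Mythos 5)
Your proof is correct and follows essentially the same route as the one the paper cites (Appendix A of \cite{zhou2017non}, after \cite[Lemma~1.4]{csiszar1974} and \cite[Lemma~5]{tuncel2003comp}): a two-stage Gibbs/Donsker--Varadhan reduction that characterizes $R(P_{XY},D_1,D_2)+\lambda_1^*D_1+\lambda_2^*D_2$ as $\min_\pi F(\pi)$ over product-form guess marginals, followed by a one-parameter perturbation toward the point mass at $\hatx_1$ tilted by $Q_{\hatX_2|Y\hatX_1}$ and a first-order optimality condition. The only phrasing I would tighten is the appeal to ``the log-sum inequality'' for the lower bound $F(\pi)\geq R(P_{XY},D_1,D_2)+\lambda_1^*D_1+\lambda_2^*D_2$: what is actually doing the work is the nested application of $I\le D(\cdot\|Q)$ (equivalently, the Gibbs variational formula) in two stages, which you do correctly identify as the main bookkeeping step.
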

We remark that Lemma \ref{nule1} holds for both discrete and continuous memoryless sources. The proof of Lemma \ref{nule1} is inspired by \cite[Lemma 1.4]{csiszar1974}, \cite[Lemma 5]{tuncel2003comp} and \cite{kostina2019sr} and available in \cite[Appendix A]{zhou2017non}. Invoking Lemma \ref{nule1}, we prove a non-asymptotic converse bound for the Kaspi problem in Theorem \ref{fblconverse4kaspi}.

\section{Distortions-Tilted Information Density} 

Now we introduce the distortions-tilted information density for the Kaspi problem that generalizes distortion-tilted information density for the lossy source coding problem \cite{kostina2012fixed,ingber2011dispersion}. Recall the definition of $\alpha(\cdot)$ in \eqref{def:a4kaspi}. 

\begin{definition}
For any $(x,y)\in\calX\times\calY$, the $(D_1,D_2)$-tilted information density for the Kaspi problem is defined as
\begin{align}
\jmath_{\mathrm{K}}(x,y|D_1,D_2,P_{XY})
&:=\log \alpha(x,y)-\lambda_1^*D_1-\lambda_2^*D_2\label{def:kaspitilt}.
\end{align}
\end{definition}

The properties of the $(D_1,D_2)$-tilted information density follows from Lemma \ref{opttest4kaspi}. For example, invoking \eqref{optcond14kaspi} and \eqref{optcond24kaspi}, we conclude that for all $(x,y,\hatx_1,\hatx_2)$ such that $P_{\hatX_1}^*(\hatx_1)P_{\hatX_2|Y\hatX_1}^*(\hatx_2|y,\hatx_1)>0$,
\begin{align}
\jmath_{\mathrm{K}}(x,y|D_1,D_2,P_{XY})
\nn&=\log\frac{P_{\hatX_1|XY}^*(\hatx_1|x,y)}{P_{\hatX_1}^*(\hatx_1)}+\log\frac{P_{\hatX_2|XY\hatX_1}^*(\hatx_2|x,y,\hatx_1)}{P_{\hatX_2|Y\hatX_1}^*(\hatx_2|y,\hatx_1)}\\
&\qquad+\lambda_1^*(d_1(x,\hatx_1)-D_1)+\lambda_2^*(d_2(X^n,\hatX_2^n)-D_2)\label{expandj4kaspi}.
\end{align}
Furthermore, it follows from \eqref{kaspipara} that
\begin{align}
R(P_{XY},D_1,D_2)=\bbE[\jmath_{\mathrm{K}}(X,Y|D_1,D_2,P_{XY})]\label{kaspi:avgprop}.
\end{align}

Finally, we have the following lemma that further relates the distortions-tilted information density with the derivative of the Kaspi rate-distortion function with respect to the distribution $P_{XY}$. Given a joint probability mass function $P_{XY}$, recall that $m=|\supp(P_{XY})|$ and $\Gamma(P_{XY})$ is the sorted distribution such that for each $i\in[m]$, $\Gamma_i(P_{XY})=P_{XY}(x_i,y_i)$ is the $i$-th largest value of $\{P_{XY}(x,y):~(x,y)\in\calX\times\calY\}$.
\begin{lemma}
\label{kaspifirstderive}
Suppose that for all $Q_{XY}$ in the neighborhood of $P_{XY}$, $\mathrm{supp}(Q_{\hat{X}_1\hat{X}_2}^*)=\mathrm{supp}(P_{\hat{X}_1\hat{X}_2}^*)$. Then, for each $i\in[m-1]$,
\begin{align}
\frac{\partial R(Q_{XY},D_1,D_2)}{\partial \Gamma_i(Q_{XY})}\Big|_{Q_{XY}=P_{XY}}
&=\jmath(x_i,y_i|D_1,D_2,P_{XY})-\jmath(x_m,y_m|D_1,D_2,P_{XY}).
\end{align}
\end{lemma}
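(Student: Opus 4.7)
The plan is to reduce the claim to an unconstrained partial derivative of $R(Q_{XY},D_1,D_2)$ with respect to individual mass entries $Q_{XY}(x,y)$, and then apply the chain rule dictated by the simplex reparameterization $\Gamma_m(Q_{XY}) = 1-\sum_{i=1}^{m-1}\Gamma_i(Q_{XY})$. First, I would exploit the support hypothesis to invoke the implicit function theorem on the KKT system characterized in Lemma \ref{opttest4kaspi}, yielding that the optimal test channels $Q^*_{\hat X_1|XY}$, $Q^*_{\hat X_2|XY\hat X_1}$ and the optimal Lagrange multipliers $\lambda_1^{Q*},\lambda_2^{Q*}$ depend continuously differentiably on $Q_{XY}$ in a neighborhood of $P_{XY}$. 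This is the analytic foothold that legitimizes all subsequent differentiations.

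Second, I would start from the parametric representation \eqref{kaspipara}, namely
\begin{align*}
R(Q_{XY},D_1,D_2) = \sum_{x,y} Q_{XY}(x,y)\log \alpha^Q(x,y) - \lambda_1^{Q*}D_1 - \lambda_2^{Q*}D_2,
\end{align*}
and treat it momentarily as an unconstrained function of the $m$ mass entries on $\supp(P_{XY})$. Differentiating with respect to $Q_{XY}(x,y)$ produces an explicit contribution $\log\alpha(x,y) = \jmath_{\mathrm{K}}(x,y|D_1,D_2,P_{XY})+\lambda_1^{*}D_1+\lambda_2^{*}D_2$ plus implicit contributions channeled through $Q^*_{\hat X_1|XY}$, $Q^*_{\hat X_2|XY\hat X_1}$, $\lambda_1^{Q*}$, and $\lambda_2^{Q*}$. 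Invoking the envelope theorem, or equivalently substituting the KKT stationarity identities recorded in Lemma \ref{opttest4kaspi} together with the active distortion constraints $\mathbb{E}_{Q^*}[d_i(X,\hat X_i)]=D_i$, the implicit contributions collapse to a single constant independent of the pair $(x,y)$. Mimicking the rate-distortion template embodied in Claim (iv) of Lemma \ref{prop:dtilteddensity}, this constant equals $-1$, so that
\begin{align*}
\frac{\partial R(Q_{XY},D_1,D_2)}{\partial Q_{XY}(x,y)}\bigg|_{Q_{XY}=P_{XY}} = \jmath_{\mathrm{K}}(x,y|D_1,D_2,P_{XY}) - 1.
\end{align*}

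Third, I would apply the chain rule induced by the parameterization $\Gamma_m(Q_{XY})=1-\sum_{i=1}^{m-1}\Gamma_i(Q_{XY})$ to obtain
\begin{align*}
\frac{\partial R(Q_{XY},D_1,D_2)}{\partial \Gamma_i(Q_{XY})} = \frac{\partial R}{\partial Q_{XY}(x_i,y_i)} - \frac{\partial R}{\partial Q_{XY}(x_m,y_m)},
\end{align*}
so that the additive constants $-1$ cancel identically and the stated formula emerges.

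The main obstacle will be the second step: justifying that the implicit-dependence contributions really do collapse to a pair-independent constant. The cleanest way to execute this is to write the Lagrangian of the convex program defining $R(Q_{XY},D_1,D_2)$, use the KKT stationarity for $Q^*_{\hat X_1|XY}$ and $Q^*_{\hat X_2|XY\hat X_1}$ given by \eqref{optcond14kaspi}--\eqref{optcond24kaspi} to cancel derivatives taken through the test channels, and use complementary slackness to cancel derivatives taken through $\lambda_i^{Q*}$. The support hypothesis is essential precisely here, since it guarantees that the stationarity conditions hold uniformly in a neighborhood of $P_{XY}$, keeping the envelope-theorem calculation unobstructed by boundary effects of the probability simplex. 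An alternative route, should this bookkeeping prove awkward, is to adapt Csisz\'ar's original derivation of the analogous rate-distortion identity to the two-decoder setting, leveraging Lemma \ref{nule1} as the Kaspi-analog of Claim (iii) in Lemma \ref{prop:dtilteddensity}.
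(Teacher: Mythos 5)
Your approach — reducing to the unconstrained derivative $\partial R(Q_{XY},D_1,D_2)/\partial Q_{XY}(x,y)$ via the parametric representation \eqref{kaspipara}, using the KKT stationarity/envelope argument to collapse the implicit dependencies through the test channels and multipliers to a pair-independent constant, and then passing to the simplex-constrained derivative via $\Gamma_m = 1-\sum_{i<m}\Gamma_i$ — is the same route taken in the cited proof \cite[Appendix I]{zhou2017non}, which itself mirrors the analogous derivations in \cite{kostina2013lossy} for the rate-distortion problem and \cite{watanabe2015second} for the lossless Gray-Wyner problem. Two small remarks. First, the assertion that the constant produced by the implicit contributions equals $-1$ is a side fact invoked by analogy but not actually needed: the chain rule through the simplex parameterization only requires that the constant be independent of $(x,y)$, since it cancels in $\jmath_{\mathrm{K}}(x_i,y_i|D_1,D_2,P_{XY})-\jmath_{\mathrm{K}}(x_m,y_m|D_1,D_2,P_{XY})$; pinning it to $-1$ is extra work you can skip. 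Second, the ``free'' derivative is only a formal intermediate object since $R$ is defined on the probability simplex; to make the computation rigorous one either extends the definition of $R$ to a neighborhood of the simplex (e.g., to nonnegative finite measures, with $\alpha$, $\nu$, and the tilted density defined by the same formulas) or works directly in a simplex chart. This is routine bookkeeping and does not change the substance, but it is exactly the sort of detail worth stating explicitly so that the implicit function theorem and the envelope argument in your second step apply without caveat.
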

The proof of Lemma \ref{kaspifirstderive} is available in \cite[Appendix I]{zhou2017non}.
Lemma \ref{kaspifirstderive} parallels Claim (iv) in Lemma \ref{prop:dtilteddensity} for the rate-distortion problem and is critical in the achievability proof of second-order asymptotics.

\section{A Non-Asymptotic Converse Bound}
Invoking Lemma \ref{nule1}, we obtain the following non-asymptotic converse bound for the Kaspi problem that generalizes Theorem \ref{converse:fbl} for the rate-distortion problem.
\begin{theorem}
\label{fblconverse4kaspi}
Given any $\gamma>0$, the joint excess-distortion probability of any $(n,M)$-code for the Kaspi problem satisfies
\begin{align}
\nn\rmP_{\rme,n}(D_1,D_2)&\geq \Pr\Big\{\sum_{i\in[n]}\jmath(X_i,Y_i|D_1,D_2,P_{XY})\geq \log M+n\gamma\Big\}\\*
&\qquad-\exp(-n\gamma)\label{nshot}.
\end{align}
\end{theorem}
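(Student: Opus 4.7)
The plan is to extend the proof of Theorem \ref{converse:fbl} for the rate-distortion problem to two decoders, with Lemma \ref{nule1} playing the role of Claim (iii) of Lemma \ref{prop:dtilteddensity}. Write the code-induced kernels as $P_{W|X^nY^n}$, $P_{\hatX_1^n|W}$, and $P_{\hatX_2^n|WY^n}$, and let $S_n:=\sum_{i\in[n]}\jmath_{\mathrm{K}}(X_i,Y_i|D_1,D_2,P_{XY})$ together with the ``good'' event $\calG:=\{d_1(X^n,\hatX_1^n)\leq D_1,~d_2(X^n,\hatX_2^n)\leq D_2\}$, so that $\Pr\{\calG^\rmc\}=\rmP_{\rme,n}(D_1,D_2)$. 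Splitting $\Pr\{S_n\geq\log M+n\gamma\}\leq\rmP_{\rme,n}(D_1,D_2)+\Pr\{S_n\geq\log M+n\gamma,\calG\}$, the theorem reduces to bounding the residual term by $\exp(-n\gamma)$.

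To bound the residual term, I will combine the Markov-type estimate $\bbo\{S_n\geq\log M+n\gamma\}\leq\exp(S_n)/(M\exp(n\gamma))$ with the indicator bounds $\bbo\{d_j(X^n,\hatX_j^n)\leq D_j\}\leq\exp\bigl(n\lambda_j^*(D_j-d_j(X^n,\hatX_j^n))\bigr)$ for $j\in[2]$ (valid since $\lambda_1^*,\lambda_2^*\geq 0$). Using the additive decomposition $S_n=\sum_i\log\alpha(X_i,Y_i)-n\lambda_1^*D_1-n\lambda_2^*D_2$ read off from \eqref{def:kaspitilt}, the $n\lambda_j^*D_j$ terms cancel against the corresponding terms from the distortion bounds, and the residual term is majorized by
\[
\frac{\exp(-n\gamma)}{M}\,\bbE\Big[\prod_{i\in[n]}\alpha(X_i,Y_i)\exp\bigl(-\lambda_1^*d_1(X_i,\hatX_{1,i})-\lambda_2^*d_2(X_i,\hatX_{2,i})\bigr)\Big].
\]
It therefore suffices to upper bound this expectation by $M$.

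To do so, I will bound $P_{W|X^nY^n}(w|x^n,y^n)\leq 1=M\cdot Q_W(w)$ with $Q_W$ uniform on $[M]$, which replaces the code-induced joint on $(W,Y^n,\hatX_1^n,\hatX_2^n)$ by an auxiliary joint under which $W\sim Q_W$ is independent of $Y^n$, at the cost of a factor $M$. Under this auxiliary measure, $\hatX_1^n$ (being a function only of $W$) is independent of $Y^n$, so the resulting sum factors as $\sum_{\hatx_1^n}Q_{\hatX_1^n}(\hatx_1^n)\cdot J(\hatx_1^n)$. Performing the inner sum over $X^n$ using the product structure $P_{X^n|Y^n}=\prod_i P_{X|Y}(x_i|y_i)$ converts each coordinate's contribution into exactly the single-letter quantity $h(y_i,\hatx_{1,i},\hatx_{2,i}):=\sum_x P_{X|Y}(x|y_i)\alpha(x,y_i)\exp(-\lambda_1^*d_1(x,\hatx_{1,i})-\lambda_2^*d_2(x,\hatx_{2,i}))$ appearing inside the definition \eqref{def:nu2} of $\nu$, leaving $J(\hatx_1^n)=\sum_{y^n,\hatx_2^n}P_Y^n(y^n)Q_{\hatX_2^n|Y^n\hatX_1^n}(\hatx_2^n|y^n,\hatx_1^n)\prod_i h(y_i,\hatx_{1,i},\hatx_{2,i})$.

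The hard part is showing $J(\hatx_1^n)\leq 1$ despite the fact that $Q_{\hatX_2^n|Y^n\hatX_1^n}$ is \emph{not} a product distribution across coordinates, so one cannot directly invoke Lemma \ref{nule1} per coordinate. My plan is to decompose $Q_{\hatX_2^n|Y^n\hatX_1^n}=\prod_i Q_i(\hatx_{2,i}|y^n,\hatx_1^n,\hatx_2^{i-1})$ and apply Lemma \ref{nule1} inductively from $i=n$ down to $i=1$, peeling off the innermost pair $(y_i,\hatx_{2,i})$ at each step: for any fixed values of the ``past'' variables $(y^{-i},\hatx_2^{i-1},\hatx_1^n)$ that remain as parameters, the kernel $Q_i(\,\cdot\,|y_i,\text{parameters})$ is a valid conditional of $\hatx_{2,i}$ given $y_i$ (and $\hatx_{1,i}$), so the inner sum $\sum_{y_i,\hatx_{2,i}}P_Y(y_i)Q_i\cdot h(y_i,\hatx_{1,i},\hatx_{2,i})$ is bounded by $1$ via $\nu(\hatx_{1,i},Q_i)\leq 1$. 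The subtlety requiring the most care is that marginalizing over ``future'' coordinates $(y_j,\hatx_{2,j})_{j>i}$ could in principle distort the law of earlier $y$-coordinates; this is handled by iterating over $(y_i,\hatx_{2,i})$ jointly so that the $P_Y(y_i)$ factors always pair with the corresponding conditional of $\hatx_{2,i}$ at the same step. Summing the resulting bound $J(\hatx_1^n)\leq 1$ against $Q_{\hatX_1^n}$ gives the desired bound $M$ on the expectation, and combining with the prefactor $\exp(-n\gamma)/M$ yields the $\exp(-n\gamma)$ estimate on the residual term, completing the proof.
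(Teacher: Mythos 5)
Your reduction of the bound to showing $J(\hatx_1^n):=\sum_{y^n,\hatx_2^n}P_Y^n(y^n)Q_{\hatX_2^n|Y^n\hatX_1^n}(\hatx_2^n|y^n,\hatx_1^n)\prod_i h(y_i,\hatx_{1,i},\hatx_{2,i})\leq 1$ --- via the good/bad split, the Markov-type estimate, the multipliers $\exp(n\lambda_j^*(D_j-d_j))$, the replacement $P_{W|X^nY^n}\leq 1 = M\,Q_W$, and summing out $X^n$ coordinatewise --- matches the paper's route, and Lemma~\ref{nule1} is indeed the engine.

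The gap is in the peeling argument for $J(\hatx_1^n)\leq 1$. You decompose $Q_{\hatX_2^n|Y^n\hatX_1^n}=\prod_i Q_i(\hatx_{2,i}|y^n,\hatx_1^n,\hatx_2^{<i})$ and propose to sum over $(y_i,\hatx_{2,i})$ one pair at a time, bounding the inner sum by $\nu(\hatx_{1,i},Q_i)\leq 1$. But every chain-rule factor $Q_j$ conditions on the \emph{entire} block $y^n$, not merely on $y_j$; so when you sum over $(y_i,\hatx_{2,i})$ with the other indices held fixed, the not-yet-processed factors $Q_j$, $j\neq i$, still depend on $y_i$ and cannot be pulled out of that inner sum. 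The inner sum is therefore not of the form $\nu(\hatx_{1,i},Q_i)$. Equivalently, under the joint $\mu(y^n,\hatx_2^n)=P_Y^n(y^n)Q_{\hatX_2^n|Y^n\hatX_1^n}(\hatx_2^n|y^n,\hatx_1^n)$, the conditional law of $Y_i$ given the remaining coordinates $(Y^{-i},\hatX_2^{-i})$ is in general \emph{not} $P_Y$, because $\hatX_2^{-i}$ may depend on $Y_i$ through the decoder kernel. ``Iterating over $(y_i,\hatx_{2,i})$ jointly'' does not cure this: the obstruction is not where $P_Y(y_i)$ sits, it is the $y_i$-dependence of the other kernel factors.

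The gap is, however, easy to close without any peeling, and the fix shows why it matters that Lemma~\ref{nule1} is stated for \emph{arbitrary} kernels. Let $\bar h(y,\hatx_1):=\max_{\hatx_2\in\hatcalX_2}h(y,\hatx_1,\hatx_2)$ (the maximum exists since $\hatcalX_2$ is finite). Then $\prod_i h(y_i,\hatx_{1,i},\hatx_{2,i})\leq\prod_i\bar h(y_i,\hatx_{1,i})$ uniformly, the right side no longer depends on $\hatx_2^n$, and $\sum_{\hatx_2^n}Q_{\hatX_2^n|Y^n\hatX_1^n}(\hatx_2^n|y^n,\hatx_1^n)=1$ yields
\begin{align*}
J(\hatx_1^n)\leq\sum_{y^n}P_Y^n(y^n)\prod_{i\in[n]}\bar h(y_i,\hatx_{1,i})=\prod_{i\in[n]}\sum_{y\in\calY}P_Y(y)\,\bar h(y,\hatx_{1,i}),
\end{align*}
where the last equality uses that $P_Y^n$ is a product measure. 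Each factor equals $\nu(\hatx_{1,i},Q^*)$ with the deterministic kernel $Q^*(\hatx_2|y,\hatx_1):=\bbo\big(\hatx_2=\argmax_{\hatx_2'}h(y,\hatx_1,\hatx_2')\big)$ (any tie-break), hence is at most $1$ by Lemma~\ref{nule1}, and $J(\hatx_1^n)\leq 1$ follows. With this repair your argument coincides with the paper's proof.
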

We remark that Theorem \ref{fblconverse4kaspi} plays a central role in the converse proof the second-order asymptotics and holds for any memoryless sources.
\begin{proof}
The proof of Theorem \ref{fblconverse4kaspi} is similar to that of Theorem \ref{converse:fbl}.
Given any $(n,M)$-code with encoder $f$ and decoders $(\phi_1,\phi_2)$, let $S=f(X^n)$ be the compressed index that takes values in $\calM$, let $P_{S|X^n}$ be the conditional distribution induced by the encoder $f$ and let $P_{\hatX_1^n|S}$ and let the conditional distributions $P_{\hatX_1^n|S}$ and $P_{\hatX_2^n|S,Y^n}$ be induced by the decoders $\phi_1$ and $\phi_2$, respectively. Furthermore, let $Q_S$ be the uniform distribution over $\calM$ and let
\begin{align}
Q_{\hatX_1^n}(\hatx_1^n)&:=\sum_{s\in\calM}Q_S(s)P_{\hatX_1^n|S}(\hatx_1^n|s),\\
Q_{\hatX_2^n|Y^n}(\hatx_2^n|y^n)&:=\frac{\sum_s Q_S(s)P_{\hatX_1^n|S}(\hatx_1^n|s)P_{\hatX_2^n|SY^n}(\hatx_2^n|s,y^n)}{Q_{\hatX_1^n}(\hatx_1^n)}.
\end{align}
For ease of notation, we use $\calC(D_1,D_2)$ to denote the non-excess-distortion event, i.e., the event that $\{d_1(X^n,\hatX_1^n)\leq D_1,~d_2(X^n,\hatX_2^n)\leq D_2\}$ and use $\calE(D_1,D_2)$ to denote the excess-distortion event $\{d_1(X^n,\hatX_1^n)> D_1\mathrm{~or~}d_2(X^n,\hatX_2^n)> D_2\}$.  For any $\gamma>0$, it follows that
\begin{align}
\nn&\Pr\Big\{\sum_{i\in[n]}\jmath(X_i,Y_i|D_1,D_2,P_{XY})\geq \log M+n\gamma\Big\}\\
\nn&\leq \Pr\Big\{\sum_{i\in[n]}\jmath(X_i,Y_i|D_1,D_2,P_{XY})\geq \log M+n\gamma\mathrm{~and~}\calC(D_1,D_2)\Big\}\\
&\quad+\Pr\left\{\calE(D_1,D_2)\right\}\label{tobeupperbd},
\end{align}
where the second term in \eqref{tobeupperbd} is exactly the joint excess-distortion probability $\rmP_{\rme,n}(D_1,D_2)$.

The first term in \eqref{tobeupperbd} can be upper bounded as follows:
\begin{align}
\nn&\Pr\Big\{\sum_{i\in[n]}\jmath(X_i,Y_i|D_1,D_2,P_{XY})\geq \log M+n\gamma,~\calC(D_1,D_2)\Big\}\\
&=\Pr\Big\{M\leq \exp\big(\sum_{i\in[n]}\jmath(X_i,Y_i|D_1,D_2,P_{XY})-n\gamma\big)\bbo(\calC(D_1,D_2))\Big\}\\
&\leq \frac{\exp(-\gamma)}{M}\mathbb{E}\Big[\exp\big(\sum_{i\in[n]}\jmath(X_i,Y_i|D_1,D_2,P_{XY})\big)\bbo(\calC(D_1,D_2))\Big]\label{usemarkovineq}\\
&\leq \frac{\exp(-n\gamma)}{M}\mathbb{E}\Big[\exp\Big(\sum_{i\in[n]}\jmath(X_i,Y_i|D_1,D_2,P_{XY})+\sum_{i\in[2]}\lambda_i^*(D_i-d_i(X^n,\hatX_i^n))\Big)\Big]\label{nonnegativelambda},\\
\nn&=\exp(-n\gamma)\sum_{s}\sum_{(x^n,y^n)}\sum_{\hatx_1^n,\hatx_2^n}Q_S(s)P_{XY}^n(x^n,y^n)P_{S|X^n}(s|x^n)P_{\hatX_1^n|S}(\hatx_1^n|s)\\*
&\qquad\times P_{\hatX_2^n|Y^n,S}(\hatx_2^n|y^n,s)\prod_{i\in[n]}\alpha(x_i,y_i)\times\exp(-\lambda_1^*d(x_i,\hatx_{1,i})-\lambda_2^*(d(x_i,\hatx_{2,i})))\label{usealpha}\\
\nn&\leq \exp(-n\gamma)\sum_{(x^n,y^n)}\sum_{\hatx_1^n,\hatx_2^n}Q_{\hatX_1^n}(\hatx_1^n)Q_{\hatX_2^n|\hatX_1^n}(\hatx_1^n|
\hatx_1^n)P_{XY}^n(x^n,y^n)\label{usedefQandP<1}\\*
&\qquad\times \prod_{i\in[n]}\alpha(x_i,y_i)\exp(-\lambda_1^*d(x_i,\hatx_{1,i})-\lambda_2^*(d(x_i,\hatx_{2,i})))\\
&=\exp(-n\gamma)\sum_{\hatx_1^n}Q_{\hatX_1^n}(\hatx_1^n)\Big[\prod_{i\in[n]}\nu(\hatx_{1,i},Q_{\hatX_{2,i}|Y_i\hatX_{1,i}})\Big]\label{defanotherQ}\\
&\leq \exp(-n\gamma)\label{usesumine},
\end{align}
where \eqref{usemarkovineq} follows from Markov's inequality and \eqref{nonnegativelambda} follows since $\lambda_i^*\geq 0$ for $i\in[2]$, \eqref{usealpha} follows from the definitions of $\alpha(\cdot)$ in \eqref{def:a4kaspi} and $\jmath(\cdot)$ in \eqref{def:kaspitilt}, \eqref{usedefQandP<1} follows from the fact $P_{S|X^n}(s|x^n)\leq 1$ and the definitions of distributions $(Q_{\hatX_1^n},Q_{\hatX_2^n|\hatX_1^n},Q_{\hatX_2^n|Y^n})$, \eqref{defanotherQ} since we define $Q_{\hatX_{2,i}|Y_i\hatX_{1,i}}$ as the marginal distribution $Q_{\hatX_{2,i}|Y_i}$ of $Q_{\hatX_2^n|Y^n}$ and use the definition of $\nu(\cdot)$ in \eqref{def:nu2} and \eqref{usesumine} follows from the result in \eqref{nu2le1}.

The proof of Theorem  \ref{fblconverse4kaspi} is completed by combining \eqref{tobeupperbd} and \eqref{usesumine}.
\end{proof}

\section{Second-Order Asymptotics}
In this section, we define and present second-order asymptotics of the Kaspi problem for a DMS under bounded distortion measures. In other words, we assume that $\calX$, $\calY$, $\hat{\calX_1}$, $\hat{\calX_2}$ are all finite sets and $\max_{x,\hatx_i}d_i(x,\hatx_i)~,i\in[2]$ is finite.

\subsection{Definition, Main Result and Discussions}
Let $\varepsilon\in(0,1)$ be fixed.
\begin{definition}
\label{secondorder}
A rate $L$ is said to be second-order $(D_1,D_2,\varepsilon)$-achievable for the Kaspi problem if there exists a sequence of $(n,M)$-codes such that
\begin{align}
\limsup_{n\to\infty} \frac{\log M-nR(P_{XY},D_1,D_2)}{\sqrt{n}}\leq L,
\end{align}
and
\begin{align}
\limsup_{n\to\infty} \rmP_{\rme,n}(D_1,D_2)\leq \varepsilon.
\end{align}
The infimum second-order $(D_1,D_2,\varepsilon)$-achievable rate is called the optimal second-order coding rate and denoted as $L^*(D_1,D_2,\varepsilon)$.
\end{definition}
Note that in Definition \ref{firstorder} of the rate-distortion region, the average distortion criterion is used, while in Definition \ref{secondorder}, the excess-distortion probability is considered. The reason is that for second-order asymptotics, second-order asymptotics always companies with the probability of a certain event. To be specific, the excess-distortion probability plays a similar role as error probability for the lossless source coding problem~\cite{hayashi2008} or channel coding problems~\cite{hayashi2009information,polyanskiy2010thesis}. Let $\rmV(D_1,D_2,P_{XY})$ be the distortions-dispersion function for the Kaspi problem, i.e., 
\begin{align}
\rmV(D_1,D_2,P_{XY})&:=\mathrm{Var}\big[\jmath(X,Y|D_1,D_2,P_{XY})\big]\label{dispersion4kaspi}.
\end{align}

We impose following conditions:
\begin{enumerate}
\item\label{kaspi:cond1} The distortion levels are chosen such that $R(P_{XY},D_1,D_2)>0$ is finite;
\item\label{kaspi:condend} $Q_{XY}\to R(Q_{XY},D_1,D_2)$ is twice differentiable in the neighborhood of $P_{XY}$ and the derivatives are bounded.
\end{enumerate}
\begin{theorem}
\label{kaspisecond}
Under conditions (\ref{kaspi:cond1}) and (\ref{kaspi:condend}), the optimal second-order coding rate for the Kaspi problem is
\begin{align}
L^*(D_1,D_2,\varepsilon)=\sqrt{\rmV(D_1,D_2,P_{XY})}\rmQ^{-1}(\varepsilon).
\end{align}
\end{theorem}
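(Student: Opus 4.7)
\section*{Proof Proposal for Theorem \ref{kaspisecond}}

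The plan is to prove the converse and achievability by mirroring the two-pronged strategy used for the rate-distortion problem in Section \ref{sec:second4rd}, with the distortion-tilted information density $\jmath(x|D,P_X)$ replaced throughout by the Kaspi distortions-tilted information density $\jmath_{\mathrm{K}}(x,y|D_1,D_2,P_{XY})$ defined in \eqref{def:kaspitilt}. The key structural facts that make the analogy work are: (i) by \eqref{kaspi:avgprop} the random variables $\{\jmath_{\mathrm{K}}(X_i,Y_i|D_1,D_2,P_{XY})\}_{i\in[n]}$ are i.i.d.\ with mean $R(P_{XY},D_1,D_2)$ and variance $\rmV(D_1,D_2,P_{XY})$, and (ii) the third absolute moment is finite since the alphabets are finite and distortion measures are bounded.

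For the converse, I would invoke Theorem \ref{fblconverse4kaspi} with $\gamma = \frac{\log n}{n}$. Given any $\varepsilon\in(0,1)$, set
\begin{align}
\log M = nR(P_{XY},D_1,D_2)+\sqrt{n\rmV(D_1,D_2,P_{XY})}\rmQ^{-1}(\varepsilon_n')-\log n,
\end{align}
where $\varepsilon_n' := \varepsilon + \frac{B}{\sqrt{n}} + \frac{1}{n}$ with $B$ being the Berry-Esseen constant tied to the third absolute moment of $\jmath_{\mathrm{K}}(X,Y|D_1,D_2,P_{XY})$. Applying Theorem \ref{berrytheorem} to the sum in \eqref{nshot} gives $\rmP_{\rme,n}(D_1,D_2)\geq \varepsilon$, and the Taylor expansion of $\rmQ^{-1}(\varepsilon_n')$ around $\varepsilon$ yields the converse bound $L^*(D_1,D_2,\varepsilon)\geq \sqrt{\rmV(D_1,D_2,P_{XY})}\rmQ^{-1}(\varepsilon)$.

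For the achievability, I would follow the method-of-types template developed in the achievability half of Section 3.5. The first step is to establish a Kaspi-specific type covering lemma: for every joint type $Q_{XY}\in\calP_n(\calX\times\calY)$ and every rate $R_1\geq R(Q_{XY},D_1,D_2)$, there exist codebooks $\calC_1\subseteq\hatcalX_1^n$ and a family $\{\calC_2(\hatx_1^n,y^n)\}\subseteq \hatcalX_2^n$ with total log-cardinality at most $nR_1 + O(\log n)$ such that every $(x^n,y^n)\in\calT_{Q_{XY}}^n$ admits $\hatx_1^n\in\calC_1$ with $d_1(x^n,\hatx_1^n)\leq D_1$ and $\hatx_2^n\in\calC_2(\hatx_1^n,y^n)$ with $d_2(x^n,\hatx_2^n)\leq D_2$. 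The encoder first transmits the type $\hatT_{X^nY^n}$ (using $O(\log n)$ nats), then the index of $\hatx_1^n\in\calC_1$, then the index of $\hatx_2^n\in\calC_2(\hatx_1^n,Y^n)$; the decoders decode their respective reproductions. The excess-distortion probability of such a code is upper bounded by $\Pr\{nR(\hatT_{X^nY^n},D_1,D_2)+c\log n > \log M\}$ for some constant $c$.

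The final step links the empirical rate $R(\hatT_{X^nY^n},D_1,D_2)$ to the sum of Kaspi distortions-tilted information densities. Restricting to the typical set $\calA_n(P_{XY})$ as in \eqref{def:typicalset} (whose complement has probability $O(n^{-2})$), I would apply a second-order Taylor expansion of $(Q_{XY})\mapsto R(Q_{XY},D_1,D_2)$ around $P_{XY}$, using Lemma \ref{kaspifirstderive} to identify the first-order coefficients as the tilted information densities minus a common constant (which cancels by $\sum_{(x,y)}\hatT_{X^nY^n}(x,y) = \sum_{(x,y)}P_{XY}(x,y) = 1$). This yields
\begin{align}
R(\hatT_{x^ny^n},D_1,D_2)
= \frac{1}{n}\sum_{i\in[n]}\jmath_{\mathrm{K}}(x_i,y_i|D_1,D_2,P_{XY})+O\!\left(\frac{\log n}{n}\right),
\end{align}
after which applying the Berry-Esseen theorem with a suitable choice of $\log M$ matching the claimed second-order term completes the proof. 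The main obstacle is verifying the Kaspi type covering lemma, since it requires a \emph{nested} covering --- the second-layer codebook depends on both the first-layer reproduction $\hatx_1^n$ and the side information $y^n$ --- and one must carefully track the joint-typicality constraints induced by the optimal test channels $(P_{\hatX_1|XY}^*,P_{\hatX_2|XY\hatX_1}^*)$ to keep the combined rate within $R(Q_{XY},D_1,D_2)+O(\tfrac{\log n}{n})$; the standard single-codebook type covering arguments of~\cite{zhang1997tit} must be iterated conditionally for this purpose.
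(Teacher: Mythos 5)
Your proposal matches the paper's proof in both structure and key ingredients: the converse is obtained by applying the Berry--Esseen theorem to the non-asymptotic bound in Theorem \ref{fblconverse4kaspi} with a logarithmic choice of $\gamma$ and a Taylor expansion of $\rmQ^{-1}$, while the achievability uses a Kaspi-specific type covering lemma (the paper's Lemma \ref{coveringkaspi}, which produces exactly the nested codebook structure $\calB$ and $\calB((z^n)^*,y^n)$ you describe), the resulting excess-distortion bound $\rmP_{\rme,n}(D_1,D_2)\leq \Pr\{R_n < R(\hatT_{X^nY^n},D_1,D_2)\}$ (Lemma \ref{uppexcess:kaspi}), restriction to the typical set $\calA_n(P_{XY})$, a Taylor expansion of $R(\hatT_{x^ny^n},D_1,D_2)$ around $P_{XY}$ via Lemma \ref{kaspifirstderive}, and Berry--Esseen. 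The only cosmetic differences are that the paper uses $\gamma=\frac{\log n}{2n}$ rather than $\frac{\log n}{n}$, and Lemma \ref{kaspifirstderive} phrases the first-order coefficient as a difference $\jmath(x_i,y_i|\cdot)-\jmath(x_m,y_m|\cdot)$ (rather than ``tilted density minus a common constant''), but your observation that the extra term cancels because both $\hatT_{X^nY^n}$ and $P_{XY}$ sum to one is precisely why the expansion reduces to the empirical average of $\jmath_{\mathrm{K}}$.
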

The converse proof of Theorem \ref{kaspisecond} follows by applying the Berry-Esseen Theorem to the non-asymptotic bound in Theorem \ref{fblconverse4kaspi}. In the achievability proof, we first prove a type-covering lemma tailored for the Kaspi problem. Subsequently, we make use of the properties of $\jmath(x,y|D_1,D_2,P_{XY})$ in Lemma \ref{opttest4kaspi} and appropriate Taylor expansions.

We remark that the distortions-tilted information density for the Kaspi problem $\jmath(x,y|D_1,D_2,P_{XY})$ reduces to the distortion-tilted information density for the lossy source coding problem~\cite{kostina2012fixed}, or the distortion-tilted information density for the lossy source coding problem with encoder and decoder side information~\cite{le2014second} for particular choices of distortion levels $(D_1,D_2)$. Hence, our result in Theorem \ref{kaspisecond} is a strict generalization of the second-order coding rate for the lossy source coding problem~\cite{kostina2012fixed} and the conditional lossy source coding problem~\cite{le2014second} for a DMS under bounded distortion measures. We also illustrate this point in Section \ref{kaspi:dsbs} via a numerical example for the doubly symmetric binary source.

In the next two subsections, we illustrate Theorem \ref{kaspisecond} via two numerical examples by calculating the second-order coding rate $L^*(D_1,D_2,\varepsilon)$ in close form.

\subsection{Numerical Examples}
\subsubsection{Asymmetric Correlated Source}
In order to illustrate our results in Lemma \ref{opttest4kaspi} and Theorem \ref{kaspisecond}, we consider the following source. Let $\calX=\{0,1\}$, $\calY=\{0,1,\rme\}$ and $P_X(0)=P_X(1)=\frac{1}{2}$. Let $Y$ be the output of passing $X$ through a Binary Erasure Channel (BEC) with erasure probability $p$, i.e., $P_{Y|X}(y|x)=1-p$ if $x=y$ and $P_{Y|X}(\rme|x)=p$. The explicit formula of the Kaspi rate-distortion function for the above correlated source under Hamming distortion measures was derived by Perron, Diggavi and Telatar in \cite{perron2006kaspi}. Here we only recall the \emph{non-degenerate} result, i.e., the case where the distortion levels $(D_1,D_2)$ are chosen such that $\lambda_1^*>0$ and $\lambda_2^*>0$.

Define the set
\begin{align}
\calD_{\mathrm{bec}}
&:=\Big\{(D_1,D_2)\in\bbR_+^2: D_1\leq \frac{1}{2},~D_1-\frac{1-p}{2}\leq D_2\leq p D_1\Big\}.
\end{align}
\begin{lemma}
If $(D_1,D_2)\in\calD_{\mathrm{bec}}$, then the Kaspi rate-distortion function for the above asymmetric correlated source under Hamming distortion measures is
\begin{align}
R(P_{XY},D_1,D_2)
&=\log 2-(1-p)H_b\Bigg(\frac{D_1-D_2}{1-p}\Bigg)-pH_b\Bigg(\frac{D_2}{p}\Bigg).
\end{align}
\end{lemma}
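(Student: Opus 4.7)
The plan is to establish the formula by exhibiting a test channel that achieves the rate in the Kaspi formulation~\eqref{kaspiratefunc}, and then appealing to a matching converse from Perron, Diggavi and Telatar~\cite{perron2006kaspi}. The achievability should exploit the binary-erasure structure of the side information: $Y$ is a perfect copy of $X$ with probability $1-p$ (in which case decoder $\phi_2$ trivially recovers $X$ from $Y$) and is erased with probability $p$ (in which case decoder $\phi_2$ must rely entirely on the message $S$).

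For achievability, I would construct the following test channel, conditioning on the side information $Y$. On the non-erased event $\{Y=X\}$, I set $\hat{X}_2 = Y$, so that the $d_2$-distortion is exactly zero on this event, and I choose $\hat{X}_1$ via a binary symmetric channel with crossover probability $d_{\mathrm{ne}} := (D_1-D_2)/(1-p)$ between $X$ and $\hat{X}_1$. On the erased event $\{Y=e\}$, I let $\hat{X}_1 = \hat{X}_2$ be a common reproduction produced through a binary symmetric channel from $X$ with crossover probability $D_2/p$. The defining inequalities of $\calD_{\mathrm{bec}}$ are precisely what ensure $d_{\mathrm{ne}} \in [0,1/2]$ (from $D_1 - (1-p)/2 \le D_2$) and $D_2/p \in [0,D_1]\subseteq[0,1/2]$ (from $D_2 \le pD_1$), so both binary test channels are well-posed. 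Straightforward computation yields $\mathbb{E}[d_1(X,\hat{X}_1)] = (1-p)d_{\mathrm{ne}} + p\cdot (D_2/p) = D_1$ and $\mathbb{E}[d_2(X,\hat{X}_2)] = D_2$. As for the rate, the term $I(X;\hat{X}_2 \mid Y,\hat{X}_1)$ vanishes because on $\{Y=X\}$ the value of $\hat{X}_2 = Y$ is already determined by $Y$, and on $\{Y=e\}$ one has $\hat{X}_2 = \hat{X}_1$. Conditioning $I(XY;\hat{X}_1)$ on the erasure indicator then gives $(1-p)(\log 2 - H_b(d_{\mathrm{ne}})) + p(\log 2 - H_b(D_2/p))$, which matches the claimed formula.

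For the converse, I would start from $\log M \ge I(X^n Y^n;\hat{X}_1^n) + I(X^n;\hat{X}_2^n \mid Y^n,\hat{X}_1^n)$, single-letterize via a standard time-sharing argument, and then condition on the erasure indicator $E = \bbo(Y=e)$. This reduces the problem to allocating the distortion budget between the erased and non-erased regimes, in each of which the conditional source is $\mathrm{Bern}(1/2)$ with Hamming distortion, whose rate-distortion function is $\log 2 - H_b(\cdot)$. The main obstacle will be showing that the jointly optimal allocation coincides with the one induced by the achievability test channel---namely $d_{\mathrm{ne}}$ on non-erased positions and $D_2/p$ on erased positions---for every $(D_1,D_2) \in \calD_{\mathrm{bec}}$. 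This should follow from the KKT conditions of the convex program in~\eqref{kaspiratefunc} together with the non-degeneracy $\lambda_1^*,\lambda_2^* > 0$ that characterizes the interior of $\calD_{\mathrm{bec}}$; the remaining technical details can be taken from the argument in~\cite{perron2006kaspi}.
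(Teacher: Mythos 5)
Your achievability construction is correct and complete: the defining inequalities of $\calD_{\mathrm{bec}}$ indeed make both conditional crossover probabilities $(D_1-D_2)/(1-p)$ and $D_2/p$ lie in $[0,1/2]$; the distortions $\bbE[d_1]=D_1$ and $\bbE[d_2]=D_2$ are met with equality; $I(X;\hatX_2\mid Y,\hatX_1)=0$ because $\hatX_2$ is a deterministic function of $(Y,\hatX_1)$; and $I(XY;\hatX_1)$ decomposes as you say once you additionally note $I(E;\hatX_1)=0$ by the symmetry of your two BSC's (the marginal of $\hatX_1$ is uniform in each regime), which you left implicit but is easy to check. Your deferral of the converse to Perron--Diggavi--Telatar matches what the paper does in substance.

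The route is, however, genuinely different in emphasis from the paper's. The monograph states the lemma as a black-box citation and then \emph{verifies} the claimed formula by checking the KKT/optimality conditions of Lemma~\ref{opttest4kaspi} for the implicitly defined test channel (through the quantities $\lambda_1^*,\lambda_2^*,\alpha_2,\alpha$), from which both achievability and converse follow via the parametric representation~\eqref{kaspipara}. You instead go primal: you write down an explicit, operationally meaningful test channel that achieves the target rate pair and compute the three mutual-information and two distortion terms by hand, relegating only the optimality of your allocation (the converse) to the external reference. The paper's approach is better suited to the rest of that section, since the same $\lambda_i^*$ and $\alpha$ computations are reused to obtain the distortions-tilted information density and the second-order dispersion; your approach is more transparent as a stand-alone proof of the rate formula, because it makes the optimal test channel and the role of the $\calD_{\mathrm{bec}}$ inequalities explicit rather than encoding them in dual variables.

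One optional improvement: rather than appealing to the external reference for the converse, you could close the argument within the paper's own framework by checking that your explicit test channel, together with $\lambda_1^*$ and $\lambda_2^*$ as given in~\eqref{call1*}--\eqref{call2*}, satisfies the KKT conditions~\eqref{optcond14kaspi}--\eqref{optcond24kaspi}; convexity of the Kaspi optimization then certifies global optimality without invoking~\cite{perron2006kaspi} at all.
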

Hence, for $(D_1,D_2)\in\calD_{\mathrm{bec}}$, using the definitions of $\lambda_1^*$ in \eqref{dualoptimal1} and $\lambda_2^*$ in \eqref{dualoptimal2}, we obtain
\begin{align}
\lambda_1^*&=\log\frac{(1-p)-(D_1-D_2)}{1-p}-\log\frac{D_1-D_2}{1-p}\\
&=\log\frac{(1-p)-(D_1-D_2)}{D_1-D_2}\label{call1*},\\
\lambda_2^*
&=\log\frac{p-D_2}{p}+\log\frac{D_1-D_2}{1-p}-\log\frac{(1-p)-(D_1-D_2)}{1-p}-\log\frac{D_2}{p}\\
&=-\lambda_1^*+\log\frac{p-D_2}{D_2}\label{call2*}.
\end{align}
Then, using the definitions of $\alpha_2(\cdot)$ in \eqref{def:a24kaspi} and $\alpha(\cdot)$ in \eqref{def:a4kaspi}, we have
\begin{align}
\alpha_2(0,0,0)
\nn&=\alpha_2(0,0,1)=\alpha_2(1,1,0)=\alpha_2(1,1,1)\\
&=\alpha_2(0,\rme,0)=\alpha_2(1,\rme,1)=1,\\
\alpha_2(1,0,0)\nn&=\alpha_2(1,0,1)=\alpha_2(0,1,0)=\alpha_2(0,1,1)\\
&=\alpha_2(1,\rme,0)=\alpha_2(0,\rme,1)=\exp(\lambda_2^*),
\end{align}
and
\begin{align}
\alpha(0,0)&=\alpha(1,1)=\frac{2}{1+\exp(-\lambda_1^*)},\\
\alpha(0,\rme)&=\alpha(1,\rme)=\frac{2}{1+\exp(-\lambda_1^*-\lambda_2^*)}.
\end{align}

It can be verified easily that \eqref{optcond14kaspi}, \eqref{optcond24kaspi}, \eqref{kaspipara} hold. In the following, we will verify that \eqref{nu2le1} holds for arbitrary $Q_{\hatX_2|Y\hatX_1}$ and $\hatx_1$. As a first step, we can verify that for any $(y,\hatx_1,\hatx_2)$, we have
\begin{align}
\nn&\sum_x P_{XY}(x,y)\alpha(x,y)\exp(-\lambda_1^*d_1(x,\hatx_1)-\lambda_2^*d_2(X^n,\hatX_2^n))\\*
&\leq \sum_x P_{XY}(x,y)\frac{\alpha(x,y)}{\alpha_2(x,y,\hatx_1)}\exp(-\lambda_1^*d_1(x,\hatx_1))\label{verify1}.
\end{align}
Then, for any distribution $Q_{\hatX_2|Y\hatX_1}$, using the definition of $\nu(\cdot)$ in \eqref{def:nu2}, multiplying $Q_{\hatX_2|Y\hatX_1}(\hatx_2|y,\hatx_1)$ over both sides of \eqref{verify1}, and summing over $(y,\hatx_2)$, we obtain that
\begin{align}
\nu(\hatx_1,Q_{\hatX_2|Y\hatX_1})\leq 1\label{verify2}.
\end{align}

Using the definition of $\jmath(\cdot)$ in \eqref{def:kaspitilt}, we have
\begin{align}
\jmath(0,0|D_1,D_2,P_{XY})
&=\jmath(1,1|D_1,D_2,P_{XY})\\
&=\log \alpha(0,0)-\lambda_1^*D_1-\lambda_2^*D_2\label{j1cal4kaspi},
\end{align}
and
\begin{align}
\jmath(0,\rme|D_1,D_2,P_{XY})
&=\jmath(1,\rme|D_1,D_2,P_{XY})\\
&=\log \alpha(0,\rme)-\lambda_1^*D_1-\lambda_2^*D_2\label{j2cal4kaspi}.
\end{align}
Furthermore, using the definition of the distortion-dispersion function $\rmV(D_1,D_2,P_{XY})$ in \eqref{dispersion4kaspi}, we have
\begin{align}
\rmV(D_1,D_2,P_{XY})
&=\mathrm{Var}[\jmath(X,Y|D_1,D_2,P_{XY})]\\
&=p(1-p)\Bigg(\log\frac{p-D_2}{p}-\log\frac{(1-p)-(D_1-D_2)}{1-p}\Bigg)^2.
\end{align}
Thus,
\begin{align}
L^*(D_1,D_2,\varepsilon)=\sqrt{\rmV(D_1,D_2,P_{XY})}\rmQ^{-1}(\varepsilon).
\end{align}

\subsubsection{Doubly Symmetric Binary Source (DSBS)}
\label{kaspi:dsbs}
In this example, we show that under certain distortion levels, the Kaspi rate-distortion function reduces to the rate-distortion function~\cite{shannon1959coding} (see also \cite[Theorem 3.5]{el2011network}) and the conditional rate-distortion function~\cite[Eq. (11.2)]{el2011network}. We consider the DSBS where $\calX=\calY=\{0,1\}$, $P_{XY}(0,0)=P_{XY}(1,1)=\frac{1-p}{2}$ and $P_{XY}(0,1)=P_{XY}(1,0)=\frac{p}{2}$ for some $p\in[0,\frac{1}{2}]$. 
\begin{lemma}
\label{kaspidsbs}
Depending on the distortion levels $(D_1,D_2)$, the Kaspi rate-distortion function for the DSBS with Hamming distortion measures satisfies
\begin{itemize}
\item $D_1\geq \frac{1}{2}$ and $D_2\geq p$
\begin{align}
R(P_{XY},D_1,D_2)=0.
\end{align}
\item $D_1<\frac{1}{2}$ and $D_2\geq \min\{p,D_1\}$
\begin{align}
R(P_{XY},D_1,D_2)=\log 2-H_b(D_1),
\end{align}
where $H_b(x)=-x\log x-(1-x)\log (1-x)$ is the binary entropy function.
\item $D_1\geq D_2+\frac{1-2p}{2}$ and $D_2<p$ 
\begin{align}
R(P_{XY},D_1,D_2)&=H_b(p)-H_b(D_2).
\end{align}
\end{itemize}
\end{lemma}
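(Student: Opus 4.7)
The plan is to prove each case of the lemma by matching upper and lower bounds on the Kaspi rate-distortion function in~\eqref{kaspiratefunc}. Two universal lower bounds suffice for the converse direction. Dropping the distortion constraint at decoder~2 yields $R(P_{XY},D_1,D_2)\ge R(P_X,D_1)$, which for the DSBS with $P_X=\mathrm{Bern}(1/2)$ under the Hamming distortion measure evaluates to $|\log 2-H_b(D_1)|^+$. Supplying the side information $Y$ also to decoder~1 (a relaxation that can only reduce the required rate) and then dropping the decoder-1 constraint gives $R(P_{XY},D_1,D_2)\ge R(P_{X|Y},D_2\,|\,P_Y)$; since $X|Y$ is $\mathrm{Bern}(p)$-distributed (shifted by $Y$), this evaluates to $|H_b(p)-H_b(D_2)|^+$. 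Combining these two bounds immediately furnishes the converse inequalities in all three cases.

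For the direct parts I will exhibit explicit test channels. Case~1 requires zero rate: take $\hatX_1\equiv 0$ and $\hatX_2=Y$, which achieves $\mathbb{E}[d_1(X,\hatX_1)]=1/2\le D_1$ and $\mathbb{E}[d_2(X,\hatX_2)]=\Pr(X\neq Y)=p\le D_2$. Case~2 uses a standard rate-distortion codebook for $\mathrm{Bern}(1/2)$ at distortion $D_1$, producing $\hatX_1$ at rate $\log 2-H_b(D_1)$; decoder~2 then sets $\hatX_2=\hatX_1$ (valid when $D_2\ge D_1$) or $\hatX_2=Y$ (valid when $D_2\ge p$), each adding no rate. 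In both subcases the Markov chain $Y-X-\hatX_1$ is preserved so that $I(XY;\hatX_1)=I(X;\hatX_1)=\log 2-H_b(D_1)$.

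Case~3 is the main obstacle because the active lower bound $H_b(p)-H_b(D_2)$ is independent of $D_1$. The natural Wyner-Ziv construction---in which the encoder describes an auxiliary $V$ satisfying $X\oplus Y=V\oplus N_2$ with $N_2\sim\mathrm{Bern}(D_2)$, so that decoder~2 outputs $\hatX_2=V\oplus Y$---achieves the correct rate, but a direct computation of $P_{XV}$ reveals that $V$ is independent of $X$, forcing decoder~1 to distortion $1/2$. My plan is therefore to identify the optimal test channel in Case~3 by solving the KKT stationarity conditions of Lemma~\ref{opttest4kaspi}: with Lagrange multipliers $\lambda_2^\star=\log\tfrac{(1-D_2)p}{D_2(1-p)}$ and $\lambda_1^\star>0$ active on the boundary $D_1=D_2+(1-2p)/2$, I will verify that the fixed-point equations~\eqref{optcond14kaspi}--\eqref{def:a1q4kaspi} admit a joint law for which the Markov chain $Y-(X,\hatX_1)-\hatX_2$ holds, so that the total mutual information in~\eqref{kaspiratefunc} collapses to $I(X;\hatX_2|Y)=H_b(p)-H_b(D_2)$, while simultaneously $\mathbb{E}[d_1(X,\hatX_1)]=D_2+(1-2p)/2$. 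The main effort lies in solving this fixed-point system, which for binary alphabets reduces to two scalar equations in $(\lambda_1^\star,P_{\hatX_1}^\star(1))$; achievability for all $D_1\ge D_2+(1-2p)/2$ in the case-3 region then follows from the monotonicity of $R(P_{XY},D_1,D_2)$ in $D_1$.
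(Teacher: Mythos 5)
Your lower bounds and the Case 1 and 2 constructions are correct, and since the paper asserts Lemma~\ref{kaspidsbs} without proof, filling this in is genuinely necessary. Case 3 is indeed the substance, and you have rightly identified that any $\hatX_1$ built from the Wyner--Ziv auxiliary $V$ alone forces $d_1(X,\hatX_1)=\tfrac12$, so a cleverer test channel is needed whenever $D_2+\tfrac{1-2p}{2}\le D_1<\tfrac12$.

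However, the Markov chain you intend to verify, $Y-(X,\hatX_1)-\hatX_2$, is not the condition that makes the Kaspi objective collapse. Writing the objective as $I(Y;\hatX_1)+I(X;\hatX_1\hatX_2|Y)=I(Y;\hatX_1)+I(X;\hatX_2|Y)+I(X;\hatX_1|Y,\hatX_2)$ shows that the rate equals $I(X;\hatX_2|Y)=H_b(p)-H_b(D_2)$ if and only if (i) $\hatX_1\perp Y$, (ii) the Markov chain $X-(Y,\hatX_2)-\hatX_1$ holds, and (iii) $\hatX_2$ attains $R(P_{X|Y},D_2|P_Y)$. Under (i)--(iii), $\hatX_1$ is generated by a kernel $P_{\hatX_1|Y\hatX_2}$ subject only to marginal independence from $Y$. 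Setting $a_{y\hatx_2}:=\Pr(\hatX_1=1\mid Y=y,\hatX_2=\hatx_2)$, imposing bit-flip symmetry $a_{11}=1-a_{00}$, $a_{10}=1-a_{01}$ and the independence constraint $a_{00}(1-q)+a_{01}q=\tfrac12$ with $q=\tfrac{p-D_2}{1-2D_2}$, a one-parameter minimization gives $\min\bbE[d_1(X,\hatX_1)]=\tfrac12-q(1-2D_2)=D_2+\tfrac{1-2p}{2}$, attained at $a_{00}=\tfrac{1/2-q}{1-q},\,a_{01}=1$. This produces an achieving test channel exactly at the boundary $D_1=D_2+\tfrac{1-2p}{2}$, and monotonicity of $R(P_{XY},\cdot,D_2)$ finishes Case 3. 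Separately, the multiplier you quote, $\lambda_2^\star=\log\tfrac{(1-D_2)p}{D_2(1-p)}$, does not match $-\partial R/\partial D_2=\log\tfrac{1-D_2}{D_2}$ from $R=H_b(p)-H_b(D_2)$; this would have needed correcting in the KKT route you proposed, and the constructive argument above avoids solving that fixed-point system entirely.
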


When $D_1<\frac{1}{2}$ and $D_2<pD_1$, the Kaspi rate-distortion function reduces to the rate-distortion function for the lossy source coding problem. Thus, the distortion-tilted information density for the Kaspi problem reduces to the $D_1$-tilted information density in \eqref{def:d1tilt}, i.e.,
\begin{align}
\jmath(x,y|D_1,D_2,P_{XY})&=\log 2-H_b(D_1)\label{def:d1tilt}.
\end{align} 
Hence, $L^*(D_1,D_1|P_{XY})=0$. When $D_1\geq D_2+\frac{1-2p}{2}$ and $D_2<p$, the Kaspi rate-distortion function reduces to the conditional rate-distortion function. Under the optimal test channel, we have $\hatX_1=0/1$ and $X\to\hatX_2\to Y$ forms a Markov chain. In this case, the distortion-tilted information density for the Kaspi problem reduces to the conditional distortion-tilted information density~\cite[Definition 5]{kostina2012converse} (see also \cite{le2014second}), i.e., 
\begin{align}
\jmath(x,y|D_1,D_2,P_{XY})
&=-\log P_{X|Y}(x|y)-H_b(D_2).
\end{align}
Hence,
\begin{align}
\rmV(D_1,D_2,P_{XY})
&=\mathrm{Var}[-\log P_{X|Y}(X|Y)]\\
&=(1-p)(-\log(1-p)-H_b(p))^2+p(-\log p-H_b(p))^2\\
&:=\rmV(p),
\end{align}
and
\begin{align}
L^*(D_1,D_2,\varepsilon)=\sqrt{\rmV(p)}\rmQ^{-1}(\varepsilon).
\end{align}

\section{Proof of Second-Order Asymptotics}
\subsection{Achievability}
We first prove a type covering lemma for the Kaspi problem, based on which we derive an upper bound on the excess-distortion probability. Subsequently, using the Berry-Esseen theorem together with proper Taylor expansions, we manage to prove the desired achievable second-order coding rate.

To present our type covering lemma, define the following constant
\begin{align}
c=\Big(8|\calX|\cdot|\calY|\cdot|\hat{\calX}_1|\cdot|\hat{\calX}_2|+6\Big).
\end{align}

\begin{lemma}
\label{coveringkaspi}
There exists a set $\calB\subset\hatcalX_1^n$ such that for each $(x^n,y^n)\in\calT_{Q_{XY}}$, if 
\begin{align}
(z^n)^*=\argmin_{\hat{x}_1^n\in\calB}d_1(x^n,\hat{x}_1^n),
\end{align}
then the following conclusion hold.
\begin{enumerate}
\item the distortion between $x^n$ and $(z^n)^*$ is upper bounded by $D_1$, i.e., 
\begin{align}
d_1(x^n,(z^n)^*)\leq D_1,
\end{align}
\item there exists a set $\calB((z^n)^*,y^n)\subset\hat\calX_2^n$ such that
\begin{align}
\min_{\hat{x}_2^n\in\calB((z^n)^*,y^n)}d_2(x^n,\hat{x}_2^n)\leq D_2.
\end{align}
\item and the size of the set $\calB\cup\calB((z^n)^*,y^n)$ satisfies
\begin{align}
\nn&\log \big|\calB\cup\calB((z^n)^*,y^n)\big|\\*
&\leq nR(Q_{XY},D_1,D_2)+c\log (n+1)\label{newinapp}.
\end{align}
\end{enumerate}
\end{lemma}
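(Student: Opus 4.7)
The plan is to construct the two codebooks via successive random coding steps that realize the two mutual-information terms in the Kaspi rate-distortion function. Let $(Q^*_{\hatX_1|XY}, Q^*_{\hatX_2|XY\hatX_1})$ achieve the minimum in $R(Q_{XY}, D_1, D_2)$, let $Q^*$ denote the induced joint distribution on $\calX \times \calY \times \hatcalX_1 \times \hatcalX_2$, and set $I_1 := I_{Q^*}(XY;\hatX_1)$ and $I_2 := I_{Q^*}(X;\hatX_2 \mid Y\hatX_1)$, so that $I_1 + I_2 = R(Q_{XY}, D_1, D_2)$.

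First, I would build $\calB \subset \hatcalX_1^n$ by drawing $M_1 = \lceil \exp(nI_1 + c_1 \log(n+1)) \rceil$ codewords i.i.d.\ from the induced marginal $Q^*_{\hatX_1}$. Adapting the random-coding argument behind Lemma \ref{type:covering} to joint covering, the per-codeword success probability for hitting any target conditional type class $\calT_{Q^*_{\hatX_1|XY}}(x^n, y^n)$ is at least $\exp(-nI_1 - O(\log n))$ by \eqref{size:condtype}, so a union bound over the at most $\exp(nH(Q_{XY}))$ joint sequences in $\calT_{Q_{XY}}$ produces with positive probability a deterministic $\calB$ such that for every $(x^n, y^n) \in \calT_{Q_{XY}}$ there exists $z^n \in \calB$ with $z^n \in \calT_{Q^*_{\hatX_1|XY}}(x^n, y^n)$. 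Since $\bbE_{Q^*}[d_1(X,\hatX_1)] \leq D_1$, such a $z^n$ has $d_1(x^n, z^n) \leq D_1$, and hence so does the nearest-neighbor $(z^n)^*$, which verifies the first claim of the lemma.

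Second, for each admissible pair $(z^n, y^n)$ with $z^n \in \calB$, I would construct $\calB(z^n, y^n) \subset \hatcalX_2^n$ of cardinality at most $M_2 = \lceil \exp(nI_2 + c_2 \log(n+1)) \rceil$ by drawing each codeword with independent symbols $\hatX_{2,i} \sim Q^*_{\hatX_2 \mid Y\hatX_1}(\cdot \mid y_i, z_i)$. A conditional random-coding argument (a multi-letter form of the conditional type-covering lemma) then shows that whenever $\hat{T}_{x^n y^n z^n} = Q^*_{XY\hatX_1}$, some $\hatx_2^n \in \calB(z^n, y^n)$ has $\hat{T}_{x^n y^n z^n \hatx_2^n} = Q^*_{XY\hatX_1\hatX_2}$, which by the second distortion constraint yields $d_2(x^n, \hatx_2^n) \leq D_2$. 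Since $\max\{I_1, I_2\} \leq I_1 + I_2 = R(Q_{XY}, D_1, D_2)$, both codebooks have cardinality at most $\exp(nR(Q_{XY}, D_1, D_2) + O(\log n))$, so $\log|\calB \cup \calB((z^n)^*, y^n)| \leq \log(|\calB| + |\calB((z^n)^*, y^n)|) \leq nR(Q_{XY}, D_1, D_2) + c\log(n+1)$ after absorbing the polynomial prefactors from \eqref{number:types}--\eqref{prob:condseq} into the stated constant $c$.

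The main obstacle I anticipate is that the nearest-neighbor codeword $(z^n)^*$ need not lie in $\calT_{Q^*_{\hatX_1|XY}}(x^n, y^n)$ even though some other codeword in $\calB$ does, so the joint type $\hat{T}_{x^n y^n (z^n)^*}$ can differ from $Q^*_{XY\hatX_1}$ and the conditional cover above then targets the wrong joint type. To handle this I would take $\calB((z^n)^*, y^n)$ as the union, over the polynomially many joint types $Q^\flat_{XY\hatX_1} \in \calP_n(\calX \times \calY \times \hatcalX_1)$ with marginal $Q_{XY}$ and $\bbE_{Q^\flat}[d_1(X,\hatX_1)] \leq D_1$, of one conditional cover per type. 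A Taylor expansion of the conditional rate $I_{Q^\flat \times Q^*_{\hatX_2|XY\hatX_1}}(X;\hatX_2 \mid Y\hatX_1)$ around $Q^\flat = Q^*_{XY\hatX_1}$, in the spirit of \eqref{useprops}, shows that each piece costs rate at most $I_2 + O(n^{-1/2})$, and the polynomial number of pieces is absorbed into the $c\log(n+1)$ remainder.
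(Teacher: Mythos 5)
Your high-level plan — two-stage random coding following the decomposition $R(Q_{XY},D_1,D_2)=I(XY;\hatX_1)+I(X;\hatX_2|Y\hatX_1)$, then a union bound over the type class to extract a deterministic cover — is the standard strategy, and you have correctly identified the central technical obstacle: the nearest-neighbor codeword $(z^n)^*$ returned by the $\argmin$ rule need not lie in the target conditional type class $\calT_{Q^*_{\hatX_1|XY}}(x^n,y^n)$, so the joint type $\hat{T}_{x^ny^n(z^n)^*}$ can differ from $Q^*_{XY\hatX_1}$. That observation is exactly right and is the crux of the lemma. However, your proposed fix does not close the gap, for two reasons.

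First, the Taylor-expansion argument is not valid. The joint types $Q^\flat$ that can arise as $\hat{T}_{x^ny^n(z^n)^*}$ are constrained only to have $\calX\times\calY$-marginal $Q_{XY}$, a $\hatcalX_1$-marginal matching the codebook's composition, and $\bbE_{Q^\flat}[d_1]\le D_1$; none of these forces $\|Q^\flat-Q^*_{XY\hatX_1}\|=O(n^{-1/2})$ in the way that $\hatT_{X^n}\in\calA_n(P_X)$ forces closeness in \eqref{useprops}. The argmin can, for instance, select a codeword achieving distortion strictly less than $D_1$ whose joint type with the source is $\Theta(1)$ away from $Q^*_{XY\hatX_1}$, in which case your expansion yields a $\Theta(1)$ correction, not the claimed $O(n^{-1/2})$. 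Second, even the distortion guarantee breaks: you propose to build each conditional cover piece using the fixed channel $Q^*_{\hatX_2|XY\hatX_1}$, but for $Q^\flat\neq Q^*_{XY\hatX_1}$ the product $Q^\flat\times Q^*_{\hatX_2|XY\hatX_1}$ need not satisfy $\bbE[d_2(X,\hatX_2)]\le D_2$, so the codewords you draw do not cover within the required distortion. A correct argument must, for each feasible $Q^\flat$, (a) use a $Q^\flat$-specific second-stage test channel achieving $D_2$, and (b) establish a uniform bound $R_2(Q^\flat,D_2)\le I_2+o(1)$ over all feasible $Q^\flat$. Step (b) is the real work: by the definition of the Kaspi rate-distortion function one has $I_{Q^\flat}(XY;\hatX_1)+R_2(Q^\flat,D_2)\ge R(Q_{XY},D_1,D_2)=I_1+I_2$, so what is needed is effectively the reverse comparison $I_{Q^\flat}(XY;\hatX_1)\ge I_1$ on the constrained set of $Q^\flat$, and showing this requires a uniform-continuity/monotonicity argument for the conditional rate-distortion function of the kind the monograph invokes for the analogous Gray-Wyner covering lemma (Lemma \ref{achievable}), not a local Taylor expansion. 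Finally, a minor point: the conclusion you derive is the bound on the union cardinality $\log(|\calB|+|\calB((z^n)^*,y^n)|)$, which is what the lemma literally states, but the bound actually required by the coding scheme in Lemma \ref{uppexcess:kaspi} (and the one appearing in the parallel Fu-Yeung Lemma \ref{mddtypecovering}) is the product bound $\log(|\calB|\cdot|\calB((z^n)^*,y^n)|)\le nR(Q_{XY},D_1,D_2)+c\log(n+1)$; your construction would in principle yield the product bound, but only if the per-type rate argument in step (b) above were justified.
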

The proof of Lemma \ref{coveringkaspi} is similar to the proof of type covering lemmas for rate-distortion problem.

Invoking Lemma \ref{coveringkaspi}, we can upper bound the excess-distortion probability of an $(n,M)$-code. To do so, for any $(n,M)\in\bbN^2$, define
\begin{align}
R_n:=\frac{1}{n}\log M-(c+|\calX|\cdot|\calY|)\frac{\log (n+1)}{n}\label{defrn}.
\end{align}
\begin{lemma}
\label{uppexcess:kaspi}
There exists an $(n,M)$-code whose excess-distortion probability satisfies
\begin{align}
\rmP_{\rme,n}(D_1,D_2)&\leq \Pr\Big\{R_n<R(\hat{T}_{X^nY^n},D_1,D_2)\Big\}.
\end{align}
\end{lemma}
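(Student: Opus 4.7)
The plan is to construct a code with a two-stage encoding scheme and then upper bound its excess-distortion probability using Lemma \ref{coveringkaspi}. In the first stage, the encoder transmits the joint type $\hat{T}_{x^ny^n}$ of the source-side information pair, which requires at most $|\calX|\cdot|\calY|\log(n+1)$ nats by the polynomial type-counting bound \eqref{number:types}. In the second stage, provided the type is "good" in the sense that $R(\hat{T}_{x^ny^n},D_1,D_2)\leq R_n$, the encoder uses the codebooks guaranteed by Lemma \ref{coveringkaspi} tailored to the observed joint type $Q_{XY}=\hat{T}_{x^ny^n}$: it transmits the index of the codeword $(z^n)^* \in \calB$ that achieves the minimum $d_1$-distortion for decoder $\phi_1$, together with the index of the codeword $\hat{x}_2^n \in \calB((z^n)^*,y^n)$ achieving $d_2$-distortion at most $D_2$ for decoder $\phi_2$. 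If instead $R(\hat{T}_{x^ny^n},D_1,D_2)> R_n$, the encoder declares an error (e.g., emits a fixed dummy index).

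The key budget check is that the two-stage message fits into $[M]$. By Lemma \ref{coveringkaspi}, the second-stage information requires at most $nR(\hat{T}_{x^ny^n},D_1,D_2)+c\log(n+1)$ nats whenever the good event occurs, so the total number of distinct messages is bounded by
\begin{align}
(n+1)^{|\calX||\calY|}\cdot\exp\!\big(nR_n+c\log(n+1)\big)
&\leq \exp\!\big(nR_n+(c+|\calX||\calY|)\log(n+1)\big)\\
&= M,
\end{align}
where the last equality uses the definition of $R_n$ in \eqref{defrn}. Hence an $(n,M)$-code exists that realizes this scheme. At the decoder side, $\phi_1$ extracts $(z^n)^*$ and outputs it as $\hat{X}_1^n$; $\phi_2$ first recovers $(z^n)^*$ and then uses $y^n$ together with the second index to output the element of $\calB((z^n)^*,y^n)$ as $\hat{X}_2^n$. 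Lemma \ref{coveringkaspi} guarantees $d_1(x^n,\hat{X}_1^n)\leq D_1$ and $d_2(x^n,\hat{X}_2^n)\leq D_2$ whenever the good event holds.

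Consequently, the only source of excess distortion is the bad type event $\{R(\hat{T}_{X^nY^n},D_1,D_2)>R_n\}$, and the joint excess-distortion probability is bounded by
\begin{align}
\rmP_{\rme,n}(D_1,D_2)\leq \Pr\{R_n<R(\hat{T}_{X^nY^n},D_1,D_2)\},
\end{align}
as required. The minor subtlety to handle carefully is verifying that the bound in \eqref{newinapp} on $|\calB\cup\calB((z^n)^*,y^n)|$ indeed translates into a joint encoding cost of at most $nR(\hat{T}_{x^ny^n},D_1,D_2)+c\log(n+1)$ nats for specifying both the $\phi_1$-codeword and the $\phi_2$-codeword simultaneously; this requires unpacking the proof of Lemma \ref{coveringkaspi} so that the indices for $\calB$ and for $\calB((z^n)^*,y^n)$ can be concatenated into a single index within the claimed budget, which is the main (and essentially bookkeeping) obstacle. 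Once this is in place, combining the bound with an application of the Berry-Esseen theorem to $R(\hat{T}_{X^nY^n},D_1,D_2)$ via the Taylor expansion enabled by Lemma \ref{kaspifirstderive} will yield the second-order achievability statement in Theorem \ref{kaspisecond}.
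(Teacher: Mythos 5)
Your proposal takes essentially the same two-stage, type-based scheme as the paper: transmit the joint type (costing at most $|\calX||\calY|\log(n+1)$ nats), declare error if the type is ``bad,'' and otherwise use the codebooks from Lemma~\ref{coveringkaspi} to serve decoder $\phi_1$ and (jointly with $y^n$) decoder $\phi_2$; the only source of excess distortion is the bad-type event, which is exactly how the paper argues. The ``bookkeeping obstacle'' you flag is a genuine one worth validating: the natural message must carry both an index into $\calB$ and an index into $\calB((z^n)^*,y^n)$, so the budget argument requires control of $\log|\calB|+\log|\calB((z^n)^*,y^n)|$ (compare the analogous type covering lemma for the Fu-Yeung problem, Lemma~\ref{mddtypecovering}, which explicitly bounds the \emph{product} $|\calB|\cdot|\calB((z^n)^*)|$), whereas \eqref{newinapp} as stated bounds $\log|\calB\cup\calB((z^n)^*,y^n)|$, a strictly weaker quantity. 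The paper's proof implicitly uses the product-form bound; the union in \eqref{newinapp} should be read as the sum-of-logs/product, so the step you isolated is precisely the one that needs to be checked when unpacking the covering lemma's proof.
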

\begin{proof}
Consider the following coding scheme. Given source sequence pair $(x^n,y^n)$, the encoder first calculates the joint type $\hat{T}_{x^ny^n}$, which can be transmitted reliably using at most $|\calX|\cdot|\calY|\log (n+1)$ nats. Then the encoder calculates $R(\hat{T}_{x^ny^n},D_1,D_2)$ and declares an error if $nR(\hat{T}_{x^ny^n},D_1,D_2)+c\log (n+1)+|\calX|\cdot|\calY|\log (n+1)>\log M$. Otherwise, the encoder chooses a set $\calB$ satisfying the properties specified in Lemma \ref{coveringkaspi} and sends the index of $(z^n)^*=\argmin_{\hat{x}_1^n\in\calB} d_1(x^n,\hat{x}_1^n)$. Subsequently, the decoder chooses a set $\calB((z^n)^*,y^n)$ satisfying the properties specified in Lemma \ref{coveringkaspi} and sends the index of \\$\argmin_{\hat{x}_2^n\in\calB((z^n)^*,y^n)}d_2(x^n,\hat{x}_2^n)$. Lemma \ref{coveringkaspi} implies that the decoding is error free if $nR(\hat{T}_{x^ny^n},D_1,D_2)+c\log (n+1)+|\calX|\cdot|\calY|\log (n+1)\leq \log M$. The proof of Lemma \ref{uppexcess:kaspi} is now completed.
\end{proof}

Given any distribution $P_{XY}$ on the finite set $\calX\times\calY$, define the typical set
\begin{align}
\calA_n(P_{XY})
&:=\Bigg\{Q_{XY}\in\calP_n(\calX\times\calY):\|Q_{XY}-P_{XY}\|_{\infty}\leq \sqrt{\frac{\log n}{n}}\Bigg\}\label{def:calan}.
\end{align}
It follows from \cite[Lemma 22]{tan2014state} that
\begin{align}
\Pr\Big\{\hat{T}_{X^nY^n}\notin \calA_n(P_{XY})\Big\}&\leq \frac{2|\calX||\calY|}{n^2}\label{atypicalprob}.
\end{align}
If we choose
\begin{align}
\frac{1}{n}\log M
&=R(P_{XY},D_1,D_2)+\frac{L}{\sqrt{n}}+\Big(c+|\calX|\cdot|\calY|\Big)\frac{\log (n+1)}{n},
\end{align}
then
\begin{align}
R_n&=R(P_{XY},D_1,D_2)+\frac{L}{\sqrt{n}}.
\end{align}

For any $(x^n,y^n)$ such that $\hat{T}_{x^ny^n}\in\calA_n(P_{XY})$, since the mapping \\$Q_{XY}\to R(Q_{XY},D_1,D_2)$ is twice differentiable in the neighborhood of $P_{XY}$ and the derivative is bounded, applying Taylor expansion of $R(\hat{T}_{x^ny^n},D_1,D_2)$ around $\hat{T}_{x^ny^n}=P_{XY}$ and using Lemma \ref{kaspifirstderive}, we have
\begin{align}
R(\hat{T}_{x^ny^n},D_1,D_2)
&=\frac{1}{n}\sum_{i\in[n]}\jmath(x_i,y_i|D_1,D_2,P_{XY})+O\Big(\frac{\log n}{n}\Big)\label{taylorexpand}.
\end{align}

Define $\xi_n=\frac{\log n}{n}$. It follows from Lemma \ref{uppexcess:kaspi} that
\begin{align}
\nn&\rmP_{\rme,n}(D_1,D_2)\\*
\nn&\leq \Pr\Big\{R_n<R(\hat{T}_{X^nY^n},D_1,D_2),\hat{T}_{X^nY^n}\in\calA_n(P_{XY})\Big\}\\*
&\qquad+\Pr\Big\{\hat{T}_{X^nY^n}\notin\calA_n(P_{XY})\Big\}\\
\nn&\leq \Pr\Big\{R(P_{XY},D_1,D_2)+\frac{L}{\sqrt{n}}\\*
&\qquad\quad<\frac{1}{n}\sum_{i\in[n]} \jmath(X_i,Y_i|D_1,D_2,P_{XY})+O(\xi_n)\Big\}+\frac{2|\calX||\calY|}{n^2}\label{uselemma&atypical}\\
\nn&\leq \Pr\bigg\{\frac{1}{\sqrt{n}}\sum_{i\in[n]} \big(\jmath(X_i,Y_i|D_1,D_2,P_{XY})\\*
&\qquad\qquad-R(P_{XY},D_1,D_2)\big)>L+O(\xi_n\sqrt{n})\bigg\}+\frac{2|\calX||\calY|}{n^2}\\
&\leq \rmQ\Bigg(\frac{L+O(\xi_n\sqrt{n})}{\sqrt{\rmV(D_1,D_2|P_{XY})}}\Bigg)+\frac{6\rmT(D_1,D_2|P_{XY})}{\sqrt{n}\rmV^{3/2}(D_1,D_2|P_{XY})}+\frac{2|\calX||\calY|}{n^2}\label{achfinal},
\end{align}
where \eqref{uselemma&atypical} follows from the results in \eqref{atypicalprob} and Lemma \ref{taylorexpand} and \eqref{achfinal} follows from Berry-Esseen theorem, where $\rmT(D_1,D_2|P_{XY})$ is the third absolute moment of $\jmath(X,Y|D_1,D_2,P_{XY})$, which is finite for a DMS.

Therefore, if $L$ satisfies 
\begin{align}
L\geq \sqrt{\rmV(D_1,D_2|P_{XY})}\rmQ^{-1}(\varepsilon),
\end{align}
by noting that $O(\xi_n\sqrt{n})=O(\log n/\sqrt{n})$, it follows that
\begin{align}
\limsup_{n\to\infty} \rmP_{\rme,n}(D_1,D_2)\leq \varepsilon.
\end{align}
Thus, the optimal second-order coding rate satisfies 
\begin{align}
L^*(\varepsilon,D_1,D_2)\leq\sqrt{\rmV(D_1,D_2|P_{XY})}Q^{-1}(\varepsilon).
\end{align}

\subsection{Converse}
The converse part follows by applying the Berry-Esseen theorem to the non-asymptotic converse bound in Theorem \ref{fblconverse4kaspi}. Let 
\begin{align}
\log M:=nR(P_{XY},D_1,D_2)+L\sqrt{n}-\frac{1}{2}\log n.
\end{align}

Invoking \eqref{nshot} with $\varepsilon=\frac{\log n}{2n}$, we obtain 
\begin{align}
\nn&\rmP_{\rme,n}(D_1,D_2)+\frac{1}{\sqrt{n}}\\*
&\geq \Pr\Big\{\sum_{i\in[n]}\jmath(x,y|D_1,D_2,P_{XY})\geq nR(P_{XY},D_1,D_2)+L\sqrt{n}\Big)\\
&\geq \rmQ\bigg(\frac{L}{\sqrt{\rmV(D_1,D_2|P_{XY})}}\bigg)-\frac{6\rmT(D_1,D_2|P_{XY})}{\sqrt{n}\rmV^{3/2}(D_1,D_2|P_{XY})}\label{conversefinal},
\end{align}
where \eqref{conversefinal} follows from the Berry-Esseen theorem. If 
\begin{align}
L<\sqrt{\rmV(D_1,D_2|P_{XY})}Q^{-1}(\varepsilon),
\end{align}
then
\begin{align}
\limsup_{n\to\infty} \rmP_{\rme,n}(D_1,D_2)>\varepsilon.
\end{align}
The converse proof is thus completed.

\chapter{Successive Refinement}
\label{chap:sr}

In this chapter, we study the successive refinement problem with two encoders and two decoders, which generalizes the rate-distortion problem by introducing an additional pair of encoders and decoders. Based on the encoding process of the original encoder, the additional encoder further compresses the source sequence and the additional decoder uses compressed information from both encoders to produce a finer estimate of the source sequence than the first decoder that only accesses the original encoder. The optimal rate-distortion region for a DMS under bounded distortion measures was derived by Rimoldi in~\cite{rimoldi1994}, which collects rate pairs of encoders with vanishing joint excess-distortion probabilities.

Successive refinement is the first lossy source coding problem with multiple encoders studied in this monograph. The successive refinement problem is an information-theoretic formulation of whether it is possible to interrupt a transmission to provide a finer reconstruction of the source sequence without any loss of optimality for lossy compression. For such a problem, in order to derive the second-order asymptotics, we need to study the backoff of the encoders' rates from a boundary rate-point on the rate-distortion region, analogously to the study of the backoff of the encoder's rate from the rate-distortion function in second-order asymptotics for the rate-distortion problem. For a DMS under bounded distortion measures, we derive the optimal second-order coding region under a joint excess-distortion criterion (JEP)~\cite{zhou2016second}. We also recall the second-order asymptotics under the separate excess-distortion probabilities (SEP) criteria by No, Ingber and Weissman~\cite{no2016}. For successively refinable discrete memoryless source-distortion measure triplets~\cite{koshelev1981estimation,equitz1991successive}, under SEP, the second-order region is significantly simplified and the notion of successive refinability~\cite{equitz1991successive,koshelev1981estimation} is generalized to the second-order asymptotic regime under the SEP criterion. This chapter is largely based on \cite{zhou2016second,no2016}.

There are several new insights on the second-order coding region that we can glean when we consider the joint excess-distortion probability (cf.\ Section \ref{sec:examples}). For example, under the joint excess-distortion probability criterion, the second-order region is curved for successively refinable source-distortion triplets, which implies that if one second-order coding rate is small, the other is necessarily large. This reveals a fundamental tradeoff that cannot be observed if one adopts the separate excess-distortion probability criterion. Therefore, in subsequent chapters that involve more complicated multiterminal lossy source coding problems, we only consider the joint excess-distortion probability criterion that better captures the rate tradeoff of multiple encoders.

\section{Problem Formulation and Asymptotic Result}
\label{sec:pf}

\subsection{Problem Formulation}
The successive refinement source coding problem~\cite{rimoldi1994,equitz1991successive} is shown in Figure~\ref{systemmodel_sr}. There are two encoders and two decoders. Encoder $f_i,~i=1,2$ has access to a source sequence $X^n$ and compresses it into a message $S_i,~i=1,2$. Decoder $\phi_1$ aims to recover source sequence $X^n$ under distortion measure $d_1$ and distortion level $D_1$ with the encoded message $S_1$ from encoder $f_1$. The decoder $\phi_2$ aims to recover $X^n$ under distortion measure $d_2$ and distortion level $D_2$ with messages $S_1$ and $S_2$. 
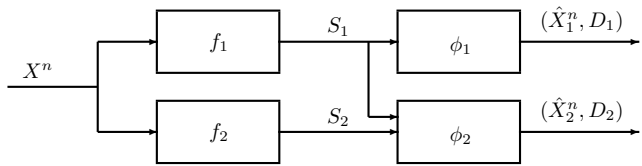
\begin{figure}[htbp]
\centering
\setlength{\unitlength}{0.5cm}
\scalebox{0.8}{
\begin{picture}(26,6)
\linethickness{1pt}
\put(1.5,3.8){\makebox{$X^n$}}
\put(6,1){\framebox(4,2)}
\put(6,4){\framebox(4,2)}
\put(7.7,1.8){\makebox{$f_2$}}
\put(7.7,4.8){\makebox{$f_1$}}
\put(1,3.5){\line(1,0){3}}
\put(4,5){\vector(1,0){2}}
\put(4,2){\line(0,1){3}}
\put(4,2){\vector(1,0){2}}
\put(14,1){\framebox(4,2)}
\put(14,4){\framebox(4,2)}

\put(15.7,4.7){\makebox{$\phi_1$}}
\put(15.7,1.7){\makebox{$\phi_2$}}
\put(10,2){\vector(1,0){4}}
\put(12,2.5){\makebox(0,0){$S_2$}}
\put(10,5){\vector(1,0){4}}
\put(12,5.5){\makebox(0,0){$S_1$}}
\put(13,5){\line(0,-1){2.5}}
\put(13,2.5){\vector(1,0){1}}
\put(18,2){\vector(1,0){4}}
\put(18.7,2.5){\makebox{$(\hatX_2^n,D_2$)}}
\put(18,5){\vector(1,0){4}}
\put(18.7,5.5){\makebox{$(\hatX_1^n,D_1)$}}
\end{picture}}
\caption{System model for the successive refinement problem~\cite{rimoldi1994}.}
\label{systemmodel_sr}
\end{figure}

We consider a memoryless source with distribution $P_X$ supported on a finite alphabet $\calX$. Thus, $X^n$ is an i.i.d. sequence where each $X_i$ is generated according to $P_X$. We assume the reproduction alphabets for decoder $\phi_1,\phi_2$ are respectively alphabets $\hatcalX_1$ and $\hatcalX_2$. We follow the definitions in~\cite{rimoldi1994} for codes and the achievable rate region.
\begin{definition}
\label{code:sr}
An $(n,M_1,M_2)$-code for successive refinement source coding consists of two encoders:
\begin{align}
f_1:\calX^n\to\calM_1=[M_1],\\
f_2:\calX^n\to\calM_2=[M_2],
\end{align}
and two decoders:
\begin{align}
\phi_1&:\calM_1\to \hatcalX_1^n,\\*
\phi_2&:\calM_1\times\calM_2\to \hatcalX_2^n.
\end{align}
\end{definition}

For each $i\in[2]$, define a distortion measure $d_i:\calX\times\hatcalX_i\to[0,\infty)$ and let the distortion between $x^n$ and $\hatx_i^n$ be defined as $d_i(x^n,\hatx_i^n):=\frac{1}{n}\sum_{i\in[n]}d_i(x_i,\hatx_i)$. Define the joint excess-distortion probability as
\begin{align}
\label{defexcessprob_sr}
\rmP_{\rme,n}(D_1,D_2)&:=\Pr\left\{d_1(X^n,\hatX_1^n)>D_1~\mathrm{or}~d_2(X^n,\hatX_2^n)> D_2\right\},
\end{align}
where $\hatX_1^n=\phi_1(f_1(X^n))$ and $\hatX_2^n = \phi_2(f_1(X^n), f_2(X^n))$ are the reconstructed sequences.
\begin{definition}
\label{deffirst_sr}
A rate pair $(R_1,R_2)$ is said to be $(D_1,D_2)$-achievable for the successive refinement source coding if there exists a sequence of $(n,M_1,M_2)$-codes such that
\begin{align}
\limsup_{n\to\infty}\frac{1}{n}\log M_1&\leq R_1,\\*
\limsup_{n\to\infty}\frac{1}{n}\log(M_1M_2)&\leq R_1+R_2\label{eqn:R2},
\end{align}
and
\begin{align}
\lim_{n\to\infty} \rmP_{\rme,n}(D_1,D_2)=0.
\end{align}
The closure of the set of all $(D_1,D_2)$-achievable rate pairs is called optimal  $(D_1,D_2)$-achievable rate region and denoted as $\calR(D_1,D_2|P_X)$.
\end{definition}
Note that in the original work by Rimoldi~\cite{rimoldi1994}, the rate $R_2$ corresponds to the {\em sum} rate $R_1+R_2$. In this monograph, to be consistent with other chapters, we use $R_2$ to denote the rate of message $S_2$ in Figure~\ref{systemmodel_sr}.

\subsection{Rimoldi's Rate-Distortion Region}
The optimal rate region for a DMS with arbitrary distortion measures was characterized in~\cite{rimoldi1994}. Let $\calP(P_X,D_1,D_2)$ be the set of joint distributions $P_{X\hatX_1\hatX_2}$ such that the $\calX$-marginal is $P_X$, $\mathbb{E}[d_1(X,\hatX_1)]\leq D_1$ and $\mathbb{E}[d_{2}(X,\hatX_2)]\leq D_2$. Given $P_{X\hatX_1\hatX_2}\in\calP(P_X,D_1,D_2)$, let
\begin{align}
\nn&\calR(P_{X\hatX_1\hatX_2})\\*
&:=\Big\{(R_1,R_2):R_1\geq I(X;\hatX_1),~R_1+R_2\geq I(X;\hatX_1,\hatX_2)\Big\}.
\end{align}
\begin{theorem}
\label{rimoldi}
The optimal $(D_1,D_2)$-achievable rate region for a DMS with arbitrary distortion measures under successive refinement source coding is
\begin{align}
\calR(D_1,D_2|P_X)=\bigcup_{P_{X\hatX_1\hatX_2}\in\calP(P_X,D_1,D_2)}\calR(P_{X\hatX_1\hatX_2})\label{eqn:opt_rr}.
\end{align}
\end{theorem}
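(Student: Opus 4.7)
The plan is to establish Theorem \ref{rimoldi} through the standard achievability/converse recipe of multiterminal rate-distortion theory, adapted to superposition coding. For the achievability, I will fix any $P_{X\hatX_1\hatX_2}\in\calP(P_X,D_1,D_2)$ and any rate pair $(R_1,R_2)$ strictly interior to $\calR(P_{X\hatX_1\hatX_2})$, and exhibit a sequence of codes whose joint excess-distortion probability vanishes. For the converse, I will argue that any sequence of codes with vanishing joint excess-distortion probability must satisfy the rate inequalities of $\calR(P_{X\hatX_1\hatX_2})$ for some $P_{X\hatX_1\hatX_2}\in\calP(P_X,D_1,D_2)$.

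For the achievability, I would use a superposition random coding scheme. First, generate $M_1=\lceil\exp(n(R_1-\delta))\rceil$ cloud centers $\hatX_1^n(m_1)$ i.i.d.\ from $P_{\hatX_1}^n$, where $P_{\hatX_1}$ is the $\hatcalX_1$-marginal of $P_{X\hatX_1\hatX_2}$. Then, for each $m_1\in[M_1]$, generate $M_2=\lceil\exp(n(R_2-\delta))\rceil$ satellite codewords $\hatX_2^n(m_1,m_2)$ i.i.d.\ from $P_{\hatX_2|\hatX_1}^n(\cdot|\hatX_1^n(m_1))$. Given $X^n$, encoder $f_1$ finds an index $m_1^*$ such that $(X^n,\hatX_1^n(m_1^*))$ is jointly typical with respect to $P_{X\hatX_1}$, and encoder $f_2$ then finds $m_2^*$ such that $(X^n,\hatX_1^n(m_1^*),\hatX_2^n(m_1^*,m_2^*))$ is jointly typical with respect to $P_{X\hatX_1\hatX_2}$. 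Standard covering lemma arguments (see e.g.\ the type-covering approach used in Chapter \ref{chap:rd}) ensure that both encoding steps succeed with probability tending to one provided $R_1>I(X;\hatX_1)$ and $R_1+R_2>I(X;\hatX_1,\hatX_2)$. Upon success, the typical average lemma guarantees that $d_i(X^n,\hatX_i^n)\leq D_i+\delta'$ for $i\in[2]$ with high probability, and since $D_i$ was chosen so that $\bbE[d_i(X,\hatX_i)]\leq D_i$ with strict slack by a small perturbation argument, this suffices.

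For the converse, consider any sequence of $(n,M_1,M_2)$-codes with vanishing joint excess-distortion probability. Using the boundedness of the distortion measures (or the assumption that yields it), vanishing $\rmP_{\rme,n}(D_1,D_2)$ implies $\limsup_n \bbE[d_i(X^n,\hatX_i^n)]\leq D_i$. The rate bounds proceed by the standard chain:
\begin{align}
nR_1\geq H(S_1)\geq I(X^n;S_1)\geq I(X^n;\hatX_1^n)=\sum_{i\in[n]}I(X_i;\hatX_{1,i}|X^{i-1})\geq \sum_{i\in[n]}I(X_i;\hatX_{1,i}),
\end{align}
where the last step uses that $X_i$ is independent of $X^{i-1}$, and the single-letter mutual information is taken under the induced marginal distribution of $(X_i,\hatX_{1,i})$. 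An analogous chain gives $n(R_1+R_2)\geq \sum_i I(X_i;\hatX_{1,i},\hatX_{2,i})$. Introducing a time-sharing random variable $Q$ uniform on $[n]$ independent of everything and setting $X=X_Q$, $\hatX_j=\hatX_{j,Q}$ for $j\in[2]$, the $\calX$-marginal of $(X,\hatX_1,\hatX_2)$ remains $P_X$ (by the i.i.d.\ source), $\bbE[d_i(X,\hatX_i)]=\bbE[d_i(X^n,\hatX_i^n)]\leq D_i+o(1)$, and by convexity of mutual information in the conditional distribution, the rate inequalities single-letterize to $R_1\geq I(X;\hatX_1)$ and $R_1+R_2\geq I(X;\hatX_1,\hatX_2)$ (up to $o(1)$ terms that vanish as $n\to\infty$). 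Taking a closure argument and invoking a continuity/cardinality bound on the optimizing auxiliary distribution yields a limit joint distribution in $\calP(P_X,D_1,D_2)$ achieving the rate pair.

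The main obstacle is ensuring the converse gives a \emph{single} joint distribution $P_{X\hatX_1\hatX_2}$ that simultaneously certifies both the rate bounds and the distortion constraints, rather than separate distributions for each rate inequality. This requires the time-sharing identification to be consistent across both distortion measures and both mutual informations, which is why $Q$ is introduced once and $(X,\hatX_1,\hatX_2)=(X_Q,\hatX_{1,Q},\hatX_{2,Q})$ is defined as a single triple. A secondary technical point is the passage from vanishing excess-distortion probability to an average distortion bound; this is handled (as in Chapter \ref{chap:rd}) by the boundedness of $d_i$ and writing $\bbE[d_i(X^n,\hatX_i^n)]\leq D_i+\bar d_i\,\rmP_{\rme,n}(D_1,D_2)$, where $\bar d_i:=\max_{x,\hatx_i}d_i(x,\hatx_i)$, and taking $n\to\infty$.
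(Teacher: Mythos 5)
The paper does not prove Theorem~\ref{rimoldi}; it cites Rimoldi~\cite{rimoldi1994} and states the region without proof, so your argument is evaluated on its own. Your converse direction is essentially correct, modulo one small slip: the chain rule gives $I(X^n;\hatX_1^n)=\sum_{i}I(X_i;\hatX_1^n\,|\,X^{i-1})$, not $\sum_{i}I(X_i;\hatX_{1,i}\,|\,X^{i-1})$, so the displayed ``$=$'' in that chain should become ``$\geq$'' once you drop to $\hatX_{1,i}$. The time-sharing construction, the convexity-based single-letterization, and the passage from vanishing excess-distortion probability to an average-distortion bound via $\bar d_i$ are all sound.

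Your achievability direction has a genuine gap. In the superposition scheme you describe --- $M_1\approx e^{nR_1}$ cloud centers drawn i.i.d.\ from $P_{\hatX_1}^n$, and for each cloud center $M_2\approx e^{nR_2}$ satellites drawn from $P_{\hatX_2|\hatX_1}^n$ --- the first covering step needs $R_1>I(X;\hatX_1)$, but the second covering step needs $R_2>I(X;\hatX_2|\hatX_1)$, because once $m_1^*$ is fixed there are only $M_2$ candidate satellites available. The condition you quote, ``$R_1+R_2>I(X;\hatX_1,\hatX_2)$'', is \emph{not} a sufficient condition for the second covering step to succeed with this construction. The scheme as written therefore achieves only the rectangle $\{R_1\ge I(X;\hatX_1),\ R_2\ge I(X;\hatX_2|\hatX_1)\}$, a strict subset of $\calR(P_{X\hatX_1\hatX_2})$ whenever $I(X;\hatX_2|\hatX_1)>0$, and unioning over $P_{X\hatX_1\hatX_2}$ does not repair this. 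A concrete counterexample: let $|\hatcalX_1|=1$ with $d_1\equiv 0$, so that $I(X;\hatX_1)\equiv 0$ and $I(X;\hatX_2|\hatX_1)\equiv I(X;\hatX_2)$ for every admissible joint distribution; then the corner $(R(P_X,D_2),0)$ lies on the boundary of $\calR(D_1,D_2|P_X)$ but belongs to none of the rectangles, since the Markov chain $X-\hatX_1-\hatX_2$ would force $X\perp\hatX_2$ and hence violate the $D_2$ constraint whenever $D_2<D_{\max}$. The missing idea is rate splitting: write $m_1=(m_{1a},m_{1b})$ with $m_{1a}$ of rate $\approx I(X;\hatX_1)$ selecting the cloud center, and regard the pair $(m_{1b},m_2)$, of joint rate $R_1-I(X;\hatX_1)+R_2$, as the index into the satellite codebook attached to that cloud. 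The second covering step then succeeds precisely when $R_1-I(X;\hatX_1)+R_2>I(X;\hatX_2|\hatX_1)$, i.e.\ $R_1+R_2>I(X;\hatX_1,\hatX_2)$, and both decoding constraints in Definition~\ref{code:sr} are satisfied since $\phi_1$ only needs $m_{1a}$. This is exactly the form delivered by the type covering lemma (Lemma~\ref{typecovering}) used later in the chapter, which bounds the \emph{total} satellite count $\sum_{\hatx_1^n}|\calB_2(\hatx_1^n)|$ by $\approx e^{n\rvR(R_1,D_1,D_2|Q_X)}$ rather than the per-cloud count.
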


Now we introduce an important quantity for subsequent analyses for a DMS. Given a rate $R_1$ and distortion pair $(D_1,D_2)$, let the minimal sum rate $R_1+R_2$ such that $(R_1,R_2)\in\calR(D_1,D_2|P_X)$ be $\rvR(R_1,D_1,D_2|P_X)$, i.e.,
\begin{align}
\label{minimumr2}
\rvR(R_1,D_1,D_2|P_X)
&:=\min\left\{R_1+R_2:(R_1,R_2)\in\calR(D_1,D_2|P_X)\right\}\\*
&=\inf_{\substack{P_{\hatX_1\hatX_2|X}:\mathbb{E}[d_1(X,\hatX_1)]\leq D_1\\ \mathbb{E}[d_2(X,\hatX_2)]\leq D_2, I(X;\hatX_1)\leq R_1}} I(X;\hatX_1,\hatX_2)\label{minr2},
\end{align}
where~\eqref{minr2} follows from~\cite[Corollary 1]{kanlis1996error}.

Let $R(P_X,D_1)$ and $R(P_X,D_2)$ be the rate-distortion functions~\cite[Chapter~3]{el2011network} (see also \eqref{def:rd}) when the reproduction alphabets are $\hatcalX_1$ and $\hatcalX_2$ respectively, i.e., for each $i\in[2]$,
\begin{align}
R(P_X,D_i) &:=\inf_{P_{\hatX_i|X}:\mathbb{E}[d_i(X,\hatX_i)]\leq D_i} I(X;\hatX_i).
\end{align}
Note that if $R_1<R(P_X,D_1)$, then the convex optimization in~\eqref{minr2} is infeasible. Otherwise, since $\rvR(R_1,D_1,D_2|P_X)$ is a convex optimization problem, the minimization in~\eqref{minr2} is attained for some test channel $P_{\hatX_1\hatX_2|X}$ satisfying  
\begin{align}
\sum_{x,y,z}P_X(x)P_{\hatX_1\hatX_2|X}(\hatx_1,\hatx_2|x)d_1(x,\hatx_1)&=D_1,\\
\sum_{x,y,z}P_X(x)P_{\hatX_1\hatX_2|X}(\hatx_1,\hatx_2|x)d_2(x,\hatx_2)&=D_2,\\
I(P_X,P_{\hatX_1|X})&=R_1.
\end{align}
Therefore, a rate pair $(R_1^*,R_2^*)$ lies on the boundary of the rate-distortion region $\calR(D_1,D_2|P_X)$ if and only if $R_1^*=R(P_X,D_1)$ or $R_1^*+R_2^*=\rvR(R_1^*,D_1,D_2|P_X)$.

\subsection{Successive Refinability}
Next we introduce the notion of a {\em successively refinable source-distortion measure triplet}~\cite{koshelev1981estimation,equitz1991successive}. We recall the definitions with a slight generalization in accordance to~\cite[Definition~2]{no2016}.

\begin{definition}
\label{prop:sr}
Given distortion measures $d_1,d_2$ and a source $X$ with distribution $P_X$, the source-distortion measure triplet $(X,d_1,d_2)$ is said to be $(D_1,D_2)$-successively refinable if the rate pair $(R(P_X,D_1),R(P_X,D_2))$ is $(D_1,D_2)$-achievable. If the source-distortion measure triplet is $(D_1,D_2)$-successively refinable for all $(D_1,D_2)$ such that $R(P_X,D_1)<R(P_X,D_2)$, then it  is said to be successively refinable.
\end{definition}
For a successively refinable source-distortion measure triplet, the minimal sum rate $R_1+R_2$ given $R_1$ in a certain interval is exactly the rate-distortion function (see~\eqref{srminr2} to follow). This reduces the computation of the optimal rate region in~\eqref{eqn:opt_rr}.

Koshelev~\cite{koshelev1981estimation} presented a sufficient condition for a source-distortion measure triplet to be successively refinable while Equitz and Cover~\cite[Theorem 2]{equitz1991successive} presented a  
necessary and sufficient condition which we reproduce below. 

\begin{theorem}
A memoryless source-distortion measure triplet is successively refinable if and only if there exists a conditional distribution $P_{\hatX_1\hatX_2|X}^*$ such that
\begin{align}
R(P_X,D_1)&=I(P_X,P_{\hatX_1|X}^*),~
\mathbb{E}_{P_X\times P_{\hatX_1|X}^*}[d_1(X,\hatX_1)]\leq D_1,\\
R(P_X,D_2)&=I(P_X,P_{\hatX_2|X}^*),~
\mathbb{E}_{P_X\times P_{\hatX_2|X}^*}[d_2(X,\hatX_2)]\leq D_2,
\end{align}
and
\begin{align}
P_{\hatX_1\hatX_2|X}^*=P_{\hatX_1|X}^*P_{\hatX_2|X}^*.
\end{align}
\end{theorem}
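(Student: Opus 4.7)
The plan is to exploit Rimoldi's single-letter characterization (Theorem \ref{rimoldi}) of the achievable region, so that the problem of successive refinability reduces to checking whether the corner rate pair $(R(P_X,D_1),R(P_X,D_2))$ lies in $\bigcup_{P\in\calP(P_X,D_1,D_2)}\calR(P_{X\hatX_1\hatX_2})$. For both directions, the key observation is that the existence of the conditional law $P_{\hatX_1\hatX_2|X}^{*}=P_{\hatX_1|X}^{*}P_{\hatX_2|X}^{*}$ is essentially the existence of two individual optimal test channels for the two rate-distortion functions, glued together by conditional independence; what must be verified is only that this product realizes the corner point in Rimoldi's region.

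For the sufficient direction, I would form $P_{X\hatX_1\hatX_2}^{*}:=P_X\times P_{\hatX_1|X}^{*}\times P_{\hatX_2|X}^{*}$. Membership $P_{X\hatX_1\hatX_2}^{*}\in\calP(P_X,D_1,D_2)$ is inherited from the two marginal distortion hypotheses, and the rate inequality $R_1\geq I(X;\hatX_1^{*})$ holds with equality since $P_{\hatX_1|X}^{*}$ is an optimal test channel at level $D_1$. For the sum-rate inequality I would chain
\begin{align*}
I(X;\hatX_1^{*},\hatX_2^{*}) &= H(\hatX_1^{*},\hatX_2^{*})-H(\hatX_1^{*},\hatX_2^{*}|X) \\
&\leq H(\hatX_1^{*})+H(\hatX_2^{*})-H(\hatX_1^{*}|X)-H(\hatX_2^{*}|X) \\
&= I(X;\hatX_1^{*})+I(X;\hatX_2^{*}) \;=\; R(P_X,D_1)+R(P_X,D_2),
\end{align*}
where the inequality uses subadditivity of joint entropy on $(\hatX_1^{*},\hatX_2^{*})$ and the preceding equality uses the factorization $P_{\hatX_1\hatX_2|X}^{*}=P_{\hatX_1|X}^{*}P_{\hatX_2|X}^{*}$.

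For the necessary direction, I would invoke Theorem \ref{rimoldi} to extract some $P_{X\hatX_1\hatX_2}\in\calP(P_X,D_1,D_2)$ with $I(X;\hatX_1)\leq R(P_X,D_1)$. Because the marginal $P_{X\hatX_1}$ already satisfies $\mathbb{E}[d_1(X,\hatX_1)]\leq D_1$, the converse part of the classical rate-distortion theorem supplies the reverse inequality, so $P_{\hatX_1|X}$ is itself an optimal test channel at level $D_1$. I would then set $P_{\hatX_1|X}^{*}:=P_{\hatX_1|X}$ and let $P_{\hatX_2|X}^{*}$ be any optimal test channel for $R(P_X,D_2)$, whose existence is guaranteed by the standard rate-distortion theory. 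The factorization $P_{\hatX_1|X}^{*}P_{\hatX_2|X}^{*}$ then meets all three requirements of the theorem.

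The most delicate issue, and the one I expect to demand the most care, is that one should \emph{not} try to read the conditional independence off Rimoldi's extracted joint law directly: the co-information identity
\begin{align*}
I(\hatX_1;\hatX_2|X)=I(\hatX_1;\hatX_2)-I(X;\hatX_1)-I(X;\hatX_2)+I(X;\hatX_1,\hatX_2),
\end{align*}
combined with the rate-bound inequalities available from the converse, only yields $I(\hatX_1;\hatX_2|X)\leq I(\hatX_1;\hatX_2)$, which is strictly weaker than the desired $I(\hatX_1;\hatX_2|X)=0$. The decoupling construction above sidesteps this gap because Rimoldi's region depends only on the single and pair marginals of $(X,\hatX_1,\hatX_2)$, so we are free to replace the full joint law by a conditionally independent one with the same two-dimensional marginals without leaving $\calP(P_X,D_1,D_2)$.
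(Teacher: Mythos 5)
Your proposal has two genuine gaps, and both point to the same underlying issue.

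In the sufficient direction, your chain terminates at $I(X;\hatX_1^{*},\hatX_2^{*})\leq R(P_X,D_1)+R(P_X,D_2)$, but the bound you actually need is $I(X;\hatX_1^{*},\hatX_2^{*})\leq R(P_X,D_2)$. Definition~\ref{prop:sr} must be read with Rimoldi's convention for the second coordinate; Eq.~\eqref{srminr2} makes this explicit by equating, for successively refinable triplets, the minimal \emph{sum} rate with $R(P_X,D_2)$, not with $R(P_X,D_1)+R(P_X,D_2)$. (Under the latter reading the theorem would be vacuous: separate coding of $\hatX_1$ and $\hatX_2$ attains that sum rate for every source.) Your bound overshoots by exactly $R(P_X,D_1)$, and that gap cannot be closed from the factorization you assume. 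Decomposing $I(X;\hatX_1,\hatX_2)=I(X;\hatX_2)+I(X;\hatX_1|\hatX_2)$, what is needed is $I(X;\hatX_1|\hatX_2)=0$, i.e., the Markov chain $X-\hatX_2-\hatX_1$. The conditional independence $\hatX_1\indep\hatX_2\mid X$ encoded by $P^{*}_{\hatX_1\hatX_2|X}=P^{*}_{\hatX_1|X}P^{*}_{\hatX_2|X}$ does not imply that Markov chain.

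In the necessary direction, the decoupling step relies on the claim that ``Rimoldi's region depends only on the single and pair marginals,'' so that replacing the extracted joint with a conditionally independent one with the same two-dimensional marginals is harmless. That claim is false. Membership in $\calP(P_X,D_1,D_2)$ indeed depends only on the $(X,\hatX_1)$ and $(X,\hatX_2)$ marginals, but the inner region $\calR(P_{X\hatX_1\hatX_2})$ does not: the sum-rate constraint $R_1+R_2\geq I(X;\hatX_1,\hatX_2)$ depends on the full trivariate joint through $H(\hatX_1,\hatX_2\mid X)$, which is sensitive to the joint conditional law $P_{\hatX_1\hatX_2|X}(\cdot,\cdot\mid x)$ rather than merely its two conditional marginals. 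After decoupling, $I(X;\hatX_1,\hatX_2)$ generically changes and the corner rate pair leaves the new inner region.

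Both gaps are symptoms of the fact that the operative Equitz--Cover condition is the Markov-chain factorization $P^{*}_{\hatX_1\hatX_2|X}=P^{*}_{\hatX_2|X}\,P^{*}_{\hatX_1|\hatX_2}$, not the conditional-independence factorization as printed. With the Markov chain, sufficiency is a one-line calculation: $I(X;\hatX_1,\hatX_2)=I(X;\hatX_2)=R(P_X,D_2)$. Necessity follows from tightness of data processing along a Rimoldi-achieving joint at the corner: if $I(X;\hatX_1)=R(P_X,D_1)$ and $I(X;\hatX_1,\hatX_2)\leq R(P_X,D_2)\leq I(X;\hatX_2)$ (the last from the distortion constraint and the rate-distortion converse), then $I(X;\hatX_1|\hatX_2)=0$. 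You should reorganize your proof around this Markov structure; the conditionally independent product of any two optimal test channels always exists, independent of whether the triplet is successively refinable, so conditional independence alone is neither sufficient nor a genuinely necessary diagnostic.
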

In~\cite{equitz1991successive}, it was shown that a DMS with Hamming distortion measures, a GMS with quadratic distortion measures, and a Laplacian source with absolute distortion measures are successively refinable. Note that in the original paper of Equitz and Cover~\cite{equitz1991successive}, the authors only considered the case where both decoders use the same distortion measure, i.e., $d_1=d_2=d$. Interestingly, as pointed out in~\cite[Theorem 4]{no2016}, the result still holds even when $d_1\neq d_2$. This can be verified easily for a DMS by invoking~\cite[Theorem 1]{rimoldi1994}.

\section{Rate-Distortions-Tilted Information Density}

Throughout the section, we assume that $R(P_X,D_1)\leq R_1^*<R(P_X,D_2)$ and $\calR(D_1,D_2|P_X)$ is smooth on a boundary rate pair $(R_1^*,R_2^*)$ of our interest, i.e.,
\begin{align}
\label{deflambda}
\xi^*&:=-\frac{\rvR(R,D_1,D_2|P_X)}{\partial R}\bigg|_{R=R_1^*}, 
\end{align}
is well-defined. Note that $\xi^*\geq 0$ since $\rvR(R_1,D_2,D_2)$ is convex and non-increasing in $R_1$. Further, for  a positive distortion pair $(D_1,D_2)$, define
\begin{align}
\nu_1^*&:=-\frac{\rvR(P_X,R_1,D,D_2)}{\partial D}\bigg|_{D=D_1}\label{defnu1},\\*
\nu_2^*&:=-\frac{\rvR(P_X,R_1,D_1,D)}{\partial D}\bigg|_{D=D_2}\label{defnu2}.
\end{align}
Note that for a successively refinable discrete memoryless source-distortion measure triplet, from~\eqref{srminr2}, we obtain $\xi^*=0$ and $\nu_1^*=0$. Let $P_{\hatX_1\hatX_2|X}^*$ be the optimal test channel achieving $\rvR(R_1,D_1,D_2|P_X)$ in~\eqref{minimumr2} (assuming it is unique)\footnote{If optimal test channels are not unique, then following the proof of \cite[Lemma 2]{watanabe2015second}, we can argue that the tilted information density is still well defined.}. Let $P_{X\hatX_1}^*,P_{X\hatX_2}^*,P^*_{\hatX_1\hatX_2}$, $P_{\hatX_1}^*$, and $P_{\hatX_1|X}^*$ be the induced (conditional) marginal distributions. We are now ready to define the tilted information density for successive refinement source coding problem.

Let $(R_1^*,R_2^*)$ be any boundary rate pair of the rate-distortion region $\calR(D_1,D_2|P_X)$. 
\begin{definition}
For any $x\in\calX$, the rate-distortions tilted information density for the successive refinement problem is defined as
\begin{align}
\label{srtilted}
\jmath(x,R_1^*,D_1,D_2|P_X)
\nn&:=-\log\mathbb{E}_{P_{\hatX_1\hatX_2}^*}\Bigg(\exp\bigg\{-\xi^*\bigg(\log\frac{P_{\hatX_1|X}^*(\hatX_1|x)}{P_{\hatX_1}^*(\hatX_1)}-R_1^*\bigg)\\*
&\qquad\qquad\qquad-\sum_{i\in[2]}\nu_i^*(d_i(x,\hatX_i)-D_i)\bigg\}\Bigg).
\end{align}
\end{definition}

The properties of $\jmath(x,R_1^*,D_1,D_2|P_X)$ are summarized in the following lemma.
\begin{lemma}
\label{propertytilted_sr}
The following claims hold.
\begin{enumerate}
\item For any $(\hatx_1,\hatx_2)$ such that $P_{\hatX_1\hatX_2}^*(\hatx_1,\hatx_2)>0$,
\begin{align}
 \jmath(x,R_1^*,D_1,D_2|P_X)
 &=\log\frac{P_{\hatX_1\hatX_2|X}^*(\hatx_1,\hatx_2|x)}{P_{\hatX_1\hatX_2}^*(\hatx_1,\hatx_2)}+\xi^*\left(\log\frac{P_{\hatX_1|X}^*(\hatx_1|x)}{P_{\hatX_1}^*(\hatx_1)} -R_1^*\right)\nn\\* 
 &\qquad-\nu_1^*(d_1(x,\hatx_1)-D_1)-\nu_2^*(d_2(x,\hatx_2)-D_2)\label{expand}. 
\end{align}
\item The minimal sum rate $\rvR(R_1^*,D_1,D_2|P_X)$ equals the expectation of the rate-distortions-tilted information density, i.e.,
\begin{align}
\rvR(R_1^*,D_1,D_2|P_X)=\mathbb{E}_{P_X}\left[\jmath(X,R_1^*,D_1,D_2|P_X)\right]\label{expectlemma}.
\end{align}
\item Suppose that for all $Q_X$ in the neighborhood of $P_X$, $\mathrm{supp}(Q^*_{\hatX_1\hatX_2})=\mathrm{supp}(P^*_{\hatX_1\hatX_2})$. Then for all $a\in\calX$,
\begin{align}
\frac{\partial \rvR(R_1^*,D_1,D_2|Q_X)}{\partial Q_X(a)}\bigg|_{Q_X=P_X}
&=\jmath(a,R_1^*,D_1,D_2|P_X)-(1+\xi^*).
\end{align}
\end{enumerate}
\end{lemma}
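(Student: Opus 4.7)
The plan is to mirror the approach used for the classical distortion-tilted information density (Lemma \ref{prop:dtilteddensity}), adapting it to handle the additional rate constraint $I(X;\hatX_1)\leq R_1^*$. The three claims all derive from a careful analysis of the convex program \eqref{minr2} defining $\rvR(R_1^*,D_1,D_2|P_X)$ together with its KKT conditions.

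First I would set up the Lagrangian of \eqref{minr2} with multipliers $\xi^*\geq 0$ for the rate constraint $I(X;\hatX_1)\leq R_1^*$ and $\nu_i^*\geq 0$ for the distortion constraints $\bbE[d_i(X,\hatX_i)]\leq D_i$. By the standard sensitivity interpretation of dual variables in convex programming, these coincide with the partial derivatives defined in \eqref{deflambda}, \eqref{defnu1}, and \eqref{defnu2}. Writing the KKT stationarity condition with respect to $P_{\hatX_1\hatX_2|X}^*(\hatx_1,\hatx_2|x)$ for $(\hatx_1,\hatx_2)\in\supp(P_{\hatX_1\hatX_2}^*)$ and solving for $\log P_{\hatX_1\hatX_2|X}^*/P_{\hatX_1\hatX_2}^*$ yields exactly \eqref{expand}; equivalently, marginalizing out $(\hatX_1,\hatX_2)$ against $P_{\hatX_1\hatX_2}^*$ in the exponential form yields the defining expression \eqref{srtilted}. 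This establishes claim (i).

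For claim (ii), I would take the expectation of \eqref{expand} under $P_{X\hatX_1\hatX_2}^*$. The first log-ratio gives $I(X;\hatX_1,\hatX_2)=\rvR(R_1^*,D_1,D_2|P_X)$. By complementary slackness, $\xi^*(I(X;\hatX_1)-R_1^*)=0$ and $\nu_i^*(\bbE[d_i(X,\hatX_i)]-D_i)=0$, so the three bracketed correction terms all vanish in expectation. This immediately yields \eqref{expectlemma}. Note that even when a constraint is inactive, the corresponding multiplier is zero and the term disappears, so the identity holds without further case analysis.

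Claim (iii) is the main obstacle. The strategy is an envelope-theorem argument: view $\rvR(R_1^*,D_1,D_2|Q_X)$ as the optimal value of a parametric convex program indexed by $Q_X$, and apply Danskin's theorem. The support assumption guarantees that, for $Q_X$ close to $P_X$, the optimal test channel $Q_{\hatX_1\hatX_2|X}^*$ stays in the relative interior of the feasible set and the Lagrange multipliers vary continuously, so only the explicit dependence on $Q_X(a)$ contributes to the derivative. Differentiating the mutual informations and distortion expectations through the implicit constraint $\sum_a Q_X(a)=1$ (which contributes the additive $-1$ from the entropy-like pieces and $-\xi^*$ from the $I(X;\hatX_1)$ piece) and invoking claim (i) together with the equality $\bbE_{P_{X|\hatX_1\hatX_2}^*}[d_i(X,\hatX_i)|\hatX_1,\hatX_2]$ evaluated at the active constraints, one collapses the expression to $\jmath(a,R_1^*,D_1,D_2|P_X)-(1+\xi^*)$. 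The hard part will be justifying the smooth dependence of $(Q_{\hatX_1\hatX_2|X}^*,\xi^*,\nu_1^*,\nu_2^*)$ on $Q_X$; the support condition together with strict convexity of the objective on the relevant face of the simplex provides the needed implicit-function machinery, in the spirit of \cite[Lemma 2]{watanabe2015second}.
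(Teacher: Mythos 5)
Your plan for claims (i) and (ii) is the standard one and is sound: write the Lagrangian of \eqref{minr2} with multipliers $\xi^*,\nu_1^*,\nu_2^*$, use the stationarity condition in $P_{\hatX_1\hatX_2|X}$ to get the exponential-family form of the optimal test channel, normalize over $(\hatx_1,\hatx_2)$ to identify the per-$x$ normalizer with $\jmath(x,R_1^*,D_1,D_2|P_X)$ (which gives (i)), then take the $P^*_{X\hatX_1\hatX_2}$-expectation and invoke complementary slackness so that the bracketed rate- and distortion-slack terms vanish (which gives (ii)). One thing you would catch if you actually carry out the stationarity step: it produces $+\nu_i^*(d_i(x,\hatx_i)-D_i)$ in \eqref{expand}, not $-\nu_i^*(\cdot)$ as printed. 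The printed sign is inconsistent with the defining expression \eqref{srtilted} (if you substitute (i) with the minus sign back into \eqref{srtilted}, a residual $\exp(-2\sum_i\nu_i^*(d_i-D_i))$ survives under the expectation and does not integrate to one), and it disagrees with the analogous rate-distortion formula in Lemma \ref{prop:dtilteddensity}, which has $+\lambda^*(d-D)$. Flagging this sign is part of a correct proof of (i).

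For claim (iii), the envelope/Danskin idea is right, but the ``collapse to $\jmath(a)-(1+\xi^*)$'' is asserted rather than obtained, and the mechanics as described would not give the stated constant. Applying the envelope theorem to $\nabla_{Q_X}\mathcal{L}$ and then substituting (i) actually yields
\[
\frac{\partial \rvR(R_1^*,D_1,D_2|Q_X)}{\partial Q_X(a)}\bigg|_{Q_X=P_X}
=\jmath(a,R_1^*,D_1,D_2|P_X)+\xi^*R_1^*+\nu_1^*D_1+\nu_2^*D_2-(1+\xi^*),
\]
i.e.\ the stated value plus the constant $\xi^*R_1^*+\nu_1^*D_1+\nu_2^*D_2$. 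That constant is harmless only because $Q_X$ ranges over the probability simplex, where the gradient is determined modulo constants (and the lemma is only ever used through differences $\sum_a(\hat T_{x^n}(a)-P_X(a))\jmath(a)$ as in \eqref{taylor_2}); your write-up should say this explicitly rather than claim the constant collapses away. Two secondary slips: the $-1$ and $-\xi^*$ do not come from the implicit constraint $\sum_a Q_X(a)=1$; they arise from differentiating the induced marginals inside the mutual informations, $\partial_{Q_X(a)}\bigl(-\sum_w P_W\log P_W\bigr)=-\bbE_{P^*_{W|X=a}}[\log P^*_W(W)]-1$. And the expectations that appear are $\bbE_{P^*_{\hatX_1\hatX_2|X=a}}[d_i(a,\hatX_i)]$, conditioning on $X=a$, not $\bbE_{P^*_{X|\hatX_1\hatX_2}}[\,\cdot\,|\hatX_1,\hatX_2]$. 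Finally, \cite[Lemma 2]{watanabe2015second} concerns well-definedness of the tilted density under nonuniqueness of the optimal channel, not the implicit-function argument you need; a closer analogue for the regularity you want is the differentiability hypothesis already imposed in the surrounding conditions (\ref{cond1_sr})--(\ref{cond3}).
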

Lemma \ref{propertytilted_sr} generalizes the properties of the distortion-tilted information density for the rate-distortion problem in Lemma \ref{prop:dtilteddensity}, which are also available in~\cite[Properties 1-3]{kostina2012converse} and~\cite[Theorems 2.1-2.2]{kostina2013lossy}.

For a successively refinable discrete memoryless source-distortion measure triplet, it follows from Definition \ref{prop:sr} that if $R(P_X,D_1)\leq R_1< R(P_X,D_2)$,
\begin{align}
\rvR(R_1,D_1,D_2|P_X)=R(P_X,D_2)\label{srminr2}.
\end{align}
In this case, $\xi^*=0$, $\nu_1^*=0$. The rate-distortions-tilted information density $\jmath(x,R_1^*,D_1,D_2|P_X)$ reduces to the distortion-tilted information density $\jmath(x,D_2|P_X)$ in \eqref{def:dtilteddensity} for the rate-distortion problem, where
\begin{align}
\label{kostinatilt}
\jmath(x,D|P_X)&=-\log \mathbb{E}_{P_{\hatX}^*}[\exp(-\lambda_1^*(d(x,\hatX)-D))],\\
\lambda^*&=-\frac{\partial R(P_X,D')}{\partial D'}\bigg|_{D'=D}.
\end{align}

\section{Second-Order Asymptotics}
\subsection{Definitions and Discussions}
Recall that for the rate-distortion problem with only one encoder, the second-order coding rate is defined as the backoff from the minimal achievable rate, i.e.,~the rate-distortion function $R(P_X,D)$ (cf. Definition \ref{def:sr4lossy}). Analogously, for a multiterminal lossy source coding problem such as successive refinement, in order to derive the second-order asymptotics, we need to study the backoff of the rates of encoders from a boundary point on the rate-distortion region, which is a minimal achievable rate pair and takes role of the rate-distortion function for the rate-distortion problem. 

Formally, let $(R_1^*, R_2^*)$ be a rate pair on the boundary of the rate-distortion region $\calR(D_1,D_2|P_X)$. The second-order coding region for the successive refinement problem is defined as follows.
\begin{definition}
\label{defsecond_sr}
Given any $\varepsilon\in(0,1)$, a pair $(L_1,L_2)$ is said to be second-order $(R_1^*,R_2^*,D_1,D_2,\varepsilon)$-achievable if there exists a sequence of $(n,M_1,M_2)$-codes such that
\begin{align}
\limsup_{n\to\infty}\frac{1}{\sqrt{n}}\left(\log M_1-nR_1^*\right)\leq L_1,\\
\limsup_{n\to\infty}\frac{1}{\sqrt{n}}\left(\log(M_1M_2)-n(R_1^*+R_2^*)\right)\leq L_2,
\end{align}
and
\begin{align}
\limsup_{n\to\infty}\rmP_{\rme,n}(D_1,D_2)\leq \varepsilon\label{sr:jep}.
\end{align}
The closure of the set of all second-order $(R_1^*,R_2^*,D_1,D_2,\varepsilon)$-achievable pairs is called the second-order $(R_1^*,R_2^*,D_1,D_2,\varepsilon)$ coding region and denoted as \\$\calL(R_1^*,R_2^*,D_1,D_2,\varepsilon)$.
\end{definition}

We emphasize that the JEP criterion~\eqref{sr:jep} is consistent with original setting of successive refinement in Rimoldi's work~\cite{rimoldi1994} and the error exponent analysis of Kanlis and Narayan~\cite{kanlis1996error}. In contrast, Tuncel and Rose~\cite{tuncel2003} considered the separate excess-distortion events and probabilities and derived the tradeoff between exponents of two excess-distortion probabilities. Note that the rate-distortion region remains the same~\cite{rimoldi1994,equitz1991successive} regardless whether we consider vanishing joint or the separate excess-distortion probabilities. In the study of second-order asymptotics, the second-order coding region can also be defined under the SEP criterion~\cite{no2016}. Specifically, the second-order coding region $\calL_{\mathrm{sep}}(R_1^*,R_2^*,D_1,D_2,\varepsilon_1,\varepsilon_2)$ is defined similarly to Definition \ref{defsecond_sr}, except that \eqref{sr:jep} is replaced by
\begin{align}
\limsup_{n\to\infty}\Pr\left\{d_1(X^n,\hatX_1^n)>D_1\right\}&\le\varepsilon_1,\label{eqn:eta1}\\*
\limsup_{n\to\infty}\Pr\left\{d_2(X^n,\hatX_2^n)>D_2\right\}&\le\varepsilon_2,\label{eqn:eta2}
\end{align}
for some fixed $(\varepsilon_1,\varepsilon_2)\in (0,1)^2$ and the boundary rate-pair $(R_1^*,R_2^*)$ is fixed as $R_1^*=R(P_X,D_1)$ and 
$R_1^*+R_2^*=\rvR(R(P_X,D),D_1,D_2|P_X)$, which corresponds to the case where both encoders respectively use their own optimal (i.e., minimum possible) asymptotic rates. 

The main content of this chapter is the characterization of \\$\calL(R_1^*,R_2^*,D_1,D_2,\varepsilon)$ and $\calL_{\mathrm{sep}}(R_1^*,R_2^*,D_1,D_2,\varepsilon_1,\varepsilon_2)$ for a DMS under bounded distortion measures, e.g., a binary source with Hamming distortion measures. We note that $\calL(R_1,R_2 ,D_1,D_2,\varepsilon)$ can, in principle, be evaluated for rate pairs that are not on the boundary of the first-order region  $\calR(D_1,D_2|P_X)$. However, this would lead to degenerate solutions.

We next explain some advantages of using the JEP criterion over the SEP criterion in second-order asymptotics.
\begin{enumerate}
\item The JEP criterion is consistent with recent works in the second-order literature~\cite{tan2014dispersions,le2015inter,watanabe2015second}. For example, in \cite{le2015inter}, Le, Tan and Motani established the second-order asymptotics for the Gaussian interference channel in the strictly very strong interference regime under the joint error probability criterion. If in \cite{le2015inter}, one adopts the separate error probabilities criterion, one would {\em not} be able to observe the performance tradeoff between the two decoders.
\item In Section \ref{sec:examples},  we show, via different proof techniques compared to existing works, that the second-order region is curved for successively refinable source-distortion triplets. This shows that if one second-order coding rate is small, the other is necessarily large. This reveals a fundamental tradeoff that cannot be observed if one adopts the separate excess-distortion probability criterion.
\end{enumerate}

\subsection{A General DMS}

Recall that $\Psi(x_1,x_2;\bmu,\mathbf{\Sigma})$ is the bivariate generalization of the Gaussian cdf. Given each $i\in[2]$, let $\mathrm{V}(D_i|P_X):=\mathrm{Var}[\jmath(X,D_i|P_X)]$ be the {\em rate-dispersion function}~(cf. \eqref{def:disdispersion}). Given a rate pair $(R_1^*,R_2^*)$ on the boundary of $\calR(D_1,D_2|P_X)$, also define another  rate-dispersion function $\mathrm{V}(R_1^*,D_1,D_2|P_X):=\mathrm{Var}\left[\jmath(X,R_1^*,D_1,D_2|P_X)\right]$. Let $\mathbf{V}(R_1^*,D_1,D_2|P_X) \succeq 0$ be the covariance matrix of the  two-dimensional random vector $[\jmath(X,D_1|P_X),\jmath(X,R_1^*,D_1,D_2|P_X)]^\top$, i.e., the {\em rate-dispersion matrix}.

We impose the following conditions on the rate pair $(R_1^*,R_2^*)$, the distortion measures $(d_1,d_2)$, the distortion levels $(D_1,D_2)$ and the source distribution $P_X$:
\begin{enumerate}
\item \label{cond1_sr} $\rvR(R_1^*,D_1,D_2|P_X)$ is finite;
\item $\xi^*\geq 0$ in~\eqref{deflambda} and $\nu_i^*,~i=1,2$ in \eqref{defnu1}, \eqref{defnu2} are well-defined;
\item $(Q_X,D_1')\mapsto R(Q_X,D_1')$ is twice differentiable in the neighborhood of $(P_X,D_1)$ and the derivatives are bounded (i.e., the spectral norm of the Hessian matrix is bounded);
\item \label{cond3} $(R_1,D_1',D_2',Q_{X})\mapsto \rvR(R_1,D_1',D_2'|Q_{X})$ is twice differentiable in the neighborhood of $(R_1^*,D_1,D_2,P_X)$ and the derivatives are bounded;
\end{enumerate}
Note that similar regularity assumptions were made on second-order asymptotics for the rate-distortion and Kaspi problems.

We first present the second-order asymptotics under the JEP criterion.
\begin{theorem}
\label{mainresult_sr}
Under conditions (\ref{cond1_sr}) to (\ref{cond3}), depending on the values of $(R_1^*,R_2^*)$, for any $\varepsilon\in(0,1)$, the second-order coding region satisfies:
\begin{itemize}
\item Case (i): $R(P_X,D_1)<R_1^*<\rvR(R_1^*,D_1,D_2|P_X)$ and $R_1^*+R_2^*=\rvR(R_1^*,D_1,D_2|P_X)$
\begin{align}
\nn&\calL(R_1^*,R_2^*,D_1,D_2,\varepsilon)\\*
&=\Big\{(L_1,L_2): \xi^*L_1+L_2\geq \sqrt{\mathrm{V}(R_1^*,D_1,D_2|P_X)}\mathrm{Q}^{-1}(\varepsilon)\Big\}.
\end{align}
\item Case (ii): $R_1^*=R(P_X,D_1)$ and $R_1^*+R_2^*>\rvR(R_1^*,D_1,D_2|P_X)$
\begin{align}
\calL(R_1^*,R_2^*,D_1,D_2,\varepsilon)&=\left\{(L_1,L_2):L_1 \geq \sqrt{\mathrm{V}(D_1|P_X)} \rm\rmQ^{-1}(\varepsilon)\right\}.
\end{align}
\item Case (iii): $R_1^*=R(P_X,D_1)$, $R_1^*+R_2^*=\rvR(R_1^*,D_1,D_2|P_X)$ \\and $\mathrm{rank}(\mathbf{V}(R_1^*,D_1,D_2|P_X))\ge 1$,
\begin{align}
\nn&\calL(R_1^*,R_2^*,D_1,D_2,\varepsilon)\\*
&=\Big\{(L_1,L_2):\Psi(L_1,\xi^*L_1+L_2;\mathbf{0},\mathbf{V}(R_1^*,D_1,D_2|P_X))\geq 1-\varepsilon\Big\}. \label{eqn:calLiii}
\end{align}
\end{itemize} 
\end{theorem}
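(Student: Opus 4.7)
The plan is to prove Theorem \ref{mainresult_sr} by mirroring the two-pronged approach used for the rate-distortion problem in Chapter \ref{chap:rd}: a type-covering based achievability and a tilted-information-density based converse, glued together by appropriate Berry-Esseen inequalities. The crucial new ingredient is that the ``rate-function'' we must linearize now depends on both the type $Q_X$ of the source and the first-encoder rate $R_1^*$, via $R(Q_X,D_1)$ and $\rvR(R_1^*,D_1,D_2|Q_X)$; this is handled by Claim (iii) of Lemma \ref{propertytilted_sr} together with Claim (iv) of Lemma \ref{prop:dtilteddensity}.

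For the achievability direction, I would first establish a two-layer type-covering lemma: for any type $Q_X\in\calP_n(\calX)$ there exist a first codebook $\calC_1\subset\hatcalX_1^n$ and, for each $\hatx_1^n\in\calC_1$, a refinement codebook $\calC_2(\hatx_1^n)\subset\hatcalX_2^n$, of sizes satisfying
\begin{align}
\tfrac{1}{n}\log|\calC_1|\le R(Q_X,D_1)+c_1\tfrac{\log n}{n}, \quad
\tfrac{1}{n}\log|\calC_1|+\tfrac{1}{n}\log\max_{\hatx_1^n}|\calC_2(\hatx_1^n)|\le \rvR(R(Q_X,D_1),D_1,D_2|Q_X)+c_2\tfrac{\log n}{n},
\end{align}
such that every $x^n\in\calT_{Q_X}^n$ can be jointly covered within $(D_1,D_2)$ by some pair $(\hatx_1^n,\hatx_2^n)\in\calC_1\times\calC_2(\hatx_1^n)$. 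Next I send the type (at cost $O(\log n)$ nats), declare an error whenever the required codebook sizes exceed $(M_1,M_2)$, and absorb atypical types via the polynomial tail bound \eqref{atypicalprob}. On the typical set $\calA_n(P_X)$, Taylor-expanding $R(Q_X,D_1)$ around $P_X$ using Claim (iv) of Lemma \ref{prop:dtilteddensity} and $\rvR(R_1^*,D_1,D_2|Q_X)$ around $P_X$ using Claim (iii) of Lemma \ref{propertytilted_sr} produces, up to $O(\log n/n)$ corrections,
\begin{align}
R(\hatT_{X^n},D_1)\approx\tfrac{1}{n}\sum_{i\in[n]}\jmath(X_i,D_1|P_X),\qquad
\rvR(R_1^*,D_1,D_2|\hatT_{X^n})\approx\tfrac{1}{n}\sum_{i\in[n]}\jmath(X_i,R_1^*,D_1,D_2|P_X).
\end{align}
Choosing $\log M_1=nR_1^*+L_1\sqrt{n}+O(\log n)$ and $\log(M_1M_2)=n(R_1^*+R_2^*)+L_2\sqrt{n}+O(\log n)$ and applying the Berry-Esseen theorem—Theorem \ref{berrytheorem} in Cases (i) and (ii), Theorem \ref{vectorBerry} in Case (iii)—yields the corresponding inner bounds. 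In Case (i) only the sum-rate constraint is binding and the relevant deviation is $\xi^*L_1+L_2$; in Case (ii) only the first-rate constraint is binding; Case (iii) is the generic one requiring the bivariate cdf $\Psi$ and the covariance matrix $\mathbf{V}(R_1^*,D_1,D_2|P_X)$.

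For the converse direction, I would first derive, in analogy with Theorem \ref{converse:fbl} and Theorem \ref{fblconverse4kaspi}, a one-shot lower bound on the JEP in terms of two tilted information densities. Introducing a uniform distribution on the message sets, test channels $Q_{\hatX_1^n}$ and $Q_{\hatX_1^n\hatX_2^n}$, invoking the non-negativity of $(\xi^*,\nu_1^*,\nu_2^*)$, and using a change of measure identical in spirit to \eqref{useidcon4jscc}--\eqref{usesumine}, I expect to obtain: for any $\gamma>0$ and any $(n,M_1,M_2)$-code,
\begin{align}
\rmP_{\rme,n}(D_1,D_2)\ge \Pr\!\left\{\sum_{i\in[n]}\jmath(X_i,D_1|P_X)\ge \log M_1+n\gamma,\ \sum_{i\in[n]}\jmath(X_i,R_1^*,D_1,D_2|P_X)\ge \log(M_1M_2)+n\gamma+\xi^*(\log M_1-nR_1^*)\right\}-2e^{-n\gamma},
\end{align}
where the cross-term $\xi^*(\log M_1-nR_1^*)$ originates from the weighting by $\xi^*$ in the definition \eqref{srtilted}. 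Picking $\gamma=(\log n)/n$ and applying Theorem \ref{vectorBerry} to the two normalized sums (whose covariance matrix is $\mathbf{V}(R_1^*,D_1,D_2|P_X)$) reproduces each of the three outer bounds; in Cases (i) and (ii) one of the events becomes trivial and the bivariate Berry-Esseen collapses to the scalar Theorem \ref{berrytheorem}.

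The main obstacle, as anticipated, is Case (iii): one must simultaneously control two correlated sums, prove that the type-covering construction is tight enough that both $\log M_1$ and $\log(M_1M_2)$ can be pushed to their asymptotic minima up to $O(\sqrt{n})$ backoffs, and verify that the ordering of the minima over $Q_X$ of $R(Q_X,D_1)$ and $\rvR(R_1^*,D_1,D_2|Q_X)$ is preserved under perturbations in the typical set $\calA_n(P_X)$—this is where condition (\ref{cond3}) on the twice-differentiability and bounded derivatives of $\rvR$ is essential. A subsidiary technical point is the subtle parametric dependence $\rvR(R_1^*,D_1,D_2|Q_X)$ on $R_1^*$ itself, which must be held fixed while $Q_X$ fluctuates; the rank assumption $\mathrm{rank}(\mathbf{V}(R_1^*,D_1,D_2|P_X))\ge 1$ ensures that the limiting Gaussian is non-degenerate so the bivariate cdf expression \eqref{eqn:calLiii} makes sense.
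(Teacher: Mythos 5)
Your achievability sketch matches the paper's essentially step for step: the two-layer type-covering lemma (Lemma \ref{typecovering}), the type--declare-error--cover scheme leading to Lemma \ref{uppexcess:sr}, the Taylor expansions of $R(Q_X,D_1)$ and $\rvR(R_1^*,D_1,D_2|Q_X)$ via Claim (iv) of Lemma \ref{prop:dtilteddensity} and Claim (iii) of Lemma \ref{propertytilted_sr}, and the scalar/vector Berry--Esseen split across Cases (i)--(iii). No issues there.

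Your converse takes the paper's \emph{alternative} route (the one-shot bound in the style of Lemma \ref{lemma:oneshot4sr}) rather than the paper's main proof, which is a type-based strong converse obtained through the Gu--Effros perturbation argument (Lemma \ref{mainresult_srtypestrongconverse_sr} and Lemma \ref{lbexcessp}). That alternative route is legitimate and the paper itself endorses it, and your $\xi^*(\log M_1-nR_1^*)$ offset, once the $-\xi^*R_1^*$ constant already built into \eqref{srtilted} is accounted for, does reproduce the correct $\xi^*L_1+L_2$ threshold up to $O(\log n)$. However, there is a genuine error in the form of the bound: the event inside your probability is an \emph{intersection} of the two tilted-density conditions (the comma reads as ``and''), whereas the bound must be over their \emph{union}, as in Lemma \ref{tiltedconverse} and Lemma \ref{lemma:oneshot4sr}. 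This is not a cosmetic point. In Case (ii), $R_1^*+R_2^*>\rvR(R_1^*,D_1,D_2|P_X)$ means the sum-rate threshold sits $\Omega(n)$ above the mean of $\sum_i\jmath(X_i,R_1^*,D_1,D_2|P_X)$, so that event has vanishing probability; with an intersection, $\Pr\{A\cap B\}\le\Pr\{B\}\to 0$ and the converse is vacuous, while with a union, $\Pr\{A\cup B\}\to\Pr\{A\}$ and the scalar Berry--Esseen recovers $L_1\ge\sqrt{\mathrm{V}(D_1|P_X)}\,\rmQ^{-1}(\varepsilon)$. A symmetric failure occurs in Case (i). In Case (iii) the intersection yields the ``survival'' $\Pr\{Z_1>L_1,Z_2>\xi^*L_1+L_2\}$ instead of $1-\Psi(L_1,\xi^*L_1+L_2;\mathbf{0},\mathbf{V})$, and these differ unless the correlation is degenerate, so the outer bound would not meet the inner one. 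The remedy is to derive the one-shot bound with a union of bad events (the standard argument bounds $\Pr\{\text{bad}\}\le P_{\rme,n}+\Pr\{\text{bad, no excess}\}$ for each bad event and sums the two small remainders, which is exactly why the correction term is $2\exp(-n\gamma)$), after which your remaining steps go through.
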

The proof of Theorem~\ref{mainresult_sr} is provided  in Section~\ref{secondproof}. In the achievability part, we leverage the type covering lemma~\cite[Lemma 8]{no2016}. In the converse part, we follow the perturbation approach proposed by Gu and Effros in their proof for the strong converse of Gray-Wyner problem~\cite{wei2009strong}, leading to a type-based strong converse. In the proofs of both directions, we leverage the properties of appropriately defined rate-distortions-tilted information densities and use the (multi-variate) Berry-Esseen theorem. An alternative converse proof of Theorem \ref{mainresult_sr} is possible by applying the Berry-Esseen theorem to the non-asymptotic converse bound in \cite[Corollary 2]{kostina2019sr} (see also Lemma \ref{lemma:oneshot4sr} from our analysis of the Fu-Yeung problem), analogously to the converse proof of second-order asymptotics for the rate-distortion and Kaspi problems. We omit the alternative converse proof of Theorem \ref{mainresult_sr}.

In both Cases (i) and (ii), the code is operating at a rate bounded away from one of the first-order fundamental limits. Hence, a univariate Gaussian suffices to characterize the second-order behavior. In contrast, for Case (iii), the code is operating at precisely the two first-order fundamental limits. Hence,  in general, we need a bivariate Gaussian to characterize the second-order behavior. Using an argument by Tan and Kosut~\cite[Theorem 6]{tan2014dispersions}, we note that  this result holds for both positive definite and rank deficient  rate-dispersion matrices $\mathbf{V}(R_1^*,D_1,D_2|P_X)$. However, we exclude the  degenerate case in which $\mathrm{rank}(\mathbf{V}(R_1^*,D_1,D_2|P_X))=0$. Note that if the rank of $\bV(R_1^*,D_1,D_2|P_X)$ is $0$, it means that the dispersion matrix is all zeros matrix, i.e., $\mathrm{Cov}[\jmath(X,D_1|P_X),\jmath(X,R_1^*,D_1,D_2|P_X)]=0$, $\rmV(D_1|P_X)=0$, and \\$\rmV(R_1^*,D_1,D_2|P_X)=0$. This implies that $\jmath(x,D_1|P_X)$ and $\jmath(x,R_1^*,D_1,D_2|P_X)$ are both deterministic. In this case, the second-order term (dispersion) vanishes, and if one seeks refined asymptotic estimates for the optimal finite blocklength  coding rates, one  would then be interested to analyze the {\em third-order} or $\Theta(\log n)$ asymptotics (cf.\ \cite[Theorem 18]{kostina2012fixed}).

We next present inner (achievability) and outer (converse) bounds on the second-order coding region under the SEP criterion. 
\begin{theorem}
\label{sep:sr}
Under conditions (\ref{cond1_sr}) to (\ref{cond3}), for any $(\varepsilon_1,\varepsilon_2)\in(0,1)^2$,  the second-order coding region $\calL_{\mathrm{sep}}(R_1^*,R_2^*,D_1,D_2,\varepsilon_1,\varepsilon_2)$ satisfies that when $R_1^*=R(P_X,D_1)$ and $R_1^*+R_2^*=\rmR(R(P_X,D_1),D_1,D_2|P_X)$,
\begin{align}
\nn&\Big\{(L_1,L_2):~L_1\geq \sqrt{\rmV(D_1|P_X)}\rmQ^{-1}(\min\{\varepsilon_1,\varepsilon_2\}),\\*
\nn&\qquad\qquad L_2\geq \sqrt{\rmV(R_1^*,D_1,D_2|P_X)}\mathrm{Q}^{-1}(\min\{\varepsilon_1,\varepsilon_2\})\Big\}\\*
\nn&\subseteq\calL_{\mathrm{sep}}(R_1^*,R_2^*,D_1,D_2,\varepsilon_1,\varepsilon_2)\\*
\nn&\subseteq\Big\{(L_1,L_2):~L_1\geq \sqrt{\rmV(D_1|P_X)}\rmQ^{-1}(\varepsilon_1),\\*
&\qquad\qquad\qquad L_2\geq \sqrt{\rmV(R_1^*,D_1,D_2|P_X)}\mathrm{Q}^{-1}(\varepsilon_2)\Big\}.
\end{align}
\end{theorem}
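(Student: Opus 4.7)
The proof of Theorem \ref{sep:sr} separates into an outer (converse) bound and an inner (achievability) bound, treated independently.

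For the outer bound, both univariate conditions follow from single-decoder converses. The bound $L_1 \ge \sqrt{\rmV(D_1|P_X)}\rmQ^{-1}(\varepsilon_1)$ is immediate: viewed in isolation, the pair $(f_1,\phi_1)$ is a rate-distortion code with $M_1$ codewords whose excess-distortion probability at $D_1$ is at most $\varepsilon_1$, so the converse direction of Theorem \ref{second4rd} applies directly. The bound $L_2 \ge \sqrt{\rmV(R_1^*,D_1,D_2|P_X)}\rmQ^{-1}(\varepsilon_2)$ requires a successive refinement non-asymptotic converse, which I would derive by a change-of-measure argument in the spirit of Theorem \ref{fblconverse4kaspi}, replacing the Kaspi tilted density with $\jmath(x,R_1^*,D_1,D_2|P_X)$ and the key exponential inequality of Lemma \ref{nule1} with its successive refinement analog. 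That analog is essentially read off from the KKT characterization encoded in Lemma \ref{propertytilted_sr} and yields a bound on $\log(M_1 M_2)$ of the same form as \eqref{nshot}. Berry-Esseen applied to the i.i.d.\ sum $\sum_{i\in[n]}\jmath(X_i,R_1^*,D_1,D_2|P_X)$, with mean $n(R_1^*+R_2^*)$ by Claim (ii) of Lemma \ref{propertytilted_sr} and variance $n\rmV(R_1^*,D_1,D_2|P_X)$, at excess-distortion level $\varepsilon_2$ then delivers the desired bound.

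For the inner bound, the quickest route is to invoke Case (iii) of Theorem \ref{mainresult_sr} with the common JEP parameter $\varepsilon:=\min\{\varepsilon_1,\varepsilon_2\}$, since $\rmP_{\rme,n}(D_1,D_2)\le\min\{\varepsilon_1,\varepsilon_2\}$ immediately implies both SEP constraints \eqref{eqn:eta1}--\eqref{eqn:eta2}. To recover the stated product form rather than Case (iii)'s bivariate form, I would unpack the type-covering proof of Theorem \ref{mainresult_sr}: with rates $\log M_1 = nR_1^* + L_1\sqrt{n} + O(\log n)$ and $\log(M_1 M_2) = n(R_1^*+R_2^*) + L_2\sqrt{n} + O(\log n)$, the successive refinement type-covering lemma bounds each marginal excess-distortion probability by the probability that the corresponding empirical rate quantity exceeds the allocated rate. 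Taylor expanding $R(\hat{T}_{X^n},D_1)$ via Claim (iv) of Lemma \ref{prop:dtilteddensity} and $\rmR(\log M_1/n, D_1, D_2|\hat{T}_{X^n})$ via Claim (iii) of Lemma \ref{propertytilted_sr}, then invoking Berry-Esseen on the resulting i.i.d.\ sums of the two tilted densities, shows each marginal excess-distortion probability to be at most $\min\{\varepsilon_1,\varepsilon_2\}$ whenever the corresponding $L_i$-condition holds, thereby establishing SEP at levels $(\varepsilon_1,\varepsilon_2)$.

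Two technical steps will be the main obstacles. First, the Taylor expansion of $\rmR(\log M_1/n, D_1, D_2|\hat{T}_{X^n})$ produces not only the type-derivative term absorbed into $\jmath(x,R_1^*,D_1,D_2|P_X)$ but also a slope correction $-\xi^* L_1/\sqrt{n}$ from differentiating in the rate argument; careful bookkeeping is needed to show that at the marginal saturation $L_1=\sqrt{\rmV(D_1|P_X)}\rmQ^{-1}(\min\{\varepsilon_1,\varepsilon_2\})$ this correction shifts the relevant threshold to $\xi^* L_1 + L_2$, which is what forces the inner bound to use the common parameter $\min\{\varepsilon_1,\varepsilon_2\}$ rather than per-decoder $\varepsilon_i$ and accounts for the gap between inner and outer bounds. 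Second, the $L_2$ converse cannot be obtained by simply applying the rate-distortion converse to the combined message of rate $\log(M_1 M_2)$, because decoder $\phi_2$'s codebook is nested within that of $\phi_1$; the change-of-measure must therefore couple the auxiliary reference distributions $Q_{\hatX_1^n}$ and $Q_{\hatX_2^n|\hatX_1^n}$ consistently with the successive refinement structure, and it is precisely at this step that the successive refinement analog of Lemma \ref{nule1} is essential.
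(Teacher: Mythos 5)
Your high-level route tracks the paper's: the paper credits the inner bound to No, Ingber and Weissman's type-covering argument (equivalently, the Case~(iii) machinery with JEP parameter $\min\{\varepsilon_1,\varepsilon_2\}$) and the outer bound to the Berry--Esseen theorem applied to Kostina and Tuncel's non-asymptotic converse. You identify both ingredients correctly. But there are two substantive gaps.

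The main one is in your inner bound. You assert that ``the successive refinement type-covering lemma bounds each marginal excess-distortion probability by the probability that the corresponding empirical rate quantity exceeds the allocated rate,'' and then run two separate univariate Berry--Esseen estimates. This is exactly the error the paper flags in the remark following the theorem. Because decoder $\phi_2$'s covering set $\calB_2(\hatx_1^n)$ is constructed \emph{subordinate} to decoder $\phi_1$'s codeword $\hatx_1^n$ (and $\rvR(R_1,D_1,D_2|Q_X)=\infty$ whenever $R_1<R(Q_X,D_1)$), the event $\{d_2(X^n,\hatX_2^n)>D_2\}$ is bounded by the \emph{union} of the two threshold events, not by the single $\rvR$-threshold event. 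So $\Pr\{d_2>D_2\}$ obeys a bivariate (or inclusion--exclusion) bound, which evaluates to roughly $1-\Psi(L_1,\xi^*L_1+L_2;\bzero,\bV)$, and this exceeds $\min\{\varepsilon_1,\varepsilon_2\}$ in general (it can be as large as $2\min\{\varepsilon_1,\varepsilon_2\}$). The use of $\min\{\varepsilon_1,\varepsilon_2\}$ in the inner bound is thus not, as you say, an artifact of the Taylor slope correction $-\xi^* L_1/\sqrt{n}$; it reflects that one cannot decouple $\Pr\{d_2>D_2\}$ from decoder $\phi_1$'s success event, and this is precisely why No--Ingber--Weissman's original claim of achieving the outer rectangle at levels $(\varepsilon_1,\varepsilon_2)$ fails.

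A second, smaller gap is in your outer bound for $L_2$. You propose a Kaspi-style change of measure that ``yields a bound on $\log(M_1 M_2)$ of the same form as \eqref{nshot}.'' But \eqref{nshot}, and its successive refinement analog in Lemma~\ref{lemma:oneshot4sr}, lower-bound the \emph{joint} excess-distortion probability $\rmP^{\rm SR}_{\rme,n}(D_1,D_2)$. The theorem's constraint carries $\rmQ^{-1}(\varepsilon_2)$, not $\rmQ^{-1}(\varepsilon_1+\varepsilon_2)$, so the needed converse must lower-bound $\Pr\{d_2(X^n,\hatX_2^n)>D_2\}$ \emph{alone}. Tracking the Markov-inequality step, the exponential factor coming from the tilted density $\jmath(\cdot,R_1^*,D_1,D_2|P_X)$ contains a $\nu_1^*(d_1(x,\hatx_1)-D_1)$ term, which has the correct sign only if one intersects with $\{d_1\le D_1\}$; dropping that intersection forces either $\nu_1^*=0$ (successively refinable case) or a genuinely different tilting. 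This is the content of the Kostina--Tuncel converse that the paper cites, and your sketch does not explain how the change of measure is adapted to give a purely marginal bound.
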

The achievability proof of Theorem \ref{sep:sr} was proved by No, Ingber and Weissman using the type covering lemma for the successive refinement problem~\cite[Section V]{no2016} and the converse part follows by applying the Berry-Esseen theorem to the non-asymptotic converse bound by Kostina and Tuncel~\cite[Theorem 3]{kostina2019sr}. The inner bound could also be obtained similarly to the proof Case (iii) of Theorem \ref{mainresult_sr} with $\varepsilon$ replaced by $\min\{\varepsilon_1,\varepsilon_2\}$. 

The inner and outer bounds match when $\varepsilon_1=\varepsilon_2$. It was claimed by No, Ingber and Weissman~\cite{no2016} that the outer bound was achievable for any $(\varepsilon_1,\varepsilon_2)\in(0,1)^2$. However, a careful check suggests that it is impossible. This is because, in order not to incur an excess-distortion event at decoder $\phi_2$ for a sequence $x^n$, decoder $\phi_1$ should not incur an excess-distortion constraint since otherwise, the ``correct'' decoding of decoder $\phi_2$ is not guaranteed.

\subsection{A Successively Refinable DMS}
\label{srsource}
In this subsection, we specialize the results in Theorem~\ref{mainresult_sr} to successively refinable discrete memoryless source-distortion measure triplets. Note that for such source-distortion measure triplets, $\rvR(R_1^*,D_1,D_2|P_X)=R(P_X,D_2)$ if $R(P_X,D_1)\leq R_1^*<R(P_X,D_2)$. Hence, $\xi^*=0$ and $\nu^*_1=0$ and $\jmath(X,R_1^*,D_1,D_2|P_X)=\jmath(X,D_2|P_X)$. The covariance matrix $\mathbf{V}(R_1^*,D_1,D_2|P_X)$ is also simplified to $\mathbf{V}(D_1,D_2|P_X)$ with diagonal elements being $\mathrm{V}(D_1|P_X)$ and $\mathrm{V}(D_2|P_X)$ and off-diagonal element being the covariance $\mathrm{Cov}[\jmath(X,D_1|P_X),\jmath(X,D_2|P_X)]$. The conditions in Theorem~\ref{mainresult_sr} are also now simplified to: $(Q_X,D_1')\mapsto R(Q_X,D_1')$ and $(Q_X,D_2')\mapsto R(Q_X,D_2')$  are twice differentiable in the neighborhood of $(P_X,D_1,D_2)$ and the derivatives are bounded.
\begin{corollary}
\label{srmainresult_sr}
Under the conditions stated above, depending on $(R_1^*,R_2^*)$, the optimal second-order $(R_1^*,R_2^*,D_1,D_2,\varepsilon)$ coding region for a successively refinable discrete memoryless source-distortion measure triplet is as follows:
\begin{itemize}
\item Case (i): $R(P_X,D_1)<R_1^*<R(P_X,D_2)$ and $R_1^*+R_2^*=R(P_X,D_2)$
\begin{align}
\calL(R_1^*,R_2^*,D_1,D_2,\varepsilon)
&=\left\{(L_1,L_2):L_2 \geq \sqrt{\mathrm{V}(D_2|P_X)} \rm\rmQ^{-1}(\varepsilon)\right\}.
\end{align}
\item Case (ii): $R_1^*=R(P_X,D_2)$ and $R_1^*+R_2^*>R(P_X,D_2)$
\begin{align}
\calL(R_1^*,R_2^*,D_1,D_2,\varepsilon)&=\left\{(L_1,L_2):L_1 \geq \sqrt{\mathrm{V}(D_1|P_X)} \rm\rmQ^{-1}(\varepsilon)\right\}.
\end{align}
\item Case (iii): $R_1^*=R(P_X,D_2)$ and $R_1^*+R_2^*=R(P_X,D_2)$ \\and $\mathrm{rank}(\mathbf{V}(D_1,D_2|P_X))\geq 1$,
\begin{align}
\calL(R_1^*,R_2^*,D_1,D_2,\varepsilon)&=\Big\{(L_1,L_2):\Psi(L_1,L_2;\mathbf{0},\mathbf{V}(D_1,D_2|P_X))\geq 1-\varepsilon\Big\}.\label{eqn:pd}
\end{align}
Specifically, if~$\mathbf{V}(D_1,D_2|P_X)=\mathrm{V}(D_1|P_X)\cdot \mathrm{ones}(2,2)$, or equivalently $\jmath(X,D_1|P_X)-R_1^*=\jmath(X,D_2|P_X)-R_2^*$ almost surely,
\begin{align}
\calL(R_1^*,R_2^*,D_1,D_2,\varepsilon)&=\Big\{(L_1,L_2):\min\{L_1,L_2\} \geq \sqrt{\mathrm{V}(D_1|P_X)} \rm\rmQ^{-1}(\varepsilon)\Big\}\label{eqn:ones}.
\end{align}
\end{itemize} 
\end{corollary}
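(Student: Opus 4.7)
The plan is to derive Corollary~\ref{srmainresult_sr} as a direct specialization of Theorem~\ref{mainresult_sr} by exploiting the simplifications that arise for successively refinable source-distortion measure triplets. The key observation is \eqref{srminr2}: for such triplets, whenever $R(P_X,D_1)\le R_1^*<R(P_X,D_2)$, we have $\rvR(R_1^*,D_1,D_2|P_X)=R(P_X,D_2)$, and this function is constant in $R_1^*$ over the relevant interval. Consequently, from \eqref{deflambda} and \eqref{defnu1}, the dual parameters satisfy $\xi^*=0$ and $\nu_1^*=0$. Substituting these into \eqref{srtilted} collapses the rate-distortions-tilted information density to the single-letter distortion-tilted information density of the finer decoder: $\jmath(X,R_1^*,D_1,D_2|P_X)=\jmath(X,D_2|P_X)$. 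Correspondingly, $\rmV(R_1^*,D_1,D_2|P_X)=\rmV(D_2|P_X)$, and the rate-dispersion matrix $\bV(R_1^*,D_1,D_2|P_X)$ simplifies to $\bV(D_1,D_2|P_X)$ whose entries are the variances and covariance of $\jmath(X,D_1|P_X)$ and $\jmath(X,D_2|P_X)$.

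With these reductions in hand, I would verify each case by direct substitution into Theorem~\ref{mainresult_sr}. For Case (i), the inequality $\xi^*L_1+L_2\ge\sqrt{\rmV(R_1^*,D_1,D_2|P_X)}\rmQ^{-1}(\varepsilon)$ collapses with $\xi^*=0$ to $L_2\ge\sqrt{\rmV(D_2|P_X)}\rmQ^{-1}(\varepsilon)$, noting that the condition $R(P_X,D_1)<R_1^*<\rvR(R_1^*,D_1,D_2|P_X)$ becomes $R(P_X,D_1)<R_1^*<R(P_X,D_2)$. For Case (ii), the statement is identical in form to Theorem~\ref{mainresult_sr} Case (ii); I only need to check that the sum-rate condition $R_1^*+R_2^*>\rvR(R_1^*,D_1,D_2|P_X)$ is equivalent to $R_1^*+R_2^*>R(P_X,D_2)$. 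For Case (iii), the general bivariate region \eqref{eqn:calLiii} becomes $\Psi(L_1,L_2;\mathbf{0},\bV(D_1,D_2|P_X))\ge 1-\varepsilon$ since $\xi^*=0$ makes the second coordinate $\xi^*L_1+L_2$ reduce to $L_2$.

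The main obstacle, and the only part that requires a genuine argument beyond substitution, is the degenerate subcase of Case (iii) where $\bV(D_1,D_2|P_X)=\rmV(D_1|P_X)\cdot\mathrm{ones}(2,2)$, or equivalently when $\jmath(X,D_1|P_X)-R_1^*=\jmath(X,D_2|P_X)-R_2^*$ almost surely. Here the off-diagonal correlation is unity, so the bivariate Gaussian concentrates on the diagonal line $Z_1=Z_2$. I would handle this by letting $(Z_1,Z_2)$ be zero-mean jointly Gaussian with covariance $\bV(D_1,D_2|P_X)$, observing that $Z_1=Z_2$ a.s., and therefore
\begin{align}
\Psi(L_1,L_2;\mathbf{0},\bV(D_1,D_2|P_X))
=\Pr\{Z_1\le L_1,\,Z_2\le L_2\}
=\Pr\{Z_1\le \min(L_1,L_2)\}.
\end{align}
Since $Z_1\sim\calN(0,\rmV(D_1|P_X))$, this equals $1-\rmQ\!\left(\min(L_1,L_2)/\sqrt{\rmV(D_1|P_X)}\right)$, and the constraint $\ge 1-\varepsilon$ rearranges to \eqref{eqn:ones}. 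The equivalence between the matrix condition $\bV(D_1,D_2|P_X)=\rmV(D_1|P_X)\cdot\mathrm{ones}(2,2)$ and the almost-sure equality of the centered tilted information densities is a short calculation: equality of diagonal entries forces $\rmV(D_1|P_X)=\rmV(D_2|P_X)$, and the Cauchy--Schwarz equality case applied to the off-diagonal entry forces a deterministic affine relationship that, combined with the equal variances and zero means of the centered quantities, yields the claimed almost-sure identity.

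Throughout, I would emphasize that no new non-asymptotic bounds or Berry--Esseen arguments are required beyond those already invoked in the proof of Theorem~\ref{mainresult_sr}; the corollary is purely a consequence of the structural simplifications afforded by successive refinability, and its main conceptual content is the explicit identification of the rank-one degenerate subcase and its interpretation via the almost-sure coincidence of the two centered tilted information densities.
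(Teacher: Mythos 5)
Your proposal is correct, and the overall strategy — substitute $\xi^*=0$, $\nu_1^*=0$, $\jmath(x,R_1^*,D_1,D_2|P_X)=\jmath(x,D_2|P_X)$ and read Cases (i)--(iii) directly off Theorem~\ref{mainresult_sr} — is exactly what the paper does; the only part requiring independent work is the rank-one subcase \eqref{eqn:ones}, and there your argument differs slightly in where the degeneracy is exploited.

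The paper's proof (Section~\ref{proofsrmain}) returns to the pre-limit bounds \eqref{eqn:dmsach}/\eqref{eqn:dmscon} and uses the almost-sure identity $\jmath(X_i,D_1|P_X)-R_1^*=\jmath(X_i,D_2|P_X)-(R_1^*+R_2^*)$ to collapse the two-event probability into a single-event one \emph{before} any central-limit argument, so only a univariate Berry--Esseen step is ever needed. Your argument instead stays at the level of the asymptotic expression: you take the degenerate bivariate Gaussian $(Z_1,Z_2)$ with covariance $\rmV(D_1|P_X)\cdot\mathrm{ones}(2,2)$, observe $Z_1=Z_2$ a.s., and reduce $\Psi(L_1,L_2;\mathbf{0},\bV)$ to $\Pr\{Z_1\le\min(L_1,L_2)\}$. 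Both are valid; the paper's route sidesteps having to reason about the bivariate cdf at a rank-deficient covariance (it applies the univariate Berry--Esseen theorem directly), whereas your route is cleaner conceptually but implicitly relies on the rank-$1$ reading of $\Psi$ already justified in Case~(iii) of Theorem~\ref{mainresult_sr} via the projection argument of Tan and Kosut. Your derivation of the equivalence between the $\mathrm{ones}(2,2)$ covariance condition and the almost-sure identity of the centered tilted densities (equal variances plus the Cauchy--Schwarz equality case) is also the right calculation; note only that the a.s.\ identity you derive is $\jmath(X,D_1|P_X)-R_1^*=\jmath(X,D_2|P_X)-(R_1^*+R_2^*)$, with the sum rate $R_1^*+R_2^*$ in the second term — the same form the paper's proof uses.
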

Corollary~\ref{srmainresult_sr} results from specializations of Theorem~\ref{mainresult_sr}. The special case in \eqref{eqn:ones} is proved in Section~\ref{proofsrmain}. We notice that the expressions in the second-order regions are simplified for successively refinable discrete memoryless source-distortion measure triplets. In particular, the optimization to compute the optimal test channel $P_{\hatX_1\hatX_2|X}^*$ in $\rvR(R_1, D_1, D_2|P_X)$, defined in \eqref{minimumr2}--\eqref{minr2}, is no longer necessary since the Markov chain $X-Z-Y$ holds for $P_{\hatX_1\hatX_2|X}^*$~\cite{equitz1991successive}.

Furthermore, in Section \ref{altproof}, we provide an alternative converse proof of Corollary \ref{srmainresult_sr} by generalizing the one-shot converse bound of Kostina and Verd\'u in~\cite[Theorem 1]{kostina2012converse}. We remark that the alternative converse proof is also applicable to successively refinable continuous memoryless source-distortion measure triplets such as the a GMS with quadratic distortion measures. 

The case in~\eqref{eqn:ones} pertains, for example, to a binary source with Hamming distortion measures. For such a source-distortion measure triplet, $\mathbf{V}(D_1,D_2|P_X)$ is rank $1$ and proportional to the all ones matrix. See Section \ref{sec:bin}. The result in~\eqref{eqn:ones} implies that both excess-distortion events in~\eqref{defexcessprob_sr} are perfectly correlated so that the one consisting of the {\em smaller} second-order rate $L_i,~i=1,2$ dominates, since the first-order rates are fixed at the first-order fundamental limits $(R(P_X, D_1),R(P_X, D_2))$. In fact, our result in \eqref{eqn:ones} specializes to the scenario where one considers the {\em separate} excess-distortion criterion~\cite{no2016} in~\eqref{eqn:eta1}--\eqref{eqn:eta2} with $\varepsilon_1=\varepsilon_2=\varepsilon$ and $\rmV(D_1|P_X)=\rmV(D_2|P_X)$.   More importantly, the case in \eqref{eqn:pd} when $\mathbf{V}(D_1,D_2|P_X)$ is full rank pertains to a source-distortion measure triplets with more ``degrees-of-freedom''. See Section \ref{sec:quat} for a concrete example.  Thus our work is a strict generalization of that in \cite{no2016}.

The result under the SEP criterion follows from Theorem \ref{sep:sr}.
\begin{corollary}
\label{coro:sr}
Under conditions (\ref{cond1_sr}) to (\ref{cond3}), for any $(\varepsilon_1,\varepsilon_2)\in(0,1)^2$,  the second-order coding region $\calL_{\mathrm{sep}}(R_1^*,R_2^*,D_1,D_2,\varepsilon_1,\varepsilon_2)$ satisfies that when $R_1^*=R(P_X,D_1)$ and $R_1^*+R_2^*=R(P_X,D_2)$,
\begin{align}
\nn&\Big\{(L_1,L_2):~L_i\geq \sqrt{\rmV(D_i|P_X)}\rmQ^{-1}(\min\{\varepsilon_1,\varepsilon_2\})\Big\}\\*
\nn&\subseteq\calL_{\mathrm{sep}}(R_1^*,R_2^*,D_1,D_2,\varepsilon_1,\varepsilon_2)\\*
&\subseteq\Big\{(L_1,L_2):~L_i\geq \sqrt{\rmV(D_i|P_X)}\rmQ^{-1}(\varepsilon_i)\Big\}.
\end{align}
\end{corollary}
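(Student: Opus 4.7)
The plan is to obtain Corollary \ref{coro:sr} as a direct specialization of Theorem \ref{sep:sr} using the structural simplifications that are available for successively refinable source--distortion triplets. I will proceed as follows.

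First, I will invoke the successive refinability hypothesis to reduce the relevant information-theoretic quantities. By Definition \ref{prop:sr} and \eqref{srminr2}, whenever $R(P_X,D_1)\le R_1^*<R(P_X,D_2)$, the minimum sum rate collapses to $\bar{R}(R_1^*,D_1,D_2|P_X)=R(P_X,D_2)$, so the rate-pair condition ``$R_1^*=R(P_X,D_1)$ and $R_1^*+R_2^*=\bar{R}(R(P_X,D_1),D_1,D_2|P_X)$'' in Theorem \ref{sep:sr} is equivalent to ``$R_1^*=R(P_X,D_1)$ and $R_1^*+R_2^*=R(P_X,D_2)$'' here. Next, I will use the fact that successive refinability forces $\xi^*=0$ and $\nu_1^*=0$ in \eqref{deflambda} and \eqref{defnu1}, which, when substituted into the definition \eqref{srtilted}, yields
\begin{align}
\jmath(x,R_1^*,D_1,D_2|P_X)=-\log\mathbb{E}_{P_{\hatX_2}^*}\bigl[\exp(-\nu_2^*(d_2(x,\hatX_2)-D_2))\bigr]=\jmath(x,D_2|P_X),
\end{align}
where the second equality is exactly \eqref{kostinatilt} with $\lambda^*=\nu_2^*$. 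Consequently, $\mathrm{V}(R_1^*,D_1,D_2|P_X)=\mathrm{V}(D_2|P_X)$.

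With these identifications in hand, I will substitute directly into both the inner and outer bounds of Theorem \ref{sep:sr}. The inner bound becomes
\begin{align}
\bigl\{(L_1,L_2):L_1\ge\sqrt{\mathrm{V}(D_1|P_X)}\mathrm{Q}^{-1}(\min\{\varepsilon_1,\varepsilon_2\}),\ L_2\ge\sqrt{\mathrm{V}(D_2|P_X)}\mathrm{Q}^{-1}(\min\{\varepsilon_1,\varepsilon_2\})\bigr\},
\end{align}
and the outer bound becomes
\begin{align}
\bigl\{(L_1,L_2):L_1\ge\sqrt{\mathrm{V}(D_1|P_X)}\mathrm{Q}^{-1}(\varepsilon_1),\ L_2\ge\sqrt{\mathrm{V}(D_2|P_X)}\mathrm{Q}^{-1}(\varepsilon_2)\bigr\},
\end{align}
which is precisely the statement of the corollary. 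I will also verify that the regularity conditions (i)--(iv) preceding Theorem \ref{mainresult_sr} remain satisfied in this specialized setting, noting that twice differentiability of $(Q_X,D')\mapsto R(Q_X,D')$ in neighborhoods of $(P_X,D_1)$ and $(P_X,D_2)$ is sufficient since the sum-rate functional reduces to $R(P_X,D_2)$.

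Since the corollary is a specialization rather than a new result, the only genuine obstacle is bookkeeping: I must make sure that the parameter-reduction $\xi^*=\nu_1^*=0$ is applied consistently and that the dispersion matrix collapses correctly. One subtlety worth flagging is that the inner bound here stops short of the ``rectangle'' form one might hope for, precisely because, as remarked in the discussion following Theorem \ref{sep:sr}, avoiding an excess-distortion event at $\phi_2$ generally requires avoiding one at $\phi_1$ as well; this coupling is what prevents the inner bound from using $\mathrm{Q}^{-1}(\varepsilon_i)$ coordinatewise and instead forces the single parameter $\min\{\varepsilon_1,\varepsilon_2\}$, and I will flag this in the proof to explain the gap between the inner and outer bounds when $\varepsilon_1\ne\varepsilon_2$.
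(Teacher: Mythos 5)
Your proposal is correct and follows the same approach the paper takes: Corollary \ref{coro:sr} is obtained by specializing Theorem \ref{sep:sr} using the successive refinability reductions $\bar{R}(R_1^*,D_1,D_2|P_X)=R(P_X,D_2)$, $\xi^*=\nu_1^*=0$, and hence $\jmath(x,R_1^*,D_1,D_2|P_X)=\jmath(x,D_2|P_X)$ and $\mathrm{V}(R_1^*,D_1,D_2|P_X)=\mathrm{V}(D_2|P_X)$. The only thing you did not mention, which the paper notes in passing, is that the outer bound can alternatively be obtained by invoking the point-to-point converse of Theorem \ref{second4rd} separately at each decoder; this is an additional remark in the paper rather than a different proof strategy, so your argument matches the paper's in substance.
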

The converse part also follows from the converse proof of second-order asymptotics for the rate-distortion problem in Theorem \ref{second4rd}. Corollary \ref{coro:sr} implies that when $\varepsilon_1=\varepsilon_2$, for a successively refinable DMS, under the SEP criterion, the second-order coding rates are also successively refinable since the pair $L_1=\sqrt{\rmV(D_1|P_X)}\rmQ^{-1}(\varepsilon_1)$ and $L_2=\sqrt{\rmV(D_2|P_X)}\rmQ^{-1}(\varepsilon_2)$ is second-order achievable for the boundary rate pair $(R_1^*,R_2^*)=(R(P_X,D_1),R(P_X,D_2))$. Such a result implies that it is optimal to interrupt a transmission to provide a finer reconstruction of the source sequence without any loss in terms of second-order asymptotics, which is stronger than the original definition of successively refinability in terms of first-order asymptotics and coined ``strong successive refinability'' in~\cite{no2016}.

\subsection{Numerical Examples}
\label{sec:examples}
Recall that any discrete memoryless source with Hamming distortion measures is successively refinable~\cite{equitz1991successive}. In this subsection, we consider two such numerical examples originated in~\cite{kostina2012fixed} to illustrate Corollary~\ref{srmainresult_sr}. We use the logarithm with base $2$ in this subsection.
\subsubsection{A Binary Memoryless Source with Hamming Distortion Measures} \label{sec:bin}
Fix $p\in[0,1]$. We consider a binary source with $P_X(0)=p$. For any distortion levels $D_2<D_1<p$, it follows from \eqref{j4bms} that for each $i\in[2]$, 
\begin{align}
 \jmath(x,D_i|P_X)=\imath(x|P_X)- H_\rmb(D_i).
\end{align}
Hence,
\begin{align}
\rmV(D_1|P_X)&=\mathrm{V}(D_2|P_X)=p(1-p)\log^2\left(\frac{1-p}{p}\right),
\end{align}
and the rate-dispersion matrix is
\begin{align}
\mathbf{V}(D_1,D_2|P_X)
&=\mathrm{V}(D_1|P_X)\cdot \mathrm{ones}(2,2)\\*
&=p(1-p)\log^2\left(\frac{1-p}{p}\right) \cdot \mathrm{ones}(2,2) , \label{eqn:binary}
\end{align} which does not depend on $(D_1, D_2)$. 
From the above considerations, we see that a binary source with Hamming distortion measures is an example that falls under \eqref{eqn:ones} in Corollary~\ref{srmainresult_sr}.

\subsubsection{A Quaternary Memoryless Source with Hamming Distortion Measures} \label{sec:quat}
\begin{figure}[tb]
\centering 
\includegraphics[width=.6\columnwidth]{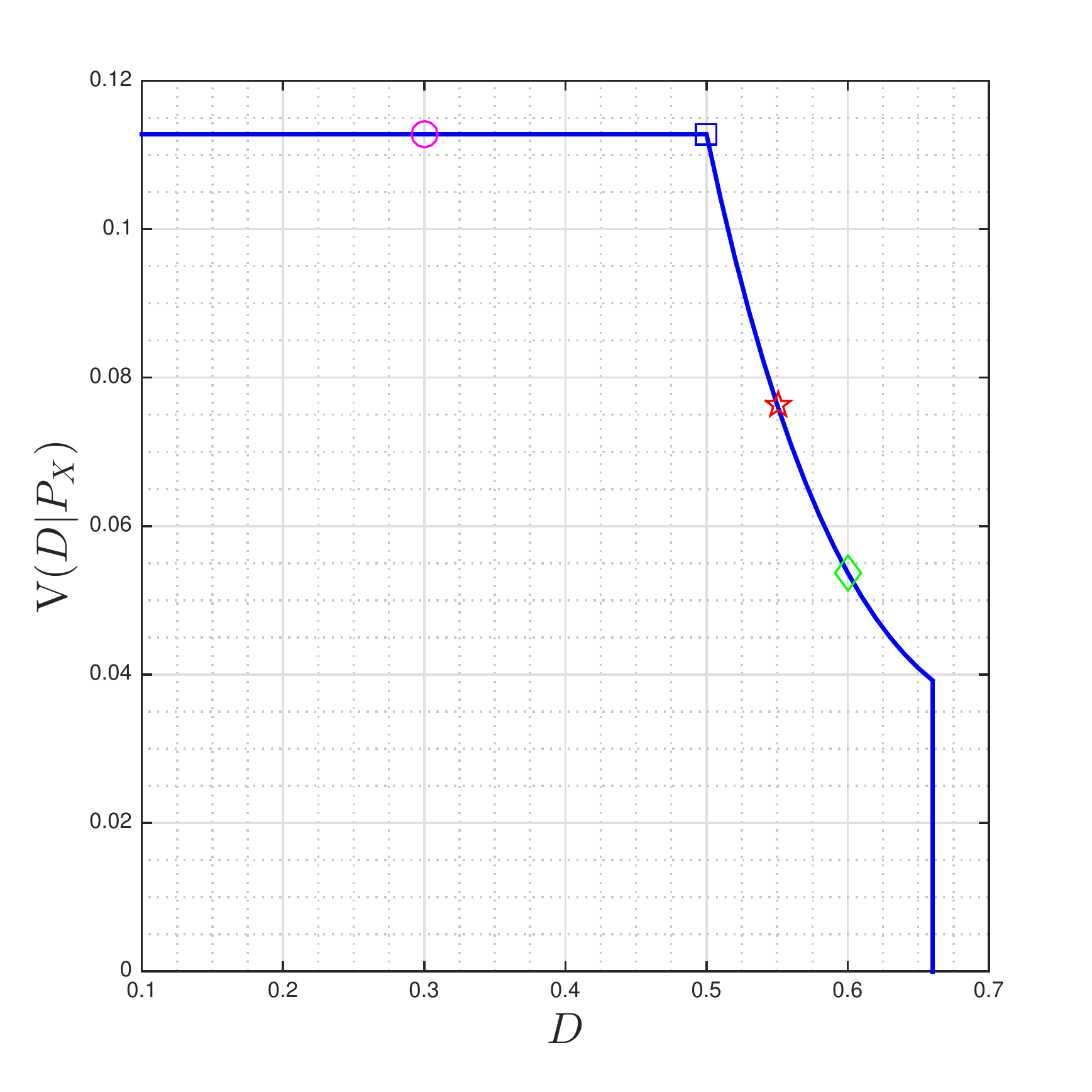}
\caption{Rate-dispersion  function $\rmV(D|P_X)$  for the source $P_X=[1/3,1/4,1/4,1/6]$~\cite[Section~VII.B]{kostina2012fixed} as a function of the distortion $D$.}
\label{rate_dispersion}
\end{figure}

\begin{figure}[tp]
\centering
\includegraphics[width=.6\columnwidth]{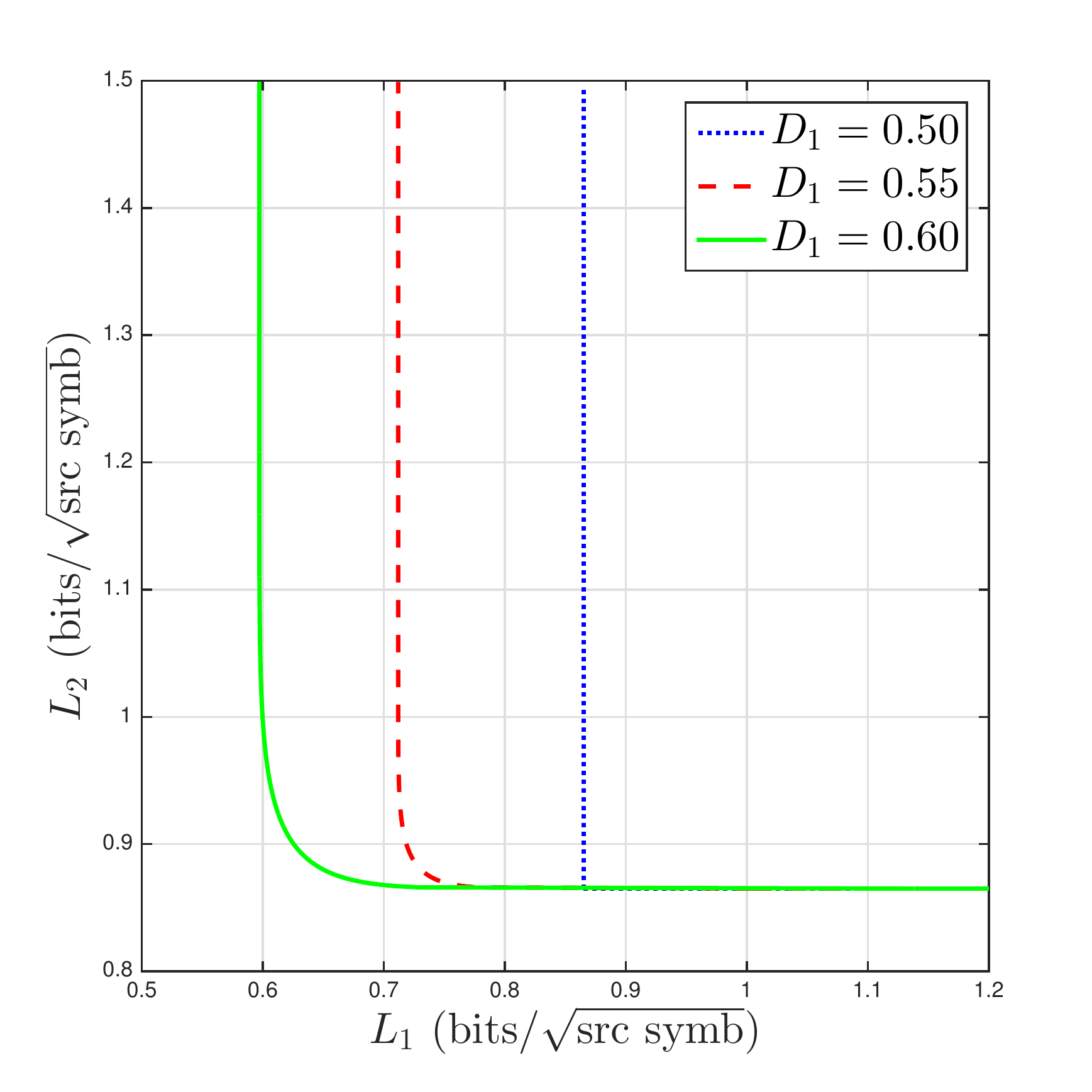}
\caption{Boundaries of the second-order coding region $\calL(R_1^*,R_2^*,D_1,D_2,\varepsilon)$ for Case (iii) in Corollary~\ref{srmainresult_sr}. The regions are to the top right of the boundaries.}
\label{plot_srregion}
\end{figure}

We next consider a more interesting source with the joint excess-distortion probability upper bounded by $\varepsilon=0.005$. In particular, we consider a quaternary memoryless source with distribution $P_X=[1/3,1/4,1/4,1/6]$. This example illustrates Case (iii)  of Corollary \ref{srmainresult_sr} and is adopted from~\cite[Section~VII.B]{kostina2012fixed}. The expressions for  the rate-distortion function  and the distortion-tilted information density are given in~\cite[Section~VII.B]{kostina2012fixed} (and will not be reproduced here as they are not important for our discussion). Since $\jmath(x,D_1|P_X)=\jmath(x,D_2|P_X)$ when $D_1=D_2=D$, we use $\jmath(x,D|P_X)$ to denote the common value of the distortion-tilted information density. Similarly, let $\mathrm{V}(D|P_X)$ be the common value of $\mathrm{V}(D_1|P_X)$ and $\mathrm{V}(D_2|P_X)$ when $D_1=D_2=D$. As shown in Figure~\ref{rate_dispersion} (reproduced from~\cite[Section~VII.B,~Figure~4]{kostina2012fixed}), the rate-dispersion function $\mathrm{V}(D|P_X)$ is dependent on  the distortion level $D$, unlike the binary example in Section \ref{sec:bin}.

In this numerical example, we fix $D_2=0.3$, which is denoted by the circle in Figure~\ref{rate_dispersion}. Then we decrease $D_1$ from $0.6$ to $0.55$ and finally to $0.5$. These points are denoted respectively by the diamond, the pentagram and the square in Figure~\ref{rate_dispersion}. Given these values of $(D_1,D_2)$, we plot the second-order coding rate for Case (iii) of Corollary~\ref{srmainresult_sr} in Figure~\ref{plot_srregion}. 

From Figure~\ref{plot_srregion}, we make the following observations and conclusions.
\begin{itemize}
\item The minimum $L_1$ converges to $\sqrt{V(D_1|P_X)}\rmQ^{-1}(\varepsilon)$ as $L_2\uparrow \infty$. This is because as $L_2$ increases, the bivariate Gaussian cdf asymptotically degenerates to the univariate Gaussian cdf with mean $0$ and variance $\mathrm{V}(D_1|P_X)$. A similar observation was made for the Slepian-Wolf problem in \cite{tan2014dispersions}.

\item As we decrease the value of $D_1$, the second-order coding region shrinks. We remark that there is  a  transition from \eqref{eqn:pd} with \\$\mathrm{rank}(\bV(D_1,D_2|P_X))=2$ to \eqref{eqn:ones} (where $\mathrm{rank}(\bV(D_1,D_2|P_X))=1$) as we decrease $D_1$ with the critical value of $D_1$ being $0.5$.

\item When $D_2<D_1\le 0.5$, the rate-dispersion matrix $\mathbf{V}(D_1,D_2|P_X)$ is rank $1$ (and proportional to the all ones matrix). Correspondingly, the result in \eqref{eqn:ones} applies. Here,  the second-order region is a (unbounded) rectangle with a sharp corner at the left bottom since the smaller $L_i,~i=1,2$ dominates. The second-order region remains unchanged as we decrease $D_1$ towards $D_2$ for fixed $D_2=0.3$. 

\item When $0.5<D_1<2/3$, the result in \eqref{eqn:pd} applies. In this case, neither $L_1$ nor $L_2$ dominates. The second-order coding rates $(L_1,L_2)$  are coupled  together by the  full rank   rate-dispersion matrix $\mathbf{V}(D_1,D_2|P_X)$, resulting the smooth boundary at the left bottom. 
 
\end{itemize}

We conclude that depending on the value of the distortion levels, the rate-dispersion matrix is either rank $1$ or rank $2$, illustrating Case (iii) of Corollary~\ref{srmainresult_sr}. These interesting observations cannot be gleaned from the work of No, Ingber and Weissman~\cite{no2016} in which the separate excess-distortion criteria are employed for the successive refinement problem. When $\bV(D_1,D_2|P_X)$ is rank $1$, exactly one excess-distortion event dominates the probability in~\eqref{defexcessprob_sr} entirely; when $\bV(D_1,D_2|P_X)$ is rank $2$, both excess-distortion events contribute non-trivially to the probability and a bivariate Gaussian is required to characterize the second-order fundamental limit.

\section{Proof of Second-Order Asymptotics}
\label{secondproof}
\subsection{Achievability}
\label{secondach}
We make use of the type covering lemma~\cite[Lemma 8]{no2016}, which is modified from~\cite[Lemma 1]{kanlis1996error}. Leveraging  the type covering lemma, we can then upper bound the excess-distortion probability. Finally, we Taylor expand appropriate terms and invoke the Berry-Essen theorem to obtain an achievable second-order coding region.

Define two constants:
\begin{align}
c_1&=4|\calX||\hatcalX_1|+9,\\*
c_2&=6|\calX||\hatcalX_1||\hatcalX_2|+2|\calX||\hatcalX_1|+17.
\end{align}
We are now ready to recall the \emph{discrete} type covering lemma for successive refinement.
\begin{lemma}
\label{typecovering}
Given type $Q_X\in\calP_n(\calX)$, for all $R_1\geq R(Q_X,D_1)$, the following holds:
\begin{itemize}
\item There exists a set $\calB_1\subset\hatcalX_1^n$ such that 
\begin{align}
\frac{1}{n}\log|\calB_1|\leq R_1+c_1\frac{\log n}{n}
\end{align}
and the type class is $D_1$-covered by the set $\calB_1$, i.e.,
\begin{align}
\calT_{Q_{X}}\subset\bigcup_{\hatx_1^n\in\calB_1}\left\{x^n:d_{1}(x^n,\hatx_1^n)\leq D_1\right\}.
\end{align}
\item For each $x^n\in\calT_{Q}$ and each $\hatx_1^n\in\calB_1$, there exists a set $\calB_2(\hatx_1^n)\subset\hatcalX_2^n$ such that
\begin{align}
\frac{1}{n}\log \left(\sum_{\hatx_1^n\in\calB}|\calB_2(\hatx_1^n)|\right)\leq \rvR(R_1,D_1,D_2|Q_X)+c_2\frac{\log n}{n}
\end{align}
and the $D_1$-distortion ball $\calN_1(\hatx_1^n,D_1):=\left\{x^n:d_{1}(x^n,\hatx_1^n)\leq D_1\right\}$ is $D_2$-covered by the set $\calB_2(\hatx_1^n)$ i.e.,
\begin{align}
\calN_1(\hatx_1^n,D_1)\subset\bigcup_{\hatx_2^n\in\calB_2(\hatx_1^n)}\left\{x^n:d_2(x^n,\hatx_2^n)\leq D_2\right\}.
\end{align}
\end{itemize}
\end{lemma}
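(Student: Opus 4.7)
The plan is to prove both parts by a two-stage random-coding construction driven by the optimal joint test channel $P^*_{\hatX_1 \hatX_2 | X}$ attaining $\rvR(R_1, D_1, D_2 | Q_X)$ in \eqref{minr2}. By definition this channel satisfies $\bbE[d_i(X, \hatX_i)] \leq D_i$ for $i \in [2]$, $I(X; \hatX_1) \leq R_1$, and $I(X; \hatX_1, \hatX_2) = \rvR(R_1, D_1, D_2 | Q_X)$; write $P^*_{\hatX_1}$ and $P^*_{\hatX_2 | \hatX_1}$ for the induced marginal and conditional, respectively.

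For the first claim, I would invoke the classical rate-distortion type-covering lemma: since $R_1 \geq R(Q_X, D_1)$ and $P^*_{\hatX_1 | X}$ is a feasible test channel at distortion level $D_1$, drawing $M_1 := \lceil \exp(n R_1 + c_1 \log n) \rceil$ codewords i.i.d.\ from $P^*_{\hatX_1}$, bounding the failure probability $(1 - (P^*_{\hatX_1})^n(\calN_1(x^n, D_1)))^{M_1}$ for each $x^n \in \calT_{Q_X}$ via the usual joint-type lower bound on the distortion-ball probability, and applying a union bound over $|\calT_{Q_X}| \leq \exp(n H(Q_X))$ yield a deterministic $\calB_1$ of size at most $M_1$ meeting the stated bound. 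For the second claim, I would fix $\calB_1$ and, for each $\hatx_1^n \in \calB_1$ independently, draw $M_2 := \lceil \exp(n(\rvR(R_1, D_1, D_2 | Q_X) - R_1) + (c_2 - c_1)\log n)\rceil$ second-stage candidates whose $i$-th coordinates are sampled independently from $P^*_{\hatX_2 | \hatX_1}(\cdot | \hatx_{1,i})$. To verify that the resulting $\calB_2(\hatx_1^n)$ $D_2$-covers $\calN_1(\hatx_1^n, D_1)$, I would partition this ball by the joint type $V_{X \hatX_1}$ with $V_{\hatX_1} = \hat{T}_{\hatx_1^n}$ and $\bbE_V[d_1(X, \hatX_1)] \leq D_1$, and on each conditional type class $\calT_{V_{X | \hatX_1}}(\hatx_1^n)$ apply the conditional type-covering lemma with the surrogate test channel $P^*_{\hatX_2 | X \hatX_1}$. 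A union bound over the polynomially many feasible joint types, together with the choice of $M_2$ above, produces the covering with positive probability, and the sum-size bound $\sum_{\hatx_1^n \in \calB_1} |\calB_2(\hatx_1^n)| \leq M_1 M_2 \leq \exp(n \rvR(R_1, D_1, D_2 | Q_X) + c_2 \log n)$ follows.

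The principal obstacle is the uniformity in the second stage: the statement demands that $\calB_2(\hatx_1^n)$ $D_2$-cover the \emph{entire} $d_1$-ball around $\hatx_1^n$, whose sequences may have arbitrary empirical $X$-marginals, and not merely those jointly typical with $\hatx_1^n$ under $P^*_{X \hatX_1}$. The key technical step is therefore to show that the supremum, over all conditional types $V_{X | \hatX_1}$ feasible under the $d_1$-constraint and with $V_{\hatX_1}$ close to $P^*_{\hatX_1}$, of the required $d_2$-covering rate for $\calT_{V_{X | \hatX_1}}(\hatx_1^n)$ matches $\rvR(R_1, D_1, D_2 | Q_X) - R_1$ up to an $O(\log n / n)$ slack. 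This is ultimately a consequence of the convex-programming structure of $\rvR$ in \eqref{minr2}: the optimal $P^*_{\hatX_2 | X \hatX_1}$ simultaneously meets the $D_2$-constraint on neighboring joint types while essentially minimizing $I(X; \hatX_2 | \hatX_1)$, so atypical conditional types contribute only a polynomial-in-$n$ factor that is absorbed into the constant $c_2$.
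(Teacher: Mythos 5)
Your plan has the right skeleton — a two-stage random (or type-) covering driven by an optimal test channel for $\rvR$ — and you correctly isolate the crux: sequences in the $D_1$-ball around a codeword $\hatx_1^n$ need not be jointly typical with $\hatx_1^n$ under $P^*_{X\hatX_1}$. But the resolution you sketch for that crux does not go through, and the gap is genuine.

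First, your second-stage codebook is drawn from the \emph{marginal} channel $P^*_{\hatX_2\mid\hatX_1}$, and you then invoke the ``conditional type covering lemma with the surrogate test channel $P^*_{\hatX_2\mid X\hatX_1}$.'' This conflates two different objects. The conditional type covering lemma covers a fixed conditional type class $\calT_{V_{X\mid\hatX_1}}(\hatx_1^n)$ at distortion $D_2$ using the \emph{best} test channel $W_{\hatX_2\mid X\hatX_1}$ feasible under $V$, attaining rate $R_{X\mid\hatX_1}(V,D_2)$. For $V\neq P^*_{X\hatX_1}$ the channel $P^*_{\hatX_2\mid X\hatX_1}$ need not even satisfy $\bbE_V[d_2]\leq D_2$, so it cannot be used uniformly. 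Second, and more fundamentally, the quantitative claim that atypical conditional types ``contribute only a polynomial-in-$n$ factor'' because of ``the convex-programming structure of $\rvR$'' is unsubstantiated and, as stated, points the wrong way. For any feasible coupling $V$ of $Q_X$ and the codeword marginal with $\bbE_V[d_1]\leq D_1$ and $I_V(X;\hatX_1)\leq R_1$, the pair $(V_{\hatX_1\mid X}, W^*(V))$ is feasible for $\rvR(R_1,D_1,D_2\mid Q_X)$, which yields
\begin{align}
R_{X\mid\hatX_1}(V,D_2)\;\geq\;\rvR(R_1,D_1,D_2\mid Q_X)-I_V(X;\hatX_1).
\end{align}
When $I_V(X;\hatX_1)<R_1$ this \emph{lower}-bounds the per-type covering rate strictly above $\rvR-R_1$; a naive product bound $|\calB_1|\cdot\exp\{n\max_V R_{X\mid\hatX_1}(V,D_2)\}$ then overshoots $\exp\{n\rvR\}$. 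What the full argument (following~\cite{kanlis1996error,no2016}) must establish is a matching \emph{upper} bound — roughly, that when $\calB_1$ is constructed from a single type $P^*_{\hatX_1}$, the first-stage covering cost is exactly $\min_V I_V(X;\hatX_1)$ over feasible couplings, and the largest per-type second-stage rate is exactly complementary to it. This saddle-point identity is the heart of the lemma, not a routine consequence of convexity, and your proposal as written does not supply it. You should either (i) constrain the first-stage assignment so that $(x^n,\hatx_1^n)$ always has joint type $P^*_{X\hatX_1}$ and only cover that single conditional type class in stage two, accepting that this supports a typicality encoder rather than minimum-distortion encoding; or (ii) carry out the per-conditional-type covering with $V$-dependent test channels and prove the required complementary bound on $\tilde{R}_1 + \max_V R_{X\mid\hatX_1}(V,D_2)$, which is the nontrivial content borrowed from the cited references.
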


Invoking Lemma~\ref{typecovering}, we can then upper bound the excess-distortion probability for some $(n,M_1,M_2)$-code. Given any $(n,M_1,M_2)$-code, define
\begin{align}
R_{1,n}&:=\frac{1}{n}\Bigg(\log M_1-c_1\log n-|\calX|\log(n+1)\Bigg),\\
R_{2,n}&:=\frac{1}{n}\Bigg(\log(M_1M_2)-c_2\log n\Bigg)-R_{1,n}.
\end{align}

\begin{lemma}
\label{uppexcess:sr}
There exists an $(n,M_1,M_2)$-code such that
\begin{align}
\rmP_{\rme,n}(D_1,D_2)
\nn&\leq \Pr\Big\{R_{1,n}<R(\hat{T}_{X^n},D_1)~\mathrm{or}~\\*
&\qquad\qquad R_{1,n}+R_{2,n}<\rvR(R_{1,n},D_1,D_2|\hat{T}_{X^n})\Big\}.
\end{align}
\end{lemma}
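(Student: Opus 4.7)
The plan is to construct a joint type-based coding scheme that first communicates the source type to the decoders and then applies Lemma \ref{typecovering} in two stages to determine the reproduction codewords. Concretely, the encoders $f_1$ and $f_2$ both observe $X^n$, so they can both compute the empirical type $\hat{T}_{X^n}$ and independently run the same deterministic codebook construction guaranteed by Lemma \ref{typecovering}. The type itself will be transmitted by $f_1$ as a prefix, using at most $|\calX|\log(n+1)$ nats, which has already been subtracted in the definition of $R_{1,n}$.

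First I would describe the first-stage encoder $f_1$: given $X^n$ with type $\hat{T}_{X^n}$, it checks whether $R_{1,n} \geq R(\hat{T}_{X^n}, D_1)$. If yes, the first conclusion of Lemma \ref{typecovering} applied to $Q_X = \hat{T}_{X^n}$ and $R_1 = R_{1,n}$ produces a deterministic set $\calB_1 \subset \hatcalX_1^n$ with $\log|\calB_1| \leq nR_{1,n} + c_1 \log n$ that $D_1$-covers $\calT_{\hat{T}_{X^n}}$. Encoder $f_1$ then selects $\hatx_1^n = \argmin_{\tilde{x}_1^n \in \calB_1} d_1(X^n, \tilde{x}_1^n)$, which is guaranteed to satisfy $d_1(X^n,\hatx_1^n)\leq D_1$, and transmits the pair (type index, codeword index) using a total of at most $|\calX|\log(n+1) + nR_{1,n} + c_1 \log n = \log M_1$ nats. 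Otherwise it emits an error symbol. Decoder $\phi_1$ reconstructs $\hatX_1^n$ by reading the codeword index.

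Next I would describe the second-stage encoder $f_2$: since $f_2$ observes $X^n$, it independently recomputes $\hat{T}_{X^n}$, $\calB_1$, and $\hatx_1^n$ as above. It then checks whether $R_{1,n} + R_{2,n} \geq \rvR(R_{1,n}, D_1, D_2 \mid \hat{T}_{X^n})$. If yes, the second conclusion of Lemma \ref{typecovering} yields a deterministic set $\calB_2(\hatx_1^n) \subset \hatcalX_2^n$ that $D_2$-covers the distortion ball $\calN_1(\hatx_1^n, D_1)$, with $\log\big(\sum_{\hatx_1^n \in \calB_1}|\calB_2(\hatx_1^n)|\big) \leq n(R_{1,n}+R_{2,n}) + c_2 \log n = \log(M_1 M_2)$. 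Since $X^n \in \calN_1(\hatx_1^n, D_1)$ by the first stage, $f_2$ picks $\hatx_2^n = \argmin_{\tilde{x}_2^n \in \calB_2(\hatx_1^n)} d_2(X^n, \tilde{x}_2^n)$, which satisfies $d_2(X^n,\hatx_2^n)\leq D_2$, and transmits its index. Decoder $\phi_2$ reconstructs $\hatx_1^n$ from $S_1$, rebuilds $\calB_2(\hatx_1^n)$ in the same deterministic way, and then reads off $\hatx_2^n$ from $S_2$.

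Finally I would bound the excess-distortion probability: by construction, whenever both rate checks pass, i.e.\ $R_{1,n} \geq R(\hat{T}_{X^n}, D_1)$ and $R_{1,n}+R_{2,n} \geq \rvR(R_{1,n}, D_1, D_2 \mid \hat{T}_{X^n})$, we simultaneously have $d_1(X^n,\hatx_1^n)\leq D_1$ and $d_2(X^n,\hatx_2^n)\leq D_2$, so no excess-distortion event occurs. Hence the event $\{d_1(X^n,\hatX_1^n)>D_1 \text{ or } d_2(X^n,\hatX_2^n)>D_2\}$ is contained in the union of the two rate-violation events, giving the claimed bound. There is no real obstacle here beyond careful bookkeeping; the only subtle point is ensuring that $f_1$ and $f_2$ produce exactly the same $\hatx_1^n$, which is automatic once the codebook construction in Lemma \ref{typecovering} is fixed as a deterministic function of $(\hat{T}_{X^n}, R_{1,n})$, and that the prefix cost of sending $\hat{T}_{X^n}$ is absorbed into the $|\calX|\log(n+1)$ term already subtracted in $R_{1,n}$.
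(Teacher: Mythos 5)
Your construction is essentially the same as the paper's: both encoders compute the type, apply the two-stage type covering lemma with $Q_X=\hat T_{X^n}$ and $R_1=R_{1,n}$, declare an error when either rate check fails, and otherwise output the covering codewords; the excess-distortion event is then contained in the union of the two rate-violation events, which gives the claimed bound. The logic of the probability bound at the end is exactly right.

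There is, however, one bookkeeping step you assert but do not verify, and it is the step that actually determines whether the code fits. You have $f_2$ transmit the index of $\hatx_2^n$ within $\calB_2(\hatx_1^n)$ as $S_2\in[M_2]$, and $\phi_2$ "read off $\hatx_2^n$ from $S_2$." For this to be a valid $(n,M_1,M_2)$-code you need $|\calB_2(\hatx_1^n)|\le M_2$ for \emph{every} $\hatx_1^n\in\calB_1$. But Lemma \ref{typecovering} as stated only gives the aggregate bound $\log\bigl(\sum_{\hatx_1^n\in\calB_1}|\calB_2(\hatx_1^n)|\bigr)\le n\rvR(R_{1,n},D_1,D_2|\hat T_{X^n})+c_2\log n$, and since $|\calB_1|$ can be much smaller than $\exp(nR_{1,n})$, a sum bound does not rule out some individual $|\calB_2(\hatx_1^n)|$ being larger than $M_2$ (trying to repair this by stealing unused bits from $S_1$ also does not go through, since $|\calB_1|\cdot\max_{\hatx_1^n}|\calB_2(\hatx_1^n)|\ge\sum|\calB_2(\hatx_1^n)|$ and so the split index need not fit in $\log(M_1M_2)$ either, while $S_1$ alone must still determine $\hatx_1^n$). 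What actually saves the argument is that the Kanlis--Narayan/No--Ingber--Weissman covering construction behind Lemma \ref{typecovering} in fact yields a \emph{per-codeword} bound $|\calB_2(\hatx_1^n)|\le\exp(n(\rvR(R_{1,n},D_1,D_2|\hat T_{X^n})-R_{1,n})+c_2'\log n)$ for every $\hatx_1^n$, from which the sum bound in the lemma follows; combined with the passed rate check $R_{1,n}+R_{2,n}\ge\rvR(\cdot)$ and the slack built into $c_1,c_2$ in the definitions of $R_{1,n},R_{2,n}$, this gives $|\calB_2(\hatx_1^n)|\le M_2$, so your encoding does succeed. You should either cite that per-codeword bound explicitly or add the short argument above; otherwise the claim that $S_2$ alone can index $\hatx_2^n$ is not justified by what you quoted from the lemma.
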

The proof of Lemma~\ref{uppexcess:sr} is similar to~\cite[Lemma 5]{watanabe2015second} and available in \cite[Appendix D]{zhou2016second}.

Recall the definition of the typical set in \eqref{def:typicalset} and the result in \eqref{upp:atypical} that
\begin{align}
\Pr\left\{\hat{T}_{X^n}\notin\calA_n(P_X)\right\}\leq \frac{2|\calX|}{n^2}.
\end{align}
For a rate pair $(R_1^*,R_2^*)$ satisfying the conditions in Theorem~\ref{mainresult_sr}, we choose 
\begin{align}
\frac{1}{n}\log M_1&=R_1^*+\frac{L_1}{\sqrt{n}}+\frac{c_1\log n+|\calX|\log (n+1)}{n}\label{achm1},\\
\frac{1}{n}\log(M_1M_2)&=R_1^*+R_2^*+\frac{L_2}{\sqrt{n}}+c_2\frac{\log n}{n}\label{achm2}.
\end{align}

Hence,
\begin{align}
R_{1,n}&=R_1^*+\frac{L_1}{\sqrt{n}},\\
R_{1,n}+R_{2,n}&=R_1^*+R_2^*+\frac{L_2}{\sqrt{n}}.
\end{align}
From the conditions in Theorem~\ref{mainresult_sr}, we know that the second derivative  of $R(Q_X,D_1)$ is bounded in the neighborhood of $P_X$, and that the second derivative  of $\rvR(R_1,D_1,D_2|Q_X)$ with respect to $(R_1,R_2,Q_X)$ is bounded around a neighborhood of $(R_1^*,P_X)$. Hence, for any $x^n$ such that $\hat{T}_{x^n}\in\calA_n(P_X)$, applying Taylor's expansion and invoking Lemmas \ref{prop:dtilteddensity} and \ref{propertytilted_sr}, we obtain
\begin{align}
\nn&R(\hat{T}_{x^n},D_1)\\*
&=R(P_X,D_1)+\sum_{x}\left(\hat{T}_{x^n}(x)-P_X(x)\right)\jmath_X(x,D_1|P_X)+O\left(\frac{\log n}{n}\right)\\
&=\frac{1}{n}\sum_{i\in[n]} \jmath(x_i,D_1|P_X)+O\left(\frac{\log n}{n}\right)\label{taylor_1},
\end{align}
and
\begin{align}
\nn&\rvR(R_{1,n},D_1,D_2|\hat{T}_{x^n})\\*
\nn&=\rvR(R_1^*,D_1,D_2|P_{XY})-\xi^*\frac{L_1}{\sqrt{n}}+O\left(\frac{\log n}{n}\right)\\
&\quad+\sum_{x}\left(\hat{T}_{x^n}(x)-P_X(x)\right)\jmath(x,R_1^*,D_1,D_2|P_X)\\*
&=\frac{1}{n}\sum_{i\in[n]}\jmath(x_i,R_1^*,D_1,D_2|P_X)-\xi^*\frac{L_1}{\sqrt{n}}+O\left(\frac{\log n}{n}\right)\label{taylor_2}.
\end{align}
Define $\eta_n=\frac{\log n}{n}$. 

In subsequent analyses, for ease of notation, we use $\jmath(x,R_1^*)$ and $\jmath(x,R_1^*,D_1,D_2|P_X)$ interchangeably. It follows from Lemma~\ref{uppexcess:sr} that
\begin{align}
\nn&\rmP_{\rme,n}(D_1,D_2)\\*
&\leq \Pr\left\{R_{1,n}<R(\hat{T}_{X^n},D_1)~\mathrm{or}~R_{1,n}+R_{2,n}<\rvR(R_{1,n},D_1,D_2|\hat{T}_{X^n})\right\}\label{taylorexpandstep000}\\
\nn&\leq \Pr\Big\{R_{1,n}<R(\hat{T}_{X^n},D_1)~\mathrm{or}~R_{1,n}+R_{2,n}<\rvR(R_{1,n},D_1,D_2|\hat{T}_{X^n}),\\*
&\qquad\qquad\mathrm{and~}\hat{T}_{X^n}\in\calA_{n}(P_X)\Big\}+\Pr\left\{\hat{T}_{X^n}\notin\calA_{n}(P_X)\right\}\\
&\nn\leq \Pr\bigg\{R_1^*+\frac{L_1}{\sqrt{n}}<\frac{1}{n}\sum_{i\in[n]}\jmath(X_i,D_1|P_X)+O\left(\eta_n\right)~\mathrm{or}\\*
&\qquad\quad R_1^*+R_2^*+\frac{L_2}{\sqrt{n}}<\frac{1}{n}\sum_{i\in[n]}\jmath(X_i,R_1^*)-\xi^*\frac{L_1}{\sqrt{n}}+O(\eta_n)\bigg\}+\frac{2|\calX|}{n^2}\\
&\nn=\Pr\bigg\{R_1^*+\frac{L_1}{\sqrt{n}}<\frac{1}{n}\sum_{i\in[n]}\jmath(X_i,D_1|P_X)+O\left(\eta_n\right)~\mathrm{or}~\\*
&\qquad\quad R_1^*+R_2^*+\xi^*\frac{L_1}{\sqrt{n}}+\frac{L_2}{\sqrt{n}}<\frac{1}{n}\sum_{i\in[n]}\jmath(X_i,R_1^*)+O(\eta_n)\bigg\}+\frac{2|\calX|}{n^2}\label{taylorexpand:sr}.
\end{align}
Thus,
\begin{align}
1-\rmP_{\rme,n}(D_1,D_2)\nn&\geq 
\Pr\bigg\{\frac{1}{n}\sum_{i\in[n]}\jmath(X_i,D_1|P_X)\leq R_1^*+\frac{L_1}{\sqrt{n}}+O\left(\eta_n\right),\\*
\nn&\qquad\quad \frac{1}{n}\sum_{i\in[n]}\jmath(X_i,R_1^*)\leq R_1^*+R_2^*+\xi^*\frac{L_1}{\sqrt{n}}+\frac{L_2}{\sqrt{n}}+O(\eta_n)\bigg\}\\*
&\qquad-\frac{2|\calX|}{n^2}\label{eqn:dmsach}.
\end{align}

We first consider Case (i) where $R(P_X,D_1)<R_1^*<\rvR(R_1^*,D_1,D_2|P_X)$ and $R_1^*+R_2^*=\rvR(R_1^*,D_1,D_2|P_X)$. Using the weak law of large numbers in Theorem \ref{wlln}, we obtain
\begin{align}
\Pr\bigg\{\frac{1}{n}\sum_{i\in[n]}\jmath(X_i,D_1|P_X)\leq R_1^*+\frac{L_1}{\sqrt{n}}+O\left(\eta_n\right)\bigg\}\to 1.
\end{align}
Invoking the  Berry-Esseen Theorem in Theorem \ref{berrytheorem}, we obtain
\begin{align} 
\nn&\Pr\Big\{\frac{1}{n}\sum_{i\in[n]}\jmath(X_i,R_1^*)\leq R_1^*+R_2^*+\xi^*\frac{L_1}{\sqrt{n}}+\frac{L_2}{\sqrt{n}}+O(\eta_n)\Big\}\\*
&\geq 1-\rmQ\left(\frac{\xi^*L_1+L_2+O(\sqrt{n}\eta_n)}{\sqrt{\mathrm{V}(R_1^*,D_1,D_2|P_X)}}\right)-\frac{6\mathrm{T}(R_1^*,D_1,D_2|P_X)}{\sqrt{n}\mathrm{V}^{3/2}(R_1^*,D_1,D_2|P_X)}\label{berryesseen},
\end{align}
where  $\mathrm{T}(R_1^*,D_1,D_2|P_X)$ is the third absolute moment of \\$\jmath(X,R_1^*,D_1,D_2|P_X)$, which is finite for a DMS. 
Hence,
\begin{align}
\rmP_{\rme,n}(D_1,D_2)\nn&\leq \rmQ\left(\frac{\xi^*L_1+L_2+O(\sqrt{n}\eta_n)}{\sqrt{\mathrm{V}(R_1^*,D_1,D_2|P_X)}}\right)+\frac{6\mathrm{T}(R_1^*,D_1,D_2|P_X)}{\sqrt{n}\mathrm{V}^{3/2}(R_1^*,D_1,D_2|P_X)}\\*
&\qquad+\frac{2|\calX|}{n^2}\label{phitoq}.
\end{align}
Hence, if $(L_1,L_2)$ satisfies 
\begin{align}
\xi^*L_1+L_2\geq \sqrt{\mathrm{V}(R_1^*,D_1,D_2|P_X)}\rm\rmQ^{-1}(\varepsilon),
\end{align}
then $\limsup_{n\to\infty}\rmP_{\rme,n}(D_1,D_2)\leq \varepsilon$. The proof of Case (ii) is omitted since it is similar to Case (i). 

The most interesting case is Case (iii) where $R_1^*=R(P_X,D_1)$ and $R_1^*+R_2^*=\rvR(R_1^*,D_1,D_2|P_X)$. If $\mathbf{V}(R_1^*,D_1,D_2|P_X)$ is positive definite we invoke the multi-variate Berry-Esseen Theorem in Theorem \ref{vectorBerry} to obtain 
\begin{align}
\nn&\rmP_{\rme,n}(D_1,D_2)\\*
\nn&\leq 1-\Psi\left(L_1+O\left(\eta_n\right),\xi^*L_1+L_2+O\left(\eta_n\right);\mathbf{0},\mathbf{V}(R_1^*,D_1,D_2|P_X)\right)\\*
&\quad+O\left(\frac{1}{\sqrt{n}}\right). \label{eqn:multi-be}
\end{align}  

Note that if $\mathbf{V}(R_1^*,D_1,D_2|P_X)$ is rank $1$, we can use the argument (projection onto a lower-dimensional subspace) in \cite[Proof of Theorem 6]{tan2014dispersions} to conclude that \eqref{eqn:multi-be} also holds. 
Now if we choose $(L_1,L_2)$ such that
\begin{align}
\Psi\left(L_1,\xi^*L_1+L_2;\mathbf{0},\mathbf{V}(R_1^*,D_1,D_2|P_X)\right)\geq 1-\varepsilon,
\end{align}
then $\limsup_{n\to\infty}\rmP_{\rme,n}(D_1,D_2)\leq \varepsilon$. The achievability proof is now completed.

\subsection{Converse}
We first prove a type-based strong converse. Define $\overline{d}_i:=\max_{x,y}d_1(x,\hatx_i)$ for each $i\in[2]$. Given a type $Q_X\in\calP_n(\calX)$, define
\begin{align}
g(Q_X)&:=\Pr\big\{d_1(X^n,\hatX_1^n)\leq D_1,~\mathrm{and}~d_2(X^n,\hatX_2^n)\leq D_2\,\big|\,X^n\in\calT_{Q_X}\big\}\label{conditionalnon-excessprob}.
\end{align}
\begin{lemma}
\label{mainresult_srtypestrongconverse_sr}
Fix $\alpha>0$ and a type $Q_X\in\calP_n(\calX)$. If the excess-distortion probability satisfies 
\begin{align}
 g(Q_X)\geq \exp(-n\alpha), \label{eqn:lb_type_str_conv}
\end{align}
then there exists a conditional distribution $Q_{\hatX_1\hatX_2|X}$ such that
\begin{align}
\log M_1 &\geq nI(Q_X,Q_{\hatX_1|X})-\vartheta_n,\\
\log (M_1M_2)&\geq nI(Q_X,Q_{\hatX_1\hatX_2|X})-\vartheta_n,
\end{align}
where $\vartheta_n:=|\calX|\log (n+1)+\log n+n\alpha$, 
and the expected distortions are bounded as 
\begin{align}
\mathbb{E}_{Q_X\times Q_{\hatX_1\hatX_2|X}}[d_1(X,\hatX_1)]&\leq
D_1+\frac{\overline{d}_1}{n}=:D_{1,n},\\*
\mathbb{E}_{Q_X\times Q_{\hatX_1\hatX_2|X}}[d_2(X,\hatX_2)]&\leq D_2+\frac{\overline{d}_2}{n}=:D_{2,n}.
\end{align}
\end{lemma}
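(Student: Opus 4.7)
The plan is to adapt the perturbation argument used in the proof of Lemma \ref{type:sc} (the type-based strong converse for the rate-distortion problem) to the successive refinement setting, where we now have two rate constraints and two distortion constraints that must be satisfied simultaneously by a single single-letter test channel $Q_{\hatX_1\hatX_2|X}$. First I would fix a type $Q_X\in\calP_n(\calX)$ satisfying \eqref{eqn:lb_type_str_conv}, and given the $(n,M_1,M_2)$-code, define the ``success set'' inside the type class,
\begin{align}
\calD_{Q_X}:=\bigl\{x^n\in\calT_{Q_X}:\ d_1(x^n,\hatx_1^n)\leq D_1,\ d_2(x^n,\hatx_2^n)\leq D_2\bigr\},
\end{align}
where $\hatx_i^n$ is the code's reproduction. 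Writing $U_{\calT_{Q_X}}$ for the uniform law on the type class, by hypothesis $U_{\calT_{Q_X}}(\calD_{Q_X})=g(Q_X)\geq\exp(-n\alpha)$.

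Next, I would tilt the uniform law on $\calT_{Q_X}$ by amplifying probabilities inside $\calD_{Q_X}$ by a factor $n^2\exp(n\alpha)$ and leaving probabilities outside unchanged, then renormalize; call the resulting distribution $S_{\calT_{Q_X}}$. A short calculation shows $S_{\calT_{Q_X}}(\calD_{Q_X})\geq 1-1/n$, so under $S_{\calT_{Q_X}}$ the joint excess-distortion probability is $O(1/n)$ and the expected distortions satisfy $\bbE_{S_{\calT_{Q_X}}}[d_i(X^n,\hatX_i^n)]\leq D_i+\bar d_i/n=D_{i,n}$. The tilting inflates the normalizing constant by at most $n^2\exp(n\alpha)$, which contributes the $\log n+n\alpha$ terms into $\vartheta_n$ once we relate entropies under $S_{\calT_{Q_X}}$ to entropies under $U_{\calT_{Q_X}}$.

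Then I would run the standard weak-converse chain twice, once for each encoder aggregate: since $X^n\to S_1\to\hatX_1^n$ and $X^n\to(S_1,S_2)\to(\hatX_1^n,\hatX_2^n)$ under the code,
\begin{align}
\log M_1&\geq H(S_1)\geq I(X^n;\hatX_1^n)\geq \sum_{i\in[n]}I(X_i;\hatX_{1,i})-\Bigl(\sum_{i\in[n]}H(X_i)-H(X^n)\Bigr),\\
\log(M_1M_2)&\geq H(S_1,S_2)\geq I(X^n;\hatX_1^n,\hatX_2^n)\geq \sum_{i\in[n]}I(X_i;\hatX_{1,i},\hatX_{2,i})-\Bigl(\sum_{i\in[n]}H(X_i)-H(X^n)\Bigr).
\end{align}
Both gap terms are controlled by $|\calX|\log(n+1)$ using the method-of-types bound in \eqref{closetoind}, since $X^n$ is supported in $\calT_{Q_X}$ and its marginals are close to $Q_X$. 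Introducing a uniform time-sharing index $J\in[n]$ independent of everything else and setting $Q_{\hatX_1\hatX_2|X}$ to be the conditional law of $(\hatX_{1,J},\hatX_{2,J})$ given $X_J$ gives a single-letter test channel with $Q_X$-marginal and with $I(Q_X,Q_{\hatX_1|X})\leq \frac1n\sum_i I(X_i;\hatX_{1,i})$ and analogously for the joint mutual information, while the expected distortions remain bounded by $D_{1,n}$ and $D_{2,n}$. Combining these estimates with the $\log n+n\alpha+|\calX|\log(n+1)$ overheads yields both desired rate bounds.

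The main obstacle is the simultaneous use of a \emph{single} test channel $Q_{\hatX_1\hatX_2|X}$ to support both rate inequalities and both distortion inequalities; one has to ensure the time-sharing construction for $J$ is applied to the \emph{joint} pair $(\hatX_{1,J},\hatX_{2,J})$ so that the marginal $\hatX_{1,J}$-channel is automatically consistent with the joint channel. A secondary technical point is to verify that the $|\calX|\log(n+1)$-type bound on $|\sum_iH(X_i)-H(X^n)|$ still holds under the tilted distribution $S_{\calT_{Q_X}}$ rather than $U_{\calT_{Q_X}}$, which requires an elementary entropy comparison bounded by $\log(1+n^{-1})+o(1)$ and can be absorbed into $\vartheta_n$.
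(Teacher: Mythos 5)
Your proposal follows exactly the same Gu--Effros perturbation route the paper uses: the lemma is the successive-refinement analogue of Lemma~\ref{type:sc}, and the paper's proof (in \cite[Appendix E]{zhou2016second}) proceeds by tilting $U_{\calT_{Q_X}}$ onto the success set $\calD_{Q_X}$, running the weak converse under the tilted law, and single-letterizing with a uniform time-sharing index $J$ applied to the \emph{joint} pair $(\hatX_{1,J},\hatX_{2,J})$ so that one channel $Q_{\hatX_1\hatX_2|X}$ certifies both rate and both distortion constraints. So the structure is right, and you correctly flag the need to apply time-sharing jointly rather than coordinate-by-coordinate.

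Two accounting points are off, and the second is an internal inconsistency you should fix. First, with $g(Q_X)\ge \exp(-n\alpha)$ a tilting factor of $n\exp(n\alpha)$ already gives $S_{\calT_{Q_X}}(\calD_{Q_X})\ge 1-1/n$; its logarithm $\log n + n\alpha$ is precisely the part of $\vartheta_n$ beyond the $|\calX|\log(n+1)$ type-counting term, whereas your factor $n^2\exp(n\alpha)$ produces $2\log n + n\alpha$ (harmless for second-order asymptotics, but it does not reproduce $\vartheta_n$ as stated). Second, and more importantly, your ``secondary technical point'' asserts that passing from $U_{\calT_{Q_X}}$ to $S_{\calT_{Q_X}}$ changes the gap $\sum_i H(X_i)-H(X^n)$ by only $\log(1+n^{-1})+o(1)$. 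That is not correct: the change is controlled by $D(S_{\calT_{Q_X}}\Vert U_{\calT_{Q_X}})$, which can be as large as order $n\alpha$. Indeed this is the very place where the $\log n+n\alpha$ you attributed to the tilting one sentence earlier actually enters the argument, so you cannot also treat it as a negligible $o(1)$ correction. Once you keep a single consistent accounting for this entropy-gap penalty (using the tilting factor $n\exp(n\alpha)$), the argument matches the paper's.
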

The proof of Lemma \ref{mainresult_srtypestrongconverse_sr} is inspired by~\cite{wei2009strong}, which generalizes Lemma \ref{type:sc} for the rate-distortion problem and is available in \cite[Appendix E]{zhou2016second}.

Invoking Lemma~\ref{mainresult_srtypestrongconverse_sr} with $\alpha=\frac{\log n}{n}$, we can lower bound the excess-distortion probability for any $(n,M_1,M_2)$-code. Define $\beta_n=|\calX|\log (n+1)+2\log n$.
Define
\begin{align}
R_{1,n}'&:=\frac{1}{n}\log M_1+\beta_n,\\
R_{2,n}'&:=\frac{1}{n}\log(M_1M_2)+\beta_n-R_{1,n}'.
\end{align}

\begin{lemma}
\label{lbexcessp}
For any $(n,M_1,M_2)$-code, we have
\begin{align}
\nn&\rmP_{\rme,n}(D_1,D_2)\\*
\nn&\geq \Pr\Big\{R_{1,n}'<R(\hat{T}_{X^n},D_{1,n})~\mathrm{or}~R_{1,n}'+R_{2,n}'<\rvR(R_{1,n},D_{1,n},D_{2,n}|\hat{T}_{X^n})\Big\}\\*
&\qquad-\frac{1}{n}.
\end{align}
\end{lemma}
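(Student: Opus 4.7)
The plan is to derive Lemma \ref{lbexcessp} as a straightforward consequence of the type-based strong converse (Lemma \ref{mainresult_srtypestrongconverse_sr}) by instantiating it with $\alpha=\frac{\log n}{n}$ and taking the contrapositive. With this choice, the threshold $\exp(-n\alpha)$ becomes $\frac{1}{n}$ and the slack $\vartheta_n=|\calX|\log(n+1)+\log n+n\alpha$ collapses to $\beta_n=|\calX|\log(n+1)+2\log n$, exactly the normalization baked into $R_{1,n}'$ and $R_{2,n}'$.

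Next, I would translate Lemma \ref{mainresult_srtypestrongconverse_sr} into a statement purely about rate-distortion quantities. If $g(Q_X)\geq \frac{1}{n}$ for some type $Q_X$, the lemma supplies a test channel $Q_{\hatX_1\hatX_2|X}$ meeting the distortion constraints $(D_{1,n},D_{2,n})$ with $I(Q_X,Q_{\hatX_1|X})\leq R_{1,n}'$ and $I(Q_X,Q_{\hatX_1\hatX_2|X})\leq R_{1,n}'+R_{2,n}'$. Plugging this feasible channel into the variational definitions of $R(\cdot,\cdot)$ and $\rvR(\cdot,\cdot,\cdot|\cdot)$ (and using the monotonicity of $\rvR$ in its first argument) yields
\begin{align}
R(Q_X,D_{1,n})&\leq R_{1,n}',\\
\rvR(R_{1,n}',D_{1,n},D_{2,n}|Q_X)&\leq R_{1,n}'+R_{2,n}'.
\end{align}
Taking the contrapositive: any type $Q_X$ in the ``bad'' set
\begin{align}
\calG_n:=\Big\{Q_X\in\calP_n(\calX):R_{1,n}'<R(Q_X,D_{1,n})\text{ or }R_{1,n}'+R_{2,n}'<\rvR(R_{1,n}',D_{1,n},D_{2,n}|Q_X)\Big\}
\end{align}
must satisfy $g(Q_X)<\frac{1}{n}$, i.e.\ a conditional excess-distortion probability exceeding $1-\frac{1}{n}$.

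Finally, I would partition the excess-distortion probability by the type of $X^n$ and throw away the contributions from types outside $\calG_n$:
\begin{align}
\rmP_{\rme,n}(D_1,D_2)
&=\sum_{Q_X\in\calP_n(\calX)}\Pr\{\hat{T}_{X^n}=Q_X\}\bigl(1-g(Q_X)\bigr)\\
&\geq \sum_{Q_X\in\calG_n}\Pr\{\hat{T}_{X^n}=Q_X\}\Bigl(1-\tfrac{1}{n}\Bigr)\\
&=\Bigl(1-\tfrac{1}{n}\Bigr)\Pr\{\hat{T}_{X^n}\in\calG_n\}\\
&\geq \Pr\{\hat{T}_{X^n}\in\calG_n\}-\tfrac{1}{n},
\end{align}
which is exactly the inequality claimed (after reading $R_{1,n}$ as $R_{1,n}'$ inside the $\rvR$ expression, which appears to be a typographical slip, since this converse relies on the primed rates). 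There is no real obstacle here: all the analytic content lives in Lemma \ref{mainresult_srtypestrongconverse_sr}, and the remaining work is the bookkeeping of the constants $\vartheta_n\leftrightarrow \beta_n$, the monotonicity of $\rvR$ in its first argument, and a standard type-conditioning argument. The one place to be careful is ensuring the direction of the inequality $\rvR(R_{1,n}',D_{1,n},D_{2,n}|Q_X)\leq I(Q_X,Q_{\hatX_1\hatX_2|X})$: this is valid only because the bound $I(Q_X,Q_{\hatX_1|X})\leq R_{1,n}'$ lets us feed the very same test channel into the $\rvR$ minimization at the argument $R_{1,n}'$.
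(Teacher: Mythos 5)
Your proof is correct and follows exactly the route the paper intends: instantiating Lemma~\ref{mainresult_srtypestrongconverse_sr} with $\alpha=\frac{\log n}{n}$, converting its mutual-information bounds into upper bounds on $R(Q_X,D_{1,n})$ and $\rvR(R_{1,n}',D_{1,n},D_{2,n}|Q_X)$ by feasibility of the supplied test channel, and then conditioning on the type of $X^n$ to conclude. You also correctly flag the typographical slip $R_{1,n}\leftrightarrow R_{1,n}'$ in the $\rvR$ argument; the only superfluous remark is the appeal to monotonicity of $\rvR$ in its first argument, which, as you note yourself at the end, is unnecessary once the constraint $I(Q_X,Q_{\hatX_1|X})\leq R_{1,n}'$ is used to certify feasibility directly.
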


Choose $\log M_1=nR_1^*+L_1\sqrt{n}+\beta_n$ and $\log (M_1 M_2)=n(R_1^*+R_2^*)+L_2\sqrt{n}+\beta_n$. Recall the shorthand notation $\eta_n=\frac{\log n}{n}$. Now for $x^n$ such that $\hat{T}_{x^n}\in\calA_n(P_X)$, applying Taylor's expansion in a similar manner as~\eqref{taylor_1} and~\eqref{taylor_2}, invoking Lemma~\ref{lbexcessp} and noting that $\Pr\left\{\calF\cap \calG\right\}\geq \Pr\{\calF\}-\Pr\{\calG^{\mathrm{c}}\}$, we obtain 
\begin{align}
1-\rmP_{\rme,n}(D_1,D_2)
\nn&\leq \Pr\bigg\{\frac{1}{n}\sum_{i\in[n]}\jmath(X_i,D_1|P_X)\leq R_1^*+\frac{L_1}{\sqrt{n}}+O\left(\eta_n\right),\\*
\nn&\qquad\qquad \frac{1}{n}\sum_{i\in[n]}\jmath(X_i,R_1^*)\leq R_1^*+R_2^*+\xi^*\frac{L_1}{\sqrt{n}}+\frac{L_2}{\sqrt{n}}+O(\eta_n)\bigg\}\\*
&\qquad+\frac{1}{n}+\frac{2|\calX|}{n^2}\label{eqn:dmscon}.
\end{align}
Note that in \eqref{eqn:dmscon}, we Taylor expand $R(\hat{T}_{X^n},D_{1,n})$ around the source distribution $P_X$ and distortion level $D_1$.  We also Taylor expand the minimal sum rate function $\rvR(R_{1,n},D_1,D_2|\hat{T}_{X^n})$ at $(P_X,D_1,D_2)$. The residual terms when we Taylor expand with respect to the distortion levels are of the order $O(\frac{1}{n})$, which can be absorbed into $O(\eta_n)$. Furthermore, recall that we use $\jmath(x,R_1^*)$ and $\jmath(x,R_1^*,D_1,D_2|P_X)$ interchangeably.

The rest of converse proof can be done similarly as the achievability part in Section~\ref{secondach} by using the uni- or multi-variate  Berry-Esseen Theorem for Cases (i), (ii) and (iii).

\subsection{Proof of a Special Case}
\label{proofsrmain}
We now present a proof for the special case where the source-distortion measure triplet is successively refinable. Recall that for this case, $\xi^*=0$, $\nu_1^*=0$, and $\jmath(x,R_1^*,D_1,D_2|P_X)=\jmath(x,D_2|P_X)$ for $R(P_X,D_1)\leq R_1^*<R(P_X,D_2)$. For the achievability part, invoking~\eqref{eqn:dmsach}, we obtain
\begin{align}
1-\rmP_{\rme,n}(D_1,D_2)
\nn&\geq \Pr\bigg\{\frac{1}{n}\sum_{i\in[n]}\left(\jmath(X_i,D_1|P_X)- R_1^*\right)\leq \frac{L_1}{\sqrt{n}}+O\left(\eta_n\right),\\*
\nn&\qquad\qquad \frac{1}{n}\sum_{i\in[n]}\left(\jmath(X_i,D_2|P_X)- (R_1^*+R_2^*)\right)\leq \frac{L_2}{\sqrt{n}}+O(\eta_n)\bigg\}\\*
&\qquad-\frac{2|\calX|}{n^2}\label{eqn:dmsachspecial}.
\end{align}
According to the assumption in \eqref{eqn:ones} of Corollary~\ref{srmainresult_sr}, we have $\jmath(X_i,D_1|P_X)- R_1^*=\jmath(X_i,D_2|P_X)-(R_1^*+R_2^*)$. Given a random variable $X$ and two real numbers $a<b$, we obtain $\Pr\{X<a~\mathrm{and~}X<b\}=\Pr\{X<a\}$. Hence,
\begin{align}
\nn&1-\rmP_{\rme,n}(D_1,D_2)\\*
&\geq \Pr\bigg\{\frac{1}{n}\sum_{i\in[n]}\left(\jmath(X_i,D_1|P_X)- R_1^*\right)\leq \frac{\min\{L_1,L_2\}}{\sqrt{n}}+O\left(\eta_n\right)\bigg\}.
\end{align}
The rest of the proof is similar to Case (i) in Section~\ref{secondach}.

Using~\eqref{eqn:dmscon}, similarly to the achievability part, we complete the proof of converse part.

\subsection{Alternative Converse Proof}
\label{altproof}
We next present an alternative converse proof of Corollary~\ref{srmainresult_sr} using the finite blocklength converse bound in \cite[Lemma 15]{zhou2016second} that generalizes Theorem \ref{converse:fbl} for the rate-distortion problem.

\begin{lemma}
\label{tiltedconverse}
Given any $(\gamma_1,\gamma_2)\in\bbR_+^2$, any $(n,M_1,M_2)$-code for the successive refinement satisfies
\begin{align}
\rmP_{\rme,n}(D_1,D_2)
\nn&\geq \Pr\Big\{\sum_{i\in[n]}\jmath(X_i,D_1|P_X)\geq \log M_1+\gamma_1~\mathrm{or}\\*
\nn&\qquad \qquad\sum_{i\in[n]}\jmath(X_i,D_2|P_X)\geq \log(M_1M_2)+\gamma_2\Big\}\nn\\*
&\qquad-\exp(-n\gamma_1)-\exp(-n\gamma_2).
\end{align}
\end{lemma}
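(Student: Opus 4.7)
} My plan is to reduce the claim to two applications of the one-shot converse bound in Theorem \ref{converse:fbl} for the rate-distortion problem, by observing that the successive refinement code induces two auxiliary rate-distortion codes of sizes $M_1$ and $M_1 M_2$ for distortion levels $D_1$ and $D_2$, respectively. Concretely, I will define the events
\begin{align*}
\calA_1 &:= \{d_1(X^n,\hatX_1^n)\leq D_1\}, \qquad \calA_2 := \{d_2(X^n,\hatX_2^n)\leq D_2\}, \\
\calB_1 &:= \Big\{\textstyle\sum_{i\in[n]}\jmath(X_i,D_1|P_X)\geq \log M_1+\gamma_1\Big\}, \\
\calB_2 &:= \Big\{\textstyle\sum_{i\in[n]}\jmath(X_i,D_2|P_X)\geq \log(M_1M_2)+\gamma_2\Big\},
\end{align*}
so that $\rmP_{\rme,n}(D_1,D_2)=\Pr\{\calA_1^\rmc\cup\calA_2^\rmc\}$. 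Splitting $\Pr\{\calB_1\cup\calB_2\}$ according to whether $\calA_1\cap\calA_2$ occurs gives
\begin{align*}
\Pr\{\calB_1\cup\calB_2\}
&\leq \Pr\{\calA_1^\rmc\cup\calA_2^\rmc\} + \Pr\{(\calB_1\cup\calB_2)\cap\calA_1\cap\calA_2\} \\
&\leq \rmP_{\rme,n}(D_1,D_2) + \Pr\{\calB_1\cap\calA_1\} + \Pr\{\calB_2\cap\calA_2\},
\end{align*}
so it suffices to prove $\Pr\{\calB_j\cap\calA_j\}\leq \exp(-n\gamma_j)$ for $j\in[2]$.

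The two tail bounds follow by repeating the change-of-measure argument used in the proof of Theorem \ref{converse:fbl}. For $j=1$, view $(f_1,\phi_1)$ as an $(n,M_1)$-rate-distortion code, introduce the distribution $Q_{S_1}$ uniform on $[M_1]$ and the induced $Q_{\hatX_1^n}(\hatx_1^n)=\sum_{s_1}Q_{S_1}(s_1)P_{\hatX_1^n|S_1}(\hatx_1^n|s_1)$, then apply Markov's inequality on $\bbo(\calB_1)\leq \exp(\sum_i \jmath(X_i,D_1|P_X)-\log M_1-n\gamma_1)$, inflate $d_1(X^n,\hatX_1^n)\leq D_1$ on $\calA_1$ using the nonnegative multiplier $\lambda_1^*=-\partial R(P_X,D)/\partial D|_{D=D_1}$, and finally invoke Claim (iii) of Lemma \ref{prop:dtilteddensity} pointwise in $\hatx_{1,i}$ to collapse the product expectation to $1$. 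This yields exactly $\Pr\{\calB_1\cap\calA_1\}\leq \exp(-n\gamma_1)$. For $j=2$, the same argument applies verbatim after treating the concatenation $(f_1,f_2)\colon\calX^n\to[M_1]\times[M_2]$ as a single encoder of alphabet size $M_1M_2$ and $\phi_2$ as the corresponding decoder, now with $\lambda_2^*=-\partial R(P_X,D)/\partial D|_{D=D_2}$ and distortion-tilted information density $\jmath(\cdot,D_2|P_X)$.

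The main technical point is the second step: one must check that in the two-encoder setting the distortion-tilted information density $\jmath(\cdot,D_2|P_X)$ associated with the $(M_1M_2)$-size code is exactly the distortion-tilted information density of the \emph{marginal} rate-distortion problem at level $D_2$ (i.e.\ the auxiliary parameter $\lambda_1^*$ of the joint problem plays no role here), so that Lemma \ref{prop:dtilteddensity}(iii) applies directly without any coupling between the two bounds. Once this separation is verified, combining the two tail estimates with the splitting above yields
\begin{align*}
\Pr\{\calB_1\cup\calB_2\}-\exp(-n\gamma_1)-\exp(-n\gamma_2)\leq \rmP_{\rme,n}(D_1,D_2),
\end{align*}
which is the claim.
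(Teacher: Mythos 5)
Your proof is correct and is, modulo presentation, the expected one: the monograph does not include a proof of Lemma~\ref{tiltedconverse} (it cites [Lemma 15, zhou2016second]), but your argument --- condition on the non-excess-distortion event $\calA_1\cap\calA_2$, apply the union bound, and then run the change-of-measure/Markov/tilted-density argument of Theorem~\ref{converse:fbl} once for $(f_1,\phi_1)$ as an $(n,M_1)$-code and once for the concatenation $(f_1\times f_2,\phi_2)$ as an $(n,M_1M_2)$-code --- is precisely how that reference, and the analogous non-asymptotic converse bounds in this monograph (Theorem~\ref{fblconverse4kaspi}, Lemma~\ref{oneshotconverse4fy}), proceed.

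One small remark: the item you flag as the ``main technical point'' is not really a delicate issue. The $D_2$-tilted information density $\jmath(\cdot,D_2|P_X)$ and the multiplier $\lambda_2^*$ are defined purely from the marginal rate-distortion function $R(P_X,D_2)$ and its optimal test channel; they do not depend on the structure of the code under analysis, so Lemma~\ref{prop:dtilteddensity}(iii) applies verbatim regardless of whether the decoder $\phi_2$ takes one index or a pair of indices as input. There is no coupling to verify. The coupling you are worrying about appears only in the \emph{stronger} bound of Lemma~\ref{lemma:oneshot4sr}, which is phrased in terms of the rate-distortions-tilted information density $\jmath_{\rm SR}(\cdot|R_1^*,D_1,D_2,P_X)$ and the derivative $\xi^*$; that bound is tight for general DMS while Lemma~\ref{tiltedconverse} is only tight in the successively refinable case, which is exactly the trade-off the section is making. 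Finally, the lemma statement as printed has $\gamma_j$ inside the probability but $\exp(-n\gamma_j)$ as the penalty; for the two to come from the same Markov-inequality step (and to match how the lemma is applied afterward with $\gamma_j=\frac{\log n}{2n}$), the threshold should read $\log M_1+n\gamma_1$ and $\log(M_1M_2)+n\gamma_2$, which is the version you have implicitly proved.
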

 
Choose  $\gamma_1=\gamma_2=\frac{\log n}{2n}$. Let $\log M_1=nR_1^*+L_1\sqrt{n}-\frac{1}{2}\log n$ and $\log(M_1M_2)=n(R_1^*+R_2^*)+L_2\sqrt{n}-\frac{1}{2}\log n$. Invoking Lemma~\ref{tiltedconverse}, we obtain
\begin{align}
1-\rmP_{\rme,n}(D_1,D_2)\nn&\leq \frac{2}{\sqrt{n}}+\Pr\Big\{\sum_{i\in[n]}\jmath(X_i,D_1|P_X)<nR_1^*+L_1\sqrt{n}\\*
&\qquad\mathrm{and}~\sum_{i\in[n]}\jmath(X_i,D_2|P_X)<n(R_1^*+R_2^*)+L_2\sqrt{n}\Big\}\label{ntiltedconverse}.
\end{align}
The rest of the proof is similar to the converse proof of Corollary~\ref{srmainresult_sr} in Section~\ref{proofsrmain}. We remark that this alternative converse proof also applies to continuous memoryless sources, such as a GMS under quadratic distortion measures and a Laplacian source with absolute distortion measures~\cite{zhong2006type}. 

A stronger non-asymptotic converse bound is provided in \cite[Corollary 2]{kostina2019sr}, which holds for any memoryless source and yields an alternative converse proof of Theorem \ref{mainresult_sr}. The same bound is also presented in Lemma \ref{lemma:oneshot4sr} in the next chapter, which is obtained as a special case of the non-asymptotic converse bound in Theorem \ref{oneshotconverse4fy} for the Fu-Yeung problem.

\chapter{Fu-Yeung Problem}
\label{chap:fu-yeung}

In this chapter, we study a special case of the multiple descriptions problem~\cite{wolf1980source,gamal1982achievable,zhang1987new,ahlswede1986multiple,zamir1999,Kramer2003,pradhan2008} with two encoders and three decoders proposed by Fu and Yeung~\cite{fu2002rate} and thus we term the problem as the Fu-Yeung problem. The Fu-Yeung problem generalizes the successive refinement problem by adding an additional decoder that aims to recover a deterministic function of the source sequence losslessly. The rate-distortion region was characterized by Fu and Yeung~\cite[Theorem 1]{fu2002rate}, which collects rate pairs to ensure reliable lossy compression at two decoders and reliable lossless data compression at the other decoder.  For this special case of multiple descriptions, the El Gamal-Cover inner bound~\cite{gamal1982achievable} was proved optimal.

Through the lens of the Fu-Yeung problem, this chapter reveals the tradeoff between encoders for simultaneous lossless and lossy compression. We will present a non-asymptotic converse bound and second-order asymptotics for the Fu-Yeung problem. Specifically, we first present properties of the minimal sum rate function given the rate of one encoder. Subsequently, we generalize the rate-distortions-tilted information for the successive refinement problem to the Fu-Yeung problem and present a non-asymptotic converse bound. This non-asymptotic bound, when specialized to the case where $|\calY|=1$, gives a stronger non-asymptotic converse bound for the successive refinement problem than Lemma \ref{tiltedconverse}. Finally, we present the second-order asymptotics for a DMS under bounded distortion measures and illustrate the results with numerical examples. This chapter is largely based on~\cite{zhou2017fy} and the second part of \cite{zhou2017non}.

\section{Problem Formulation and Asymptotic Result}

\subsection{Problem Formulation}
The setting for the Fu-Yeung problem is shown in Figure~\ref{systemmodelmd}. There are two encoders and three decoders. Each encoder $f_i,~i=1,2$ has access to the source sequence $X^n$ and compresses it into a message $S_i,~i=1,2$. Decoder $\phi_1$ aims to recover $X^n$ with distortion level $D_1$ using the encoded message $S_1$ from encoder $f_1$. Decoder $\phi_2$ aims to recover $X^n$ with distortion level $D_2$ using encoded messages $S_1$ and $S_2$. Decoder $\phi_3$ aims to recover $Y^n$, which is a symbolwise deterministic function  of the source sequence $X^n$. 
\begin{figure}[htbp]
\centering
\setlength{\unitlength}{0.5cm}
\scalebox{0.8}{
\begin{picture}(26,9)
\linethickness{1pt}
\put(2,5.5){\makebox{$X^n$}}
\put(6,1){\framebox(4,2)}

\put(6,7){\framebox(4,2)}
\put(7.7,1.8){\makebox{$f_2$}}
\put(7.7,7.8){\makebox{$f_1$}}
\put(1,5){\line(1,0){3.5}}
\put(4.5,5){\line(0,1){3}}
\put(4.5,8){\vector(1,0){1.5}}
\put(4.5,5){\line(0,-1){3}}
\put(4.5,2){\vector(1,0){1.5}}
\put(15,4){\framebox(4,2)}
\put(15,1){\framebox(4,2)}
\put(15,7){\framebox(4,2)}
\put(16.7,4.8){\makebox{$\phi_2$}}
\put(16.7,7.7){\makebox{$\phi_1$}}
\put(16.7,1.7){\makebox{$\phi_3$}}
\put(10,2){\vector(1,0){5}}
\put(12,2.5){\makebox(0,0){$S_2$}}
\put(10,8){\vector(1,0){5}}
\put(12,8.5){\makebox(0,0){$S_1$}}
\put(13,8){\line(0,-1){2.6}}
\put(13,5.4){\vector(1,0){2}}
\put(13,2){\line(0,1){2.6}}
\put(13,4.6){\vector(1,0){2}}
\put(19,2){\vector(1,0){4}}
\put(20,2.5){\makebox{$Y^n$}}
\put(19,8){\vector(1,0){4}}
\put(19.7,8.5){\makebox{$(\hatX_1^n,D_1)$}}
\put(19,5){\vector(1,0){4}}
\put(19.7,5.5){\makebox{$(\hatX_2^n,D_2)$}}
\end{picture}}
\caption{System model for the Fu-Yeung problem of multiple descriptions with one Semi-deterministic decoder~\cite{fu2002rate}.}
\label{systemmodelmd}
\end{figure}
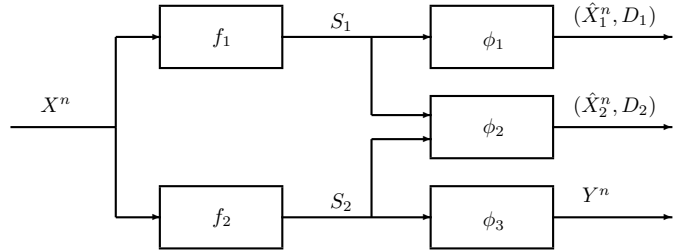

Consider a memoryless source $X^n$ generated i.i.d. from a probability mass function $P_X$ supported on a finite alphabet $\calX$. Let reproduction alphabets for decoders $\phi_1,\phi_2$ be $\hatcalX_1$ and $\hatcalX_2$ respectively. Fix a finite set $\calY$ and define a deterministic function $g:\calX \to \calY$. Let $Y_i=g(X_i),~i\in[1:n]$. Note that $P_Y$ is induced by the source distribution $P_X$ and the deterministic function $g$, i.e., for $y\in\calY$, $P_{Y}(y)=\sum_{x:g(x)=y}P_X(x)$. We assume that for each $y$, $P_Y(y)>0$. Decoder $\phi_3$ is required to recover $Y^n=g(X^n)=(g(X_1),\ldots,g(X_n))$ losslessly and the decoded sequence is denoted as $\hat{Y}^n$. We follow the definitions of codes and the rate-distortion region in~\cite{fu2002rate}.
\begin{definition}
\label{def:code4fy}
An $(n,M_1,M_2)$-code for the Fu-Yeung problem consists of two encoders:
\begin{align}
f_1:&\calX^n\to\calM_1=[M_1],\\
f_2:&\calX^n\to\calM_2=[M_2],
\end{align}
and three decoders:
\begin{align}
\phi_1&:\calM_1\to\mathcal{\hatX}_1^n,\\
\phi_2&:\calM_1\times\calM_2\to \mathcal{\hatX}_2^n,\\
\phi_3&:\calM_2\to \calY^n.
\end{align}
\end{definition}

Using the encoding and decoding functions, we have $\hatX_1^n=\phi_1(f_1(X^n))$, $\hatX_2^n=\phi_2(f_1(X^n),f_2(X^n))$ and $\hatY^n=\phi_3(f_2(X^n))$. Let $d_{\rmH}$ denote the Hamming distortion measure in \eqref{def:hamming} and let the average distortion between $y^n$ and its reproduced version $\haty^n$ be defined as $d_{\rmH}(Y^n,\hat{Y}^n):=\frac{1}{n}\sum_{i\in[n]}d_{\rmH}(Y_i,\hat{Y}_i)$. For each $i\in[2]$, let the distortion function $d_i:\calX\times\hatcalX_i\to[0,\infty)$ be a bounded distortion measure and let $d(x^n,\hatx_i^n)=\sum_{j\in[n]}d_i(x_j,\hatx_{j,i})$. The rate-distortion region for the Fu-Yeung problem is defined as follows.
\begin{definition}
A rate pair $(R_1,R_2)$ is said to be $(D_1,D_2)$-achievable for the Fu-Yeung problem if there exists a sequence of $(n,M_1,M_2)$-codes such that
\begin{align}
\limsup_{n\to\infty}\frac{\log M_i}{n}\leq R_i,~i=1,2,\label{def:ratecons4fy}
\end{align}
and
\begin{align}
\limsup_{n\to\infty} \bbE[d_i(X^n,\hatX_{i}^n)]&\leq D_i,~i=1,2,\\
\lim_{n\to\infty} \bbE[d_{\rmH}(Y^n,\hat{Y}^n)]&=0.
\end{align}
The closure of the set of all $(D_1,D_2)$-achievable rate pairs is called the first-order $(D_1,D_2)$-coding region and denoted as $\calR(D_1,D_2|P_X)$.
\end{definition}

\subsection{Rate-Distortion Region}
The first-order coding region $\calR(D_1,D_2|P_X)$ was characterized by Fu and Yeung in~\cite{fu2002rate} for DMS. In particular, Fu and Yeung~\cite{fu2002rate} showed that the El-Gamal-Cover inner bound~\cite{gamal1982achievable} for the multiple description coding problem is tight. 

To present the result, let $\calP(P_X,D_1,D_2)$ be the set of all pairs of conditional distributions $(P_{\hatX_1|X},P_{\hatX_2|X\hatX_1})\in\calP(\hatcalX_1|X)\times\calP(\hat\calX_2|\calX\hatcalX_1)$ such that $\bbE[d_1(X,\hatX_1)]\leq D_1$ and $\bbE[d_2(X,\hatX_2)]\leq D_2$. Given a pair of conditional distributions $(P_{\hatX_1|X},P_{\hatX_2|X\hatX_1})$, let $\calR(P_{\hatX_1|X},P_{\hatX_2|X\hatX_1})$ be the collection of rate pairs $(R_1,R_2)\in\bbR_+^2$ such that 
\begin{align}
R_1&\geq I(X;\hatX_1),\\
R_2&\geq H(Y),\\
R_1+R_2&\geq H(Y)+I(\hatX_1;Y)+I(X;\hatX_1,\hatX_2|Y).
\end{align}

\begin{theorem}
\label{mdfirst}
The rate-distortion region for the Fu-Yeung problem satisfies
\begin{align}
\calR(D_1,D_2|P_X)
&=\bigcup_{\substack{(P_{\hatX_1|X},P_{\hatX_2|X\hatX_1})\\\in \calP(P_X,D_1,D_2)}}\calR(P_{\hatX_1|X},P_{\hatX_2|X\hatX_1}).
\end{align}
\end{theorem}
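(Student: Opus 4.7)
The plan is to prove achievability and converse separately, following Fu and Yeung. For achievability, I will construct a coding scheme that combines lossless compression of $Y^n$ with a Wyner-Ziv style binning refinement for $\hatX_2^n$. Fix any pair $(P_{\hatX_1|X},P_{\hatX_2|X\hatX_1})\in\calP(P_X,D_1,D_2)$ and let $P_{\hatX_1}$ and $P_{\hatX_2|Y}$ be the induced marginals. Generate an i.i.d.\ codebook $\{\hatX_1^n(m_1)\}$ of size $\exp(n(I(X;\hatX_1)+\delta))$ from $P_{\hatX_1}$; conditioned on each $Y^n$, generate an i.i.d.\ codebook $\{\hatX_2^n(k)\}$ of size $\exp(n(I(X,\hatX_1;\hatX_2|Y)+\delta))$ from $P_{\hatX_2|Y}$ and uniformly partition it into $\exp(n(I(X;\hatX_2|Y,\hatX_1)+2\delta))$ bins. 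Encoder~1 sends the index $m_1$ of a $\hatX_1^n(m_1)$ jointly typical with $X^n$. Encoder~2 computes $Y^n=g(X^n)$, reconstructs the same $m_1$ using the shared codebook and its observation $X^n$, locates a $\hatX_2^n(k)$ jointly typical with $(X^n,Y^n,\hatX_1^n(m_1))$, and transmits $Y^n$ losslessly together with the bin index of $k$. Decoder~3 outputs $Y^n$ directly; decoder~1 outputs $\hatX_1^n(m_1)$; decoder~2 uses $(m_1,Y^n)$ as side information to identify $k$ within its bin via joint typicality. Standard covering and packing lemma arguments yield the desired distortion levels and vanishing lossless error. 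The resulting rate pair satisfies $R_1\geq I(X;\hatX_1)$, $R_2\geq H(Y)$, and $R_1+R_2\geq H(Y)+I(X;\hatX_1)+I(X;\hatX_2|Y,\hatX_1)$, which equals $H(Y)+I(\hatX_1;Y)+I(X;\hatX_1,\hatX_2|Y)$ by the chain rule and the fact that $Y$ is a function of $X$.

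For the converse, apply Fano's inequality to decoder~3 to obtain $H(Y^n|S_2)\leq n\epsilon_n$ with $\epsilon_n\to 0$. The bound $R_1\geq I(X;\hatX_1)$ follows from $nR_1\geq H(S_1)\geq I(X^n;\hatX_1^n)$ and standard single-letterization. The bound $R_2\geq H(Y)$ follows from $nR_2\geq H(S_2)\geq I(S_2;Y^n)\geq H(Y^n)-n\epsilon_n=nH(Y)-n\epsilon_n$. For the sum rate, write
\begin{align}
n(R_1+R_2)
&\geq H(S_1)+H(S_2)\\
&\geq H(Y^n)+I(Y^n;S_1)+I(X^n;S_1,S_2|Y^n)-n\epsilon_n,
\end{align}
where the second inequality uses Fano's bound on $H(Y^n)-I(Y^n;S_2)$ together with the identity $I(X^n;S_1|Y^n)+I(X^n;S_2|Y^n)\geq I(X^n;S_1,S_2|Y^n)$, which relies on $I(S_1;S_2|X^n,Y^n)=0$ since both encoders are deterministic given $X^n$. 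Data processing then gives $I(Y^n;S_1)\geq I(Y^n;\hatX_1^n)$ and $I(X^n;S_1,S_2|Y^n)\geq I(X^n;\hatX_1^n,\hatX_2^n|Y^n)$. Single-letterization via a time-sharing variable $Q$ uniform on $[n]$, combined with the i.i.d.\ structure of $X^n$ and $Y_i=g(X_i)$, converts these multi-letter mutual informations into the corresponding single-letter quantities evaluated under a test channel in $\calP(P_X,D_1,D_2)$, after taking $n\to\infty$.

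The main obstacle I anticipate lies in the sum-rate converse, where a naive chain-rule decomposition yields the equivalent form $H(Y^n)+I(X^n;\hatX_1^n)+I(X^n;\hatX_2^n|Y^n,\hatX_1^n)$, and one must carefully justify that the lossless recovery of $Y^n$ from $S_2$ alone can be converted into a clean $H(Y)$ contribution to the sum rate without double-counting the information already carried by $S_1$. The key step is exploiting that both encoders are deterministic functions of $X^n$ to decompose $I(X^n;S_1,S_2|Y^n)$ into $I(X^n;S_1|Y^n)+I(X^n;S_2|Y^n)-I(S_1;S_2|Y^n)$ and then absorb the residual cross term appropriately. Single-letterization is then standard but requires verifying that the extracted test channel has the required conditional structure $P_{\hatX_1|X}P_{\hatX_2|X\hatX_1}$ and meets the distortion constraints, which follows from the memorylessness of the source and the determinism of $g$.
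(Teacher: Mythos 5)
The monograph does not actually prove Theorem~\ref{mdfirst}; it simply recalls the characterization from Fu and Yeung~\cite{fu2002rate}, noting that they showed the El Gamal--Cover inner bound for multiple descriptions to be tight in this semi-deterministic case. There is therefore no ``paper's own proof'' to compare against; I can only assess your proposal on its merits and against the cited route.

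Your approach is sound in outline and somewhat different in flavor from the route the paper cites. Rather than invoking the general El Gamal--Cover scheme and specializing, you construct a direct scheme: cover $X^n$ by a marginal $\hatX_1^n$-codebook, losslessly describe $Y^n=g(X^n)$, and then refine to $\hatX_2^n$ via Wyner--Ziv binning with $(Y^n,\hatX_1^n)$ as decoder-$2$ side information. The rate accounting and the chain-rule identity $I(X;\hatX_1)+I(X;\hatX_2|Y,\hatX_1)=I(\hatX_1;Y)+I(X;\hatX_1,\hatX_2|Y)$ (using $Y=g(X)$) are correct. On the converse side, the decomposition $H(S_1)+H(S_2)\geq H(Y^n)+I(Y^n;S_1)+I(X^n;S_1,S_2|Y^n)-n\epsilon_n$ is correct, and the super-additivity step $I(X^n;S_1|Y^n)+I(X^n;S_2|Y^n)\geq I(X^n;S_1,S_2|Y^n)$ does follow precisely because $I(S_1;S_2|X^n,Y^n)=0$ (deterministic encoders), since the difference equals $I(S_1;S_2|Y^n)-I(S_1;S_2|X^n,Y^n)\geq 0$; your phrasing (``identity'') is loose but the underlying reasoning is valid.

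Two gaps deserve attention. First, the achievability scheme as written realizes only the corner point $R_1\approx I(X;\hatX_1)$, $R_2\approx H(Y)+I(X;\hatX_2|Y,\hatX_1)$ of the pentagon $\calR(P_{\hatX_1|X},P_{\hatX_2|X\hatX_1})$. To claim the full region you must add a rate-splitting (or time-sharing) argument: since both $S_1$ and $S_2$ reach decoder~$2$, the bin index of $\hatX_2^n$ can be arbitrarily split between the two links; the lossless description of $Y^n$ must remain entirely on $S_2$ (decoder~$3$ sees only $S_2$), and the $\hatX_1^n$-index must remain entirely on $S_1$ (decoder~$1$ sees only $S_1$). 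This traverses the diagonal edge $R_1+R_2=H(Y)+I(\hatX_1;Y)+I(X;\hatX_1,\hatX_2|Y)$ down to $R_2=H(Y)$, and padding then covers the rest. Second, Fano's inequality for decoder~$3$ should be applied in its per-symbol (vector Hamming) form, since the criterion in Definition~\ref{def:code4fy} is vanishing \emph{average} Hamming distortion, $\lim_n\bbE[d_\rmH(Y^n,\hatY^n)]=0$, not vanishing block-error probability; the bound $H(Y^n|S_2)\leq n\epsilon_n$ is still correct under this criterion, but the justification should use $H(Y^n|\hatY^n)\leq n\Hb(p_n)+np_n\log(|\calY|-1)$ with $p_n=\bbE[d_\rmH(Y^n,\hatY^n)]$.
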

When the $Y$ is a constant, i.e., $|\calY|=1$, the rate-distortion region in Theorem \ref{mdfirst} reduced to the rate-distortion region of the successive refinement problem. The rate-distortion function of the Kaspi problem can also be recovered from Theorem \ref{mdfirst} as the minimal rate $R_1$ by setting $R_2=H(Y)$ and choosing the source as $X=(S_1,S_2)$ and the side information as $Y=S_2$ for correlated discrete random variables $(S_1,S_2)$.

Although Theorem \ref{mdfirst} was derived under the average distortion criterion, the same rate-distortion region holds when one considers a vanishing  joint excess-distortion and error probability $\rmP_{\rme,n}(D_1,D_2)$ defined as follows:
\begin{align}
\rmP_{\rme,n}(D_1,D_2)
&:=\Pr\Big\{d_1(X^n,\hatX_1^n)>D_1~\mathrm{or}~d_2(X^n,\hatX_2^n)>D_2~\mathrm{or}~\hatY^n\neq Y^n\Big\}\label{def:error4fy}.
\end{align}
The reason is analogous to why Theorem \ref{lossy:Shannon} derived under the average distortion criterion still holds under the excess-distortion probability criterion for the rate-distortion problem.

\subsection{Boundary Rate Pairs}
We next discuss conditions for a rate pair $(R_1^*,R_2^*)$ to be on the boundary of the rate-distortion region $\calR(D_1,D_2|P_X)$, which enables our definition and analyses of second-order asymptotics.

Given any distributions $(P_{\hatX_1|X},P_{\hatX_2|X\hatX_1})$, let $P_{XY}$, $P_{X|Y}$, $P_{\hatX_1}$, $P_{Y\hatX_1}$, $P_{X\hatX_1}$, $P_{X\hatX_2}$, $P_{XY\hatX_1}$, $P_{\hatX_1|XY}$ and $P_{\hatX_2|Y\hatX_1}$ be induced by $P_X$, $P_{\hatX_1|X}$, $P_{\hatX_2|X\hatX_1}$ and the deterministic function $g:\calX\to\calY$. Recall the definition of $\calP(P_X,D_1,D_2)$ above Theorem \ref{mdfirst}. Given any rate $R_1$ of encoder $f_1$, define the following function
\begin{align}
\rvR(R_1,D_1,D_2|P_X)&:=\min_{\substack{(P_{\hatX_1|X},P_{\hatX_2|X\hatX_1})\\\in\calP(P_X,D_1,D_2):\\R_1\geq I(X;\hatX_1)}} I(\hatX_1;Y)+I(X;\hatX_1,\hatX_2|Y)\label{rvrmin}.
\end{align} 
It follows from the rate-distortion region in Theorem \ref{mdfirst} that given a rate $R_1$ of encoder $f_1$, the minimal achievable sum rate is \\$\rvR(R_1,D_1,D_2|P_X)+H(P_Y)$. Furthermore, the minimal achievable rate $R_1$ for encoder $f_1$ is the rate-distortion function $R(P_X,D_1)$~\cite{cover2012elements} and the minimal achievable rate $R_2$ for encoder $f_2$ is the entropy $H(P_Y)$. When $R_1=R(P_X,D_1)$, the minimal achievable rate $R_2$ is 
\begin{align}
\rvR_2^*(D_1,D_2|P_X)&:=H(P_Y)+\rvR(R(P_X,D_1),D_1,D_2|P_X)-R(P_X,D_1),
\end{align}
and when $R_2=H(P_Y)$, the minimal achievable rate $R_1$ is
\begin{align}
\rvR_1^*(D_1,D_2|P_X)&:=\min_{\substack{P_{\hatX_1|X},P_{\hatX_2|X\hatX_1}\\\in\calP(P_X,D_1,D_2)}} I(\hatX_1;Y)+I(X;\hatX_1,\hatX_2|Y)\label{def:rfyd1d2},
\end{align}
since $\rvR_1^*(D_1,D_2|P_X)$ is the solution to $R_1=\rvR(R_1,D_1,D_2|P_X)$. With these observations, we find all cases of boundary rate pairs and illustrate it in Figure \ref{rateregion}. Note that the Curve from case (ii) to Case (iv) is drawn as a line segment for ease of plot. In fact, it should be a convex curve.
\begin{figure}[t]
\centering
\begin{picture}(115, 135)
\setlength{\unitlength}{.47mm}
\put(30,50){\circle*{4}}
\put(17,52){\footnotesize (ii)}
\put(19,66){\footnotesize (i)}
\put(30,68){\circle*{4}}
\put(40,40){\circle*{4}}
\put(42,40){\footnotesize (iii)}
\put(50,30){\circle*{4}}
\put(51,24){\footnotesize (iv)}
\put(68,30){\circle*{4}}
\put(69,24){\footnotesize (v)}
\put(0, 10){\vector(1,0){110}}
\put(10, 0){\vector(0,1){110}}
\put(50, 30){\line(1, 0){55}}
\put(30, 50){\line(0,1){55}}
\put(50, 30){\line(-1, 1){20}}
\put(110,8){ $R_1$}
\put(-6, 105){$R_2$}
\multiput(10,50)(4,0){5}{\line(1,0){2}}
\put(-5, 50){$\overline{R_2}$}
\put(-10, 30){$\underline{R_2}$}
\multiput(10,30)(4,0){10}{\line(1,0){2}}
\multiput(50,10)(0,4){5}{\line(0,1){2}}
\multiput(30,10)(0,4){10}{\line(0,1){2}}
\put(25, 0){$\underline{R_1}$}
\put(50, 0){$\overline{R_1}$}
\end{picture}
\caption{Illustration of boundary rate pairs on the rate-distortion region of the Fu-Yeung problem, where $\underline{R_1}=R(P_X,D)$, $\overline{R_1}=\rvR_1^*(D_1,D_2|P_X)$, $\underline{R_2}=H(P_Y)$ and $\overline{R_2}=\rvR_2^*(D_1,D_2|P_X)$.
}
\label{rateregion}
\end{figure}
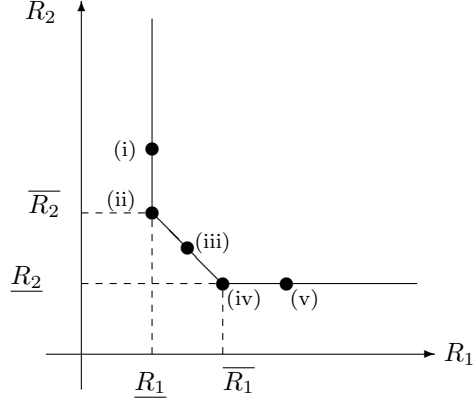

\section{Minimal Sum Rate Function and Its Properties}
\label{sec:mainresults4fy}

\subsection{Definitions}

Note that \eqref{rvrmin} is a convex optimization problem. Assume that $(R_1,D_1,D_2)$ is chosen such that $\rvR(R_1,D_1,D_2|P_X)$ is finite. Therefore, there exist test channels achieving $\rvR(R_1,D_1,D_2|P_X)$. Let $(\xi^*,\lambda_1^*,\lambda_2^*)$ be the optimal solutions to the dual problem of $\rvR(R_1,D_1,D_2|P_X)$, i.e.,
\begin{align}
\xi^*:=-\frac{\partial \rvR(R,D_1,D_2|P_X)}{\partial R}\Bigg|_{R=R_1}\label{def:s*},\\
\lambda_1^*:=-\frac{\partial \rvR(R_1,D,D_2|P_X)}{\partial D}\Bigg|_{D=D_1}\label{def:t1*},\\
\lambda_2^*:=-\frac{\partial \rvR(R_1,D_1,D|P_X)}{\partial D}\Bigg|_{D=D_2}\label{def:t2*}.
\end{align}

Given distributions $(Q_{\hatX_1},Q_{\hatX_2|Y\hatX_1})$ and $(x,y,\hatx_1)$, define the following two functions
\begin{align}
\nn&\beta_2(x,y,\hatx_1|Q_{\hatX_2|Y\hatX_1})\\*
&:=\Big\{\bbE_{Q_{\hatX_2|Y\hatX_1}}\Big[\exp(-\lambda_2^*d_2(x,\hatX_2))\big|Y=y,\hatX_1=\hatx_1\Big]\Big\}^{-1},\label{def:b2q}\\
\nn&\beta(x,y|Q_{\hatX_1},Q_{\hatX_2|Y\hatX_1})\\*
&:=\Bigg\{\bbE_{Q_{\hatX_1}}\Bigg[\exp\Bigg(-\frac{\lambda_1^*d_1(x,\hatX_1)}{1+\xi^*}-\frac{\log\beta_2(x,y,\hatX_1|Q_{\hatX_2|Y\hatX_1})}{1+\xi^*}\Bigg)\Bigg]\Bigg\}^{-1}\label{def:bq}.
\end{align}

\subsection{Properties}
We first present the properties of the optimal test channels that achieve \eqref{rvrmin}.
\begin{lemma}
\label{proptest4fy}
A pair of test channels $(P^*_{\hatX_1|X},P_{\hatX_2|X\hatX_1}^*)$ achieves \\$\rvR(R_1,D_1,D_2|P_X)$ if and only if
\begin{itemize}
\item For all $(x,y,\hatx_1,\hatx_2)$ such that $y=g(x)$,
\begin{align}
\nn&P_{\hatX_1|X}^*(\hatx_1|x)=\beta(x,y|P_{\hatX_1}^*,P_{\hatX_2|Y\hatX_1}^*)P^*_{\hatX_1}(\hatx_1)\\*
&\qquad\quad\times\exp\Bigg(-\frac{\lambda_1^*d_1(x,\hatx_1)+\log\beta_2(x,y,\hatx_1|P^*_{\hatX_2|Y\hatX_1})}{1+\xi^*}\Bigg)\label{optcond2},
\end{align}
\item For all $(x,y,\hatx_1,\hatx_2)$ such that $y=g(x)$ and $P_{\hatX_1|X}^*(\hatx_1|x)>0$
\begin{align}
\nn P_{\hatX_2|X\hatX_1}^*(\hatx_2|x,\hatx_1)&=\beta_2(x,y,\hatx_1|P_{\hatX_2|Y\hatX_1}^*)\\*
&\qquad\times P^*_{\hatX_2|Y\hatX_1}(\hatx_2|y,\hatx_1)\exp(-\lambda_2^*d_2(x,\hatx_2))\label{optcond1}.
\end{align}
\item For all $(x,\hatx_1,\hatx_2)$ such that $P_{\hatX_1|X}^*(\hatx_1|x)=0$, $P_{\hatX_2|X\hatX_1}^*(\cdot|x,\hatx_1)$ can be arbitrary distribution.
\end{itemize} 
Furthermore, if a pair of channels $(P^*_{\hatX_1|X},P_{\hatX_2|X\hatX_1}^*)$ achieves $\rvR(R_1,D_1,D_2)$, the following claims hold.
\begin{itemize}
\item The parametric representation of $\rvR(R_1,D_1,D_2|P_X)$ is
\begin{align}
\rvR(R_1,D_1,D_2|P_X)
\nn&=(1+\xi^*)\bbE_{P_{XY}}[\log\beta(X,Y|P_{\hatX_1}^*,P_{\hatX_2|Y\hatX_1^*})]\\*
&\qquad-\xi^*R_1-\lambda_1^*D_1-\lambda_2^*D_2\label{para4fy}.
\end{align}
\item For $(x,y,\hatx_1,\hatx_2)$ such that $y=g(x)$ and $P_{\hatX_1}^*(\hatx_1)P_{\hatX_2|Y\hatX_1}^*(\hatx_2|g(x),\hatx_1)>0$,
\begin{align}
\nn&(1+\xi^*)\log \beta(x,y|P_{\hatX_1}^*,P_{\hatX_2|Y\hatX_1}^*)\\*
&=(1+\xi^*)\log\frac{P_{\hatX_1|X}^*(\hatx_1|x)}{P_{\hatX_1}^*(\hatx_1)}+\log\frac{P_{\hatX_2|X\hatX_1}^*(\hatx_2|x,\hatx_1)}{P_{\hatX_2|Y\hatX_1}^*(\hatx_2|y,\hatx_1)}+\sum_{i\in[2]}\lambda_i^*d_i(x,\hatx_i)\label{expantypical}.
\end{align}
\end{itemize}
\end{lemma}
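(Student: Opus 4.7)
The plan is to prove Lemma \ref{proptest4fy} via the Karush-Kuhn-Tucker (KKT) conditions for the convex optimization problem defining $\rvR(R_1,D_1,D_2|P_X)$ in \eqref{rvrmin}. Since $Y=g(X)$ is a deterministic function of $X$, the joint distribution $P_{XY\hatX_1\hatX_2}$ is fully parametrized by the conditional pair $(P_{\hatX_1|X},P_{\hatX_2|X\hatX_1})$, and the objective $I(\hatX_1;Y)+I(X;\hatX_1,\hatX_2|Y)$ is convex in these conditionals for fixed $P_X$. The constraints defining the feasible set (the rate constraint $I(X;\hatX_1)\leq R_1$, the two average distortion constraints, and the probability simplex constraints) are linear or convex, so Slater's condition is satisfied and the KKT conditions are necessary and sufficient. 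This generalizes the KKT-based argument used in Lemma \ref{opttest4kaspi} for the Kaspi problem and the standard characterization of the rate-distortion function in \cite[Properties 1-3]{kostina2012converse}.

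First I would form the Lagrangian by dualizing the rate constraint with multiplier $\xi^*$ from \eqref{def:s*}, the two distortion constraints with multipliers $\lambda_1^*$ and $\lambda_2^*$ from \eqref{def:t1*} and \eqref{def:t2*}, and the probability simplex constraints on $P_{\hatX_2|X\hatX_1}(\cdot|x,\hatx_1)$ and $P_{\hatX_1|X}(\cdot|x)$ with additional multipliers. Differentiating with respect to $P_{\hatX_2|X\hatX_1}(\hatx_2|x,\hatx_1)$ and setting the derivative to zero yields a Gibbs-type stationarity equation whose partition function over $\hatx_2$ is exactly $\beta_2(x,y,\hatx_1|P_{\hatX_2|Y\hatX_1}^*)^{-1}$ from \eqref{def:b2q}; normalizing gives \eqref{optcond1}. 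Next, I would substitute \eqref{optcond1} back to eliminate $P_{\hatX_2|X\hatX_1}$ and differentiate with respect to $P_{\hatX_1|X}(\hatx_1|x)$. The coefficient $(1+\xi^*)$ in the exponent of \eqref{optcond2} appears naturally because the $I(X;\hatX_1)$ term (weighted by $\xi^*$) and the $I(\hatX_1;Y)+I(X;\hatX_1,\hatX_2|Y)$ term each contribute a factor of $1$ when differentiated in the direction of $\log P_{\hatX_1}$; after normalization over $\hatx_1$, the partition function equals $\beta(x,y|P_{\hatX_1}^*,P_{\hatX_2|Y\hatX_1}^*)^{-1}$ from \eqref{def:bq}.

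For the parametric representation \eqref{para4fy}, I would take the logarithm of \eqref{optcond2}, multiply by $(1+\xi^*)$, and then take expectation with respect to $P_X\times P_{\hatX_1|X}^*$, using \eqref{optcond1} to rewrite $\log\beta_2$ in terms of $\log\frac{P_{\hatX_2|X\hatX_1}^*}{P_{\hatX_2|Y\hatX_1}^*}+\lambda_2^*d_2$; the resulting expectation collects into $I(\hatX_1;Y)+I(X;\hatX_1,\hatX_2|Y)+\xi^*I(X;\hatX_1)$, and invoking the complementary slackness conditions $I(X;\hatX_1^*)=R_1$ and $\bbE[d_i(X,\hatX_i^*)]=D_i$ gives \eqref{para4fy}. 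Equation \eqref{expantypical} then follows directly by combining \eqref{optcond2} and \eqref{optcond1} on the support and eliminating $\log\beta_2$. The converse direction of the ``if and only if'' claim follows from strict convexity of the program on the relevant support, which makes the KKT point unique (up to free assignments on zero-probability entries, as the lemma explicitly allows in the third bullet).

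The hard part will be carefully managing the coupling between the two encoders' test channels through the $I(\hatX_1;Y)$ term, which is absent in both the rate-distortion and Kaspi settings and is responsible for the unusual $(1+\xi^*)$-weighted exponent in \eqref{optcond2}; tracking this weight correctly through the normalization of $\beta$ requires writing the derivative as a sum of contributions from all three mutual informations and recognizing that two of them contribute linearly in $\log P_{\hatX_1}$ on the relevant support. A secondary subtlety is the deterministic map $Y=g(X)$: throughout the Lagrangian analysis one must restrict attention to $(x,y)$ with $y=g(x)$, which is why the lemma's equations carry this qualifier and why the induced $P_{\hatX_2|Y\hatX_1}^*$ (as opposed to $P_{\hatX_2|X\hatX_1}^*$) appears inside $\beta_2$ and $\beta$. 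Finally, I would verify that the values of $P_{\hatX_2|X\hatX_1}^*$ on pairs $(x,\hatx_1)$ with $P_{\hatX_1|X}^*(\hatx_1|x)=0$ may be arbitrary by noting that such entries contribute zero to every term of the Lagrangian, so the KKT conditions are vacuous there.
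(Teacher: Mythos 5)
Your proposal is correct and takes essentially the same route the paper implicitly adopts: the paper does not give a self-contained proof of Lemma \ref{proptest4fy}, instead referring to Csisz\'ar's parametric representation, Watanabe's Lemma 3, Lemma \ref{nule1} for the Kaspi problem, and Lemma \ref{propertytilted_sr} for the successive refinement problem, all of which run precisely the Lagrangian/KKT argument you outline. The cleanest way to see where the $(1+\xi^*)$ exponent comes from — and you gesture at this but it is worth making explicit — is that for $Y=g(X)$ one has $I(\hatX_1;Y)+I(X;\hatX_1|Y)=I(X;\hatX_1)$, so the Lagrangian objective collapses to $(1+\xi^*)I(X;\hatX_1)+I(X;\hatX_2|Y,\hatX_1)+\lambda_1^*\bbE[d_1]+\lambda_2^*\bbE[d_2]+\text{const}$, and the $(1+\xi^*)$ weight on the first term is exactly what produces the $1/(1+\xi^*)$ in the exponent of \eqref{optcond2} after normalization. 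One small imprecision: you invoke ``strict convexity'' to justify the biconditional. Strict convexity is neither needed nor generally available here (the optimal test channels need not be unique, as the paper itself notes below Lemma \ref{proptest4fy}); the ``if'' direction (KKT $\Rightarrow$ optimal) holds for any convex program, and the ``only if'' direction (optimal $\Rightarrow$ KKT) follows from Slater's condition, which you already invoked earlier in the proposal. Your computations of \eqref{para4fy} from complementary slackness and of \eqref{expantypical} by eliminating $\log\beta_2$ between \eqref{optcond2} and \eqref{optcond1} are correct as stated.
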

The proof of Lemma \ref{proptest4fy} is similar to \cite[Lemma 1.4]{csiszar1974}, \cite[Lemma 3]{watanabe2015second}, Lemma \ref{nule1} for the Kaspi problem and Lemma \ref{propertytilted_sr} for the successive refinement problem.

Similarly as \cite{watanabe2015second}, we can show that, for any pair of optimal test channels $(P_{\hatX_1|X}^*,P_{\hatX_2|X\hatX_1}^*)$, the value of $\beta(x,y|P_{\hatX_1}^*,P_{\hatX_2|Y\hatX_1}^*)$ and \\$\beta_2(x,y,\hatx_1|P_{\hatX_2|Y\hatX_1}^*)$ remain the same. From now on, fix a pair of test channels $(P^*_{\hatX_1|X},P_{\hatX_2|X\hatX_1}^*)$ such that that i) \eqref{optcond2}, \eqref{optcond1} hold; ii) for any $(y,\hatx_1)$ such that $P_{Y\hatX_1}^*(y,\hatx_1)=0$, the induced distribution defined as $P_{\hatX_2|Y\hatX_1}^*(\hatx_2|y,\hatx_1):=\sum_x P_X(x)\bbo(y=g(x))P_{\hatX_2|X\hatX_1}^*(\hatx_2|x,\hatx_1)$ satisfies
\begin{align}
\nn P_{\hatX_2|Y\hatX_1}^*
&=\argsup_{Q_{\hatX_2|Y\hatX_1}} \bbE_{P_{X|y}}\bigg[\beta(X,y)\beta_2^{-\frac{1}{1+\xi^*}}(X,y,\hatx_1|Q_{\hatX_2|Y\hatX_1})\\*
&\qquad\qquad\qquad\qquad\times\exp\Big(-\frac{\lambda_1^*}{1+\xi^*}d_1(X,\hatx_1)\Big)\bigg]\label{chooseopt}.
\end{align}
Note that the choice of $P_{\hatX_2|X\hatX_1}^*$ satisfying \eqref{chooseopt} is possible since the set $\{x:g(x)=y\}$ is disjoint for each $y\in\calY$.

For simplicity, given any $(x,y,\hatx_1)$, let
\begin{align}
\beta_2(x,y,\hatx_1)
&:=\beta_2(x,y,\hatx_1|P_{\hatX_2|Y\hatX_1}^*),\label{def:b2}\\
\beta(x,y)&:=\beta(x,y|P_{\hatX_1}^*,P_{\hatX_2|Y\hatX_1}^*)\label{def:b}\\
\imath_1(x,y,\hatx_1)&=\log \beta(x,y)-\frac{1}{1+\xi^*}\log\beta_2(x,y,\hatx_1)\label{def:i14fy}\\
\imath_2(x,y,\hatx_1)&:=\log \beta(x,y)+\frac{\xi^*}{1+\xi^*}\log \beta_2(x,y,\hatx_1)\label{def:i24fy}.
\end{align}
Furthermore, given any $\hatx_1$ and arbitrary conditional distribution $Q_{\hatX_2|Y\hatX_1}$, define
\begin{align}
w_1(\hatx_1)
&:=\bbE_{P_{XY}}\bigg[\exp\bigg(\imath_1(X,Y,\hatx_1)-\frac{\lambda_1^*d_1(X,\hatx_1)}{1+\xi^*}\bigg)\bigg]\label{def:w1},\\
w_2(\hatx_1,Q_{\hatX_2|Y\hatX_1})
\nn&:=\bbE_{P_{XY}\times Q_{\hatX_2|Y\hatX_1}}\bigg[\exp\big(\imath_2(X,Y,\hatx_1)-\frac{\lambda_1^*}{1+\xi^*}d_1(X,\hatx_1)\\*
&\qquad\qquad\qquad-\lambda_2^*d_2(X,\hatX_2)\big)\Big|\hatX_1=\hatx_1\bigg].\label{def:w2}
\end{align}

In the following, we present an important property of the quantities in \eqref{def:w1} and \eqref{def:w2}.
\begin{lemma}
\label{wleq1}
Given any $(P^*_{\hatX_1|XY},P_{\hatX_2|X\hatX_1}^*)$ satisfying \eqref{optcond2}, \eqref{optcond1}, and \eqref{chooseopt}, for any $\hatx_1\in\hatcalX_1$ and arbitrary distribution $Q_{\hatX_2|Y\hatX_1}$, 
\begin{align}
w_2(\hatx_1,Q_{\hatX_2|Y\hatX_1})\leq w_1(\hatx_1)\leq 1\label{w2leqw1le1}.
\end{align}
\end{lemma}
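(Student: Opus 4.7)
The plan is to establish the two inequalities in \eqref{w2leqw1le1} separately: first $w_1(\hatx_1)\leq 1$ for every $\hatx_1\in\hatcalX_1$, and then $w_2(\hatx_1,Q_{\hatX_2|Y\hatX_1})\leq w_1(\hatx_1)$ for every conditional distribution $Q_{\hatX_2|Y\hatX_1}$. The strategy parallels the proof of Lemma~\ref{nule1} for the Kaspi problem (cf.\ \cite[Appendix A]{zhou2017non}) and the Csisz\'ar/Tuncel type arguments used in \cite{csiszar1974,tuncel2003comp,kostina2019sr}, but is more intricate because the Lagrange multiplier $\xi^*$ associated with the rate constraint $I(X;\hatX_1)\leq R_1$ enters nonlinearly through the factor $\beta_2^{\xi^*/(1+\xi^*)}$.

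First I would prove $w_1(\hatx_1)\leq 1$ directly from the KKT stationarity conditions for the convex program \eqref{rvrmin}. For $\hatx_1\in\supp(P^*_{\hatX_1})$, the parametric identity \eqref{optcond2} can be divided through by $P^*_{\hatX_1}(\hatx_1)$ and averaged against $P_X$; using $y=g(x)$ and the definition \eqref{def:w1} this gives $w_1(\hatx_1)=1$. For $\hatx_1\notin\supp(P^*_{\hatX_1})$, the same derivative-of-Lagrangian computation produces only the one-sided complementary slackness inequality, which is exactly $w_1(\hatx_1)\leq 1$. This is the standard inequality underlying rate-distortion converses and is analogous to Claim~(iii) of Lemma~\ref{prop:dtilteddensity} and to \cite[Lemma~2]{watanabe2015second}.

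Next I would show $w_2(\hatx_1,Q)\leq w_1(\hatx_1)$. The starting observation is algebraic: the identity $\imath_2(x,y,\hatx_1)=\imath_1(x,y,\hatx_1)+\log\beta_2(x,y,\hatx_1)$ together with $\beta_2^{-1}(x,y,\hatx_1)=\bbE_{P^*_{\hatX_2|Y\hatX_1}}[\exp(-\lambda_2^*d_2(x,\hatX_2))\mid y,\hatx_1]$ yields $w_2(\hatx_1,P^*_{\hatX_2|Y\hatX_1})=w_1(\hatx_1)$, so it suffices to prove that $P^*_{\hatX_2|Y\hatX_1}(\cdot|y,\hatx_1)$ maximizes the linear functional
\[
Q(\cdot|y,\hatx_1)\mapsto \sum_{x,\hatx_2}P_{XY}(x,y)Q(\hatx_2|y,\hatx_1)\exp\!\Big(\imath_2(x,y,\hatx_1)-\tfrac{\lambda_1^*d_1(x,\hatx_1)}{1+\xi^*}-\lambda_2^*d_2(x,\hatx_2)\Big)
\]
separately for each $(y,\hatx_1)$. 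When $P^*_{Y\hatX_1}(y,\hatx_1)>0$, I would derive this maximality from the KKT equality \eqref{optcond1} for $P^*_{\hatX_2|X\hatX_1}$ by summing $P^*_{\hatX_1|X}(\hatx_1|x)P_X(x)$ against it and applying the dual inequality on the off-support $\hatx_2$'s; this identifies the partition function $\beta_2(x,y,\hatx_1)$ with its value at $P^*$ for the active $x$'s, at which point the linear functional above is maximized precisely at $P^*$. When $P^*_{Y\hatX_1}(y,\hatx_1)=0$, the maximality is exactly the selection condition \eqref{chooseopt} (after raising its integrand to the power $1+\xi^*$ and using Jensen/H\"older to convert the $(\cdot)^{1/(1+\xi^*)}$ appearing in \eqref{chooseopt} to a direct inequality on $w_2$).

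The main obstacle is the treatment of the off-support cases $P^*_{\hatX_1}(\hatx_1)=0$ and $P^*_{Y\hatX_1}(y,\hatx_1)=0$, where the KKT equalities \eqref{optcond2}--\eqref{optcond1} degenerate. Without the particular choice of $P^*_{\hatX_2|Y\hatX_1}$ fixed by \eqref{chooseopt}, two optima of \eqref{rvrmin} could assign different values of $\beta_2(x,y,\hatx_1)$ on these null sets, and the identity $w_2(\hatx_1,P^*)=w_1(\hatx_1)$ alone would not force a pointwise inequality. The delicate bookkeeping that combines the dual slackness from the first inequality with the variational extremality in \eqref{chooseopt}---so that the bound $w_2(\hatx_1,Q)\le w_1(\hatx_1)$ holds \emph{pointwise} in $\hatx_1$ and \emph{uniformly} in $Q$---constitutes the technical crux of the proof.
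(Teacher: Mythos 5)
Your overall structure---proving $w_1(\hatx_1)\le 1$ from the KKT stationarity/complementary-slackness for the $P_{\hatX_1|X}$-coordinate, observing via $\imath_2=\imath_1+\log\beta_2$ that $w_2(\hatx_1,P^*_{\hatX_2|Y\hatX_1})=w_1(\hatx_1)$, and reducing $w_2\le w_1$ to showing that $P^*_{\hatX_2|Y\hatX_1}(\cdot|y,\hatx_1)$ maximizes a $y$-wise linear functional over the simplex---matches the shape of the omitted proof (the paper defers to the Kaspi-problem analogue, Lemma \ref{nule1}; compare the pointwise inequality step in \eqref{verify1}--\eqref{verify2}, which is exactly the linear-functional maximality you identify). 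Your algebraic bookkeeping, in particular $\imath_2-\imath_1=\log\beta_2$ and the cancellation of $\beta_2\cdot\beta_2^{-1}$ once $Q=P^*$, is correct.

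The one place you are imprecise is the handling of \eqref{chooseopt} when $P^*_{Y\hatX_1}(y,\hatx_1)=0$. Raising the integrand of \eqref{chooseopt} to the power $1+\xi^*$ does not commute with the expectation over $P_{X|y}$, so the Jensen/H\"older step as stated does not linearize the functional. What does work, and is the clean way to close this part, is a first-order optimality (variational inequality) argument. Set
\begin{align}
\Phi(x,Q)&:=\sum_{\hatx_2}Q(\hatx_2|y,\hatx_1)\exp(-\lambda_2^*d_2(x,\hatx_2)),\\
F(Q)&:=\bbE_{P_{X|y}}\!\Big[\beta(X,y)\,\Phi(X,Q)^{\frac{1}{1+\xi^*}}\exp\!\Big(-\tfrac{\lambda_1^*d_1(X,\hatx_1)}{1+\xi^*}\Big)\Big].
\end{align}
Since $t\mapsto t^{1/(1+\xi^*)}$ is concave for $\xi^*\ge 0$ and $\Phi(x,\cdot)$ is affine, $F$ is concave in $Q$, and \eqref{chooseopt} asserts that $P^*_{\hatX_2|Y\hatX_1}(\cdot|y,\hatx_1)$ maximizes $F$ over the probability simplex. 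The first-order optimality condition for a concave maximization over a convex set then gives $\langle\nabla F(P^*),\,Q-P^*\rangle\le 0$ for every feasible $Q$. Differentiating and substituting $\Phi(x,P^*)=\beta_2^{-1}(x,y,\hatx_1)$ gives
\begin{align}
\frac{\partial F}{\partial Q(\hatx_2)}\bigg|_{Q=P^*}
&=\frac{1}{1+\xi^*}\,\bbE_{P_{X|y}}\!\Big[\exp\!\Big(\imath_2(X,y,\hatx_1)-\tfrac{\lambda_1^*d_1(X,\hatx_1)}{1+\xi^*}\Big)\exp(-\lambda_2^*d_2(X,\hatx_2))\Big],
\end{align}
which is, up to the positive factor $(1+\xi^*)P_Y(y)$, precisely the coefficient of $Q(\hatx_2|y,\hatx_1)$ in the $y$-slice of $w_2(\hatx_1,\cdot)$. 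The variational inequality therefore reads exactly as $w_2(\hatx_1,Q)\le w_2(\hatx_1,P^*)=w_1(\hatx_1)$ on each $y$-slice. With this substitution for the Jensen/H\"older step, your outline is complete and coincides with the paper's implicit approach.
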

The proof of Lemma \ref{wleq1} is inspired by \cite[Lemma 5]{tuncel2003comp}, \cite[Theorem 2]{kostina2019sr} and omitted due to similarity to Lemma \ref{nule1} for the Kaspi problem. We remark that Lemmas \ref{proptest4fy} and \ref{wleq1} hold for any memoryless source, not restricted to a DMS. As we shall show, the result in Lemma \ref{wleq1} leads to a non-asymptotic converse bound for the Fu-Yeung problem.

\section{Rate-Distortions-Tilted Information Density}
Recall that $P_{XY}$ is induced by $P_X$ and the deterministic function $g:\calX\to\calY$.
\begin{definition}
For any $(x,y)\in\calX\times\calY$ such that $y=g(x)$, the rate-distortions-tilted information density for the Fu-Yeung problem is defined as
\begin{align}
\jmath(x,y|R_1,D_1,D_2,P_X)
&:=(1+\xi^*)\log \beta(x,y)-\xi^*R_1-\lambda_1^*D_1-\lambda_2^*D_2\label{def:j4fy},
\end{align}
where $\beta(\cdot)$ was defined in \eqref{def:b}
\end{definition}
The properties of $\jmath(x,y|R_1,D_1,D_2,P_X)$ follow from Lemma \ref{proptest4fy}. For example, it follows from \eqref{para4fy} that
\begin{align}
\rvR(R_1,D_1,D_2|P_X)&=\bbE_{P_{XY}}[\jmath(X,Y|R_1,D_1,D_2,P_X)]\\
&=\bbE_{P_X}[\jmath(X,g(X)|R_1,D_1,D_2,P_X)].
\end{align}
Let $\jmath(x,D_1|P_X)$ be the $D_1$-tilted information density in \eqref{def:dtilteddensity}, i.e.,
\begin{align}
\jmath(x,D_1|P_X)&:=-\log \Big(\sum_{\hatx_1}P_{\hatX_1}^*(\hatx_1)\exp(-\lambda^*(d_1(x,\hatx_1)-D_1))\Big)\label{djkostina},
\end{align}
where $P_{\hatX_1}^*$ is induced by the source distribution $P_X$ and the optimal test channel $P_{\hatX_1|X}^*$ for the rate-distortion function $R(P_X,D_1)$ (cf. \eqref{def:rd}) and $\lambda^*=-\frac{\partial R(P_X,D)}{\partial D}|_{D=D_1}$.

Furthermore, similarly to the proofs Lemma \ref{kaspifirstderive} for the Kaspi problem and Claim (iii) in Lemma \ref{propertytilted_sr},  we have the following lemma that further relates the rate-distortions-tilted information density with the derivative of the minimum sum rate function with respect to the distribution $P_X$ for the Fu-Yeung problem.
\begin{lemma}
\label{firstderive4fy}
Suppose that for all $Q_X$ in the neighborhood of $P_X$, $\mathrm{supp}(Q_{\hat{X}_1\hat{X}_2}^*)=\mathrm{supp}(P_{\hat{X}_1\hat{X}_2}^*)$. Then for any $a\in\mathrm{supp}(P_X)$,
\begin{align}
\frac{\partial \rvR(R_1,D_1,D_2|Q_X)}{\partial Q_X(a)}\Bigg|_{Q_X=P_X}&=\jmath(x,g(x)|R_1,D_1,D_2,P_X)-(1+s^*)\label{derivative4fy}.
\end{align}
\end{lemma}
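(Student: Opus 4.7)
\textbf{Proof proposal for Lemma \ref{firstderive4fy}.}

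The plan is to follow the same template used for the analogous derivative identities in Claim (iv) of Lemma~\ref{prop:dtilteddensity} (rate--distortion) and in Lemma~\ref{kaspifirstderive} (Kaspi problem), namely: start from the parametric representation of $\rvR$, use an envelope--theorem argument to kill derivatives that act through the optimized variables, and then evaluate the explicit partial derivative. Concretely, I will write
\[
\rvR(R_1,D_1,D_2|Q_X)
=\bbE_{Q_X}\!\big[\jmath(X,g(X)\mid R_1,D_1,D_2,Q_X)\big]
=\sum_{x}Q_X(x)\,\psi_{Q_X}(x),
\]
where $\psi_{Q_X}(x):=(1+\xi^*(Q_X))\log\beta_{Q_X}(x,g(x))-\xi^*(Q_X)R_1-\lambda_1^*(Q_X)D_1-\lambda_2^*(Q_X)D_2$, and all subscripts $Q_X$ emphasize that the dual multipliers $(\xi^*,\lambda_1^*,\lambda_2^*)$ and the function $\beta$ (through the optimal output marginal $P_{\hatX_1}^*$ and the optimal conditional $P_{\hatX_2|Y\hatX_1}^*$) are determined by $Q_X$ via the convex program in \eqref{rvrmin}.

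My first step will be to establish the envelope identity: when differentiating $\rvR(R_1,D_1,D_2|Q_X)$ in $Q_X(a)$ at $Q_X=P_X$, contributions coming through $(\xi^*(Q_X),\lambda_1^*(Q_X),\lambda_2^*(Q_X))$ vanish because these are dual variables at a saddle point of the Lagrangian associated with \eqref{rvrmin}, and contributions coming through $(P_{\hatX_1|X}^*,P_{\hatX_2|X\hatX_1}^*)$ vanish by the primal first--order optimality conditions recorded in Lemma~\ref{proptest4fy} together with the support--preservation hypothesis (which guarantees that the optimizers vary smoothly in a neighborhood of $P_X$). What remains is the explicit derivative of $\sum_{x}Q_X(x)\psi_{Q_X}(x)$ with respect to $Q_X(a)$, which equals $\psi_{P_X}(a)=\jmath(a,g(a)\mid R_1,D_1,D_2,P_X)$ plus a correction coming from the fact that the normalization of the optimal marginal $P_{\hatX_1}^*$ depends on $Q_X$.

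Pinning down that correction is the main step, and I expect it to be the main obstacle. The trick will be to use the defining identity \eqref{def:bq} for $\beta$ together with \eqref{optcond2}: summing \eqref{optcond2} over $\hatx_1$ yields $\sum_{\hatx_1}P_{\hatX_1}^*(\hatx_1)\exp\!\big(-(\lambda_1^*d_1(x,\hatx_1)+\log\beta_2(x,g(x),\hatx_1))/(1+\xi^*)\big)=1/\beta(x,g(x))$, and averaging over $Q_X$ the analogue of the normalization $\sum_{\hatx_1}P_{\hatX_1}^*(\hatx_1)=1$ introduces, upon differentiation, exactly the factor $(1+\xi^*)$ that multiplies the Lagrange multiplier for $\sum_{x}Q_X(x)=1$. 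Differentiating that constraint contributes $-(1+\xi^*)$ to the partial, producing the advertised $-(1+\xi^*)$ correction. This parallels the $-1$ correction in the rate--distortion case, where $\xi^*=0$, and the analogous derivation for the Kaspi problem.

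Once the envelope argument and the normalization bookkeeping are in place, the identity
\[
\frac{\partial \rvR(R_1,D_1,D_2|Q_X)}{\partial Q_X(a)}\bigg|_{Q_X=P_X}
=\jmath(a,g(a)\mid R_1,D_1,D_2,P_X)-(1+\xi^*)
\]
will follow immediately. The delicate point I will need to verify carefully is that the envelope argument is legitimate even though the primal problem has two stacked test channels $(P_{\hatX_1|X},P_{\hatX_2|X\hatX_1})$ coupled through $\beta_2(\cdot)$; here the selection rule \eqref{chooseopt} that pins down $P_{\hatX_2|Y\hatX_1}^*$ on zero--probability pairs $(y,\hatx_1)$ is essential, because it makes $\beta(x,y)$ a Gâteaux--differentiable function of $Q_X$ throughout the assumed neighborhood, so that the chain--rule manipulations above are justified.
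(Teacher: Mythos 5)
Your overall template (write $\rvR$ via the parametric representation, use an envelope argument to kill derivatives through the optimizers, then explicitly evaluate the residual derivative) is the right one, and is indeed how the analogous claims for the Kaspi and successive-refinement problems are argued. But your account of where the $-(1+\xi^*)$ comes from is both muddled and, as stated, wrong. There is no ``Lagrange multiplier for $\sum_x Q_X(x)=1$'' in the program \eqref{rvrmin}: $Q_X$ is a \emph{parameter} there, not a decision variable, so differentiating ``that constraint'' is not a meaningful operation, and the sentence in which you invoke it does not yield a computation. This is not a cosmetic issue --- you flag this very step as ``the main obstacle'' and then leave it at an argument that cannot be made rigorous.

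The correct mechanism is the \emph{direct} $Q_X$-dependence of the induced output marginal $P_{\hatX_1}^*(\hatx_1)=\sum_x Q_X(x)P_{\hatX_1|X}^*(\hatx_1|x)$ (i.e.\ the marginalization, as distinct from the implicit dependence through the optimizer $P_{\hatX_1|X}^*$, which is what the envelope argument removes). Using \eqref{optcond2} one gets $\partial\log\beta(x,g(x))/\partial P_{\hatX_1}^*(\hatx_1)=-P_{\hatX_1|X}^*(\hatx_1|x)/P_{\hatX_1}^*(\hatx_1)$, hence $\sum_x P_X(x)\,\partial\jmath(x,g(x)|\cdot)/\partial P_{\hatX_1}^*(\hatx_1)=-(1+\xi^*)$ on $\supp(P_{\hatX_1}^*)$; multiplying by the direct derivative $\partial P_{\hatX_1}^*(\hatx_1)/\partial Q_X(a)=P_{\hatX_1|X}^*(\hatx_1|a)$ and summing over $\hatx_1$ gives exactly $-(1+\xi^*)$. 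You would also need to check that the analogous direct contribution through $P_{\hatX_2|Y\hatX_1}^*$ (entering via $\beta_2$) vanishes --- it does, because after summing over $\hatx_2$ both $P_{\hatX_2|X\hatX_1}^*$ and $P_{\hatX_2|Y\hatX_1}^*$ integrate to one --- but your proposal does not address it at all. A cleaner route that sidesteps this bookkeeping is to apply the envelope theorem to the Lagrangian of \eqref{rvrmin} directly and use the elementary identity $\partial I(X;\hatX)/\partial Q_X(a)\big|_{P_{\hatX|X}\text{ fixed}}=D(P_{\hatX|X}(\cdot|a)\,\|\,P_{\hatX})-1$: the objective and the rate constraint contribute mutual-information terms with total coefficient $1+\xi^*$, so the $-1$'s add to $-(1+\xi^*)$, and the KL and distortion terms reassemble into $\jmath(a,g(a)|R_1,D_1,D_2,P_X)$ via Lemma~\ref{proptest4fy}. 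As it stands, your proposal identifies the structure correctly and states the right answer, but the central step is neither carried out nor correctly explained.
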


\section{A Non-Asymptotic Converse Bound}
We next present a non-asymptotic converse bound for the Fu-Yeung problem. Given any $\gamma\in\bbR_+$, define the following three sets:
\begin{align}
\calA_1^n&:=\Big\{(x^n,y^n):\sum_{i\in[n]}\jmath(x_i,D_1|P_X)\geq \log M_1+n\gamma\Big\},\\
\calA_2^n&:=\Big\{(x^n,y^n):-\sum_{i\in[n]}\log P_Y(y_i)\geq \log M_2+n\gamma\Big\},\\
\calA_3^n\nn&:=\Big\{(x^n,y^n):\sum_{i\in[n]}\jmath(x_i,y_i|R_1,D_1,D_2,P_X)\geq \log M_1M_2\\*
&\qquad\qquad\qquad\qquad\qquad\qquad+\xi^*\log M_1+(1+\xi^*)n\gamma\Big\}.
\end{align}
\begin{lemma}
\label{oneshotconverse4fy}
Any $(n,M_1,M_2)$-code for the Fu-Yeung problem satisfies that for any $\gamma\geq 0$,
\begin{align}
\rmP_{\rme,n}(D_1,D_2)
&\geq \Pr\Big\{(X^n,Y^n)\in\bigcup_{i\in[3]}\calA_i^n\Big\}-4\exp(-\gamma).
\end{align}
\end{lemma}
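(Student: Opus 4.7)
The plan is to treat the three events $\calA_1^n$, $\calA_2^n$, $\calA_3^n$ separately, bound each of them in terms of the joint excess-distortion and error probability $\rmP_{\rme,n}(D_1,D_2)$ plus a residual of order $\exp(-n\gamma)$, and then combine via a single union bound so that only one copy of $\rmP_{\rme,n}(D_1,D_2)$ appears. Let $S_i=f_i(X^n)$ for $i\in[2]$ with induced (possibly stochastic) kernels $P_{S_i|X^n}$, let $\hatX_1^n=\phi_1(S_1)$, $\hatX_2^n=\phi_2(S_1,S_2)$, $\hatY^n=\phi_3(S_2)$, and let $\calC(D_1,D_2)$ denote the joint success event $\{d_1(X^n,\hatX_1^n)\le D_1,\ d_2(X^n,\hatX_2^n)\le D_2,\ \hatY^n=Y^n\}$. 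For each $i\in[3]$, write
\begin{align}
\Pr\{(X^n,Y^n)\in\calA_i^n\}\le \Pr\{(X^n,Y^n)\in\calA_i^n,\ \calC(D_1,D_2)\}+\rmP_{\rme,n}(D_1,D_2).\nonumber
\end{align}

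First I would dispose of $\calA_1^n$ and $\calA_2^n$ by direct adaptations of the non-asymptotic converse bounds already established for the rate-distortion and lossless compression problems. For $\calA_1^n$, intersect with $\{d_1(X^n,\hatX_1^n)\le D_1\}$ and replicate the computation from Theorem \ref{converse:fbl}: introduce a uniform auxiliary $Q_{S_1}$ on $[M_1]$ together with the induced $Q_{\hatX_1^n}$, use non-negativity of $\lambda^*$ to insert the factor $\exp(n\lambda^*(D_1-d_1(X^n,\hatX_1^n)))$, and close the computation with Claim (iii) of Lemma \ref{prop:dtilteddensity}. This yields a bound of $\exp(-n\gamma)$ on the intersection term. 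For $\calA_2^n$, intersect with $\{\hatY^n=Y^n\}$ and run the same argument as in Theorem \ref{lossless:converse}, using that at most $M_2$ sequences $y^n$ are correctly decoded by $\phi_3$; this again gives $\exp(-n\gamma)$.

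The main work is the bound on $\calA_3^n$, and the key tool is Lemma \ref{wleq1}. Let $Q_{S_1}$ and $Q_{S_2}$ be uniform on $[M_1]$ and $[M_2]$ respectively, and let $Q_{\hatX_1^n}$, $Q_{\hatX_1^n\hatX_2^n}$ be the induced output distributions. Intersecting $\calA_3^n$ with $\calC(D_1,D_2)$ and applying Markov's inequality, I would bound the intersection by
\begin{align}
\frac{e^{-(1+\xi^*)n\gamma}}{M_1^{1+\xi^*}M_2}\,\bbE\Big[\exp\Big(\sum_{i\in[n]}\jmath(X_i,Y_i|R_1,D_1,D_2,P_X)\Big)\,\bbo(\calC(D_1,D_2))\Big],\nonumber
\end{align}
then use the parametric form \eqref{def:j4fy}--\eqref{expantypical} to rewrite the exponent in terms of $(1+\xi^*)\log\beta(X_i,Y_i)$, absorb the distortion slacks using non-negativity of $\lambda_1^*,\lambda_2^*$ in the style of \eqref{nonnegativelambda}--\eqref{usealpha}, and replace the encoder/decoder kernels by the auxiliary $Q_{S_1}, Q_{S_2}, Q_{\hatX_1^n}, Q_{\hatX_1^n\hatX_2^n}$ (absorbing $M_1 M_2$). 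The resulting sum factorizes over coordinates and collapses via Lemma \ref{wleq1}, which gives $w_2(\hatx_{1,i},Q_{\hatX_{2,i}|Y_i\hatX_{1,i}})\le w_1(\hatx_{1,i})\le 1$; at this stage the presence of the exponent $(1+\xi^*)$ in the $\jmath$-density and the asymmetric weighting $\log M_1 M_2+\xi^*\log M_1$ in the definition of $\calA_3^n$ are exactly what is needed to cancel $M_1^{1+\xi^*}M_2$. This yields a bound of at most $2\exp(-n\gamma)$ on the $\calA_3^n$ intersection (the factor two absorbing the two applications of Lemma \ref{wleq1}, once with $P_{\hatX_2|Y\hatX_1}^*$ and once through the arbitrary $Q_{\hatX_2|Y\hatX_1}$ needed to handle coordinates where $P_{Y\hatX_1}^*=0$).

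The hard part will be the $\calA_3^n$ computation: one has to thread three Lagrangian factors ($\xi^*$ for the rate slack, $\lambda_1^*$ and $\lambda_2^*$ for the two distortion slacks) through a change-of-measure that must be compatible with the two-encoder structure (one kernel for $S_1$ only, a second for $(S_1,S_2)$) and must also respect the lossless constraint $\hatY^n=Y^n$ present in $\calC(D_1,D_2)$. The choice \eqref{chooseopt} of $P_{\hatX_2|Y\hatX_1}^*$ on coordinates of zero mass is precisely what makes Lemma \ref{wleq1} applicable coordinate-wise and is the technical reason for the sharpening of the Kaspi-style argument. Finally, combining the three bounds via
\begin{align}
\Pr\Big\{(X^n,Y^n)\in\bigcup_{i\in[3]}\calA_i^n\Big\}\le \rmP_{\rme,n}(D_1,D_2)+\sum_{i\in[3]}\Pr\{(X^n,Y^n)\in\calA_i^n,\ \calC(D_1,D_2)\},\nonumber
\end{align}
and using the three coordinate-wise estimates $\exp(-n\gamma)+\exp(-n\gamma)+2\exp(-n\gamma)=4\exp(-n\gamma)$, yields the claimed inequality after rearrangement.
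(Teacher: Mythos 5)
Your outer decomposition is fine and matches the paper: write $\Pr\{\bigcup_{i}\calA_i^n\}\le \rmP_{\rme,n}(D_1,D_2)+\sum_{i\in[3]}\Pr\{\calA_i^n\cap\calC(D_1,D_2)\}$, so that only one copy of $\rmP_{\rme,n}$ appears. Your treatments of $\calA_1^n$ (rate--distortion style, slack $\lambda^*(D_1-d_1)$ and Claim (iii) of Lemma~\ref{prop:dtilteddensity}) and of $\calA_2^n$ (lossless style, at most $M_2$ correctly decoded $y^n$) are also the right moves and each yields $\exp(-n\gamma)$.

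The gap is in $\calA_3^n$, and it is not merely ``the hard part'' --- the single Markov step you propose does not close. Markov with threshold $\log M_1M_2+\xi^*\log M_1+(1+\xi^*)n\gamma$ puts $M_1^{1+\xi^*}M_2$ in the denominator, but a change-of-measure on the encoder kernels ($P_{S_1|X^n}\to Q_{S_1}$ uniform on $[M_1]$ and $P_{S_2|X^n}\to Q_{S_2}$ uniform on $[M_2]$) can only produce the integer-power factor $M_1M_2$. More fundamentally, after absorbing the distortion slacks the integrand you are left with is, per coordinate, $\beta(x,y)^{1+\xi^*}\exp(-\lambda_1^*d_1(x,\hatx_1)-\lambda_2^*d_2(x,\hatx_2))$, whereas Lemma~\ref{wleq1} controls quantities built from $\beta\cdot\beta_2^{\pm\cdot/(1+\xi^*)}\exp(-\tfrac{\lambda_1^*d_1}{1+\xi^*}-\ldots)$ --- i.e.\ first powers of $\beta$, not the $(1+\xi^*)$-th power. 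Writing $\beta^{1+\xi^*}\exp(-\lambda_1^*d_1-\lambda_2^*d_2)=A^{\xi^*}B$ with $A=\exp(\imath_1-\tfrac{\lambda_1^*d_1}{1+\xi^*})$ and $B=\exp(\imath_2-\tfrac{\lambda_1^*d_1}{1+\xi^*}-\lambda_2^*d_2)$ and trying H\"older with exponents $\bigl(\tfrac{1+\xi^*}{\xi^*},1+\xi^*\bigr)$ produces $\bbE[A^{1+\xi^*}]^{\xi^*/(1+\xi^*)}\bbE[B^{1+\xi^*}]^{1/(1+\xi^*)}$, not $\bbE[A]\bbE[B]$, so Lemma~\ref{wleq1} still does not apply. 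The claim that ``the presence of the exponent $(1+\xi^*)$ $\ldots$ is exactly what is needed to cancel $M_1^{1+\xi^*}M_2$'' is not supported by a calculation, and your accounting of the factor $2$ via ``two applications of Lemma~\ref{wleq1}'' is a rationalization rather than a step.

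What the paper actually does (and signals explicitly in the discussion following the lemma statement) is to split $\calA_3^n$ \emph{before} applying Markov, using the pointwise identity \eqref{j=j1j24fy}, $\jmath(x,y|R_1,D_1,D_2,P_X)=\xi^*\jmath_1(x,y,\hatx_1,D_1)+\jmath_2(x,y,\hatx_1,D_1,D_2)-\xi^*R_1$, valid for any admissible $\hatx_1$. On $\calC(D_1,D_2)$, substituting the code's $\hatX_{1,i}$ and using $\Pr\{A+B\ge c+d\}\le\Pr\{A\ge c\}+\Pr\{B\ge d\}$ with the split $c_1=\xi^*(\log M_1+n\gamma+nR_1)$, $c_2=\log M_1M_2+n\gamma$ turns $\Pr\{\calA_3^n\cap\calC\}$ into the sum of two probabilities, each handled by one Markov step with a \emph{matched} power: the $\jmath_1$-piece has denominator $M_1$, a change-of-measure to $Q_{S_1}$ gives $M_1$, and the product $\prod_i w_1(\hatx_{1,i})\le 1$ closes it; the $\jmath_2$-piece has denominator $M_1M_2$, a change-of-measure to $Q_{S_1}\times Q_{S_2}$ gives $M_1M_2$, and $\prod_i w_2(\hatx_{1,i},Q_{\hatX_{2,i}|Y_i\hatX_{1,i}})\le 1$ closes it (this is where the arbitrariness of $Q$ in Lemma~\ref{wleq1} and the tie-breaking choice \eqref{chooseopt} are needed). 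Each piece contributes $\exp(-n\gamma)$, hence the factor $2$, and together with $\calA_1^n$ and $\calA_2^n$ one reaches $4\exp(-n\gamma)$. You should replace the single-Markov step on $\calA_3^n$ with this decompose-then-Markov argument.
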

We remark that Lemma \ref{wleq1} plays an important role in the proof of Lemma \ref{oneshotconverse4fy}. This can be made clear by the following definitions. Given $(x,y,\hatx_1,\hatx_2)$, using the definitions of $\imath_1(\cdot)$ in \eqref{def:i14fy} and $\imath_2(\cdot)$ in \eqref{def:i24fy}, we define
\begin{align}
\jmath_1(x,y,\hatx_1,D_1)&:=\imath_1(x,y,\hatx_1)-\frac{\lambda_1^*D_1}{1+\xi^*}\label{def:j14fy},\\
\jmath_2(x,y,\hatx_1,D_1,D_2)&:=\imath_2(x,y,\hatx_1)-\frac{\lambda_1^*D_1}{1+\xi^*}-\lambda_2^*D_2\label{def:j24fy}.
\end{align}
Using the definition of the rate-distortions-tilted information density in \eqref{def:j4fy}, we conclude that
\begin{align}
\jmath(x,y|R_1,D_1,D_2,P_X)=\xi^*\jmath_1(x,y,\hatx_1,D_1)+\jmath_2(x,y,\hatx_1,D_1,D_2)-\xi^*R_1\label{j=j1j24fy}.
\end{align}
In the proof of Lemma \ref{oneshotconverse4fy}, we make use of \eqref{j=j1j24fy} and the fact that $\Pr\{A+B\geq c+d\}\leq \Pr\{A\geq c\}+\Pr\{B\geq d\}$ for any variables $(A,B)$ and constants $(c,d)$.

Recall the setting of the Fu-Yeung problem in Figure \ref{systemmodelmd}. Note that when $Y$ is a constant, i.e. $|\calY|=1$, we recover the setting of the successive refinement problem~\cite{rimoldi1994}. Recall the definitions of an $(n,M_1,M_2)$-code for the successive refinement problem Definition \ref{code:sr}, the definition of the joint excess-distortion probability $\rmP_{\rme,n}^{\rm{SR}}(D_1,D_2)$ in \eqref{defexcessprob_sr}, the definition of the minimal sum rate $\rvR_{\rm{SR}}(R_1,D_1,D_2|P_X)$ in \eqref{minimumr2} and the definition of the rate-distortions tilted information density $\jmath_{\rm{SR}}(x|R_1,D_1,D_2,P_X)$ in \eqref{srtilted}. When $|\calY|=1$, it follows that
\begin{align}
\rvR_{\rm{SR}}(R_1,D_1,D_2|P_X)&=\rvR(R_1,D_1,D_2|P_X),\\
\jmath_{\rm{SR}}(x|R_1,D_1,D_2,P_X)&=\jmath(x,g(x)|R_1,D_1,D_2,P_X)\label{def:j4sr}.
\end{align}
We remark that although the definition of the rate-distortions-tilted information density for the successive refinement problem in the right hand side of \eqref{def:j4sr} appears different from \eqref{srtilted}, the two quantities share same properties (cf. \cite[Lemma 3]{zhou2016second}) and are thus essentially the same. Invoking Lemma \ref{oneshotconverse4fy} with $\calY=\{1\}$, we obtain the following non-asymptotic converse bound for the successive refinement problem.
\begin{lemma}
\label{lemma:oneshot4sr}
Any $(n,M_1,M_2)$-code for the successive refinement problem satisfies that for any $\gamma\geq 0$,
\begin{align}
\rmP_{\rme,n}^{\rm{SR}}(D_1,D_2)
\nn&\geq \Pr\Big\{\sum_{i\in[n]}\jmath(X_i,D_1|P_X)\geq \log M_1+n\gamma~\mathrm{or}\\
\nn&\qquad\quad\sum_{i\in[n]}\jmath_{\rm{SR}}(X_i|R_1,D_1,D_2,P_X)\geq \log M_1M_2\\*
&\qquad\qquad+\xi^*\log M_1+(1+\xi^*)n\gamma\Big\}-4\exp(-n\gamma)\label{converseoneshot4sr}.
\end{align}
\end{lemma}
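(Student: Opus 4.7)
The plan is to derive Lemma \ref{lemma:oneshot4sr} as a direct corollary of Lemma \ref{oneshotconverse4fy} by specializing the Fu-Yeung problem to the trivial alphabet $\calY = \{1\}$, in which case the Fu-Yeung setup collapses to the successive refinement setup. First I would show that any $(n,M_1,M_2)$-code $(f_1,f_2,\phi_1,\phi_2)$ for the successive refinement problem lifts to an $(n,M_1,M_2)$-code for the Fu-Yeung problem with $\calY=\{1\}$ and constant $g(x)=1$: keep the encoders and the decoders $(\phi_1,\phi_2)$ as given, and take $\phi_3$ to be the trivial decoder outputting $1^n$. Since $Y^n = g(X^n) = 1^n$ deterministically, $\hat{Y}^n = Y^n$ holds with probability one, so the Fu-Yeung joint error and excess-distortion probability in \eqref{def:error4fy} reduces to the successive refinement excess-distortion probability $\rmP_{\rme,n}^{\rm{SR}}(D_1,D_2)$ in \eqref{defexcessprob_sr}.

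Next I would verify the identification of the key information-theoretic quantities. When $Y$ is constant, $H(P_Y)=0$, $I(\hatX_1;Y)=0$, and $I(X;\hatX_1,\hatX_2|Y)=I(X;\hatX_1,\hatX_2)$, so the Fu-Yeung minimal sum rate function $\rvR(R_1,D_1,D_2|P_X)$ in \eqref{rvrmin} coincides with the successive refinement counterpart $\rvR_{\rm{SR}}(R_1,D_1,D_2|P_X)$ defined in \eqref{minimumr2}. In particular, the partial derivatives $(\xi^*,\lambda_1^*,\lambda_2^*)$ from \eqref{def:s*}--\eqref{def:t2*} coincide with those used in defining the successive refinement rate-distortions-tilted information density, so by the already-noted identity \eqref{def:j4sr} we have $\jmath(x,g(x)|R_1,D_1,D_2,P_X) = \jmath_{\rm{SR}}(x|R_1,D_1,D_2,P_X)$.

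Then I would show that the middle set $\calA_2^n$ in Lemma \ref{oneshotconverse4fy} drops out. Because $P_Y(1)=1$, the entropy density $-\sum_{i\in[n]}\log P_Y(y_i)$ is identically zero on the support $\{1^n\}$. Consequently $\calA_2^n = \emptyset$ whenever $\log M_2 + n\gamma > 0$, which covers every non-degenerate pair $(M_2,\gamma)$; in the degenerate corner $M_2=1$, $\gamma=0$ the claim of Lemma \ref{lemma:oneshot4sr} is vacuous since the right-hand side is at most $1 - 4 < 0$. Therefore $\bigcup_{i\in[3]}\calA_i^n = \calA_1^n \cup \calA_3^n$, and after substituting the identifications above this union is exactly the event appearing in \eqref{converseoneshot4sr}.

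The only real obstacle is bookkeeping: reconciling the exponent $-4\exp(-\gamma)$ in Lemma \ref{oneshotconverse4fy} with the $-4\exp(-n\gamma)$ in Lemma \ref{lemma:oneshot4sr}. This is handled by the consistent reparametrization $\gamma \mapsto n\gamma$ of the single free parameter in Lemma \ref{oneshotconverse4fy} (noting that all thresholds inside $\calA_1^n,\calA_3^n$ already scale linearly in the chosen parameter, so the rescaling affects only the deduction term). Once this substitution is applied and the three identifications above are invoked, Lemma \ref{oneshotconverse4fy} reduces verbatim to Lemma \ref{lemma:oneshot4sr}, completing the proof.
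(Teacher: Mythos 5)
Your approach---specializing the Fu-Yeung non-asymptotic converse bound (Lemma \ref{oneshotconverse4fy}) to $\calY=\{1\}$---is precisely the paper's one-line proof, and you correctly supply the details the paper leaves implicit: lifting a successive refinement code to a Fu-Yeung code with a trivial $\phi_3$, the collapse of the joint excess-distortion-and-error probability to $\rmP_{\rme,n}^{\rm{SR}}(D_1,D_2)$, the identification of the minimal sum rate and the tilted information density via \eqref{def:j4sr}, and the elimination of $\calA_2^n$ (with the degenerate corner $M_2=1$, $\gamma=0$ correctly observed to make the bound vacuous). The only weak point is your handling of the $\exp(-\gamma)$ versus $\exp(-n\gamma)$ mismatch: substituting $\gamma\mapsto n\gamma$ into Lemma \ref{oneshotconverse4fy} would turn the $n\gamma$ thresholds appearing in $\calA_1^n$ and $\calA_3^n$ into $n^2\gamma$, which does not match the target statement, so the ``reparametrization'' as you describe it does not actually reconcile the two bounds. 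The simpler and correct diagnosis is that the displayed $-4\exp(-\gamma)$ in Lemma \ref{oneshotconverse4fy} is a transcription slip for $-4\exp(-n\gamma)$: the sets $\calA_i^n$ already use $n\gamma$, the analogous converse bounds in Theorem \ref{converse:fbl} and Theorem \ref{fblconverse4kaspi} consistently pair $n\gamma$ thresholds with $\exp(-n\gamma)$ subtraction terms, and Lemma \ref{lemma:oneshot4sr} itself reads $\exp(-n\gamma)$. With that correction understood, your specialization goes through verbatim.
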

Lemma \ref{lemma:oneshot4sr} was also derived by Kostina and Tuncel~\cite[Corollary 2]{kostina2019sr}. We remark that the non-asymptotic converse bound in \eqref{converseoneshot4sr} can be used to establish converse results for second-order asymptotics for any memoryless source, including the results in Theorem \ref{mainresult_sr} for a DMS. Invoking Lemma \ref{lemma:oneshot4sr}, for the successive refinement problem, we have the potential to establish tight second-order asymptotics for non-successively refinable continuous memoryless sources, e.g., a symmetric GMS under quadratic distortion measures~\cite{chowberger}.

\section{Second-Order Asymptotics}
\subsection{Preliminaries}
Let $\varepsilon\in(0,1)$ be fixed and let $(R_1^*,R_2^*)$ be a boundary rate pair on the rate-distortion region  $\calD(D_1,D_2|P_X)$ of the Fu-Yeung problem.
\begin{definition}
Given any $\varepsilon\in(0,1)$, a pair $(L_1,L_2)$ is said to be second-order $(R_1^*,R_2^*,D_1,D_2,\varepsilon)$-achievable for the Fu-Yeung problem if there exists a sequence of $(n,M_1,M_2)$-codes such that
\begin{align}
\limsup_{n\to\infty}\frac{\log M_i-nR_i}{\sqrt{n}}\leq L_i,~i=1,2,
\end{align}
and
\begin{align}
\limsup_{n\to\infty} \rmP_{\rme,n}(D_1,D_2)\leq \varepsilon.
\end{align}
The closure of the set of all second-order $(R_1^*,R_2^*,D_1,D_2,\varepsilon)$-achievable pairs is called the second-order $(R_1^*,R_2^*,D_1,D_2,\varepsilon)$ coding region and denoted as $\calL(R_1^*,R_2^*,D_1,D_2,\varepsilon)$.
\end{definition}

To present characterization of $\calL(R_1^*,R_2^*,D_1,D_2,\varepsilon)$, we need several definitions. Recall that $P_{XY}$ and $P_Y$ are induced by $P_X$ and the deterministic function $g:\calX\to\calY$ and the definition of the source dispersion function (cf. \eqref{def:sourcedispersion}), i.e.,
\begin{align}
\rmV(P_Y)
&=\sum_y P_Y(y)\big(-\log P_Y(y)-H(P_Y)\big)^2\\
&=\sum_x P_X(x)\big(-\log P_Y(g(x))-H(P_Y)\big)^2.
\end{align}
Recall that $\rmV(P_X,D_1)=\mathrm{Var}[\jmath(X,D_1|P_X)]$ is the distortion-dispersion function (cf. \eqref{def:disdispersion}). Let the rate-distortion-dispersion function be
\begin{align}
\rmV(R_1,D_1,D_2|P_X)&:=\mathrm{Var}\Big[\jmath(X,g(X)|R_1,D_1,D_2,P_X)-\log P_Y(Y)\Big].
\end{align}
Define two covariance matrices:
\begin{align}
\bV_1(R_1,D_1,D_2|P_X)
\nn&:=\mathrm{Cov}\Big([\jmath(X,g(X)|R_1,D_1,D_2,P_X)-\log P_Y(g(X))]^{\top},\\*
&\qquad\qquad\qquad\jmath(X,D_1|P_X)\Big),\\
\bV_2(R_1,D_1,D_2|P_X)
\nn&:=\mathrm{Cov}\Big([\jmath(X,g(X)|R_1,D_1,D_2,P_X)-\log P_Y(g(X)),\\*
&\qquad\qquad\qquad-\log P_Y(g(X))]^T\Big).
\end{align} 
Finally, recall that $\Psi(x_1,x_2;\bmu,\mathbf{\Sigma})$ is the bivariate generalization of the Gaussian cdf.

\subsection{Main Result and Discussions}
Suppose the following conditions hold:
\begin{enumerate}
\item \label{cond1}$(Q_X,D_1')\to R(Q_X,D_1')$ is twice differentiable in the neighborhood of $(P_X,D_1)$ and the derivatives are bounded;
\item $(Q_X,R_1',D_1',D_2')\to \rvR(R_1',D_1',D_2'|Q_X)$ is twice differentiable in the neighborhood of $(P_X,R_1,D_1,D_2)$ and the derivatives are bounded;
\item The functions $R(P_X,D_1)$, $R_1^*(D_1,D_2|P_X)$, $R_2^*(D_1,D_2|P_X)$ are positive and finite;
\item The dispersion functions $\rmV(P_X,D_1)$ and $\rmV(P_Y)$ are positive and the dispersion function $\rmV(R_1^*,D_1,D_2|P_X)$ is positive for \\$R(P_X,D_1)<R_1^*<\rvR_1^*(D_1,D_2|P_X)$;
\item  \label{condf} The covariance matrices $\bV_1(R(P_X,D_1),D_1,D_2|P_X)$ and \\ $\bV_2(\rvR_1^*(D_1,D_2|P_X),D_1,D_2|P_X)$ are positive semi-definite.
\end{enumerate}
Conditions (i) and (ii) concern the differentiability of rate-distortion functions and have been discussed in detail by Ingber and Kochman in \cite[Section III.A]{ingber2011dispersion}. Condition (iii) can easily verified by calculating the values of rate-distortion functions using convex optimization tools such as~\cite{boyd2004convex}. In order to verify conditions (iv) and (v), in general, one needs to develop specialized Blahut-Arimoto-type algorithms~\cite[Chapter~8]{csiszar2011information} to solve for the optimal test channels.

\begin{theorem}
\label{mddsecondregion}
Under conditions \eqref{cond1} to \eqref{condf}, depending on $(R_1^*,R_2^*)$, for any $\varepsilon\in(0,1)$, the second-order coding region satisfies
\begin{itemize}
\item Case (i): $R_1^*=R(P_X,D_1)$ and $R_2^*>\rvR_2^*(D_1,D_2|P_X)$
\begin{align}
\calL(R_1^*,R_2^*,D_1,D_2,\varepsilon)=\big\{(L_1,L_2):L_1\geq \sqrt{\rmV(P_X,D_1)}\rmQ^{-1}(\varepsilon)\big\}.
\end{align}
\item Case (ii): $R_1^*=R(P_X,D_1)$ and  $R_2^*=\rvR_2^*(D_1,D_2|P_X)$
\begin{align}
\nn&\calL(R_1^*,R_2^*,D_1,D_2,\varepsilon)=\big\{(L_1,L_2):\\*
&\qquad\Psi(L_1,(1+\xi^*)L_1+L_2;\bzero_2;\bV_1(R_1^*,D_1,D_2|P_X))\geq 1-\varepsilon\big\}.
\end{align}
\item Case (iii): $R(P_X,D_1)<R_1^*<\rvR_1^*(D_1,D_2|P_X)$ and\\ $R_2^*=\rvR^*(R_1^*,D_1,D_2|P_X)+H(P_Y)-R_1^*$,
\begin{align}
\nn&\calL(R_1^*,R_2^*,D_1,D_2,\varepsilon)=\big\{(L_1,L_2):\\*
&\quad(1+\xi^*)L_1+L_2\geq \sqrt{\rmV(R_1^*,D_1,D_2|P_X)}\rmQ^{-1}(\varepsilon)\big\}.
\end{align}
\item Case (iv) $R_1^*=\rvR_1^*(D_1,D_2|P_X)$ and $R_2^*=H(P_Y)$
\begin{align}
\nn&\calL(R_1^*,R_2^*,D_1,D_2,\varepsilon)=\big\{(L_1,L_2):\\*
&\quad\Psi(L_1+L_2,L_2;\bzero_2,\bV_2(R_1^*,D_1,D_2|P_X))\geq 1-\varepsilon\big\}.
\end{align}
\item Case (v) $R_1^*>\rvR_1^*(D_1,D_2|P_X)$ and $R_2^*=H(P_Y)$
\begin{align}
\calL(R_1^*,R_2^*,D_1,D_2,\varepsilon)=\big\{(L_1,L_2):L_2\geq \sqrt{\rmV(P_Y)}\rmQ^{-1}(\varepsilon)\big\}.
\end{align}
\end{itemize}
\end{theorem}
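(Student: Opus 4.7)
The plan is to follow the same two-step strategy used for the rate-distortion, Kaspi, and successive refinement problems: first apply the multivariate Berry-Esseen theorem to suitable finite blocklength bounds, then Taylor-expand the rate and minimum-sum-rate functions around $(P_X,D_1,D_2,R_1^*)$ using Lemma \ref{prop:dtilteddensity}(iv) and Lemma \ref{firstderive4fy}, so that the rate tests become sums of i.i.d.\ rate-distortions-tilted information densities.

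For the achievability direction, I would first establish a type covering lemma tailored to the Fu-Yeung problem that generalizes Lemma \ref{typecovering}: for any joint type $Q_X$ with induced marginal $Q_Y$ on $\calY$ and any $R_1\geq R(Q_X,D_1)$, there exist a set $\calB_1\subset \hatcalX_1^n$ of size at most $\exp(nR_1+c_1\log n)$ that $D_1$-covers $\calT_{Q_X}$, and for each $\hatx_1^n\in\calB_1$ a set $\calB_2(\hatx_1^n)\subset\hatcalX_2^n\times\calY^n$ whose total size is at most $\exp(n[\rvR(R_1,D_1,D_2|Q_X)+H(Q_Y)-R_1]+c_2\log n)$ that simultaneously $D_2$-covers the $D_1$-distortion ball and reproduces $y^n=g(x^n)$ losslessly; the lossless part is handled by the fact that the conditional type class of $y^n$ given $x^n$ is a singleton. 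Combining this with a header of at most $|\calX|\log(n+1)$ nats for the type and setting $\log M_1$ and $\log M_2$ as in the successive-refinement analysis (with $R_2^*$ absorbing the $H(P_Y)$ contribution), we obtain
\begin{align}
\rmP_{\rme,n}(D_1,D_2)\leq \Pr\Big\{R_{1,n}<R(\hat T_{X^n},D_1) \text{ or } R_{2,n}<H(\hat T_{Y^n})+\rvR(R_{1,n},D_1,D_2|\hat T_{X^n})-R_{1,n}\Big\}+O(n^{-2}).
\end{align}
Restricting to the typical set $\calA_n(P_X)$ of \eqref{def:typicalset} and Taylor-expanding $R(\hat T_{X^n},D_1)$ via Claim (iv) of Lemma \ref{prop:dtilteddensity}, $\rvR(R_{1,n},D_1,D_2|\hat T_{X^n})$ via Lemma \ref{firstderive4fy} with the identity \eqref{j=j1j24fy}, and $H(\hat T_{Y^n})$ via the standard linearization $H(\hat T_{Y^n})=H(P_Y)-\frac{1}{n}\sum_{i\in[n]}\log P_Y(Y_i)+O(\log n/n)$, the rate-tests become
\begin{align}
\frac{1}{n}\sum_{i\in[n]}\jmath(X_i,D_1|P_X)\leq R_1^*+\frac{L_1}{\sqrt n}+O\Big(\tfrac{\log n}{n}\Big),\\
\frac{1}{n}\sum_{i\in[n]}\big[\jmath(X_i,g(X_i)|R_1^*,D_1,D_2,P_X)-\log P_Y(g(X_i))\big]\leq R_1^*+R_2^*+\frac{(1+\xi^*)L_1+L_2}{\sqrt n}+O\Big(\tfrac{\log n}{n}\Big).
\end{align}
Applying the univariate Berry-Esseen theorem in Cases (i), (iii), (v) (where only one test is asymptotically tight) and the bivariate Berry-Esseen theorem (Theorem \ref{vectorBerry}) in Cases (ii) and (iv) yields the claimed inner bounds, with the covariance matrices $\bV_1$ and $\bV_2$ arising naturally from the joint distribution of the two centered sums above; the specialization of the coefficients in the linear combinations explains why the boundary of $\calL$ is described by $\Psi(L_1,(1+\xi^*)L_1+L_2;\bzero_2,\bV_1)$ in Case (ii) and $\Psi(L_1+L_2,L_2;\bzero_2,\bV_2)$ in Case (iv). Case (v) uses only the classical entropy dispersion $\rmV(P_Y)$ since the lossless test dominates.

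For the converse direction, I would apply the non-asymptotic converse bound in Lemma \ref{oneshotconverse4fy} with $\gamma=(\log n)/n$. This gives three probability events $\calA_1^n,\calA_2^n,\calA_3^n$ involving, respectively, sums of $\jmath(X_i,D_1|P_X)$, sums of $-\log P_Y(Y_i)$, and sums of $\jmath(X_i,Y_i|R_1,D_1,D_2,P_X)$. Choosing $\log M_1=nR_1^*+L_1\sqrt n+O(\log n)$ and $\log M_2=nR_2^*+L_2\sqrt n+O(\log n)$ and using the identity \eqref{j=j1j24fy} to combine $\calA_3^n$ with $\calA_1^n$ linearly, the union bound for $\Pr\{\bigcup_i\calA_i^n\}$ lower bounds $\rmP_{\rme,n}(D_1,D_2)$ by the same multivariate Gaussian probability that appears in the achievability, up to $O(1/\sqrt n)$ Berry-Esseen error plus $4\exp(-\log n)=O(1/n)$. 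In Cases (i), (iii), (v) only one of the three events is asymptotically relevant; in Cases (ii) and (iv) the relevant pair of events produces, respectively, the covariance matrices $\bV_1$ and $\bV_2$.

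The main technical obstacles will be, first, producing a clean type-covering statement whose redundancy is $O(\log n)$ despite the simultaneous lossless reproduction of $Y^n$ at decoder $\phi_3$ (this requires carefully separating the $Y^n$-portion, which is deterministic given $X^n$, from the $\hatX_2^n$-portion); second, verifying that the rank-deficient instances of $\bV_1$ and $\bV_2$ can still be handled by projecting onto a lower-dimensional subspace in the manner of \cite[Theorem 6]{tan2014dispersions}, so that \eqref{eqn:multi-be} remains valid; and third, ensuring that the Taylor remainders in the expansions of $R(\hat T_{X^n},D_1)$ and $\rvR(R_{1,n},D_1,D_2|\hat T_{X^n})$ are simultaneously bounded on $\calA_n(P_X)$, which requires the smoothness hypotheses (\ref{cond1})--(\ref{condf}). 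The rest is bookkeeping analogous to the corresponding arguments for the successive refinement problem in Section \ref{secondproof}.
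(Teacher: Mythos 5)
Your overall strategy (type covering for achievability, a non-asymptotic converse bound, typical-set restriction, Taylor expansion via Lemma \ref{firstderive4fy}, and Berry--Esseen) matches the paper. However, there are two genuine gaps.

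First, on the achievability side, your probability bound contains only two tests: the $R_1$ test and the sum-rate test (your second displayed condition rearranges to $R_{1,n}+R_{2,n}<H(\hat T_{Y^n})+\rvR(R_{1,n},D_1,D_2|\hat T_{X^n})$). The paper's Lemma \ref{mdduppjoint} has a third, standalone test $R_{2,n}<H(\hat T_{g(X^n)})$, and it is not redundant: when $R_1^*>\rvR_1^*(D_1,D_2|P_X)$ (Case (v)) the sum-rate test is asymptotically slack while only the lossless test $R_2\geq H(P_Y)$ is binding, and in Case (iv) the pair that produces $\bV_2$ is $\bigl(\jmath(X,g(X)|\cdot)-\log P_Y(g(X)),\,-\log P_Y(g(X))\bigr)$, whose second component is exactly the standalone entropy density. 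With only your two tests, the bivariate Gaussian in Case (iv) would couple to $\jmath(X,D_1|P_X)$ (giving $\bV_1$ again) rather than to $-\log P_Y(g(X))$, and Case (v) has no surviving test at all. The root cause is your codebook construction: by packing $y^n$ into $\calB_2(\hatx_1^n)\subset\hatcalX_2^n\times\calY^n$ at the sum-rate $\rvR(R_1,\ldots)+H(Q_Y)-R_1$, decoder $\phi_3$ (which sees only $S_2$) would receive strictly fewer than $nH(Q_Y)$ nats whenever $R_1>\rvR(R_1,\ldots)$ and could not recover $Y^n$ losslessly. The paper's Lemma \ref{mddtypecovering} avoids this by allocating a dedicated set $\calB_Y$ of size $\exp(nH(Q_Y))$ for $Y^n$ on $S_2$, which is precisely what yields the separate $R_2$ test.

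Second, on the converse side, you propose applying Lemma \ref{oneshotconverse4fy} directly. Examine $\calA_3^n$: its event is $\sum_i\jmath(x_i,y_i|R_1,D_1,D_2,P_X)\ge\log M_1M_2+\xi^*\log M_1+(1+\xi^*)n\gamma$, and the mean of the left-hand side is $n\rvR(R_1^*,D_1,D_2|P_X)$, whereas $\log M_1M_2\approx n\bigl(\rvR(R_1^*,D_1,D_2|P_X)+H(P_Y)\bigr)$ on the boundary. So the threshold sits $nH(P_Y)+n\xi^*R_1^*$ above the mean and $\Pr\{\calA_3^n\}$ decays exponentially; it does not contribute at second order. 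This is why the paper instead establishes a type-based strong converse (Lemma \ref{mddstrongconverse}) and the induced Lemma \ref{mddlbjoint}, whose three events are centered exactly as the three achievability tests and involve the empirical entropy $H(\hat T_{g(X^n)})$ explicitly. To make Lemma \ref{oneshotconverse4fy} usable directly, one would have to modify $\calA_3^n$ by incorporating $-\sum_i\log P_Y(y_i)$ on the left-hand side; absent that modification, your converse route does not close. The Taylor-expansion and Berry--Esseen machinery you describe is correct once this third test is present on both sides.
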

The proof of Theorem \ref{mddsecondregion} is provided in Section \ref{prove:mddsecondregion}. The achievability part follows by the method of types, where we first prove a type-covering lemma tailored to the Fu-Yeung problem, and subsequently apply Taylor expansions of the rate-distortion function and the minimal sum rate function of empirical distributions around the source distribution $P_X$, and finally apply the Berry-Esseen theorem for each case. The converse part follows by deriving a type-based strong converse analogously to the converse proof the successive refinement problem in Theorem \ref{mainresult_sr} and proceeding similarly to the achievability proof.

Since the successive refinement problem is special case of the Fu-Yeung problem when $Y=g(X)$ is a constant, the second-order asymptotics for the successive refinement problem for a DMS under bounded distortion measures in Theorem \ref{mainresult_sr} is recovered by cases (i)-(iii) in Theorem \ref{mddsecondregion} by noting that $R_2^*$ is used as the sum rate for the successive refinement problem.

\subsection{An Numerical Example}
\label{sec:caseiinum}
We consider the numerical example inspired by \cite{perron2006kaspi} and calculate the dispersion function for cases (iii) and (iv) in Theorem \ref{mddsecondregion}. Let $\calS_1=\{0,1\}$ and $\calS_2=\{0,1,\rme\}$. Let $S_1$ take values in $\calS_1$ with equal probability and let $P_{S_2|S_1}(s_2|s_1)=(1-p)\bbo(s_1=s_2)+p\bbo(s_2=\rme)$. 
Let the source be $X=(S_1,S_2)$ and the deterministic function be $Y=g(X)=g(S_1,S_2)=S_2$. Let $\hatX_1=\hatX_2=\{0,1\}$ and the distortion measures be $d_1(x,\hatx_1)=\bbo(s_1=\hatx_1)$ and $d_2(x,\hatx_2)=\bbo(s_2=\hatx_2)$. Choose $(p,D_1,D_2)$ such that $D_1\leq \frac{1}{2}$ and $D_1-\frac{1-p}{2}\leq D_2\leq p D_1$. For this case, using the definitions of $\xi^*$ in \eqref{def:s*}, $\lambda_1^*$ in \eqref{def:t1*} and $\lambda_2^*$ in \eqref{def:t2*}, we have
\begin{align}
\xi^*&=0,\\
\lambda_1^*&=\log\big((1-p)/(D_1-D_2)-1\big),\\
\lambda_2^*&=-\lambda_1^*+\log\big(p/D_2-1\big).
\end{align} 
Recall that $H_\rmb(\cdot)$ is the binary entropy function. Let
\begin{align}
\alpha_0&:=\log\big(2/(1+\exp(-\lambda_1^*))\big)-\lambda_1^*D_1-\lambda_2^*D_2,\\
\alpha&:=\log\big(2/(1+\exp(-\lambda_1^*-\lambda_2^*)\big)-\lambda_1^*D_1-\lambda_2^*D_2,
\end{align}
\begin{align}
g_1(p,D_1,D_2)
\nn&:=\log 2-(1-p)H_\rmb((D_1-D_2)/(1-p))\\*
&\qquad-pH_\rmb(D_2/p),\\
g_2(p,D_1,D_2)
\nn&:=p(1-p)\Big\{\log(1-D_2/p)\\*
&\qquad-\log(1-(D_1-D_2)/(1-p))\Big\}^2,\\
g_3(p,D_1,D_2)&:=(1-p)\alpha_0\log\frac{2}{1-p}+p\alpha\log\frac{1}{p}.
\end{align}
Then, it can be verified that
\begin{align}
H(P_Y)&=(1-p)\log 2+H_\rmb(p),\\
\rmV(P_Y)&=p(1-p)\Big(\log\frac{2p}{1-p}\Big)^2.
\end{align}
Thus,
\begin{align}
\rmV(R_1,D_1,D_2|P_X)
\nn&=2\Big(g_3(p,D_1,D_2)-H(P_Y)g_1(p,D_1,D_2)\Big)\\*
&\qquad+g_2(p,D_1,D_2)+\rmV(P_Y).
\end{align}

\section{Proof of Second-Order Asymptotics}
\label{prove:mddsecondregion}

\subsection{Achievability}
In this subsection, we first present a type covering lemma tailored to the Fu-Yeung problem, using which we derive an upper bound on the joint excess-distortion and error probability. Finally, invoking Taylor expansions and the Berry-Esseen Theorem, we derive an achievable second-order coding region.

Define 
\begin{align}
c_1&=|\calX|\cdot |\calY|\cdot |\hat{\calX_1}|+2,\\
c_2&=7|\calX|\cdot |\calY| \cdot |\hat{\calX}_1|\cdot|\hat{\calX_2}|+4.
\end{align}

We are now ready to present the type covering lemma.
\begin{lemma}
\label{mddtypecovering}
Consider any type $Q_X\in\calP_n(\calX)$. Let $Q_Y$ be induced by $Q_X$ and the deterministic function $g:\calX\to\calY$, and let $R_1\geq R(Q_X,D_1)$. The following conclusions hold.
\begin{enumerate}
\item There exists a set $\calB\in\calX_1^n$ such that for each $x^n\in\calT_{Q_X}$, \\$d_1(x^n,(z^n)^*)\leq D_1$ where $(z^n)^*:=\argmin_{z\in\calB} d_1(x^n,z)$.
\item Given $(z^n)^*$, there exists a set $\calB((z^n)^*)\in\calX_2^n$ such that
\begin{align}
\min_{\hatx_2^n\in\calB((z^n)^*)}d_2(x^n,\hatx_2^n)\leq D_2.
\end{align}
\item There exists a set $\calB_Y\in\hat{\calY}^n$ satisfying that $\frac{1}{n}\log |\calB_Y|\leq H(Q_Y)$ and there exists $\haty^n\in\calB_Y$ such that $\haty^n=g(x^n)$.
\item The sizes of sets $\calB$ and $\calB((z^n)^*)$ satisfy
\begin{align}
\frac{1}{n}\log |\calB|&\leq R_1+c_1\log (n+1)\\
\frac{1}{n}\log (|\calB|\cdot|\calB((z^n)^*)|)
&\leq \rvR(R_1,D_1,D_2|Q_X)+(c_1+c_2)\log(n+1).
\end{align}
\end{enumerate}

\end{lemma}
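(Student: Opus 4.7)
The plan is to adapt the type covering lemma for the successive refinement problem (Lemma \ref{typecovering}) by exploiting the fact that $Y^n = g(X^n)$ is a deterministic function of the source sequence, so the lossless decoder's task is trivial once the type of $x^n$ is known. First, I would handle conclusion (iii) immediately by taking $\calB_Y := \calT_{Q_Y}^n$. For any $x^n \in \calT_{Q_X}$, the induced sequence $g(x^n)$ has type $Q_Y$ by construction, so $g(x^n) \in \calB_Y$, and the standard type-class size bound gives $|\calB_Y| \leq \exp(nH(Q_Y))$, matching the required bound without any extra polynomial factor.

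Next, for conclusions (i) and (ii) together with the sum-rate bound in (iv), I would essentially lift the two-stage covering construction of \cite{no2016} (used in Lemma \ref{typecovering}) to the Fu-Yeung setting. Conclusion (i) follows from the standard rate-distortion type covering: since $R_1 \geq R(Q_X, D_1)$, one can extract a set $\calB \subset \hatcalX_1^n$ of size at most $\exp(nR_1 + c_1 \log(n+1))$ that $D_1$-covers $\calT_{Q_X}$, yielding the first inequality in (iv). For conclusion (ii), fix any codeword $\hatx_1^n \in \calB$ and any $x^n \in \calT_{Q_X}$ in its $D_1$-ball. A random-coding / conditional-type argument then produces a covering set $\calB(\hatx_1^n) \subset \hatcalX_2^n$ whose aggregate size satisfies
\begin{align}
\frac{1}{n}\log\Big(\sum_{\hatx_1^n \in \calB} |\calB(\hatx_1^n)|\Big) \leq \rvR(R_1, D_1, D_2|Q_X) + (c_1 + c_2)\log(n+1),
\end{align}
where the minimum-sum-rate function $\rvR(R_1, D_1, D_2|Q_X)$ from \eqref{rvrmin} controls the required joint rate. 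The key step here is that although the Fu-Yeung definition of $\rvR$ involves the extra mutual information term $I(\hatX_1;Y)$ (relative to the successive-refinement definition in \eqref{minimumr2}), the deterministic coupling $Y = g(X)$ means $Y^n$ is already determined by $x^n$, so no additional $\hatcalX_1$-covering rate beyond $R_1$ is needed and the two-stage Kanlis--Narayan covering scheme \cite{kanlis1996error} adapted in \cite[Lemma 8]{no2016} goes through.

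The hard part will be verifying that the aggregate cardinality bound matches $\rvR(R_1, D_1, D_2|Q_X)$ rather than the successive-refinement quantity. Concretely, one must argue that the joint distribution $P^*_{\hatX_1 \hatX_2 | X}$ achieving $\rvR(R_1, D_1, D_2|Q_X)$ is realized by a random ensemble whose $\hatX_1$-marginal supplies a $D_1$-cover at rate $I(X;\hatX_1) \leq R_1$ and whose conditional $\hatX_2 | \hatX_1$-marginal supplies a $D_2$-refinement at conditional rate $I(X;\hatX_2 | \hatX_1, Y) = I(\hatX_1;Y) + I(X;\hatX_1,\hatX_2|Y) - I(X;\hatX_1)$. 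This identity, combined with the Markov structure induced by $Y = g(X)$, recovers exactly the sum rate $\rvR(R_1, D_1, D_2|Q_X)$, up to the polynomial type-counting factors absorbed into $c_2 \log(n+1)$. The remaining details---bounding conditional type-class sizes and converting expected covering failures into the existence of a deterministic code---follow standard method-of-types arguments as in the proofs of \cite[Lemma 1]{kanlis1996error} and Lemma \ref{coveringkaspi}.
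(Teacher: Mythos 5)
Your approach is essentially the one the paper intends — the paper's own ``proof'' of this lemma is a one-line deferral to the successive refinement covering lemma (\cite[Lemma~8]{no2016}, stated here as Lemma~\ref{typecovering}) and to the Kaspi covering lemma (Lemma~\ref{coveringkaspi}), and your sketch correctly elaborates that route. Your treatment of conclusion~(iii) via $\calB_Y=\calT_{Q_Y}$ is exactly right, and the identity $I(X;\hatX_2\mid\hatX_1,Y)=I(\hatX_1;Y)+I(X;\hatX_1,\hatX_2\mid Y)-I(X;\hatX_1)$, which holds because $H(Y\mid X)=0$, is the correct reduction of the Fu--Yeung sum-rate constraint to a two-stage covering.

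One point you should make explicit, though, because as written it creates an inconsistency between the quantity you bound and the target $\rvR$: in order for the second-stage aggregate size to match $\rvR(R_1,D_1,D_2\mid Q_X)$ rather than the (generally larger) successive-refinement quantity, the refinement codebook must be allowed to depend on $y^n$ as well as on $\hatx_1^n$, that is, it should be $\calB\big((z^n)^*,y^n\big)$ exactly as in the Kaspi covering lemma~(Lemma~\ref{coveringkaspi}). Without the $y^n$ dependence the set to be $D_2$-covered is the conditional type class $\calT_{Q_{X\mid\hatX_1}}(\hatx_1^n)$, which requires second-stage rate $I(X;\hatX_2\mid\hatX_1)\ge I(X;\hatX_2\mid\hatX_1,Y)$; with $y^n$ available (legitimate, since encoder $f_2$ transmits $y^n$ to decoder $\phi_2$ in the coding scheme used in Lemma~\ref{mdduppjoint}) the target shrinks to $\calT_{Q_{X\mid\hatX_1 Y}}(\hatx_1^n,y^n)$ and the correct conditional rate is achieved. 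Your prose does carry the $Y$-conditioning in the rate formula, so the idea is present — just align the codebook notation with it, otherwise the asserted aggregate bound $\exp\!\big(n\rvR+\mathrm{poly}\log\big)$ is not what a $y^n$-oblivious random covering actually gives.
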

The proof of Lemma \ref{mddtypecovering} is similar to the type covering lemma for the successive refinement problem~\cite{no2016}.

Let 
\begin{align}
R_{1,n}:&=\frac{1}{n}\big(\log M_1 -(c_1+|\calX|)\log (n+1)\big)\label{def:r1n},\\
R_{2,n}:&=\frac{1}{n}\big(\log M_2-(c_2+|\calY|)\log (n+1)\big)\label{def:r2n}.
\end{align}

Invoking Lemma \ref{mddtypecovering}, we can upper bound the joint excess-distortion and error probability for an $(n,M_1,M_2)$-code. 

\begin{lemma}
\label{mdduppjoint}
There exists an $(n,M_1,M_2)$-code such that 
\begin{align}
\nn&\rmP_{\rme,n}(D_1,D_2)\\*
\nn&\leq \Pr\bigg\{R_{1,n}<R(\hatT_{X^n},D_1)\mathrm{~or}~R_{2,n}+\frac{c_2\log (n+1)}{n}<H(\hatT_{g(X^n)})\\
&\qquad\mathrm{or}~R_{1,n}+R_{2,n}<\rvR(R_{1,n},D_1,D_2|\hatT_{X^n})+H(\hatT_{g(X^n)})\bigg\}.
\end{align}
\end{lemma}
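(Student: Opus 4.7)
The plan is to build an explicit joint coding scheme that invokes Lemma~\ref{mddtypecovering}, identify three failure events matching the three conditions in the claim, and apply a union bound. Given $X^n$ with type $\hat{T}_{X^n}$, encoder $f_1$ transmits $\hat{T}_{X^n}$ (using $|\calX|\log(n+1)$ nats) and, provided $R_{1,n}\geq R(\hat{T}_{X^n},D_1)$, invokes Lemma~\ref{mddtypecovering} with input rate $R_1=R_{1,n}$ to obtain $\calB$, then transmits the index of $(z^n)^*:=\argmin_{z^n\in\calB}d_1(X^n,z^n)$ in $\calB$; by \eqref{def:r1n}, this exactly exhausts the budget $\log M_1$. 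Encoder $f_2$ transmits $\hat{T}_{Y^n}$ (using $|\calY|\log(n+1)$ nats), the index of $Y^n$ in $\calT_{\hat{T}_{Y^n}}$ for decoder $\phi_3$, and the index of $\hatX_2^n:=\argmin_{w^n\in\calB((z^n)^*)}d_2(X^n,w^n)$ in $\calB((z^n)^*)$ for decoder $\phi_2$. Decoder $\phi_1$ outputs $(z^n)^*=\hatX_1^n$, decoder $\phi_3$ outputs $Y^n$ losslessly, and decoder $\phi_2$ combines both messages to output $\hatX_2^n$; Lemma~\ref{mddtypecovering} then guarantees $d_1(X^n,\hatX_1^n)\leq D_1$ and $d_2(X^n,\hatX_2^n)\leq D_2$ whenever both encoders succeed.

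Next, I would match each condition in the claim to a rate-budget violation. The event $\{R_{1,n}<R(\hat{T}_{X^n},D_1)\}$ is exactly when $f_1$ cannot invoke Lemma~\ref{mddtypecovering}. The event $\{R_{2,n}+c_2\log(n+1)/n<H(\hat{T}_{Y^n})\}$ is the rate insufficiency for the lossless encoding of $Y^n$, since by \eqref{def:r2n} the residual budget of $f_2$ after its type header is $\log M_2-|\calY|\log(n+1)=nR_{2,n}+c_2\log(n+1)$, which must cover $nH(\hat{T}_{Y^n})$. The event $\{R_{1,n}+R_{2,n}<\rvR(R_{1,n},D_1,D_2|\hat{T}_{X^n})+H(\hat{T}_{Y^n})\}$ is the joint feasibility constraint, obtained from the bound $\log(|\calB|\cdot|\calB((z^n)^*)|)\leq n\rvR(R_{1,n},D_1,D_2|\hat{T}_{X^n})+(c_1+c_2)\log(n+1)$ of Lemma~\ref{mddtypecovering} combined with the two type headers and the lossless $Y^n$ index, after substituting \eqref{def:r1n}--\eqref{def:r2n}. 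A union bound over these three events then yields the claim.

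The main obstacle is the rate allocation step: Lemma~\ref{mddtypecovering} bounds only $|\calB|$ and the product $|\calB|\cdot|\calB((z^n)^*)|$, not $|\calB((z^n)^*)|$ on its own, so the two encoders cannot naively split the joint sum-rate budget. I would handle this by padding $\calB$ to have size exactly $\lceil\exp(nR_{1,n}+c_1\log(n+1))\rceil$, which preserves the covering property in Lemma~\ref{mddtypecovering}(i) since extra codewords can only help; this forces $f_1$ to always spend its full budget $\log M_1$, whereupon one obtains $\log|\calB((z^n)^*)|\leq n(\rvR(R_{1,n},D_1,D_2|\hat{T}_{X^n})-R_{1,n})+c_2\log(n+1)$, which $f_2$ can afford precisely when the third condition in the claim holds. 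Remaining details such as prefix-free indexing, rounding, and absorption of $O(1)$ constants into the $c_i\log(n+1)$ slack are routine.
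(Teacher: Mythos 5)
Your overall blueprint matches the paper's: invoke Lemma~\ref{mddtypecovering}, prepend type headers, and declare an error on exactly the three rate-feasibility events in the claim. The flaw is in the rate-allocation step, and it stems from the step you yourself flagged as the ``main obstacle.'' You correctly note that Lemma~\ref{mddtypecovering} bounds $|\calB|$ and the product $|\calB|\cdot|\calB((z^n)^*)|$ but not $|\calB((z^n)^*)|$ alone, dismiss splitting as unworkable, and instead pad $\calB$ to size $\lceil\exp(nR_{1,n}+c_1\log(n+1))\rceil$, concluding that $\log|\calB((z^n)^*)|\leq n(\rvR(R_{1,n},D_1,D_2|\hatT_{X^n})-R_{1,n})+c_2\log(n+1)$. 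This inference is not valid: the product bound in part (iv) of the lemma refers to the codebook that the lemma actually constructs, whose size may be strictly smaller than your padded target, so padding does not shrink $\calB((z^n)^*)$ and the individual bound you want simply does not follow. It would even be negative in the regime $\rvR(R_{1,n},D_1,D_2|\hatT_{X^n})<R_{1,n}$ (cf.~Case~(v) of Theorem~\ref{mddsecondregion}), which signals the internal inconsistency. Moreover, if the padding codewords are real elements of $\hatcalX_1^n$, the argmin $(z^n)^*$ over the padded set can land on a padding codeword for which $\calB((z^n)^*)$ is not defined, so part~(ii) of the lemma no longer applies either.

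The paper's proof does precisely what you ruled out: the index of $z_2^*$ is transmitted ``using either $f_1$ or $f_2$,'' that is, its bits are split freely between $S_1$ and $S_2$. This is sound because decoder $\phi_2$ reads both messages, while $\phi_1$ only needs the type header and the $\calB$-index that occupy the leading $|\calX|\log(n+1)+\log|\calB|\leq\log M_1$ nats of $S_1$, and $\phi_3$ only needs the leading $|\calY|\log(n+1)+\log|\calT_{\hatT_{Y^n}}|\leq\log M_2$ nats of $S_2$ (guaranteed by the second condition). Under the split, the only remaining requirement is that $\log(M_1M_2)$ cover the two type headers, the lossless index of $Y^n$, and $\log|\calB|+\log|\calB((z^n)^*)|$ --- and that last sum is exactly what the product bound controls, yielding the third condition. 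No per-$(z^n)^*$ bound on $|\calB((z^n)^*)|$ is ever needed, so the obstacle you tried to remove by padding disappears once the split is allowed.
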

\begin{proof}
Set $(R_1,R_2)=(R_{1,n},R_{2,n})$. Consider the following coding scheme. Given a source $x^n$, the encoder $f_2$ calculates its type $\hatT_{x^n}$. Then, the encoder $f_2$ obtain $y^n$ using the deterministic function $y_i=g(x_i)$ and its type $\hatT_{y^n}$. Now encoder $f_2$ calculates $R(\hatT_{x^n},D_1)$ and $\rvR(R_{1,n},D_1,D_2|\hatT_{x^n})$. If $\log M_1<nR(\hatT_{x^n},D_1)+(c_1+|\calX|)\log (n+1)$ or  $\log M_2<nH(\hatT_{y^n})+|\calY|\log (n+1)$ or $\log M_1M_2<n\rvR(R_{1,n},D_1,D_2|\hatT_{x^n})+nH(\hatT_{y^n})+(c_1+c_2+|\calX|+|\calY|)\log (n+1)$, then the system declares an error.  Otherwise, the encoder $f_1$ sends the type of $x^n$ with at most $|\calX|\log (n+1)$ nats and the encoder $f_2$ sends the type of $y^n$ using at most $|\calY|\log (n+1)$ nats. Furthermore, the encoder $f_2$ sends the index of $y^n=g(x^n)$ in the type class $\calT_{\hatT_{y^n}}$.
Now, choose $\calB\in\calX_1^n$ in Lemma \ref{mddtypecovering} and let $(z^n)^*=\argmin_{z\in\calB}d_1(x^n,z)$. Given $(z^n)^*$, choose $\calB((z^n)^*)$ in Lemma \ref{mddtypecovering} and let $z_2^*=\argmin_{z_2\in\calB((z^n)^*)}d_2(x^n,z_2)$. Finally, we use the encoder $f_1$ to send the index of $z_1^*$ and use either $f_1$ or $f_2$ to send out the index of $z_2^*$. Invoking Lemma \ref{mddtypecovering}, we conclude that no error will be made if $\log M_1\geq nR(\hatT_{x^n},D_1)+(c_1+|\calX|)\log (n+1)$, $\log M_2\geq nH(\hatT_{y^n})+|\calY|\log (n+1)$ and $\log M_1M_2\geq n\rvR(R_{1,n},D_1,D_2|\hatT_{x^n})+nH(\hatT_{y^n})+(c_1+c_2+|\calX|+|\calY|)\log (n+1)$. The proof is now complete.
\end{proof}

Recall that $(R_1^*,R_2^*)$ is a boundary rate-pair on the rate-distortion region of the Fu-Yeung problem. Choose $(M_1,M_2)$ such that
\begin{align}
\log M_1&=nR_1^*+L_1\sqrt{n}+(c_1+|\calX|)\log (n+1),\\
\log M_2&=nR_2^*+L_2\sqrt{n}+(c_2+|\calY|)\log (n+1).
\end{align}
It follows from \eqref{def:r1n} and \eqref{def:r2n} that
\begin{align}
R_{i,n}&=R_i^*+\frac{L_i}{\sqrt{n}},~i=1,2.
\end{align}
Recall the definition of the typical set $\calA_n(P_X)$ in \eqref{def:typicalset}. The result in \eqref{upp:atypical} states that
\begin{align}
\Pr\Big\{\hatT_{X^n}\notin \calA_n(P_X)\Big\}\leq \frac{2|\calX|}{n^2}.
\end{align}
Recall that $P_Y$ is induced by the source distribution $P_X$ and the deterministic function $g:\calX\to\calY$. Thus, given any $x^n$, for each $y\in\calY$,
\begin{align}
\hatT_{y^n}(y)-P_Y(y)&=
\hatT_{g(x^n)}(y)-P_Y(y)\\
&=\sum_{x:g(x)=y} \Big(\hatT_{x^n}(x)-P_X(x)\Big).
\end{align}
Thus, if $\hatT_{X^n}\in\calA_n(P_X)$, 
\begin{align}
\|\hatT_{Y^n}-P_Y\|_{\infty}&\leq |\calX|\sqrt{\frac{\log n}{n}}.
\end{align}

For $x^n$ such that $\hatT_{x^n}\in\calA_n(P_X)$, applying Taylor's expansions and noting that $y^n=g(x^n)$, we obtain
\begin{align}
H(\hatT_{g(x^n)})
&=H(\hatT_{y^n})\\
&=H(P_Y)+\sum_{y} \Big(\hatT_{y^n}(y)-P_Y(y)\Big)(-\log P_Y(y))+O\Big(\|\hatT_{y^n}-P_Y\|^2\Big)\\
&=\sum_{y} -\hatT_{y^n}(y)\log P_Y(y)+O\Bigg(\frac{\log n}{n}\Bigg)\\
&=\frac{1}{n}\sum_{i\in[n]} -\log P_Y(y_i)+O\Bigg(\frac{\log n}{n}\Bigg)\label{fy:taylor1},
\end{align}
and
\begin{align}
\nn &\rvR(R_{1,n},D_1,D_2|\hatT_{x^n})\\
\nn&=\rvR(R_1^*,D_1,D_2|P_X)-s^*\frac{L_1}{\sqrt{n}}+O(|R_{1,n}-R_1^*|^2)\\*
&\qquad+\sum_x \Big(\hatT_{x^n}-P_X(x)\Big)\jmath(x,g(x)|R_1^*,D_1,D_2,P_X)+O\Big(\|\hatT_{x^n}-P_X\|^2\Big),\label{usederivative4fy}\\
&=\frac{1}{n}\sum_{i\in[n]} \jmath(x_i,g(x_i)|R_1^*,D_1,D_2,P_X) -\frac{s^*L_1}{\sqrt{n}}+O\Big(\frac{\log n}{n}\Big)\label{fy:taylor2},
\end{align}
where \eqref{usederivative4fy} follows from Lemma \ref{firstderive4fy}. Furthermore, for $x^n$ such that $\hatT_{x^n}\in\calA_n(P_X)$, it follows from \eqref{useprops}that
\begin{align}
R(\hatT_{x^n},D)
&=\frac{1}{n}\sum_{i\in[n]}\jmath(x_i|D,P_X)+O\Bigg(\frac{\log n}{n}\Bigg)\label{fy:taylor3}.
\end{align}

Recall that $\xi_n=\frac{\log n}{n}$. Therefore, invoking Lemma \ref{mdduppjoint}, we obtain 
\begin{align}
\nn&\rmP_{\rme,n}(D_1,D_2)\\*
\nn&\leq \Pr\bigg\{R_{1,n}<R(\hatT_{X^n},D_1)~\mathrm{or}~R_{2,n}+\frac{c_2\log (n+1)}{n}< H(\hatT_{g(X^n)})\\
\nn&\qquad \qquad \mathrm{or}~R_{1,n}+R_{2,n}<H(\hatT_{g(X^n)})+\rvR(R_{1,n},D_1,D_2|\hatT_{X^n})\\*
&\qquad\qquad\mathrm{and}~\hatT_{X^n}\in\calA_n(P_{XY})\bigg\}+\Pr\Big\{\hatT_{X^n}\notin \calA_n(P_{XY})\Big\}\\
\nn&\leq \Pr\bigg\{R_1+\frac{L_1}{\sqrt{n}}<\frac{1}{n}\sum_{i\in[n]} \jmath(X_i,D_1|P_X)+O(\xi_n)~\mathrm{or}~R_2<\frac{1}{n}\sum_{i\in[n]} \log \frac{1}{P_Y(Y_i)}+O(\xi_n)\\*
\nn&\qquad \qquad \mathrm{or}~R_1+R_2+\frac{(1+s^*)L_1+L_2}{n}\\*
\nn&\qquad\qquad\quad<\frac{1}{n}\sum_{i\in[n]}\Big(\jmath(X_i,g(X_i)|R_1,D_1,D_2,P_X)-\log P_Y(Y_i)\Big)+O(\xi_n)\bigg\}\\*
&\qquad+\frac{2|\calX|}{n^2}\label{upperbd1}.
\end{align}

Subsequently, we upper bound \eqref{upperbd1} for different cases of boundary rate pairs $(R_1^*,R_2^*)$ in Theorem \ref{mddsecondregion}. For simplicity, let $\jmath(x|R_1^*)$ denote $\jmath(x^*,g(x^*)|R_1^*,D_1,D_2,P_X)$ for each $x\in\calX$.

\begin{itemize}
\item Case (i) $R_1^*=R(P_X,D_1)$ and $R_2^*>\rvR_2^*(D_1,D_2|P_X)$

In this case $R_2^*>H(P_Y)$. Thus, it follows from the weak law of large numbers in Theorem \ref {wlln} that
\begin{align}
\kappa_{1,n}:=\Pr\Big\{R_2^*<\frac{1}{n}\sum_{i\in[n]} \log \frac{1}{P_Y(Y_i)}+O(\xi_n)\Big\}\to 0
\end{align}
and
\begin{align}
\nn\kappa_{2,n}&:=\Pr\Big\{R_1^*+R_2^*+\frac{(1+s^*)L_1+L_2}{n}\\*
\nn&\qquad\qquad<\frac{1}{n}\sum_{i\in[n]}\Big(\jmath(X_i|R_1^*)-\log P_Y(Y_i)\Big)+O(\xi_n)\Big\}\\
&\to 0.
\end{align}
It follows from \eqref{upperbd1} that
\begin{align}
\nn&\rmP_{\rme,n}(D_1,D_2)\\*
\nn&\leq \Pr\Big\{R_1^*+\frac{L_1}{\sqrt{n}}<\frac{1}{n}\sum_{i\in[n]} \jmath(X_i,D_1|P_X)+O(\xi_n)\Big\}\\*
&\qquad+\frac{2|\calX|}{n^2}+\kappa_{1,n}+\kappa_{2,n}\label{step1casei}\\
&\leq \rmQ\Bigg(\frac{L_1+O(\sqrt{n}\xi_n)}{\sqrt{\rmV(P_X,D_1)}}\Bigg)+\frac{6\rmT(P_X,D_1)}{\sqrt{n}\rmV(P_X,D_1)}+\frac{2|\calX|}{n^2}+\kappa_{1,n}+\kappa_{2,n}\label{laststepcasei},
\end{align}
where $\rmT(P_X,D_1)$ is the third absolute moment of $\jmath(X,D_1|P_X)$ (which is finite for a DMS) and \eqref{laststepcasei} follows by applying the Berry-Esseen theorem to the first term in \eqref{step1casei}. If we choose $(L_1,L_2)$ such that
\begin{align}
L_1\geq \sqrt{\rmV(P_X,D_1)}\rmQ^{-1}(\varepsilon),
\end{align}
then $\limsup_{n\to\infty} \rmP_{\rme,n}(D_1,D_2)\leq \varepsilon$ as desired.

\item Case (ii) $R_1^*=R(P_X,D_1)$ and $R_2^*=\rvR_2^*(D_1,D_2|P_X)$

In this case, $R_2^*>H(P_Y)$ still holds. Hence, invoking \eqref{upperbd1}, we obtain
\begin{align}
\nn&1-\rmP_{\rme,n}(D_1,D_2)\\*
\nn&\geq \Pr\bigg\{R_1^*+\frac{L_1}{\sqrt{n}}\geq \frac{1}{n}\sum_{i\in[n]} \jmath(X_i,D_1|P_X)+O(\xi_n),\\
\nn&\qquad\qquad R_1^*+R_2^*+\frac{(1+s^*)L_1+L_2}{\sqrt{n}}\geq \\*
\nn&\qquad\qquad\qquad\qquad\frac{1}{n}\sum_{i\in[n]} \Big(-\log P_Y(Y_i)+\jmath(X_i|R_1^*)+O(\xi_n)\bigg\}\\*
&\qquad-\frac{2|\calX|}{n^2}-\kappa_{1,n}\\
\nn&\geq 1-\Psi\big(L_1+O(\xi_n),(1+s^*)L_1+L_2+O(\xi_n);\bzero_2;\bV_1(R_1^*,D_1,D_2|P_X)\Big)\\*
&\qquad-\frac{2|\calX|}{n^2}-\kappa_{1,n}+O\Big(\frac{1}{\sqrt{n}}\Big).
\end{align}
Hence, if we choose $(L_1,L_2)$ such that 
\begin{align}
\Psi\big(L_1,(1+s^*)L_1+L_2;\bzero_2;\bV_1(R_1^*,D_1,D_2|P_X)\big)\geq 1-\varepsilon,
\end{align}
$\limsup_{n\to\infty} \rmP_{\rme,n}(D_1,D_2)\leq \varepsilon$.

\item Case (iii) $R(P_X,D_1)<R_1^*<\rvR_1^*(D_1,D_2|P_X)$, and \\$R_2^*=\rvR_1^*(D_1,D_2|P_X)+H(P_Y)-R_1^*$

In this case, $R_2^*>H(P_Y)$ holds again. The analysis is similar to Case (i). It can be verified that if we choose $(L_1,L_2)$ such that 
\begin{align}
(1+s^*)L_1+L_2\geq \sqrt{\rmV(R_1^*,D_1,D_2|P_X)}\rmQ^{-1}(\varepsilon),
\end{align}
$\limsup_{n\to\infty} \rmP_{\rme,n}(D_1,D_2)\leq \varepsilon$.

\item Case (iv) $R_1^*=\rvR_1^*(D_1,D_2|P_X)$ and $R_2^*=H(P_Y)$

The analysis is similar to Case (ii). It can be verified that if 
\begin{align}
\Psi((1+s^*)L_1+L_2,L_2;\bzero_2;\bV_2(R_1V_2^*,D_1,D_2|P_X))\geq 1-\varepsilon,
\end{align}
$\limsup_{n\to\infty}\rmP_{\rme,n}(D_1,D_2)\leq \varepsilon$.

\item Case (v) $R_1>\rvR_1^*(D_1,D_2|P_X)$ and $R_2^*=H(P_Y)$

The analysis is similar to Case (i). It can be verified that if
\begin{align}
L_2\geq \sqrt{\rmV(P_Y)}\rmQ^{-1}(\varepsilon),
\end{align}
we have $\limsup_{n\to\infty}\rmP_{\rme,n}(D_1,D_2)\leq \varepsilon$.
\end{itemize}
The achievability proof of Theorem \ref{mddsecondregion} is now completed.

\subsection{Converse}
The following type-based strong converse lemma is critical in the converse proof.
\begin{lemma}
\label{mddstrongconverse}
Fix $c>0$ and a type $Q_X\in\calP_n(P_X)$. For any $(n,M_1,M_2)$-code such that
\begin{align}
\nn&\Pr\Big\{d_1(X^n,\hatX_1^n)\leq D_1,~d_2(X^n,\hatX_2^n)\leq D_2,~\hatY^n=Y^n|X^n\in\calT_{Q_X}\Big\}\\*
&\geq \exp(-nc)\label{mddassump},
\end{align}
there exists a conditional distribution $Q_{\hatX_1\hatX_2|X}$ such that
\begin{align}
\frac{1}{n}\log M_1&\geq I(Q_X,Q_{\hatX_1|X})-\xi_{1,n},\\
\frac{1}{n}\log M_2&\geq H(Q_Y)-\xi_{2,n},\\
\frac{1}{n}\log M_1M_2\nn&\geq H(Q_Y)+I(Q_Y,Q_{\hatX_1|Y})\\*
&\qquad+I(Q_{X|Y},Q_{\hatX_1\hatX_2|XY}|Q_Y)-\xi_{1,n}-\xi_{2,n},
\end{align}
where 
\begin{align}
\xi_{1,n}&=\frac{|\calX|\log (n+1)+\log n+nc}{n},\\
\xi_{2,n}&=2\xi_{1,n}+\frac{2(\log n+nc)+|\calX|\cdot|\calY|\log (n+1)}{n}+\frac{\log |\calY|+h_b(1/n)}{n}.
\end{align}
and $Q_{X|Y}$, $Q_{\hatX_1|Y}$, $Q_{\hatX_1\hatX_2|XY}$ are induced by $Q_X$, $Q_{\hatX_1\hatX_2|X}$ and the deterministic function $y=g(x)$.

Furthermore, the expected distortions are bounded as
\begin{align}
\mathbb{E}_{Q_X\times Q_{\hatX_1\hatX_2|X}}[d_1(X,\hatX_1)]&\leq D_1+\frac{\overline{d}_1}{n}:=D_{1,n},\\
\mathbb{E}_{Q_X\times Q_{\hatX_1\hatX_2|X}}[d_2(X,\hatX_2)]&\leq D_2+\frac{\overline{d}_2}{n}:=D_{2,n}.
\end{align}
\end{lemma}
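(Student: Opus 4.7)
The plan is to follow the perturbation approach of Gu and Effros~\cite{wei2009strong}, mirroring the structure of the proof of Lemma~\ref{mainresult_srtypestrongconverse_sr} for the successive refinement problem but augmenting it with Fano-type arguments to handle the additional lossless decoder $\phi_3$. Specifically, given a code satisfying \eqref{mddassump}, I would first construct a perturbed distribution $S_{\calT_{Q_X}^n}$ on the type class $\calT_{Q_X}^n$ that concentrates mass on the ``good'' sequences, namely those satisfying $d_1(x^n,\hatx_1^n)\leq D_1$, $d_2(x^n,\hatx_2^n)\leq D_2$, and $\haty^n=y^n$. Analogously to the construction in the proof of Lemma~\ref{type:sc}, I would amplify such sequences by a factor of $n^2$ relative to the rest and renormalize; assumption~\eqref{mddassump} then guarantees that under $S_{\calT_{Q_X}^n}$ the bad event has probability at most $1/n$, the expected distortions are bounded by $D_{1,n}$ and $D_{2,n}$, and the probability that $\hatY^n\neq Y^n$ is at most $1/n$.

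Next, operating under the perturbed source distribution $S_{\calT_{Q_X}^n}$, I would derive three $n$-letter bounds by standard weak converse arguments. The first is $\log M_1\geq H(S_1)\geq I(X^n;\hatX_1^n)$, which converts directly to a bound involving $I(Q_X,Q_{\hatX_1|X})$ after single-letterization. The second uses the lossless decoder: $\log M_2\geq H(S_2)\geq I(S_2;\hatY^n)\geq H(Y^n)-H(Y^n|\hatY^n)$, where $H(Y^n|\hatY^n)$ is controlled by Fano's inequality using the bound on $\Pr\{\hatY^n\neq Y^n\}$ under $S_{\calT_{Q_X}^n}$, yielding a penalty of order $\log|\calY|+h_b(1/n)$ per block that accounts for the extra term in $\xi_{2,n}$. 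The third, the sum rate bound, is the delicate one: starting from $\log M_1M_2\geq H(S_1,S_2)\geq I(X^n;\hatX_1^n,\hatX_2^n,\hatY^n)$, I would replace $\hatY^n$ by $Y^n$ at a Fano cost, then decompose via the chain rule as
\begin{align}
I(X^n;\hatX_1^n,\hatX_2^n,Y^n)
&= H(Y^n)+I(Y^n;\hatX_1^n)+I(X^n;\hatX_1^n,\hatX_2^n\mid Y^n),\nonumber
\end{align}
recovering the three target terms $H(Q_Y)$, $I(Q_Y,Q_{\hatX_1|Y})$, and $I(Q_{X|Y},Q_{\hatX_1\hatX_2|XY}|Q_Y)$ at the single-letter level.

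To single-letterize, I would introduce a uniform time-sharing random variable $J\in[n]$ independent of everything else and define $Q_{\hatX_1\hatX_2|X}$ as the conditional distribution of $(\hatX_{1,J},\hatX_{2,J})$ given $X_J$ under $S_{\calT_{Q_X}^n}\times P_{\hatX_1^n\hatX_2^n|X^n}$. By the argument in \cite[Proof of Lemma 6]{watanabe2015second} and as used in Lemma~\ref{type:sc}, the marginal of $X_J$ under $S_{\calT_{Q_X}^n}$ is exactly the type $Q_X$ (since $S_{\calT_{Q_X}^n}$ is supported on $\calT_{Q_X}^n$), so $Y_J$ has distribution $Q_Y$ induced through $g$. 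Standard single-letterization together with the entropy-of-type gap $|\sum_i H(X_i)-H(X^n)|\leq |\calX|\log(n+1)+O(1)$ (and its analogues for $Y^n$ and for the conditional versions) converts the $n$-letter mutual informations into the claimed single-letter forms, with all residual $o(1)$ terms absorbed into $\xi_{1,n}$ and $\xi_{2,n}$.

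The main obstacle will be the sum rate bound, because three sources of slack must be combined cleanly: the perturbation slack (order $c+\log n /n$ from the construction of $S_{\calT_{Q_X}^n}$), the Fano slack from approximating $\hatY^n$ by $Y^n$ (contributing the $\log|\calY|+h_b(1/n)$ term), and the type-to-product slack from single-letterization (contributing $|\calX||\calY|\log(n+1)/n$). Balancing these contributions so that they correspond exactly to $\xi_{1,n}+\xi_{2,n}$ as stated, and verifying that the Markov/chain-rule decomposition of $I(X^n;\hatX_1^n,\hatX_2^n,Y^n)$ single-letterizes without introducing additional cross-terms, will be the most delicate bookkeeping in the proof.
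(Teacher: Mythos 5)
Your overall strategy---perturbing the conditional source distribution à la Gu--Effros, bounding the error event and distortions under the perturbed source, applying Fano for the lossless decoder $\phi_3$, and single-letterizing with a time-sharing variable---is the same strategy the paper points to, and the first two rate bounds and the distortion bounds are handled correctly. However, the sum-rate bound contains a genuine error.

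You start from $\log M_1M_2\geq H(S_1,S_2)\geq I(X^n;\hatX_1^n,\hatX_2^n,\hatY^n)$ and then claim the chain-rule ``decomposition''
\begin{align}
I(X^n;\hatX_1^n,\hatX_2^n,Y^n) &= H(Y^n)+I(Y^n;\hatX_1^n)+I(X^n;\hatX_1^n,\hatX_2^n\mid Y^n).\nn
\end{align}
This is not an identity. Since $Y^n=g(X^n)$, the chain rule gives $I(X^n;\hatX_1^n,\hatX_2^n,Y^n)=I(X^n;Y^n)+I(X^n;\hatX_1^n,\hatX_2^n\mid Y^n)=H(Y^n)+I(X^n;\hatX_1^n,\hatX_2^n\mid Y^n)$, which is \emph{smaller} than your right-hand side by exactly $I(Y^n;\hatX_1^n)$. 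So the target term $I(Q_Y,Q_{\hatX_1|Y})$ does not come for free from $H(S_1,S_2)$. By discarding the step from $\log M_1+\log M_2$ to $H(S_1,S_2)$, you have thrown away the $I(S_1;S_2)$ slack that is precisely what produces this cross term.

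The correct route is to keep the stronger bound $\log M_1M_2\geq H(S_1)+H(S_2)=H(S_1,S_2)+I(S_1;S_2)$, bound $H(S_1,S_2)\geq I(X^n;\hatX_1^n,\hatX_2^n,\hatY^n)$ as you did, and separately bound $I(S_1;S_2)\geq I(\hatX_1^n;\hatY^n)$ by two applications of the data-processing inequality (since $\hatX_1^n=\phi_1(S_1)$ and $\hatY^n=\phi_3(S_2)$). After replacing $\hatY^n$ by $Y^n$ via Fano in both terms, you obtain $H(Y^n)+I(X^n;\hatX_1^n,\hatX_2^n\mid Y^n)+I(\hatX_1^n;Y^n)$ up to Fano costs, which single-letterizes to the three terms in the lemma. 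The remaining bookkeeping (perturbation slack, Fano slack, type-to-product slack) then proceeds as you outlined.
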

The proof of Lemma \ref{mddstrongconverse} is similar to Lemma \ref{type:sc} for the rate-distortion problem and Lemma \ref{mainresult_srtypestrongconverse_sr} for the successive refinement problem. The main technique is the perturbation approach by Gu and Effros~\cite{wei2009strong} and the generalization with method of types~\cite{watanabe2015second}.

Let $c=\frac{\log n}{n}$, then we have
\begin{align}
\xi_{1,n}&=\frac{|\calX|\log (n+1)+2\log n}{n},\\
\xi_{2,n}&=\frac{8\log n+(|\calX|\cdot|\calY|+2|\calX|)\log (n+1)}{n}+\frac{\log |\calY|+h_b(1/n)}{n}.
\end{align}
Define
\begin{align}
R_{i,n}=\frac{1}{n}(\log M_i+n\xi_{i,n}),~i\in[2].
\end{align}

Invoking Lemma \ref{mddstrongconverse}, we can prove the following lower bound on the joint excess-distortion and error probability for any $(n,M_1,M_2)$-code.
\begin{lemma}
\label{mddlbjoint}
Any $(n,M_1,M_2)$-code satisfies that
\begin{align}
\nn&\rmP_{\rme,n}(D_1,D_2)\\*
\nn&\geq \Pr\Big\{R_{1,n}<R(\hatT_{X^n},D_{1,n})~\mathrm{or}~R_{2,n}<H(\hatT_{g(X^n)})~\mathrm{or}\\
&\qquad R_{1,n}+R_{2,n}<\rvR(R_{1,n},D_{1,n},D_{2,n}|\hatT_{X^n})+H(\hatT_{g(X^n)})\Big\}.
\end{align}
\end{lemma}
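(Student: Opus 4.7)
The plan is to obtain Lemma \ref{mddlbjoint} as a direct consequence of the type-based strong converse in Lemma \ref{mddstrongconverse}, by specializing the free parameter $c$ and then applying the contrapositive on a per-type basis. First, I would set $c = \frac{\log n}{n}$ in Lemma \ref{mddstrongconverse} so that the threshold $\exp(-nc)$ becomes $\frac{1}{n}$ and the quantities $\xi_{1,n},\xi_{2,n}$ coincide with those used in the definition of $R_{1,n},R_{2,n}$ preceding the lemma statement. Fix any $(n,M_1,M_2)$-code and, for each type $Q_X \in \calP_n(\calX)$ with induced $Q_Y$ via $y = g(x)$, let
\begin{align}
p(Q_X) := \Pr\big\{d_1(X^n,\hatX_1^n)\leq D_1,\; d_2(X^n,\hatX_2^n)\leq D_2,\; \hatY^n = Y^n \,\big|\, X^n \in \calT_{Q_X}\big\}.
\end{align}

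Next, I would convert the multi-letter conclusions of Lemma \ref{mddstrongconverse} into single-letter bounds involving $R(\cdot,\cdot)$ and $\rvR(\cdot)$. Whenever $p(Q_X) \geq \frac{1}{n}$, Lemma \ref{mddstrongconverse} supplies a conditional distribution $Q_{\hatX_1\hatX_2|X}$ with $\bbE[d_i(X,\hatX_i)] \leq D_{i,n}$ for $i\in[2]$, satisfying
\begin{align}
R_{1,n} &\geq I(Q_X,Q_{\hatX_1|X}), \\
R_{2,n} &\geq H(Q_Y), \\
R_{1,n} + R_{2,n} &\geq H(Q_Y) + I(Q_Y,Q_{\hatX_1|Y}) + I(Q_{X|Y},Q_{\hatX_1\hatX_2|XY}|Q_Y).
\end{align}
Since this test channel is feasible for the rate-distortion function $R(Q_X, D_{1,n})$, the first inequality yields $R_{1,n} \geq R(Q_X, D_{1,n})$. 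Because the same test channel is feasible in the minimization defining $\rvR(R_{1,n},D_{1,n},D_{2,n}|Q_X)$ in \eqref{rvrmin} (the distortion constraints are met and $I(Q_X,Q_{\hatX_1|X}) \leq R_{1,n}$), the third inequality yields $R_{1,n} + R_{2,n} \geq H(Q_Y) + \rvR(R_{1,n},D_{1,n},D_{2,n}|Q_X)$.

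Taking the contrapositive, if a type $Q_X$ fails any one of the three single-letter inequalities (equivalently, lies in the event $\calE_n$ described in the lemma statement), then necessarily $p(Q_X) < \frac{1}{n}$, so the conditional excess-distortion-or-error probability on $\calT_{Q_X}$ exceeds $1 - \frac{1}{n}$. Summing over such types,
\begin{align}
\rmP_{\rme,n}(D_1,D_2) \geq \sum_{Q_X \in \calE_n} \Pr\{X^n \in \calT_{Q_X}\}\Bigl(1 - \tfrac{1}{n}\Bigr) \geq \Pr\{\hatT_{X^n} \in \calE_n\} - \tfrac{1}{n},
\end{align}
which is the bound asserted (up to the harmless $\frac{1}{n}$ slack, just as in Lemma \ref{lbexcessp} for successive refinement).

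The main obstacle is the middle step: verifying that the multi-letter quantities produced by the perturbation/strong-converse argument of Lemma \ref{mddstrongconverse} transfer cleanly into the convex-program value $\rvR(R_{1,n},D_{1,n},D_{2,n}|Q_X)$. This requires keeping careful track of which conditional mutual information terms appear under $Y = g(X)$ and verifying that the constructed $Q_{\hatX_1\hatX_2|X}$ is simultaneously feasible for both $R(Q_X,D_{1,n})$ and $\rvR(R_{1,n},D_{1,n},D_{2,n}|Q_X)$; once this is confirmed, the rest is a routine per-type contrapositive aggregated by the law of total probability.
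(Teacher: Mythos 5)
Your proof is correct and takes exactly the route the paper intends, obtaining Lemma \ref{mddlbjoint} by the contrapositive of the type-based strong converse in Lemma \ref{mddstrongconverse} with $c = \frac{\log n}{n}$ and then aggregating over types; the essential and correctly handled step is verifying that the perturbed test channel $Q_{\hatX_1\hatX_2|X}$ from Lemma \ref{mddstrongconverse} is simultaneously feasible in the minimizations defining $R(Q_X,D_{1,n})$ and $\rvR(R_{1,n},D_{1,n},D_{2,n}|Q_X)$, which lets the multi-letter conclusions collapse to the three single-letter inequalities. Your final bound carries a $-\frac{1}{n}$ slack that the paper's statement of Lemma \ref{mddlbjoint} omits; this term is genuinely required by the per-type contrapositive argument (and does appear in the analogous Lemma \ref{lbexcessp} for successive refinement and Lemma \ref{lowerboundexcessp} for Gray-Wyner), so the cleaner form in the paper is most likely a typo and is in any case harmless for the second-order asymptotics.
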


The rest of the converse proof is omitted since it is analogous to the achievability proof where we use Taylor expansions similarly to \eqref{fy:taylor1} to \eqref{fy:taylor3} and apply (multi-variate) Berry-Esseen theorems for each case of boundary rate pairs $(R_1^*,R_2^*)$ in Theorem \ref{mddsecondregion}.


\chapter{Gray-Wyner Problem}
\label{chap:gw}

This chapter studies the lossy Gray-Wyner problem where three encoders cooperatively compress two correlated source sequences so that each of the two decoders could recover a source sequence reliably in a lossy manner. The lossy Gray-Wyner problem is a paradigm of the multiterminal lossy source coding problem where there exist multiple source sequences, multiple encoders and multiple decoders. The problem significantly generalizes the rate-distortion problem by introducing one more source sequence, two more encoders and one more decoder. 

The rate-distortion region for the problem was derived by Gray and Wyner~\cite{gray1974source} and this is why the problem is so named. An auxiliary random variable is needed to characterize the rate-distortion region of the lossy Gray-Wyner problem, which makes it significantly different from all problems discussed in previous chapters. The second-order asymptotics for the lossless version of the Gray-Wyner problem was derived by Watanabe~\cite{watanabe2015second}. This chapter presents the generalization of \cite{watanabe2015second} to the lossy case, analogously to the generalization of second-order asymptotics from lossless source coding in Chapter \ref{chap:lossless} (cf. \cite{strassen1962asymptotische,hayashi2008source}) to the rate-distortion problem in Chapter \ref{chap:rd} (cf. \cite{kostina2012fixed,ingber2011dispersion}). 

The Gray-Wyner problem is interesting beyond data compression. In the Gray-Wyner problem, there is an encoder who transmits messages to both decoders and its rate is known as the common rate. Given rates of the other two encoders, the minimal common rate equals a measure of common information of two correlated random variables~\cite{viswanatha2014}. Leveraging results on lossy common information by~Viswanatha, Akyol, and  Rose \cite{viswanatha2014} and considering rate triples on the Pangloss plane where the sum rate is constrained, the second-order asymptotic result is simplified and numerically illustrated. This chapter is largely based on \cite{zhou2015second}.

\section{Problem Formulation and Asymptotic Result}
\label{existingresult}
\subsection{Problem Formulation}
The lossy Gray-Wyner source coding problem \cite{gray1974source} is shown in Figure \ref{systemmodel:gw}. There are three encoders and two decoders. Encoder $f_i$ has access to a source sequence pair $(X^n,Y^n)$ and compresses it into a message $S_i$. Decoder $\phi_1$ aims to recover source sequence $X^n$ under fidelity criterion $d_1$ and distortion level $D_1$ with the encoded message $S_0$ from encoder $f_0$ and $S_1$ from encoder $f_1$. Similarly, the decoder $\phi_2$ aims to recover $Y^n$ with messages $S_0$ and $S_2$. 
We consider a correlated memoryless source $(X^n,Y^n)$ generated i.i.d. from a joint distribution $P_{XY}$ defined on a finite alphabet $\calX\times\calY$. 

\begin{figure}[htbp]
\centering
\setlength{\unitlength}{0.5cm}
\scalebox{0.8}{
\begin{picture}(26,9)
\linethickness{1pt}
\put(1,5.5){\makebox{$(X^n,Y^n)$}}
\put(6,1){\framebox(4,2)}
\put(6,4){\framebox(4,2)}
\put(6,7){\framebox(4,2)}
\put(7.7,1.8){\makebox{$f_2$}}
\put(7.7,4.8){\makebox{$f_0$}}
\put(7.7,7.8){\makebox{$f_1$}}
\put(1,5){\vector(1,0){5}}
\put(4.5,5){\line(0,1){3}}
\put(4.5,8){\vector(1,0){1.5}}
\put(4.5,5){\line(0,-1){3}}
\put(4.5,2){\vector(1,0){1.5}}
\put(14,1){\framebox(4,2)}
\put(14,7){\framebox(4,2)}
\put(15.7,7.7){\makebox{$\phi_1$}}
\put(15.7,1.7){\makebox{$\phi_2$}}
\put(10,2){\vector(1,0){4}}
\put(12,2.5){\makebox(0,0){$S_2$}}
\put(10,5){\line(1,0){6}}
\put(12,5.5){\makebox(0,0){$S_0$}}
\put(16,5){\vector(0,1){2}}
\put(16,5){\vector(0,-1){2}}
\put(10,8){\vector(1,0){4}}
\put(12,8.5){\makebox(0,0){$S_1$}}
\put(18,2){\vector(1,0){4}}
\put(18.7,2.5){\makebox{$(\hat{Y}^n,D_2$)}}
\put(18,8){\vector(1,0){4}}
\put(18.7,8.5){\makebox{$(\hat{X}^n,D_1)$}}
\end{picture}}
\caption{System model for the lossy Gray-Wyner source coding problem~\cite{gray1974source}.}
\label{systemmodel:gw}
\end{figure}
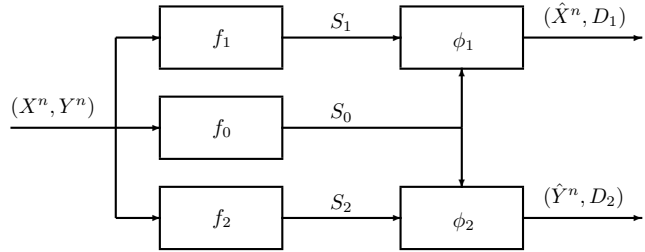

\begin{definition}
An $(n,M_0,M_1,M_2)$-code for lossy Gray-Wyner source coding consists of three encoders:
\begin{align}
f_0:\calX^n\times\calY^n\to\calM_0:=[M_0],\\
f_1:\calX^n\times\calY^n\to\calM_1:=[M_1],\\
f_2:\calX^n\times\calY^n\to\calM_2:=[M_2],
\end{align}
and two decoders:
\begin{align}
\phi_1:\calM_0\times\calM_1\to\hat{\calX}^n,\\
\phi_2:\calM_0\times\calM_2\to\hat{\calY}^n.
\end{align}
\end{definition}
Let $d_1:\calX\times\hat{\calX}\to[0,\infty)$ and $d_2:\calY\times\hat{\calY}\to[0,\infty)$ be two bounded distortion measures, Let $\overline{d}_1:=\max_{x,\hat{x}}d_1(x,\hat{x})$ and $\underline{d}_1:=\min_{x,\hat{x}:d_1(x,\hat{x})>0}d_1(x,\hat{x})$ denote the maximal and minimal distortion, respectively. Similarly, we define $\overline{d}_2$ and $\underline{d}_2$. Furthermore, let the average distortion between $x^n$ and $\hat{x}^n$ be defined as $d_1(x^n,\hat{x}^n):=\frac{1}{n}\sum_{i=1}^nd_1(x_i,\hat{x}_i)$ and the average distortion $d_2(y^n,\hat{y}^n)$ be defined similarly.

\subsection{Rate-Distortion Region}

The rate-distortion region of the lossy Gray-Wyner problem is defined as follows.
\begin{definition}
\label{deffirst}
A rate triplet $(R_0,R_1,R_2)$ is said to be $(D_1,D_2)$-achievable if there exists a sequence of $(n,M_0,M_1,M_2)$-codes such that
\begin{align}
\limsup_{n\to\infty}\frac{1}{n}\log M_0\leq R_0,\\*
\limsup_{n\to\infty}\frac{1}{n}\log M_1\leq R_1,\\*
\limsup_{n\to\infty}\frac{1}{n}\log M_2\leq R_2,
\end{align}
and
\begin{align}
\limsup_{n\to\infty} \mathbb{E}\left[d_1(X^n,\hat{X}^n)\right]\leq D_1,\\
\limsup_{n\to\infty} \mathbb{E}\left[d_2(Y^n,\hat{Y}^n)\right]\leq D_2.
\end{align}
The closure of the set of all $(D_1,D_2)$-achievable rate triplets is the $(D_1,D_2)$-optimal rate region and denoted as $\calR(D_1,D_2|P_{XY})$.
\end{definition}

Gray and Wyner characterized the $(D_1,D_2)$-achievable rate region in \cite{gray1974source}. Let $\calP(P_{XY})$ be the set of all joint distributions $P_{XYW}\in\calP(\calX\times\calY\times\calW)$ such that the $\calX\times\calY$-marginal of $P_{XYW}$ is the source distribution $P_{XY}$ and $|\calW|\leq |\calX||\calY|+2$. Denote the $\calX\times\calW$ marginal distribution as $P_{XW}$ and the $\calY\times\calW$ marginal distribution as $P_{YW}$.
\begin{theorem}
\label{gwregion}
The $(D_1,D_2)$-achievable rate region for lossy Gray-Wyner source coding is
\begin{align}
\nn&\calR(D_1,D_2|P_{XY})\\*
\nn&=\bigcup_{P_{XYW}\in\calP(P_{XY})}\Big\{(R_0,R_1,R_2):R_0\geq I(X,Y;W)\\*
&\qquad\qquad R_1\geq R_{X|W}(P_{XW},D_1), R_2\geq R_{Y|W}(P_{YW},D_2)\Big\},
\end{align}
where $R_{X|W}(P_{XW},D_1)$ and $R_{Y|W}(P_{YW},D_2)$ are conditional rate-distortion functions \cite[pp.~275, chapter 11]{el2011network}, i.e.,
\begin{align}
  R_{X|W}(P_{XW},D_1)&=\min_{P_{\hat{X}|XW}:\mathbb{E}[d_1(X,\hat{X})]\leq D_1}   I(X;\hat{X}|W)\label{gw:condrdf},
\end{align}
and $R_{Y|W}(P_{YW},D_2)$ is defined similarly.
\end{theorem}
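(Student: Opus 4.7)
The plan is to prove Theorem \ref{gwregion} in the usual two directions: achievability via a random coding scheme that first ``carves out'' the common description $W^n$ and then applies conditional lossy coding on each private branch, and a converse based on single-letterization with a carefully chosen auxiliary random variable, together with a cardinality reduction via the Fenchel--Eggleston--Carath\'eodory theorem.

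For the achievability part, fix a joint distribution $P_{XYW}\in\calP(P_{XY})$ together with test channels $P_{\hatX|XW}$ and $P_{\hatY|YW}$ attaining $R_{X|W}(P_{XW},D_1)$ and $R_{Y|W}(P_{YW},D_2)$ respectively. First I would generate $2^{n(I(X,Y;W)+\delta)}$ codewords $W^n(m_0)$ i.i.d.\ from $P_W^n$ and, for each $m_0$, generate $2^{n(R_{X|W}(P_{XW},D_1)+\delta)}$ conditional codewords $\hatX^n(m_0,m_1)$ i.i.d.\ from $\prod_i P_{\hatX|W}(\cdot|W_i(m_0))$ and analogously the $\hatY^n(m_0,m_2)$ codebook. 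The common encoder $f_0$ looks for an $m_0$ such that $(X^n,Y^n,W^n(m_0))$ is jointly typical, and the covering lemma guarantees that this succeeds with high probability at rate $I(X,Y;W)+\delta$. Conditioned on the chosen $W^n(m_0)$, encoders $f_1$ and $f_2$ perform standard conditional rate-distortion encoding into $\hatX^n$ and $\hatY^n$. Typicality plus the bounded distortion assumption then yields $\mathbb{E}[d_1(X^n,\hatX^n)]\le D_1+o(1)$ and similarly for $d_2$, proving that every rate triple inside the region is $(D_1,D_2)$-achievable.

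For the converse, I would start with any sequence of $(n,M_0,M_1,M_2)$-codes whose expected distortions converge to at most $(D_1,D_2)$. Introduce the time-sharing variable $Q\sim\mathrm{Unif}[n]$ independent of everything else, and set $W_i:=(S_0,X^{i-1},Y^{i-1})$ and $W:=(W_Q,Q)$, together with $X:=X_Q$, $Y:=Y_Q$, $\hatX:=\hatX_Q$, $\hatY:=\hatY_Q$. Standard manipulations give $nR_0\ge H(S_0)\ge I(X^n,Y^n;S_0)=\sum_i I(X_i,Y_i;W_i)=nI(X,Y;W)$, while for the private rates
\begin{align*}
nR_1 &\ge H(S_1)\ge H(S_1|S_0)\ge I(X^n;S_1|S_0)\\
 &\ge\sum_{i\in[n]} I(X_i;\hatX_i|W_i)\ge n R_{X|W}(P_{XW},D_1),
\end{align*}
using the Markov chain $X_i-(W_i,X_{i\ne}^n,Y^n)-\hatX_i$, convexity of the conditional rate-distortion function in the distortion level, and the distortion constraint $\frac{1}{n}\sum_i \mathbb{E}[d_1(X_i,\hatX_i)]\le D_1+o(1)$; the argument for $R_2$ is symmetric. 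Taking $n\to\infty$ and invoking continuity of $R_{X|W}$ and $R_{Y|W}$ in the distortion level completes the single-letter converse.

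The main obstacle will be the cardinality bound $|\calW|\le|\calX||\calY|+2$, which I expect to be the most delicate step. The plan is to apply the Fenchel--Eggleston strengthening of Carath\'eodory's theorem to the set of vectors indexed by $w\in\calW$ whose components are (i) the $|\calX||\calY|-1$ free entries of the conditional law $P_{XY|W}(\cdot,\cdot|w)$, (ii) $R_{X|W=w}(P_{X|W=w},D_1)$, and (iii) $R_{Y|W=w}(P_{Y|W=w},D_2)$. Preserving (i) fixes the source marginal $P_{XY}$ and $I(X,Y;W)$, while preserving (ii) and (iii) controls the two conditional rate-distortion terms; this gives $(|\calX||\calY|-1)+2=|\calX||\calY|+1$ constraints and hence $|\calX||\calY|+2$ letters suffice. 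The subtlety is verifying that the per-letter rate-distortion functionals are continuous (so that the target set is compact and Fenchel--Eggleston applies), which follows from the boundedness of $d_1,d_2$ and standard compactness of the feasible test-channel polytope.
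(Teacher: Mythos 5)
The paper states Theorem \ref{gwregion} as a known result of Gray and Wyner~\cite{gray1974source} and does not prove it, so your proposal is supplying a proof of a cited background theorem. The overall skeleton (two-stage covering for achievability; single-letterization with $W_i=(S_0,X^{i-1},Y^{i-1})$ and time sharing $Q$ for the converse; Carath\'eodory--Fenchel--Eggleston for the cardinality bound) is the right shape, and the achievability sketch is fine. The converse, however, has a genuine gap in the private-rate step. You claim $I(X^n;S_1\mid S_0)\geq\sum_i I(X_i;\hatX_i\mid W_i)$ ``using the Markov chain $X_i-(W_i,X^{n}_{i\neq},Y^n)-\hatX_i$,'' but that chain does not hold: $\hatX_i$ is the $i$-th component of $\phi_1(S_0,S_1)$ with $S_1=f_1(X^n,Y^n)$, so given $(S_0,X^{n}_{i\neq},Y^n)$ the reconstruction $\hatX_i$ is still a deterministic function of $X_i$. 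Moreover, after the chain-rule expansion $I(X^n;\hatX^n\mid S_0)=\sum_i I(X_i;\hatX^n\mid S_0,X^{i-1})$, injecting $Y^{i-1}$ into the conditioning decreases the first entropy term, so there is no pointwise comparison with $I(X_i;\hatX_i\mid S_0,X^{i-1},Y^{i-1})$. The standard repair is not to discard $Y^n$ early: since $\hatX^n$ is a deterministic function of $(X^n,Y^n)$, one has $nR_1\geq H(S_1\mid S_0)\geq H(\hatX^n\mid S_0)=I(X^n,Y^n;\hatX^n\mid S_0)=\sum_i I(X_i,Y_i;\hatX^n\mid W_i)\geq\sum_i I(X_i;\hatX_i\mid W_i)$, after which your convexity and continuity arguments go through.

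Two smaller issues in the cardinality step. First, preserving the conditional laws $P_{XY\mid W=w}$ on the kept atoms fixes $P_{XY}$ but does not fix $I(X,Y;W)=H(X,Y)-\sum_w P_W(w)H(X,Y\mid W=w)$, because the atom weights change; you need $H(X,Y\mid W=w)$ as an additional preserved functional, and then $(|\calX||\calY|-1)+1+2=|\calX||\calY|+2$ functionals with Fenchel--Eggleston on a connected domain gives the claimed $|\calX||\calY|+2$ letters (rather than $|\calX||\calY|+1$ constraints with plain Carath\'eodory). Second, $R_{X|W}(P_{XW},D_1)$ is not the weighted average of $R_{X\mid W=w}(P_{X\mid W=w},D_1)$ but a minimum over distortion allocations across $w$, so preserving the latter per $w$ does not directly preserve the former; the fix is to preserve the per-$w$ values at the \emph{optimal} allocation $\{D_{1,w}^*\}$ together with the allocated distortion, or to work with the equivalent form in~\eqref{gwregion2} where the preserved functionals $I(X;\hatX\mid W=w)$ and $I(Y;\hatY\mid W=w)$ for a fixed test channel are linear in the weights.
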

Similarly to the rate-distortion and the Kaspi problems, the rate-distortion region in Theorem \ref{gwregion} still hold under the vanishing joint excess-distortion probability criterion, i.e., when $\lim_{n\to\infty}\rmP_{\rme,n}(D_1,D_2)=0$, where
 \begin{align}
\label{defexcessprob}
\rmP_{\rme,n}(D_1,D_2):=\Pr\Big\{d_1(X^n,\hat{X}^n)>D_1~\mathrm{or}~d_2(Y^n,\hat{Y}^n)> D_2\Big\}.
\end{align}

An equivalent form of the first-order coding region for Gray-Wyner problem was given in \cite[Exercise 14.9]{el2011network} and states that
\begin{align}
\calR(D_1,D_2|P_{XY})
\nn&=\bigcup_{\substack{P_{W|XY},P_{\hatX_1|XW},P_{\hatY|YW}:\\\mathbb{E}[d_1(X,\hatX)]\leq D_1,~\mathbb{E}[d_2(Y,\hatY)]\leq D_2}}\Big\{(R_0,R_1,R_2):R_0\geq I(X,Y;W)\\*
& \qquad\qquad\qquad R_1\geq I(X;\hatX|W), R_2\geq I(Y;\hatY|W)\Big\}, \label{gwregion2}
\end{align}

Given any rates $(R_1,R_2)$, let the minimal common rate be defined as
\begin{align}
\rvR_0(R_1,R_2,D_1,D_2|P_{XY})&:=\min\{R_0:(R_0,R_1,R_2)\in\calR(D_1,D_2|P_{XY})\}\label{minkey0}\\
&=\min_{\substack{P_{XYW}\in\in\calP(P_{XY}):\\R_1\geq R_{X|W}(P_{XW},D_1)\\R_2\geq R_{Y|W}(P_{YW},D_2)\\}}I(X,Y;W)\label{minkey}\\
&=\min_{\substack{P_{W|XW}P_{\hatX_1|XW}P_{\hatY|YW}:\\\mathbb{E}[d_1(X,\hatX)]\leq D_1,~\mathbb{E}[d_2(Y,\hatY)]\leq D_2\\ I(X;\hatX|W)\leq R_1,~I(Y;\hatY|W)\leq R_2}}          I(X,Y;W)\label{minkey2},
\end{align}
where \eqref{minkey} follows from Theorem \ref{gwregion} and \eqref{minkey2} follows from \eqref{gwregion2}. Given distortion levels $(D_1,D_2)$, a rate triple $(R_0^*,R_1^*,R_2^*)$ lies on the boundary of the rate-distortion region if and only if $R_0^*=\rvR_0(R_0^*,R_1^*,D_1,D_2)$, which is of interest in the study of second-order asymptotics.

\section{Rates-Distortions-Tilted Information Density}
Analogously to the derivation of second-order asymptotics for the rate-distortion problem, the definition of a tilted information density is critical and is usually related to the rate-distortion function (region). For the lossy Gray-Wyner problem, a slight obstacle is encountered on whether to define the rates-distortions-tilted information density using the formula of the minimal common rate in \eqref{minkey} that follows from Theorem \ref{gwregion} or the formula in \eqref{minkey2} that follows from the equivalent form of the rate-distortion region in \eqref{gwregion2}. This chapter shows that the latter is more amenable since it does not involve optimization in the conditional rate-distortion function in \eqref{gw:condrdf}.  

We now introduce the rates-distortions-tilted information density for the lossy Gray-Wyner problem. 
Since $\calR(D_1,D_2|P_{XY})$ is a convex set \cite{gray1974source}, the minimization in \eqref{minkey} is attained when $R_1=R_{X|W}(P_{XW},D_1)$ and $R_2=R_{Y|W}(P_{YW},D_2)$ for some optimal test channel $P_{W|XY}$ unless $\rvR_0(R_1,R_2,D_1,D_2|P_{XY})=0$ or $\infty$. To avoid degenerate cases, assume that $\rvR_0(R_1,R_2,D_1,D_2|P_{XY})>0$ is finite and $\calR(D_1,D_2|P_{XY})$ is smooth at a boundary rate triplet $(R_0^*,R_1^*,R_2^*)$ of our interest, i.e.,
\begin{align}
\xi_i^*:=-\frac{\partial \rvR_0(R_1,R_2,D_1,D_2|P_{XY})}{\partial R_i}\bigg|_{(R_1,R_2)=(R_1^*,R_2^*)},\label{definelambdai}
\\
\lambda_i^*:=-\frac{\partial \rvR_0(R_1,R_2,D_1',D_2'|P_{XY})}{\partial D_i'}\bigg|_{(D_1',D_2')=(D_1,D_2)}\label{definegammai},
\end{align}
are well-defined for $i\in[2]$\footnote{Due to these regularity conditions, our result in Chapter 2 does not hold for some singular points (e.g., where the derivatives do not exist) of the rate-distortion region, as in the lossless case by Watanabe in \cite{watanabe2015second}.}. Note that $\xi_i^*,~\lambda_i^*\geq 0$ since \\$\rvR_0(R_1,R_2,D_1,D_2|P_{XY})$ is non-increasing in $(R_1,R_2,D_1,D_2)$. Assume that all derivatives $(\xi_1^*,\xi_2^*,\lambda_1^*,\lambda_2^*)$ are strictly positive, which holds for all rate triplets $(R_0^*,R_1^*,R_2^*)$ such that $R_0^*=\rvR_0(R_1^*,R_2^*,D_1,D_2|P_{XY})$ is positive and finite.

Let $P_{W|XY}^*P_{\hatX|XW}^*P_{\hatY|YW}^*$ a tuple of optimal test channels\footnote{The following tilted information density is still well-defined even if the optimal test channel is not unique due to similar arguments as \cite[Lemma 2]{watanabe2015second}} that achieves $\rvR_0(R_1^*,R_2^*,D_1,D_2|P_{XY})$ in \eqref{minkey2}. Let $P_{\hatX|W}^*, P_{\hatY|W}^*, P_W^*$ be the induced (conditional) distributions. Given any $(x,y,w)\in\calX\times\calY\times\calW$, define the following two conditional distortion-tilted information densities
\begin{align}
&\jmath(x,D_1|w)
:=\log \frac{1}{\sum_{\hatx} P_{\hatX|W}^*(\hatx|w)\exp\Big(\frac{\lambda_1^*}{\xi_1^*}(D_1-d_1(x,\hatx))\Big)},\label{def:j1x}\\*
&\jmath(y,D_2|w)
:=\log \frac{1}{\sum_{\haty} P_{\hatY|W}^*(\haty|w)\exp\Big(\frac{\lambda_2^*}{\xi_2^*}(D_2-d_2(y,\haty))\Big)}\label{def:j2y}.
\end{align}

The rates-distortions-tilted information density for the lossy Gray-Wyner problem is defined as follows.
\begin{definition}
For a boundary rate triplet $(R_0^*,R_1^*,R_2^*)$, given any $(D_1,D_2)$, the rates-distortions-tilted information density for lossy Gray-Wyner source coding is defined as
\begin{align}
\jmath(x,y|R_1^*,R_2^*,D_1,D_2)
\nn&:=-\log \bigg(\sum_{w}P_W^*(w)\exp\Big(\xi_1^*(R_1^*-\jmath(x,D_1|w))\\*
&\qquad\qquad\qquad+\xi_2^*(R_2^*-\jmath(y,D_2|w))\Big)\bigg)\label{def:gwtilt}.
\end{align}
\end{definition}

Recall that there are two equivalent characterizations of the Gray-Wyner region, one defined in terms of conditional rate-distortion functions in Theorem \ref{gwregion} and the other defined solely in terms of (conditional) mutual information quantities in \eqref{gwregion2}. For the lossless Gray-Wyner problem~\cite{watanabe2015second}, the two regions are exactly the same. The tilted information densities derived based on these two regions are subtly different. We find that the tilted information density derived from the second region in \eqref{gwregion2} is more amenable to subsequent second-order analyses on the Pangloss plane (Lemma \ref{panglosstilted}). Thus the ``correct'' non-asymptotic fundamental quantity for the lossy Gray-Wyner problem is the rates-distortions-tilted information density in \eqref{def:gwtilt}.

The rates-distortions-tilted information density for lossy Gray-Wyner source coding has the following properties.
\begin{lemma}
\label{propertytilted}
The following properties hold.
\begin{enumerate}
\item The minimal common rate function equals the following expectation of the rate-distortions-tilted information density, i.e.,
\begin{align}
\rvR_0(R_1^*,R_2^*,D_1,D_2|P_{XY})&=\mathbb{E}_{P_{XY}}\left[\jmath_{XY}(X,Y|R_1^*,R_2^*,D_1,D_2,P_{XY})\right],
\end{align}
\item For $(w,\hatx,\haty)$ such that $P_W^*(w)P_{\hatX|W}^*(\hatx|w)P_{\hatY|W}^*(\haty|w)>0$,
\begin{align}
\nn&\jmath_{XY}(x,y|R_1^*,R_2^*,D_1,D_2,P_{XY})\\*
\nn&=\log \frac{P_{W|XY}^*(w|x,y)}{P_W^*(w)}+\xi_1^*\log\frac{P_{\hatX|XW}^*(\hatx|x,w)}{P_{\hatX|W}^*(\hatx|w)}+\xi_2^*\log\frac{P_{\hatY|YW}^*(\haty|y,w)}{P_{\hatY|W}^*(\haty|w)}\\*
&\qquad-\xi_1^*R_1^*-\xi_2^*R_2^*+\lambda_1^*(d_1(x,\hatx)-D_1)+\lambda_2^*(d_2(y,\haty)-D_2).
\end{align}  
\end{enumerate}
\end{lemma}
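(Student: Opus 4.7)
My plan is to mirror the strategy used for the corresponding lemmas in the rate-distortion (Lemma \ref{prop:dtilteddensity}), Kaspi (Lemma \ref{opttest4kaspi}) and Fu-Yeung (Lemma \ref{proptest4fy}) settings, namely to exploit the KKT conditions for the convex optimization \eqref{minkey2}. First I would write down the Lagrangian
\begin{align*}
\scL(P_{W|XY},P_{\hatX|XW},P_{\hatY|YW})
&= I(X,Y;W) + \xi_1^*\big(I(X;\hatX|W)-R_1^*\big) + \xi_2^*\big(I(Y;\hatY|W)-R_2^*\big) \\
&\quad + \lambda_1^*\big(\bbE[d_1(X,\hatX)]-D_1\big) + \lambda_2^*\big(\bbE[d_2(Y,\hatY)]-D_2\big),
\end{align*}
where the multipliers $\xi_1^*,\xi_2^*,\lambda_1^*,\lambda_2^*$ are precisely those defined in \eqref{definelambdai}--\eqref{definegammai} by the envelope theorem. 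Differentiating with respect to $P_{\hatX|XW}(\hatx|x,w)$ and $P_{\hatY|YW}(\haty|y,w)$ and applying the stationarity conditions, I would derive, exactly as in the proof of Lemma \ref{opttest4kaspi}, the multiplicative forms
\[
P_{\hatX|XW}^{*}(\hatx|x,w)=P_{\hatX|W}^{*}(\hatx|w)\exp\big(\jmath(x,D_1|w)-\tfrac{\lambda_1^{*}}{\xi_1^{*}}d_1(x,\hatx)+\tfrac{\lambda_1^{*}}{\xi_1^{*}}D_1\big),
\]
and symmetrically for $P_{\hatY|YW}^{*}$, where the conditional distortion-tilted information densities $\jmath(x,D_1|w)$ and $\jmath(y,D_2|w)$ from \eqref{def:j1x}--\eqref{def:j2y} arise naturally as the log-normalizers of the KKT equations for the ``inner'' conditional rate-distortion problems indexed by $w$.

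Next I would apply the stationarity condition with respect to $P_{W|XY}(w|x,y)$. The derivative yields
\[
\log\frac{P_{W|XY}^{*}(w|x,y)}{P_W^{*}(w)} = \xi_1^{*}\bigl(R_1^{*}-\jmath(x,D_1|w)\bigr)+\xi_2^{*}\bigl(R_2^{*}-\jmath(y,D_2|w)\bigr)-c(x,y),
\]
and the normalization $\sum_w P_{W|XY}^{*}(w|x,y)=1$ forces $c(x,y)$ to equal exactly the log of the expectation in \eqref{def:gwtilt}, i.e.\ $c(x,y)=-\jmath(x,y|R_1^*,R_2^*,D_1,D_2)$. This identification is the heart of the proof: it shows that the rates-distortions-tilted information density is the log-likelihood ratio between $P_{W|XY}^{*}$ and $P_W^{*}$ after absorbing the inner tilts. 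Combining this with the two inner formulas from the previous paragraph and using that $\jmath(x,D_1|w) = \log\frac{P_{\hatX|XW}^{*}(\hatx|x,w)}{P_{\hatX|W}^{*}(\hatx|w)}+\tfrac{\lambda_1^{*}}{\xi_1^{*}}(d_1(x,\hatx)-D_1)$ on the support of $P_W^{*}P_{\hatX|W}^{*}$, and analogously for $Y$, yields Property~(ii) after routine algebra.

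Property (i) would then follow by taking expectations under $P_{XY}$ on both sides of Property~(ii) and recognizing each conditional log-ratio as a (conditional) mutual information: $\bbE[\log(P_{W|XY}^{*}/P_W^{*})]=I(X,Y;W)$, $\bbE[\log(P_{\hatX|XW}^{*}/P_{\hatX|W}^{*})]=I(X;\hatX|W)$, and similarly for $Y$. At the optimum, the rate constraints $I(X;\hatX|W)\le R_1^*$ and $I(Y;\hatY|W)\le R_2^*$ and the distortion constraints hold with equality whenever the corresponding multipliers are strictly positive, which is assumed below \eqref{definegammai}. Therefore the $\xi_i^*R_i^*$ and $\lambda_i^*D_i$ terms cancel with the expected mutual informations and distortions, leaving precisely $I(X,Y;W)=\rvR_0(R_1^*,R_2^*,D_1,D_2|P_{XY})$.

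The main obstacle I anticipate is handling the measure-theoretic subtleties of the multiplicative KKT identities on the support of the optimal distributions, in particular showing that the value of $\jmath(x,y|R_1^*,R_2^*,D_1,D_2)$ does not depend on which specific optimal $(\hatx,\haty,w)$ is chosen in Property~(ii); this is the analogue of the invariance argument used in \cite[Lemma 2]{watanabe2015second} and below \eqref{chooseopt}, and would require verifying that any two optimal test-channel tuples induce the same $P_{W|XY}^{*}$ up to relabelling of $\calW$, or else working with an equivalence class. A secondary technicality is confirming that \eqref{minkey2} (rather than the original \eqref{minkey}) is the ``correct'' representation to differentiate, since the conditional rate-distortion function in \eqref{minkey} hides an inner optimization whose multipliers must be reconciled with $(\lambda_1^*,\lambda_2^*)$; I expect to show equivalence by noting that at optimality the inner KKT multipliers of the conditional rate-distortion problems coincide with $\lambda_i^*/\xi_i^*$, which is exactly what makes the definitions \eqref{def:j1x}--\eqref{def:j2y} consistent with the outer optimality.
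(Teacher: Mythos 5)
Your proposal is correct and follows essentially the same KKT-plus-normalization route that the paper relies on (the paper itself gives no explicit proof of Lemma \ref{propertytilted}, deferring to the analogous arguments in Watanabe \cite[Lemma 1]{watanabe2015second} and Csisz\'ar \cite[Lemma 1.4]{csiszar1974}, which is precisely what you reproduce). Your identification of the normalization constant $c(x,y)$ with $-\jmath_{XY}(x,y|R_1^*,R_2^*,D_1,D_2)$, the substitution of the inner tilt identity $\jmath(x,D_1|w)=\log(P_{\hatX|XW}^*/P_{\hatX|W}^*)+\tfrac{\lambda_1^*}{\xi_1^*}(d_1(x,\hatx)-D_1)$, and the complementary-slackness cancellation used to pass from Property (ii) to Property (i) are all correct; your closing remarks on uniqueness of the tilted density across optimal test channels and on the equivalence of \eqref{minkey} and \eqref{minkey2} correctly flag the only genuine technical points that would need to be filled in.
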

Lemma \ref{propertytilted} generalizes \cite[Lemma 1]{watanabe2015second} for the lossless Gray-Wyner problem and \cite[Lemma 1.4]{csiszar1974} for the rate-distortion problem.

In the following lemma, we relate the derivative of the minimum common rate function with the rates-distortions-tilted information density.
Recall that given a joint probability distribution $P_{XY}\in\calP(\calX\times\calY)$,  $m=|\supp(P_{XY})|$ and $\Gamma(P_{XY})$ be the sorted distribution such that for each $i\in[m]$, $\Gamma_i(P_{XY})=P_{XY}(x_i,y_i)$ is the $i$-th largest value of $\{P_{XY}(x,y):~(x,y)\in\calX\times\calY\}$. For any $Q_{XY}$, let $Q_{W|XY}^*Q_{\hatX|XW}^*Q_{\hatY|YW}^*$ be the optimal test channel for $\rvR_0(R_1^*,R_2^*,D_1,D_2|\Gamma(Q_{XY}))$ in~\eqref{minkey2}. Let $Q_W^*,Q_{\hatX|W}^*,Q_{\hatY|W}^*$ be the corresponding induced distributions.
\begin{lemma} 
\label{linkrjxy}
Suppose that for all $Q_{XY}$ in some neighborhood of $P_{XY}$, $\mathrm{supp}(Q_W^*)\subset\mathrm{supp}(P_W^*)$, $\mathrm{supp}(Q_{\hatX|W}^*)\subset\mathrm{supp}(P_{\hatX|W}^*)$ and $\mathrm{supp}(Q_{\hatY|W}^*)\subset\mathrm{supp}(P_{\hatX|W}^*)$. Then for $i\in[1:m-1]$,
\begin{align}
\nn&\frac{\partial \rvR_0(R_1^*,R_2^*,D_1,D_2|\Gamma(Q_{XY}))}{\partial \Gamma_i(Q_{XY})}\bigg|_{Q_{XY}=P_{XY}}\\*
&=\jmath_{XY}(x_i,y_i|R_1^*,R_2^*,D_1,D_2,\Gamma(P_{XY}))-\jmath_{XY}(x_m,y_m|R_1^*,R_2^*,D_1,D_2,\Gamma(P_{XY})).\label{eqn:derivate_rd}
\end{align}

\end{lemma}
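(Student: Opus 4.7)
The plan is to prove Lemma \ref{linkrjxy} by combining the parametric representation of $\rvR_0$ established in Lemma \ref{propertytilted} with an envelope-theorem argument that is standard for derivatives of rate-distortion-type value functions, analogous to Claim (iv) of Lemma \ref{prop:dtilteddensity} for the rate-distortion problem, Lemma \ref{kaspifirstderive} for the Kaspi problem, and Lemma \ref{firstderive4fy} for the Fu-Yeung problem.

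First, I would rewrite the minimum common rate as an expectation by Claim 1 of Lemma \ref{propertytilted}, so that
\begin{align}
\rvR_0(R_1^*,R_2^*,D_1,D_2|\Gamma(Q_{XY}))=\bbE_{\Gamma(Q_{XY})}\big[\jmath_{XY}(X,Y|R_1^*,R_2^*,D_1,D_2,\Gamma(Q_{XY}))\big].\nn
\end{align}
By Claim 2 of Lemma \ref{propertytilted}, for $(x,y)$ in the support of $\Gamma(Q_{XY})$ and any $(w,\hatx,\haty)$ in the support of the optimal test channel, $\jmath_{XY}$ admits the closed-form expression involving $\log(Q_{W|XY}^*/Q_W^*)$, $\log(Q_{\hatX|XW}^*/Q_{\hatX|W}^*)$, $\log(Q_{\hatY|YW}^*/Q_{\hatY|W}^*)$, the distortion terms, and the dual variables $(\xi_1^*,\xi_2^*,\lambda_1^*,\lambda_2^*)$.

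Second, I would exploit the normalization constraint $\sum_{i\in[m]}\Gamma_i(Q_{XY})=1$ implicit in the sorted distribution $\Gamma(Q_{XY})$: any perturbation of $\Gamma_i(Q_{XY})$ for $i<m$ forces an equal and opposite perturbation of $\Gamma_m(Q_{XY})$. Differentiating the \emph{explicit} source-distribution dependence of $\bbE_{\Gamma(Q_{XY})}[\jmath_{XY}(\cdot|\Gamma(Q_{XY}))]$ with respect to $\Gamma_i(Q_{XY})$ therefore produces exactly the difference $\jmath_{XY}(x_i,y_i|\cdot)-\jmath_{XY}(x_m,y_m|\cdot)$ on the right-hand side of \eqref{eqn:derivate_rd}.

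Third, and this is the envelope step, I would show that the \emph{implicit} dependence of $\jmath_{XY}$ on $Q_{XY}$ through the optimal test channel $(Q_{W|XY}^*,Q_{\hatX|XW}^*,Q_{\hatY|YW}^*)$ contributes zero at $Q_{XY}=P_{XY}$. Writing the Lagrangian of the convex program \eqref{minkey2} with multipliers $(\xi_1^*,\xi_2^*,\lambda_1^*,\lambda_2^*)$, the KKT stationarity conditions at the optimum annihilate the partial derivatives with respect to each conditional distribution, and the complementary slackness together with the assumption that the supports $\supp(Q_W^*),\supp(Q_{\hatX|W}^*),\supp(Q_{\hatY|W}^*)$ remain contained in the corresponding supports at $P_{XY}$ throughout a neighborhood of $P_{XY}$ ensures that the optimizer varies smoothly in $Q_{XY}$ and that the envelope theorem applies cleanly.

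The main obstacle will be the envelope step: carefully verifying that the support hypotheses are strong enough to guarantee smooth differentiability of the optimizer in $Q_{XY}$, and bookkeeping the multiple KKT conditions (one per each of the three conditional distributions $P_{W|XY}$, $P_{\hatX|XW}$, $P_{\hatY|YW}$) so that all terms arising from differentiating through the optimal test channel cancel. This is mechanical but notationally heavy; the underlying structure mirrors \cite[Appendix I]{zhou2017non} for the Kaspi problem and \cite[Lemma 2]{watanabe2015second} for the lossless Gray-Wyner problem, so those references would guide the detailed calculation.
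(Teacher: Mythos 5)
Your proposal is correct and follows essentially the same approach as the paper: represent $\rvR_0$ as the expectation of $\jmath_{XY}$ via Lemma \ref{propertytilted}, differentiate the explicit distributional dependence with the normalization $\sum_i\Gamma_i=1$ forcing the $(x_m,y_m)$ subtraction, and invoke an envelope/KKT argument to annihilate the implicit dependence through the optimal test channels and dual variables. This mirrors the proof lineage the paper itself cites, namely \cite[Lemma 3]{watanabe2015second} for the lossless Gray-Wyner problem, \cite[Theorem 2.2]{kostina2013lossy} for rate-distortion, and \cite[Appendix I]{zhou2017non} for the Kaspi problem.
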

Lemma \ref{linkrjxy} generalizes \cite[Lemma 3]{watanabe2015second} for the lossless Gray-Wyner problem and~\cite[Theorem 2.2]{kostina2013lossy} for the rate-distortion problem.

\section{Second-Order Asymptotics}
\subsection{Result}
Let $(R_0^*,R_1^*,R_2^*)$ be a boundary rate triplet on the rate-distortion region of the lossy Gray-Wyner problem.
\begin{definition}
\label{defsecond}
Given any $\varepsilon\in(0,1)$, a triplet $(L_0,L_1,L_2)$ is said to be second-order $(R_0^*,R_1^*,R_2^*,D_1,D_2,\varepsilon)$-achievable if there exists a sequence of $(n,M_0,M_1,M_2)$-codes such that
\begin{align}
\limsup_{n\to\infty}\frac{1}{\sqrt{n}}\left(\log M_0-nR_0\right)\leq L_0,\\
\limsup_{n\to\infty}\frac{1}{\sqrt{n}}\left(\log M_1-nR_1\right)\leq L_1,\\
\limsup_{n\to\infty}\frac{1}{\sqrt{n}}\left(\log M_2-nR_2\right)\leq L_2,
\end{align}
and
\begin{align}
\limsup_{n\to\infty}\rmP_{\rme,n}(D_1,D_2)\leq \varepsilon.
\end{align}
The closure of the set of all second-order $(R_0^*,R_1^*,R_2^*,D_1,D_2,\varepsilon)$-achievable triplets is called the second-order coding region and denoted as\\ $\calL(R_0^*,R_1^*,R_2^*,D_1,D_2,\varepsilon)$.
\end{definition}
Note that in Definition~\ref{deffirst} of the rate-distortion region, the expected distortion measure was considered, whereas in Definition~\ref{defsecond} , the excess-distortion probability is considered. This is consistent with other lossy source coding problems studied in previous chapters and the joint-excess-distortion probability allows us to derive second-order asymptotics that provides deeper understanding of the tradeoff among encoders beyond the rate-distortion region.

Let the rates-distortions-dispersion function be 
\begin{align}
\rmV(R_1^*,R_2^*,D_1,D_2|P_{XY})&:=\mathrm{Var}\left[\jmath_{XY}(X,Y|R_1^*,R_2^*,D_1,D_2,P_{XY})\right].
\end{align}
For any boundary rate triplet $(R_0^*,R_1^*,R_2^*)\in\calR(D_1,D_2|P_{XY})$, we impose the following conditions:
\begin{enumerate}
\item \label{cond1:gw} $R_0^*=\rvR_0(R_1^*,R_2^*,D_1,D_2|P_{XY})$ is positive and finite;
\item For $i\in[2]$, the derivatives $\xi_i$ in \eqref{definelambdai} and $\lambda_i^*$ in \eqref{definegammai} are well-defined and positive;
\item \label{cond2} $(R_1, R_2, Q_{XY})\mapsto \rvR_0(R_1,R_2,D_1,D_2|Q_{XY})$ is twice differentiable in the neighborhood of $(R_1^*,R_2^*,P_{XY})$ and the derivatives are bounded;
\item The dispersion function $\rmV(R_1^*,R_2^*,D_1,D_2|P_{XY})$ is finite.
\end{enumerate}

\begin{theorem}
\label{mainresult}
Under conditions (\ref{cond1:gw}) to (\ref{cond2}), given any $\varepsilon\in(0,1)$, the second-order coding region satisfies
\begin{align}
\calL(R_0^*,R_1^*,R_2^*,D_1,D_2,\varepsilon)
\nn&=\Big\{(L_0,L_1,L_2):L_0+\xi_1^*L_1+\xi_2^*L_2\\*
&\qquad\quad\geq \sqrt{\rmV(R_1^*,R_2^*,D_1,D_2|P_{XY})}\mathrm{Q}^{-1}(\varepsilon)\Big\}.
\end{align}
\end{theorem}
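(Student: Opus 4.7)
My plan is to parallel the proof strategy used for the Kaspi and Fu--Yeung problems, combining a type-covering argument (achievability) with a perturbation-based type-wise strong converse, then Taylor-expanding the minimum common rate function $\rvR_0$ around $(R_1^*,R_2^*,P_{XY})$ and invoking the univariate Berry--Esseen theorem in Theorem~\ref{berrytheorem}. The key observation that makes this a \emph{one-dimensional} second-order region (a single halfspace rather than a bivariate-Gaussian boundary) is that the directional coefficients $(1,\xi_1^*,\xi_2^*)$ appearing in the constraint are exactly the first-order Taylor coefficients of $\rvR_0$ with respect to $(R_0,R_1,R_2)$, so the three rate backoffs get aggregated into a single scalar fluctuation $\frac{1}{\sqrt{n}}\sum_i \jmath_{XY}(X_i,Y_i|R_1^*,R_2^*,D_1,D_2,P_{XY})$.

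For the achievability, the first step is a type-covering lemma tailored to the Gray--Wyner setting: for any joint type $Q_{XY}\in\calP_n(\calX\times\calY)$, and for any rates $(R_1,R_2)$ with $R_0\ge \rvR_0(R_1,R_2,D_1,D_2|Q_{XY})$, one needs to exhibit a common codebook $\calC_0\subset\calW^n$ of size $\exp(nR_0+O(\log n))$ and, for each codeword $w^n\in\calC_0$, refinement codebooks $\calC_1(w^n)\subset\hatcalX^n$ and $\calC_2(w^n)\subset\hatcalY^n$ of sizes $\exp(nR_1+O(\log n))$ and $\exp(nR_2+O(\log n))$, such that every $(x^n,y^n)\in\calT_{Q_{XY}}^n$ admits some $w^n\in\calC_0$, $\hatx^n\in\calC_1(w^n)$, $\haty^n\in\calC_2(w^n)$ with the joint type of $(x^n,y^n,w^n,\hatx^n,\haty^n)$ close to a minimizer of~\eqref{minkey2} and the two distortion constraints met. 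This is the main obstacle and will be built by a two-stage random-coding argument: first cover $\calT_{Q_{XY}}^n$ using an i.i.d.\ $Q_W^*$ codebook (with $Q_W^*$ induced by the minimizer at type $Q_{XY}$), then for each chosen $w^n$ cover the $Q_{X|W}^*$- and $Q_{Y|W}^*$-conditional type classes using independent $\hatcalX^n$ and $\hatcalY^n$ codebooks. After transmitting the joint type with $|\calX||\calY|\log(n+1)$ nats by encoder $f_0$, the resulting code has excess-distortion probability bounded by
\begin{align}
\rmP_{\rme,n}(D_1,D_2)\le \Pr\bigl\{(R_{0,n},R_{1,n},R_{2,n})\notin\calR(D_1,D_2|\hatT_{X^nY^n})\bigr\}+o(1),
\end{align}
where $R_{i,n}=\frac{1}{n}\log M_i-O(\frac{\log n}{n})$.

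The second step is to restrict to $\hatT_{X^nY^n}\in\calA_n(P_{XY})$ (costing $O(1/n^2)$ by~\eqref{upp:atypical}), and Taylor-expand $\rvR_0(R_{1,n},R_{2,n},D_1,D_2|\hatT_{X^nY^n})$ around $(R_1^*,R_2^*,P_{XY})$. Using \eqref{definelambdai} for the rate derivatives and Lemma~\ref{linkrjxy} for the distribution derivative, plus the choice $\log M_i=nR_i^*+L_i\sqrt{n}+O(\log n)$, I would obtain
\begin{align}
\nn\rvR_0(R_{1,n},R_{2,n},D_1,D_2|\hatT_{X^nY^n})
&=R_0^*-\tfrac{\xi_1^* L_1+\xi_2^* L_2}{\sqrt n}\\
&\qquad+\tfrac{1}{n}\sum_{i\in[n]}\bigl(\jmath_{XY}(X_i,Y_i|R_1^*,R_2^*,D_1,D_2,P_{XY})-R_0^*\bigr)+O(\tfrac{\log n}{n}).
\end{align}
The event that this exceeds $R_{0,n}=R_0^*+L_0/\sqrt n+O(\frac{\log n}{n})$ is therefore the event that the normalized sum of the zero-mean random variables $\jmath_{XY}(X_i,Y_i|\cdot)-R_0^*$ exceeds $L_0+\xi_1^*L_1+\xi_2^*L_2+O(\frac{\log n}{\sqrt n})$, so the univariate Berry--Esseen theorem applied to the i.i.d.\ third-absolute-moment-finite (discrete, bounded distortions) sequence gives $\rmP_{\rme,n}\le \varepsilon+O(1/\sqrt n)$ whenever $L_0+\xi_1^*L_1+\xi_2^*L_2\ge \sqrt{\rmV(R_1^*,R_2^*,D_1,D_2|P_{XY})}\,\rmQ^{-1}(\varepsilon)$.

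For the converse, I would mirror the perturbation-plus-types approach of Lemma~\ref{type:sc}, Lemma~\ref{mainresult_srtypestrongconverse_sr}, and Lemma~\ref{mddstrongconverse}: assuming an $(n,M_0,M_1,M_2)$-code satisfies $\Pr\{\text{non-excess}|X^nY^n\in\calT_{Q_{XY}}\}\ge \exp(-n\gamma)$, introduce a perturbed source distribution supported on correctly reproduced sequences, apply the standard Fano/mutual-information weak converse in the single-letter form of the region \eqref{gwregion2} to the perturbed ensemble, and conclude that $(R_{0,n}',R_{1,n}',R_{2,n}')\in\calR(D_{1,n},D_{2,n}|Q_{XY})$ up to $O(\log n/n)$ penalties coming from $|\calX||\calY|\log(n+1)$ type-indexing, the perturbation cost $\overline{d}_i/n$ on the distortions, and the $n\gamma$ conditioning cost. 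Setting $\gamma=\frac{\log n}{n}$ yields
\begin{align}
\rmP_{\rme,n}(D_1,D_2)\ge \Pr\bigl\{(R_{0,n}',R_{1,n}',R_{2,n}')\notin\calR(D_{1,n},D_{2,n}|\hatT_{X^nY^n})\bigr\}-\tfrac{1}{n},
\end{align}
and the remainder of the converse is identical to the achievability Taylor expansion (now with the slightly perturbed distortion levels $D_{i,n}$, whose effect is absorbed into the $O(\log n/n)$ term using the bounded derivatives $\lambda_i^*$), followed by the Berry--Esseen lower bound. The one place I expect genuine difficulty is in executing the perturbation cleanly in the presence of the auxiliary random variable $W$: unlike successive refinement, the single-letterization needs an identification of $W$ at each coordinate, and the standard identification $W_i=(S_0, X^{i-1},Y^{i-1})$ must be shown to produce a valid test channel for the perturbed source whose $I(X,Y;W)$ is within $O(\log n)$ of $\log M_0$. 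I would handle this by the time-sharing/uniform-index argument used in~\cite[Lemma 6]{watanabe2015second}, which adapts verbatim once the distortion constraints are incorporated through the perturbation.
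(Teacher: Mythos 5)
Your proposal follows essentially the same route as the paper: a Gray--Wyner type-covering lemma plus an upper bound on the excess-distortion probability in terms of $\rvR_0$ evaluated at the empirical type for achievability, a perturbation-based type-wise strong converse \`a la Gu--Effros/Watanabe adapted to carry the two distortion constraints, and a Taylor expansion of $\rvR_0$ in $(R_1,R_2,Q_{XY})$ using the derivative identities \eqref{definelambdai} and Lemma~\ref{linkrjxy} to collapse the three rate backoffs into one scalar, followed by the univariate Berry--Esseen theorem. The only technical point the paper flags that you do not mention explicitly is that assembling the type-covering lemma requires establishing \emph{uniform} continuity of the conditional rate-distortion function $R_{X|W}(\cdot,D)$ in both the conditional distribution and the distortion level (so that the second-stage conditional covers remain of the right size under the $O(\sqrt{\log n/n})$ type fluctuations inherited from the first-stage cover); this is a genuine piece of work that your two-stage random-coding sketch needs to make precise, but it does not change the overall strategy.
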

Theorem \ref{mainresult} is proved in Section \ref{secondorderproof}. In the achievability proofs, we derive a type covering lemma (cf. Lemma \ref{achievable}) designed specifically for the lossy Gray-Wyner source coding problem. While the proof of this type covering lemma itself hinges on various other works, e.g.,~\cite{no2016, Marton74,watanabe2015second}, piecing the ingredients together and ensuring that the resultant asymptotic results are tight is non-trivial. One of the main challenges here in proving the  type covering lemma is the requirement to establish the uniform continuity of the conditional rate-distortion function in {\em both} the source  distribution and distortion level. The converse proof is done similarly to the successive refinement or the Fu-Yeung problem where we first derive a type-based strong converse, then use Taylor expansions of the minimal common rate function of empirical distributions and finally apply the Berry-Esseen theorem (cf. Theorem \ref{berrytheorem}).

\subsection{Specialization to the Pangloss Plane} \label{sec:pang}
In general, it is not easy to calculate $\calL(R_0^*,R_1^*,R_2^*,D_1,D_2,\varepsilon)$. Here we consider calculating $\calL(R_0^*,R_1^*,R_2^*,D_1,D_2,\varepsilon)$ for a rate triplet $(R_0^*,R_1^*,R_2^*)$ on the Pangloss plane \cite{gray1974source}. It is shown in Theorem 6 in \cite{gray1974source} that $(R_0,R_1,R_2)$ is $(D_1,D_2)$-achievable if
\begin{align}
R_0+R_1+R_2&\geq R(P_{XY},D_1,D_2),\label{panglossbd}\\
R_0+R_1&\geq R(P_X,D_1),\\
R_0+R_2&\geq R(P_Y,D_2),
\end{align}
where $R(P_X,D_1)$, $R(P_Y,D_2)$ are rate-distortion functions (cf. \eqref{def:rd}) and $R(P_{XY},D_1,D_2)$ is the following joint rate-distortion function
\begin{align}
R(P_{XY},D_1,D_2)&:=\min_{P_{\hat{X}\hat{Y}|XY}:\mathbb{E}[d_1(X,\hat{X})]\leq D_1,~\mathbb{E}[d_2(Y,\hat{Y})]\leq D_2} I(X,Y;\hatX,\hatY).
\end{align}
The set of $(D_1,D_2)$-achievable rate triplets $(R_0,R_1,R_2)$ satisfying $R_0+R_1+R_2=R(P_{XY},D_1,D_2)$ is called the Pangloss plane, denoted as $\calR_{\mathrm{pgp}}(D_1,D_2|P_{XY})$, i.e.,
\begin{align}
\calR_{\mathrm{pgp}}(D_1,D_2|P_{XY})
\nn&:=\Big\{(R_0,R_1,R_2):(R_0,R_1,R_2)\in\calR(D_1,D_2|P_{XY})\\*
&\qquad\qquad R_0+R_1+R_2=R(P_{XY},D_1,D_2)\Big\}.
\end{align}
Let $P_{\hat{X}\hat{Y}|XY}^*$ be an optimal conditional distribution that achieves \\$R(P_{XY},D_1,D_2)$. Let $P_{\hat{X}\hat{Y}}^*$ be induced by $P_{\hat{X}\hat{Y}|XY}^*$ and $P_{XY}$. Define the following distortions-tilted information density:
\begin{align}
\nn&\imath_{XY}(x,y|D_1,D_2,P_{XY})\\*
&:=-\log \mathbb{E}_{P_{\hat{X}\hat{Y}}^*}\bigg[\exp\Big(\nu_1^*(D_1-d_1(x,\hat{X}))+\nu_2^*(D_2-d_2(y,\hat{Y}))\Big)\bigg]\label{defjrdtilted},
\end{align}
where
\begin{align}
\nu_1^*:&=-\frac{\partial R(P_{XY},D,D_2)}{\partial D}\bigg|_{D=D_1},\label{defjrdnu1}\\
\nu_2^*:&=-\frac{\partial R(P_{XY},D_1,D)}{\partial D}\bigg|_{D=D_2}\label{defjrdnu2}.
\end{align}
\begin{lemma}
\label{propertyjointrd}
The properties of $\imath_{XY}(\cdot|D_1,D_2,P_{XY})$ include
\begin{itemize}
\item The joint rate-distortion function is the expectation of the joint tilted information density, i.e.,
\begin{align}
R(P_{XY},D_1,D_2)=\mathbb{E}_{P_{XY}}\left[\imath_{XY}(X,Y|D_1,D_2,P_{XY})\right].
\end{align}
\item For each $(\hat{x},\hat{y})\in\mathrm{supp}(P_{\hatX\hatY})^*$, 
\begin{align}
&\imath_{XY}(x,y|D_1,D_2,P_{XY})
\nn=\log \frac{P_{\hat{X}\hat{Y}|XY}^*(\hat{x},\hat{y}|x,y)}{P_{\hat{X}\hat{Y}}^*(\hat{x},\hat{y})}\\*
&\qquad\qquad+\nu_1^*(d_1(x,\hat{x})-D_1)+\nu_2^*(d_2(y,\hat{y})-D_2).
\end{align}

\end{itemize}
\end{lemma}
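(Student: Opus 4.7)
The plan is to treat the joint rate-distortion problem as a standard rate-distortion problem on the product source $(X,Y)$ with a vector-valued distortion measure $(d_1,d_2)$ and corresponding distortion levels $(D_1,D_2)$, so that Lemma~\ref{prop:dtilteddensity} (which handles the scalar case) generalizes essentially verbatim. Concretely, I would start by writing down the Lagrangian associated with the convex program defining $R(P_{XY},D_1,D_2)$, namely
\begin{align}
\calL(P_{\hatX\hatY|XY},\nu_1,\nu_2)
\nn&=I(X,Y;\hatX,\hatY)+\nu_1\big(\bbE[d_1(X,\hatX)]-D_1\big)\\
&\qquad+\nu_2\big(\bbE[d_2(Y,\hatY)]-D_2\big),
\end{align}
and observe that, by the definitions in \eqref{defjrdnu1}--\eqref{defjrdnu2}, the pair $(\nu_1^*,\nu_2^*)$ is exactly the optimal dual pair.

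To establish Claim (ii), I would apply the KKT stationarity condition for the minimizer $P_{\hatX\hatY|XY}^*$. Differentiating $\calL$ with respect to $P_{\hatX\hatY|XY}(\hatx,\haty|x,y)$, the standard calculation (identical to the one in \cite[Lemma~1.4]{csiszar1974} but with a two-dimensional distortion vector) yields that, for every $(x,y)$ and every $(\hatx,\haty)\in\supp(P_{\hatX\hatY}^*)$,
\begin{align}
\log\frac{P_{\hatX\hatY|XY}^*(\hatx,\haty|x,y)}{P_{\hatX\hatY}^*(\hatx,\haty)}+\nu_1^*d_1(x,\hatx)+\nu_2^*d_2(y,\haty)
\end{align}
is a function of $(x,y)$ alone. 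Denote this function by $C(x,y)$; multiplying \eqref{defjrdtilted} through inside the logarithm by $P_{\hatX\hatY|XY}^*(\hatx,\haty|x,y)/P_{\hatX\hatY|XY}^*(\hatx,\haty|x,y)$ and using the above identity shows that $C(x,y)=\imath_{XY}(x,y|D_1,D_2,P_{XY})+\nu_1^*D_1+\nu_2^*D_2$, which is exactly Claim (ii).

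Claim (i) then follows immediately by taking expectation of Claim (ii) under $P_{XY}\times P_{\hatX\hatY|XY}^*$: the first term averages to $I(X,Y;\hatX,\hatY)=R(P_{XY},D_1,D_2)$, while the two distortion terms vanish because, under the non-degeneracy assumptions in the paragraph defining $D_{\rm min},D_{\rm max}$ (carried implicitly here), the optimal test channel meets both distortion constraints with equality. The only real obstacle is a bookkeeping one, namely verifying that $(\nu_1^*,\nu_2^*)$ exist and are finite (guaranteed by differentiability of $R(P_{XY},\cdot,\cdot)$ in a neighborhood of $(D_1,D_2)$) and that the KKT identity is valid on $\supp(P_{\hatX\hatY}^*)$ rather than on the full product alphabet; both are standard for bounded distortion measures and follow from the same regularity conditions assumed throughout this chapter.
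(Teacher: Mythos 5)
Your proposal is correct and follows essentially the same route the paper intends: the paper defers the proof to the KKT/Lagrangian argument in Csisz\'ar's Lemma~1.4 and Watanabe's Lemma~1, which is precisely the stationarity-plus-complementary-slackness calculation you carry out for the two-dimensional distortion vector. The only minor point worth tightening is the justification for the distortion terms vanishing in Claim~(i) — it is complementary slackness (either $\nu_i^*=0$ or the corresponding constraint is tight), not a blanket non-degeneracy assumption — but this is cosmetic and does not affect correctness.
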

Lemma \ref{propertyjointrd} can be proved similarly to \cite[Lemma 1]{watanabe2015second} for the lossless Gray-Wyner problem and \cite[Lemma 1.4]{csiszar1974} for the rate-distortion problem. By considering a fixed rate triplet on the Pangloss plane, we can relate $\jmath_{XY}(x,y|R_1^*,R_2^*,D_1,D_2,P_{XY})$ to $\imath_{XY}(x,y|D_1,D_2,P_{XY})$.

\begin{lemma}
\label{panglosstilted}
When a boundary rate-triple lies in the Pangloss plane, i.e.,~$(R_0^*,R_1^*,R_2^*)\in\calR_{\mathrm{pgp}}(D_1,D_2|P_{XY})$ and the common rate $R_0^*>0$,
\begin{align}
\jmath_{XY}(x,y|R_1^*,R_2^*,D_1,D_2,P_{XY})&=\imath_{XY}(x,y|D_1,D_2,P_{XY})-R_1^*-R_2^*\label{needproof}.
\end{align}
\end{lemma}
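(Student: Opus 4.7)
The plan is to reduce the pointwise identity \eqref{needproof} to two ingredients: (i) the collapse of the Lagrange multipliers $(\xi_1^*,\xi_2^*,\lambda_1^*,\lambda_2^*)$ to natural values on the Pangloss plane, and (ii) a structural identification of any optimizer of \eqref{minkey2} with an optimizer of $R(P_{XY},D_1,D_2)$. Once both are in hand, the parametric expressions in Lemma \ref{propertytilted}(ii) and Lemma \ref{propertyjointrd}(ii) yield the claim by direct substitution. For (i), I first note that on the Pangloss plane we have $\rvR_0(R_1,R_2,D_1,D_2|P_{XY}) = R(P_{XY},D_1,D_2) - R_1 - R_2$ in a neighborhood of $(R_1^*,R_2^*,D_1,D_2)$ as long as $R_0^*>0$. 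Differentiating this relation via \eqref{definelambdai} immediately gives $\xi_1^* = \xi_2^* = 1$, and via \eqref{definegammai} together with \eqref{defjrdnu1}--\eqref{defjrdnu2} gives $\lambda_i^* = \nu_i^*$ for $i\in[2]$.

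For (ii), I would characterize any optimizer $(P^*_{W|XY},P^*_{\hat{X}|XW},P^*_{\hat{Y}|YW})$ of \eqref{minkey2} on the Pangloss plane. The product-form factorization implies $\hat{X}\perp(Y,\hat{Y})\mid X,W$ and $\hat{Y}\perp(X,\hat{X})\mid Y,W$, hence $I(X,Y;\hat{X},\hat{Y}\mid W) = H(\hat{X},\hat{Y}\mid W) - H(\hat{X}\mid X,W) - H(\hat{Y}\mid Y,W)$. Combining this with the subadditivity $H(\hat{X},\hat{Y}\mid W)\leq H(\hat{X}\mid W)+H(\hat{Y}\mid W)$ and the chain rule for $I(X,Y;W,\hat{X},\hat{Y})$, I obtain
\begin{align}
\nn I(X,Y;W)+I(X;\hat{X}\mid W)+I(Y;\hat{Y}\mid W) &\geq I(X,Y;W,\hat{X},\hat{Y})\\*
&\geq I(X,Y;\hat{X},\hat{Y}) \geq R(P_{XY},D_1,D_2).
\end{align}
Since the Pangloss-plane value forces $I(X,Y;W)=R(P_{XY},D_1,D_2)-R_1^*-R_2^*$ at the optimum and the rate constraints must bind, every inequality above saturates. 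Tightness respectively yields (a) $\hat{X}\perp\hat{Y}\mid W$, (b) the Markov chain $W-(\hat{X},\hat{Y})-(X,Y)$, and (c) the induced $\calX\times\calY$-conditional $P_{\hat{X}\hat{Y}|XY}$ coincides with an optimal test channel $P^*_{\hat{X}\hat{Y}|XY}$ for $R(P_{XY},D_1,D_2)$.

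Combining (a)--(c) by writing $P^*_{W\hat{X}\hat{Y}|XY}(w,\hat{x},\hat{y}|x,y) = P^*_{\hat{X}\hat{Y}|XY}(\hat{x},\hat{y}|x,y)\, P^*_{W|\hat{X}\hat{Y}}(w|\hat{x},\hat{y})$ from (b)--(c), then expressing $P^*_{W|\hat{X}\hat{Y}}(w|\hat{x},\hat{y}) = P^*_W(w)P^*_{\hat{X}|W}(\hat{x}|w)P^*_{\hat{Y}|W}(\hat{y}|w)/P^*_{\hat{X}\hat{Y}}(\hat{x},\hat{y})$ via (a), gives the pointwise identity
\begin{align}
\nn \frac{P^*_{W|XY}(w|x,y)P^*_{\hat{X}|XW}(\hat{x}|x,w)P^*_{\hat{Y}|YW}(\hat{y}|y,w)}{P^*_W(w)P^*_{\hat{X}|W}(\hat{x}|w)P^*_{\hat{Y}|W}(\hat{y}|w)} &= \frac{P^*_{\hat{X}\hat{Y}|XY}(\hat{x},\hat{y}|x,y)}{P^*_{\hat{X}\hat{Y}}(\hat{x},\hat{y})},
\end{align}
for every $(w,\hat{x},\hat{y})$ in the joint support. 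Substituting this identity along with $\xi_i^*=1$ and $\lambda_i^*=\nu_i^*$ into Lemma \ref{propertytilted}(ii), and recognizing the right-hand log-ratio via Lemma \ref{propertyjointrd}(ii), produces $\jmath_{XY}(x,y|R_1^*,R_2^*,D_1,D_2,P_{XY}) = \imath_{XY}(x,y|D_1,D_2,P_{XY}) - R_1^* - R_2^*$, which is \eqref{needproof}.

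The main obstacle is ensuring that an optimizer realizing (a)--(c) actually exists; the tightness chain only shows that any optimizer must have these properties. Existence follows from standard compactness and lower-semicontinuity on the finite-alphabet simplex under the cardinality bound $|\calW|\leq |\calX||\calY|+2$, but can also be established more explicitly by invoking the lossy common information construction on the Pangloss plane from \cite{viswanatha2014}, which directly exhibits such an optimizer and thus avoids this technicality.
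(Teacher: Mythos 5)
Your proof is correct and reaches the paper's conclusion via the same two substitutions (Lemma~\ref{propertytilted}(ii) and Lemma~\ref{propertyjointrd}(ii)), but you supply a detail the paper's sketch delegates to a citation. The paper obtains the structural facts $\hat{X}-W-\hat{Y}$ and $(X,Y)-(\hat{X},\hat{Y})-W$, and the coincidence of the induced $P_{\hat{X}\hat{Y}|XY}^*$ with an optimal joint test channel, by appealing directly to \cite{viswanatha2014}; you instead \emph{derive} these three properties from scratch by saturating the information-theoretic chain
\begin{align}
I(X,Y;W)+I(X;\hat{X}\mid W)+I(Y;\hat{Y}\mid W)\geq I(X,Y;W,\hat{X},\hat{Y})\geq I(X,Y;\hat{X},\hat{Y})\geq R(P_{XY},D_1,D_2),\nn
\end{align}
which is sandwiched by the Pangloss identity $R_0^*+R_1^*+R_2^*=R(P_{XY},D_1,D_2)$ at the optimum of \eqref{minkey2}. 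This is a valid, self-contained alternative, and it also forces the rate constraints to bind, which the cited reference does not explicitly record. Your Step 1, making explicit that $\xi_1^*=\xi_2^*=1$ and $\lambda_i^*=\nu_i^*$ on the Pangloss plane under the paper's regularity assumptions, is likewise a step the paper's one-sentence proof elides but tacitly requires for the $\xi_i^*$-weighted terms in Lemma~\ref{propertytilted}(ii) to collapse. One small framing issue: the closing ``main obstacle'' is not actually an obstacle. Your tightness argument shows that \emph{every} optimizer of \eqref{minkey2} has properties (a)--(c); since an optimizer exists by continuity and compactness of the constraint set with the cardinality bound $|\calW|\leq|\calX||\calY|+2$, there is nothing further to verify. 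The resolution you give is correct, but there was never a gap to fill.
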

The proof of Lemma \ref{panglosstilted} invokes Lemma \ref{propertytilted}. Besides, we use an idea from \cite{viswanatha2014} in which it was shown that the following Markov chains hold for the optimal test channels $P_{W|XY}^*$ achieving $\calR(R_1^*,R_2^*,D_1,D_2|P_{XY})$ and  $P_{\hat{X}|XW}^*$ as well as $P_{\hat{Y}|YW}^*$ achieving conditional rate-distortion functions $R_{X|W}(P_{XW}^*,D_1)$ and $R_{Y|W}(P_{YW}^*,D_2)$: 
\begin{align}
\hat{X}&\to W\to \hat{Y} \\* 
(X,Y)&\to (\hat{X},\hat{Y})\to W\\* 
\hat{X}&\to(X,Y,W)\to \hat{Y}\\*  
\hat{X}&\to(X,W)\to Y \\* 
\hat{Y}&\to (Y,W) \to X.
\end{align}
 Invoking Lemma \ref{panglosstilted}, for a rate triplet $(R_0^*,R_1^*,R_2^*)$ on the Pangloss plane, the expression of the second-order coding region is simplified as follows.
\begin{proposition}
\label{proppangloss}
When $(R_0^*,R_1^*,R_2^*)\in\calR_{\mathrm{pgp}}(D_1,D_2|P_{XY})$ and the conditions in Theorem \ref{mainresult} are satisfied, we have
\begin{align}
\calL(R_0^*,R_1^*,R_2^*,D_1,D_2,\varepsilon)
\nn&=\Big\{(L_0,L_1,L_2):L_0+L_1+L_2\\*
&\qquad\quad\geq \sqrt{\rmV(R_1^*,R_2^*,D_1,D_2|P_{XY})}\mathrm{Q}^{-1}(\varepsilon)\Big\},
\end{align}
where the rate-dispersion function~\cite{kostina2012fixed} is 
\begin{align}
\rmV(R_1^*,R_2^*,D_1,D_2|P_{XY})
&=\mathrm{Var}[\jmath_{XY}(X,Y|R_1^*,R_2^*,D_1,D_2,P_{XY})]\\
&=\mathrm{Var}[\imath_{XY}(X,Y|D_1,D_2,P_{XY})].
\end{align}
\end{proposition}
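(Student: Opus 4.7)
The plan is to deduce Proposition~\ref{proppangloss} as a direct corollary of Theorem~\ref{mainresult} by evaluating the parameters $\xi_1^*, \xi_2^*$ and the rates-distortions-dispersion function explicitly for rate triples on the Pangloss plane. The general theorem yields the half-space
\begin{align}
\calL(R_0^*,R_1^*,R_2^*,D_1,D_2,\varepsilon)=\Big\{(L_0,L_1,L_2):L_0+\xi_1^*L_1+\xi_2^*L_2\geq \sqrt{\rmV(R_1^*,R_2^*,D_1,D_2|P_{XY})}\mathrm{Q}^{-1}(\varepsilon)\Big\},
\end{align}
so it suffices to verify $\xi_1^*=\xi_2^*=1$ and $\rmV(R_1^*,R_2^*,D_1,D_2|P_{XY})=\mathrm{Var}[\imath_{XY}(X,Y|D_1,D_2,P_{XY})]$.

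First I would show that in a neighborhood of $(R_1^*,R_2^*)$ the minimal common-rate function satisfies
\begin{align}
\rvR_0(R_1,R_2,D_1,D_2|P_{XY})=R(P_{XY},D_1,D_2)-R_1-R_2.
\end{align}
The ``$\geq$'' direction is the Pangloss sum-rate bound \eqref{panglossbd}. The ``$\leq$'' direction follows because this value is achievable by any conditional distribution $P_{\hat{X}\hat{Y}|XY}$ attaining $R(P_{XY},D_1,D_2)$: set $W$ to be a suitable common message derived from $(\hat X,\hat Y)$, which is feasible whenever the sum constraint is the binding one, i.e., when $(R_1^*,R_2^*)$ is bounded away from the constraints $R_0+R_i\geq R(P_{X_i},D_i)$. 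Differentiating then yields $\xi_1^*=\xi_2^*=1$ via \eqref{definelambdai}.

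Second, I would invoke Lemma~\ref{panglosstilted} to conclude
\begin{align}
\jmath_{XY}(x,y|R_1^*,R_2^*,D_1,D_2,P_{XY})=\imath_{XY}(x,y|D_1,D_2,P_{XY})-R_1^*-R_2^*,
\end{align}
so the two random variables differ by the deterministic constant $R_1^*+R_2^*$ and hence have identical variances. Substituting $\xi_1^*=\xi_2^*=1$ and this dispersion identity into the half-space expression from Theorem~\ref{mainresult} produces the claim.

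The main obstacle will be rigorously establishing the local affine structure of $\rvR_0$ and confirming the regularity assumptions (\ref{cond1:gw})--(\ref{cond2}) of Theorem~\ref{mainresult} on the Pangloss plane. In particular, one must verify that the derivatives $\lambda_1^*, \lambda_2^*$ remain well-defined and positive (inherited from the smoothness of $R(P_{XY},D_1,D_2)$ via \eqref{defjrdnu1}--\eqref{defjrdnu2}), and that the constraints $R_0+R_i\geq R(P_{X_i},D_i)$ are strictly slack at $(R_1^*,R_2^*)$ so the affine representation holds in an open neighborhood; boundary Pangloss-plane points where two constraints meet would require separate treatment and are implicitly excluded by the positivity assumption on the derivatives.
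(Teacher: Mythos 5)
Your proposal is correct and takes essentially the same route as the paper: specialize Theorem~\ref{mainresult} to the Pangloss plane using Lemma~\ref{panglosstilted} for the dispersion identity. The paper leaves the fact $\xi_1^*=\xi_2^*=1$ implicit (it falls out of the Markov-chain argument underlying Lemma~\ref{panglosstilted}); your explicit derivation via the local affine form $\rvR_0(R_1,R_2,D_1,D_2|P_{XY})=R(P_{XY},D_1,D_2)-R_1-R_2$ in a neighborhood where the private constraints are slack, together with the observation that the regularity conditions of Theorem~\ref{mainresult} exclude corner points, is a clean and valid way to fill that gap.
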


\subsection{A Numerical Example for the Pangloss Plane}
Consider a doubly symmetric binary source (DSBS), where $\calX=\calY=\{0,1\}$, $P_{XY}(0,0)=P_{XY}(1,1)=\frac{1-p}{2}$ and $P_{XY}(0,1)=P_{XY}(1,0)=\frac{p}{2}$ for $p\in[0,\frac{1}{2}]$. We consider $\hat{\calX}=\hat{\calY}=\{0,1\}$ and Hamming distortion for both sources, i.e., $d_1(x,\hat{x})=\bbo(x=\hat{x})$ and $d_2(y,\hat{y})=\bbo(y=\hat{y})$. Furthermore, let $R_1=R_2=R$ and $D_1=D_2=D$. 
Recall that $H_\rmb(\delta)=-\delta\log(\delta)-(1-\delta)\log(1-\delta)$ is the binary entropy function. Define $f(x):=-x\log x$.  Let $p_1:=\frac{1}{2}-\frac{1}{2}\sqrt{1-2p}$. It follows from \cite[Exercise 2.7.2]{berger1971rate} that
\begin{align}
\nn&R(P_{XY},D,D)\\*
&=\left\{
\begin{array}{lr}
1+H_\rmb(p)-2H_\rmb(D)&0\leq D\leq p_1,\\
f(1-p)-\frac{1}{2}\left(f(2D-p)+f(2(1-D)-p)\right) &p_1\leq D\leq \frac{1}{2}.
\end{array}\right.
\end{align}
It was shown in \cite[Example 2.5(A)]{gray1974source} that for $0 \leq D\leq \Delta\leq p_1$, if $R_0=R(P_{XY},\Delta,\Delta)$, $R_1=R_2=H_\rmb(\Delta)-H_\rmb(D)$, then $(R_0,R_1,R_2)\in\calR_{\mathrm{pgp}}(D,D|P_{XY})$. When $D\leq p_1$, the joint $(D,D)$-tilted information density satisfies
\begin{align}
\imath_{XY}(0,0|D,D,P_{XY})&=\imath_{XY}(1,1|D,D,P_{XY})\\*
&=\log\frac{1}{(2p-1)D-(2p-1)D^2+\frac{1}{2}(1-p)}-2H_\rmb(D),\\
\imath_{XY}(0,1|D,D,P_{XY})&=\imath_{XY}(1,0|D,D,P_{XY})\\*
&=\log\frac{1}{(2p-1)D^2-(2p-1)D+\frac{1}{2}p}-2H_\rmb(D).
\end{align}
Hence, the joint dispersion function satisfies
\begin{align}
\nn&\mathrm{Var}[\imath_{XY}(X,Y|D,D,P_{XY})]\\*
&=\sum_{x,y}P_{XY}(x,y)
\left(\imath_{XY}(x,y|D,D,P_{XY})-R(P_{XY},D,D)\right)^2\\*
&\nn=(1-p)\left(\log\frac{1}{(2p-1)D-(2p-1)D^2+\frac{1}{2}(1-p)}-1-H_\rmb(p)\right)^2\\*
&\qquad +
p\left(\log\frac{1}{(2p-1)D^2-(2p-1)D+\frac{1}{2}p}-1-H_\rmb(p)\right)^2
\label{egvaluev}.
\end{align}

\section{Proof of Second-Order Asymptotics} 
\label{secondorderproof}
\subsection{Achievability}
We first prove that for any given joint type $Q_{XY} \in\calP_n(\calX\times\calY)$, there exists an $(n,M_0,M_1,M_2)$-code such that the excess-distortion probability is mainly due to the incorrect decoding of side information $W$. To do so, we present a novel type covering lemma for the lossy Gray-Wyner problem. Using this result, we then prove an upper bound of the excess-distortion probability for the $(n,M_0,M_1,M_2)$-code. Finally, we establish the achievable second-order coding region by estimating this probability.

Define four constants
\begin{align}
c_0&=\left(3|\calX||\calY||\calW|+4\right),\\
c_0'&=c_0+|\calX||\calY|\label{defc0p},\\ 
c_1&=\left(\frac{11\overline{d}_1}{\underline{d}_1}|\calX||\calY||\calW|+3|\calX||\calW||\hat{\calX}|+5\right)\label{defc1},\\
c_2&=\left(\frac{11\overline{d}_2}{\underline{d}_2}|\calX||\calY||\calW|+3|\calY||\calW||\hat{\calY}|+5\right)\label{defc2}.
\end{align}
The following type covering lemma is critical for second-order analysis for the lossy Gray-Wyner problem. 
\begin{lemma}
\label{achievable}
Let $n$ satisfy $(n+1)^4>n\log|\calX||\calY|$, $\log n\geq \frac{|\calX||\calW||\hat{\calX}|\log|\calX|\overline{d}_1}{D_1}$, $\log n\geq \frac{|\calY||\calW||\hat{\calY}|\log|\calY|\overline{d}_2}{D_2}$, and $\log n\geq \log\frac{|\hat{\calX}|}{|\calY|}$. Given a joint type $Q_{XY}\in\calP_n(\calX\times\calY)$, for any rate pair $(R_1,R_2)\in\bbR_{++}^2$ such that \\$\rvR_0(R_1,R_2,D_1,D_2|Q_{XY})$ is achievable by some test channel, there exists a conditional type $Q_{W|XY}\in\calV_{n}(\calW,Q_{XY})$ such that the following holds:
\begin{itemize}
\item There exists a set $\calC_n\subset\calT_{Q_{W}}$ ($Q_W$ is induced by $Q_{XY}$ and $Q_{W|XY}$) such that
\begin{itemize}
\item For any $(x^n,y^n)\in\calT_{Q_{XY}}$, there exists a $w^n\in\calC_n$ whose joint type with $(x^n,y^n)$ is $Q_{XYW}$, i.e., $(x^n,y^n,w^n)\in\calT_{Q_{XYW}}$.
\item  The size of $\calC_n$ is upper bounded by
\begin{align}
\frac{1}{n}\log |\calC_n|\leq \rvR_0(R_1,R_2,D_1,D_2|Q_{XY})+c_0\frac{\log(n+1)}{n}.
\end{align}
\end{itemize}
\item 
 For each $w^n\in\calT_{Q_{W|XY}}(x^n,y^n)$, there exist sets $\calB_{\hat{X}}(w^n)\in\hat{\calX}^n$ and $\calB_{\hat{Y}}(w^n)\in\hat{\calY}^n$ satisfying
\begin{itemize}
\item For each $(x^n,y^n)\in\calT_{Q_{XY|W}}(w^n)$, there exists $\hat{x}^n\in\calB_{\hat{X}}(w^n)$ and $\hat{y}^n\in\calB_{\hat{Y}}(w^n)$ such that
$d_1(x^n,\hat{x}^n)\leq D_1$ and $d_2(y^n,\hat{y}^n)\leq D_2$,
\item The sizes of $\calB_{\hat{X}}(w^n)$ and $\calB_{\hat{Y}}(w^n) $ are upper bounded as
\begin{align}
\frac{1}{n}\log |\calB_{\hat{X}}(w^n)|&\leq R_1+c_1\frac{\log n}{n},\\*
\frac{1}{n}\log |\calB_{\hat{Y}}(w^n)|&\leq R_2+c_2\frac{\log n}{n}.
\end{align}
\end{itemize}

\end{itemize} 
\end{lemma}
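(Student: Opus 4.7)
The plan is to prove Lemma~\ref{achievable} in two stages, first constructing the ``common-description'' codebook $\calC_n$ that covers $\calT_{Q_{XY}}$ via the correct joint type, then building the two distortion-covering sets $\calB_{\hatX}(w^n), \calB_{\hatY}(w^n)$ conditionally on each auxiliary codeword. For the first stage, I would take an optimal test channel $P^*_{W|XY}P^*_{\hatX|XW}P^*_{\hatY|YW}$ achieving $\rvR_0(R_1,R_2,D_1,D_2|Q_{XY})$ in~\eqref{minkey2} and approximate $P^*_{W|XY}$ by a conditional type $Q_{W|XY}\in\calV_n(\calW,Q_{XY})$ via standard rational-rounding to denominator $n$. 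Continuity of $(Q_{XYW})\mapsto I(X,Y;W)$ and of the conditional rate-distortion functions (see below) ensures that the relevant quantities deviate by at most $O(\tfrac{\log n}{n})$.  A Marton-style random covering, with codewords drawn i.i.d.\ uniformly on $\calT_{Q_W}$, then suffices: since $|\calT_{Q_{W|XY}}(x^n,y^n)|\ge \exp(nH(W|X,Y)-|\calX||\calY||\calW|\log(n+1))$ and $|\calT_{Q_W}|\le \exp(nH(W))$, the probability that a single random codeword has joint type $Q_{XYW}$ with a given $(x^n,y^n)$ is at least $\exp(-nI(X,Y;W)-O(\log n))$, so a union bound over $|\calT_{Q_{XY}}|\le\exp(nH(X,Y))$ sequences gives the claimed size bound on $\calC_n$ with the constant $c_0$ from~\eqref{defc0p}.

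In the second stage, for each $w^n\in\calT_{Q_W}$ I would build $\calB_{\hatX}(w^n)$ by applying a type-class covering lemma symbol-by-symbol on the partition of coordinates induced by $w^n$. Concretely, within the $w$-coordinates, the source sub-sequence has type $Q_{X|w}$, and a conditional type covering in the spirit of~\cite[Lemma~8]{no2016} (together with the discrete rate-distortion covering from Chapter~\ref{chap:rd}) yields a set of $\hatX$-sequences of size $\exp(nQ_W(w)R_{X|W=w}(Q_{X|w},D_1)+\Theta(\log n))$ that $D_1$-covers $\calT_{Q_{X|W}}(w^n)$. Cartesian-producting over $w\in\calW$ and using the identity $\sum_w Q_W(w)R_{X|W=w}(Q_{X|w},D_1)=R_{X|W}(Q_{XW},D_1)$, I obtain $|\calB_{\hatX}(w^n)|\le\exp(nR_{X|W}(Q_{XW},D_1)+O(\log n))$. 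Since $Q_{W|XY}$ was chosen close to $P^*_{W|XY}$, the conditional rate-distortion quantity is within $O(\tfrac{\log n}{n})$ of $R_1$, which absorbs into $c_1\log n$; an identical argument delivers $\calB_{\hatY}(w^n)$ with constant $c_2$.  The encoders then operate by sending the $\calC_n$-index via $S_0$ and the two distortion-index lookups via $S_1,S_2$.

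The main obstacle, and the reason the constants $c_1,c_2$ in~\eqref{defc1}--\eqref{defc2} depend on $\overline{d}_i/\underline{d}_i$, is proving a \emph{uniform} continuity bound for $(Q_{XW},D_1)\mapsto R_{X|W}(Q_{XW},D_1)$ that is quantitatively tight enough to absorb both the type-approximation error $\|Q_{W|XY}-P^*_{W|XY}\|=O(1/n)$ and the block-wise distortion slack into $O(\tfrac{\log n}{n})$. Standard continuity results for the unconditional rate-distortion function (such as those leveraged in~\cite{no2016}) extend to the conditional setting only after careful handling of conditioning symbols $w$ with small $Q_W(w)$, which is where the hypotheses $\log n\ge |\calX||\calW||\hatcalX|\log|\calX|\,\overline{d}_1/D_1$ and $\log n\ge |\calY||\calW||\hatcalY|\log|\calY|\,\overline{d}_2/D_2$ enter: they guarantee that these rare-$w$ contributions can be absorbed by enlarging the codebooks by a sub-exponential factor. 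Once this continuity estimate is in place, the two covering stages above combine immediately to yield the lemma.
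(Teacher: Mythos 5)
Your overall plan matches the paper's: a Marton-type covering of $\calT_{Q_{XY}}$ by auxiliary sequences in $\calT_{Q_W}$ to obtain $\calC_n$ (the common-information stage, after rational rounding of an optimal test channel), followed by a conditional type covering at each $w^n$ for the private descriptions, with a uniform continuity lemma for the conditional rate-distortion function doing the book-keeping; this is exactly the three-ingredient decomposition the paper describes. However, the second stage contains a genuine error. You assert the identity
\[
\sum_{w} Q_W(w)\,R_{X|W=w}(Q_{X|w},D_1)\;=\;R_{X|W}(Q_{XW},D_1),
\]
and use it after Cartesian-producting the per-$w$ covers at the \emph{same} distortion level $D_1$ for every $w$. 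This is false in general. By definition, $R_{X|W}(Q_{XW},D_1)=\min_{P_{\hatX|XW}:\,\bbE[d_1(X,\hatX)]\le D_1} I(X;\hatX|W)$, and the constraint is on the $W$-averaged distortion, not per symbol of $W$; consequently
\[
R_{X|W}(Q_{XW},D_1)\;=\;\min_{\{D_{1,w}\}:\,\sum_w Q_W(w) D_{1,w}\le D_1}\;\sum_w Q_W(w)\,R(Q_{X|w},D_{1,w}),
\]
so forcing $D_{1,w}\equiv D_1$ can only give an upper bound, and one that is strictly larger unless the optimal allocation happens to be uniform. Carried out as written, your Cartesian-product construction would therefore size $\calB_{\hatX}(w^n)$ at rate $\sum_w Q_W(w)R(Q_{X|w},D_1)$, which need not be $\le R_1 + O(\log n/n)$, breaking the claimed bound.

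The fix is to take the per-$w$ distortion levels $D_{1,w}=\bbE[d_1(X,\hatX)\mid W=w]$ induced by the optimal test channel $P^*_{\hatX|XW}$ achieving $\rvR_0(R_1,R_2,D_1,D_2|Q_{XY})$ (which satisfy $\sum_w Q_W(w) D_{1,w}\le D_1$ and $\sum_w Q_W(w) R(Q_{X|w},D_{1,w})\le R_1$), cover each $w$-subblock at distortion $D_{1,w}$, and note that the Cartesian product still meets the block constraint $d_1(x^n,\hatx^n)\le D_1$ while having the correct rate. Once you do that, the uniform continuity estimate you already flagged (in both $Q_{XW}$ and $D_1$, with a modulus involving $\overline{d}_1/\underline{d}_1$ to handle symbols $w$ with small $Q_W(w)$) absorbs the rational-rounding error into the $c_1\log n / n$ slack exactly as in the paper.
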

Lemma \ref{achievable} is proved by combining a few ideas from the literature: a type covering lemma for the conditional rate-distortion problem (modified from Lemma 4.1 in \cite{csiszar2011information} for the standard rate-distortion problem and Lemma 8 in \cite{no2016} for the successive refinement problem),  a type covering lemma for the common side information for the Gray-Wyner problem (Lemma 4 in \cite{watanabe2015second}) and finally, a uniform continuity lemma for the conditional rate-distortion function (modified from \cite{no2016,palaiyanur2008uniform}). The proof of Lemma \ref{achievable} adopts similar ideas as the proof of the first-order coding region~\cite{gray1974source} and is available in~\cite[Appendix F]{zhou2015second}. The main idea is that we first send the common information via the common link carrying $S_0$ and then we consider two conditional rate-distortion problems on the two private links carrying $S_1,S_2$ using the common information as the side information.

Invoking Lemma \ref{achievable}, we show that there exists an $(n,M_0,M_1,M_2)$-code whose excess-distortion probability can be upper bounded as follows. Recall the definitions of $c_0'$ in \eqref{defc0p}, $c_1$ in \eqref{defc1} and $c_2$ in \eqref{defc2}. Define three rates
\begin{align}
R_{0,n}&=\frac{1}{n}\log M_0-c_0'\frac{\log (n+1)}{n},\\*
R_{1,n}&=\frac{1}{n}\log M_1-c_1\frac{\log n}{n},\\*
R_{2,n}&=\frac{1}{n}\log M_2-c_2\frac{\log n}{n}.
\end{align}
\begin{lemma}
\label{upperboundexcessp}
There exists an $(n,M_0.M_1,M_2)$-code such that
\begin{align}
\rmP_{\rme,n}(D_1,D_2)\leq \Pr\left\{R_{0,n}<\rvR_0(R_{1,n},R_{2,n},D_1,D_2|\hat{T}_{X^nY^n})\right\}. \label{eqn:ed_bd}
\end{align}
\end{lemma}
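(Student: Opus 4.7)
The plan is to construct an explicit $(n,M_0,M_1,M_2)$-code by direct application of the type covering Lemma \ref{achievable}, and then show that the only way to incur an excess-distortion event is if the empirical minimum common rate function exceeds the operational rates. Since all three encoders observe $(X^n,Y^n)$, they can agree on the joint type $\hat{T}_{X^nY^n}$ and on a common deterministic tie-breaking rule. This lets encoder $f_0$ convey the joint type on the common link while encoders $f_1,f_2$ simultaneously use type-dependent private codebooks without any extra coordination overhead.

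Concretely, given $(x^n,y^n)\in\calT_{Q_{XY}}$, all three encoders first compute $Q_{XY}=\hat{T}_{x^ny^n}$ and check whether $R_{0,n}\geq \rvR_0(R_{1,n},R_{2,n},D_1,D_2|Q_{XY})$. If the inequality fails, the system declares an error. Otherwise, each encoder invokes Lemma \ref{achievable} with the common rate pair $(R_{1,n},R_{2,n})$ to obtain, in a deterministic fashion depending only on $Q_{XY}$, a conditional type $Q_{W|XY}$, a covering set $\calC_n\subset\calT_{Q_W}$, and for each $w^n\in\calC_n$ the private codebooks $\calB_{\hatX}(w^n)$, $\calB_{\hatY}(w^n)$. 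Encoder $f_0$ transmits the joint type $Q_{XY}$ (using at most $|\calX||\calY|\log(n+1)$ nats) together with the index in $\calC_n$ of the specific $w^n\in\calC_n$ that satisfies $(x^n,y^n,w^n)\in\calT_{Q_{XYW}}$ (which exists by the lemma). Encoders $f_1$ and $f_2$ independently reconstruct the same $w^n$ from $(x^n,y^n)$ and the shared procedure, and then transmit the indices within $\calB_{\hatX}(w^n)$ and $\calB_{\hatY}(w^n)$ of sequences $\hatx^n,\haty^n$ satisfying $d_1(x^n,\hatx^n)\leq D_1$ and $d_2(y^n,\haty^n)\leq D_2$, whose existence is also guaranteed by Lemma \ref{achievable}.

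The rate accounting then closes as follows: the common link uses at most $|\calX||\calY|\log(n+1)+nR_{0,n}+c_0\log(n+1)=nR_{0,n}+c_0'\log(n+1)=\log M_0$ nats by the definition of $c_0'$ and $R_{0,n}$, while the private links use at most $nR_{1,n}+c_1\log n=\log M_1$ and $nR_{2,n}+c_2\log n=\log M_2$ nats respectively. Each decoder reconstructs $w^n$ from the joint type and the index in $\calC_n$ carried by $S_0$, and then looks up $\hatx^n$ in $\calB_{\hatX}(w^n)$ or $\haty^n$ in $\calB_{\hatY}(w^n)$ using $S_1$ or $S_2$; by construction this reproduction satisfies both distortion constraints, so no excess-distortion event occurs whenever the rate test passes. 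Consequently the excess-distortion probability is bounded above by the probability that the test fails, which is precisely the right-hand side of \eqref{eqn:ed_bd}.

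The only delicate point, and the step I would take most care with, is verifying that the hypotheses of Lemma \ref{achievable} are applicable with $Q_{XY}=\hat{T}_{X^nY^n}$ for all relevant empirical distributions (in particular that $\rvR_0(R_{1,n},R_{2,n},D_1,D_2|Q_{XY})$ is achievable by some test channel for these types) and that the auxiliary $n$-dependent conditions ($(n+1)^4>n\log|\calX||\calY|$, etc.)\ hold for all sufficiently large $n$ so that the bound is valid in the asymptotic regime used in the subsequent Berry--Esseen analysis; for small $n$ one can either enlarge the additive constants absorbed into $c_0',c_1,c_2$ or declare an error, which does not affect the second-order asymptotics.
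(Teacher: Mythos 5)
Your proof is correct and follows essentially the same approach the paper points to (the type-covering scheme of Lemma~\ref{achievable} combined with transmitting the joint type on the common link, mirroring the analogous Lemma~\ref{uppexcess:kaspi} and Lemma~\ref{mdduppjoint} for the Kaspi and Fu-Yeung problems). The rate accounting correctly absorbs the $|\calX||\calY|\log(n+1)$ nats for the joint type into $c_0'$, and you appropriately flag the handling of types where $\rvR_0$ is infinite and the finite-$n$ side conditions of Lemma~\ref{achievable}, which do not affect the stated bound.
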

The proof of Lemma \ref{upperboundexcessp} is similar to \cite[Lemma 5]{watanabe2015second} and available in~\cite[Appendix J]{zhou2015second}.

Recall the definition of the typical set $\calA_n(P_{XY})$ in \eqref{def:calan} and the result in \eqref{atypicalprob} that
\begin{align}
\Pr\left\{\hat{T}_{X^nY^n}\notin \calA_{n}(P_{XY})\right\}\leq \frac{2|\calX||\calY|}{n^2}.
\end{align}
For a rate triplet $(R_0^*,R_1^*,R_2^*)$ satisfying conditions in Theorem \ref{mainresult}, let 
\begin{align}
\frac{1}{n}\log M_0&=\rvR_0(R_1^*,R_2^*,D_1,D_2|P_{XY})+\frac{L_0}{\sqrt{n}}+c_0'\frac{\log (n+1)}{n},\\*
\frac{1}{n}\log M_1&=R_1^*+\frac{L_1}{\sqrt{n}}+c_1\frac{\log n}{n},\\*
\frac{1}{n}\log M_2&=R_2^*+\frac{L_2}{\sqrt{n}}+c_2\frac{\log n}{n}.
\end{align}
It follows that
\begin{align}
R_{i,n}=R_{i}^*+\frac{L_i}{\sqrt{n}},~i=0,1,2.
\end{align}

In subsequent analyses, for ease of notation, we use $\jmath_{XY}(X_i,Y_i)$ to denote $\jmath_{XY}(X_i,Y_i|R_1^*,R_2^*,D_1,D_2,P_{XY})$. From the conditions in Theorem \ref{mainresult}, the second derivatives of the minimal sum rate function $\rvR_0(R_1,R_2,D_1,D_2|P_{XY})$ with respect to $(R_1,R_2,P_{XY})$ are bounded around a neighborhood of $(R_1^*,R_2^*,P_{XY})$. Hence, for any $\hat{T}_{x^ny^n}\in\calA_{n}(P_{XY})$, for large $n$, invoking Lemma \ref{linkrjxy} and applying Taylor's expansion for $\rvR_0(R_{1,n},R_{2,n},D_1,D_2|\hat{T}_{x^ny^n})$, we obtain:
\begin{align}
\nn &\rvR_0(R_{1,n},R_{2,n},D_1,D_2|\hat{T}_{x^ny^n})\\*
\nn&=\rvR_0(R_1^*,R_2^*,D_1,D_2|P_{XY})-\xi_1^*\frac{L_1}{\sqrt{n}}-\xi_2^*\frac{L_2}{\sqrt{n}}\\*
\nn&\qquad+\sum_{i=1}^m \left(\lambda_i(\hatT_{x^ny^n})-\lambda_i(P_{XY})\right)\Big(\jmath(x_i,y_i)-\jmath(x_m,y_m)\Big)\\*
&\qquad+O\left(\|\lambda(\hatT_{x^ny^n})-\Gamma(P_{XY})\|^2\right)+O\left((R_{1,n}-R_1^*)^2+(R_{2,n}-R_2^*)^2\right)\\
&=\nn\rvR_0(R_1^*,R_2^*,D_1,D_2|P_{XY})-\xi_1^*\frac{L_1}{\sqrt{n}}-\xi_2^*\frac{L_2}{\sqrt{n}}+O\left(\frac{\log n}{n}\right)\\*
&\qquad+\sum_{x,y}\left(\hat{T}_{x^ny^n}(x,y)-P_{XY}(x,y)\right)\jmath_{XY}(x,y)\\
&\leq \sum_{x,y}Q_{XY}(x,y)\jmath_{XY}(x,y)-\xi_1^*\frac{L_1}{\sqrt{n}}-\xi_2^*\frac{L_2}{\sqrt{n}}+O\left(\frac{\log n}{n}\right)\label{minr0}\\*
&=\frac{1}{n}\sum_{i=1}^n\jmath_{XY}(x_i,y_i)-\xi_1^*\frac{L_1}{\sqrt{n}}-\xi_2^*\frac{L_2}{\sqrt{n}}+O\left(\frac{\log n}{n}\right)\label{taylorfisrtt},
\end{align}
where \eqref{minr0} follows from Lemma \ref{propertytilted} and the definition of the typical set $\calA_{n}(P_{XY})$ in \eqref{def:calan}.

Define $\eta_n=\frac{\log n}{n}$. Invoking Lemma \ref{upperboundexcessp}, we can upper bound the excess-distortion probability 
as follows:
\begin{align}
&\nn \rmP_{\rme,n}(D_1,D_2)\\*
&\leq \Pr\left\{R_{0,n}<\rvR_0(R_{1,n},R_{2,n},D_1,D_2|\hat{T}_{X^nY^n})\right\}\\
\nn&\leq \Pr\left\{\hat{T}_{X^nY^n}\in\calA_{n}(P_{XY}), R_{0,n}<\rvR_0(R_{1,n},R_{2,n},D_1,D_2|\hat{T}_{X^nY^n})\right\}\\*
&\qquad+\Pr\left\{\hat{T}_{X^nY^n}\notin\calA_{n}(P_{XY})\right\}\\
\nn&\leq \Pr\bigg\{R_{0,n}<\frac{1}{n}\sum_{i=1}^n \jmath_{XY}(X_i,Y_i)-\xi_{1}^*\frac{L_1}{\sqrt{n}}-\xi_{2}^*\frac{L_2}{\sqrt{n}}+O(\eta_n)\bigg\}\\*
&\qquad+\frac{2|\calX||\calY|}{n^2}\\
\nn&=\Pr\bigg\{\frac{L_0}{\sqrt{n}}+\xi_{1}^*
\frac{L_1}{\sqrt{n}}+\xi_{2}^*\frac{L_2}{\sqrt{n}}+O(\eta_n)<\frac{1}{n}\sum_{i=1}^n \Big(\jmath_{XY}(X_i,Y_i)\\*
&\qquad\qquad\qquad-\rvR_0(R_1^*,R_2^*,D_1,D_2|P_{XY})\Big)\bigg\}+\frac{2|\calX||\calY|}{n^2}\\
\nn&\leq \rmQ\left(\frac{L_0+\xi_1^*L_1+\xi_2^*L_2+O\left(\sqrt{n}\eta_n\right)}{\sqrt{\rmV(R_1^*,R_2^*,D_1,D_2|P_{XY})}}\right)+\frac{6\mathrm{T}(R_1^*,R_2^*,D_1,D_2)}{\sqrt{n}\rmV^{3/2}(R_1^*,R_2^*,D_1,D_2)}\\*
&\qquad+\frac{2|\calX||\calY|}{n^2},\label{berryessen}
\end{align}
where \eqref{berryessen} follows from the Berry-Esseen Theorem and $\mathrm{T}(R_1^*,R_2^*,D_1,D_2)$ is third absolute moment of the rates-distortions-tilted information density  $\jmath_{XY}(X,Y)$, which is finite for a DMS from the conditions in Theorem \ref{mainresult}. Therefore, if $(L_0,L_1,L_2)$ satisfies
\begin{align}
L_0+\xi_1^*L_1+\xi_2^*L_2 \geq \sqrt{\rmV(R_1^*,R_2^*,D_1,D_2|P_{XY})}\mathrm{Q}^{-1}(\varepsilon),
\end{align}
then $\limsup_{n\to\infty}\rmP_{\rme,n}(D_1,D_2)\leq \varepsilon$.

\subsection{Converse}
We follow the method of types, similarly to the proof of the lossless case in \cite{watanabe2015second} and to the converse proof of the successive refinement and Fu-Yeung problem in previous chapters. We first establish a type-based strong converse and use it to derive a lower bound on excess-distortion probability $\rmP_{\rme,n}(D_1,D_2)$. Finally, we use a Taylor expansion and apply the Berry-Esseen Theorem to obtain an outer region expressed essentially using $\rmV(R_1^*,R_2^*,D_1,D_2|P_{XY})$.
 
We now consider an $(n,M_0,M_1,M_2)$-code for the correlated source $(X^n,Y^n)$ with joint distribution $U_{\calT_{Q_{XY}}}(x^n,y^n)=|\calT_{Q_{XY}}|^{-1}$, the uniform distribution over the type class $\calT_{Q_{XY}}$.
\begin{lemma}
\label{typestrongconverse}
If the non-excess-distortion probability satisfies
\begin{align}
\label{assumption}
\nn&\Pr\left\{d_1(X^n,\hat{X}^n)\leq D_1, d_2(Y^n,\hat{Y}^n)\leq D_2|(X^n,Y^n)\in\calT_{Q_{XY}}\right\}\\*
&\geq \exp(-n\alpha)
\end{align}
for some positive number $\alpha$, then for $n$ large enough such that $\log n\geq\max\{\overline{d}_1,\overline{d}_2\}\log|\calX|$, there exists a conditional distribution $Q_{W|XY}$ with $|\calW|\leq |\calX||\calY|+2$ such that
\begin{align}
\frac{1}{n}\log M_0&\geq I(X,Y;W)-\frac{\Big(|\calX||\calY|+1\Big)\log(n+1)}{n}-\alpha,\\
\frac{1}{n}\log M_1&\geq R_{X|W}(Q_{XW},D_1)-\frac{\log n}{n},\\
\frac{1}{n}\log M_2&\geq R_{Y|W}(Q_{YW},D_2)-\frac{\log n}{n}.
\end{align}
where $(X,Y,W)\sim Q_{XY}\times Q_{W|XY}$.
\end{lemma}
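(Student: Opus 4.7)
The plan is to follow the perturbation approach of Gu and Effros~\cite{wei2009strong}, adapted to the Gray-Wyner setting and combined with the method of types, exactly as was done for the rate-distortion problem in Lemma~\ref{type:sc}, for successive refinement in Lemma~\ref{mainresult_srtypestrongconverse_sr}, and for the Fu-Yeung problem in Lemma~\ref{mddstrongconverse}. First I would construct a perturbed source distribution $S_{\calT_{Q_{XY}}}$ supported on $\calT_{Q_{XY}}$ that up-weights the ``good'' source pairs (those correctly decoded within both distortion levels) by a factor of order $\exp(n\alpha)\cdot n^{c}$. Under the assumption~\eqref{assumption}, this re-weighting is only a constant multiplicative perturbation of the uniform distribution on $\calT_{Q_{XY}}$ and yields a distribution under which the joint excess-distortion probability is vanishingly small (of order $1/n$) while the expected distortions satisfy $\mathbb{E}_S[d_1(X^n,\hat X^n)]\leq D_1+\bar d_1/n$ and $\mathbb{E}_S[d_2(Y^n,\hat Y^n)]\leq D_2+\bar d_2/n$.

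Next I would run the standard weak-converse chain of inequalities, but with probabilities and entropies taken with respect to $S_{\calT_{Q_{XY}}}$. Letting $S_0=f_0(X^n,Y^n)$, $S_1=f_1(X^n,Y^n)$, $S_2=f_2(X^n,Y^n)$, I would write
\begin{align}
\log M_0 &\geq H(S_0)\geq I(X^n,Y^n;S_0),\\
\log M_1 &\geq H(S_1|S_0)\geq I(X^n;\hat X^n|S_0),\\
\log M_2 &\geq H(S_2|S_0)\geq I(Y^n;\hat Y^n|S_0),
\end{align}
and then single-letterize in the usual way by introducing an independent time-sharing index $J\sim\mathrm{Unif}[n]$ and setting $W:=(S_0,J)$, $X:=X_J$, $Y:=Y_J$, $\hat X:=\hat X_J$, $\hat Y:=\hat Y_J$. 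The memorylessness of the source on a type class is only approximate, but the method-of-types arguments used in \eqref{closetoind} (see also \cite[Eq.~(34)-(36)]{watanabe2015second}) bound $|\sum_i H(X_i,Y_i)-H(X^n,Y^n)|$ by $|\calX||\calY|\log(n+1)+O(1)$ on the uniform distribution over $\calT_{Q_{XY}}$, and the perturbation only introduces a further additive term of order $\alpha$. Together these yield the single-letter inequalities $\frac{1}{n}\log M_0\geq I(X,Y;W)-(|\calX||\calY|+1)\log(n+1)/n-\alpha$ and, by definition of the conditional rate-distortion function~\eqref{gw:condrdf}, $\frac{1}{n}\log M_i\geq R_{\cdot|W}(Q_{\cdot W},D_i)-(\log n)/n$ for $i=1,2$, after absorbing the distortion slack $\bar d_i/n$ into the $D_{i,n}:=D_i+\bar d_i/n$ used in the statement.

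The cardinality bound $|\calW|\leq |\calX|\cdot|\calY|+2$ follows from the standard support lemma (a Carath\'eodory-type argument), exactly as in the achievability proof of Theorem~\ref{gwregion}: the joint distribution $Q_{XYW}$ must preserve the $|\calX||\calY|-1$ parameters of $Q_{XY}$ together with the three functionals $I(X,Y;W)$, $R_{X|W}(Q_{XW},D_1)$, $R_{Y|W}(Q_{YW},D_2)$, and each preserved functional contributes one to the cardinality budget. The growth condition $\log n\geq \max\{\bar d_1,\bar d_2\}\log|\calX|$ guarantees that the distortion slacks $\bar d_i/n$ do not destroy the smoothness region in which the conditional rate-distortion functions are well-behaved.

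The main obstacle I expect is step two: producing a \emph{conditional} rate-distortion lower bound for each private link rather than merely a mutual information bound. Showing $I(X^n;\hat X^n|S_0)\geq \sum_i I(X_i;\hat X_i|S_0)+O(\log n)$ requires that, conditioned on $S_0$, the source coordinates remain (approximately) independent under the perturbed measure $S_{\calT_{Q_{XY}}}$; this is where the method-of-types machinery from \cite[Proof of Lemma 6]{watanabe2015second} and the Gu-Effros perturbation must be intertwined carefully, because the perturbation couples coordinates through the distortion indicator. Once this conditional-independence slack is controlled (at an additive cost of $|\calX||\calY|\log(n+1)/n$), invoking the definition of $R_{\cdot|W}$ together with the per-letter distortion bounds finishes the proof.
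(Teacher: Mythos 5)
Your plan matches the paper's own strategy: the paper proves Lemma~\ref{typestrongconverse} by combining Watanabe's type-based strong converse for the lossless Gray-Wyner problem (which is itself built on the Gu--Effros perturbation and the method-of-types near-independence bound you cite) with the weak converse for the lossy Gray-Wyner problem under the expected-distortion criterion, identifying $W=(S_0,J)$ with $J$ a uniform time-sharing index and applying a Carath\'eodory-type support lemma for the cardinality bound. One small imprecision you should fix: the lemma statement is in terms of $D_i$, not $D_{i,n}=D_i+\overline{d}_i/n$, so the passage from $I(X;\hat X|W)\geq R_{X|W}(Q_{XW},D_{1,n})$ to the claimed bound at $D_1$ requires a continuity estimate of the form $R_{X|W}(Q_{XW},D_1)-R_{X|W}(Q_{XW},D_{1,n})\leq \frac{\overline{d}_1\log|\calX|}{n}\leq \frac{\log n}{n}$; this is exactly where the growth condition $\log n\geq\max\{\overline{d}_1,\overline{d}_2\}\log|\calX|$ is used, and your phrase ``absorbing the slack into $D_{i,n}$ used in the statement'' inverts which quantity appears in the statement.
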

The proof of Lemma \ref{typestrongconverse} is similar to the lossless Gray-Wyner problem~\cite[Lemma 6]{watanabe2015second} but we need to also combine this with the (weak) converse proof for lossy Gray-Wyner problem under the expected distortion criterion in \cite{gray1974source}. Readers could refer to \cite[Appendix K]{zhou2015second} for details.

We then prove a lower bound on the excess-distortion probability $\rmP_{\rme,n}(D_1,D_2)$
in \eqref{defexcessprob}. 
Define the constant $c=\frac{|\calX||\calY|+2}{n}$ and the three quantities
\begin{align}
R_{0,n}&:=\frac{1}{n}\log M_0+c\frac{\log (n+1)}{n},\label{eq1}\\
R_{1,n}&:=\frac{1}{n}\log M_1+\frac{\log n}{n},\label{eq2}\\
R_{2,n}&:=\frac{1}{n}\log M_2+\frac{\log n}{n}.\label{eq3}
\end{align}
\begin{lemma}
\label{lowerboundexcessp}
Consider any $n\in\bbN$ such that $\log n\geq\max\{\overline{d}_1,\overline{d}_2\}\log|\calX|$. Any $(n,M_0,M_1,M_2)$-code satisfies
\begin{align}
\rmP_{\rme,n}(D_1,D_2)
&\geq \Pr\left\{R_{0,n}< \rvR_0(R_{1,n},R_{2,n},D_1,D_2|\hat{T}_{X^nY^n})\right\}-\frac{1}{n}.
\end{align} 
\end{lemma}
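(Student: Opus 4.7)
The plan is to use Lemma \ref{typestrongconverse} as the key ingredient via contrapositive, mimicking the structure of the corresponding lossless converse in \cite{watanabe2015second} and of the converse proofs in earlier chapters. First, I would partition the types in $\calP_n(\calX\times\calY)$ into a ``bad'' set
\begin{align}
\calS_n:=\Big\{Q_{XY}\in\calP_n(\calX\times\calY):R_{0,n}<\rvR_0(R_{1,n},R_{2,n},D_1,D_2|Q_{XY})\Big\}
\end{align}
and its complement. For each $Q_{XY}\in\calS_n$, the goal is to show that the conditional non-excess-distortion probability given $\{(X^n,Y^n)\in\calT_{Q_{XY}}\}$ is at most $1/n$; summing the corresponding contribution of such bad types to $\rmP_{\rme,n}(D_1,D_2)$ then yields the claim.

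To establish the per-type bound, I would argue by contradiction: fix $\alpha=\frac{\log n}{n}$ and suppose some $Q_{XY}\in\calS_n$ had conditional non-excess-distortion probability at least $\exp(-n\alpha)=\frac{1}{n}$. Invoking Lemma \ref{typestrongconverse} (its hypothesis \eqref{assumption} applies once $n$ is large enough that $\log n\geq\max\{\overline{d}_1,\overline{d}_2\}\log|\calX|$), we would obtain a conditional distribution $Q_{W|XY}$ with $|\calW|\leq|\calX||\calY|+2$ such that
\begin{align}
\tfrac{1}{n}\log M_0&\geq I(X,Y;W)-\tfrac{(|\calX||\calY|+1)\log(n+1)}{n}-\tfrac{\log n}{n},\\
\tfrac{1}{n}\log M_1&\geq R_{X|W}(Q_{XW},D_1)-\tfrac{\log n}{n},\\
\tfrac{1}{n}\log M_2&\geq R_{Y|W}(Q_{YW},D_2)-\tfrac{\log n}{n}.
\end{align}
Plugging in the definitions \eqref{eq1}--\eqref{eq3}, which were tailored precisely to absorb these logarithmic correction terms, these inequalities become $R_{0,n}\geq I(X,Y;W)$, $R_{1,n}\geq R_{X|W}(Q_{XW},D_1)$ and $R_{2,n}\geq R_{Y|W}(Q_{YW},D_2)$ with $(X,Y,W)\sim Q_{XY}Q_{W|XY}$. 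By the single-letter characterization in Theorem \ref{gwregion} applied to the source distribution $Q_{XY}$, this says $(R_{0,n},R_{1,n},R_{2,n})\in\calR(D_1,D_2|Q_{XY})$, hence by the definition of $\rvR_0(\cdot)$ in \eqref{minkey0} we would have $R_{0,n}\geq\rvR_0(R_{1,n},R_{2,n},D_1,D_2|Q_{XY})$, contradicting $Q_{XY}\in\calS_n$.

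Consequently, the conditional excess-distortion probability given $\{(X^n,Y^n)\in\calT_{Q_{XY}}\}$ is at least $1-\frac{1}{n}$ for every $Q_{XY}\in\calS_n$, so
\begin{align}
\rmP_{\rme,n}(D_1,D_2)
&\geq\sum_{Q_{XY}\in\calS_n}\Pr\{(X^n,Y^n)\in\calT_{Q_{XY}}\}\Big(1-\tfrac{1}{n}\Big)\\
&\geq\Pr\{\hat{T}_{X^nY^n}\in\calS_n\}-\tfrac{1}{n}\\
&=\Pr\{R_{0,n}<\rvR_0(R_{1,n},R_{2,n},D_1,D_2|\hat{T}_{X^nY^n})\}-\tfrac{1}{n},
\end{align}
which is exactly the stated bound.

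The main obstacle is the bookkeeping: one has to verify that the logarithmic remainder terms from Lemma \ref{typestrongconverse} and the choice $\alpha=(\log n)/n$ line up exactly with the definitions of $R_{0,n},R_{1,n},R_{2,n}$ in \eqref{eq1}--\eqref{eq3}, so that the contrapositive step yields membership in $\calR(D_1,D_2|Q_{XY})$ cleanly and no additional slack is lost when passing from the three rate inequalities to the single minimum-common-rate inequality defining $\calS_n$. Everything else is a routine conditional decomposition of $\rmP_{\rme,n}(D_1,D_2)$ over type classes.
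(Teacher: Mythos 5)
Your proposal is correct and follows essentially the same route as the paper: condition on the joint type of $(X^n,Y^n)$, use Lemma~\ref{typestrongconverse} in contrapositive form with $\alpha=(\log n)/n$ to show that any type in the bad set $\calS_n$ must have conditional excess-distortion probability at least $1-1/n$, and then sum over bad types. This mirrors exactly how Theorem~\ref{converse:types}, Lemma~\ref{lbexcessp}, and Lemma~\ref{mddlbjoint} are obtained from their respective type-based strong converses elsewhere in the monograph (and follows the template of \cite[Lemma~7]{watanabe2015second}). The only thing to note is that the stated constant $c=\tfrac{|\calX||\calY|+2}{n}$ in the paper is evidently a typo for $c=|\calX||\calY|+2$; with that reading, your bookkeeping ($R_{0,n}\geq I(X,Y;W)$, $R_{i,n}\geq$ conditional rate-distortion functions) goes through cleanly via $\log(n+1)\geq\log n$, and the membership conclusion from Theorem~\ref{gwregion} plus the definition of $\rvR_0$ in \eqref{minkey0} delivers the contradiction.
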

The proof of Lemma \ref{lowerboundexcessp} is similar to \cite[Lemma 7]{watanabe2015second} and available in \cite[Appendix L]{zhou2015second}.

Choose $(M_0,M_1,M_2)$ such that
\begin{align}
\frac{1}{n}\log M_0&=R_0^*+\frac{L_0}{\sqrt{n}}-c\frac{\log (n+1)}{n},\\
\frac{1}{n}\log M_1&=R_1^*+\frac{L_1}{\sqrt{n}}-\frac{\log n}{n},\\
\frac{1}{n}\log M_2&=R_2^*+\frac{L_2}{\sqrt{n}}-\frac{\log n}{n}.
\end{align}

Hence, according to \eqref{eq1} to \eqref{eq3} in Lemma \ref{lowerboundexcessp}, for $i\in[0:2]$,
\begin{align}
R_{i,n}&=R_i^*+\frac{L_i}{\sqrt{n}}.
\end{align} 

Recall that we use $\jmath_{XY}(X_i,Y_i)$ to denote $\jmath_{XY}(X_i,Y_i|R_1^*,R_2^*,D_1,D_2,P_{XY})$.
Invoking Lemma \ref{lowerboundexcessp}, similarly to the achievability proof,
\begin{align}
\nn&\rmP_{\rme,n}(D_1,D_2)\\*
&\geq \Pr\left\{R_{0,n}<\rvR_0(R_{1,n},R_{2,n},D_1,D_2|\hat{T}_{X^nY^n})\right\}-\frac{1}{n}\\
\nn&\geq \Pr\left\{R_{0,n}<\rvR_0(R_{1,n},R_{2,n},D_1,D_2|\hat{T}_{X^nY^n}),~\hat{T}_{X^nY^n}\in \calA_{n}(P_{XY})\right\}\\*
&\qquad-\frac{1}{n}\\
\nn&\geq\Pr\bigg\{R_0^*+\frac{L_0}{\sqrt{n}}<\frac{1}{n}\sum_{i=1}^n\jmath_{XY}(X_i,Y_i)-\xi_1^*\frac{L_1}{\sqrt{n}}-\xi_2^*\frac{L_2}{\sqrt{n}}+O(\eta_n)\\*
&\qquad\qquad\qquad\mathrm{~and~}\hat{T}_{X^nY^n}\in \calA_{n}(P_{XY})\bigg\}-\frac{1}{n}\\ 
\nn&\geq \Pr\bigg\{R_0^*+\frac{L_0}{\sqrt{n}}<\frac{1}{n}\sum_{i=1}^n\jmath_{XY}(X_i,Y_i)-\xi_1^*\frac{L_1}{\sqrt{n}}-\xi_2^*\frac{L_2}{\sqrt{n}}+O(\eta_n)\bigg\}\\*
&\qquad-\Pr\left\{\hat{T}_{X^nY^n}\notin \calA_n(P_{XY})\right\}-\frac{1}{n} \label{eqn:rev_union_bd}\\
\nn&=\Pr\bigg\{\frac{L_0}{\sqrt{n}}+\xi_1^*\frac{L_1}{\sqrt{n}}+\xi_2^*\frac{L_2}{\sqrt{n}}+O(\eta_n)<\frac{1}{n}\sum_{i=1}^n\jmath_{XY}(X_i,Y_i)\\*
&\qquad\qquad-\rvR_0(R_1^*,R_2^*,D_1,D_2|P_{XY})\bigg\}-\frac{2|\calX||\calY|}{n^2}-\frac{1}{n}\\
\nn&\geq \rmQ\left(\frac{L_0+\xi_1^*L_1+\xi_2^*L_2+O\left(\sqrt{n}\eta_n\right)}{\sqrt{\rmV(R_1^*,R_2^*,D_1,D_2|P_{XY})}}\right)-\frac{6\mathrm{T}(R_1^*,R_2^*,D_1,D_2)}{\sqrt{n}\rmV^{3/2}(R_1^*,R_2^*,D_1,D_2)}\\*
&\qquad\qquad-\frac{2|\calX||\calY|}{n^2}-\frac{1}{n},
\end{align} 
where \eqref{eqn:rev_union_bd} follows from the fact that $\Pr\{\calE\cap\calF\}\ge\Pr\{\calE\}-\Pr\{\calF^c\}$. 
Hence, if $(L_0,L_1,L_2)$ satisfies
\begin{align}
L_0+\xi_1^*L_1+\xi_2^*L_2< \sqrt{\rmV(R_1^*,R_2^*,D_1,D_2|P_{XY})}\mathrm{Q}^{-1}(\varepsilon),
\end{align} 
then $\liminf_{n\to\infty}\rmP_{\rme,n}(D_1,D_2)>\varepsilon$. Therefore, for sufficiently large $n$, any second-order $(R_0^*,R_1^*,R_2^*,D_1,D_2,\varepsilon)$-achievable triplet $(L_0,L_1,L_2)$ must satisfy
\begin{align}
\label{outerregion}
L_0+\xi_1^*L_1+\xi_2^*L_2\geq \sqrt{\rmV(R_1^*,R_2^*,D_1,D_2|P_{XY})}\mathrm{Q}^{-1}(\varepsilon).
\end{align}


\chapter{Reflections, Other Results and Future Directions}
\label{chap:future}

\section{Reflections}

In this monograph, we reviewed recent advances in the second-order asymptotics for lossy source coding, which provides approximation to the finite blocklength performance of optimal codes. The monograph is divided into three parts: Part I, consisting of two chapters,  introduces the basics; Part II, consisting of six chapters, is concerned with the point-to-point setting; and Part III, consisting of four chapters, deals with multiterminal settings.

Specifically, in Chapter \ref{chap:intro}, we introduced the notation and critical mathematical background. In Chapter \ref{chap:lossless}, we illustrated non-asymptotic and second-order asymptotic analyses via lossless source coding. Subsequently, in Chapter \ref{chap:rd} of Part II, we presented the generalization of the results from lossless source coding to the rate-distortion problem of lossy source coding, highlighted the role of the distortion-tilted information density and introduced two proof sketches. One proof method to yield second-order asymptotics is applying the Berry-Esseen theorem to carefully derive non-asymptotic achievability and converse bounds, where the achievability part uses random coding and minimal distortion encoding while the converse part relies on the properties of the distortion-tilted information density. Although this method is simple and elegant, it is not always possible to derive the desired non-asymptotic bounds for multiterminal lossy source coding problems. Thus, we also introduced another proof technique using the method of types, where the achievability part uses the type covering lemma tailored to the rate-distortion problem and the converse part depends on a type-based strong converse analysis. The first proof sketch using the non-asymptotic bounds usually applies to any memoryless source while the method of types is valid only for a DMS. In the rest of Part II, the results and proofs for the rate-distortion problem are generalized to account for noisy sources, noisy channels, mismatched compression, sources with memory and variable length compression in Chapters \ref{chap:noisy} to \ref{chap:variable}.

In Part III, the two proof methods for the rate-distortion problem are generalized in combination to derive non-asymptotic and second-order asymptotic bounds for four multiterminal lossy source coding problems in the increasingly complicated order: the Kaspi problem in Chapter \ref{chap:kaspi}; the successive refinement problem in Chapter \ref{chap:sr}; the Fu-Yeung problem in Chapter \ref{chap:fu-yeung}; and the Gray-Wyner problem in Chapter \ref{chap:gw}. For the Kaspi problem, we introduced the distortions-tilted information density, illustrated the role of side information and showed that the conditional rate-distortion problem is a special case of the Kaspi problem. For the successive refinement problem, we defined a rate-distortions-tilted information density, showed its connection to the minimal sum rate subject to the rate of one encoder, demonstrated the tradeoff between second-order coding rates of two encoders, and validated the joint excess-distortion probability as the ``correct'' performance criterion. For the Fu-Yeung problem, we presented a non-asymptotic converse bound which yielded tight second-order converse result when specializing to the successive refinement problem and presented tight second-order asymptotics for simultaneous lossless and lossy compression. Finally, for the Gray-Wyner problem in which an auxiliary random variable is required in the characterization of the rate-distortion region, we presented a second-order asymptotic result, where the achievability part follows by deriving a type covering lemma tailored to the problem which uses the continuity of conditional rate-distortion function with respect to the distortion level and the distributions.

\section{Other Results}
This monograph mainly focused on fixed-length compression of a DMS under bounded distortion measures with the excess-distortion probability as the performance criterion. For a GMS under quadratic distortion measures, the second-order asymptotics for the rate-distortion problem was derived by Ingber and Kochman~\cite[Theorem 2]{ingber2011dispersion} and by Kostina and Verd\'u~\cite[Theorem 40]{kostina2012fixed}, and the second-order asymptotics for the successive refinement problem was derived by No, Ingber and Weissman~\cite[Theorem 7]{no2016} and by Zhou, Tan and Motani~\cite[Theorem 20]{zhou2016second}, and the second-order asymptotics for a Laplacian source under the magnitude-error distortion measure could be derived using the type-covering lemma in~\cite{zhong2006type} for the achievability result and using the non-asymptotic converse bound in \cite[Corollary 2]{kostina2019sr}. When the distortion measure is the logarithm loss, the non-asymptotic analysis for the rate-distortion and the multiple descriptions problem was derived by Shkel and Verd\'u~\cite{shkel2018tit} and the successive refinement problem was studied by No~\cite{no2019entropy}. When the excess-distortion probability is replaced by the average distortion, a non-asymptotic analysis of the rate-distortion problem was done by Moulin~\cite{moulin2017isit} and by Elkayam and Feder~\cite{Nir2020isit}.

Besides second-order asymptotics, the large and moderate deviations asymptotic analyses also provide deeper understanding beyond Shannon theory analyses, as illustrated in Fig. \ref{illus:refined4lossless} for lossless source coding. For simplicity, we call the rate-distortion function or the rate-distortion region the Shannon limit. Large deviations, also known as the error exponent analysis, focuses on deriving the exponential decay rate of excess-distortion probabilities for rates beyond the Shannon limit in lossy source coding problems.  For the rate-distortion problem, the error exponent was derived by Marton for a DMS~\cite{Marton74}, by Ihara and Kubo~\cite{ihara2000error} for a GMS under the quadratic distortion measure and by Zhong, Alajaji and Campbell~\cite{zhong2006type} for a Laplacian memoryless source under the magnitude-error distortion measure. For the successive refinement problem with a DMS, the error exponent region was derived by Tuncel and Rose~\cite{tuncel2003} under the separate excess-distortion probabilities criterion and by Kanlis and Narayan under the joint excess-distortion probability criterion~\cite{kanlis1996error}. For a DMS, the error exponent (region) for the Kaspi problem was derived in~\cite[Theorem 7]{zhou2017non}, for the Fu-Yeung problem was derived in~\cite[Theorem 16]{zhou2017non} and for the Gray-Wyner problem was derived in~\cite[Theorem 12]{zhou2015second}.

Moderate deviations asymptotics~\cite{chen2007redundancy,he2009redundancy,altugwagner2014} compromises between large deviations and second-order asymptotics by deriving the subexponential decay rates, also known as the moderate deviations constants, of excess-distortion probabilities while allowing rates to approach the Shannon limit. The moderate deviations constant for the rate-distortion problem was derived by Tan~\cite{tan2012moderate} for a DMS. For the successive refinement problem, the moderate deviations constants were derived by Zhou, Tan and Motaini for both a DMS and a GMS~\cite[Theorems 6 and 15]{zhou2016second}. For a DMS, the moderate deviations asymptotics was derived for the Kaspi problem in~\cite[Theorem 8]{zhou2017non}, for the Fu-Yeung problem was derived in~\cite[Theorem 17]{zhou2017non} and for the Gray-Wyner problem was derived in~\cite[Theorem 13]{zhou2015second}.

\section{Future Directions}
We briefly discuss possible future research directions for lossy source coding beyond the results covered in this monograph.

\subsection{Higher-Order Asymptotics}
For the rate-distortion problem and its five generalizations in Part II of this monograph, we present a second-order asymptotic approximation to the finite blocklength performance. It was recently shown by Yavas, Kostina and Effros~\cite{yavas2022isit} that for channel coding, the third-order asymptotic approximation in the moderate deviations regime could provide a rather accurate approximation to the performance of an optimal code for blocklengths as small as $n=100$ with error probabilities as small as $10^{-10}$. This high-order approximation is of great interest for beyond 5G communication networks where enhanced ultra-reliable and low-latency communication is required. However, to the best of our knowledge, in general, no tight third-order asymptotic results have been established for the rate-distortion problem. It would be worthwhile to derive higher-order asymptotic results to complement the second-order asymptotics for the problems presented in this monograph.

\subsection{Multiterminal Compression of a GMS}
Although the second-order asymptotics results of the rate-distortion and the successive refinement problems have been established for a GMS under quadratic measures, the second-order asymptotics of many other multiterminal lossy source coding for a GMS is generally unknown. For the Kaspi problem, the non-asymptotic converse bound in Chapter 3 is valid for a GMS, but the achievability analysis is non-trivial despite the rate-distortion function was derived by Perron, Diggavi and Telatar~\cite{perron2005kaspi}. For the multiple descriptions problem~\cite{wolf1980source}, the rate-distortion region for a GMS was derived by Ozarow~\cite{ozarow1980source}. Both achievability and converse analyses of non-asymptotic and second-order asymptotic bounds require novel ideas. For the Gray-Wyner problem, although the rate-distortion region is known~\cite{gray1974source}, the exact formula for a GMS remains open and the second-order asymptotics are challenging.

\subsection{Mismatched Multiterminal Compression}
Most contents in this monograph concerned matched compression, where the distribution of the source sequence is assumed perfectly known. Such an assumption is invalid in practice because one is not able to know the exact distribution of a source to be compressed. Thus, it is important to use mismatched coding schemes ignorant of the exact source distribution to compress any memoryless sources. In Chapter \ref{chap:mismatch}, we presented the second-order asymptotics by Zhou, Tan and Motani~\cite{zhou2017refined}, who analyzed the mismatched compression scheme proposed by Lapidoth~\cite[Theorem 3]{lapidoth1997}, where the minimum Euclidean distance encoding with the i.i.d. Gaussian codebook is used to compress an arbitrary memoryless source.  However, the non-asymptotic and second-order asymptotic analysis for more complicated multiterminal lossy source coding remains largely unexplored. Some attempts have been made very recently by Wu, Bai and Zhou~\cite{wu2022ISIT,wu2021} in the achievability analysis of the successive refinement problem.

\subsection{Variable-Length Multiterminal Compression}
This monograph focused on fixed-length lossy source coding. Motivated by the need to reduce the codeword length of frequently appeared symbols, fixed-to-variable length (FVL) source coding has also been widely studied for the point-to-point case~\cite{han2000tit,Koga2005tit,kostina2015tit,saito2017isit,saitoisit2018,sakai2021tit}. In particular, Kostina and Verd\'u derived the second-order asymptotics for average codeword length of the FVL rate-distortion problem subject to a non-vanishing excess-distortion probability, which was presented in Chapter \ref{chap:variable}. Saito, Yagi and Matsushima~\cite{saito2017isit,saitoisit2018} studied the FVL rate-distortion problem under constraints on both the excess-distortion probability and the excess-length probability. However, no results have been established for FVL multiterminal lossy source coding. It would be worthwhile to derive non-asymptotic and second-order asymptotic bounds on the average codeword length for a multiterminal lossy source coding problem such as the successive refinement problem.

\subsection{Decoder Side Information Problems}
Although we have presented results for several multiterminal lossy source coding problems, many more remain open, such as the Wyner-Ziv problem~\cite{wyner1976rate}, the Kaspi-Heegard-Berger problem~\cite[Theorem 2]{kaspi1994},~\cite{heegard1985}, and the Berger-Tung problem~\cite{berger1978multiterminal}. A common feature of these problems is that in the asymptotic rate-distortion function (region), there exists an auxiliary random variable that forms a Markov chain with the source sequences and/or the side information. For the Wyner-Ziv problem, some attempts in characterizing the second-order asymptotics have been made in the achievability part by Watanabe, Kuzuoka and Tan~\cite{watanabe2015} and by Yassaee, Aref and Gohari~\cite{yassaee2013technique} and the converse part by Oohama~\cite{oohama2016wynerziv}. However, the achievability and converse bounds do not match even in the sign of the second-order term. Novel ideas and mathematical tools are required to establish second-order asymptotics.


\subsection{Rate-Distortion-Perception Tradeoff}
As evidenced in many applications of image compression, optimal schemes achieving the rate-distortion function lead to low performance due to the ignorance of the distribution of the reproduced sequences. The perceptual quality of an image is shown to be determined by the distribution of the reproduced sequences. However, this information is omitted in the design of codes described in this monograph. To solve this problem, recent studies on rate-distortion-perception tradeoff~\cite{theis2021coding,blau2019rethinking} revisit the rate-distortion problem by constraining that the distribution of the output of the decoder is either identical or approximately identical to the distribution of the source sequence. All these results are asymptotic Shannon theoretical analysis on the rate-distortion function for the simple point-to-point case. It would be of interest to conduct a non-asymptotic and second-order asymptotic analysis of the rate-distortion-perception problem and also generalize it to more complicated multiterminal lossy source coding problems.


\bibliographystyle{unsrt}
\bibliography{IEEEfull_lin}

\end{document}